\documentclass{article}

\title{An Analysis of a Virtually Synchronous Protocol}
\author{Dan Arnon and Navindra Sharma \\ EMC Corporation}

\usepackage{geometry}
\usepackage{amsmath}

\usepackage{amsthm}
\usepackage{amsopn}
\usepackage{parskip}   
\usepackage{lmodern}
\usepackage{amssymb}
\usepackage{fixltx2e}  
\usepackage[ruled]{algorithm2e}

\theoremstyle{plain}
\newtheorem{thm}{Theorem}
\newtheorem{lem}{Lemma}
\newtheorem*{lem*}{Lemma}
\newtheorem{cor}{Corollary}

\newtheorem{indhyp}{Inductive Hypothesis}
\newtheorem{defn}{Definition}

\newcommand{\cbcast}[0]{{\ttfamily CBCAST}}
\newcommand{\gms}[0]{{\ttfamily GMS}}
\newcommand{\ulp}[0]{{\ttfamily APP}}
\newcommand{\prot}[0]{{\ttfamily PROTOCOL}}
\newcommand{\gv}[1]{\textit{\textsf{#1}}}
\newcommand{\lv}[1]{\textit{\textsf{#1}}}
\newcommand{\fp}[1]{\textsf{#1}}
\newcommand{\se}[1]{\textsf{#1}}
\newcommand{\orig}[1]{{\scriptstyle{\operatorname{ORIG}}}(#1)}
\newcommand{\mview}[1]{{\scriptstyle{\operatorname{VIEW}}}(#1)}
\newcommand{\mvt}[1]{{\scriptstyle{\operatorname{VT}}}(#1)}
\newcommand{\rset}{\se{ReceiveSet}}
\newcommand{\wset}{\se{WaitSet}}
\newcommand{\fwset}{\se{FwdWaitSet}}
\newcommand{\bwset}{\se{BcastWaitSet}}
\newcommand{\fque}{\se{FwdQueue}}
\newcommand{\mset}{\se{MSet}}
\newcommand{\lset}{\se{LiveSet}}
\newcommand{\cset}{\se{ContactSet}}
\newcommand{\lque}{\se{LaunchQueue}}
\newcommand{\pque}{\se{PendViewQueue}}
\newcommand{\uldata}{\se{ReplicatedData}}
\newcommand{\fvec}{\gv{flush}}
\newcommand{\gvec}{\gv{ghost}}
\newcommand{\cview}{\gv{cur\_view}}
\newcommand{\vgap}{\gv{v\_gap}}
\newcommand{\vtime}{\gv{vt}}
\newcommand{\self}{\gv{self}}
\newcommand{\sgh}{\gv{ghost\_height}}
\newcommand{\sfh}{\gv{flush\_height}}
\newcommand{\mpkout}{\gv{mpkt\_out}}
\newcommand{\mpkin}{\gv{mpkt\_in}}

\newcommand{\ulview}{\gv{ul\_view}}
\newcommand{\ulcount}{\gv{ul\_count}}
\DeclareMathOperator{\magicview}{\operatorname{MAGIC\_VIEW}}
\DeclareMathOperator{\magicmessage}{\operatorname{MAGIC\_MSG}}
\DeclareMathOperator{\magicsize}{\operatorname{MAGIC\_SIZE}}

\DeclareMathOperator{\processset}{\mathbb{P}}
\DeclareMathOperator{\eventset}{\mathbb{E}}
\DeclareMathOperator{\numview}{\mathfrak{V}}
\DeclareMathOperator{\kmsg}{\mathfrak{M}_k}
\DeclareMathOperator{\labelspace}{\mathfrak{L}}
\DeclareMathOperator{\cspace}{\mathfrak{L}_c}
\DeclareMathOperator{\sspace}{\mathfrak{L}_s}
\DeclareMathOperator{\aspace}{\mathfrak{L}_a}
\DeclareMathOperator{\fspace}{\mathfrak{L}_f}
\DeclareMathOperator{\viewset}{\mathbb{S}}
\DeclareMathOperator{\packetset}{\mathbb{K}}
\DeclareMathOperator{\notificationset}{\mathbb{F}}
\DeclareMathOperator{\requestset}{\mathbb{A}}
\DeclareMathOperator{\uncont}{\mathbb{U}}
\DeclareMathOperator{\tstamp}{\operatorname{time}}
\DeclareMathOperator{\vbound}{\operatorname{vb}}

\DeclareMathOperator{\tr}{\operatorname{tr}}

\DeclareMathOperator{\lCODONATE}{\operatorname{CODONATE}}
\DeclareMathOperator{\lDONATE}{\operatorname{DONATE}}
\DeclareMathOperator{\lGHOST}{\operatorname{GHOST}}
\DeclareMathOperator{\lFLUSH}{\operatorname{FLUSH}}
\DeclareMathOperator{\lACK}{\operatorname{ACK}}
\DeclareMathOperator{\lBCAST}{\operatorname{BCAST}}
\DeclareMathOperator{\vcrit}{v_{\operatorname{crit}}}
\DeclareMathOperator{\minusG}{-G}
\DeclareMathOperator{\plusc}{c}
\DeclareMathOperator{\minusc}{-c}
\DeclareMathOperator{\plusz}{z}
\DeclareMathOperator{\minusz}{-z}
\DeclareMathOperator{\plusminusG}{{\pm}G}
\DeclareMathOperator{\origpk}{\operatorname{origin}}
\DeclareMathOperator{\shftmsg}{\Uparrow^{(+m)}}
\DeclareMathOperator{\minusshftmsg}{\Uparrow^{(-m)}}
\DeclareMathOperator{\shftack}{\Downarrow^{(+s)}}
\DeclareMathOperator{\minusshftack}{\Downarrow^{(-s)}}
\DeclareMathOperator{\shftghost}{\Downarrow^{(+g)}}
\DeclareMathOperator{\minusshftghost}{\Downarrow^{(-g)}}
\DeclareMathOperator{\shftflush}{\Downarrow^{(+f)}}
\DeclareMathOperator{\minusshftflush}{\Downarrow^{(-f)}}

\DeclareMathOperator{\lbl}{\operatorname{\lambda}}
\DeclareMathOperator{\heighta}{\operatorname{height_1}}
\DeclareMathOperator{\heightb}{\operatorname{height_2}}

\newcommand{\channel}[2]{\protect\overrightarrow{#1#2}}
\newcommand{\psendevent}[1]{{#1}^{\scriptscriptstyle{\operatorname{QU}}}}
\newcommand{\preceiveevent}[1]{{#1}^{\scriptscriptstyle{\operatorname{PR}}}}
\newcommand{\pappevent}[1]{{#1}^{\scriptscriptstyle{\operatorname{PR}}}}
\newcommand{\notify}[2]{v_{#1}({#2})}
\newcommand{\pnotifyevent}[2]{\notify{#1}{#2}^{\scriptscriptstyle{\operatorname{PR}}}}
\newcommand{\joinevent}[1]{{#1}_{\textsc{run}}}
\newcommand{\haltevent}[1]{{#1}_{\textsc{hlt}}}
\newcommand{\crittime}[2]{{\operatorname{Crit}}({#1} \rightarrow {#2})}

\newcommand{\lblggg}[4]{[{\ell_{#1}}.{#2}.{#3}|{#4}]} 
\newcommand{\lblzgg}[3]{[{\ell_{#1}}.\hat{0}.{#2}|{#3}]}
\newcommand{\lblgzg}[3]{[{\ell_{#1}}.{#2}.\hat{0}|{#3}]}
\newcommand{\lblzzg}[2]{[{\ell_{#1}}.\hat{0}.\hat{0}|{#2}]}
\newcommand{\lblzgz}[2]{[{\ell_{#1}}.\hat{0}.{#2}|\hat{0}]}
\newcommand{\lblggz}[3]{[{\ell_{#1}}.{#2}.{#3}|\hat{0}]} 
\newcommand{\lblgzz}[2]{[{\ell_{#1}}.{#2}.\hat{0}|\hat{0}]} 
\newcommand{\lblzzz}[1]{[{\ell_{#1}}.\hat{0}.\hat{0}|\hat{0}]} 

\newcommand{\lblcggg}[5]{[{\crittime{#1}{#2}}.{#3}.{#4}|{#5}]} 
\newcommand{\lblczzg}[3]{[{\crittime{#1}{#2}}.\hat{0}.\hat{0}|{#3}]} 
\newcommand{\lblcgzg}[4]{[{\crittime{#1}{#2}}.{#3}.\hat{0}|{#4}]}
\newcommand{\lblcgzz}[3]{[{\crittime{#1}{#2}}.{#3}.\hat{0}|\hat{0}]} 
\newcommand{\lblcggz}[4]{[{\crittime{#1}{#2}}.{#3}.{#4}|\hat{0}]} 
\newcommand{\lblczzz}[2]{[{\crittime{#1}{#2}}.\hat{0}.\hat{0}|\hat{0}]} 

\newcommand{\valuepre}[3]{{#1}_{{#2}@{#3}}}
\newcommand{\valuepost}[3]{{#1}^{{#2}@{#3}}}
\DeclareMathOperator{\view}{\textsf{view}}
\DeclareMathOperator{\trig}{\textsf{trig}}
\DeclareMathOperator{\trans}{\textsf{trans}}
\DeclareMathOperator{\cont}{\textsf{cont}}
\DeclareMathOperator{\njop}{\textbf{n}_{\scriptscriptstyle{JOIN}}}
\DeclareMathOperator{\nrop}{\textbf{n}_{\scriptscriptstyle{REM}}}
\DeclareMathOperator{\nnop}{\textbf{n}_{\scriptscriptstyle{START}}}
\DeclareMathOperator{\ndop}{\textbf{n}_{\scriptscriptstyle{STOP}}}
\DeclareMathOperator{\pkmop}{\textbf{p}_{\scriptscriptstyle{MSG}}}
\DeclareMathOperator{\pksop}{\textbf{p}_{\scriptscriptstyle{ACK}}}
\DeclareMathOperator{\pkfop}{\textbf{p}_{\scriptscriptstyle{FLUSH}}}
\DeclareMathOperator{\pkhop}{\textbf{p}_{\scriptscriptstyle{GHOST}}}
\DeclareMathOperator{\pkdop}{\textbf{p}_{\scriptscriptstyle{DONATE}}}
\DeclareMathOperator{\pkcop}{\textbf{p}_{\scriptscriptstyle{CO-DONATE}}}
\DeclareMathOperator{\gap}{\textsf{gap}}
\newcommand{\nj}[2]{\njop\langle#1, #2\rangle}
\newcommand{\nr}[1]{\nrop\langle#1\rangle}
\newcommand{\nn}[1]{\nnop\langle#1\rangle}
\newcommand{\nd}[0]{\ndop}
\newcommand{\pkm}[1]{\pkmop\langle#1\rangle}
\newcommand{\pks}[1]{\pksop\langle#1\rangle}
\newcommand{\pkf}[1]{\pkfop\langle#1\rangle}
\newcommand{\pkh}[1]{\pkhop\langle#1\rangle}
\newcommand{\pkd}[1]{\pkdop\langle#1\rangle}
\newcommand{\pkc}[1]{\pkcop\langle#1\rangle}
\newcommand{\vdeath}[1]{r({#1})}

\begin{document}
\maketitle

\begin{abstract}
Enterprise-scale systems such as those used for cloud computing require a scalable and highly available infrastructure. One crucial ingredient of such an infrastructure is the ability to replicate data coherently among a group of cooperating processes in the presence of process failures and group membership changes. The last few decades have seen prolific research into efficient protocols for such data replication. One family of such protocols are the virtually synchronous protocols. Virtually synchronous protocols achieve their efficiency by limiting their synchronicity guarantee to messages that bear a causal relationship to each other. Such protocols have found wide-ranging commercial uses over the years. One protocol in particular, the {\ttfamily CBCAST} protocol developed by Birman, Schiper and Stephenson in 1991 and used in their ISIS platform was particularly promising due to its unique no-wait properties, but has suffered from seemingly intractable race conditions. In this paper we describe a corrected version of this protocol and prove its formal properties.
\end{abstract}

\newpage
\tableofcontents

\section{Introduction}
Many modern computational tasks are performed by groups of processes that are physically distributed and are prone to failure. Such computations require efficient ways to reliably multicast messages within the group in the presence of group membership changes. These requirements are becoming increasingly common in data center systems such as storage systems and server clusters. Meeting these requirements while maintaining good performance is challenging. There is a need to keep all the members of the group in a synchronized state, whatever that may mean, and there is a need to avoid split brains and subtle race conditions that can occur when the group is reconfigured.

These requirements generally fall into three categories:
\begin{description}
\item[Message synchrony] For the group to work coherently, at least some messages must be delivered at different group members in the same order.
\item[Reconfiguration atomicity] When the group membership changes, there must be some guarantee against split brains and all the members of the new group must reach some kind of agreement on a common initial state.
\item[Progress guarantee] There must be some guarantee that messages get delivered to the whole group.
\end{description}

Meeting such requirements can be costly. To reduce these costs different authors have proposed weaker requirements within these three categories that allow for more performant systems. The original papers describing the process group model \cite{cheriton1985distributed, birman87} envisioned incremental group reconfigurations, possibly limited to one member addition or removal at a time. This makes reconfiguration more expensive. More recent treatments of the subject assume general reconfigurations that occur in bulk. In \cite{birman2010virtually} bulk reconfiguration is joined with a flexible framework of delivery guarantees that allows for the tailoring of these guarantees to the needs of specific applications. However this framework requires all message delivery to stop while the system is being reconfigured. The Rambo system \cite{lynch2002rambo} is designed more specifically for a distributed atomic memory, but it allows messages to flow while bulk reconfiguration is taking place. It also enjoys weaker synchrony guarantees that are tailored to the specific requirements of atomic memory.

One multicast paradigm that proved especially useful is Virtually Synchrony (see \cite{birman87}). Virtual Synchrony achieves high efficiency albeit with some reduced availability (see \cite{birman1993process}) by serializing messages only when they may be causally linked. Messages that are not causally linked may be delivered in different orders to different members of the process group. Virtually synchronous multicast protocols have been used for a long time in many commercial and non-commercial systems, for example Horus (see \cite{birman96}) and Transis (see \cite{amir92}). For a comprehensive specification of group multicast protocols and their properties, see \cite{chockler2001group}.

One truly exceptional proposal for a virtually synchronous protocol was described in \cite{birman1991lightweight}. The \cbcast{} protocol that the authors describe in that paper not only promises virtual synchrony and reconfiguration atomicity without stopping message delivery, but also promises to work without any delivery guarantee - all message broadcasts are ship-and-pray. While this last promise is not explicitly elaborated on in the paper, it is what makes the \cbcast{} protocol exceptionally powerful. Unfortunately, \cbcast{} did not deliver on its promise. Its very complexity meant that race conditions could never be completely wrung out of it. The goal of this paper is to fix that protocol and provide a rigorous proof of its properties.

Following \cite{birman1991lightweight}, we describe the \cbcast{} algorithm within a formal model that contains processes, channels, packets, and an opaque group membership service.
Channels connect pairs of processes, enabling them to send point-to-point packets (we reserve the term "messages" to the entities that are  multicast by \cbcast{}). The group membership service (\gms{}) is assumed to be a primary-component \gms{} (as opposed to a partitionable one, see \cite{chockler2001group}.) \gms{} provides ordered notifications of group membership changes. Each component of the model can fail independently. As a result, failure scenarios can get very complex.

We provide a rigorous proof of two essential properties of the \cbcast{} protocol, the Causal Order Property and the Progress Property. The Causal Order Property says that messages are delivered at each process in an order that respects the causality relationships between messages. The Progress Property says that if two processes in the group never halt, then any message broadcast by one is delivered at the other, provided that only a finite number of processes join the process group. Both of these properties are proved to hold under any pattern of component failure in the cluster.

The proof is divided into a number of parts, each of which is a separate investigation. The outline of the proof plan is as follows:
\begin{itemize}
\item
The first part deals with the formal model only and is independent of \cbcast{}. We analyze the failure scenarios in the model (stop failures only) using an axiomatic approach, and show that under reasonable assumptions on the behavior of the \gms{} all partial failure cases are equivalent to either the failure-free case or to a simultaneous failure of all the components. In essence the group behaves like a single fault domain. This frees us to carry out the rest of the proof under the assumption that no failures ever occur. A critical ingredient in the analysis is the fact that the formal act of removing a process from the group by the \gms{} can mask the stop-failure of a process.
\item
In the second part we give a detailed description of the \cbcast{} protocol. The version described here is not the most general and certainly not the most efficient. Our goal here is to achieve maximum simplicity in order to facilitate a clear analysis. Some parts of the protocol that deal with the admission of processes into the group are new. Specifically the steps of state {\em donation} and {\em co-donation}.
\item
The third part deals with processes that are admitted to the group by \gms{} (as opposed to processes that are in the group from the start). We prove the rather surprising fact that a process admission event can be reduced to a process removal event. To do that we construct an explicit mapping from a history that contains an admission of a process $G$ to a history that contains a removal of an "opposite" process $\minusG$. We show that the two histories carry the same computation and share the same progress and causality order properties. As a result we can get rid of any finite number of \gms{} admission events.
\item
The fourth part is the proof of the Causal Order Property and the Progress Property in the special case where no new processes are ever admitted to the group beyond the original members. A central idea in the proof is the concept of {\em effective routes}. The \cbcast{} protocol allows a message to be transmitted from source to target multiple times and through multiple routes. However only one of the routes is effective and all the other are redundant. Concentrating on the effective routes simplifies the analysis a great deal.
\end{itemize}

Our model makes a number of assumptions on the behavior of the group membership service. There are many different implementations of such a service in the literature. Such a service is sometimes referred to as a {\em reconfiguration} service. See for example \cite{birman2010virtually, lynch2002rambo}. Virtually all such services are based on the Paxos protocol (see \cite{lamport1998part, lamport2001paxos}). Elsewhere we describe a detailed implementation and analysis of a compliant (primary-component) group membership service (see \cite{arnonXX}).

\section{The Underlying Computational Model}
\label{UnderlyingModelSec}
\subsection{Introduction}
In this section we create a detailed, axiomatic computational model in which the \cbcast{} protocol executes. The purpose of this model is to create a context that is rich and precise enough for us to carry two of the core arguments in this paper. First, that the failure model of a group computation with a group membership service and stop faults is equivalent to the failure model of a single process, where the only failure is an instantaneous failure of the whole system; Second, that any group computation that is based on \cbcast{} where a process $G$ joins in the middle of the computation is identical to a computation carried by a similar group that includes the process $G$ from the start. The first argument allows us to carry out our analysis of the properties of \cbcast{} without taking failures into consideration. The second allows us to carry the analysis under the assumption that processes never join the group during the computation.

The first argument is carried out in the current part. We use the interface points between different components in the model as a place to "shift the blame" from channel and \gms{} failures to process failures. Then we hide the process failures by subsuming them into the group membership service itself. To put this last shift in other words, if a process is officially removed by the group membership service and halts at the same time, then we can analyze the event as a pure group membership service event rather than as two separate events (a group membership service event and a component failure). To make this intuitive, process removal and process halting have to occur simultaneously. The notion of simultaneity requires some work since our model does not include a notion of time. We perform a careful analysis of the partial order that exists between events in the model to show that they can be laid along a timeline in such a way that desired events (such as process removal and process halting) occur at the same time.

To make blame shifting possible, we model the channels as including a queuing stage at the sending and receiving ends of the channel - in other words we add a send queue and a receive queue to each channel. A channel that fails to ship a packet to its destination can shift the blame to the sending process by claiming that the faulty packet never left the send queue. We add a similar queue for \gms{} notifications. This queuing construct is not as artificial as it may sound. Queuing is inherent to communications networks. The sender/channel and channel/receiver boundaries are inherently blurry.

The main technical difficulty in the analysis is keeping infinities from creeping in. If we try to shift the blame for an infinite number of failures to a single component of the model, we end up with the absurd conclusion that an infinite number of events happen in a finite period of time. It is for this reason that the analysis proceeds by dealing with each class of failures in one fell swoop rather than dealing with failures one at a time.

\subsection{Model Overview}
\label{ModelOverviewSS}
We assume a group of {\em processes} - the group having a set of initial members to which at various times new processes
can be added, while processes that are already in the group can be removed. We do not care how the decision
to add or remove processes is made. We assume that there is an opaque {\em group membership service} (\gms{} for short) that notifies all the current
member processes of any addition or removal of a process. The notifications for each process are appended by \gms{} to a {\em \gms{} receive queue}. Eventually the process dequeues each notification in order and processes it. We do not assume that the membership service is reliable - notifications can stop arriving at an eligible process.

We assume that the processes can communicate with each other by exchanging packets on point-to-point, unidirectional {\em channels}.
We assume that the packets in each channel arrive in the order they were sent, i.e. the channel between the source and target is FIFO.
The channels are not assumed to be reliable. When a process $P$ wants to send a packet to a process $Q$, it appends the packet to the {\em send queue} of the outbound channel between $P$ and $Q$. Eventually an opaque driver dequeues the packet and sends it through the channel. When $Q$ receives a packet from $P$, the packet is appended to the {\em receive queue} of the inbound channel by an opaque driver. Eventually the process dequeues each packet and processes it. To make the model more symmetrical we assume that there is a channel, called the {\em self-channel}, between each process and itself.

We assume that each process executes some code. This code is made up of a {\em protocol} \prot{} and an {\em application} \ulp{}. The application is an arbitrary execution thread that makes {\em message multicast requests} to communicate with other processes. Each message multicast request is appended to the {\em \ulp{} receive queue} of the process. Eventually the process dequeues each message and multicasts it.

\ulp{} does not directly specify the set of recipient processes of each multicast. Doing so is impossible since the roster of processes changes over time and messages are multicast asynchronously rather than immediately. Instead, \ulp{} must specifiy a {\em functional set} of recipients, and leaves it to \prot{} to decide who are the members of the functional set at the moment that it performs the multicast.

In general many functional sets can be specified for a multicast. For example if a natural order relation exists between all possible processes (e.g. an order on their identifiers) then the smallest member process, largest member process, or two smallest processes can be specified as functional sets. If the processes have distinguishing characteristics (e.g. colors) then the blue processes or ultraviolet processes can be specified.

In our analysis we do not assume any order or distinguishing characteristics. This leaves us with only one interesting functional set, namely the universal set of all member processes. A multicast to the universal set is called a {\em broadcast}. This is not really a limitation because functional sets can be specified by \ulp{} in the body of each message. Once the message is received by a process, the receiving process can decide on its own whether it belongs to the specified set. If it does not it can simply ignore the message.

\prot{} is a set of non-blocking callbacks that the process uses to process the packets, notifications and message broadcast requests that it dequeues from its various receive queues. We assume that the processing of each received {\em item} - packet, notification or message broadcast request - completes without any context switching, meaning that the items are processed one at a time; that items that arrive at the same queue are processed in the order at which they were queued; and that \ulp{} does not progress while an item is processed.

We assume that each process has a {\em state}. The state can be thought of as the totality of values of all the variables that are managed by \prot{}. We assume that \ulp{} can read some of this state, but cannot change any of it directly. This does not mean that \ulp{} has no private state of its own, however we do not model it. We will see that at least in the case of \cbcast{}, and plausibly in the general case, \ulp{} is started from scratch at each new process and therefore we can treat it as stateless for the purpose of our analysis.

Since the state is only changed by \prot{}, the state of a process when it dequeues the next packet, notification or message broadcast request is identical to the state it had after it finished processing the previous packet, notification or message broadcast request. The initial group of processes all start their life with an identical initial state. Any process that is subsequently admitted into the group by \gms{} starts life with a state that is identical to the state of a {\em parent} which must be an existing group member. The parent is selected by \gms{}. To be more accurate, the new process starts life by dequeuing the \gms{} notification of its own joining. At that moment is has the same state that its parent has when it dequeues the same notification from \gms{}. This makes the act of dequeuing the notification similar to the execution of a $\fp{fork}()$ system call in UNIX.

There are a few types of {\em events} in our model. Each event occurs at a specific process and can be either a {\em queuing} event, also called a {\em side effect} event, or a {\em dequeuing event}, also called a {\em trigger} event.

A trigger event occurs when any item is dequeued from some receive queue. This includes packet, notification and message broadcast request dequeuing events. Such an event triggers processing, using the appropriate callback supplied by \prot{}.

A side effect event occurs when the process multicasts information by appending one or more identical packets to the send queues of outbound channels, destined to different processes. Such an event always occurs as a result of the execution of a callback, and is therefore a side effect of that execution. In other words, the side effects of a trigger are determined by the current state of the process and by the implementation of \prot{}. \ulp{} cannot initiate a queuing event directly, but only through the issue of an message broadcast request.
    
The various send queues of outbound channels; the receive queues of inbound channels; and the receive queues of notifications and message broadcast requests form an integral part of a process in our model. The inclusion of queuing as part of the model is in recognition of the fact that queuing is a fundamental property of any communication mechanism and is not an artifact of any particular implementation. The presence of queuing blurs the boundary between a processes and channels and between processes and \gms{}. This insight is crucial in reasoning about failures.

Some events have an actual or potential causal relations between them. This is captured by a {\em partial order} on events. The queuing event of a packet always precedes its dequeuing event. In addition, the events at a specific process are linearly ordered, capturing the assumption of a single threaded processing of trigger events. In fact, at every process the sequence of events can be broken into intervals composed of a single trigger followed by a finite sequence of the zero or more side effects that are caused by the processing of the trigger. We refer to such a sequence as a {\em transaction}.

The structure of the transactions that compose the event set of each process is determined by \prot{}. Each trigger event causes a \prot{}-specific callback to be executed. The callback can generate an arbitrary number of queuing events.

While the number of events may be infinite over the life of a process, there is only a finite number of events preceding each particular event.

\subsection{The Formal Model}
\subsubsection{Ingredients of the model}
\begin{description}
\item[$\processset$] \hfill \\
The set of all {\em processes}. 
\item[$\processset_h$] \hfill \\
The set of all halting processes. $\processset_h \subset \processset$.
\item[$\numview$] \hfill \\
The number of views. $0 < \numview \le \infty$
\item[$\viewset_i$ where $i < \numview$] \hfill \\
The set of members of the $i^{th}$ view. $\viewset_i$ is a finite set of processes. We will freely refer to both $\viewset_i$ and its index $i$ as a "view".
\item[$\packetset$] \hfill \\
The set of all packets. A packet $k$ has a {\em content} which in denoted by $\cont(k)$ and is protocol and application specific. A packet is {\em sent} from a source process and {\em received} by a target process. Due to faults, a packet may fail to be sent or received. So in general a packet is either sent and received, or sent and not received, or not sent. In the second case, where the packet is sent but not received, we say that the packet is {\em dropped}.

By abuse of notation we will say $k = c$ when we mean $\cont(k) = c$.
\item[$\notificationset$] \hfill \\
The set of all \gms{} notifications. For each view $i$ and each process $P$ there is at most one notification for view change $i$ that is supposed to be received at $P$. We denote this notification by $\notify{i}{P}$. A notification has a {\em content} which is made up of the view change type (join or removal), the identity of the process that is joining or is being removed, and possibly the identity of the parent process (in case of a join). We denote the content of the notification by $\cont(\notify{i}{P})$. The possible contents for a notification are:
\begin{description}
\item[$\nr{\fp{pid}}$]  \hfill \\
Process \fp{pid} is removed.
\item[$\nj{\fp{pid}}{\fp{p\_pid}}$] \hfill \\
Process \fp{pid} is joining, and its parent is the existing member \fp{p\_pid}.
\item[$\nn{\fp{pid}}$] \hfill \\
You (the recipient of the notification) are the new member of the group, and your identifier is \fp{pid}.
\item[$\nd{}$] \hfill \\
You (the recipient of the notification) are no longer a member of the group.
\end{description}

Due to faults, a notification may fail to be received by its target process. In such a case we say that the notification is {\em dropped}.

By abuse of notation we will say $\notify{i}{P} = c$ when we mean $\cont(\notify{i}{P}) = c$.
\item[$\requestset$] \hfill \\
The set of all message broadcast requests from \ulp{}. An message broadcast request $r$ has a {\em content} which is denoted by $\cont(r)$ and is application specific. Since these requests are generated locally we assume that they are never dropped.
\item[$\channel{P}{Q}$] \hfill \\
The unidirectional channel from a {\em source} process $P$ to a {\em target} process $Q$. A channel is a set of packets. $\channel{P}{Q} \subset \packetset$.
\item[$\eventset, \prec$] \hfill \\
The set of all events, partially ordered by the $\prec$ order relation. If $e \prec f$ we say that event $e$ {\em precedes} event $f$, or that $e$ is {\em earlier} than $f$ and $f$ is {\em later} than $e$.

The $\prec$ relationship is a {\em weak} partial order, meaning that in some cases both $e \preceq f$ and $f \preceq e$ can hold at the same time for $e \ne f$. We will in indicate such cases by $e \asymp f$ and say that $e$ and $f$ are {\em contemporaneous}.
\item[$\eventset_P$] \hfill \\
The set of all events that occur at process $P$. If this set is empty we say that $P$ is {\em uninitialized}.
\item[$\requestset_P$] \hfill \\
The set of all requests that \ulp{} issues at process $P$.
\item[\gms{}] \hfill \\
A Group Membership Service that delivers view change notifications to the processes.
\item[\prot{}] \hfill \\
A protocol that determines how triggers are processed and what side effects they create.
\item[\ulp{}] \hfill \\
A user application that generates message broadcast requests at various processes.
\end{description}

\paragraph{Events in the model}
\begin{description}
\item[$\psendevent{k}$] \hfill \\
The packet $k \in \channel{P}{Q}$ is appended to the send queue of the channel. This event occurs at $P$.
\item[$\preceiveevent{k}$] \hfill \\
The packet $k \in \channel{P}{Q}$ is removed from the receive queue of the channel and processed. This event occurs at $Q$.
\item[$\pnotifyevent{i}{P}$] \hfill \\
The notification $\notify{i}{P}$ is removed from the receive queue at process $P$ and processed. This event occurs at $P$.
\item[$\pappevent{r}$] \hfill \\
The message broadcast request $r \in \requestset_P$ is removed from the \ulp{} receive queue and processed. This event occurs at $P$.
\end{description}

The set $\eventset_P$ includes all the packets queued by $P$, all the packets dequeued by $P$, all the view notifications dequeued by $P$, and all the message broadcast requests dequeued at at $P$:
\begin{multline*}
\eventset_P = \left\{ \psendevent{k} \in \eventset \vert \exists Q(k \in \channel{P}{Q}) \right\} \,\bigcup\,
              \left\{ \preceiveevent{k} \in \eventset \vert \exists Q(k \in \channel{Q}{P}) \right\} \\
              \bigcup\, \left\{ \pnotifyevent{i}{P} \in \eventset \right\} \,\bigcup\,
              \left\{ \pappevent{r} \in \eventset \vert r \in \requestset_P \right\}
\end{multline*}

\subsubsection{The \prot{} interface}
\label{ProtIntSSS}
We mentioned in the introduction that each notification; dequeued packet; and message broadcast request has to be processed somehow. The processing is mostly controlled by \prot{} which consists of a small number of non-blocking calls that we will describe below.

When \ulp{} wishes to broadcast a message $m$ it invokes the \prot{} call protBroadcast($m$).

When a brand new process group $\mathit{Grp}$ is initialized, the \prot{} call protStart($\mathit{Grp}$, $P$) must be invoked manually at each member process $P \in \mathit{Grp}$.

When a notification $n$ is dequeued at a process $P$ from the notification queue, some pre-processing has to occur before \prot{} can take over. First of all removal notifications have to be separated from join notifications.

If the notification is the removal notification of a process $R$ then the \prot{} call protRemove($R$) is invoked.

If the notification is a join notification of a process $J$ with parent $E$, the process $P$ has to determine whether it is the designated parent of the joining process. If $P$ is not the designated parent ($P \ne E$) then it invokes the \prot{} call protJoin($J$, $E$). If $P$ is the parent then it must first fork the new process and call protRun($J$) in the child process to initialize $J$. This is summarized in the pseudo-code below.

\begin{procedure}[H]
\caption{doNotification($\fp{n}$)}
\label{doNotification}
\SetAlgoNoLine
\Indp
\Indp
\KwIn{$\fp{n}$ is the notification being processed}
\BlankLine
\If{$\fp{n} = \nr{R}$}
{
	protRemove($R$)\;
}
\ElseIf{$\fp{n} = \nj{J}{E}$ and $E \ne \self{}$}
{
	protJoin($J, E$)\;
}
\Else
{
	\tcp{the local process is the parent}
	\Switch{$\operatorname{fork}()$}
	{
		\Case{$\operatorname{parent}$:}
		{
			\tcp{this block is executed when fork() returns in the parent}
			protJoin($J,E$)\;
		}
		\Case{$\operatorname{child}$:}
		{
			\tcp{this block is executed when fork() returns in the child}
			protRun($J$)\;
		}
	}
}
\end{procedure}

When a received packet $k$ is dequeued at a target process $T$ from the receive queue of an incoming channel $\channel{S}{T}$, the process invokes the \prot{} call protPacket($k$, $S$).

All in all the following six non-blocking calls must be implemented by \prot{}:

\begin{description}
\item[protBroadcast(\fp{m})] \hfill  \\
This call is issued by \ulp{} when it wishes to broadcast a message \fp{m}.
\item[protStart(\fp{roster}, \fp{P})] \hfill  \\
This call is issued manually at each initial member process when the group is initialized at the start of view zero. \fp{roster} is the set of initial members and $\fp{P} \in \fp{roster}$ is the identifier of the process at which the call is issued. While \gms{} does not issue join notifications to the initial members at view zero, this call appears as if it were issued in response to such a notification.
\item[protRun(\fp{P})] \hfill  \\
This call is issued at a new process right after it is forked from its parent. \fp{P} is the identifier of the new process. At the moment of forking the new process is identical to its parent and this call is the means by which the new process acquires an independent identity. While \gms{} does not issue a join notification to the joining member, this call appears as if it were issued in response to such a notification.
\item[protRemove(\fp{P})] \hfill  \\
This call is issued in response to a removal notification from \gms{}. \fp{P} is the identifier of the removed process.
\item[protJoin(\fp{P}, \fp{E})] \hfill  \\
This call is issued in response to a join notification from \gms{}. \fp{P} is the identifier of the joining process and \fp{E} is the identifier of its parent.
\item[protPacket(\fp{k}, \fp{S})] \hfill  \\
This call is issued when a packet is dequeued from a receive queue. \fp{k} is the received packet and \fp{S} is the process identifier of the sender of the packet.
\end{description}

\subsubsection{The \ulp{} interface}
\label{UserApplication}

The user application \ulp{} runs in its own thread at each process. Its only means for communicating with other processes is the \ref{BroadcastMessage} call. With the help of \prot{}, \ulp{} can have view change notifications and messages from other processes delivered to it. In order to facilitate these deliveries, \ulp{} must implement a small number of non-blocking callbacks that are executed by \prot{} within the context of \prot{} calls. As a result these callback execute outside the context of the main \ulp{}-thread. 

We assume that these callbacks are used by \ulp{} to manage an opaque (\ulp{}-dependent) data structure \uldata{}. This data structure can be manipulated by the callbacks, but the main \ulp{} thread can only read that structure and not change it. In order to allow \uldata{} to be initialized \ulp{} must provide an initialization callback.

The callbacks are not able to invoke the \ref{BroadcastMessage} call. Only the main thread can do that.

Specifically, \ulp{} must obey the following rules:
\begin{enumerate}
\item \ulp{} must implement the following callback functions:
\begin{itemize}
\item GroundState(): This call creates the initial value of \uldata{}. It is called when a member of view zero is initialized, thus guaranteeing that the replicated data will start with coherent values at all the initial processes (the meaning of "coherent" here is \ulp{}-specific. It can simply mean "identical", but it can also indicate a more complex relationship.)
\item ApplyMessage(\fp{msg}, \fp{originator}): This call applies a message \fp{msg} from process \fp{originator} to \uldata{}. One possible way to apply the message is to append it to a delivery log (see \cite{birman87}).
\item ApplyJoin(\fp{pid}): This call applies the notification that a process with identity \fp{pid} has joined the group. In \cite{birman87} such notifications are appended to the delivery log.
\item ApplyRemoval(\fp{pid}): This call applies the notification that the process with identity \fp{pid} has been removed from the group. In \cite{birman87} such notifications are appended to the delivery log.
\end{itemize}
\item \ulp{} must implement a Main(\fp{pid}) function. This is the main application thread. Its only parameter is the local process identity. This implies that when \ulp{} is started at a new process, it has no context except for the local process identity and the information that is available to it through \uldata{}.
\item The Main() function of \ulp{} may invoke the \ref{BroadcastMessage} procedure but the callbacks may not.
\item  The Main() function has read-only access to \uldata{}. It may manage additional data outside of \uldata{} without any restrictions.
\item The callbacks listed above have read and write access to \uldata{}. They have access to no other information. In particular they do not know the identity of the local process.
\item The Main() function runs in its own thread. The callbacks are called in the context of a critical section, and therefore must not block.
\end{enumerate}

\subsubsection{Model axioms and histories}
\label{ModelAxiomsSSS}
\newcommand{\AxViewI}{View Interval Axiom}
\paragraph{\AxViewI} \hfill \\
A process $P$ is a member of at least one view, and the set of views of which it is a member is an unbroken interval, called the {\em view interval}, which is either finite or infinite. Formally
$$
\left\{ i | P \in \viewset_i \} = \{ i | j(P) \le i < \vdeath{P} \right\}  
$$

If $j(P) > 0$ then we say that $j(P)$ is the {\em join view} of $P$. If $j(P) = 0$ then we say that $P$ is {\em original}.

If $\vdeath{P} < \numview{}$ then we say that $\vdeath{P}$ is the {\em removal view} of $P$. If $\vdeath{P} = \numview{}$ then we say that $P$ is {\em not removed}.

\newcommand{\AxViewII}{View Change Axiom}
\paragraph{\AxViewII} \hfill \\
View zero, which contains the initial group members, is finite. Every subsequent view differs from its predecessor by the addition or removal of a single process. As a result, each view other than view zero is the join view or the removal view of exactly one process.

\newcommand{\AxPackEventI}{Channel Axiom}
\paragraph{\AxPackEventI} \hfill \\
Every packet belongs to exactly one channel. If $k \in \channel{P}{Q}$ we say that $P$ is the {\em source} of $k$ and $Q$ is the {\em target} of $k$.

\newcommand{\AxPackEventII}{Packet Event Axiom}
\paragraph{\AxPackEventII} \hfill \\
Every packet $k \in \channel{P}{Q}$ has a single queuing event $\psendevent{k} \in \eventset_P$ and at most one dequeuing event $\preceiveevent{k} \in \eventset_Q$. If the packet has a dequeuing event then
$$
\psendevent{k} \prec \preceiveevent{k}
$$
in other words $k$ is queued (at the sending process) before it is dequeued (at the receiving process).

If $k_1 \ne k_2$ then $\preceiveevent{k_1} \ne \preceiveevent{k_2}$ and $\preceiveevent{k_1} \ne \psendevent{k_2}$, wherever these events exist.

If $e = \psendevent{k}$ then the set of packets $M_e = \{ k' \,|\, \psendevent{k'} = e \}$ is finite. $M_e$ is called the {\em multicast set} of $e$. All the packets in $M_e$ have identical contents and share the same source $P$, but they must all have different targets. In other words, no two of them belong to the same channel. The set of target processes $T_e = \{ Q \,\Vert \exists k' \in M_e (k' \in \channel{P}{Q}) \}$ is called the {\em target set} of $e$.

\newcommand{\AxPackEventIII}{Packet Order Axiom}
\paragraph{\AxPackEventIII} \hfill \\
Channels are FIFO. Precisely, if
\begin{itemize}
\item packets $k$ and $k'$ belong to the same channel
\item $\psendevent{k} \prec \psendevent{k'}$
\item $\preceiveevent{k'}$ exists
\end{itemize}
then $\preceiveevent{k}$ exists and $\preceiveevent{k} \prec \preceiveevent{k'}$.

\newcommand{\AxGMSI}{\gms{} Axiom}
\paragraph{\AxGMSI} \hfill \\
The following \gms{} notifications exist in the model
\begin{itemize}
\item For each process $P$ and each $j(P) < i < \vdeath{P}$ there is exactly one notification $\notify{i}{P}$.
\begin{itemize}
\item If $i$ is the join view of process $J$ (with parent $E$) then $\notify{i}{P} = \nj{J}{E}$
\item If $i$ is the removal view of process $R$ then $\notify{i}{P} = \nr{R}$. 
\end{itemize} 
\item For each process $P$ with $j(P) > 0$ there is exactly one notification $\notify{j(P)}{P} = \nn{P}$
\item For each process $P$ with $\vdeath{P} < \numview$ there is exactly one notification $\notify{r(P)}{P} = \nd{}$
\end{itemize}

We also add the following artificial notifications that do not relate to actual \gms{} notifications:
\begin{itemize}
\item For each original process $P$ we add a notification $\notify{0}{P} = \nn{P}$
\item For each process that halts and is not removed we add a notification $\notify{\numview{}}{P} = \nd{}$
\end{itemize}

\newcommand{\AxNotEventI}{Notification Event Axiom}
\paragraph{\AxNotEventI} \hfill \\
If $\notify{i}{P}$ exists, there is at most one $\pnotifyevent{i}{P}$ event. If $i = 0$ and $P$ is original then $\pnotifyevent{i}{P}$ exists.

If $\pnotifyevent{j(P)}{P}$ exists, we will use the shorthand $\joinevent{P} \equiv \pnotifyevent{j(P)}{P}$ \\
If $\pnotifyevent{\vdeath{P}}{P}$ exists, we will use the shorthand $\haltevent{P} \equiv \pnotifyevent{\vdeath{P}}{P}$

\newcommand{\AxNotEventII}{Notification Order Axiom}
\paragraph{\AxNotEventII} \hfill \\
View notifications are dequeued in order. Precisely, if $j(P) \le i < i' \le \vdeath{P}$ and $\pnotifyevent{i'}{P}$ exists then
\begin{enumerate}
\item $\pnotifyevent{i}{P}$ exists.
\item $\pnotifyevent{i}{P} \prec \pnotifyevent{i'}{P}$
\end{enumerate}

\newcommand{\AxNotEventIII}{Parent Axiom}
\paragraph{\AxNotEventIII} \hfill \\
If $J$ joins in view $j(J) > 0$ and $E$ is its parent and $\joinevent{J}$ exists then $\pnotifyevent{j(J)}{E}$ also exists and $\pnotifyevent{j(J)}{E} \asymp \joinevent{J}$. In other words
a new process would not instantiate unless its parent has processed the notification that announces its joining but we model the two events as being contemporaneous.

\newcommand{\AxProcI}{Process Order Axiom}
\paragraph{\AxProcI} \hfill \\
The precedence order $\prec$ is a linear order at each process $P$. In other words any two events in $\eventset_P$ are $\prec$-comparable.

\newcommand{\AxProcII}{Process Liveness Axiom}
\paragraph{\AxProcII} \hfill \\
A process $P$ does not queue a packet to send to another process $Q$ unless $Q$ appears live to $P$. In other words, if $k \in \channel{P}{Q}$ then the following two conditions must be met:
\begin{enumerate}
\item $Q \in \viewset_{j(P)}$ or, $\pnotifyevent{j(Q)}{P} \in \eventset_P$ and $\pnotifyevent{j(Q)}{P} \prec \psendevent{k}$
\item If $\pnotifyevent{\vdeath{Q}}{P} \in \eventset_P$ then $\psendevent{k} \prec \pnotifyevent{\vdeath{Q}}{P}$
\end{enumerate}
Take note that this axiom is a statement about the behavior of \prot{}, since its callbacks are the only elements of the model that generate queuing events.

\newcommand{\AxProcIII}{Piggyback Axiom}
\paragraph{\AxProcIII} \hfill \\
Packets are processed in the same or higher view than the one at which they are queued\footnote{For this axiom to hold the implementer has to "piggyback" the latest view change information on every packet that is sent, in case this information is missing at the receiving side}.
In other words, if $k \in \channel{P}{Q}$ and $\preceiveevent{k}$ exists, then for any $i$, if
$i \le j(P)$ or $\pnotifyevent{i}{P} \prec \psendevent{k}$ then $i \le j(Q)$ or $\pnotifyevent{i}{Q}$ exists and $\pnotifyevent{i}{Q} \prec \preceiveevent{k}$.

\newcommand{\AxProcIV}{Self Channel Axiom}
\paragraph{\AxProcIV} \hfill \\
Packets on self channels are processed early. If $P$ is a process, $k \in \channel{P}{P}$ is a packet and $i$ is a view such that $\pnotifyevent{i}{P}$ exists and $\psendevent{k} \prec \pnotifyevent{i}{P}$ then $\preceiveevent{k}$ exists and $\preceiveevent{k} \prec \pnotifyevent{i}{P}$.

\newcommand{\AxAppI}{Request Event Axiom}
\paragraph{\AxAppI} \hfill \\
If request $r \in \requestset_P$ exists, there is at most one $\pappevent{r}$ event.

\newcommand{\AxOrderI}{Order Foundation Axiom}
\paragraph{\AxOrderI} \hfill \\
The $\prec$ order in $\eventset_P$ is {\em very well founded}, meaning that every event is preceded by only a finite number of earlier events. Formally, for any event $e \in \eventset_P$:
$$
\lvert \left\{ f \in \eventset_P \vert f \prec e \right\} \rvert < \infty
$$

If $\eventset_P \ne \emptyset$ then $\joinevent{P}$ exists and is the first element of $\eventset_P$. \\
If $\haltevent{P}$ exists then is the last element of $\eventset_P$.

\newcommand{\AxOrderII}{Minimal Order Axiom}
\paragraph{\AxOrderII} \hfill \\
The order $\prec$ is the minimal order generated by the order relations at each process and by the orders stipulated by the \AxPackEventII{} and the \AxNotEventIII{}.

\newcommand{\AxHaltI}{First Halting Axiom}
\paragraph{\AxHaltI} \hfill \\
A halting process $P$ has a finite event set. In other words, if $P \in \processset_h$ then $|\eventset_P| < \infty$.

\newcommand{\AxHaltII}{Second Halting Axiom}
\paragraph{\AxHaltII} \hfill \\
Let $P$ be a process and let $i$ be a finite integer in the interval $j(P) \le i \le \vdeath{P}$. Then
\begin{itemize}
\item If $\notify{i}{P}$ is dropped then $\pnotifyevent{i}{P}$ does not exist.
\item If $P$ does not halt and $\notify{j}{P}$ is not dropped for any $j \le i$ then $\pnotifyevent{i}{P}$ exists. In other words if $P$ does not halt then it dequeues all the notifications that it is legally allowed to process. 
\end{itemize}

\newcommand{\AxHaltIII}{Third Halting Axiom}
\paragraph{\AxHaltIII} \hfill \\
Let $P$ and $Q$ be processes and let $k \in \channel{P}{Q}$ be a packet. Then
\begin{itemize}
\item If $P$ does not halt then $k$ is sent. In other words, a non-halting process eventually sends all the packets that it queues to its send queues.
\item If $k$ is not received then $\preceiveevent{k}$ does not exist.
\item If
\begin{itemize}
\item $P$ and $Q$ do not halt
\item Every packet $k' \in \channel{P}{Q}$ where $\psendevent{k'} \preceq \psendevent{k}$ is received
\end{itemize}
then $\preceiveevent{k}$ exists. In other words in the absence of a gap or stoppage, a packet in a channel is eventually dequeued and processed.
\end{itemize}

\newcommand{\AxHaltIV}{Fourth Halting Axiom}
\paragraph{\AxHaltIV} \hfill \\
Let $P$ be a process and let $r \in \requestset_P$ be an message broadcast request. If $P$ does not halt then $\pappevent{r}$ exists. In other words a non-halting process eventually dequeues and processes all of its message broadcast requests.

\begin{defn}
\label{HistoryDef}
A particular vector of values $(\processset,\processset_h,\packetset,\notificationset, \requestset, \eventset, \prec)$ that satisfies the model axioms is called a {\bf history}.
\end{defn}

\subsection{Event Order and Time}
\label{AntiChainSSS}
The group computation model that we presented in the previous section does not include a notion of time. But it does include a notion of causality that is embodied by the partial order on events. Furthermore, any instance of group computation, embodied by the notion of a history, does unfold in physical time. How is physical time related to event order? intuitively, effects always follow causes in time. Also, within any finite interval of time only a finite number of events can occur. Beyond that we can say nothing. In other words, given any physical group computation in our model there should be a timestamp mapping
$$
\tstamp: \eventset^H \rightarrow \mathbb{R}
$$

Where $H$ is the history of the computation, $\mathbb{R}$ is physical time as represented by the real numbers, and $\tstamp(e)$ is the physical time at which the event $e$ occurs. The mapping $\tstamp(\cdot{})$ must have the following properties:
\begin{enumerate}
\item $e \asymp f \implies \tstamp(e) = \tstamp(f)$
\label{TimeCondI}
\item $e \prec f \implies \tstamp(e) < \tstamp(f)$
\label{TimeCondII}
\item $\Big| \left\{ e \in \eventset^H | \tstamp(e) < r  \right\} \Big| < \infty$ \quad for all $r \in \mathbb{R}$
\label{TimeCondIII}
\newcounter{timePropertyCounter}
\setcounter{timePropertyCounter}{\value{enumi}}
\end{enumerate}
Conversely, any arbitrary timestamp mapping $\tstamp(\cdot{})$ that meets the above criterion should be realizable by a physical computation that yields the history $H$. One simply has to assume that the the computation unfolds at each process at the speed that is dictated by $\tstamp(\cdot{})$.

With that in mind, we want to show that every history can be realized by a physical computation that unfolds in physical time in a particularly convenient fashion. Namely we want to show that for every history there is a timestamp function that enjoys the additional property
\begin{enumerate}
\setcounter{enumi}{\value{timePropertyCounter}}
\item $\tstamp(\pnotifyevent{i}{P}) = i$ \quad whenever $\pnotifyevent{i}{P}$ exists
\label{TimeCondIV}
\end{enumerate}
In other words, all the notification events for view $i$ occur at exactly the same time. By constructing such a realization we will demonstrate that race conditions where different processes have different ideas about group membership can be ignored and the local view change notification events at the various processes can be collapsed into a single event. This is exactly what we intend to do in section \ref{HistoryReduxS}.

For this kind of timing to be possible we must show that the partial order $\prec$ is stratified by view. In other words we must show that every event $e$ can be assigned a value $\view(e)$ such that
\begin{itemize}
\item $e \preceq f$ only if $\view(e) \le \view(f)$.
\item $\view(\pnotifyevent{i}{P}) = i$.
\end{itemize}

It turns out that this can be done for any history $H$.

\subsubsection{K\"{o}nig's Lemma}

A key tool in this and subsequent investigations of the event order relation in $H$ is K\"{o}nig's Lemma, a well known property of some partially ordered sets. We use the following formulation of the lemma:

\newcommand{\KonigLem}{K\"{o}nig's Lemma}
\begin{lem*}[\KonigLem]
Let $A$ be an infinite partially ordered set with a first element $a_0$ where the following properties hold for any element $a \in A$:
\begin{itemize}
\item $a$ has a finite number of immediate successors.
\item If $b > a$ then there is an immediate successor $b_0$ of $a$ such that $b \ge b_0$.
\end{itemize}
Then $A$ contains an infinite ascending branch.
\end{lem*}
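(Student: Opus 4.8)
The plan is to build the infinite ascending branch one immediate-successor step at a time, using a pigeonhole argument at each step to guarantee the construction never gets stuck. Call an element $a \in A$ \emph{fertile} if its up-set $U(a) = \{ b \in A : b \ge a \}$ is infinite. Since $a_0$ is a first element of $A$ we have $U(a_0) = A$, which is infinite by hypothesis, so $a_0$ is fertile; this is the base of the construction.

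The crux of the argument is a single lemma: every fertile element $a$ possesses at least one immediate successor, and at least one of its immediate successors is again fertile. For the first part, fertility gives some $b > a$, and the second hypothesis then supplies an immediate successor $b_0$ of $a$ with $b_0 \le b$. For the second part, let $c_1, \dots, c_n$ be the immediate successors of $a$ — a finite list, by the first hypothesis. The second hypothesis says exactly that every $b \in U(a) \setminus \{a\}$ satisfies $b \ge c_i$ for some $i$, hence $U(a) \subseteq \{a\} \cup U(c_1) \cup \dots \cup U(c_n)$. Since $U(a)$ is infinite, at least one $U(c_i)$ must be infinite, i.e. some $c_i$ is fertile.

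With the lemma established, define the branch recursively: take $a_0$ as given, and, having chosen a fertile $a_k$, let $a_{k+1}$ be a fertile immediate successor of $a_k$, which exists by the lemma. (Any such choice works; formally this is an appeal to dependent choice.) Each step gives $a_k < a_{k+1}$ with $a_{k+1}$ an immediate successor of $a_k$, so $a_0 < a_1 < a_2 < \dots$ is an infinite strictly ascending chain of consecutive immediate successors — an infinite ascending branch of $A$.

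I do not expect a genuine obstacle: the only real content is the covering inclusion $U(a) \subseteq \{a\} \cup \bigcup_i U(c_i)$ combined with pigeonhole on finitely many sets. The one point that needs explicit care is verifying that a fertile element actually has an immediate successor, so that the recursion cannot terminate prematurely; this is handled by the first half of the lemma. The implicit use of a choice principle to make infinitely many selections is routine and worth a parenthetical remark but is not problematic.
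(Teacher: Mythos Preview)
Your proof is correct and is essentially the same as the paper's: your ``fertile'' is the paper's ``heavy'' (an element with infinitely many successors), and both arguments use the covering $U(a) \subseteq \{a\} \cup \bigcup_i U(c_i)$ plus pigeonhole to find a heavy/fertile immediate successor, then iterate. Your version is slightly more explicit in checking that a fertile element has at least one immediate successor and in flagging the appeal to dependent choice, but the underlying argument is identical.
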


\begin{proof}
Call an element $a \in A$ {\em heavy} if it has an infinite number of successors. Then obviously $a_0$ is a heavy element. We will show that every heavy element has a heavy successor. Once we do that we can choose a heavy successor $a_1$ to $a_0$, a heavy successor $a_2$ to $a_1$, etc. until we get an infinite ascending branch $a_0 < a_1 < a_2 < \dots$.

Let $a$ be a heavy element. By assumption, each successor $b > a$ is mediated by an immediate successor of $a$. Since there are an infinite number of the former and only a finite number of the latter, there must be some immediate successor $c$ of $a$ that has an infinite number of its own successors. In other words, $c$ is a heavy successor of $a$.
\end{proof}

\subsubsection{Stratifying events by view}

We start our investigation by showing that the set of notification events of each view form a maximal semi-antichain in the partial order $\prec$, and that these semi-antichains partition the packet events by the view at which they occur, a notion that we will make precise. 

For each finite view $i$ in the view interval $0 \le i \le \numview$ define
$$
G_i = \left\{ \pnotifyevent{i}{X} | \pnotifyevent{i}{X} \text{ exists} \right\}
$$
The sets $G_i$ form a partition of all the notification events in $H$.

We now extend the partition to all the events.
For all finite $0 \le i \le \numview$ Define
\begin{align*}
\hat{K}_i &=
\left\{ e \in \eventset | g \preceq e \text{ for some } g \in G_i \right\} \\
K_i &=
\hat{K}_i \setminus \bigcup_{i < j \le \numview} \hat{K}_j & \text{for finite } 0 \le i \le \numview 
\end{align*}

We will now investigate the order relations between events in the different $G_i$ sets. We will show that each $G_i$ is a semi-antichain, meaning that there are no strict $\prec$-inequalities between its elements, and that elements in a high $G_i$ never precede elements in a low $G_i$.

\begin{lem}
\label{GOrderLem}
Suppose that there are processes $P$ and $Q$ and views $i, j$ such that
$$
\pnotifyevent{i}{P} \prec \pnotifyevent{j}{Q}
$$
Then $i < j$.
\end{lem}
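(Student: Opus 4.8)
\emph{Proof proposal.} I would attach to every event a view label and argue at the level of labels. For $e\in\eventset_R$ let $\view(e)$ be the largest index $i$ such that $\pnotifyevent{i}{R}$ exists and $\pnotifyevent{i}{R}\preceq e$. This is well defined: by the Order Foundation Axiom $\joinevent{R}=\pnotifyevent{j(R)}{R}$ exists and is the $\prec$-first element of $\eventset_R$, so $i=j(R)$ always qualifies, while only finitely many events — hence only finitely many notification events — precede $e$. Two facts are then immediate: $\view(\pnotifyevent{i}{X})=i$, by the Notification Order Axiom, and no two distinct events at a common process are contemporaneous, by the Process Order Axiom.

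The first step is monotonicity: $e\preceq f$ implies $\view(e)\le\view(f)$. By the Minimal Order Axiom it suffices to verify this on the three kinds of generating relation — an intra-process comparison, a packet edge $\psendevent{k}\prec\preceiveevent{k}$, and a Parent Axiom pair $\pnotifyevent{j(J)}{E}\asymp\joinevent{J}$ — and then pass to the transitive closure. An intra-process comparison is immediate from the definition, and a Parent Axiom pair joins two events each of view $j(J)$. The only substantive case is a packet edge with $k\in\channel{R}{S}$: writing $i:=\view(\psendevent{k})$ we have $\pnotifyevent{i}{R}\preceq\psendevent{k}$, so the Piggyback Axiom gives either $i\le j(S)$, whence $\joinevent{S}\preceq\preceiveevent{k}$ forces $\view(\preceiveevent{k})\ge j(S)\ge i$, or else $\pnotifyevent{i}{S}$ exists with $\pnotifyevent{i}{S}\prec\preceiveevent{k}$, again giving $\view(\preceiveevent{k})\ge i$. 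Hence $\view$ is nondecreasing along every chain.

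The crux is a minimality statement: if $g\preceq\pnotifyevent{i}{X}$ and $\view(g)=i$, then also $\pnotifyevent{i}{X}\preceq g$. I would prove this by induction on the length of a chain $g=g_0\preceq g_1\preceq\cdots\preceq g_p=\pnotifyevent{i}{X}$ of generating relations; by monotonicity $\view(g_m)=i$ for every $m$. If $g$ occurs at $X$, then $\view(g)=i$ already gives $\pnotifyevent{i}{X}\preceq g$. Otherwise let $g_m$ be the last term not occurring at $X$. The step $g_m\preceq g_{m+1}$ cannot be intra-process, as its two terms lie at different processes; nor can it be a packet edge, for then $g_{m+1}=\preceiveevent{k}$ would be a packet-reception event, yet $g_{m+1}\preceq g_p=\pnotifyevent{i}{X}$ together with $\pnotifyevent{i}{X}\preceq g_{m+1}$ (from $\view(g_{m+1})=i$) and the absence of contemporaneous events at $X$ would identify it with the notification event $\pnotifyevent{i}{X}$. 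So it is a Parent Axiom pair; its $X$-side endpoint is a notification event whose view label is $i$, hence equals $\pnotifyevent{i}{X}$, and its other endpoint is a notification event $\pnotifyevent{i}{R}$ with $R\ne X$ and $\pnotifyevent{i}{R}\asymp\pnotifyevent{i}{X}$. Applying the induction hypothesis to the shorter chain $g_0\preceq\cdots\preceq g_m=\pnotifyevent{i}{R}$ yields $\pnotifyevent{i}{R}\preceq g$, and therefore $\pnotifyevent{i}{X}\preceq g$.

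The Lemma then follows immediately. From $\pnotifyevent{i}{P}\prec\pnotifyevent{j}{Q}$ and monotonicity, $i=\view(\pnotifyevent{i}{P})\le\view(\pnotifyevent{j}{Q})=j$; and if $i=j$, the minimality statement (with $g=\pnotifyevent{i}{P}$ and $X=Q$) gives $\pnotifyevent{j}{Q}\preceq\pnotifyevent{i}{P}$, i.e.\ $\pnotifyevent{i}{P}\asymp\pnotifyevent{j}{Q}$, contradicting $\pnotifyevent{i}{P}\prec\pnotifyevent{j}{Q}$; so $i<j$. I expect the minimality statement to be the main obstacle. Its decisive point is that a chain witnessing $g\preceq\pnotifyevent{i}{X}$ can only re-enter $X$'s timeline at $\pnotifyevent{i}{X}$ itself — not afterward, since $\pnotifyevent{i}{X}$ is $X$'s first event of view $i$ and a chain runs forward within any fixed process, and not across an inbound packet, since a packet-reception event is never a notification event — and this is what powers the inductive descent; everything else is bookkeeping against the Notification Order and Piggyback Axioms.
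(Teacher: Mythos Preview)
Your proof is correct, but it is organized differently from the paper's. The paper argues directly by induction on the length of a shortest chain of generating relations witnessing $\pnotifyevent{i}{P}\prec\pnotifyevent{j}{Q}$, examining the \emph{first} step of the chain (packet edge or parent/child link) and using the Piggyback Axiom to manufacture a shorter chain between notification events, concluding $i<j$ in one pass. You instead introduce the $\view$ labeling up front, establish its monotonicity along generating relations (yielding $i\le j$), and then prove a separate minimality statement---that a notification event is $\preceq$-least among events of its own view index---to rule out $i=j$; your minimality induction examines the \emph{last} cross-process step of the chain rather than the first. Both arguments rest on the same axioms (Minimal Order, Piggyback, Notification Order, Parent) and on the strictness of the intra-process order, so the difference is structural rather than substantive. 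What your decomposition buys is that the $\view$ function and its monotonicity are already in hand: the paper defines $\view(e)$ and proves $e\preceq f\Rightarrow\view(e)\le\view(f)$ only \emph{after} this lemma, as separate results that depend on it, whereas in your ordering those results are essentially corollaries of what you have just done. The paper's single-pass induction is slightly leaner for this lemma in isolation; your two-step version front-loads machinery that is needed immediately afterward anyway.
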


\begin{proof}
The \AxOrderII{} implies that the relation $\pnotifyevent{i}{P} \prec \pnotifyevent{j}{Q}$ is derived from a sequence of parent/child relations ($\pnotifyevent{j(J)}{E} \asymp \joinevent{J}$) and queuing/dequeuing relations ($\psendevent{k} \prec \preceiveevent{k}$). We will prove the claim by induction on the number of intermediate steps in the shortest derivation.

If the derivation is immediate then the two events must occur at the same process, namely $P = Q$. In this case the \AxNotEventI{} and the \AxNotEventII{} imply that $i < j$ and we are done.

If the derivation is longer, look at the first step. This step can be of packet type or parent/child type.

If the first step is of packet type then there is a process $R$ and a packet $k \in \channel{P}{R}$ such that $\preceiveevent{k}$ exists, $\psendevent{k} \in \eventset_P$ and
$$
\pnotifyevent{i}{P} \prec \psendevent{k} \prec \preceiveevent{k} \prec \pnotifyevent{j}{Q}
$$
By the \AxProcIII{} this implies that $i \le j(R)$ or $\pnotifyevent{i}{R} \prec \preceiveevent{k}$. If $\pnotifyevent{i}{R} \prec \preceiveevent{k}$ then we can create a shorter derivation leading from $\pnotifyevent{i}{R}$ to $\pnotifyevent{j}{Q}$ and conclude by induction that $i < j$.

If $i \le j(R)$ then we can create a shorter derivation leading from $\joinevent{R} = \pnotifyevent{j(R)}{R}$ to $\pnotifyevent{j}{Q}$ and conclude by induction that $j(R) < j$ and therefore $i < j$.

If the first step is of parent/child type then there is an initialized process $J$ that is a child of $P$ and
$$
\pnotifyevent{i}{P} \preceq \pnotifyevent{j(J)}{P} \asymp \joinevent{J} \preceq \pnotifyevent{j}{Q}
$$
and either the rightmost or leftmost inequalities is strict.

The left inequality implies, by the \AxNotEventI{} and the \AxNotEventII{}, that $i \le j(J)$. If the inequality is strict then $i < j(J)$.

The right inequality implies by induction that $j(J) \le j$. If the inequality is strict then $j(J) < j$.

Together these facts imply that $i < j$ and we are done.
\end{proof}

\begin{cor}
$G_i \subset K_i$
\end{cor}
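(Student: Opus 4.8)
The plan is to read the claim off the definition $K_i = \hat K_i \setminus \bigcup_{i < j \le \numview} \hat K_j$ directly. One inclusion is free: $G_i \subseteq \hat K_i$ because $\preceq$ is reflexive, so every $g \in G_i$ satisfies $g \preceq g$ and thus witnesses its own membership in $\hat K_i$. It therefore remains only to show that $G_i$ is disjoint from every $\hat K_j$ with $i < j \le \numview$.

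For that, suppose toward a contradiction that some $g = \pnotifyevent{i}{X} \in G_i$ also lies in $\hat K_j$ for some $j$ with $i < j \le \numview$. Unwinding the definition of $\hat K_j$, there is a notification event $h = \pnotifyevent{j}{Y} \in G_j$ with $h \preceq g$, i.e. $\pnotifyevent{j}{Y} \preceq \pnotifyevent{i}{X}$. Since each notification dequeuing event processes exactly one notification and $i \ne j$, we have $h \ne g$, so either $h \prec g$ strictly or $h \asymp g$. In the strict case Lemma \ref{GOrderLem} applies verbatim to $\pnotifyevent{j}{Y} \prec \pnotifyevent{i}{X}$ and yields $j < i$, contradicting $i < j$. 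That disposes of everything except the contemporaneous case.

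The contemporaneous case $h \asymp g$ is the step I expect to be the main obstacle, precisely because Lemma \ref{GOrderLem} is stated only for the strict relation $\prec$, whereas here $\pnotifyevent{i}{X} \preceq \pnotifyevent{j}{Y}$ holds as well (so we cannot upgrade to a strict inequality in either direction). I would close it in one of two ways. The cleaner route is to observe that the inductive proof of Lemma \ref{GOrderLem} in fact establishes the slightly stronger statement $\pnotifyevent{i}{P} \preceq \pnotifyevent{j}{Q} \implies i \le j$: the induction on the length of the shortest derivation goes through unchanged with $\preceq$ everywhere — indeed the parent/child step already invokes the induction hypothesis on the weak inequality $\joinevent{J} \preceq \pnotifyevent{j}{Q}$ to extract $j(J) \le j$ — and strictness of $\prec$ is used only to sharpen the conclusion to $i < j$. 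Granting that refinement, $h \preceq g$ immediately gives $j \le i$, the desired contradiction. Alternatively, one argues structurally from the \AxOrderII{}: the sole generator of contemporaneity among events is the parent/child relation $\pnotifyevent{j(J)}{E} \asymp \joinevent{J}$ of the \AxNotEventIII{}, whose two sides are both notification events for the single view $j(J)$; hence any chain of $\asymp$-steps joining notification events keeps the view constant, so $h \asymp g$ would force $i = j$, again contradicting $i < j$. Either way the corollary follows, with every remaining step being routine bookkeeping.
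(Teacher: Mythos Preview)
Your proof is correct and is exactly the argument the paper intends: the paper states this as an unproved corollary of Lemma~\ref{GOrderLem}, and yours is the natural way to fill in the details. Your attention to the contemporaneous case is warranted and your first resolution is the right one—the inductive proof of Lemma~\ref{GOrderLem} does establish $\pnotifyevent{i}{P} \preceq \pnotifyevent{j}{Q} \implies i \le j$ (indeed, as you note, the parent/child step already invokes this weak form on the right-hand inequality), so the corollary follows immediately.
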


\begin{cor}
\label{PartCor}
Every event $e \in \eventset$ is a member of exactly one $K_i$.
\end{cor}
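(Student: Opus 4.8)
The plan is to separate the claim into ``at most one $K_i$'' and ``at least one $K_i$''; the first half is essentially definitional, and the second is where the work lies.

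For ``at most one'': suppose $e \in K_i \cap K_{i'}$ with $i < i'$. Since $K_{i'}$ is defined only for $i' \le \numview$, we have $i < i' \le \numview$, so $e \in K_{i'} \subseteq \hat{K}_{i'}$ directly contradicts $e \in K_i = \hat{K}_i \setminus \bigcup_{i < j \le \numview}\hat{K}_j$.

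For ``at least one'': fix $e$, and note that $e \in \eventset_P$ for some process $P$, since the four event types enumerated in the definition of $\eventset_P$ exhaust the events of the model. By the \AxOrderI{}, $\joinevent{P} = \pnotifyevent{j(P)}{P}$ exists and is the $\prec$-least element of $\eventset_P$, so $\joinevent{P} \preceq e$; as $\joinevent{P} \in G_{j(P)}$ and $j(P)$ is finite, $e \in \hat{K}_{j(P)}$. Thus $I(e) := \{ i \mid e \in \hat{K}_i \}$ is nonempty. If $I(e)$ is bounded above (automatic when $\numview < \infty$), then with $\mu = \max I(e)$ we get $e \in \hat{K}_\mu$ and $e \notin \hat{K}_j$ for all $j > \mu$, that is, $e \in K_\mu$, and we are done.

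The only substantial step — and the one I expect to be the main obstacle — is bounding $I(e)$. I would do this by proving, by induction on the length of the shortest derivation of an order relation (in the sense of the \AxOrderII{}) and in close parallel with the proof of Lemma~\ref{GOrderLem}, the following: if $\pnotifyevent{i}{X} \preceq e$ with $e \in \eventset_P$, then either $i \le j(P)$, or $\pnotifyevent{i}{P}$ exists and $\pnotifyevent{i}{P} \preceq e$. The base case forces $X = P$, so the second alternative holds trivially. For the inductive step one inspects the first link of the derivation: a packet link $\pnotifyevent{i}{X} \prec \psendevent{k} \prec \preceiveevent{k} \preceq \cdots \preceq e$ with $k \in \channel{X}{R}$ is handled by the \AxProcIII{}, which yields either $i \le j(R)$ — whereupon, since $\preceiveevent{k} \in \eventset_R$, the \AxOrderI{} gives $\joinevent{R} = \pnotifyevent{j(R)}{R} \preceq \preceiveevent{k}$ and hence a strictly shorter derivation to which the inductive hypothesis applies — or $\pnotifyevent{i}{R} \prec \preceiveevent{k}$, again a shorter derivation; in both subcases the \AxNotEventII{} lets one push the conclusion back from an index at $R$ to the index $i$ at $P$. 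A parent/child first link is treated in the same way, with the \AxNotEventII{} in the role that the \AxProcIII{} played, exactly as in Lemma~\ref{GOrderLem}. Granting this claim, every $i \in I(e)$ satisfies $i \le j(P)$ or $\pnotifyevent{i}{P} \preceq e$; since $\eventset_P$ contains only finitely many events $\prec e$ by the \AxOrderI{} and $\prec$ is linear on $\eventset_P$ by the \AxProcI{}, only finitely many $i$ satisfy the latter, so $I(e)$ is finite, hence bounded, and the corollary follows.
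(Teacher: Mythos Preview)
Your proposal is correct and follows essentially the same route as the paper: the ``at most one'' half is definitional, and for ``at least one'' you reduce to bounding $I(e)$ by proving, via induction on the length of a shortest derivation (using the \AxProcIII{} for packet links and the \AxNotEventII{} for parent/child links, exactly as in Lemma~\ref{GOrderLem}), that any $\pnotifyevent{i}{X} \preceq e$ forces $i$ to be witnessed by a notification event at the process of $e$ or by $j(P)$; then the \AxOrderI{} gives the bound. The paper states the intermediate claim slightly differently (it fixes the maximal $j$ with $\pnotifyevent{j}{P} \preceq e$ up front and shows every such $i$ satisfies $i \le j$), but the induction and the axioms invoked are the same.
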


\begin{proof}
It follows directly from the definition that $e$ cannot belong to more than one $K_i$. The difficult part is showing that $e$ belongs to some $K_i$. The event $e$ occurs at some process $Q$ and therefore $\joinevent{Q} \preceq e$. Therefore $e \in \hat{K}_{j(Q)}$. If $\numview < \infty$ then there is a largest $j$ such that $e \in \hat{K}_j$ and therefore $e \in K_j$ and we are done. To finish the proof we have to show that even when $\numview = \infty$ there is such a largest $j$.

By the \AxOrderI{}, there is a largest $j$ such that $\pnotifyevent{j}{Q} \preceq e$. We will show that $j$ is the maximal value we are looking for. Suppose there is a process $P$ and a view $i$ such that $\pnotifyevent{i}{P} \preceq e$. If the derivation of this relation is immediate then $P = Q$ and by definition $i \le j$.

For a non-immediate relation we proceed by induction on the length of the shortest derivation.

If the derivation starts with a packet type step then there is a process $R$ and a packet $k \in \channel{P}{R}$ such that
$$
\pnotifyevent{i}{P} \prec \psendevent{k} \prec \preceiveevent{k} \preceq e
$$
By the \AxProcIII{} the leftmost inequality implies that $i \le j(R)$ or $\pnotifyevent{i}{R} \prec \preceiveevent{k}$. If $\pnotifyevent{i}{R} \prec \preceiveevent{k}$ then we can create a shorter sequence leading from $\pnotifyevent{i}{R}$ to $e$ and conclude by induction that $i \le j$.

If $i \le j(R)$ then we can create a shorter sequence leading from $\joinevent{R} = \pnotifyevent{j(R)}{R}$ to $e$ and conclude by induction that $j(R) \le j$ and therefore $i \le j$.

If the derivation starts with a parent/child type step then there is a child $J$ of $P$ such that
$$
\pnotifyevent{i}{P} \preceq \pnotifyevent{j(J)}{P} \asymp \joinevent{J} \preceq e
$$
The lefthand inequality occurs within $\eventset_P$ and therefore we know that $i \le j(J)$. Also, since $\joinevent{J} = \pnotifyevent{j(J)}{J}$ we can conclude by induction that the righthand inequality implies that $j(J) \le j$. Therefore $i \le j$ and we are done.
\end{proof}

\begin{defn}
\label{EventViewDefn}
If an event $e \in \eventset$ belongs to $K_i$, we say that the {\bf view} of $e$ is $i$ and denote it by $\view(e) = i$. It follows from the proof of Corollary \ref{PartCor} that all the events in $\eventset_P$ immediately following a notification event $\pnotifyevent{i}{P}$ share the same view $i$.
\end{defn}

Next we investigate the order between the $K_i$ sets. 

\begin{lem}
\label{KOrderLem}
Let $e$ and $f$ be events such that $e \preceq f$. Then $\view(e) \le \view(f)$.
\end{lem}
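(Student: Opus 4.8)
The plan is to read the conclusion straight off the definitions of $\hat{K}_i$, $K_i$ and $\view$ (Definition \ref{EventViewDefn}), using nothing more than transitivity of $\preceq$ together with Corollary \ref{PartCor}; Lemma \ref{GOrderLem} is not actually needed here. Set $i = \view(e)$, so that $e \in K_i \subseteq \hat{K}_i$, and by the definition of $\hat{K}_i$ fix a notification event $g \in G_i$ with $g \preceq e$.

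Since $e \preceq f$ and $\preceq$ is transitive, $g \preceq f$, so the same $g$ witnesses $f \in \hat{K}_i$. By Corollary \ref{PartCor}, $f$ belongs to exactly one of the sets $K_j$, namely $j = \view(f)$, and $K_j = \hat{K}_j \setminus \bigcup_{j < k \le \numview} \hat{K}_k$. If $i > j$ held, then $\hat{K}_i$ would be one of the sets subtracted in forming $K_j$, contradicting $f \in \hat{K}_i \cap K_j$. Hence $i \le j$, i.e.\ $\view(e) \le \view(f)$.

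I do not expect any genuine obstacle: the lemma merely records that a view-$i$ notification event preceding $f$ forces $\view(f) \ge i$, which is built into the construction of the $K_i$. The only points requiring a moment's care are that $\preceq$ is honestly transitive even though $\prec$ is only a weak partial order, and that $\view(f)$ is well defined when $\numview = \infty$ — but the latter is precisely the content of Corollary \ref{PartCor}, which we may assume. (Alternatively one could route the argument through the characterization isolated inside the proof of Corollary \ref{PartCor}, namely that for $e$ occurring at a process $Q$ the value $\view(e)$ is the largest $j$ with $\pnotifyevent{j}{Q} \preceq e$; but going via $\hat{K}_i$ avoids re-deriving that fact.)
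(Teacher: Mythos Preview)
Your proof is correct and follows essentially the same route as the paper's own proof: both pick a witness $g \in G_{\view(e)}$ with $g \preceq e$, use transitivity to get $f \in \hat{K}_{\view(e)}$, and then read off $\view(e) \le \view(f)$ from the definition of the $K_*$ sets. The paper's version is simply terser, dispatching your set-subtraction argument with ``it is easy to see from the definition of $K_*$''.
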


\begin{proof}
By definition there is an event $g \in G_{\view(e)}$ such that $g \preceq e$. Therefore $g \preceq f$ and therefore $f \in \hat{K}_{\view(e)}$. It is easy to see from the definition of $K_*$ that this implies $\view(e) \le \view(f)$.
\end{proof}

\begin{cor}
\label{WellFoundedCor}
The partial event order $\prec$ is very well founded in $\eventset$, meaning that for each event $e$, the set $\left\{ f \in \eventset | f \prec e \right\}$ is finite.
\end{cor}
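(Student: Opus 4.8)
The plan is to fix an event $e\in\eventset$ and show directly that $\{f\in\eventset \mid f\prec e\}$ is finite, using the Minimal Order Axiom to recast $\preceq$ as reachability in a finitely branching graph and then invoking K\"{o}nig's Lemma. By the Minimal Order Axiom, $\preceq$ is the reflexive–transitive closure of three kinds of ``generating'' step: the immediate-successor relation inside a single $\eventset_P$ (the per-process order is a very well founded linear order by the Process Order Axiom and the Order Foundation Axiom, hence is itself generated by its immediate successors), the packet step $\psendevent{k}\to\preceiveevent{k}$ from the Packet Event Axiom, and the step between $\pnotifyevent{j(J)}{E}$ and $\joinevent{J}$ (in both directions) from the Parent Axiom. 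Let $\Gamma$ be the directed graph on $\eventset$ with exactly these edges; then $f\preceq e$ iff there is a directed $\Gamma$-path from $f$ to $e$, equivalently iff $f$ is reachable from $e$ in the reverse graph $\Gamma^{\mathrm{rev}}$. So it suffices to show that the set reachable from $e$ in $\Gamma^{\mathrm{rev}}$ is finite.

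The structural heart of the argument is that $\Gamma^{\mathrm{rev}}$ has uniformly bounded out-degree. An event $g$ has at most one immediate predecessor in its own process (per-process linearity); it is the head of at most one packet edge, since by the Packet Event Axiom a receive event $\preceiveevent{k}$ determines $k$ uniquely; and it lies on boundedly many parent/child edges, since by the View Change Axiom a view is the join view of at most one process, so $g$ can equal $\pnotifyevent{j(J)}{E}$ for at most one $J$ and can equal $\joinevent{J}$ for at most one $J$. Hence every vertex of $\Gamma^{\mathrm{rev}}$ has out-degree at most a fixed constant $b$. If the set reachable from $e$ were infinite, then, since at most $b^d$ vertices lie at distance $d$, there would be vertices at arbitrarily large distance, hence arbitrarily long simple paths emanating from $e$; ordering the set of such finite simple paths by extension yields an infinite poset with least element the trivial path $(e)$, in which every element has finitely many immediate successors and every element above a given one extends one of that element's immediate successors. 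K\"{o}nig's Lemma then gives an infinite ascending branch, i.e. an infinite simple path $e=v_0\to v_1\to v_2\to\cdots$ in $\Gamma^{\mathrm{rev}}$ with the $v_n$ pairwise distinct.

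Translating back to $\Gamma$ gives $v_n\preceq v_{n-1}\preceq\cdots\preceq v_0=e$ for all $n$, so by Lemma \ref{KOrderLem} the views satisfy $\view(v_0)\ge\view(v_1)\ge\cdots$, a non-increasing sequence of natural numbers (each $\view(v_n)$ is finite by Corollary \ref{PartCor}), hence eventually constant, say equal to $j^\ast$ for $n\ge N$. Thus $\{v_n\mid n\ge N\}$ is an infinite set of distinct events all of view $j^\ast$. But only finitely many processes own an event of view $j^\ast$: by Definition \ref{EventViewDefn} such a process $P$ must have $\pnotifyevent{j^\ast}{P}\in\eventset_P$, and by the \gms{} Axiom and the Notification Event Axiom this forces either $P\in\viewset_{j^\ast}$, a finite set, or $j^\ast$ to be the removal view of $P$, of which there is at most one by the View Change Axiom. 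By the pigeonhole principle infinitely many of the $v_n$, $n\ge N$, lie in one fixed $\eventset_P$; being distinct and pairwise $\preceq$-comparable to one another, with the per-process order a genuine linear order, infinitely many of them strictly precede the first one, contradicting the Order Foundation Axiom. Hence $\{f\mid f\prec e\}\subseteq\{f\mid f\preceq e\}$ is finite.

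The main obstacle, and the reason K\"{o}nig's Lemma is the right tool, is that ``only finitely many events can causally precede a single event'' is not a statement about any one process but about the global causal graph; one cannot simply iterate the per-process Order Foundation Axiom. The natural temptation is to apply K\"{o}nig directly to the poset $\{f\mid f\preceq e\}$ under the reversed order, but verifying its hypothesis there --- that every predecessor of an event is dominated by an \emph{immediate} predecessor --- is essentially equivalent to the conclusion being proved. Passing instead to syntactic derivation paths, where ``immediate successor'' means ``one-step extension,'' makes finite branching manifest for free; the remaining effort is the bookkeeping that a single view is associated with only boundedly many processes.
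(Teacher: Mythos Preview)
Your proof is correct and follows essentially the same approach as the paper: both extract an infinite descending chain via K\"{o}nig's Lemma from the finitely-branching structure of the generating relations, observe that only finitely many processes can host events along the chain, and derive a contradiction with the Order Foundation Axiom at a single process. Your version differs only in cosmetic ways---you apply K\"{o}nig to the tree of simple paths rather than directly to the reversed predecessor poset, and you stabilize the view along the chain before bounding the processes, whereas the paper bounds processes upfront by $j(P)\le\view(e)$---but the skeleton is identical.
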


\begin{proof}
Let $e$ be an event and let $i$ be the view of $e$. Then for every preceding event $f \prec e$ with view $j$ we have $j \le i$ by Lemma \ref{KOrderLem}. Therefore if a process $P$ joins at a view that is higher than $i$ then $\eventset_P$ does not contain any predecessor of $e$ because for any $f \in \eventset_P$ we have $\joinevent{P} \prec f$ and therefore $\view(f) > i$. So all the events that precede $e$ come from early joining processes, and it follows from the \AxViewII{} that there is only a finite number of such processes.

Suppose that the partially ordered set of predecessors of $e$ (including $e$ itself) is infinite. The \AxOrderII{} implies that each event in the set has at most two immediate predecessors (one at the process and one at the channel. A join event of a child has one strict immediate predecessor at the parent process.) The same axiom, together with the \AxProcI{} and \AxOrderI{} imply that every predecessor is mediated by an immediate predecessor. By inverting the direction of the order \KonigLem{} implies that there is an infinite decreasing sequence
$$
e = e_0 \succ e_1 \succ e_2 \succ \dotsb
$$
All the predecessors of $e$ reside on a finite number of early joining processes, and therefore there is one process $X$ that contains an infinite number of the events in the regression chain $e_{j_1} \succ e_{j_2} \succ \dotsb$. But this means that $e_{j_1}$ is an event in $\eventset_X$ that has an infinite number of predecessors at the same process $X$. This contradicts the \AxOrderI{}. 
\end{proof}

\subsubsection{Creating an event timeline}

We want to show that the properties of the event order in $H$ are sufficient for the creation of a timeline that meets the basic timeline criteria (\ref{TimeCondI})-(\ref{TimeCondIII}) as well as the additional criterion (\ref{TimeCondIV}).

The na\"{i}ve plan is to start by assigning $\tstamp(g) = v$ for each event $g \in G_v$. This should work thanks to Lemma \ref{GOrderLem} and because the sets $G_v$ are finite. Then it would be tempting to squeeze all the events in $K_v \setminus G_v$ into the open time interval $(v, v+1) \in \mathbb{R}$.

This would have worked if we could guarantee the finiteness of $K_v$. But we cannot do that because of a number of permissible pathological situations. For example, a process can be removed and yet not halt. Such a process could generate an infinite number of events  with a fixed view. Another example is a process that is not removed, but stops receiving \gms{} notifications at some point. As a result all the events at the process have a fixed view beyond a certain point. A reasonable group communication protocol would eliminate such pathologies (e.g. through various timeout clocks that would force a process to halt when a pathology is suspected.) In Section \ref{ConfModelSSS} we explore a set of such "reasonableness" assumptions. For now we explore the more general situation that requires more finesse in constructing the timeline.

The idea is to force the events in $G_v$ to occur contemporaneously while allowing some events in $K_v$ to occur indefinitely late, bounding their timing by a measure of their pathology.
We measure the pathology of an event $e$ using the notion of a view bound:

\begin{defn}
Let $K = \bigcup_{0 \le i < \numview} (K_i \setminus G_i)$. Let $e \in K$ be an event. The {\bf view bound} of $e$ is
$$
\vbound(e) =
\begin{cases}
\min \left\{ w | \exists g (e \prec g \in G_w) \right\} & \quad \text{if such views exist}  \\
\numview & \quad \text{otherwise}
\end{cases}
$$
\end{defn}

\begin{defn}
$K^b$ denotes the set of events in $K$ of bound $b < \numview$. $K^{\numview}$ denotes the unbounded events in $K$.
\end{defn}

It follows from the definition of $\view()$ and from Lemma \ref{GOrderLem} that $\vbound(e) > \view(e)$.

It follows from Corollary \ref{WellFoundedCor} and from the fact that the sets $G_i$ are finite that the set $K^b$ is finite for each $b < \numview$. The set $K^{\numview}$ may be infinite.

We can now amend our timing plan by placing the events in the finite set $K^b$ on the time interval $(b-1, b)$ and by carefully distributing the events in $K^{\numview}$ between different time intervals. 
To do this we need the following set-theoretic lemma:

\begin{lem}
\label{SequenceLem}
Let $(A, \prec)$ be a countable, very well founded, partially ordered set. Then the elements of $A$ can be listed in a sequence that respects the partial order $\prec$. In other words, the partial order $\prec$ can be extended to a total order of order type $\omega$, the order type of the natural numbers.
\end{lem}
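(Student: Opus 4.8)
The plan is to construct the required $\omega$-sequence incrementally, using a fixed enumeration of $A$ as a "to-do list" and, at each step, flushing out the entire (finite) down-set of the next outstanding element so that the partial order is respected.

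First I would fix an enumeration $A = \{a_0, a_1, a_2, \dots\}$ (possible since $A$ is countable; if $A$ is finite the claim is immediate, so assume $A$ infinite). The construction runs in stages and maintains a finite set $S_k \subseteq A$ of elements already emitted, with $S_0 = \emptyset$. At stage $k+1$, let $a_m$ be the element of least index not in $S_k$, put $D_k = \{x \in A : x \preceq a_m\} \setminus S_k$, pick any linear extension of the finite poset $(D_k,\prec)$, and append the elements of $D_k$ to the output in that order; then set $S_{k+1} = S_k \cup D_k$. The one place the hypothesis is used is in asserting that $D_k$ is finite: very-well-foundedness says $\{x : x \prec a_m\}$ is finite, hence so is $\{x : x \preceq a_m\} = \{x : x \prec a_m\} \cup \{a_m\}$, and $D_k$ is a nonempty subset of it (it contains $a_m$).

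Next I would verify the three properties that make this work. (i) Every element is emitted: $a_j \in S_k$ for some $k \le j+1$, since at each stage the least outstanding index strictly increases. (ii) Every element gets a finite index, so the output has order type exactly $\omega$: each $D_k$ is finite, so only finitely many elements precede any given stage in the output. (iii) The output respects $\prec$: if $x \prec y$ and $y$ is emitted during stage $k+1$, then $y \in D_k$, so $y \preceq a_m$, hence $x \preceq a_m$ as well; thus either $x \in S_k$ and was emitted at an earlier stage, or $x \in D_k$ and precedes $y$ within that stage because $D_k$ was emitted along a linear extension of $\prec$. Either way $x$ precedes $y$.

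The only genuine subtlety is why one cannot simply sort $A$ by some order-invariant up front: sorting by "number of strict predecessors", say, is order-preserving but can fail to have order type $\omega$, since the bottom level (an antichain) may be infinite while strictly larger elements still exist, pushing the order type to $\omega \cdot 2$ or beyond. The incremental down-set flushing is precisely what reconciles the two competing demands — keeping every batch finite (so no element is pushed to a transfinite position) while still guaranteeing that an element is emitted only after all of its predecessors. That every down-set is finite, i.e. that $\prec$ is very well founded, is exactly the assumption that makes both demands satisfiable at once, and it is the heart of the argument; the rest is bookkeeping.
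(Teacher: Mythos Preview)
Your proof is correct. The paper takes a genuinely different route: rather than flushing whole down-sets in batches, it fixes the enumeration $h:\mathbb{N}\to A$ and an auxiliary sequence $c_1,c_2,\dots$ in which every natural number occurs infinitely often; at step $n$ it appends $h(c_n)$ to the output only if all of its $\prec$-predecessors have already been appended, and otherwise does nothing. Completeness is then argued by contradiction: a $\prec$-minimal never-appended element has only finitely many predecessors (very-well-foundedness), all of which are eventually appended by minimality, and since $c_*$ revisits its index infinitely often it must itself get appended. Your approach is more direct and yields an explicit bound ($a_j$ is emitted by stage $j+1$), at the small cost of invoking linear extensions of finite posets; the paper's one-at-a-time scheme avoids that auxiliary fact but needs the infinitely-repeating index sequence and a minimality argument. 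Both use very-well-foundedness at the pivotal step --- you to guarantee each $D_k$ is finite, the paper to locate a minimal offender and bound its predecessor set.
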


We will prove the lemma below. We use Lemma \ref{SequenceLem} to order the elements of $K^{\numview}$ into a sequence
$$
K^{\numview} = \{ e^1, e^2, e^3, \dots \}
$$
that respects the $\prec$-order.

Now we can place all the events of $\eventset$ on a timeline.

Obviously, every event in $g \in G_v$ occurs at time $v$. As we mentioned before, the events of $K^b$ occur in the time interval $(b-1, b)$. Lemma \ref{SequenceLem} guarantees that the events in $K^b$ can be arranged along that interval in a way that respects the $\prec$-order. In order to reserve some free time to schedule $K^{\numview}$ events in the interval, we restrict the events in $K^b$ to the smaller interval $(b-1, b-1/2)$.

Now we can place the events of $K^{\numview}$ along the timeline. To make matters simpler, we will place all of the events in this set at times of the form $\tstamp(e^i) = n_i + 3/4$ where $n_i \in \mathbb{N}$. We do that inductively by defining:
$$
\tstamp(e^i) = \max \left\{ \lfloor \tstamp(f) \rfloor | f \prec e^i \right\} + 7/4 \qquad \text{($\lfloor x \rfloor$ stands for the integer part of $x$)}
$$
This inductive formula is well defined because each $e^i$ has a finite number of predecessors (Corollary \ref{WellFoundedCor}) and because every predecessor of $e^i$ in $K^{\numview}$ must come earlier in the sequence than $e^i$ and therefore has its time already defined.

Our construction of a timeline for $\eventset$ clearly satisfies conditions (\ref{TimeCondI}), (\ref{TimeCondIII}) and (\ref{TimeCondIV}). We have to demonstrate that condition (\ref{TimeCondII}), which stipulates that the timeline respects the $\prec$-order, is also satisfied. Take any two events $e_1, e_2$ such that $e_1 \prec e_2$.

If $e_2 \in K^{\numview}$ then by construction $\tstamp(e_1) < \tstamp(e_2)$. Henceforth we will assume that if $e_2 \in K$ then $\vbound(e_2) < \numview$.

If $e_1 \in G_i$ and $e_2 \in G_j$ then Lemma \ref{GOrderLem} implies that $\tstamp(e_1) = i < j = \tstamp(e_2)$.

If $e_1 \in G_i$ and $e_2 \in K$ then Lemma \ref{KOrderLem} implies that $\view(e_2) \ge i$ and therefore $\vbound(e_2) > i$. Since we can also assume $\vbound(e_2) < \numview$ it follows that $\vbound(e_2) - 1 < \tstamp(e_2) < \vbound(e_2)$. Therefore
$$
\tstamp(e_1) = i \le \vbound(e_2) -1 < \tstamp(e_2)
$$ 
If $e_1 \in K$ and $e_2 \in G_j$ then by definition $\vbound(e_1) \le j < \numview$ and therefore
$$
\tstamp(e_1) < \vbound(e_1) \le j = \tstamp(e_2)
$$
We are left with the case where $e_1, e_2 \in K$. The order $e_1 \prec e_2$ implies $\vbound(e_1) \le \vbound(e_2)$ and we are assuming that $\vbound(e_2) < \numview$. If there is a strict inequality $\vbound(e_1) < \vbound(e_2)$ then
$$
\tstamp(e_1) < \vbound(e_1) \le \vbound(e_2) - 1 < \tstamp(e_2)
$$

If there is equality then both $e_1$ and $e_2$ belong to the same set $K^{\vbound(e_1)} = K^{\vbound(e_2)}$ and their timing matches their order by construction.

\begin{proof}[Proof of Lemma \ref{SequenceLem}]
The set $A$ is countable. Let the bijection $h:\mathbb{N} \rightarrow A$ be an arbitrary ordering of $A$ into a "bad" sequence (one that does not necessarily extend the $\prec$-order.)

Let $c_{*} = c_1, c_2, c_3, \dots$ be an arbitrary sequence of natural numbers where each number appears an infinite number of times (for example one could use the sequence $1,2,1,3,2,1,4,3,2,1, \dots$)

To create the sequential extension of the $\prec$-order, Start with an empty "good" sequence. We will append the elements of $A$ to the good sequence one at a time using the following infinite process. At the $n^{th}$ step, look at the $c_n^{th}$ element of the bad sequence, namely the element $h(c_n)$. Now do the following:
\begin{enumerate}
\item If $h(c_n)$ has already been appended to the good sequence, do nothing.
\item If some predecessor $a \prec h(c_n)$ has not yet been appended to the good sequence, do nothing.
\item If $h(c_n)$ has not yet been appended to the good sequence, but all of its predecessors in the $\prec$-order had been appended, append $h(c_n)$ to the good sequence now.
\end{enumerate}
The good sequence that this procedure generates has some obvious good properties. It includes at most one copy of each element of $A$, and the sequence order respects the $\prec$-order. We just have to show that every element of $A$ is eventually appended to the good sequence.

Assume instead that some element of $A$ is never appended to the good sequence. Look at the subset $A_0 \subset A$ of elements that are never appended. This subset is not empty by assumption. Since the $\prec$-order on $A$ is very well founded, there must be a $\prec$-minimal element in $A_0$. Suppose that this minimal element is $a_0 = h(j)$. Since the $\prec$-order is very well founded, the element $a_0$ has a finite number of $\prec$-predecessors $a'_1, a'_2, \dots, a'_k$.

By assumption all of these predecessors are appended to the good sequence, and this must happen by some finite step $n_0$. By our assumption on the sequence $c_{*}$ there is some high number $n > n_0$ such that $c_n = j$. At step $n$ the element $a_0 = h(j) = h(c_n)$ will be inspected and it will be found that all of its predecessors have already been appended to the good sequence. As a result $a_0$ will be appended at this point, contrary to our assumption.
\end{proof}

\subsection{Understanding Faults}

A history represents a possible computation by a group of processes in our model. Such a computation can be plagued by different types of stop faults - halting processes, dropped packets and dropped \gms{} notifications. Our goal is to demonstrate that by adding a small number of reasonable restrictions to the model, we will be able to analyze all histories in our model under the assumption that all faults are eliminated except for one simple fault - the simultaneous halt of all the processes in the group. This is a very desirable property, because it means that the whole group behaves the same way a single process would - albeit at a much higher performance and availability level.

The fault simplifications techniques that we propose all involve the use of timers, timeouts and voluntary process halts when timeouts occur. But we are going to keep our computational model asynchronous and avoid introducing the notion of time explicitly. We achieve this by introducing axiomatic correlations between certain faults. These correlations can be forced to occur through the use of timers and timeouts, but these implementation details are not part of the model itself. 

Let us start by looking at notifications. Suppose that $P$ fails to dequeue a notification that $P$ is entitled to. How can that happen? According to the \AxHaltII{} either some $\notify{j}{P}$ is dropped or $P$ halts. So the blame can be assigned to $P$ or to \gms{}. If we want to simplify the fault model and eliminate \gms{} faults, we must shift the blame to $P$. But this is not possible unless $P$ halts. A reasonable implementation will not allow a process $P$ to continue running indefinitely when \gms{} becomes unresponsive. At some point $P$ will timeout and halt voluntarily. If that happens then we have a hope of shifting the blame to $P$ and away from \gms{}.

A similar observation can be made about packets. Suppose there is a packet $k \in \channel{P}{Q}$ that is queued by $P$ but not dequeued by $Q$. How does that happen? According to the \AxHaltIII{} either $P$ halts (which opens the possibility that $k$ is never sent), or $Q$ halts (which opens the possibility that $k$ is never dequeued even if it is received) or else $k$ - or some packet ahead of $k$ - is dropped. So the blame can be assigned to $P$, to $Q$ or to the channel $\channel{P}{Q}$. If we want to simplify the fault model and eliminate $\channel{P}{Q}$ faults, we must shift the blame either to $P$ or to $Q$. But this is not possible unless one of them halts. A reasonable implementation will not allow $P$ and $Q$ to continue running indefinitely when the channel between them becomes unresponsive. At some point one of them must be successfully evicted, and failing that, both of them will timeout at some point and halt voluntarily. If that happens then we have some hope of shifting the blame to $P$ or $Q$ and away from the channel between them.

This built-in ambiguity of the boundary between a process and its channels and membership service allows us to reinterpret the root cause of visible faults. If a packet is not dequeued, we can shift the blame from the channel to the process and vice versa. If a notification is not dequeued we can shift the blame from the membership service to the process and vice versa. But there are limits to this blame shifting game. If a process halts, we cannot keep on blaming it forever. If other processes keep sending it packets indefinitely, then at some point the channels leading to the halted process must start dropping packets since there is no process on the target side to receive them. If the membership service keeps attempting to inform the halted process of view changes, and never succeeds in evicting it, then its notifications will have to start dropping at some point. 

Well implemented processes will not, however, keep attempting to send packets to a process that has already halted a long time ago. A well implemented membership service will not keep a halted process as a member in view after view indefinitely. A well implemented process will monitor its channels and \gms{} and attempt to detect problems and remove them, and at some point it will have to give up and halt.

If everyone behaves well enough then perhaps the blame for all the faults in the system can be shifted to the processes.

The process faults themselves cannot be eliminated entirely, but they can be greatly simplified. When a process halts, there can be an arbitrary, finite number of unsent packets in its send queues, and an arbitrary finite number of packets, notifications and message broadcast requests in its receive queues. The halt can occur in the middle of the processing of a trigger event, when some of the side effect events have already occurred while others have not. This is very different than what one would expect to happen when a process is removed in an orderly fashion. In an orderly removal you would expect a notification to go to all the processes, including the removed process itself, and you would expect the group to enter a quiescent period during which all on going tasks are completed and all packets in flight are received and processed while the application execution is put on hold so it does not create any new work. Only once the group is fully quiescent will the removed process halt, the group reconfigure itself, and finally resume its normal work under the new configuration.

We will demonstrate that under reasonable assumptions, a general process halt can be simplified to look like an orderly removal - with the aforementioned exception of a simultaneous halt of the whole group. This means that with this one exception, we can view any process halt as the {\em consequence} of a planned process removal, rather than as a failure that was the {\em cause} of a removal. In other words, we can make process failures go away entirely, with the one notable exception of a simultaneous system-wide failure.

First of all we can assume that a halting process completes the execution of its final transaction before halting (see Section \ref{ModelOverviewSS} for a definition of transactions). This can affect its send queues and its internal state, but it does not affect any or the remaining processes as long as the packets that the transaction generates remain stuck in a send queue. Also as long as the newly generated packets are not sent out they do not adversely effect the perceived reliability of the outbound channels of the process.

A further simplification can be achieved by emptying out the receive queues of the process. This is more tricky. The basic idea is pretty simple - just assume that the process dequeues and processes all the items in its receive queues before it halts. This can generate a lot of side effects, but as long as the side effects are stuck in their respective send queues, the surviving members of the group will not be affected.

A final simplification can be achieved by emptying out some of the send queues. This is the most tricky part since it allows a process that is presumably already halted to affect other processes by sending them additional packets. In a sense the halted process can alter the information held by other processes and change history "from the grave". This can be made acceptable if we limit ourselves to only sending packets bound to processes that appear to have halted even earlier. This should confine any new information to the world of the dead and prevent it from altering history.

There are three additional complications.

The first and easiest complication is that the order of processing must comply with the \AxProcIII{}.

The second and more pernicious complication is that the processing must comply with the \AxProcIV{}. The issue is that while most side effects can be left stuck in a send queue or sent "downstream" to an already halted process, the latter axiom sometimes forces packets on the self channel to be sent, received and processed, thus generating more side effects, which themselves could result in more packets on the self channel, ad infinitum. We cannot allow that to happen because this violates the \AxHaltI{}. This problem does not go away by itself. But it does not arise if \prot{} is implemented in a reasonable fashion and does not generate infinite loops for itself in the absence of external stimuli. If the protocol meets this {\em vacuum convergence} condition, then we can assume that a halting process leaves no unprocessed or partially processed items.

The third and perhaps subtlest complication is that processes can give birth to child processes before they halt, so even if a packet is bound to a target that halts earlier than its source did, the information that is conveyed by that packet may live on inside a child of the target process - and the child could live indefinitely. This complication is bound together with a more general issue of race conditions that can arise when a process births a child. As with the previous complication, these race condition issues cannot be wished away. Instead we must require that \prot{} behave in a way that prevents race conditions from occurring. An additional but less critical simplifying requirement is that processes that fail to initialize must not be chosen by \gms{} to be the parents of child processes.
 
These observations lead us to consider a sub-class of histories that arise from {\em conforming models}, which are models that assume well behaving processes; a well behaving membership service; and a well behaving \prot{} layer. We will see that for such histories it is possible to drastically simplify the faults that need to be considered.

\subsubsection{Conforming models and conforming histories} \hfill \\
\label{ConfModelSSS}
We claimed that under some assumptions of "reasonableness" of the way the drivers and \prot{} are implemented in a model we have a hope of simplifying the behavior of faults. We are going to make this notion exact by introducing a number of new axioms that are less general than the axioms of \ref{ModelAxiomsSSS} but represent behaviors that would be expected from a reasonable implementation. As we mentioned before, implementing these reasonable behaviors requires timers, but the new axioms preserve asynchrony by describing the model simplifications as mere correlations between faults

We start with a few definitions.

\begin{defn}
\label{StuntedHistoryDef}
A {\bf stunted history} is a history where the number of processes is finite and all of them halt.
\end{defn}

\begin{defn}
A {\bf finite channel} is a channel where only a finite number of packets are dequeued. In other words a channel $\channel{P}{Q}$ is finite if
$\left| \left\{ k \in \channel{P}{Q} | \preceiveevent{k} \in \eventset \right\} \right| < \infty$.
\end{defn}

\newcommand{\CAxChannel}{Conforming Channel Axiom}
\paragraph{\CAxChannel} \hfill \\
If $\channel{P}{Q}$ is finite, then either $P$ is removed or $Q$ halts.

\newcommand{\CAxPacket}{Conforming Packet Axiom}
\paragraph{\CAxPacket} \hfill \\
A process $P$ does not dequeue packets from process $Q$ after a removal notification for $Q$ is dequeued by $P$.

\newcommand{\CAxNotify}{Conforming Notification Axiom}
\paragraph{\CAxNotify} \hfill \\
If any notification $\notify{i}{P}$ is dropped then $P$ halts.

\newcommand{\CAxView}{Conforming \gms{} Axiom}
\paragraph{\CAxView} \hfill \\
f a process halts then it has a finite view interval.

\newcommand{\CAxParent}{Conforming Parent Axiom}
\paragraph{\CAxParent} \hfill \\
If a process is uninitialized then it has no child processes.

\newcommand{\CAxHalt}{Conforming Halt Axiom}
\paragraph{\CAxHalt} \hfill \\
If a process is removed then it halts.

\begin{defn}
\label{ConformingDef}
A {\bf conforming history} is a history that satisfies all the conforming axioms.
\end{defn}

\subsubsection{Fault equivalence}
\begin{defn} \hfill \\
A {\bf downstream channel} is a channel $\channel{P}{Q}$ where $\vdeath{P} > \vdeath{Q}$ or $P = Q$. In other words it is either a self channel or else it is a channel whose target is removed from the group before its source is removed.
A {\bf upstream channel} is a channel $\channel{P}{Q}$ where $\vdeath{P} \le \vdeath{Q}$ and $P \ne Q$.
A {\bf upstream packet} or {\bf downstream packet} is a packet that belongs to an upstream or downstream channel, respectively.
\end{defn}

\begin{defn} \hfill \\
\label{VacEventDef}
A {\bf vacuum event} is an event $e$ at a halting process that has no lasting effect on the history. To be precise, $e$ is a vacuum event if its successor events
$$
\{
f | f \succeq e
\}
$$
form a finite set, and all them occur at halting processes.

A {\bf vacuum packet} is a packet $k$ with a vacuum queuing event $\psendevent{k}$.
\end{defn}

Obviously, all the successors of a vacuum event are vacuum events as well.
 
\begin{defn}
Two histories are said to be {\bf fault equivalent} if they are identical with the following exceptions:
\begin{itemize}
\item packets that are sent in one need not be sent in the other. Packets that are received in one need not be received in the other.
\item notifications that are dropped in one need not be dropped in the other.
\item the two histories have the same non-vacuum events.
\item the two histories have the same non-vacuum packets.
\end{itemize}
\end{defn}

\begin{defn}
\label{LosslessDef}
A history $H$ is called {\bf lossless} if
\begin{itemize}
\item $H$ has no dropped packets
\item $H$ has no dropped notifications
\item All upstream channels clear their receive queues. In other words all received upstream packets are dequeued.
\item All downstream channels clear their send queues. In other words all queued downstream packets are sent. 
\end{itemize}
\end{defn}

\subsubsection{The Vacuum Loop and vacuum closure}
\begin{defn}
\label{VacuumDef}
Let $H$ be a lossless history and let $P$ be any halting process in $H$. If $j(P) > 0$ let $E$ be the parent of $P$ and assume that $\pnotifyevent{j(P)}{E}$ exists. Look at $P$ at the moment that it halts. Let $v$ be the highest view processed by $P$, namely the value of the highest notification $\notify{v}{P}$ that was processed by $P$ before it halted. By the \AxOrderI{} such a maximal view always exists unless $\eventset_P = \emptyset$. In the latter case set $v = j(P)-1$.

Look at the following loop, called the {\bf vacuum loop}:

\begin{enumerate}
\item If the event $\haltevent{P}$ exists at the end of $\eventset_P$
\begin{itemize}
\item remove it
\item return the notification $\notify{\vdeath{P}}{P}$ to the \gms{} receive queue
\item set $v = \vdeath{P} - 1$
\end{itemize}
\label{VacProcessH}

\item Finish processing the current item. If any new packet multicasts are generated, append the respective packet queuing events at the end of $\eventset_P$ . Queue any new upstream packets to their respective send queues and declare them to be  unsent and unreceived. Queue any new downstream packets to their respective receive queues and declare them to be sent and received.
\label{VacProcessT}

\item If $v \ge j(P)$ and there are any requests in the \ulp{} receive queue, dequeue and process them one by one. For each request, append the generated request dequeuing event at the end of $\eventset_P$. If the processing of a request creates any side effects, append the resulting packet queuing events at the end of $\eventset_P$. Queue any new upstream packets to their respective send queues and declare them to be  unsent and unreceived. Queue any new downstream packets to their respective receive queues and declare them to be sent and received.
\label{VacProcessA}

\item If $v = \vdeath{P}-1$, dequeue and process all the packets in all the receive queues of channels $\channel{Q}{P}$ where $Q \ne P$. For each packet, append the corresponding dequeuing event to $\eventset_P$; and process the packet. If the processing of a packet creates any side effects, append the resulting packet queuing events at the end of $\eventset_P$. Queue any new upstream packets to their respective send queues and declare them to be  unsent and unreceived. Queue any new downstream packets to their respective receive queues and declare them to be sent and received.
\label{VacProcessP}

\item If the receive queue of the self channel $\channel{P}{P}$ is empty, proceed to step (\ref{VacProcessN}). Otherwise, dequeue all the packets there. For each packet, append the corresponding dequeuing event to $\eventset_P$; and process the packet. If the processing creates any side effects, append the resulting packet queuing events at the end of $\eventset_P$. Queue any new upstream packets to their respective send queues and declare them to be  unsent and unreceived. Queue any new downstream packets to their respective receive queues and declare them to be sent and received.
Repeat step (\ref{VacProcessSelf}) until the receive queue of the self channel is empty.
\label{VacProcessSelf}

\item Increment $v$
\label{VacProcessN}
\begin{itemize}
\item if $v = \vdeath{P}$, append a $\haltevent{P}$ event to $\eventset_P$ and exit.
\item dequeue the $\notify{v}{P}$ notification and process it. Append the $\pnotifyevent{v}{P}$ event to $\eventset_P$. If $v = j(P) > 0$, add the order relation $\pnotifyevent{v}{E} \asymp \pnotifyevent{v}{P}$. Notice that in this case $\pnotifyevent{v}{P} = \joinevent{P}$.
\item If the processing of the notification creates any side effects, append the resulting packet queuing events at the end of $\eventset_P$. Queue any new upstream packets to their respective send queues and declare them to be  unsent and unreceived. Queue any new downstream packets to their respective receive queues and declare them to be sent and received.
\item go back to step (\ref{VacProcessA}).
\end{itemize}
\end{enumerate}
\end{defn}

\begin{lem}
\label{VacContLem}
Let $H$ be a lossless history and let $P$ be a halting process in $H$. Assume that $P$ is either a member of view zero, or else the parent of $P$ processes the notification of the joining of $P$ ($\pnotifyevent{j(P)}{E}$ exists, where $E$ is the parent of $P$).

Suppose that the vacuum loop is run against the halting state of $P$ for a finite number of steps. Then the resulting structure is a lossless history that is fault equivalent to $H$. Moreover, if the original history $H$ is conforming then the extended history is conforming as well.
\end{lem}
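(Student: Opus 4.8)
The plan is to induct on the number $n$ of vacuum-loop steps executed, proving at each stage that the structure $H_n$ produced so far is a lossless history, is fault equivalent to $H$, and is conforming whenever $H$ is. The base case $n=0$ is the hypothesis on $H$. For the inductive step I run through the six clauses of the loop and verify, for whichever clause fires, that all the model axioms of Section \ref{ModelAxiomsSSS}, the four conditions of Definition \ref{LosslessDef}, and (when $H$ is conforming) the conforming axioms of Section \ref{ConfModelSSS} persist in $H_{n+1}$, and that $H_{n+1}$ and $H_n$ agree apart from vacuum events, vacuum packets, and the sent/received/dropped status of packets and notifications.

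The workhorse observation is that every event the loop appends to $\eventset_P$ is a vacuum event and every packet it creates is a vacuum packet. Every appended event occurs at the halting process $P$, whose event set stays finite because only finitely many steps are run and each clause appends finitely many events (the tail of one transaction, one notification, or the processing of the finitely many items then queued, each multicast having a finite multicast set by the \AxPackEventII{}). The new upstream packets are declared unsent and unreceived, hence carry no dequeuing event; the new non-self downstream packets are declared sent and received but are never dequeued; the new self-channel packets are dequeued inside clause (\ref{VacProcessSelf}), within $\eventset_P$. So the $\prec$-successors of any appended event stay inside the finite set $\eventset_P$ and lie entirely at the halting $P$. (When $P$ is uninitialised in $H$, clause (\ref{VacProcessN}) instantiates it by appending $\joinevent{P}$ with the relation $\joinevent{P}\asymp\pnotifyevent{j(P)}{E}$ demanded by the \AxNotEventIII{}; here fault equivalence uses in addition that $\pnotifyevent{j(P)}{E}$ is itself vacuum-like in $H$.) One also notes that no event or packet of $H_n$ changes vacuum status -- appending successors at the halting $P$ cannot do it, and deleting $\haltevent{P}$ in clause (\ref{VacProcessH}) removes just one event at the halting $P$. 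Fault equivalence of $H_n$ and $H_{n+1}$, hence by transitivity of $H_{n+1}$ and $H$, follows; returning $\notify{\vdeath{P}}{P}$ to the \gms{} receive queue in clause (\ref{VacProcessH}) is allowed because that notification is then neither dropped nor processed, consistent with losslessness.

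Verifying that $H_{n+1}$ is a history, nearly all axioms are immediate: view intervals, the notification set and the channel set are untouched, and the loop merely appends at the tail of the finite chain $\eventset_P$ (plus, in the join clause, the single relation noted above, whose left side already exists in $H$). The axioms that actually need the shape of the loop are the \AxPackEventIII{}, the \AxProcIII{}, and the \AxProcIV{}. The key sublemma is that every packet queued into a channel $\channel{Q}{P}$ is queued by $Q$ at a view $\le\vdeath{P}-1$: were such a packet $k$ queued at a view $w\ge\vdeath{P}$, then $\pnotifyevent{w}{Q}\prec\psendevent{k}$, and the \AxNotEventII{} forces $\pnotifyevent{\vdeath{P}}{Q}$ to exist with $\pnotifyevent{\vdeath{P}}{Q}\prec\psendevent{k}$, contradicting condition (2) of the \AxProcII{} for $\channel{Q}{P}$. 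With losslessness -- which gives that in $H$ every upstream channel into $P$ has already emptied its receive queue -- this shows that clause (\ref{VacProcessP}) drains only channels whose source is removed strictly after $P$, and does so once the loop reaches view $\vdeath{P}-1$; so the \AxPackEventIII{} (FIFO into $P$: these packets are dequeued in queue order, after all pre-existing events of $\eventset_P$) and the \AxProcIII{} survive. The \AxProcIV{} survives because clause (\ref{VacProcessSelf}) empties the self-channel receive queue before the next notification event is appended, and the \AxProcII{} for the packets $P$ newly multicasts holds because \prot{} addresses each multicast to $P$'s current membership -- each member admitted in a view $P$ has already processed and removed only in a view $P$ has not yet processed.

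The conforming axioms of $H_{n+1}$ (when $H$ is conforming) are checked one at a time, and none is endangered: the loop touches neither view intervals nor halting status, it adds only finitely many dequeuing events so no channel changes finiteness, and by the sublemma it never makes $P$ dequeue a packet from a process whose removal $P$ has already processed (clause (\ref{VacProcessP}) works only on channels whose source outlives $P$, whose removal notifications $P$ never reaches). The one wrinkle is that a notification handled by clause (\ref{VacProcessN}) can name $P$ as the parent of a joining process $P'$, so that $P'$ is forked with its lone initialising event $\joinevent{P'}$, contemporaneous by the \AxNotEventIII{} with the just-appended $\pnotifyevent{j(P')}{P}$; this is consistent with the \AxNotEventIII{} and the \CAxParent{} since $P$ is then initialised. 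I expect the main obstacle to be the material of the preceding paragraph: ensuring that the order in which the loop replays queued packets and notifications never processes a packet in a lower view than the one it was queued in, and never lets the self-channel force an unbounded cascade -- the latter staying harmless only because the loop is run for finitely many steps.
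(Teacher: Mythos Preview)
Your approach matches the paper's: induction on vacuum-loop steps, checking each model axiom, losslessness, fault equivalence, and the conforming axioms in turn, with the \AxPackEventIII{}, the \AxProcIII{}, and the \AxProcIV{} singled out as the only substantive verifications. Two small corrections are worth noting. First, your final ``wrinkle'' is a misreading: clause (\ref{VacProcessN}) appends events only to $\eventset_P$, so when $P$ is named as parent of some $P'$ the loop adds $\pnotifyevent{j(P')}{P}$ but does \emph{not} create $\joinevent{P'}$ --- that happens only when the vacuum loop is later run at $P'$ itself (see the Fault Theorem proof). Second, your parenthetical that fault equivalence ``uses in addition that $\pnotifyevent{j(P)}{E}$ is itself vacuum-like'' is sharper than you may realise: when the loop freshly creates $\joinevent{P}$, the added relation $\joinevent{P}\asymp\pnotifyevent{j(P)}{E}$ makes every successor of $\pnotifyevent{j(P)}{E}$ a successor of $\joinevent{P}$, so the paper's blanket claim that ``new events do not precede any pre-existing events'' is not literally true in this one case. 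The paper glosses over this; in its intended application (the Fault Theorem) the issue is harmless because either $E$ halts and $\pnotifyevent{j(P)}{E}$ was itself a vacuum event added by a prior vacuum loop at $E$, or $E$ does not halt and the fork in \ref{doNotification} would already have initialised $P$.
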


\begin{proof}
We start by showing that the extended structure is a lossless history. Then we show that it is fault equivalent to the original history $H$.

The proof that the extension of $H$ satisfies the history axioms is by induction. Suppose that all the history axioms remain true, and the history remains lossless after going through a certain number of steps. We will show that the same remains true after running through one more step. For most axioms we do not actually need the inductive hypothesis - they are either trivially true or can be demonstrated directly. But there are a few exceptions.

The \AxPackEventII{} remains true because every new packet that we create comes with a queuing event per multicast. We only add a dequeuing event to packets that are already queued (steps (\ref{VacProcessP}) and (\ref{VacProcessSelf})).

The \AxPackEventIII{} is violated if there are packets $k, k' \in \channel{Q}{Y}$ such that $\psendevent{k} \prec \psendevent{k'}$ and $\preceiveevent{k'}$ exists and yet $\preceiveevent{k}$ does not exist or does not precede $\preceiveevent{k'}$.

Suppose that $\preceiveevent{k'}$ already existed prior to the current step. By induction the extension of $H$ was a history at the conclusion of the last step and therefore $\psendevent{k'}$ already existed and as a result the preceding $\psendevent{k}$ event had existed as well. Therefore by induction $\preceiveevent{k}$ had also existed and preceded $\preceiveevent{k'}$ and we are done.

Assume therefore that $\preceiveevent{k'}$ is generated in the current step. This implies $Y = P$. Moreover, the only steps that create packet dequeuing events are steps (\ref{VacProcessP}) and (\ref{VacProcessSelf}).

If $\preceiveevent{k'}$ is generated in the current step then the packet $k'$ is already received after the previous step. The induction hypothesis implies that $k'$ is a downstream packet in this case, or else it would have been dequeued already due to losslessness (if we are at step (\ref{VacProcessSelf}) then $k'$ is downstream by definition). Therefore, again by induction, losslessness implies that the packet $k$ must have been sent (and therefore received) before the current step. Since $k$ was queued before $k'$, it must have been dequeued before $k'$ is dequeued. Therefore $\preceiveevent{k}$ exists and precedes $\preceiveevent{k'}$.

The \AxNotEventII{} remains true because step (\ref{VacProcessN}) dequeues notifications in order and without gaps.

The \AxNotEventIII{} remains true thanks to the conditions we imposed on $P$.

The \AxProcII{} is a statement about the behavior of \prot{} callbacks, which are the only part of the model that generates queuing events. Since the vacuum loop processes triggers using the appropriate \prot{} callbacks, the axiom remains valid.

Suppose that the \AxProcIII{} is violated. Then there is a packet $k \in \channel{Q}{P}$ that we dequeue in step (\ref{VacProcessP}) or (\ref{VacProcessSelf}) of the vacuum loop even though there is an $i$ such that $i < j(Q)$ or $\pnotifyevent{i}{Q} \prec \psendevent{k}$ and yet $i > j(P)$ and either $\pnotifyevent{i}{P}$ does not exist or it exists and $\preceiveevent{k} \prec \pnotifyevent{i}{P}$.

In the case of the self-channel (step (\ref{VacProcessSelf})) this can be easily seen to be absurd by substituting $Q = P$.

In the case of step (\ref{VacProcessP}) we have by definition $\preceiveevent{k} \succ \pnotifyevent{v}{P}$ and so it must be that $i > v$. Since $v = \vdeath{P}-1$ in this case, we have $i \ge \vdeath{P}$. So we either have $\vdeath{P} < j(Q)$ or we have $\pnotifyevent{\vdeath{P}}{Q} \preceq \pnotifyevent{i}{Q} \prec \psendevent{k}$.  Either of these cases violate the \AxProcII{} which holds by induction prior to the current step.

To see that the \AxProcIV{} remains true notice that the vacuum loop dequeues a notification (in step (\ref{VacProcessN})) only after it verifies that all the packets on the self channel are processed (in step (\ref{VacProcessSelf})). 

The \AxOrderI{} is mostly trivial except for the $\joinevent{P}$ and $\haltevent{P}$ part.

If $\joinevent{P}$ does not already exist then $v = j(P)-1$ and there was no current item to finish processing when the vacuum loop started. By induction, $\eventset_P = \emptyset$. All the steps except step (\ref{VacProcessN}) become no-ops and no new events are added to $\eventset_P$. To see that step (\ref{VacProcessSelf}) is a no-op, notice that a packet can reside in the receive queue of the self-channel only if it was previously queued to the send queue. This would show up as a queuing event which is not possible in our case.

Step (\ref{VacProcessN}) increments $v$ to equal $j(P)$ and adds a $\pnotifyevent{v}{P} = \pnotifyevent{j(P)}{P} = \joinevent{P}$ event as the first element of $\eventset_P$. This proves this case.

Step (\ref{VacProcessH}) of the vacuum loop removes $\haltevent{P}$ from $\eventset_P$ and returns it to the \gms{} receive queue, if necessary. Step (\ref{VacProcessN}) can create or restore the $\haltevent{P}$ event. If that happens then the loop exits, leaving $\haltevent{P}$ as the last event in $\eventset_P$.

All the other axioms are trivial to verify. For the \AxHaltII{} one just needs to remember that we are dealing here exclusively with lossless histories, so there are no dropped notifications.

We still have to show that executing each step of the vacuum loop preserves losslessness (see Definition \ref{LosslessDef}). Since we start with a lossless history we have no dropped notifications. Every packet that is created by the vacuum loop is either unsent and unreceived (if it is an upstream packet) or sent and received (if it is a downstream packet), so the loop never creates a dropped packet and guarantees the losslessness conditions with regard to upstream and downstream channels

We have to show that the extended history is fault equivalent to the original history $H$. This is more or less trivial by construction. All we did was add a finite number of new packets and events. All of the new events occur at the "end of history" in the sense that they do net precede any pre-existing events. All of the new events are added at $P$ which is a halting process. Therefore by definition we only added vacuum events and vacuum packets. Any pre-existing event acquires at most a finite number of new successors and all of them occur at a halting process. Therefore all pre-existing vacuum events remain so in the extended history.

Finally, assume that $H$ is conforming. We must check that the extended history is still conforming.

The vacuum loop does not change the set of removed processes. It does not change the set of halting processes. It does not change the view interval of any process. It adds only a finite number of dequeuing events to $\eventset_P$, and it does not add any dequeuing events to any other process. Therefore the extended history satisfies the \CAxChannel{}, the \CAxView{} and the \CAxHalt{}. The \CAxNotify{} is vacuously true because a lossless history has no dropped notifications.

The \CAxParent{} holds because any uninitialized process in the extended history is uninitialized in the original history and the vacuum loop does not create any new parent/child relationships.

The \CAxPacket{} holds for every process other than $P$ because $H$ is conforming. Suppose that $P$ processes a removal notification of a process $X \ne P$. This implies that $\vdeath{X} < \vdeath{P}$ and therefore the channel $\channel{X}{P}$ is upstream. By losslessness this implies that the receive queue of the channel is empty throughout the execution of the vacuum loop, and therefore the loop does not add any new dequeuing events for packets on that channel.

If the processing of the removal notification of $X$ occurs during the vacuum loop, then we have already shown that all the packets from $X$ are already processed at this point. Since the vacuum loop does not add any new packets to the $X$-receive queue, the axiom holds in this case as well.

If $X = P$ the axiom holds because $\haltevent{P}$ is the last event in $\eventset_{P}$, according to the \AxOrderI{}.
\end{proof}

\begin{defn}
A history extension of the type that is described in Lemma \ref{VacContLem} is called a {\em vacuum continuation} of $H$ at $P$. If the vacuum loop at $P$ terminates then there is a maximal vacuum continuation of $H$ at $P$, called the {\em vacuum closure} of $H$ at $P$. If the vacuum closure of $H$ at $P$ is equal to $H$ (meaning that the vacuum loop did not generate any new events or packets) then we say that $H$ is {\em vacuum closed} at $P$. It is trivial to check that the vacuum closure of $H$ at $P$ is vacuum closed at $P$.
\end{defn}

\begin{cor}
\label{VacuumCleanerCor}
Let $H$ be a lossless history and let $P$ be a halting process in $H$. If $H$ is vacuum closed at $P$, then $P$ processes all the packets that it receives and all the notifications that it is entitled to, from $\joinevent{P}$ to $\haltevent{P}$.
\end{cor}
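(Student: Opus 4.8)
The plan is to run the vacuum loop against the halting state of $P$ and exploit the defining property of being vacuum closed at $P$: the loop terminates, and running it to completion reproduces exactly $H$, adding no new events and no new packets. The engine of the argument is that every packet dequeuing the loop performs (steps (\ref{VacProcessP}) and (\ref{VacProcessSelf})), every notification dequeuing (step (\ref{VacProcessN})), and the concluding appending of $\haltevent{P}$ each adds a fresh event to $\eventset_P$; so the hypothesis forces every step of the loop to be a no-op, and spelling out what ``no-op'' means step by step delivers the conclusion.

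First I would pin down the endpoints. If $\eventset_P$ were empty the loop would start with $v=j(P)-1$, steps (\ref{VacProcessH}) through (\ref{VacProcessSelf}) would all be no-ops exactly as in the proof of Lemma \ref{VacContLem} (in particular the self-channel carries no packets, having no queuing events), and step (\ref{VacProcessN}) would increment $v$ to $j(P)<\vdeath{P}$ (using the \AxViewI{}) and append $\joinevent{P}$ --- a new event, contradicting vacuum closure. Hence $\eventset_P\neq\emptyset$, and by the \AxOrderI{}, $\joinevent{P}$ exists and is the first element of $\eventset_P$. For the other endpoint: the vacuum loop can exit only at step (\ref{VacProcessN}) with $v=\vdeath{P}$, and it does so having just appended $\haltevent{P}$; since $H$ is vacuum closed at $P$ the loop terminates and its output equals $H$, so $H$ must already contain $\haltevent{P}$, which by the \AxOrderI{} is then the last element of $\eventset_P$.

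Given $\haltevent{P}=\pnotifyevent{\vdeath{P}}{P}$, the \AxNotEventII{} yields that $\pnotifyevent{i}{P}$ exists for every $j(P)\le i\le\vdeath{P}$; since $H$ is lossless, the notifications $P$ is entitled to are exactly these $\notify{i}{P}$ (by the \AxGMSI{}), so $P$ processes all of them, and they lie between $\joinevent{P}$ and $\haltevent{P}$. For the packets, I would use that $\haltevent{P}$ existing makes step (\ref{VacProcessH}) set $v=\vdeath{P}-1$; the loop therefore reaches step (\ref{VacProcessP}) with $v=\vdeath{P}-1$ and there dequeues every packet still in a receive queue of a channel $\channel{Q}{P}$ with $Q\ne P$, while step (\ref{VacProcessSelf}) dequeues every packet in the receive queue of the self-channel. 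Vacuum closure forces both to process nothing, so at the instant $P$ halts all of its receive queues are empty, that is, every received packet into $P$ already has a dequeuing event --- the remaining part of the statement.

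The main obstacle is bookkeeping rather than any deep idea: I must be precise about the contents of $P$'s receive queues and about ``the current item'' at the instant of the halt, especially after step (\ref{VacProcessH}) has possibly stripped $\haltevent{P}$ off the end of $\eventset_P$, and I must lean on the clause --- built into the definition of vacuum closed --- that the loop terminates, which is exactly what makes step (\ref{VacProcessSelf})'s internal repetition, and hence the whole ``every step is a no-op'' reading, finite and legitimate. Once this is laid out carefully the individual inferences are one-liners.
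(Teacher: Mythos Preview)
Your proposal is correct and follows essentially the same approach as the paper: both argue that since vacuum closure means the loop terminates without adding events, and since the loop only exits after appending $\haltevent{P}$ and after steps (\ref{VacProcessP}) and (\ref{VacProcessSelf}) have emptied the receive queues at $v=\vdeath{P}-1$, the original $H$ must already have had $\haltevent{P}$ present (hence all notifications processed, by the \AxNotEventII{}) and all receive queues empty. The paper's proof is a two-sentence sketch of exactly this reasoning; your version supplies the bookkeeping the paper leaves implicit, including the observation that $\eventset_P$ cannot be empty.
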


\begin{proof}
It is easy to see that the vacuum loop does not terminate as long as there is an unprocessed \gms{} notification that $P$ is entitled to. This means that if $P$ is vacuum closed then it must have processed $\haltevent{P}$.

In order to reach the processing of $\haltevent{P}$, the vacuum loop must pass through steps (\ref{VacProcessP}) and (\ref{VacProcessSelf}) while $v = \vdeath{P}-1$. This clears all the receive queues as required.
\end{proof}

\subsubsection{Transactional histories and the Fault Theorem}
\begin{defn}
\label{TransactionalDef}
A {\bf transactional history} is a conforming, lossless history that meets the following additional restrictions:
\begin{enumerate}
\item All notifications are processed to completion.
\item All message broadcast requests are processed to completion.
\item All received packets are processed to completion.
\end{enumerate}
\end{defn}

\begin{defn}
\label{VacConvDef}
A protocol \prot{} is {\bf vacuum convergent} if for any conforming model in which it participates and for any halting process in any lossless history of that model the vacuum loop terminates.
\end{defn}

We are now ready to state the principal finding of this part of the paper - the Fault Theorem.

\newcommand{\FaultThm}{Fault Theorem}
\begin{thm}[\FaultThm]
\label{FaultThm}
Let $H$ be a conforming history, and assume that \prot{} is vacuum convergent. Then $H$ can be extended to a fault equivalent transactional history $\tr(H)$. $\tr(H)$ is called the {\bf transactional closure} of $H$.
\end{thm}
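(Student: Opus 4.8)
The plan is to build $\tr(H)$ in two moves: first replace $H$ by a fault-equivalent conforming history that is \emph{lossless}, then apply the vacuum closure to each of its halting processes; transactionality then falls out of Corollary~\ref{VacuumCleanerCor} together with the Halting Axioms. For the first move I would show $H$ is fault equivalent to a conforming lossless $H_0$, disposing of each species of loss ``in one fell swoop''. If $k\in\channel{P}{Q}$ is dropped then, by the \AxPackEventIII{}, no packet queued after $k$ on that channel is ever received, so $\channel{P}{Q}$ is finite and the \CAxChannel{} forces $P$ to be removed --- hence, by the \CAxHalt{}, to halt --- or else forces $Q$ to halt; an argument using the \AxHaltIII{} then shows these are the only two cases and lets every not-yet-received packet of the channel be re-declared unsent (when $P$ halts) or received (when $P$ survives but $Q$ halts, adding only dequeuing events at the halting $Q$), both changes being exactly what fault equivalence sanctions. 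A dropped notification forces $P$ to halt by the \CAxNotify{}, so it may be deposited, unprocessed, in $P$'s receive queue for the later vacuum loop to pick up. The remaining two losslessness clauses --- upstream channels clearing their receive queues, downstream channels clearing their send queues --- go the same way: the conforming axioms pin the offending channel to a halting source (again \CAxChannel{} and \CAxHalt{}) or a halting target, so its stuck packets are either re-declared unsent or left for the target's vacuum loop. Every alteration touches only halting processes and channels downstream of them, so all four Halting Axioms survive; the other history and conforming axioms are checked directly.

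For the second move, vacuum closures must be taken without letting infinities creep in. For every halting initialized process Lemma~\ref{VacContLem} applies (by the \AxNotEventIII{} and \AxOrderI{}); uninitialized processes have no events and, by the \CAxParent{}, no children, and may be left alone. By the \CAxView{} each halting process is removed at a finite view, so by the \AxViewII{} the halting processes can be listed $P_1,P_2,\dots$ in non-decreasing order of removal view. I would close them one at a time: closing $P_n$ keeps the history conforming and lossless and terminates (Lemma~\ref{VacContLem} and vacuum convergence of \prot{}), and it injects new already-sent-and-received packets only into \emph{downstream} channels of $P_n$, i.e. into the receive queues of strictly-earlier-dying processes, all of which already appear among $P_1,\dots,P_{n-1}$; those are then re-closed in a finite downward cascade (strictly decreasing removal view bounds its length). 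Because, by the \AxProcII{}, only finitely many processes can ever send to a given halting process --- and those feeding it via downstream channels all die strictly later --- each $P_k$ is disturbed by only finitely many of the later steps, so its portion of the history stabilizes; the increasing union of the resulting histories is a conforming lossless history, fault equivalent to $H$, in which every halting process is vacuum closed.

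In this final history every halting process has, by Corollary~\ref{VacuumCleanerCor}, processed every packet it received and every notification it was entitled to, each to completion since the vacuum loop finishes whatever trigger it starts; every non-halting process, by the \AxHaltII{}, the \AxHaltIII{} and the \AxHaltIV{} in a lossless history with no dropped notifications, likewise dequeues and completes every notification, message broadcast request and received packet. As the history is already conforming and lossless, it is transactional, and we take it to be $\tr(H)$. I expect the losslessness reduction to be the real obstacle: the vacuum machinery of Lemma~\ref{VacContLem} runs only on lossless histories, the Halting Axioms are fragile under reclassifying which packets are sent or received, and what makes the reduction go through is the \CAxHalt{}, which lets every loss be blamed on a halting process whose queues we may then freely rearrange; the secondary worry, true to the paper's recurring theme, is keeping infinities out of the vacuum-closure step, which is why the closures must be organised around the removal-view ordering and the finiteness of each process's set of senders rather than performed in an arbitrary order.
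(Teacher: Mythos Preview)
Your two-stage plan matches the paper's, but each stage has a genuine gap.

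In the losslessness reduction you categorize channels by which endpoint halts rather than by the upstream/downstream split that Definition~\ref{LosslessDef} actually uses. When both $P$ and $Q$ halt and $\channel{P}{Q}$ is downstream ($\vdeath{P}>\vdeath{Q}$, which is perfectly possible), your rule ``$P$ halts $\Rightarrow$ re-declare unsent'' leaves those packets in $P$'s send queue; but the downstream clause of losslessness demands precisely that this send queue be clear. Since Lemma~\ref{VacContLem} requires a lossless input, your $H_0$ cannot feed the vacuum machinery. The paper's Lossless History Theorem works with the upstream/downstream split directly: unprocessed upstream packets become unsent and unreceived, unprocessed downstream packets become sent and received, and no events are added at this stage (your parenthetical ``adding only dequeuing events'' is also off --- a dequeuing event triggers protocol processing and cannot be inserted in isolation).

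More seriously, you cannot leave uninitialized processes alone: such a process still carries notifications (at minimum $\notify{j(P)}{P}$), and transactionality requires every notification to be processed. You are forced to skip them because Lemma~\ref{VacContLem} needs the parent $E$ to have already processed $P$'s join notification, and for an uninitialized $P$ this fails until $E$ itself has been vacuum-closed. The paper resolves this by ordering the closures by increasing \emph{join} view, guaranteeing $E$ precedes $P$; your removal-view ordering carries no such guarantee, since a parent may well outlive its child. With that ordering in hand, the paper replaces your cascade by a simpler two-pass sweep: the first pass (in join-view order) leaves some downstream receive queues nonempty, but after it every halting process has processed every removal notification, so by the \AxProcII{} the second pass creates no new downstream packets and the result is transactional.
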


We start by showing that drops can be eliminated, and then we move on to simplifying process faults.

\begin{lem}
\label{CfDropPktLem}
If a packet $k \in \channel{P}{Q}$ is dropped in a conforming history, then either $P$ halts or $Q$ halts.
\end{lem}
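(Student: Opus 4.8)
The plan is to show that a single dropped packet forces the whole channel $\channel{P}{Q}$ to contain only finitely many dequeued packets — i.e. to be a finite channel — and then to read the conclusion off from the \CAxChannel{} together with the \CAxHalt{}. So suppose $k \in \channel{P}{Q}$ is dropped, meaning $\preceiveevent{k}$ does not exist. First I would invoke the \AxPackEventIII{}: if some packet $k' \in \channel{P}{Q}$ with $\psendevent{k} \prec \psendevent{k'}$ had a dequeuing event $\preceiveevent{k'}$, then FIFO delivery would force $\preceiveevent{k}$ to exist as well, contradicting the assumption. Hence no packet of $\channel{P}{Q}$ that is queued strictly after $k$ is ever dequeued, so every dequeued packet of this channel has a queuing event $\preceq \psendevent{k}$.

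Next I would bound how many such packets there can be. All queuing events of packets in $\channel{P}{Q}$ lie in $\eventset_P$ and, by the \AxProcI{}, are linearly ordered; by the \AxOrderI{} only finitely many of them are $\preceq \psendevent{k}$. Since distinct packets of a single channel carry distinct queuing events (a multicast set meets each channel in at most one packet, and multicast sets are finite, by the \AxPackEventII{}), the channel $\channel{P}{Q}$ has only finitely many dequeued packets; that is, $\channel{P}{Q}$ is a finite channel. Applying the \CAxChannel{} to $\channel{P}{Q}$ then yields that $P$ is removed or $Q$ halts, and if $P$ is removed the \CAxHalt{} gives that $P$ halts. In either case $P$ halts or $Q$ halts, as required.

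I do not expect a serious obstacle here; the only point that needs a little care is the finiteness count in the second step, which rests on the queuing events of the channel forming a very-well-founded linear order inside $\eventset_P$ and on each multicast set intersecting the channel in at most one packet. Everything else is a direct appeal to the stated axioms, and in particular the argument never needs to assume that $P$ or $Q$ fails to halt, so it proceeds as a direct deduction rather than by contradiction.
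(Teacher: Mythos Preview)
Your proposal is correct and follows essentially the same route as the paper: show via the \AxPackEventIII{} that no later-queued packet is dequeued, use the \AxOrderI{} (with the linear order from the \AxProcI{}) to bound the earlier ones, conclude the channel is finite, and then apply the \CAxChannel{} and \CAxHalt{}. The only small gap is that ``dropped'' in the paper's model means \emph{sent but not received}, so the step ``$\preceiveevent{k}$ does not exist'' is not definitional but needs the second clause of the \AxHaltIII{}; the paper cites this explicitly, and you should too.
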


\begin{proof}
By the \AxHaltIII{}, $\preceiveevent{k}$ does not exist. By the \AxPackEventIII{}, for any $k' \in \channel{P}{Q}$ with $\psendevent{k'} \succ \psendevent{k}$, $\preceiveevent{k'}$ does not exist either. Therefore, considering that $\eventset_P$ and $\eventset_Q$ are linearly ordered, the only $k' \in \channel{P}{Q}$ packets for which $\preceiveevent{k'}$ exists are the ones for which $\psendevent{k'} \prec \psendevent{k}$, and since (by the \AxOrderI{}) the $\prec$ relation is very well founded at $P$, there are only a finite number of such packets. Therefore the channel is finite. By the \CAxChannel{}, either $P$ is removed or $Q$ halts. In the former case, the \CAxHalt{} guarantees that $P$ halts, and we are done. 
\end{proof}

\begin{lem}
\label{FirstFaultLem}
In a non-stunted conforming history, a process halts if and only if it is removed.
\end{lem}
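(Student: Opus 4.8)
The statement is a biconditional whose two directions are of very different character. The direction "removed $\Rightarrow$ halts" is exactly the \CAxHalt{}, so nothing needs to be done there. The plan for the converse, "halts $\Rightarrow$ removed", is a short argument by contradiction: assume $P$ is a process that halts but is not removed, and show that the history would then have to be stunted, contrary to hypothesis.

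The first move is to observe that a halting, non-removed process forces the entire view structure to be finite. Since $P$ halts, the \CAxView{} says its view interval $[j(P), \vdeath{P})$ is finite; since $P$ is not removed, the \AxViewI{} gives $\vdeath{P} = \numview$, so $\numview < \infty$. Then the \AxViewII{} --- view zero is finite and each later view differs from its predecessor by a single process --- makes $\processset = \bigcup_{i < \numview} \viewset_i$ a finite set. Now, because the history is non-stunted yet has only finitely many processes, it cannot be that every process halts, so I can pick a process $Q \notin \processset_h$; and by the contrapositive of the \CAxHalt{}, $Q$ is not removed either.

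The second move pits $P$ against $Q$ across the channel $\channel{P}{Q}$ (which is part of the model, since $P$ and $Q$ are co-members of the last view $\viewset_{\numview-1}$). Since $P$ halts, $|\eventset_P| < \infty$ by the \AxHaltI{}, and each queuing event in $\eventset_P$ carries a finite multicast set by the \AxPackEventII{}; hence only finitely many packets ever lie in $\channel{P}{Q}$, so in particular only finitely many are dequeued --- that is, $\channel{P}{Q}$ is a finite channel. The \CAxChannel{} then forces $P$ to be removed or $Q$ to halt, and we have arranged for neither, a contradiction.

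The whole argument is pure axiom bookkeeping --- no appeal to the event order or to K\"{o}nig's Lemma is needed. The one point that does require care is the order of the two moves: one cannot go looking for a live process $Q$ to play against $P$ until one has established that $\processset$ is finite, for otherwise a non-stunted history could be one with infinitely many processes all of which halt, and there would be no such $Q$. The real leverage is supplied by the \CAxView{}: it is precisely the conforming ingredient that forbids a halted process from persisting through an unbounded tower of views, and without it a halting, non-removed process is entirely consistent with $\numview = \infty$, leaving nothing to contradict.
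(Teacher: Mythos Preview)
Your proof is correct and follows essentially the same route as the paper's: use the \CAxHalt{} for one direction, and for the converse combine the \CAxView{} with $\vdeath{P}=\numview$ to get $\numview<\infty$, hence finitely many processes, then use finiteness of $\channel{P}{Q}$ together with the \CAxChannel{} to reach a contradiction. The only cosmetic difference is that the paper shows every process $Q$ must halt (hence stunted), whereas you pick a single non-halting $Q$ up front and contradict the \CAxChannel{} directly; these are the same argument unwound.
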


\begin{proof}
By the \CAxHalt{} we know that every removed process halts. To show the converse, suppose a process $P$ halts but is not removed. By the \CAxView{} process $P$ must have a finite view interval. Since $P$ is not removed, $\vdeath{P} = \numview$ and so $\numview$ is finite. This implies that the number of processes is finite. Let $Q$ be any process. Since $P$ halts, the \AxHaltI{} guarantees that the channel $\channel{P}{Q}$ is finite. By the \CAxChannel{}, $P$ is removed or $Q$ halts. Since $P$ is not removed, $Q$ must halt. We conclude that there is a finite number of processes and all of them halt. In other words the history is stunted.
\end{proof}

\begin{lem}
\label{FinMissLem}
In a conforming history, a halting process $P$ only queues a finite number of packets and only misses a finite number of packets and notifications:
\begin{align*}
&\left| \bigcup_{Q \in \processset} \channel{P}{Q} \right| < \infty \\
&\left| \bigcup_{Q \in \processset} \left\{ k \in \channel{Q}{P} | \preceiveevent{k} \notin \eventset_P \right\} \right| < \infty \\
&\left| \left\{ i \vert j(P) \le i \le \vdeath{P} \text{ and } \pnotifyevent{i}{P} \notin \eventset_P \right\} \right| < \infty
\end{align*}
\end{lem}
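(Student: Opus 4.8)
The plan is to prove the three bounds separately, in increasing order of difficulty, all resting on a single observation: since $P$ halts, the \AxHaltI{} gives $|\eventset_P| < \infty$, and the \CAxView{} then makes the view interval $[j(P),\vdeath{P})$ finite, so $\vdeath{P}$ is a finite integer (and $j(P)$ is a finite integer regardless).

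The first bound is immediate from the \AxPackEventII{}: every $k\in\channel{P}{Q}$ carries a queuing event $\psendevent{k}\in\eventset_P$, and the preimage of an event $e$ under $k\mapsto\psendevent{k}$ lies inside the finite multicast set $M_e$; since $\eventset_P$ is finite, $\bigcup_{Q}\channel{P}{Q}$ is a finite union of finite sets. The third bound is equally short: the index set $\{\,i \mid j(P)\le i\le\vdeath{P}\,\}$ is finite because $\vdeath{P}$ is, and the missed notifications form a subset of it.

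The real content is the second bound. I would set $\mathcal Q=\{\,Q\mid\channel{Q}{P}\neq\emptyset\,\}$ and show (i) $\mathcal Q$ is finite and (ii) every $\channel{Q}{P}$ with $Q\in\mathcal Q$ is finite; then $\bigcup_{Q}\{\,k\in\channel{Q}{P}\mid \preceiveevent{k}\notin\eventset_P\,\}\subseteq\bigcup_{Q\in\mathcal Q}\channel{Q}{P}$ is a finite union of finite sets. For (i), fix $k\in\channel{Q}{P}$ and apply condition 1 of the \AxProcII{}: either $P\in\viewset_{j(Q)}$, which by the \AxViewI{} forces $j(P)\le j(Q)<\vdeath{P}$ — finitely many candidate join views since $\vdeath{P}<\infty$, and by the \AxViewII{} finitely many processes per join view — or else $\pnotifyevent{j(P)}{Q}$ exists, so $\notify{j(P)}{Q}$ exists, and the \AxGMSI{} then confines $j(P)$ to $[j(Q),\vdeath{Q}]$, forcing $Q\in\viewset_{j(P)}\cup\viewset_{\max(0,j(P)-1)}$, a finite set of processes. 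For (ii) I would split on whether the sender halts. If $Q$ halts, the \AxHaltI{} makes $\eventset_Q$ finite, so $Q$ queues only finitely many packets and $\channel{Q}{P}$ is finite. If $Q$ does not halt, then the history cannot be stunted, so Lemma \ref{FirstFaultLem} forces the halting process $P$ to be removed, whence $\vdeath{P}<\numview$, while $Q$ not being removed (the \CAxHalt{}) gives $\vdeath{Q}=\numview>\vdeath{P}$; since also $j(Q)<\vdeath{P}$ (read off from condition 1 of the \AxProcII{} as above), the \AxGMSI{} supplies the notification $\notify{\vdeath{P}}{Q}=\nr{P}$, the \CAxNotify{} forbids it — and every earlier notification to $Q$ — from being dropped, the \AxHaltII{} therefore makes $\pnotifyevent{\vdeath{P}}{Q}$ exist, and condition 2 of the \AxProcII{} puts every $\psendevent{k}$ with $k\in\channel{Q}{P}$ strictly before it; the \AxOrderI{} then bounds the number of such queuing events, hence the size of $\channel{Q}{P}$.

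The routine steps are the first and third bounds; the entire obstacle is in the second, and within it the delicate case is a non-halting sender $Q$: a priori $Q$ could keep firing packets forever at a $P$ that is long dead, which would destroy the bound. Escaping this requires noticing that the mere existence of a non-halting $Q$ makes the history non-stunted, which lets Lemma \ref{FirstFaultLem} upgrade "$P$ halts" to "$P$ is removed", after which the \gms{}/halting axioms can be chained to conclude that $Q$ is eventually forced to process $P$'s removal and fall silent toward $P$. Getting that chain to close — matching up the index conditions in the \AxGMSI{}, the \AxHaltII{}, and condition 2 of the \AxProcII{} — is where the care is needed.
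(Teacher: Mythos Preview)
Your proof is correct and follows essentially the same approach as the paper: the first and third bounds are immediate from the \AxHaltI{} and the \CAxView{}, and the second bound is obtained by showing that only finitely many processes send to $P$ and that each such channel is finite, with the non-halting sender case handled via Lemma~\ref{FirstFaultLem}, the \CAxNotify{}, the \AxHaltII{}, and condition~2 of the \AxProcII{}. The only organizational difference is that you prove ``finitely many senders'' first and then ``each channel finite'', while the paper does it in the opposite order; your argument for finitely many senders via the dichotomy in condition~1 of the \AxProcII{} is slightly more elaborate than the paper's direct bound $j(Q)<\vdeath{P}$, but both are valid.
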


\begin{proof}
Let $P$ be a halting process in a conforming history. By the \AxHaltI{}, $P$ has a finite event set. Therefore only a finite number of packets are queued by $P$. This proves the first claim.

By the \CAxView{}, a halting process in a conforming history has a finite view interval. Therefore $P$ is eligible only for a finite number of view notifications. This proves the third claim.

The main difficulty is with the second claim.

Let $Q$ be a process. First we want to show that $P$ misses a finite number of packets from $Q$.

If $Q$ halts, then $Q$ has a finite event set, therefore $Q$ queues a finite number of packets to $\channel{Q}{P}$, therefore $P$ misses finite number of packets from $Q$ and we are done. So assume that $Q$ does not halt. This implies, incidentally, that the history is not stunted.

Since the history is conforming and not stunted, Lemma \ref{FirstFaultLem} implies that $P$ is removed.

If $\vdeath{P} < j(Q)$ then the \AxProcII{} guarantees that $\channel{Q}{P}$ is empty and we are done. So we can assume that
$$
j(Q) < \vdeath{P} < \numview = \vdeath{Q}
$$
And therefore by the \AxGMSI{} the notification $\notify{\vdeath{P}}{Q}$ exists. Since the history is conforming and $Q$ does not halt, there are no dropped notifications at $Q$ (due to the \CAxNotify{}) and therefore $Q$ must process the removal notification of $P$.

By the \AxProcII{}, for every $k \in \channel{Q}{P}$ we have $\psendevent{k} \prec \pnotifyevent{\vdeath{P}}{Q}$. By the \AxOrderI{}, there is only a finite number of events in $\eventset_Q$ that precede $\pnotifyevent{\vdeath{P}}{Q}$ and therefore $\channel{Q}{P}$ is a finite channel. Therefore, $P$ can only miss a finite number of packets from $Q$.

We have established that $P$ misses a finite number of packets from each source process. As long as only a finite number of processes queue packets targeted at $P$, we are done. Since $P$ has a finite view interval, there is only a finite number of processes $Q$ with $j(Q) < \vdeath{P}$. For any $Q$ with $j(Q) > \vdeath{P}$ we have already established that the channel $\channel{Q}{P}$ is empty.
\end{proof}

\begin{thm}[Lossless History Theorem]
\label{LosslessThm}
Every conforming history is fault equivalent to a lossless conforming history that contains the same packets and events.
\end{thm}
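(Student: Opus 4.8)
The plan is to build the lossless conforming history $H'$ directly on top of $H$ without disturbing the event set, the packet set, the notification set, or the order $\prec$: the only data that change are the per-packet \textit{sent}/\textit{received} attributes and the per-notification \textit{dropped} attribute. Concretely, in $H'$ I declare every notification to be received (never dropped); every downstream packet to be sent and received; and an upstream packet to be sent and received exactly when it carries a dequeuing event in $H$, and unsent (hence unreceived) otherwise. Since every channel $\channel{P}{Q}$ is either upstream ($P \ne Q$, $\vdeath{P} \le \vdeath{Q}$) or downstream ($P = Q$ or $\vdeath{P} > \vdeath{Q}$), this prescribes the status of every packet.

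First I would check losslessness: $H'$ has no dropped packets (an upstream packet is either unsent or else both sent and received; a downstream packet is sent and received), no dropped notifications, its upstream channels clear their receive queues (the only received upstream packets are those with a dequeuing event, which by definition leave the queue), and its downstream channels clear their send queues (every queued downstream packet is sent). Next I would verify that $H'$ is still a history. The overwhelming majority of the axioms of Section~\ref{ModelAxiomsSSS} are immediate, because $\eventset$, $\prec$, and the view/notification/channel structure are unchanged, so everything that speaks only about events and order --- the ordering axioms, the notification axioms, \AxProcII{}, \AxProcIII{}, \AxProcIV{}, \AxPackEventIII{}, the halting axioms for events, and so on --- holds verbatim. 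The only real work is in the \AxHaltIII{}: its second bullet needs that a packet that is not received in $H'$ carries no dequeuing event (true, since those packets are exactly the undequeued upstream packets), and its first and third bullets need that whenever $P$ does not halt every $k \in \channel{P}{Q}$ is sent in $H'$ and the no-gap implication is preserved.

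The crux is the following claim, which also underlies the losslessness bookkeeping: \emph{if $k \in \channel{P}{Q}$ is an upstream packet with no dequeuing event in $H$, then $P$ halts}. Suppose $P$ does not halt. The history is not stunted (in a stunted history every process halts), so by Lemma~\ref{FirstFaultLem} $P$ is not removed and $\vdeath{P} = \numview$; since the channel is upstream, $\vdeath{P} \le \vdeath{Q}$ forces $\vdeath{Q} = \numview$, so $Q$ is not removed and, by Lemma~\ref{FirstFaultLem} again, $Q$ does not halt. By the \CAxChannel{}, as neither $P$ is removed nor $Q$ halts, $\channel{P}{Q}$ is not a finite channel, so infinitely many of its packets carry a dequeuing event. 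By the \AxPackEventIII{} none of those dequeued packets can have been queued after $k$ (else $k$ would be dequeued too), and since distinct packets of one channel never share a queuing event (\AxPackEventII{}), all of them are queued strictly before $k$; hence $\psendevent{k}$ has infinitely many predecessors in $\eventset_P$, contradicting the \AxOrderI{}. The same argument shows that every self-channel packet of a non-halting process is dequeued, which is exactly what validates the third bullet of the \AxHaltIII{} for self-channels; for non-self channels that bullet is vacuous because, as above, a non-halting source forces the target not to halt while an upstream channel with a non-halting target has all its packets received only if they are all dequeued.

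Finally I would check that $H'$ is conforming: every conforming axiom depends only on the unchanged structure, and the \CAxNotify{} holds vacuously because $H'$ has no dropped notifications. And $H'$ is fault equivalent to $H$ essentially by construction --- the two have literally the same events and the same packets, so trivially the same non-vacuum events and packets, and retargeting the \textit{sent}/\textit{received} flags of packets and clearing the \textit{dropped} flags of notifications is exactly the kind of difference fault equivalence permits. The expected main obstacle is the \AxHaltIII{} verification, i.e. ensuring that the retargeting of \textit{sent}/\textit{received} never strands a packet of a non-halting process; the upstream-packet claim above is what resolves it.
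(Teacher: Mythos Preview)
Your proposal is correct and follows essentially the same approach as the paper: the construction (undequeued upstream packets become unsent/unreceived, downstream packets become sent/received, all notifications become undropped) is identical, and your ``crux claim'' is precisely the argument the paper uses to show that upstream channels between two non-halting processes have every packet dequeued. One small omission: the \AxHaltII{} is not purely about events and order---its hypotheses reference the \emph{dropped} attribute you are changing---so you should note that its second bullet still holds because the \CAxNotify{} in $H$ guarantees that a non-halting process had no dropped notifications to begin with, hence already processed all of them.
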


\begin{proof}
Let $H$ be a conforming history. We are going to change $H$ into a fault equivalent lossless history by changing the faulting characteristics of notifications and packets, without adding or subtracting any vacuum events and packets. The conforming axioms (see \ref{ConfModelSSS}) are not affected by such changes except for the \CAxNotify{}. However a lossless history has no dropped notifications, so this and all other conforming axioms are going to remain valid.

We start with notifications by simply declaring that none of the notifications are dropped. There are two catches. First, we may have just added an infinite number of notifications into the notification queues of some processes. But it follows from the \CAxNotify{} that dropped notifications only exist at halting processes, and it follows from Lemma \ref{FinMissLem} that halting processes only have a finite number of dropped notifications, so this problem does not occur. The other catch is that we have to prove that $H$ is still a history. There are several assertions in the \AxHaltII{} that are related to dropped notifications
which we now have to verify. Suppose that some notification $\notify{i}{P}$ is dropped in $H$.
\begin{itemize}
\item The \AxHaltII{} claims that if $\notify{i}{P}$ is a dropped notification then $\pnotifyevent{i}{P}$ does not exist. This assertion is not violated when we declare that $\notify{i}{P}$ is not dropped, so we are done in this case.
\item The same axiom claims that if $P$ does not halt and $\notify{i}{P}$ is not dropped then $\pnotifyevent{i}{P}$ does exist. However history $H$ is conforming, and so by the \CAxNotify{} $P$ must halt. Therefore this assertion is not violated either.
\end{itemize}

We now move to packets. We make the following changes in the faulting characteristics of packets in the channel $\channel{P}{Q}$:
\begin{itemize}
\item We declare all the unprocessed upstream packets to be unsent and unreceived.
\item We declare all the unprocessed downstream packets to be sent and received.
\end{itemize}

The same two catches apply here as well. By preventing packets from being sent we may saddle some processes with an infinite number of unsent packets. By forcing packets to be received without being processed we may be saddling some processes with an infinite number of received packets that linger in the process' receive queues indefinitely. In addition, we have to verify all the relevant axioms.

To see that we do not create an infinite number of packets that remain stuck in the send or receive queues of a process $P$, notice that if $P$ halts then Lemma \ref{FinMissLem} guarantees that only a finite number of unprocessed packets exist in $P$'s incoming and outgoing channels and so we cannot create infinities at $P$. If $P$ does not halt then the situation is a little bit more complex and we have to look at the send queues and receive queues of $P$ separately.

It follows from the \CAxHalt{} that $P$ is not removed and therefore $\vdeath{P} = \numview$. As a result most of the outgoing channels of $P$ are downstream. Since we force all the unprocessed downstream packets to be sent we do not create any unsent packets on these outgoing channels. The exceptions the are channels $\channel{P}{Q}$ that lead to some other process $Q$ that is not removed. Lemma \ref{FirstFaultLem} implies that $Q$ does not halt. The \CAxChannel{} implies that $\channel{P}{Q}$ is not finite and therefore the \AxPackEventIII{} implies that all the packets on the channel are processed and as a result we do not change the faulting characteristics of any of packets on these channels.

On the other hand the incoming channels of $P$ are all upstream, with the exception of the self channel $\channel{P}{P}$. Since we force all the unprocessed upstream packets to be unsent, we do not create any received-and-unprocessed packets on these channels. We have already seen that when both ends of a channel do not halt, all the packets on the channel are processed.  Therefore $P$ processes all of the packets on its self channel and as a result we do not change the faulting characteristics of any self channel packets.

As far as axioms go, the only axiom that is related to the fault properties of packets is the \AxHaltIII{}. The first part of this axiom claims that a non-halting process sends all of its queued packets. This part is not violated by our changes because we only declare a packet to be unsent if it emanates from a removed process $P$. By the \CAxHalt{} the process $P$ halts.

The second part of the axiom claims that a packet is not processed unless it is received. We only declare a packet to be unreceived if it is not processed. Therefore this part of the axiom is not violated.

The third and last part of the axiom claims that packets keep getting processed as long as there is no impediment such as a halting source or target, or a previous unreceived packet. Suppose $P$ and $Q$ do not halt. The the \AxHaltIII{} implies that all the packets in $\channel{P}{Q}$ are sent. Lemma \ref{CfDropPktLem} implies that all the packets are received and as a result the \AxHaltIII{} implies that all the packets in the channel are processed. Therefore we do not touch any of these packets and the third part of the axiom remains valid.

We have to show that the revised history is lossless (see Definition \ref{LosslessDef}). This follows directly from our construction. We obviously do not have any dropped packets or notifications anymore. As for channels, since we declared all the unprocessed packets in upstream channels to be unsent and unreceived we have cleared all the receive queues of these channels. Similarly, since we declared all the unprocessed packets in downstream channels to be sent and received, and since all the processed packets must have been sent to begin with, we have cleared all the send queues of these channels, as required.

Our last task is to show that the new history is fault equivalent to the original history. But this is trivial since we did not add or subtract any events.
\end{proof}

\begin{proof}[Proof of the Fault Theorem]
Theorem \ref{LosslessThm} established that the conforming history $H$ is fault equivalent to a lossless conforming history $H_1$. In a two step process, we will improve $H_1$ to a fault equivalent transactional history $H_3$. The intermediate histories will have the following properties:
\begin{itemize}
\item $H_1$ is conforming and lossless.
\item $H_2$ is conforming and lossless. In addition, all processes in $H_2$ are initialized and process all of their notifications.
\item $H_3$ is transactional.
\end{itemize}

The intermediate history $H_2$ is constructed by running the vacuum loop to completion at each halting process. As one might expect, there are some complications.

The first complication is that for Lemma \ref{VacContLem} to apply at a process $P$, we must make sure that its donor $E$, if it exists, had dequeued the join notification of $P$. This can be guaranteed by traversing the processes of $H$ by increasing join view. Since $j(P) > j(E)$, this order guarantees that we run the vacuum loop at $E$ before we run it at $P$. The vacuum convergence property of \prot{} together with Corollary \ref{VacuumCleanerCor} guarantee that $E$ will process the join notification of $P$ before we run the vacuum loop at $P$.

The second complication is that we may be running the vacuum loop an infinite number of times. This only happens if $H_1$ is not stunted, in which case Lemma \ref{FirstFaultLem} implies that every halting process is removed. At each finite step Lemma \ref{VacContLem} guarantees that the resulting structure is a conforming, lossless history that is fault equivalent to the original history $H$. But we have to show that all of these properties are preserved at the limit. This is mostly but not entirely trivial.

The limiting structure $H_2$ is a conforming history because almost all the history axioms and conforming history axioms either deal with views and view intervals, or with the events at a single process, or with events at a pair of processes, or with packets in a single channel. Any such axiom is either not affected by the vacuum loop at all, or is affected only by a finite number of applications of the vacuum loop in our infinite sequence. Therefore all of these axioms follow immediately from Lemma \ref{VacContLem}. The only exception is the \AxOrderII{}, but this axiom is "continuous" in the sense that it naturally commutes with limits. This is because each order relationship that exists at the limit already exists after a finite number of applications of the vacuum loop.

$H_2$ is lossless for the same reason: the requirement that there be no dropped packets or notifications is continuous  - it is fulfilled at the limit if it is fulfilled at each step. The requirements on upstream and downstream channels affect one channel at a time and each channel is affected only by the execution of the vacuum loop at the source and target of the channel.

The non-trivial part is in showing that $H_2$ is fault equivalent to $H_1$. The argument in Lemma \ref{VacContLem} was that the vacuum loop does not create any new non-vacuum events because it only introduces a finite number of events, all of which are vacuum events themselves. Obviously a finiteness argument of this sort cannot simply be carried over to the limit. Instead the fault equivalence of $H_1$ and $H_2$ arises from deeper roots.

Let $e$ be any event in $H_2$ that has an infinite number of successor events in $H_2$. The fault equivalence claim will follow if we can show that $e$ already has an infinite number of successors in $H_1$. From \KonigLem{} it follows that there is an infinite increasing sequence
$$
B = b_1 \preceq b_2 \preceq b_3 \preceq \dots
$$
of successors of $e$ in $H_2$, and we can assume that $B$ contains no gaps, meaning that any two consecutive pair of events in $b_i \preceq b_{i+1}$ in $B$ is a primitive relation, meaning that either
\begin{itemize}
\item $b_i $ and $b_{i+1}$ are adjacent events in a process $P$.
\item There is a packet $k$ such that $b_i = \psendevent{k}$ and $b_{i+1} = \preceiveevent{k}$.
\item There is a parent/child pair of processes $E/J$ such that $b_i = \pnotifyevent{j(J)}{E}$ and $b_{i+1} = \joinevent{J}$
\end{itemize}

All we need to show is that $B$ cannot be made up exclusively of events that are outside of $H_1$, namely events that are added by the extension process. We demonstrate that through a sequence of claims. We assume that $B$ is made up by events that are added by the vacuum loops and use the notation $P(b)$ to indicate the process that event $b$ occurs at. We reach contradiction through a sequence of claims.

Claim: {\em The sequence $P(b_1), P(b_2), \dots$ contains an infinite number of different processes.}

If only a finite number of processes appear in the sequence then there is some process $P_\infty$ that appears an infinite number of times in the sequence. But that means that the sequence $B$ contains an infinite number of events at $P_\infty$ all of which are added, by our assumption, by the vacuum loop at $P_\infty$. This contradicts our assumption that the vacuum loop completes in a finite number of steps at $P_\infty$.

Claim: {\em The sequence $B$ must contain an infinite number of parent/child type pairs $\pnotifyevent{j(J)}{E} \preceq \joinevent{J}$.}

If not, then we can remove an initial segment of $B$ so that it does not contain any such pair. This would leave us only with consecutive pairs that occur at the same process or pairs of the type $\psendevent{k} \prec \preceiveevent{k}$. It follows from Lemma \ref{VacContLem} that the vacuum loop only creates $\preceiveevent{k}$ events for downstream packets. It follows that for all $i$, $\vdeath{P(b_{i+1})} \le \vdeath{P(b_i)}$. As a result the sequence $B$ can only involve a finite number of processes, contradicting the first claim.

Claim: {\em If $B$ contains a pair $\psendevent{k} \prec \preceiveevent{k}$, then $k$ is a packet on a self-channel.}

Suppose that $B$ contains an event $b_i = \preceiveevent{k}$ where $k$ is not on a self-channel. If follows from the previous claim that there is a parent/child pair later on in the sequence. Therefore there is a parent/child pair $E/J$ and a segment in $B$ of the form
$$
\preceiveevent{k} \prec f_1 \prec f_2 \prec \dots \prec f_N \prec \pnotifyevent{j(J)}{E}
$$
where $N \ge 0$; all the events occur at $E$; and $k$ is not on the self-channel $\channel{E}{E}$. Because $k$ is not on the self-channel, the event $\preceiveevent{k}$ must be generated by step (\ref{VacProcessP}) of the vacuum loop, which means that it is generated while $v = \vdeath{E}-1$. This means that the next (and last) notification event that the vacuum loop creates at $E$ is $\haltevent{E}$ and not $\pnotifyevent{j(J)}{E}$.

Now we are ready to draw a contradiction. The sequence $B$ contains an infinite number of parent/child pairs and all other pairs in $B$ are local to a process. Therefore there is a segment in $B$ of the form
$$
\pnotifyevent{j(J)}{E} \preceq \joinevent{J} \prec f_1 \prec f_2 \prec \dots \prec f_N \prec \pnotifyevent{j(K)}{J} \preceq \joinevent{K}
$$
where $N \ge 0$; E/J and J/K are parent/child pairs; and all the events $f_1, f_2, \dots, f_N$ occur at $J$. It follows that the process $J$ is uninitialized in $H_1$, since the $\joinevent{J}$ event is generated by the vacuum loop at $J$. By the \CAxParent{} the process $J$ cannot have a child process $K$. Contradiction.

We are not done yet. We have constructed a conforming, lossless history $H_2$ that is fault equivalent to $H_1$ and which has very good properties. $H_2$ has no uninitialized processes and according to Corollary \ref{VacuumCleanerCor} all the notifications and \ulp{} message broadcast requests in $H_2$ are processed to completion. But $H_2$ is not transactional. The reason is that the vacuum loop at each process may create unprocessed packets in the receive queues of non-self downstream channels. Since we generate $H_2$ from $H_1$ using a single pass over all the halting processes, these packets may never be revisited.

To make sure that we deal with these packets, we perform a second pass, the same way we performed the first pass. As we have already shown, this procedure creates a conforming, lossless history $H_3$ that is fault equivalent to $H_2$. However unlike the previous pass, all the processes in $H_2$ have already processed all of their notifications. Therefore for every downstream non-self channel $\channel{P}{Q}$ the process $P$ already knows that $Q$ is removed. It now follows from the \AxProcII{} that $P$ does not queue any new packets to the channel during the vacuum loop. As a result $H_3$ is transactional.
\end{proof}

\section{The \cbcast{} Algorithm}
\subsection{Introduction}
The algorithm we present here is based on the outline in \cite{birman1991lightweight}. Our description is more detailed than the authors', but the algorithm itself is a bare bones version of the original. Our prime motivation here is to create a description that is easy to check for correctness. Therefore there is no attempt to accommodate frills like allowing multiple clusters, or optimizing time, space or communication complexity. For such considerations refer back to \cite{birman1991lightweight}.

From this point on we consider our model to be conforming and so all the histories that we analyze are assumed to be conforming. The properties of \cbcast{} as presented here do not necessarily hold for non-conforming models.
\subsection{Terminology}
\subsubsection{Messages and delivery}
As mentioned in the introduction, we use the term {\em messages} to refer to the objects being broadcast between processes with the expectation of virtually
synchronous delivery. The \cbcast{} algorithm implements these broadcasts using the underlying multicast of point-to-point packets. When a packet is dequeued at a process
and found to contain a message, the message is not immediately delivered in order to preserve causality constraints. Following \cite{birman1991lightweight}, we
say that the message is {\em received} once the process dequeues its packet up from the receive queue of the channel, but the message within it is {\em delivered} only at the moment when doing so is consistent with causality constraints. The act of delivery is implemented by invoking the ApplyMessage callback, which applies the message to the user's replicated data (see \ref{UserApplication} below). Every message carries with it three pieces of metadata, denoted $\orig{\fp{msg}}$, $\mview{\fp{msg}}$ and $\mvt{\fp{msg}}$ and describing the originator, view and vector time of the message respectively. These notions are explained below.
\subsubsection{Views, installations and view gaps}
The membership service sends a coherent stream of membership change notifications to member processes, as described in the previous section. This creates a
natural sequence of membership {\em views}, starting at $\view(0)$ and progressing through $\view(1)$, $\view(2)$, etc. Each view is a finite set of
processes, and $\view(n)$ is computed from $\view(n-1)$ by taking the $n^{th}$ notification of the membership service and applying it to the earlier view,
either by adding or removing a process.
Each process keeps track of the view notifications as they arrive from the membership service, and then attempts to {\em install} them. Installation involves
waiting a while for packets in flight to arrive at their destinations. This flushing procedure is at the heart of the algorithm and is necessary in order to
guarantee virtually synchronous delivery across views. As a result of this wait, a process may be several views behind as old installations are delayed and new view notifications
keep arriving. This gap is referred to as the {\em view gap}. If a process has received a notification of a new view, we say that the process is {\em aware} of the view, regardless of whether the process has already installed the view.

It should be noted that a process that leaves the group never re-joins. Even if a user re-starts a process, from the point of view of the membership service the re-started process is brand new.
\subsubsection{Instability and forwarding}
A major difficulty in designing a coherent broadcast protocol is that it must make broadcasts look like atomic operations, where in reality message packets are sent to each target process individually and are received (or fail to be received) individually. Creating the perception of atomicity requires careful bookkeeping of the progress of each message by the sender and by each target.

In order to broadcast a message to a set of members of the currently installed view, a sender process creates, for each target, a packet containing the message and then sends the packet through the appropriate channel. The sender process tries to keep track of the arrival of the packets. To do that it creates a set, called the {\em instability set}, that initially contains the identities of all the target processes. When the sender receives an acknowledgment of receipt from a target\footnote{In our implementation, acknowledgments are received directly from the target - one acknowledgment per packet. In more practical implementations with lower communication costs, fewer acknowledgments are used and the sender may have other, more indirect methods of deducing that a packet has been received.}, it removes that target from the instability set of the message. Likewise, if the group membership service notifies the sender that a target has been removed from the group, that target is removed from the instability set. A message with a non-empty instability set is called {\em unstable}. The sender keeps copies of all the unstable messages in a {\em wait set}. If the instability set becomes empty, the message is said to have become {\em stable} and is removed from the wait set.

When a process receives a packet containing a message, it keeps a copy of the message in a {\em receive set} where it waits to be delivered. In addition the receiver has a responsibility to help the sender propagate the message to its intended target set. For that purpose the receiver keeps a copy of the received message in a {\em forwarding queue}. If the receiver learns of the removal of the sender it takes over its duties by re-broadcasting all the messages in the forwarding queue that were received from that sender.

In a practical implementation the receiver tries to keep track of the stabilization of each received message, just like the sender does. Once a message stabilizes there is no more need to help propagate it, even if its sender is removed. Good bookkeeping is essential for keeping forwarding sets small and communication costs low. In our simplified implementation we do not keep track of stabilization on the receiver side, except for the very rudimentary measure of removing obsolete messages from the forwarding queues.

Due to this forwarding mechanism we must differentiate between the {\em originator} of a message, which is the process that originally broadcast a message, and the packet {\em sender} which is the process that happened to send the packet containing the message. Usually these two are the same, but in the presence of process removals, a message may be carried from originator to target through a series of forwarded packets. The forwarding procedure naturally leads to duplicated deliveries. The receiving process removes the duplicates using the vector time (see below). There are simple ways to reduce the number of duplicates and thus reduce the communication cost that is involved in forwarding. We do not include these here for the sake of simplicity.
\subsubsection{Vector time}
Each process keeps track of causality relations between messages using a vector of natural numbers, indexed by member processes, called the {\em vector time}. At each coordinate, the vector time contains the serial number of the latest delivered message that originated at the process that corresponds to that coordinate. The vector
time is reset to zero every time a new view is installed. When a message is originally broadcast by a process the vector time is incremented (at the originator's own coordinate) and the metadata of the message - namely $\orig{\fp{msg}}$, $\mview{\fp{msg}}$ and $\mvt{\fp{msg}}$ - are then set to the id of the process, the currently installed view and current vector time of the process, respectively. These values remain fixed for the lifetime of the message. For a detailed discussion, including proofs, of how the vector time is used to guarantee causality order preservation within each view, see \cite{birman1991lightweight}.
\subsubsection{Cluster Initialization and original processes}
\label{InitializationSSS}
Birman et al (\cite{birman1991lightweight}) assume that the cluster starts at view $1$ with a single member. We have to relax this assumption because a central tool in our analysis of \cbcast{} is the History Reduction Mapping (see Section \ref{HistoryReduxS}) that moves process joins back to the initial view. Therefore we allow an arbitrary finite number of processes to belong to that view (view zero in our exposition). We call these the {\em original} processes. These processes get started through an invocation of the \ref{StartCluster} procedure. We assume that the procedure gets called at each original process at exactly the same time and with the same roster of members that includes exactly the set of original processes. While the assumption of simultaneity is not realistic, we only need to use it with theoretical "reduced" cases and not with actual clusters which can still be assumed to start with a single member.

\subsection{Outline of the Algorithm}
\label{AlgOutlineSS}
Each process, from the moment it joins the group to the moment it leaves, keeps track of the views as notifications are received from the membership service. For each view from the currently installed view to the most recently announced view, the process keeps a list of the members of that view.

Usually, when the process needs to broadcast a message to the group, it fixes the metadata of the message with the current view and vector time, and then sends a packet containing the message to each member of the current view. The process also places the message in a wait set, where it tracks its stabilization as the recipient processes acknowledge the message. However, when there is a view gap (i.e. there are announced views that have not been installed yet) the process refrains from broadcasting messages or fixing their metadata, and it queues them instead. The messages are broadcast whenever the view gap closes.

When a process receives a packet containing a message, it acknowledges its receipt to the sender  (the sender, remember, may be different from the originator). If the message is not a duplicate it is placed in the receive set until it becomes deliverable and a copy of it is created and appended to the tail of the forwarding queue of the sender of the message. When a message becomes deliverable it is removed from the receive set and applied to the user's replicated data. At the time of delivery the process updates its own vector time by incrementing the coordinate that corresponds to the originator of the message.

When a process is notified that another member process has been removed, it takes each message in the forwarding queue of that process and forwards it to all the live processes. These messages are forwarded with their original metadata (originator, view and vector time) unchanged. A copy of each message is placed in the wait set, to await stabilization.

Whenever a view gap exists, such as after a new view notification is received, the process must wait for its wait set to empty out before it can install the next view. Once the wait set becomes empty, the process sends a {\em flush} packet to all the live processes. This packet contains the value of the latest view known to the process (i.e. the current view plus the view gap). The process then waits to receive similar flush packets from all the live processes. Once that happens, the process installs the next view. It applies a view installation notification to the user's replicated data and removes any obsolete messages from the receive set and the forwarding queue.

Our implementation contains, in addition to flush packets, a related type of packet called a {\em ghost} packet. These packets are not necessary in a practical implementation. We use them to facilitate our reasoning about joining processes. A ghost packet is the "ghost" of a flush packet that would have been sent by a child of an existing process, had that child already been born.

When a new process joins the group, it must somehow synchronize its state with the state of the existing processes. This is done in two stages. Initially the new process starts life as a perfect replica of an existing process, the {\em parent}. We do not describe how this is done, and subsume it into the opaque membership service. In addition the new process must compensate for the natural race conditions that occur as a result of the fact that packets have been in flight between its parent and the other members of the group at the moment that it is born. This compensation is performed using the {\em donation} protocol, whereby each existing process other than the parent exchanges instability information with the new process.

\subsection{Variable and Function Definitions}
\subsubsection{Global variables - the state of a process}
\label{ProcessState}
\begin{description}
\item[\cview{}]
The number of the current view.
\item[\vgap{}]
The number of yet-uninstalled views of which we have been notified by the membership service.
\item[\self{}]
Local process identifier.
\item[\mset{}]
The set of identifiers of the member processes of the current view.
\item[\pque{}]
A queue of pending view changes. Each view change is either a joining of a new process or the removal of an existing process.
\item[\lset{}]
The set of identifiers of all the live processes. This includes every process that is a member of the current view or a known future view, excluding all known removed processes.
\item[\cset{}]
A subset of \lset{} that excludes all the processes that joined before the local process but from which a donation has not yet been received.
\item[{\vtime{}[]}]
A vector of natural numbers, indexed by the process identifiers of the members of the current view. This is the vector time of the local process.
\item[\rset{}]
The set of non-duplicate messages that were received and not yet discarded or delivered.
\item[{\fque{}[]}]
A vector of message queues, indexed by process identifiers. For each process identifier, the queue contains copies of all the messages that were sent from and acknowledged to that process - excluding duplicates - in the order they were received. The queue includes messages that were merely forwarded by the sending process and did not originate from it. Each queue includes both delivered and undelivered messages.
\item[\wset{}]
The set of all the messages that the process broadcast or forwarded during the current view (note that forwarded messages may have a $\mview{\fp{msg}}$ of a higher view even if they are forwarded during the current view). Each message in the wait set is paired with an index and an instability set. The index indicates how many messages were broadcast or forwarded out of the process prior to the current message. The instability set contains, for every process that has not yet acknowledged the message, an index that indicates how many broadcast and forwarded messages were received from that process prior to the broadcasting or forwarding of the current message. \wset{} is organized as the union of two data structures:
\begin{enumerate}
\item \bwset{} contains only the messages that were broadcast by the current process.
\item \fwset{} contains only the messages that were forwarded by the current process.
\end{enumerate}
\item[\lque{}]
A queue of all the unsent messages that need to be broadcast once the view gap closes.
\item[\uldata{}]
An opaque object containing the replicated user data. This data is managed by the user application in an application-specific way, subject to the rules listed in subsection \ref{UserApplication}.
\item[\sgh{}]
A number indicating the highest ghost value sent by the process so far. Ghost values are sent out in a strictly increasing sequence.
\item[\sfh{}]
A number indicating the highest flush value sent by the process so far. Flush values are sent out in a strictly increasing sequence.
\item[{\gvec{}[]}]
A vector of view numbers, indexed by process identifiers. It keeps, for each process, the highest ghost value that was received from that process.
\item[{\fvec{}[]}]
A vector of view numbers, indexed by process identifiers. It keeps, for each process, the highest flush value that was received from that process.
\item[\mpkout{}]
A counter of outbound messages, made up of two fields:
\begin{itemize}
\item $\mpkout{}.b$ is the number of original messages broadcast by the process up to this point.
\item $\mpkout{}.f$ is the number of messages forwarded by the process up to this point.
\end{itemize}
\item[{\mpkin{}[]}]
A vector of number pairs, indexed by process identifiers, that counts how many messages have been received from each process so far. The two fields are:
\begin{itemize}
\item $\mpkin{}[P].b$ is the number of original $P$-messages received from $P$.
\item $\mpkin{}[P].f$ is the number of forwarded messages received from $P$
\end{itemize}
\end{description}

\subsubsection{Packet and notification types}
\begin{description}
\item[$\nj{\fp{pid}}{\fp{p\_pid}}$]
Notification that a new process with identifier \fp{pid} joined the group as a clone of the parent process with identifier \fp{p\_pid}.
\item[$\nr{\fp{pid}}$]
Notification that current member process with identifier \fp{pid} was removed from the group.
\item[$\pkm{\fp{msg}}$]
A message packet carrying a message \fp{msg}.
\item[$\pks{\fp{msg}}$]
An acknowledgement packet carrying an acknowledgment of receipt of message \fp{msg}.
\item[$\pkh{\fp{v}}$]
A ghost packet indicating ghost value \fp{v}.
\item[$\pkf{\fp{v}}$]
A flush packet indicating flush up to view \fp{v}.
\item [$\pkh{\ge \fp{v}}$ or $\pkf{\ge \fp{v}}$] Stand for any packet $\pkh{\fp{v'}}$ or $\pkf{\fp{v'}}$ where $\fp{v'} \ge \fp{v}$.
\item[$\pkd{\fp{donation}}$]
A donation packet containing a donation of instability information from an existing process to a newly joined process.
\item[$\pkc{\fp{co-donation}}$]
A co-donation packet containing a donation of instability information from a newly joined process to an existing process.
\end{description}

\subsubsection{Message metadata}
Each message is fixed with three pieces of metadata
\begin{description}
\item[$\orig{\fp{msg}}$] Originator, namely the process that broadcast the message originally.
\item[$\mview{\fp{msg}}$] View, which is the view of the originator when the message is broadcast. 
\item[$\mvt{\fp{msg}}$] Vector Time, which is (roughly) the vector time of the originator when the message is broadcast.
\end{description}
The pair $\left< \mview{\fp{msg}}, \mvt{\fp{msg}} \right>$ uniquely identifies the message \fp{msg}.

\vfill 

\subsection{Detailed \cbcast{} Algorithm Pseudo Code}
\label{PseudoCode}

Our pseudo code implements the various \prot{} interfaces (see \ref{ProtIntSSS})

\begin{tabbing}
protStart(\fp{roster}, \fp{P}) \= at page \pageref{StartCluster} \kill  
protBroadcast(\fp{m}) \> at page \pageref{BroadcastMessage} \\
protStart(\fp{roster}, \fp{P}) \> at page \pageref{StartCluster} \\
protRun(\fp{P}) \> at page \pageref{Run}  \\
protRemove(\fp{P}) \> at page \pageref{RemovalNotification} \\
protJoin(\fp{P}, \fp{E}) \> at page \pageref{JoinNotification} \\
protPacket(\fp{k}, \fp{S}) \> at page \pageref{protPacket}
\end{tabbing}

The protPacket(\fp{k}, \fp{S}) interface implementation uses the following procedures to process the various types of packets that are defined in the \cbcast{} protocol:

\begin{tabbing}
ReceiveCoDonation($\fp{co\_donation}, \fp{sender}$) \= at page \pageref{ReceiveCoDonation} \kill  
ReceiveMessage($\fp{msg}, \fp{sender}$) \> at page \pageref{ReceiveMessage} \\
ReceiveAck($\fp{msg}, \fp{sender}$) \> at page \pageref{ReceiveAck} \\
ReceiveGhost($\fp{view}, \fp{sender}$) \> at page \pageref{ReceiveGhost} \\
ReceivefFlush($\fp{view}, \fp{sender}$) \> at page \pageref{ReceiveFlush} \\
ReceiveDonation($\fp{donation}, \fp{sender}$) \> at page \pageref{ReceiveDonation} \\
ReceiveCoDonation($\fp{co\_donation}, \fp{sender}$) \> at page \pageref{ReceiveCoDonation}
\end{tabbing}

In addition there are three service routines that are called from several places that deal with view installation and message delivery:

\begin{tabbing}
TryToInstall() \= at page \pageref{TryToInstall} \kill  
CheckFlush() \> at page \pageref{CheckFlush} \\
TryToInstall() \> at page \pageref{TryToInstall} \\
Scan() \> at page \pageref{ScanCall} 
\end{tabbing}

\SetKw{Append}{append}
\SetKw{Pop}{pop}
\SetKw{Let}{let}
\SetKw{Add}{add}
\SetKw{Remove}{remove}
\SetKw{Increment}{increment}
\SetKw{Decrement}{decrement}
\SetKw{Reset}{reset}
\SetKw{Queue}{queue}

\begin{procedure}[H]
\SetAlgoProcName{Interface}{Procedure}
\caption{protBroadcast($\fp{msg}$)}
\label{BroadcastMessage}
\SetAlgoNoLine
\Indp
\Indp
\KwIn{$\fp{msg}$ is the message that is being broadcast}
\BlankLine
\If {$\vgap{} > 0$}
{
  \nl\Append \fp{msg} to the tail of \lque{}\;\label{BM:LaunchQueue}
}
\Else
{
  \Increment $\mpkout{}.b$\;
  \tcp{calculate the message vector time}
  \Let $\vtime{}' = \vtime{}$\;
  \Let $\vtime{}'[\self{}] = \vtime{}[\self]  + \mpkout.b - \mpkin{}[\self].b$\;
  \tcp{fix message metadata before broadcasting}
  $\orig{\fp{msg}} \Leftarrow \self{}$\;
  $\mview{\fp{msg}} \Leftarrow \cview{}$\;
  $\mvt{\fp{msg}} \Leftarrow \vtime{}'$\;
  \nl\label{BM:Members}\Queue $\pkm{\fp{msg}}$  to \cset{}\tcp*[l]{multicast the message packets}
  \Let $\se{index} = \mpkout{}$\tcp*[l]{locates \fp{msg} in the outgoing message sequence}
  \Let $\se{iset}[] = \mpkin{}[]$\tcp*[l]{the initial instability set for \fp{msg}}
  \Add $\langle \fp{msg}, \se{index}, \se{iset}[] \rangle$ to \bwset{}\;
}
\end{procedure}

\newpage

\begin{procedure}[H]
\SetAlgoProcName{Interface}{Procedure}
\caption{protStart($\fp{roster}, \fp{pid}$)}
\label{StartCluster}
\SetAlgoNoLine
\Indp
\Indp
\KwIn{$\fp{roster}$ is the set of original members of the group (view zero members). $\fp{pid}$ is the process identifier of the local process}
\BlankLine
GroundState()\tcp*[l]{create the initial value of \uldata{}}
\Let $\cview{} = 0$\;
\Let $\vgap{} = 0$\;
\Let $\self{} = \fp{pid}$\;
\Let $\mset{} = \fp{roster}$\;
\Let $\pque{} = \emptyset$\;
\Let $\lset{} = \cset{} = \fp{roster}$\;
\Let $\vtime{} = \emptyset$\;
\Let $\rset{} = \emptyset$\;
\Let $\fque{} = \emptyset$\;
\Let $\wset{} = \emptyset$\;
\Let $\lque{} = \emptyset$\;
\Let $\sgh{} = \sfh{} = 0$\;
\Let $\gvec{}[] = \fvec{}[] = \emptyset$\;
\Let $\mpkout{} = \{ f = 0; b = 0 \}$\;
\Let $\mpkin{}[] = \emptyset$\;
\ForEach{$\fp{id} \in \fp{roster}$}
{
  create $\vtime{}[\fp{id}] = 0$\;
  create $\fque{}[\fp{id}] = \emptyset$\;
  create $\mpkin{}[\fp{id}] = \{ f = 0; b = 0 \}$\;
  create $\gvec{}[\fp{id}] = \fvec{}[\fp{id}] = 0$\;
  ApplyJoin(\fp{id})\;
}
\tcp{We launch the main \ulp{} thread asynchronously}
\tcp{It will start executing at some indeterminate point in the future}
execute Main(\self{})\;
\end{procedure}

\newpage

\begin{procedure}[H]
\SetAlgoProcName{Interface}{Procedure}
\caption{protRun(\fp{pid})}
\label{Run}
\SetAlgoNoLine
\Indp
\Indp
\KwIn{\fp{pid} is the process identifier of the new process}
\BlankLine
\Increment \vgap{}\;
\Append $\langle \operatorname{JOIN}, \fp{pid} \rangle$ to the tail of \pque{}\;
\Add \fp{pid} to \lset{}\;
\Let $\cset{} = \{ \fp{pid} \}$\;
create $\fque{}[\fp{pid}] = \emptyset$\;
\Let $\bwset{} = \emptyset$\;
\ForEach{$\langle \fp{msg}, \fp{index}, \fp{iset}[] \rangle \in \fwset{}$}
{
	\Let $\fp{index}.b = 0$\;
}
\Let $\lque{} = \emptyset$\;
\Let $\sfh{} = \sgh{}$\;
create $\gvec{}[\fp{pid}] = \sgh{}$\;
create $\fvec{}[\fp{pid}] = \sgh{}$\;
\Let $\mpkout{}.b = 0$\;
create $\mpkin{}[\fp{pid}] = \mpkout{}$\;

\Let $\self = \fp{pid}$\;
\ref{CheckFlush}()\;
\end{procedure}

\newpage

\begin{procedure}[H]
\SetAlgoProcName{Interface}{Procedure}
\caption{protRemove($\fp{rem\_proc}$)}
\label{RemovalNotification}
\SetAlgoNoLine
\Indp
\Indp
\KwIn{$\fp{rem\_proc}$ is the identifier of the removed process}
\BlankLine
\nl\Increment \vgap{}\;\label{RRN:Vgap}
\nl\Append $\langle \operatorname{REMOVE}, \fp{rem\_proc} \rangle$ to the tail of \pque{}\;\label{RRN:NewView}
\Remove \fp{rem\_proc} from \lset{}\;
\Remove \fp{rem\_proc} from \cset{}\;
\ForEach{$\langle \fp{msg}, \fp{index}, \fp{iset} \rangle \in \wset{}$}
{
	discard \fp{iset}[\fp{rem\_proc}]\;
  \If{$\fp{iset} = \emptyset$}
  {
    \Remove $\langle \fp{msg}, \fp{index}, \fp{iset} \rangle$ from \wset{}\tcp*[l]{message is stable}
  }
}
discard $\mpkin{}[\fp{rem\_proc}]$\;
\nl\While{$\fque{}[\fp{rem\_proc}] \ne \emptyset$}
{
	\Pop \fp{msg} from the head of $\fque{}[\fp{rem\_proc}]$\;
  \tcp{create an instability set that contains}
  \tcp{all the live processes}
  \Increment $\mpkout{}.f$\;
  \nl\Let $\se{index} = \mpkout{}$\;\label{RRN:Members}
  \Let $\se{iset}[] = \mpkin{}[]$\;
  \nl\label{RRN:Fwd}\Queue $\pkm{\fp{msg}}$ to \cset{}\tcp*[l]{multicast the message packets}
  \nl\Add $\langle \fp{msg}, \se{index}, \se{iset}[] \rangle$ to \fwset{}\;\label{RRN:WaitSet}
}\label{RN:DiscardFque}
discard $\fque{}[\fp{rem\_proc}]$\;
discard $\gvec{}[\fp{rem\_proc}]$\;
discard $\fvec{}[\fp{rem\_proc}]$\;
\ref{CheckFlush}()\;
\end{procedure}

\newpage

\begin{procedure}[H]
\SetAlgoProcName{Interface}{Procedure}
\caption{protJoin($\fp{jn\_proc}$, $\fp{p\_proc}$)}
\label{JoinNotification}
\SetAlgoNoLine
\Indp
\Indp
\KwIn{$\fp{jn\_proc}$ is the identifier of the joining process. $\fp{p\_proc}$ is the identifier of the parent process.}
\BlankLine
\Increment \vgap{}\;
\Append $\langle \operatorname{JOIN}, \fp{jn\_proc} \rangle$ to the tail of \pque{}\;
\Add \fp{jn\_proc} to \lset{}\;
\Add \fp{jn\_proc} to \cset{}\;
create $\fque{}[\fp{jn\_proc}] = \emptyset$\;
\ForEach{$\langle \fp{msg}, \fp{index}, \fp{iset}[] \rangle \in \wset{}$}
{
	\If{$\fp{iset}[\fp{p\_proc}]$ exists}
	{
  	\nl{create} $\fp{iset}[\fp{jn\_proc}] = \left\{ f = \fp{iset}[\fp{p\_proc}].f ; b = 0 \right\}$;\label{JN:NewInst}
  }
}
create $\gvec{}[\fp{jn\_proc}] = \gvec{}[\fp{p\_proc}]$\;
create $\fvec{}[\fp{jn\_proc}] = \gvec{}[\fp{p\_proc}]$\tcp*[l]{The received flush value of the new process is inherited from the received ghost height of the parent}
create $\mpkin{}[\fp{jn\_proc}] = \left\{ f = \mpkin[\fp{p\_proc}].f ; b = 0 \right\}$\;
\Let $\se{donation} = \langle \wset{}, \mpkin{}[], \sgh{}, \sfh{} \rangle$\;
\Queue $\pkd{\se{donation}}$ to \fp{jn\_proc}\;
\ref{CheckFlush}()\;
\end{procedure}

\newpage

\begin{procedure}[H]
\SetAlgoProcName{Interface}{Procedure}
\caption{protPacket($k$, $\fp{sender}$)}
\label{protPacket}
\SetAlgoNoLine
\Indp
\Indp
\KwIn{$k$ is the packet being received. \fp{sender} is the process identifier of the sender of the packet}
\BlankLine
\Switch{$\cont(k)$}
{
	\Case{$\pkm{\fp{msg}}$:}
	{
		\ref{ReceiveMessage}($\fp{msg}, \fp{sender}$)\;
	}
	\Case{$\pks{\fp{msg}}$:}
	{
		\ref{ReceiveAck}($\fp{msg}, \fp{sender}$)\;
	}
	\Case{$\pkh{\fp{view}}$:}
	{
		\ref{ReceiveGhost}($\fp{view}, \fp{sender}$)\;
	}
	\Case{$\pkf{\fp{view}}$:}
	{
		\ref{ReceiveFlush}($\fp{view}, \fp{sender}$)\;
	}
	\Case{$\pkd{\fp{donation}}$:}
	{
		\ref{ReceiveDonation}($\fp{donation}, \fp{sender}$)\;
	}
	\Case{$\pkc{\fp{co\_donation}}$:}
	{
		\ref{ReceiveCoDonation}($\fp{co\_donation}, \fp{sender}$)\;
	}
}
\end{procedure}

\newpage

\begin{procedure}[H]
\caption{ReceiveMessage($\fp{msg}, \fp{sender}$)}
\label{ReceiveMessage}
\SetAlgoNoLine
\Indp
\Indp
\KwIn{$\fp{msg}$ is the message being received. \fp{sender} is the process identifier of the sender of the message packet}
\BlankLine
\Queue$\pks{\fp{msg}}$ to \fp{sender}\tcp*[l]{acknowledge receipt of the message}
\If{$\orig{\fp{msg}} = \fp{sender}$}
{
	\Increment $\mpkin{}[\fp{sender}].b$\;
}
\Else
{
	\Increment $\mpkin{}[\fp{sender}].f$\;
}
\nl\tcp{Check for duplicates:}\label{RMP:NoDuplicates}
\If{$\mview{\fp{msg}} < \cview{}$}
{
 	\nl\label{RMP:DiscardObsolete}discard $\pkm{\fp{msg}}$\tcp*[l]{obsolete messages are duplicates (Lemma \ref{UniqueEffectiveLem})}
}
\ElseIf{$\mview{\fp{msg}} = \cview{}$ and $\vtime{}[\orig{\fp{msg}}] \ge \mvt{\fp{msg}}[\orig{\fp{msg}}]$}
{
  discard $\pkm{\fp{msg}}$\tcp*[l]{duplicate - message already delivered}
}
\ElseIf{$\fp{msg} \in \rset{}$}
{
  discard $\pkm{\fp{msg}}$\tcp*[l]{duplicate - message already received}
}
\Else
{
  \nl \Add \fp{msg} to \rset{}\;\label{RMP:ReceiveEntry}
  \nl \Append \fp{msg} to the tail of \fque{}[\fp{sender}]\;\label{RMP:Sender}
  \nl\ref{ScanCall}()\tcp*[l]{scan \rset{} and deliver all the deliverable messages}\label{RMP:Scan}
}
\end{procedure}

\begin{procedure}[H]
\caption{ReceiveAck($\fp{msg}, \fp{sender}$)}
\label{ReceiveAck}
\SetAlgoNoLine
\Indp
\Indp
\KwIn{$\fp{msg}$ is the message being acknowledged. \fp{sender} is the process identifier of the sender of the acknowledgement packet}
\BlankLine
\tcp{the following if statement will always succeed}
\If{$\langle \fp{msg}, \fp{index}, \fp{iset} \rangle \in \wset{}$ exists}
{
  discard \fp{iset}[\fp{sender}]\;
  \If{$\fp{iset} = \emptyset$}
  {
    \Remove $\langle \fp{msg}, \fp{index}, \fp{iset} \rangle$ from \wset{}\tcp*[l]{message is stable}
    \ref{CheckFlush}()\;
  }
}
\end{procedure}

\newpage

\begin{procedure}[H]
\caption{ReceiveGhost($\fp{view}, \fp{sender}$)}
\label{ReceiveGhost}
\SetAlgoNoLine
\Indp
\Indp
\KwIn{$\fp{view}$ is the ghost height of the sender. \fp{sender} is the process identifier of the sender of the packet}
\BlankLine
\Let $\gvec{}[\fp{sender}] = \fp{view}$\tcp*[l]{\gvec{}[\fp{sender}] always increases}
\end{procedure}

\begin{procedure}[H]
\caption{ReceiveFlush($\fp{view}, \fp{sender}$)}
\label{ReceiveFlush}
\SetAlgoNoLine
\Indp
\Indp
\KwIn{$\fp{view}$ is the flush height of the sender. \fp{sender} is the process identifier of the sender of the packet}
\BlankLine
\Let $\fvec{}[\fp{sender}] = \fp{view}$\tcp*[l]{\fvec{}[\fp{sender}] always increases}
\ref{TryToInstall}()\;
\end{procedure}

\newpage

\begin{procedure}[H]
\caption{ReceiveDonation($\fp{donation}, \fp{sender}$)}
\label{ReceiveDonation}
\SetAlgoNoLine
\Indp
\Indp
\KwIn{$\fp{donation}$ is the donation being received. \fp{sender} is the process identifier of the sender of the donation packet}
\BlankLine
\Add \fp{sender} to \cset{}\;
\Let $\se{co\_donation} = \langle \wset{}, \mpkin{}[], \sgh{}, \sfh{} \rangle$\;
\Queue $\pkc{\se{co\_donation}}$ to \fp{sender}\tcp*[l]{Co-donate local state to the sender}
\tcp{Process, in order, all the untimely packets}

\Let $\se{UNT}_g = \left\{ \langle \fp{msg}, \fp{index}, \fp{iset}[] \rangle \in \wset{} \,|\, \fp{iset}[\fp{sender}] \text{ exists} \right\}$\;
Define $\heighta(\langle \fp{msg}, \fp{index}, \fp{iset}[] \rangle \in \se{UNT}_g) = \fp{index}.b + \fp{index}.f$\;
Define $\heightb(\langle \fp{msg}, \fp{index}, \fp{iset}[] \rangle \in \se{UNT}_g) = 0$\;
\Let $\se{UNT}_p = \left\{ \langle \fp{msg}, \fp{index}, \fp{iset}[] \rangle \in \fp{donation}.\wset{} \,|\, \fp{iset}[\self] \text{ exists} \right\}$\;
Define $\heighta(\langle \fp{msg}, \fp{index}, \fp{iset}[] \rangle \in \se{UNT}_p) = \fp{iset}[\self].b + \fp{iset}[\self].f$\;
Define $\heightb(\langle \fp{msg}, \fp{index}, \fp{iset}[] \rangle \in \se{UNT}_p) = \fp{index}.b + \fp{index}.f$\;
\Let $\se{UNT} = \se{UNT}_g \,\bigcup\, \se{UNT}_p$\;
sort $\se{UNT}$ using the lexicographical order $(\heighta, \heightb)$\;
\tcp{we process the elements of $\se{UNT}$ in order}
\ForEach{$\langle \fp{msg}, \fp{index}, \fp{iset}[] \rangle \in \se{UNT}$}
{
  \If{$\langle \fp{msg}, \fp{index}, \fp{iset}[] \rangle \in \se{UNT}_p$}
  {
    \If{$\fp{index}.b + \fp{index}.f > \mpkin[\fp{sender}].b + \mpkin[\fp{sender}].f$}
    {
      \tcp{we found an untimely message packet from the sender to the parent, and we process its clone now}
      \nl\ref{ReceiveMessage}($\fp{msg}, \fp{sender}$)\;\label{RD:STMessage}
    }
  }
  \If{$\langle \fp{msg}, \fp{index}, \fp{iset}[] \rangle \in \se{UNT}_g$}
  {
    \If{$\fp{index}.b + \fp{index}.f \le \fp{donation}.\mpkin[\self].b + \fp{donation}.\mpkin[\self].f$}
    {
      \tcp{we found a message packet from the parent whose acknowledgement packet was untimely, so we process its clone now}
      \nl\ref{ReceiveAck}($\fp{msg}, \fp{sender}$)\;\label{RD:STAck}
    }
  }
}
\Let $\gvec{}[\fp{sender}] = \fp{donation}.\sgh{}$\;
\Let $\fvec{}[\fp{sender}] = \fp{donation}.\sfh{}$\;
\end{procedure}

\newpage

\begin{procedure}[H]
\caption{ReceiveCoDonation($\fp{co\_donation}, \fp{sender}$)}
\label{ReceiveCoDonation}
\SetAlgoNoLine
\Indp
\Indp
\KwIn{$\fp{co\_donation}$ is the co-donation being received. \fp{sender} is the process identifier of the sender of the co-donation packet}
\BlankLine
\tcp{Process, in order, all the untimely and post-critical packets}

\Let $\se{UNT}_g = \left\{ \langle \fp{msg}, \fp{index}, \fp{iset}[] \rangle \in \fp{co\_donation}.\wset{} \,|\, \fp{iset}[\self] \text{ exists} \right\}$\;
Define $\heighta(\langle \fp{msg}, \fp{index}, \fp{iset}[] \rangle \in \se{UNT}_g) = \fp{iset}[\self].b + \fp{iset}[\self].f$\;
Define $\heightb(\langle \fp{msg}, \fp{index}, \fp{iset}[] \rangle \in \se{UNT}_g) = \fp{index}.b + \fp{index}.f$\;
\Let $\se{UNT}_p = \left\{ \langle \fp{msg}, \fp{index}, \fp{iset}[] \rangle \in \wset{} \,|\, \fp{iset}[\fp{sender}] \text{ exists} \right\}$\;
Define $\heighta(\langle \fp{msg}, \fp{index}, \fp{iset}[] \rangle \in \se{UNT}_p) = \fp{index}.b + \fp{index}.f$\;
Define $\heightb(\langle \fp{msg}, \fp{index}, \fp{iset}[] \rangle \in \se{UNT}_p) = 0$\;
\Let $\se{UNT} = \se{UNT}_g \,\bigcup\, \se{UNT}_p$\;
sort $\se{UNT}$ using the lexicographical order $(\heighta, \heightb)$\;
\tcp{we process the elements of $\se{UNT}$ in order}
\ForEach{$\langle \fp{msg}, \fp{index}, \fp{iset}[] \rangle \in \se{UNT}$}
{
  \If{$\langle \fp{msg}, \fp{index}, \fp{iset}[] \rangle \in \se{UNT}_g$}
  {
    \If{$\fp{index}.b + \fp{index}.f > \mpkin[\fp{sender}].b + \mpkin[\fp{sender}].f$}
    {
      \tcp{we found one of two things here:}
      \tcp{either an untimely forwarded message packet from the parent that we process now as a message from the sender}
      \tcp{or a post-critical, pre-donation forwarded message packet from the sender that we process now}
      \nl\ref{ReceiveMessage}($\fp{msg}, \fp{sender}$)\;\label{RCD:STMessage}
    }
  }
  \If{$\langle \fp{msg}, \fp{index}, \fp{iset}[] \rangle \in \se{UNT}_p$}
  {
    \If{$\fp{index}.b + \fp{index}.f \le \fp{co\_donation}.\mpkin[\self].b + \fp{co\_donation}.\mpkin[\self].f$}
    {
      \tcp{we found a timely message packet from us to the parent whose acknowledgement was untimely, so we process its clone now}
      \nl\ref{ReceiveAck}($\fp{msg}, \fp{sender}$)\;\label{RCD:STAck}
    }
  }
}
\Let $\gvec{}[\fp{sender}] = \fp{co\_donation}.\sgh{}$\;
\Let $\fvec{}[\fp{sender}] = \fp{co\_donation}.\sfh{}$\;
\nl\ref{TryToInstall}()\;\label{RCD:TryToInstall}
\end{procedure}

\newpage

\begin{procedure}[H]
\caption{CheckFlush()}
\label{CheckFlush}
\SetAlgoNoLine
\Indp
\Indp
\BlankLine
\If{$\fwset{} \ne \emptyset$}
{
    \Return{}\tcp*[l]{there are unstable forwarded messages, do not send any ghosts or flushes}
}
\If{$\sgh{} < \cview{}+\vgap{}$}
{
  \Let $\sgh{} = \cview{} + \vgap{}$\;
  \nl\label{CF:SendGhost}\Queue $\pkh{\sgh{}}$ to \cset{}\tcp*[l]{multicast the ghost packets}
}
\If{$\bwset{} \ne \emptyset$}
{
    \Return\tcp*[l]{there are unstable original messages, do not send any flushes}
}
\If{$\sfh{} < \cview{}+\vgap{}$}
{
  \Let $\sfh{} = \cview{} + \vgap{}$\;
  \nl\label{CF:SendFlush}\Queue $\pkf{\sfh{}}$ to \cset{}\tcp*[l]{multicast the flush packets}
}
\end{procedure}

\newpage

\begin{procedure}[H]
\caption{TryToInstall()}
\label{TryToInstall}
\SetAlgoNoLine
\Indp
\Indp
\BlankLine
\tcp{check whether all the members are fully flushed}
\nl\ForEach{$\fp{pid} \in \lset{}$}
{
  \If{$\fvec{}[\fp{pid}] < \cview{}+\vgap{}$}
  {
    \Return\tcp*[l]{some members are not flushed - wait}
  }
}\label{TTI:CheckFlush}
\While(\qquad\tcp*[h]{this loop installs all the pending views}){$\vgap > 0$}
{
  \nl\label{TTI:RemoveRset}\tcp{remove obsolete messages from \rset{}}
  \ForEach{$\fp{msg} \in \rset{}$}
  {
    \If{$\mview{\fp{msg}} = \cview{}$}
    {
      \Remove \fp{msg} from \rset{}\;
    }
  }
  \nl\label{TTI:RemoveFset}\tcp{remove obsolete messages from \fque{}}
  \ForEach{$\fp{pid} \in \lset{}$}
  {
    \ForEach{$\fp{msg} \in \fque{}[\fp{pid}]$}
    {
      \If{$\mview{\fp{msg}} = \cview{}$}
      {
        \Remove \fp{msg} from $\fque{}[\fp{pid}]$\;
      }
    }
  }
  \Increment \cview{}\;
  \Decrement \vgap{}\;
  \Pop \fp{notification} from the head of \pque{}\;
  \If{$\fp{notification} = \langle \operatorname{JOIN}, \fp{pid} \rangle$}
  {
    \Add \fp{pid} to \mset{}\;
    ApplyJoin(\fp{pid})\tcp*[l]{deliver notification to \ulp{}}
    \If{\fp{pid} = \self{}}
    {
      execute Main(\self{})\tcp*[l]{launch the main \ulp{} thread asynchronously}
    }
  }
  \ElseIf{$\fp{notification} = \langle \operatorname{REMOVE}, \fp{pid} \rangle$}
  {
    \Remove \fp{pid} from \mset{}\;
    ApplyRemoval(\fp{pid})\tcp*[l]{deliver notification to \ulp{}}
  }
  \nl\Reset \vtime{}\tcp*[l]{New view now installed. \vtime{} coordinates reflect new membership}\label{TTI:Installation}
  \ref{ScanCall}()\tcp*[l]{high view messages may now become deliverable}
}
\tcp{$\vgap{} = 0$ - time to broadcast all pending messages}
\While{$\lque{} \ne []$}
{
  \Pop \fp{msg} from the head of \lque{}\;
  \nl\ref{BroadcastMessage}(\fp{msg})\;\label{TTI:NoGapLaunch}
}
\end{procedure}

\newpage

\begin{procedure}[H]
\caption{Scan()}
\label{ScanCall}
\SetAlgoNoLine
\Indp
\Indp
\BlankLine
\tcp{look for deliverable messages. A message is deliverable if all the following are true:}
\tcp{1. It is a current-view message}
\tcp{2. It is the next expected message from its originator}
\tcp{3. All the messages on which it depends have been delivered already}
\Let $\lv{deliverable\_messages\_found} = \lv{false}$\;
\ForEach{$\fp{msg} \in \rset{}$}
{
  \If{$\mview{\fp{msg}} = \cview{}$}
  {
    \tcp{\fp{msg} is a current-view message}
    \If{$\mvt{\fp{msg}}[\orig{\fp{msg}}] = \vtime{}[\orig{\fp{msg}}]+1$}
    {
      \tcp{\fp{msg} is the next expected message from its originator}
      \Let $\lv{all\_dependents\_delivered} = \lv{true}$\;
      \ForEach{$\fp{pid} \in \mset{}$ and $\fp{pid} \ne \orig{\fp{msg}}$}
      {
        \If{$\mvt{\fp{msg}}[\fp{pid}] > \vtime{}[\fp{pid}]$}
        {
          \Let $\lv{all\_dependents\_delivered} = \lv{false}$\;
        }
      }
      \If{$\lv{all\_dependents\_delivered} = \lv{true}$}
      {
        \tcp{\fp{msg} is deliverable}
        \Let $\lv{deliverable\_messages\_found} = \lv{true}$\;
        \Increment \vtime{}[$\orig{\fp{msg}}$]\;
        \Remove \fp{msg} from \rset{}\;
        \Let $\lv{originator} = \orig{\fp{msg}}$\;
        strip out metadata stamps $\mview{\fp{msg}}$, $\mvt{\fp{msg}}$ and $\orig{\fp{msg}}$\;
        \nl ApplyMessage(\fp{msg}, \lv{originator})\tcp*[l]{deliver message to \ulp{}}\label{SC:Delivery}        
      }
    }
  }
}
\If{$\lv{deliverable\_messages\_found} = \lv{true}$}
{
  \ref{ScanCall}()\tcp*[l]{try to see if more messages can now be delivered}
}
\end{procedure}

\newpage

\section{Basic Properties Of The \cbcast{} Algorithm}
In subsequent sections we will analyze the \cbcast{} protocol in depth. Right now we want to highlight some of its important basic properties.
\subsection{Some \cbcast{} invariants}

\begin{defn} \hfill
\label{TransDef}
\begin{itemize}
\item Let $T$ be a transaction. The trigger of $T$ is denoted $\trig(T)$
\item Let $e \in \eventset_P$. Then $e$ belongs to a unique transaction $T$. We denote $T = \trans(e)$ and use $\trig(e)$ as shorthand for $\trig(\trans(e))$.
\item Let $T$ be a transaction. The {\bf view} of $T$ is $\view(\trig(T))$ and denoted by $\view(T)$. Since the side effects of $T$ cannot contain notification events, all the events in $T$ share the same view $\view(T)$.
\end{itemize}
\end{defn}

\begin{defn}
\label{PrePostDef}
Let $P$ be any process in a group that executes the \cbcast{} protocol. Let $\fp{var}$ be any state variable (see \ref{ProcessState}) and let $e \in \eventset_P$ be any event other than the join event of $P$, in other words $e \ne \pnotifyevent{j(P)}{P}$.

If $e$ is a trigger event we use the notation $\valuepre{\fp{var}}{P}{e}$ to denote the value of the variable \fp{var} at process $P$ at the onset of the transaction $\trans(e)$. We use the notation $\valuepost{\fp{var}}{P}{e}$ to denote the value of the variable \fp{var} at process $P$ at the conclusion of the transaction $\trans(e)$.

If $e$ is a queuing event we use the notation $\valuepre{\fp{var}}{P}{e}$ to denote the value of the variable \fp{var} at process $P$ at the moment when the queuing event occurs. Since queuing events do not change state variables, there is no distinction here between the pre and post values.

When $e = \pnotifyevent{j(P)}{P}$, the processing of $e$ causes the execution of the \ref{StartCluster} procedure or the \ref{Run} procedure, according as $P$ is an original process (a member of view zero) or a late joining process. We use the same definition of $\valuepost{\fp{var}}{P}{e}$ that we use for any other trigger. However for an original process we define
$$
\valuepre{\fp{var}}{P}{e} = \valuepost{\fp{var}}{P}{e}
$$
and for a late joining process we define $\valuepre{\fp{var}}{P}{e}$ to be the value of $\fp{var}$ right before the invocation of the \ref{CheckFlush} procedure at the end of the \ref{Run} procedure.

This last part of the definition is admittedly not elegant. However it does have some intuitive justification in the sense that the endpoint of \ref{StartCluster} and the pre-\ref{CheckFlush} point in the \ref{Run} procedure are the first points in the life of a process where it is fully initialized as a \cbcast{} process.
\end{defn}

\begin{defn}
\label{UncontactedDef}
Let $P$ be a process and let $e \in \eventset_P$ be any trigger event. Let $\uncont_e$ be the set of processes that had not yet contacted $P$ at the time that $e$ occurred. These are processes that joined the group before $P$ did, but for which $P$ has not yet processed a donation packet. Formally
$$
\uncont_e = \lbrace Q \in \processset \,\Vert\, j(Q) < j(P) \text{ and if } d = \pkd{} \in \channel{Q}{P} \text{ then either } \preceiveevent{d} \succ e \text{ or } \preceiveevent{d} \text{ does not exist} \rbrace
$$
In particular if $P$ is a member of view zero then $\uncont_e = \emptyset$.

The set $\uncont_e$ is called the {\bf uncontacted set} of $e$.

\end{defn}

\begin{lem}[\cbcast{} Omnibus Lemma] \hfill \\
\label{OmnibusCBCASTLem}
Let $P$ be any process and let $e$ be any trigger event in $P$. Then the following relations hold at $P$:
\begin{enumerate}
\item $\valuepost{(\cview{} + \vgap{})}{P}{e} = \view(e)$
\label{OCL:height}
\item
\begin{align*}
\valuepost{\lset}{P}{e} & = \lbrace Q \in \processset \,\Vert\, j(Q) \le \valuepost{(\cview{} + \vgap{})}{P}{e} < r(Q) \rbrace \\
\valuepost{\cset{}}{P}{e} & = \valuepost{\lset}{P}{e} \setminus \uncont_e
\end{align*}
\label{OCL:lset}
\item The entries in the vectors $\valuepost{\fque{}[]}{P}{e}$, $\valuepost{\mpkin{}[]}{P}{e}$, $\valuepost{\gvec{}[]}{P}{e}$ and $\valuepost{\fvec{}[]}{P}{e}$ correspond exactly to the members of $\valuepost{\lset{}}{P}{e}$.
\label{OCL:lsetvec}
\item The entries in the vector $\valuepost{\vtime{}[]}{P}{e}$ correspond exactly to the members of $\valuepost{\mset{}}{P}{e}$.
\label{OCL:msetvec}
\item For any $X \in \valuepost{\lset{}}{P}{e}$
$$
\valuepost{\fvec{}[X]}{P}{e} \le \valuepost{\gvec{}[X]}{P}{e} \le \valuepost{(\cview{} + \vgap{})}{P}{e}
$$
If $X \notin \valuepost{\cset{}}{P}{e}$ then the right inequality is strict. If $\valuepost{\vgap{}}{P}{e} = 0$ then the inequalities are actually equalities.
\label{OCL:gf}
\item For any  $X \in \valuepre{\lset{}}{P}{e} \,\bigcap\, \valuepost{\lset{}}{P}{e}$
\begin{align*}
\valuepre{\gvec{}[X]}{P}{e} & \le \valuepost{\gvec{}[X]}{P}{e}  \\
\valuepre{\fvec{}[X]}{P}{e} & \le \valuepost{\fvec{}[X]}{P}{e}
\end{align*}
in other words the values of $\gvec{}[X]$ and $\fvec{}[X]$ are non-decreasing.
\label{OCL:gfmon}
\item If $e = \pnotifyevent{i}{P}$ and $X \in \valuepost{\lset{}}{P}{e}$ then
\begin{align*}
\valuepost{\gvec{}[X]}{P}{\pnotifyevent{i}{P}} & \le \valuepre{\sgh{}}{X}{\pnotifyevent{i}{X}}  \\
\valuepost{\fvec{}[X]}{P}{\pnotifyevent{i}{P}} & \le \valuepre{\sfh{}}{X}{\pnotifyevent{i}{X}} 
\end{align*}
\label{OCL:gfxmitrcv}
\item $\valuepost{\fvec{}[P]}{P}{e} \le \valuepost{\sfh{}}{P}{e} \le \valuepost{\sgh{}}{P}{e} \le \valuepost{(\cview{} + \vgap{})}{P}{e}$
\label{OCL:sendgfA} 
\item If $\valuepost{\vgap{}}{P}{e} > 0$ then
\begin{align*}
\valuepost{\sgh{}}{P}{e} & = \valuepost{(\cview{} + \vgap{})}{P}{e} \text{  if and only if  } \valuepost{\fwset{}}{P}{e} = \emptyset \\
\valuepost{\sfh{}}{P}{e} & = \valuepost{(\cview{} + \vgap{})}{P}{e} \text{  if and only if  } \valuepost{\wset{}}{P}{e} = \emptyset
\end{align*}
\label{OCL:sendgfB}
\item If $\valuepost{\vgap{}}{P}{e} = 0$ then $\valuepost{\lque{}}{P}{e} = \emptyset$.
\label{OCL:lque}
\end{enumerate}
\end{lem}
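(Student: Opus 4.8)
The plan is to prove all eleven clauses \emph{simultaneously} by induction on the partial order $\prec$, which is very well founded by Corollary~\ref{WellFoundedCor}. Fix a trigger event $e$ at a process $P$ and assume that every clause holds for all trigger events $f \prec e$ at every process (in particular at $P$ itself, at the sender of any packet that $e$ dequeues, and at any process some of whose notification events precede $e$). The induction closes because of Definition~\ref{PrePostDef}: the pre-values $\valuepre{\fp{var}}{P}{e}$ are, by construction, the post-values of the transaction immediately preceding $\trans(e)$ at $P$ — or of the join transaction in the boundary cases spelled out there — and that transaction's trigger strictly precedes $e$, so the inductive hypothesis applies to it. The base cases are the two join triggers: the execution of \ref{StartCluster} at an original process, where $\cview{}=\vgap{}=0=\view(e)$ and every vector is indexed exactly by the view-zero roster; and the execution of \ref{Run} at a late joiner, where the state inherited at the fork is the parent's pre-state for its processing of the join notification of $P$, so every clause about that inherited state follows from the inductive hypothesis applied to the parent (the two events being $\asymp$-comparable by the \AxNotEventIII{}).

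For the inductive step I would run through the possible triggers — a dequeued \gms{} notification (routed through \ref{RemovalNotification}, \ref{JoinNotification}, or \ref{Run}), a dequeued packet (routed through \ref{protPacket} to one of \ref{ReceiveMessage}, \ref{ReceiveAck}, \ref{ReceiveGhost}, \ref{ReceiveFlush}, \ref{ReceiveDonation}, \ref{ReceiveCoDonation}), or a dequeued \ulp{} request (\ref{BroadcastMessage}) — and for each examine the callback together with the service routines it invokes (\ref{CheckFlush}, \ref{TryToInstall}, \ref{ScanCall}). Clauses \ref{OCL:height}, \ref{OCL:lsetvec}, \ref{OCL:msetvec}, \ref{OCL:lque} and the $\cset{}=\lset{}\setminus\uncont_e$ part of \ref{OCL:lset} are then pure bookkeeping: $\cview{}+\vgap{}$ is incremented exactly once by each of \ref{RemovalNotification}, \ref{JoinNotification}, \ref{Run} and is conserved by \ref{TryToInstall}, matching the stratification of events by view (Definition~\ref{EventViewDefn}); the four vectors gain or lose an index precisely when $\lset{}$ does, and $\vtime{}[]$ precisely when $\mset{}$ does; and \ref{BroadcastMessage} only appends to $\lque{}$ when $\vgap{}>0$ while \ref{TryToInstall} drains $\lque{}$ on reaching $\vgap{}=0$. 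The first equation of \ref{OCL:lset} follows from the \AxViewI{} and the \AxViewII{}: each pending change in $\pque{}$ adds the unique joiner or removes the unique removee of the corresponding view, so the post-value of $\lset{}$ is exactly the set of processes alive at the post-height; $\cset{}$ then tracks $\uncont_e$ because \ref{Run} initializes $\cset{}=\{\self{}\}$, \ref{JoinNotification} adds every later joiner, \ref{ReceiveDonation} adds a donor the moment its donation is processed, and \ref{RemovalNotification} deletes the removee.

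The substantive clauses are \ref{OCL:gf}, \ref{OCL:gfmon}, \ref{OCL:sendgfA} and \ref{OCL:sendgfB}. For \ref{OCL:sendgfB} I would verify that the gating logic of \ref{CheckFlush} — return before touching $\sgh{}$ when $\fwset{}\ne\emptyset$, return before touching $\sfh{}$ when $\bwset{}\ne\emptyset$, otherwise raise each to $\cview{}+\vgap{}$ — is invoked after every operation that can alter $\wset{}$, $\fwset{}$, $\bwset{}$ or $\vgap{}$ (note that \ref{ReceiveAck} calls \ref{CheckFlush} exactly when it empties an instability set, and \ref{RemovalNotification}, \ref{JoinNotification}, \ref{Run} call it at the end), and that $\sgh{}$ and $\sfh{}$ change nowhere else; the claimed equivalences are then preserved step by step. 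Clause \ref{OCL:sendgfA} combines this with the fact that \ref{CheckFlush} queues the ghost before the flush, so $\sfh{}\le\sgh{}$, and that $\fvec{}[P]$ is updated only by flush, donation or co-donation packets from $P$ to itself, which the \AxProcIV{} forces to be processed before the next notification, keeping $\fvec{}[P]\le\sfh{}$. Clause \ref{OCL:gfmon} uses that all packets that update $\gvec{}[X]$ or $\fvec{}[X]$ travel on the single FIFO channel from $X$ to $P$ and carry values of $X$'s $\sgh{}$ or $\sfh{}$, which by \ref{OCL:sendgfA} and the structure of \ref{CheckFlush} are non-decreasing in the order the packets were queued, so by the \AxPackEventIII{} they are delivered in non-decreasing order (the inheritance steps in \ref{JoinNotification} and \ref{Run} are immediate). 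Clause \ref{OCL:gf} then follows by combining \ref{OCL:gfmon}, \ref{OCL:sendgfB} and a short \AxProcIII{} argument showing any ghost or flush value received before the current view is strictly below $\cview{}+\vgap{}$ unless it came from a process still in $\cset{}$.

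The main obstacle is clause \ref{OCL:gfxmitrcv}, the cross-process bound $\valuepost{\gvec{}[X]}{P}{\pnotifyevent{i}{P}}\le\valuepre{\sgh{}}{X}{\pnotifyevent{i}{X}}$ and its $\fvec{}/\sfh{}$ analogue, because it links the state of $P$ at one notification event to the state of a \emph{different} process $X$ at its notification event for the \emph{same} view $i$. I would enumerate every way $\gvec{}[X]$ acquires its value at $P$: a $\pkh{v}$ or $\pkf{v}$ packet on the channel from $X$ to $P$; a $\pkd{}$ from $X$ (when $P$ is the child) or a $\pkc{}$ from $X$; or the inheritance $\gvec{}[X]=\gvec{}[E]$ in \ref{JoinNotification} or \ref{Run} when $X$ is a later joiner with parent $E$. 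In the packet cases one observes, using \ref{OCL:sendgfA} and the structure of \ref{CheckFlush}, that the value $v$ carried equals $\view(\psendevent{k})$; the \AxProcIII{} applied to view $i$ then forces $\psendevent{k}$ to occur at $X$ strictly before $X$ processes notification $i$ — otherwise $P$ would have had to process notification $i$ before receiving $k$, contradicting $\preceiveevent{k}\prec\pnotifyevent{i}{P}$ — and the monotonicity of $\sgh{}$ at $X$ (a consequence of \ref{OCL:sendgfA}) gives $v\le\valuepre{\sgh{}}{X}{\pnotifyevent{i}{X}}$. In the donation and inheritance cases the bound is pushed back to the inductive hypothesis applied to the parent's join-handling transaction, together with \ref{OCL:gfmon}. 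A delicate point worth flagging is that this clause is only meaningful when $\pnotifyevent{i}{X}$ exists; I would either restrict the statement to that case or read off the missing quantity from the \gms{} notification when $X$ halts before view $i$.
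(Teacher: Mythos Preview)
Your plan matches the paper's: simultaneous induction over all clauses, case-split by trigger type, and a mechanical walk through each callback together with \ref{CheckFlush}, \ref{TryToInstall}, \ref{ScanCall}. You also correctly single out clause~\ref{OCL:gfxmitrcv} as the delicate cross-process bound. Two points, however, need adjustment.

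\textbf{Induction order.} The paper does \emph{not} induct on $\prec$ alone: it assumes the lemma for every $f$ with $f\prec e$ \emph{or} $\view(f)<\view(e)$ (this is well-founded by Corollary~\ref{WellFoundedCor} and Lemma~\ref{KOrderLem}). The extra disjunct is used in the \ref{Run} case. When $e=\joinevent{J}$ and $e_E=\pnotifyevent{j(J)}{E}$ are $\asymp$-related, neither strictly $\prec$-precedes the other, so pure $\prec$-induction does not hand you the lemma at $e_E$. Your remark that the inherited state is the parent's \emph{pre}-state is fine (that pre-state is the post-state of an event strictly $\prec e$), but the paper's argument for clause~\ref{OCL:gfxmitrcv} at $J$ explicitly invokes the already-established clause~\ref{OCL:gfxmitrcv} at $e_E$ --- i.e., the parent's \emph{post}-state --- writing ``since we have already proved all the claims for $E$ (the \ref{JoinNotification} case)''. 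That works only because the paper processes \ref{JoinNotification} at $E$ before \ref{Run} at $J$ within the same inductive step, which is exactly what the stronger order licenses.

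\textbf{Monotonicity of \sgh{} and \sfh{}.} You say this follows from clause~\ref{OCL:sendgfA}, but \ref{OCL:sendgfA} only gives the upper bound $\sgh{}\le\cview{}+\vgap{}$; it does not imply monotonicity. The paper obtains monotonicity by direct code inspection: \sgh{} and \sfh{} are modified only inside \ref{CheckFlush}, where each is raised to $\cview{}+\vgap{}$ when currently lower and never decreased. This matters because in your clause~\ref{OCL:gfxmitrcv} argument you need $\sgh{}$ monotone at $X$ across events that are not $\prec e$ (they are at $X$, at view $i-1$, but incomparable with $\pnotifyevent{i}{P}$), so you cannot get it from the inductive hypothesis at all --- it must be a standalone code fact. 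Once you make this correction, your packet-case argument goes through exactly as the paper's does (using the \AxProcIII{} to place $\psendevent{k}$ before $\pnotifyevent{i}{X}$).

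Two small slips: there are ten clauses, not eleven; and ``the value $v$ carried equals $\view(\psendevent{k})$'' is only true for the ghost/flush packets emitted by \ref{CheckFlush} --- donation and co-donation packets carry bare $\sgh{}$, $\sfh{}$, which may be strictly below the current height, and the paper treats those separately.
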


\begin{proof}
The proof proceeds by induction on $e$, where we assume by induction that the lemma holds for $f$ if either $f \prec e$ or if $\view(f) < \view(e)$. This is possible thanks to Corollary \ref{WellFoundedCor} and Lemma \ref{KOrderLem}.

We have three types of triggers to consider: message broadcast request events, notification events and packet dequeuing events. In the latter case we will use the following notation throughout. $k$ is the packet that is being dequeued (so $e = \preceiveevent{k}$) and $X$ is the process that queued the packet $k$.

We start by picking off the easy cases. We show that if $e$ is a packet dequeuing event or a message broadcast request event then claims (\ref{OCL:height}), (\ref{OCL:lset}) and (\ref{OCL:lsetvec}) all hold. claim (\ref{OCL:gfxmitrcv}) holds vacuously for $e$ because it only pertains to notification events.

For the remaining cases, since every trigger event causes some \cbcast{} procedure to be executed, we simply go over each procedure and show that if the inductive hypothesis is assumed then the lemma holds at the end of the execution of the procedure.

To prove claim (\ref{OCL:height}) when $e$ is a packet dequeuing event we need two facts. First, the proof of Corollary \ref{PartCor} demonstrates that $\view(e) = \view(e')$, where $e' \prec e$ is the immediate predecessor of $e$ in $\eventset_P$. Second, a lengthy but routine inspection of the pseudo-code shows that the \ref{protPacket} procedure does not change the sum $\cview{} + \vgap{}$ (the \ref{TryToInstall} utility procedure increments $\cview{}$ and decrements $\vgap{}$ zero or more times, but does not change their sum). These facts taken together with the inductive hypothesis give
$$
\view(e) = \view(e') = \valuepost{(\cview{} + \vgap{})}{P}{e'} = \valuepre{(\cview{} + \vgap{})}{P}{e} = \valuepost{(\cview{} + \vgap{})}{P}{e}
$$

The exact same argument holds when $e$ is a message broadcast request event (with \ref{BroadcastMessage} replacing \ref{protPacket}).

To prove the first part of claim (\ref{OCL:lset}) when $e$ is a packet dequeuing event or a message broadcast request event, notice that neither \ref{protPacket} nor \ref{BroadcastMessage} change the value of \lset{}, and as we already saw these procedures do not change the value of $\cview{} + \vgap{}$ either. As a result this part of the claim follows by induction.

An immediate corollary is that if $e$ is a packet dequeuing event then $X \in \valuepost{\lset}{P}{e}$. This is because the definition of $\view(e)$ and the \CAxPacket{} imply that $j(X) \le \view(e) < \vdeath{X}$. It follows from claim (\ref{OCL:height}) and the first part of claim (\ref{OCL:lset}) that $X \in \valuepost{\lset}{P}{e}$.

The second part of the claim is a bit more complicated. As long as $e$ is not a donation packet it is easy to check that \cset{} does not change and $\uncont_e = \uncont_{e'}$, where $e'$ is the immediate predecessor of $e$ in $\eventset_P$ and so this part of the claim follows by induction.

If $e$ is the donation packet from $X$ then it is easy to see that $X \notin \uncont_e$ and that $\uncont_e = \uncont_{e'} \setminus \{ X \}$. We already demonstrated that $X \in \valuepost{\lset}{P}{e}$. The \ref{protPacket} procedure executes the \ref{ReceiveDonation} procedure which in turn adds $X$ to \cset{} and so by induction
\begin{multline*}
\valuepost{\cset}{P}{e} = \valuepre{\cset}{P}{e} \cup \{ X \} = \valuepost{\cset}{P}{e'} \cup \{ X \} =  \\ 
= (\valuepost{\lset}{P}{e'} \setminus \uncont_{e'}) \cup \{ X \}  = (\valuepost{\lset}{P}{e} \setminus \uncont_{e'}) \cup \{ X \} = \\
= (\valuepost{\lset}{P}{e} \cup \{ X \}) \setminus (\uncont_{e'} \setminus \{ X \}) =  \valuepost{\lset}{P}{e} \setminus \uncont_e
\end{multline*}

Proving claim (\ref{OCL:lsetvec}) when $e$ is a packet dequeuing event or a message broadcast request event amounts to a routine check that only notification related procedures, namely \ref{StartCluster}, \ref{Run}, \ref{JoinNotification} and \ref{RemovalNotification} actually add or remove entries from the mentioned vectors or change \lset{}.

We prove the remaining claims for non-notification events by examining the \ref{BroadcastMessage} procedure and each of the service procedures that are invoked by the \ref{protPacket} procedure. A claim has to be tested against a procedure only if the procedure changes one or more of the variables that are mentioned in the claim.

\begin{description}
\item[\ref{BroadcastMessage}] \hfill \\
This procedure only affects claims (\ref{OCL:sendgfB}) and (\ref{OCL:lque}). If $\vgap{} = 0$ it adds a record to \wset{}, but in that case claim (\ref{OCL:sendgfB}) is vacuously true. If $\vgap{} > 0$ it adds a message to \lque{}, but in this case claim (\ref{OCL:lque}) is vacuously true.
\item[\ref{ReceiveMessage}] \hfill \\
This procedure does not affect any of the claims because it does not change any of the relevant variables (this includes the invocation of the \ref{ScanCall} procedure which also does not change any of the relevant variables).
\item[\ref{ReceiveAck}] \hfill \\
This procedure affects claims (\ref{OCL:sendgfA}) and (\ref{OCL:sendgfB}) by making changes to \wset{}, \sgh{} and \sfh{} but no other variable.

Claim (\ref{OCL:sendgfA}) is true by induction before \ref{CheckFlush} is called. The claim remains true if \ref{CheckFlush} sets $\sgh{} = \cview{} + \vgap{}$. There are two possible impediments to this action. If $\fwset{} \ne \emptyset$ then \ref{CheckFlush} does nothing and the claim remains true by induction. If $\sgh{}$ is already high before \ref{CheckFlush} is called the claim remains true regardless of whether \ref{CheckFlush} raises $\sfh{}$ or not. So (\ref{OCL:sendgfA}) remains true in all cases. 

Claim (\ref{OCL:sendgfB}) is vacuously true if $\vgap{} = 0$ so assume that $\vgap{} > 0$. The procedure may shrink, but does not enlarge, either \fwset{} or \bwset{} and if it removes any record, it invokes the \ref{CheckFlush} procedure. If \wset{} does not lose a record then \ref{CheckFlush} is not called and nothing changes. If \wset{} loses a record, we have to look at the following cases:
\begin{description}
\item[\fwset{} remains non-empty after the record loss] \hfill \\
In this case \ref{CheckFlush} does nothing and the claim remains true by induction.
\item[\fwset{} becomes empty while \bwset{} remains non-empty] \hfill \\
In this case it follows from the inductive hypothesis that
\begin{align*}
\valuepre{\sgh{}}{P}{e} & < \cview{} + \vgap{}  \\
\valuepre{\sfh{}}{P}{e} & < \cview{} + \vgap{}
\end{align*}
and therefore \ref{CheckFlush} sets $\valuepost{\sgh{}}{P}{e} = \cview{} + \vgap{}$ and does not touch \sfh{}. These changes preserve the claims of (\ref{OCL:sendgfB}).
\item[\fwset{} becomes empty while \bwset{} remains empty] \hfill \\
In this case it follows from the inductive hypothesis that
\begin{align*}
\valuepre{\sgh{}}{P}{e} & < \cview{} + \vgap{}  \\
\valuepre{\sfh{}}{P}{e} & < \cview{} + \vgap{}
\end{align*}
and therefore \ref{CheckFlush} sets
$$
\valuepost{\sgh{}}{P}{e} = \valuepost{\sfh{}}{P}{e} = \cview{} + \vgap{}
$$
These changes preserve the claims of (\ref{OCL:sendgfB}).
\item[\fwset{} remains empty while \bwset{} remains non-empty] \hfill \\
In this case it follows from the inductive hypothesis that
\begin{align*}
\valuepre{\sgh{}}{P}{e} & = \cview{} + \vgap{}  \\
\valuepre{\sfh{}}{P}{e} & < \cview{} + \vgap{}
\end{align*}
and therefore \ref{CheckFlush} does nothing and the claim remains true by induction.
\item[\fwset{} remains empty while \bwset{} becomes empty] \hfill \\
In this case it follows from the inductive hypothesis that
\begin{align*}
\valuepre{\sgh{}}{P}{e} & = \cview{} + \vgap{}  \\
\valuepre{\sfh{}}{P}{e} & < \cview{} + \vgap{}
\end{align*}
and therefore \ref{CheckFlush} sets $\valuepost{\sfh{}}{P}{e} = \cview{} + \vgap{}$ and does not touch \sgh{}. These changes preserve the claims of (\ref{OCL:sendgfB}).
\end{description}

\item[\ref{ReceiveGhost}] \hfill \\
This procedure affects claims (\ref{OCL:gf}) and (\ref{OCL:gfmon}), but only with respect to the sender process $X$. We know that $\valuepre{\lset}{P}{e} = \valuepost{\lset}{P}{e}$ and that $X \in \valuepost{\lset}{P}{e}$ so both claims must be verified for $X$.

Also note that it follows from claim (\ref{OCL:lsetvec}) that for every process in $\valuepost{\lset}{P}{e}$ all the fields in claims (\ref{OCL:gf})  and (\ref{OCL:gfmon})  are well defined.

Let $k = \pkh{v}$.

To prove claim (\ref{OCL:gfmon}) we first have to note that the value of \sgh{} (and \sfh{}) is non-decreasing, as is easy to verify by looking at the pseudo-code and specifically at the \ref{CheckFlush} procedure.

Assume first that $k$ is not the first packet from $X$ that carries ghost information (in addition to ghost packets, donation packets and co-donation packets also carry ghost information). It follows from the \AxPackEventIII{} and from the monotonicity of \sgh{} that $v \ge \valuepre{\gvec[X]}{P}{e}$ and we are done.

If $k$ is the first such packet then $\valuepre{\gvec[X]}{P}{e}$ is the initial value assigned by the \ref{StartCluster}, \ref{Run} or the \ref{JoinNotification} procedure, where each of the cases occurs when $j(X) = j(P) = 0$; $j(X) \le j(P) \ne 0$; and $j(X) > j(P)$, respectively.

In case $j(P) = 0$ we have $\valuepre{\gvec[X]}{P}{e} = 0 \le v$ and we are done.

In case $j(X) = j(P) \ne 0$ we have $X = P$ and one can verify by looking at the \ref{Run} procedure that
\begin{multline*}
\valuepre{\gvec[X]}{P}{e} = \valuepre{\gvec{}[P]}{P}{e} = \valuepost{\gvec{}[P]}{P}{\pnotifyevent{j(P)}{P}} = {} \\
{} = \valuepost{\sgh{}}{P}{\pnotifyevent{j(P)}{P}} \le \valuepre{\sgh{}}{P}{\psendevent{k}} = v
\end{multline*}
where the last inequality follows from the monotonicity of \sgh{}.

Look at the case $j(X) < j(P)$. By induction (on claim (\ref{OCL:gfxmitrcv}))
$$
\valuepre{\gvec[X]}{P}{e} = \valuepost{\gvec{}[X]}{P}{\pnotifyevent{j(P)}{P}} \le \valuepre{\sgh{}}{X}{\pnotifyevent{j(P)}{X}}
$$
and by the monotonicity of \sgh{}
$$
\valuepre{\sgh{}}{X}{\pnotifyevent{j(P)}{X}} \le \valuepre{\sgh{}}{X}{\psendevent{k}} = v
$$
and we are done.

Now look at the case $j(X) > j(P)$. In this case $X$ is a late joining process. Let $E$ be the parent of $X$. Then
$$
\valuepre{\gvec{}[X]}{P}{e} = \valuepost{\gvec{}[E]}{P}{\pnotifyevent{j(X)}{P}} \le \valuepre{\sgh{}}{E}{\pnotifyevent{j(X)}{E}}
$$
and since $X$ inherits its values of \sgh{} from the value of \sgh{} in $E$ we have
$$
\valuepre{\sgh{}}{E}{\pnotifyevent{j(X)}{E}} = \valuepre{\sgh{}}{X}{\pnotifyevent{j(X)}{X}}
$$
And by monotonicity we have
$$
\valuepre{\sgh{}}{X}{\pnotifyevent{j(X)}{X}} \le \valuepre{\sgh{}}{X}{\psendevent{k}} = v
$$
and we are done in this case as well.

To prove claim (\ref{OCL:gf}) let $e'$ be the trigger of the $X$-transaction that queued the packet $k$. Then $e' \prec \psendevent{k} \prec \preceiveevent{k} = e$. It follows from Lemma \ref{KOrderLem} that $\view(e') \le \view(e)$.

By claim (\ref{OCL:gfmon}) that we just proved and by induction we know that
\begin{multline*}
\valuepost{\fvec{}[X]}{P}{e} = \valuepre{\fvec{}[X]}{P}{e} \le \valuepre{\gvec{}[X]}{P}{e} \le \valuepost{\gvec{}[X]}{P}{e} = v \le \valuepost{\sgh{}}{X}{e'} \le \\
{} \le \valuepost{(\cview{} + \vgap{})}{X}{e'} = \view(e') \le \view(e) = \valuepost{(\cview{} + \vgap{})}{P}{e}
\end{multline*}
We have to prove the additional assertions in claim (\ref{OCL:gf}) in the cases where $\valuepost{\vgap{}}{P}{e} = 0$ and $X \notin \valuepost{\cset{}}{P}{e}$.

In the case $\valuepost{\vgap{}}{P}{e} = 0$ we have by induction
$$
\valuepost{\fvec{}[X]}{P}{e} = \valuepre{\fvec{}[X]}{P}{e} = \valuepre{(\cview{} + \vgap{})}{P}{e} = \valuepost{(\cview{} + \vgap{})}{P}{e}
$$
The case $X \notin \valuepost{\cset{}}{P}{e}$ cannot occur here. To show that, we only have to prove that $X \notin \uncont_e$. The rest follows from the fact that $X \in \valuepost{\lset}{P}{e}$ and from claim (\ref{OCL:lset}).

If $j(X) \ge j(P)$ then $X \notin \uncont_e$ by definition. If $j(X) < j(P)$ then the first packet that $X$ sends to $P$ is a donation packet which must precede $k$. Therefore by definition $X \notin \uncont_e$ and so it must be in \cset{} at this point. This takes care of claim (\ref{OCL:gf}).

\item[\ref{ReceiveFlush}] \hfill \\
This procedure updates $\fvec{}[X]$ and then calls the \ref{TryToInstall} service procedure. We will start by ignoring \ref{TryToInstall} and show that the inductive hypothesis still holds before \ref{TryToInstall} is invoked. Later we show that \ref{TryToInstall} preserves all the claims.

This procedure affects claims (\ref{OCL:gf}), (\ref{OCL:gfmon}) and (\ref{OCL:sendgfA}), but for the first two claims only with respect to the sender process $X$. We know that $\valuepre{\lset}{P}{e} = \valuepost{\lset}{P}{e}$ and that $X \in \valuepost{\lset}{P}{e}$ so both claims must be verified for $X$.

Also note that it follows from claim (\ref{OCL:lsetvec}) that for every process in $\valuepost{\lset}{P}{e}$ all the fields in claims (\ref{OCL:gf})  and (\ref{OCL:gfmon})  are well defined.

Let $k = \pkf{v}$.

We start with claim (\ref{OCL:gf}). To prove it, we first show that $\valuepost{\fvec{}[X]}{P}{e} \le \valuepost{\gvec{}[X]}{P}{e}$. This requires a bit of digging. The packet $k$ is queued by $X$ through the execution of the \ref{CheckFlush} procedure. This procedure may or may not queue a ghost packet of the same height, to the same target set, immediately prior to queuing the flush packet. If a ghost packet is queued then it follows from the \AxPackEventIII{} that $P$ processes the ghost packet immediately prior to the current flush packet, resulting in an equality $\valuepost{\fvec{}[X]}{P}{e} = \valuepost{\gvec{}[X]}{P}{e}$. The only difficulty arises if a ghost packet is not queued.

The \ref{CheckFlush} procedure is invoked by $X$ as part of the execution of a notification transaction or an acknowledgement packet processing transaction. An inspection of the pseudo-code easily shows that in the notification case a queuing of a flush packet is always preceded by the queuing of a ghost packet because all three procedures - \ref{RemovalNotification}, \ref{JoinNotification} and \ref{Run} - increment $\vgap{}$ which results, according to claim (\ref{OCL:sendgfA}), in \sgh{} being low.

Let $e'$ be the trigger of the $X$-transaction that queued the packet $k$. We can assume that $e'$ is an acknowledgement packet processing event. Since $e'$ results in the queuing of a flush packet of height $v$ destined to $P$ without the queuing of a ghost packet of the same height we know by pseudo-code inspection that
\begin{align*}
\valuepre{\sfh{}}{X}{e'} & < \valuepre{(\cview{} + \vgap{})}{X}{e'} = v \quad \text{and therefore } v > 0 \\
\valuepre{\sgh{}}{X}{e'} & = \valuepre{(\cview{} + \vgap{})}{X}{e'} = v \\
P & \in \valuepre{\cset}{X}{e'}
\end{align*}
We know from claim (\ref{OCL:height}) that
$$
\view(e') = \valuepost{(\cview{} + \vgap{})}{X}{e'} = \valuepre{(\cview{} + \vgap{})}{X}{e'} = v
$$

Let $f = \pnotifyevent{v}{X}$ be the most recent view change notification preceding $e'$. If the notification $\notify{v}{X}$ is a removal or joining of some process $Q \ne X$ then $f$ is not the first event in $\eventset_X$ and we know by induction from claim (\ref{OCL:sendgfA}) that
$$
\valuepre{\sgh{}}{X}{f} \le \valuepre{(\cview{} + \vgap{})}{X}{f} = v - 1 < \view(e')
$$

If $f = \joinevent{X}$ then $j(X) = v > 0$ and $X$ has a parent $E$. The \ref{Run} procedure does not change the value of \sgh{} until \ref{CheckFlush} is called and therefore by induction
$$
\valuepre{\sgh{}}{X}{f}  = \valuepre{\sgh{}}{E}{\pnotifyevent{v}{E}} \le \valuepre{(\cview{} + \vgap{})}{E}{\pnotifyevent{v}{E}} = v - 1 < \view(e')
$$ 
Taking care to interpret $\valuepre{\fp{var}}{X}{e'}$ correctly for \ref{Run} (see Definition \ref{PrePostDef}).

Therefore there is some trigger event $f \preceq f' \prec e'$ at $X$ such that
$$
\valuepre{\sgh{}}{X}{f'} < \valuepost{\sgh{}}{X}{f'} = v
$$ 
Code inspection shows that the transaction $\trans(f')$ must invoke the \ref{CheckFlush} procedure and must result in the queuing of a ghost packet of height
$v$. If $P$ is in the target set of this multicast then we are done. If it is not, then $P$ must join \cset{} sometime between $\trans(f')$ and $\trans(e')$. Code inspection shows that $P$ can join \cset{} either after a join notification or as a result of sending a donation to $X$.

By the definition of $f$ there cannot be any notification events between $f'$ and $e'$ and therefore process $X$ must receive a donation packet $d$ from $P$ at some point between these two transactions. This in turn causes $X$ to queue a co-donation packet to $P$ that includes its current \sgh{} value.

Since $f' \prec \preceiveevent{d} \prec e'$ and since we know by direct code inspection that \sgh{} is non-decreasing, it follows that the ghost height that $X$ co-donates is equal to $v$. From the \AxPackEventIII{} it follows that the co-donation packet is processed by $P$ before $k$ is processed. As a result
$$
\valuepre{\gvec{}[X]}{P}{e} = v
$$
And we are done showing that $\valuepost{\fvec{}[X]}{P}{e} \le \valuepost{\gvec{}[X]}{P}{e}$. By induction we can conclude (using claim (\ref{OCL:gf})) that
$$
\valuepost{\gvec{}[X]}{P}{e} = \valuepre{\gvec{}[X]}{P}{e} \le \valuepre{(\cview{} + \vgap{})}{P}{e} = \valuepost{(\cview{} + \vgap{})}{P}{e}
$$
Which concludes the proof of the first part of claim (\ref{OCL:gf}).

We have to prove the additional assertions in claim (\ref{OCL:gf}) in the cases where $\valuepost{\vgap{}}{P}{e} = 0$ and $X \notin \valuepost{\cset{}}{P}{e}$. The case $X \notin \valuepost{\cset{}}{P}{e}$ cannot occur here for the same reason as in the case of \ref{ReceiveGhost}.

If $\vgap{} = 0$ then we can assume by induction that
$$
\valuepre{\fvec{}[X]}{P}{e} = \valuepre{(\cview{} + \vgap{})}{P}{e}
$$
We now use claim (\ref{OCL:gfmon}) (which we will prove shortly) to conclude that
 \begin{multline*}
\valuepost{(\cview{} + \vgap{})}{P}{e} = \valuepre{(\cview{} + \vgap{})}{P}{e} = \valuepre{\fvec{}[X]}{P}{e} \le {}  \\
{} \le \valuepost{\fvec{}[X]}{P}{e} \le \valuepost{(\cview{} + \vgap{})}{P}{e}
\end{multline*}
And we are done.

The proof that the \ref{ReceiveFlush} procedure preserves (\ref{OCL:gfmon}) is almost identical to the same proof for the \ref{ReceiveGhost} procedure. The only difference is that in the case where $k$ is the first packet that carries flush information and where $j(X) > j(P)$, the process $X$ inherits its value of \sfh{} from the value of \sgh{} at the parent $E$, and not the value of \sfh{} of the parent. Similarly process $P$ initializes $\fvec{}[X]$ from $\gvec{}[E]$ and not from $\fvec{}[E]$. Therefore we get
$$
\valuepre{\fvec{}[X]}{P}{e} = \valuepost{\gvec{}[E]}{P}{\pnotifyevent{j(X)}{P}} \le \valuepre{\sgh{}}{E}{\pnotifyevent{j(X)}{E}}
$$
and since $X$ inherits its values of \sfh{} from the value of \sgh{} in $E$ we have
$$
\valuepre{\sgh{}}{E}{\pnotifyevent{j(X)}{E}} = \valuepre{\sfh{}}{X}{\pnotifyevent{j(X)}{X}}
$$
And by monotonicity we have
$$
\valuepre{\sfh{}}{X}{\pnotifyevent{j(X)}{X}} \le \valuepre{\sfh{}}{X}{\psendevent{k}} = v
$$
and we are done.

Proof that the procedure preserves (\ref{OCL:sendgfA}) is only required when X = P, i.e. when $k$ is a self packet. In that case we can use the monotonicity of \sfh{} to conclude
$$
\valuepost{\fvec{}[P]}{P}{e} = v = \valuepre{\sfh{}}{P}{\psendevent{k}} \le \valuepost{\sfh{}}{P}{\preceiveevent{k}} = \valuepost{\sfh{}}{P}{e} 
$$

Finally the procedure invokes the \ref{TryToInstall} procedure which preserves all the claims as we show further on.
\item[\ref{StartCluster}] \hfill \\
It is trivial to check that the procedure initializes $P$ to a state that conforms to the requirements of the lemma.

\item[\ref{RemovalNotification}] \hfill \\
Let $R$ be the process that is being removed. Then by definition $\vdeath{R} = \view(e)$.

To see that this procedure preserves (\ref{OCL:height}) notice that its event $e$ is a notification event, meaning that if $e'$ is the preceding event in $\eventset_P$ then $\view(e) = \view(e') + 1$ while the procedure increments \vgap{}, maintaining the equality.

It preserves the first part of claim (\ref{OCL:lset}) because it removes $R$ from \lset{}. $R$ is the only process in \lset{} that no longer meets the condition $j(Q) \le \cview{} + \vgap{} < r(Q)$ as \vgap{} is incremented. No other process is affected because no other process satisfies either $\vdeath{Q} = \vdeath{R}$ or $j(Q) = \vdeath{R}$.

It preserves the second part of claim (\ref{OCL:lset}) because it removes $R$ from \lset{} and \cset{}, while not affecting $\uncont_e$ since $e$ is not a donation packet dequeuing event.

It preserves (\ref{OCL:lsetvec}) because it removes the $R$ coordinate from all the required vectors. It does not affect (\ref{OCL:msetvec}).

It preserves (\ref{OCL:gf}) because it shrinks \lset{}, makes $\vgap{} > 0$ and increments the right hand side of the inequality while not affecting the left hand side.

To see why it preserves claims (\ref{OCL:sendgfA}) and (\ref{OCL:sendgfB}), notice that by incrementing $\vgap{}$ it forces \sgh{} and \sfh{} to be low without touching them. As a result the call to \ref{CheckFlush} at the end has the following effects:
\begin{itemize}
\item \sgh{} becomes high if and only if $\fwset{} = \emptyset{}$.
\item \sfh{} becomes high if and only if $\wset{} = \emptyset$.
\end{itemize}
Therefore \sfh{} becomes high only if \sgh{} becomes high, and therefore by induction $\sfh{} \le \sgh{}$ in all cases.

The procedure increments \vgap{}, making (\ref{OCL:lque}) vacuously true.

It preserves (\ref{OCL:gfmon}) because it does not change the values of $\gvec{}[X]$ and $\fvec{}[X]$ for $X \ne R$.

Finally we have to show that the procedure preserves claim (\ref{OCL:gfxmitrcv}). Let $i = \view(e)$ and let $X \ne R$ be any process that remains live after the view change. If $P$ does not dequeue any ghost (flush), donation or co-donation packet from $X$ between $\pnotifyevent{i-1}{P}$ and $e = \pnotifyevent{i}{P}$ then we get by induction, for either ghost or flush
\begin{align*}
\valuepost{\gvec{}[X]}{P}{\pnotifyevent{i}{P}} & = \valuepost{\gvec{}[X]}{P}{\pnotifyevent{i-1}{P}} \le \valuepre{\sgh}{X}{\pnotifyevent{i-1}{X}} \le \valuepre{\sgh}{X}{\pnotifyevent{i}{X}}  \\
\valuepost{\fvec{}[X]}{P}{\pnotifyevent{i}{P}} & = \valuepost{\fvec{}[X]}{P}{\pnotifyevent{i-1}{P}} \le \valuepre{\sfh}{X}{\pnotifyevent{i-1}{X}} \le \valuepre{\sfh}{X}{\pnotifyevent{i}{X}}
\end{align*}
and we are done. Otherwise, let $e' = \preceiveevent{k}$ be the last dequeuing event of a ghost (flush), donation or co-donation packet from $X$ at $P$ with $\view(e') = i-1$. Let $v$ be the ghost (flush) height carried by the packet. Then we have in each case respectively
\begin{align*}
\valuepost{\gvec{}[X]}{P}{\pnotifyevent{i}{P}} & = \valuepost{\gvec{}[X]}{P}{e'} = v = \valuepre{\sgh{}}{X}{\psendevent{k}}  \\
\valuepost{\fvec{}[X]}{P}{\pnotifyevent{i}{P}} & = \valuepost{\fvec{}[X]}{P}{e'} = v = \valuepre{\sfh{}}{X}{\psendevent{k}}
\end{align*}
and from the \AxProcIII{} it follows that $\psendevent{k} \prec \pnotifyevent{i}{X}$ and so in each case respectively
\begin{align*}
\valuepre{\sgh{}}{X}{\psendevent{k}} & \le \valuepre{\sgh{}}{X}{\pnotifyevent{i}{X}}  \\
\valuepre{\sfh{}}{X}{\psendevent{k}} & \le \valuepre{\sfh{}}{X}{\pnotifyevent{i}{X}}
\end{align*}
and we are done.

\item[\ref{JoinNotification}] \hfill \\
Let $J$ be the joining process and let $E$ be its parent. Then by definition $j(J) = \view(e)$.

This procedure preserves (\ref{OCL:height}), (\ref{OCL:sendgfA}), (\ref{OCL:sendgfB}) and (\ref{OCL:lque}) for the exact same reasons as \ref{RemovalNotification}.

it satisfies the first part of claim (\ref{OCL:lset}) because it adds $J$ to \lset{}. $J$ is the only process that newly meets the condition $j(Q) \le \cview{} + \vgap{} < r(Q)$ as \vgap{} is incremented. No other process is affected, because no other process satisfies either $j(Q) = j(J)$ or $\vdeath{Q} = j(J)$.

It preserves the second part of claim (\ref{OCL:lset}) because it adds $J$ to \lset{} and \cset{}, while not affecting $\uncont_e$ since $e$ is not a donation packet dequeuing event.

It preserves (\ref{OCL:lsetvec}) because it adds a $J$ coordinate to all the required vectors. It does not affect (\ref{OCL:msetvec}).

It preserves (\ref{OCL:gf}) because it makes $\vgap{} > 0$ and increments the right hand side of each existing inequality while not affecting the left hand side. The new inequalities for the $J$ coordinate are inherited from the original ghost inequality for its parent $E$.

It preserves (\ref{OCL:gfmon}) because it does not change the values of $\gvec{}[X]$ and $\fvec{}[X]$ for $X \ne J$.
 
Finally we have to show that the procedure preserves claim (\ref{OCL:gfxmitrcv}). Let $i = \view(e)$ and let $X$ be any process that is live after the view change. If $X \ne J$ then the claim holds following the exact same argument that we used in the \ref{RemovalNotification} case. In the case $X = J$ we have
$$
\valuepost{\fvec{}[J]}{P}{\pnotifyevent{i}{P}} = \valuepost{\gvec{}[J]}{P}{\pnotifyevent{i}{P}} = \valuepost{\gvec{}[E]}{P}{\pnotifyevent{i}{P}}
$$
and since we have already proven the case $X \ne J$ we know that
$$
\valuepost{\gvec{}[E]}{P}{\pnotifyevent{i}{P}} \le \valuepre{\sgh}{E}{\pnotifyevent{i}{E}}
$$
Since $J$ inherits its values \sgh{} and \sfh{} from its parent value of \sgh{} we have
$$
\valuepre{\sgh}{E}{\pnotifyevent{i}{E}} = \valuepre{\sgh}{J}{\pnotifyevent{i}{J}} = \valuepre{\sfh}{J}{\pnotifyevent{i}{J}}
$$
and we are done.

Notice that in the last equation by definition the value $\valuepre{\sfh}{J}{\pnotifyevent{i}{J}}$ reflects the fact that the \ref{Run} procedure copies the initial value of \sgh{} into \sfh{} (see Definition \ref{PrePostDef}).

\item[\ref{Run}] \hfill \\
Let $J$ be the new process and let $E$ be its parent. Then by definition $j(J) = \view(e)$. Let $e_E = \pnotifyevent{j(J)}{E}$ and let $e' \prec e_E$ be the event immediately preceding $e_E$ in $E$. It follows from the \AxNotEventIII{} that $e_E$ exists and therefore $e'$ exists as well since $e_E$ is not the first event in $\eventset_E$.

Process $J$ starts life with the exact same state that its parent had when it dequeued the $\nj{J}{E}$ notification, which is the same state it had at the conclusion of the $\trans(e')$. By induction, $E$ satisfied all the claims at that point. For most claims this means that they are automatically satisfied at that point in $J$ as well, as long as we interpret the expressions $\valuepost{\fp{var}}{J}{e'}$ to simply mean the initial value of \fp{var}, ignoring the fact that the value of $e'$ there is undefined in $J$. However there are a number of exceptions.

claim (\ref{OCL:height}) becomes ill-defined because $e'$ is not defined in $J$.
The second part of claim (\ref{OCL:lset}) is not satisfied because it depends on the value of $\uncont_{e'}$ which is ill-defined at $J$, and any rate is not inherited from $E$.
Claim (\ref{OCL:gfmon}) becomes ill-defined because $\valuepre{\lset}{J}{e'}$ has no clear interpretation in $J$. Similarly, claim (\ref{OCL:gfxmitrcv}) depends on the meaning of $e'$ and so is hard to interpret in $J$. Claim (\ref{OCL:sendgfA}) references the self value $\fvec[P]$ where $P$ is the local process. Whenever we need rely on any of these ill-defined inductive claims as we go, we will present an explicit calculation that relies directly on the well-defined $E$ version of these claims.

The \ref{Run} procedure preserves (\ref{OCL:height}) because
\begin{multline*}
\view(e) = j(J) = \view(e') + 1 = \valuepost{(\cview + \vgap)}{E}{e'} + 1 =  \\
= \valuepost{(\cview + \vgap)}{J}{e'} + 1 = \valuepost{(\cview + \vgap)}{J}{e}
\end{multline*}
Where the last equality follows because the \ref{Run} procedure increments \vgap{}.

claims (\ref{OCL:sendgfB}), (\ref{OCL:lque}) and most of (\ref{OCL:sendgfA}) are preserved by \ref{Run} for the exact same reasons as in the \ref{RemovalNotification} case. The only missing piece is that the inequality
$$
\valuepost{\fvec{}[J]}{J}{e} \le \valuepost{\sfh{}}{J}{e}
$$
follows because the self flush height $\fvec{}[J]$ is initialized to be equal to $\valuepost{\sgh{}}{J}{e'}$. The same value is used to initialize \sfh{}, and thereafter \sfh{} can only increase.

\ref{Run} satisfies the first part of claim (\ref{OCL:lset}) for the same reason that \ref{JoinNotification} does.

\ref{Run} satisfies the second part of claim (\ref{OCL:lset}) because at the outset $\uncont_e$ contains every process $X$ whose join view is lower than $j(J)$. This includes every member of \lset{} except $J$ itself. Since the \ref{Run} procedure adds $J$ to \lset{} while setting $\cset{} = \{ J \}$, this makes the equations true.

It preserves (\ref{OCL:lsetvec}) because it adds a $J$ coordinate to all the required vectors. It does not affect (\ref{OCL:msetvec})

It preserves (\ref{OCL:gf}) because it makes $\vgap{} > 0$ and increments the right hand side of each existing inequality while not affecting the left hand side. The new inequalities for the $J$ coordinate hold because the values of $\fvec{}[J]$ and $\gvec{}[J]$ are both initialized to $\valuepost{\sgh{}}{J}{e'} \le \valuepost{\sgh{}}{J}{e}$ and because we have shown that claim (\ref{OCL:sendgfA}) is preserved.

It preserves (\ref{OCL:gfmon}) because the \ref{Run} procedure does not change the initial values of the $\gvec{}[]$ and $\fvec{}[]$ vectors.

Finally we have to show that the procedure preserves claim (\ref{OCL:gfxmitrcv}). Let $i = \view(e)$ and let $X$ be any process that is live after the view change. If $X \ne J$ then the value of $\gvec{}[X]$ and $\fvec{}[X]$ is inherited from $E$ without change
\begin{align*}
\valuepost{\gvec{}[X]}{J}{\pnotifyevent{i}{J}} & = \valuepre{\gvec{}[X]}{E}{\pnotifyevent{i}{E}}  \\
\valuepost{\fvec{}[X]}{J}{\pnotifyevent{i}{J}} & = \valuepre{\fvec{}[X]}{E}{\pnotifyevent{i}{E}}
\end{align*}
and since we have already proved all the claims  for $E$ (the \ref{JoinNotification} case) we know that
\begin{align*}
\valuepre{\gvec{}[X]}{E}{\pnotifyevent{i}{E}} & \le \valuepost{\gvec{}[X]}{E}{\pnotifyevent{i}{E}} \le \valuepre{\sgh{}}{X}{\pnotifyevent{i}{X}} \\
\valuepre{\fvec{}[X]}{E}{\pnotifyevent{i}{E}} & \le \valuepost{\fvec{}[X]}{E}{\pnotifyevent{i}{E}} \le \valuepost{\sfh{}}{X}{\pnotifyevent{i}{X}}
\end{align*}
and we are done. The case $X = J$ follows because the \ref{Run} procedure sets
$$
\valuepost{\gvec{}[J]}{J}{\pnotifyevent{i}{J}} = \valuepost{\fvec{}[J]}{J}{\pnotifyevent{i}{J}} = \valuepre{\sgh{}}{J}{\pnotifyevent{i}{J}} = \valuepre{\sfh{}}{J}{\pnotifyevent{i}{J}}
$$
\item[\ref{ReceiveDonation}] \hfill \\
Let $S$ be the sender of the donation packet. Since $e$ is a donation packet dequeuing event in this case, $S$ falls out of $\uncont_e$. This is compensated for by adding $S$ to \cset{}. This preserves claim (\ref{OCL:lset}). Then the procedure makes a sequence of invocations of the \ref{ReceiveMessage} and \ref{ReceiveAck} procedures which preserve all the claims as we have already shown.

Finally the procedure updates $\gvec{}[S]$ and $\fvec{}[S]$ from the donated values of \sgh{} and \sfh{} respectively. Remember that the donation packet is sent by the \ref{JoinNotification} procedure as part of the $j(P)$ view change notification processing. Therefore
\begin{align*}
\valuepost{\gvec{}[S]}{P}{e} & = \valuepre{\sgh{}}{S}{\pnotifyevent{j(P)}{S}}  \\
\valuepost{\fvec{}[S]}{P}{e} & = \valuepre{\sfh{}}{S}{\pnotifyevent{j(P)}{S}}
\end{align*}
It follows from claim \ref{OCL:sendgfA} of the inductive hypothesis that
\begin{multline*}
\valuepre{\sfh{}}{S}{\pnotifyevent{j(P)}{S}} \le \valuepre{\sgh{}}{S}{\pnotifyevent{j(P)}{S}} \le \valuepre{(\cview{} + \vgap{})}{S}{\pnotifyevent{j(P)}{S}} = {}  \\
{} = j(P) - 1 < \view(e) = \valuepost{(\cview{} + \vgap{})}{P}{e}
\end{multline*}
This proves most of claim \ref{OCL:gf}. Since the inequalities are strict we have to show that $\vgap{} > 0$. It follows from claim (\ref{OCL:gfmon}) which we will prove shorty that $\fvec{}[S]$ is non-decreasing, and therefore the inequality was strict at the start of the transaction. Therefore by induction $\vgap{} > 0$.

Claim (\ref{OCL:gfmon}) follows by the exact same reasoning that we used for the \ref{ReceiveGhost} and \ref{ReceiveFlush} procedures, while claim (\ref{OCL:gfxmitrcv}) is not relevant at a non-notification transaction.
\item[\ref{ReceiveCoDonation}] \hfill \\
Let $S$ be the sender of the co-donation packet, and let $e'$ be the trigger of the donation transaction at which $S$ queues the co-donation packet.

The \ref{ReceiveCoDonation} procedure makes a sequence of calls that preserve all the claims and then updates $\gvec{}[S]$ and $\fvec{}[S]$ from the co-donated values. Then

\begin{align*}
\valuepost{\gvec{}[S]}{P}{e} & = \valuepre{\sgh{}}{S}{e'}  \\
\valuepost{\fvec{}[S]}{P}{e} & = \valuepre{\sfh{}}{S}{e'}
\end{align*}

From claim \ref{OCL:sendgfA} of the inductive hypothesis and the \AxProcIII{} we know that

\begin{multline*}
\valuepre{\sfh{}}{S}{e'} \le \valuepre{\sgh{}}{S}{e'} \le \valuepre{(\cview{} + \vgap{})}{S}{e'} \le {}  \\
{} \le \view(e') \le \view(e) = \valuepost{(\cview{} + \vgap{})}{P}{e}
\end{multline*}
which proves most of claim (\ref{OCL:gf}). Process $S$ must be contacted at this point, so we do not have to show that the inequalities are strict. But if one of the inequalities happens to be strict, then we have to show that $\vgap{} > 0$. This follows from the same argument as in the case of \ref{ReceiveDonation}.

Claim (\ref{OCL:gfmon}) follows by the exact same reasoning that we used for the \ref{ReceiveGhost} and \ref{ReceiveFlush} procedures, while claim (\ref{OCL:gfxmitrcv}) is not relevant at a non-notification transaction.

Finally the procedure calls \ref{TryToInstall} which also preserves all the claims as we will see. 
\item[\ref{TryToInstall}] \hfill \\
This is a service procedure that is called by other procedures. However it preserves (\ref{OCL:height}) and (\ref{OCL:gf}) because it always increments \cview{} and decrements \vgap{} together. Also, the procedure makes no changes to $\vgap{}$ unless all the $\fvec[]$ values are high. It preserves (\ref{OCL:msetvec}) because it resets $\vtime{}$ whenever it changes $\mset{}$.

It preserves (\ref{OCL:sendgfA}) because it does not touch either \sgh{}, \sfh{} or the self flush height $\fvec{}[P]$ while keeping $\cview{} + \vgap{}$ fixed (this includes possible invocations of the \ref{BroadcastMessage} and \ref{ScanCall} procedures).

It preserves (\ref{OCL:sendgfB}) because it either results in $\vgap{} = 0$ which makes (\ref{OCL:sendgfB}) vacuously true, or else it fails to install views in which case it does not change any variables and therefore preserves (\ref{OCL:sendgfB}) by induction.

It preserves (\ref{OCL:lque}) because it either results in $\vgap{} > 0$ which makes (\ref{OCL:lque}) vacuously true, or else it starts out with $\vgap{} = 0$ in which case (\ref{OCL:lque}) is true by induction, or else it starts out with $\vgap{} > 0$ and ends with $\vgap{} = 0$ in which case it empties out \lque{}, making (\ref{OCL:lque}) true.

\ref{TryToInstall} does not affect claims (\ref{OCL:lset}) and (\ref{OCL:lsetvec}).
\item[\ref{ScanCall}] \hfill \\
This is a service procedure that is called by other procedures. It does not affect any of the relevant variables so it has no effect on any of the claims.
\end{description}
\end{proof}

\begin{cor}
\label{ContactSetCor}
Let $P$ be a process and let $e \in \eventset_P$ be any trigger event. Suppose that
$$
\valuepre{\lset{}}{P}{e} \ne \valuepre{\cset{}}{P}{e}
$$
Then there is a process $X \in \lset{}$ such that $\fvec{}[X] < j(P)$.
\end{cor}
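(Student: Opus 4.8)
The plan is to exhibit a concrete witness $X\in\lset{}$ and show its $\fvec{}$-entry lies strictly below $j(P)$. First I would record that the hypothesis forces $j(P)>0$: a member of view zero has $\uncont_e=\emptyset$ (Definition \ref{UncontactedDef}), whereas Lemma \ref{OmnibusCBCASTLem}, claim (\ref{OCL:lset}), applied at the event $e'\in\eventset_P$ immediately preceding $e$ — whose post‑state is the pre‑state at $e$; for $e=\joinevent{P}$ one applies it at $\joinevent{P}$ itself, which is legitimate because \ref{CheckFlush} modifies none of $\lset{},\cset{},\fvec{}$ so the pre‑\ref{CheckFlush} snapshot of Definition \ref{PrePostDef} agrees with the post‑values — gives $\valuepre{\cset{}}{P}{e}=\valuepre{\lset{}}{P}{e}\setminus\uncont_{e'}$, which would then be all of $\valuepre{\lset{}}{P}{e}$. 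So $P$ is a late joiner with parent $E$ (\AxViewI{}), and the same identity supplies a process $X\in\valuepre{\lset{}}{P}{e}\cap\uncont_{e'}$; by definition of $\uncont_{e'}$, $j(X)<j(P)$ and $P$ has not dequeued the donation packet $d\in\channel{X}{P}$ (if it exists) by $e'$.

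The core step is to show that, at the point in question, $\fvec{}[X]$ at $P$ still equals the value $P$ inherited for it from $E$ at birth. A scan of the pseudo‑code shows that the only \cbcast{} procedures that can change $\fvec{}[X]$ for a fixed, still‑live process $X$ are \ref{ReceiveFlush}, \ref{ReceiveDonation} and \ref{ReceiveCoDonation}, each firing only upon dequeuing a flush, donation or co‑donation packet \emph{from} $X$; the notification procedures merely add or delete whole coordinates, \ref{Run} only creates the self‑coordinate, and nothing else touches the vector. By \AxProcII{}, $X$ queues nothing to $P$ before processing $\pnotifyevent{j(P)}{X}$, and in that transaction $X$ runs \ref{JoinNotification} (whether or not $X=E$ — in the parent case via the fork branch of \ref{doNotification}), which queues $d$ to $P$ before the trailing \ref{CheckFlush} call can emit any ghost or flush to the enlarged \cset{}; hence $d$, if it exists, precedes every other $X$‑to‑$P$ packet in $\channel{X}{P}$. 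Since the channel is FIFO (\AxPackEventIII{}) and $d$ is still unprocessed by $e'$, $P$ has dequeued \emph{no} packet from $X$ by $e'$ (trivially so if $d$ does not exist), so $\fvec{}[X]$ at $P$ has never moved from its birth value.

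It remains to bound that birth value. Forking installs in $P$ the state $E$ had at the onset of $\pnotifyevent{j(P)}{E}$; this event exists by \AxNotEventIII{} and, since $j(E)<j(P)$, is not the first event of $\eventset_E$, so it has an immediate predecessor $e'_E\in\eventset_E$; as \ref{Run} leaves $\fvec{}[X]$ untouched for $X\ne P$, the birth value of $\fvec{}[X]$ at $P$ is $\valuepost{\fvec{}[X]}{E}{e'_E}$. By \AxNotEventII{} the previous notification $E$ processed is $\pnotifyevent{j(P)-1}{E}$, so $\view(e'_E)=j(P)-1$ (Definition \ref{EventViewDefn}); and $X$ belongs to view $j(P)-1$, since $j(X)\le j(P)-1$ and, as $X\in\valuepre{\lset{}}{P}{e}$, $r(X)>j(P)-1$. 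Hence $X\in\valuepost{\lset{}}{E}{e'_E}$, and Lemma \ref{OmnibusCBCASTLem}, claims (\ref{OCL:gf}) and (\ref{OCL:height}), give
$$\valuepost{\fvec{}[X]}{E}{e'_E}\;\le\;\valuepost{\gvec{}[X]}{E}{e'_E}\;\le\;\valuepost{(\cview{}+\vgap{})}{E}{e'_E}\;=\;\view(e'_E)\;=\;j(P)-1 .$$
Combining with the previous paragraph, $\valuepre{\fvec{}[X]}{P}{e}=\valuepost{\fvec{}[X]}{E}{e'_E}<j(P)$ while $X\in\valuepre{\lset{}}{P}{e}$, which is the assertion.

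The step I expect to be the main obstacle is the middle paragraph: establishing from the code that the donation really is the earliest packet $X$ ever addresses to $P$ (this rests on \AxProcII{} together with the exact placement of the donation‑queuing line relative to the \ref{CheckFlush} call inside \ref{JoinNotification}, and the fork routing in \ref{doNotification} when $X=E$), and that no intermediate processing at $P$ could silently reset $\fvec{}[X]$ through a different channel. Once the witness $X$ and the parent snapshot $e'_E$ are fixed, the remaining appeals to Lemma \ref{OmnibusCBCASTLem} are routine.
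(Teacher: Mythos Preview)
Your proof is correct and follows essentially the same structure as the paper's: pick $X\in\lset{}\cap\uncont$, argue that the donation from $X$ is the first packet on $\channel{X}{P}$ so $\fvec{}[X]$ at $P$ is still its birth value, then bound that birth value below $j(P)$. The only notable difference is in the last step: the paper routes the bound through $X$'s state via claim~(\ref{OCL:gfxmitrcv}) (then (\ref{OCL:sendgfA}) and (\ref{OCL:height})), whereas you stay at $E$ and use claims~(\ref{OCL:gf}) and~(\ref{OCL:height}) directly at the predecessor of $\pnotifyevent{j(P)}{E}$ --- a slightly shorter chain that avoids invoking $\pnotifyevent{j(P)}{X}$ altogether. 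One small technicality: your $e'_E$ is the immediate predecessor \emph{event}, which may be a side-effect rather than a trigger, and Lemma~\ref{OmnibusCBCASTLem} is stated only for trigger events; take $e'_E$ to be the preceding \emph{trigger} (equivalently, use $\valuepre{\cdot}{E}{\pnotifyevent{j(P)}{E}}$ as the paper does) and the argument goes through unchanged.
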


\begin{proof}
It follows from Lemma \ref{OmnibusCBCASTLem}(\ref{OCL:lset}) that there is a live process $X \in \lset{} \,\cap\, \uncont_e$, and it follows from the definition of $\uncont_e$ that $j(X) < j(P)$. Therefore $P$ is not an original process and it has a parent $E$. $P$ starts life with a state identical to the state of $E$ at $\pnotifyevent{j(P)}{E}$.

Process $X$ can only be in $\lset{}$ if it is there originally or if $P$ receives a join notification for $X$. Since the latter does not happen in this case we must have $X \in \valuepre{\lset{}}{E}{\pnotifyevent{j(P)}{E}}$ and by Lemma \ref{OmnibusCBCASTLem}(\ref{OCL:lsetvec}) $\valuepre{\fvec{}[X]}{E}{\pnotifyevent{j(P)}{E}}$ exists. By Lemma \ref{OmnibusCBCASTLem}(\ref{OCL:gfmon}, \ref{OCL:gfxmitrcv}, \ref{OCL:sendgfA} and \ref{OCL:height})
\begin{multline*}
\valuepre{\fvec{}[X]}{E}{\pnotifyevent{j(P)}{E}} \le \valuepost{\fvec{}[X]}{E}{\pnotifyevent{j(P)}{E}} \le  \\
\le \valuepre{\sfh{}}{X}{\pnotifyevent{j(P)}{X}} \le \valuepre{(\cview{} + \vgap{})}{X}{\pnotifyevent{j(P)}{X}} < j(P)
\end{multline*}
Therefore the initial value of $\fvec{}[X]$ at $P$ is smaller than $j(P)$.

Scanning the pseudo-code reveals that this initial value of $\fvec{}[X]$ does not change in $P$ unless $P$ receives a notification of the removal of $X$ (which does not happen in this case) or if $P$ processes a co-donation packet from $X$ (which also does not happen, because co-donations are always sent from a late joining process to an existing group member), or if $P$ processes a flush or a donation packet from $X$. The donation packet to $P$ is the first packet that $X$ queues to $\channel{X}{P}$ and therefore the first packet from $X$ that $P$ processes. Since $e$ occurs before the donation packet is processed, the initial value of $\fvec{}[X]$ is still extant and we are done.
\end{proof}

\begin{lem}
\label{OrigPCountLem}
Let $P$ and $Q$ be two original processes. Suppose that $P$ sends a message packet $\pkm{\fp{msg}}$ to $Q$. When $P$ queues the packet, it increments its \mpkout{} and places \fp{msg} in its \wset{} together with an \fp{index} value equal to its updated value of \mpkout{} (see the \ref{BroadcastMessage} procedure for original message broadcasts and the \ref{RemovalNotification} procedure for forwarded broadcasts). If $Q$ processes the message it increments the value of $\mpkin{}[P]$. Then:
\begin{enumerate}
\item Before $Q$ processes the message, its value of $\mpkin{}[P]$ is lower than $\fp{index}$.
\item After $Q$ processes the message, its value of $\mpkin{}[P]$ becomes equal to $\fp{index}$.
\end{enumerate}
Moreover, if $P = Q$ then the conclusion holds without the requirement that $P$ be original.
\end{lem}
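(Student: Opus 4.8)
The plan is to show that the counter $\mpkout{}$ at $P$ and the counter $\mpkin{}[P]$ at $Q$ stay in lock-step. Since $P$ and $Q$ are original, $\uncont_e=\emptyset$ at every trigger event, so Lemma~\ref{OmnibusCBCASTLem}(\ref{OCL:lset},\ref{OCL:height}) gives $\valuepost{\cset{}}{P}{e}=\valuepost{\lset{}}{P}{e}=\{X:j(X)\le\view(e)<\vdeath{X}\}$ at every trigger event $e$ of $P$; as $j(Q)=0$, this means $Q$ lies in the target set of every message multicast that $P$ performs at a view below $\vdeath{Q}$. By the \AxProcII{}, $P$ queues a packet to $\channel{P}{Q}$ only at events of view below $\vdeath{Q}$, and by the \CAxPacket{} $Q$ dequeues such a packet only at events of view below $\vdeath{P}$. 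Consequently the message packets of $\channel{P}{Q}$, listed in send order, are precisely an initial segment $k_1,k_2,\dots$ of the sequence of all message multicasts issued by $P$ (these occur at non-decreasing views), with $k_j$ carrying the $\fp{index}$ value $I_j$ that $P$ recorded for that multicast. By the \AxPackEventIII{}, $Q$ dequeues the packets of $\channel{P}{Q}$ in this same order, so when $Q$ dequeues $k_j$ the packets it has already processed from $P$ are exactly $k_1,\dots,k_{j-1}$. The case $P=Q$ is identical with $\channel{P}{P}$ the self-channel, which receives \emph{every} message multicast of $P$; here originality of $P$ is not needed, because the reset of $\mpkout{}$ and the matching re-initialisation of $\mpkin{}[P]$ performed by \ref{Run} leave the bookkeeping below intact.

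Two observations reduce the statement to a single point. First, because $P$ is original no procedure ever resets $\mpkout{}$ at $P$ (only \ref{BroadcastMessage}, on an original broadcast with no view gap, increments $\mpkout{}.b$, and only the forwarding loop of \ref{RemovalNotification} increments $\mpkout{}.f$), so $\mpkout{}$ is a pure running count: $I_j$ equals the pair $(\#\{i\le j:k_i\text{ is an original broadcast}\},\ \#\{i\le j:k_i\text{ is a forward}\})$, and $I_j-I_{j-1}$ is a unit vector in the $b$- or $f$-coordinate, with $I_0:=(0,0)$. Second, $\mpkin{}[P]$ at $Q$ is created equal to $(0,0)$ by \ref{StartCluster} (as $P$ is in the roster) and is thereafter changed only by \ref{ReceiveMessage} invoked with sender $P$ --- the donation handlers touch only a late joiner's coordinate, and the discard of $\mpkin{}[P]$ in \ref{RemovalNotification} can occur only at view $\vdeath{P}$, i.e.\ after $Q$'s last dequeue from $\channel{P}{Q}$. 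Since processing $k_i$ runs \ref{ReceiveMessage}, which increments $\mpkin{}[P].b$ when $\orig{\fp{msg}_i}=P$ and $\mpkin{}[P].f$ otherwise, an induction on $j$ gives $\valuepost{\mpkin{}[P]}{Q}{\preceiveevent{k_j}}=(\#\{i\le j:\orig{\fp{msg}_i}=P\},\ \#\{i\le j:\orig{\fp{msg}_i}\ne P\})$, whence conclusions~1 and~2 follow --- \emph{provided} the $\mpkin{}$-increment matches the $\mpkout{}$-increment, i.e.\ provided $\orig{\fp{msg}_i}=P$ exactly when $k_i$ is an original broadcast.

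That matching is the crux. The direction ``$k_i$ is an original broadcast $\Rightarrow\orig{\fp{msg}_i}=P$'' is immediate from \ref{BroadcastMessage}, which sets $\orig{\fp{msg}}\Leftarrow\self{}$. The converse amounts to: $P$ never forwards a message it originated. Indeed, a forward by $P$ pops $\fp{msg}$ from $\fque{}[\fp{rem\_proc}]$ with $\fp{rem\_proc}\ne P$, so if $\orig{\fp{msg}}=P$ then $P$ received its own message from $\fp{rem\_proc}$ and did not discard it in \ref{ReceiveMessage}, which forces $\mview{\fp{msg}}=\cview{}$ and $\vtime{}[P]<\mvt{\fp{msg}}[P]$ at that moment, i.e.\ $P$ had not yet delivered $\fp{msg}$. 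But when $P$ broadcast $\fp{msg}$ it simultaneously queued the self-copy $k_{\mathrm{self}}\in\channel{P}{P}$, and the \AxProcIV{} (with the \AxHaltIII{} as backup when no later view change occurs at $P$) forces $P$ to dequeue $k_{\mathrm{self}}$ --- and, in send order, the self-copies of all its broadcasts from that view --- before its next view change; a fixpoint argument on \ref{ScanCall} shows they are all delivered by then, so $\vtime{}[P]\ge\mvt{\fp{msg}}[P]$ holds from that point on and any later copy of $\fp{msg}$ is discarded as a duplicate (if $\mview{\fp{msg}}=\cview{}$) or as obsolete (if $\mview{\fp{msg}}<\cview{}$), a contradiction. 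Pinning down this self-delivery argument, in particular ruling out a race in which a forwarded copy overtakes the self-copy before delivery, is where the real work lies and is the step I expect to be the main obstacle. (If conclusions~1 and~2 are only needed as the inequality and equality of the \emph{sums} $\mpkin{}[P].b+\mpkin{}[P].f$ and $\fp{index}.b+\fp{index}.f$, this crux is unnecessary and the induction of the previous paragraph suffices verbatim.)
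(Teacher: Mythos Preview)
Your lockstep argument is exactly the paper's: the paper simply observes that $Q\in\cset{}(P)$ throughout (since both are original), that both counters start at zero, and that each message multicast by $P$ increments $\mpkout{}$ at $P$ and $\mpkin{}[P]$ at $Q$ ``at the $.b$ or $.f$ component, according as the message is original or forwarded.'' The paper does not justify that last clause, and you are right to isolate it as the crux: the receiver tests $\orig{\fp{msg}}=\fp{sender}$, not whether the packet was queued by \ref{BroadcastMessage} or by the forwarding loop in \ref{RemovalNotification}, so one must rule out $P$ forwarding a message with $\orig{\fp{msg}}=P$. And the component-wise statement \emph{is} used later (the donation/co-donation analysis reads off the $.f$ component in isolation), so your parenthetical escape via sums is not available.

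Your self-delivery argument for the crux is more machinery than needed and, as you yourself flag, leaves the overtaking race unresolved. A cleaner route avoids delivery and $\vtime{}$ entirely: show that no $X\ne P$ ever queues $\pkm{\fp{msg}}$ with $\orig{\fp{msg}}=P$ onto $\channel{X}{P}$. Any such $X$ would have to be forwarding out of some $\fque{}[Y_1]$ upon processing $\nr{Y_1}$; $Y_1$ in turn forwarded out of $\fque{}[Y_2]$, and so on back along the (finite) chain of effective senders to the originator $P$. The first forwarder in this chain forwards upon $\nr{P}$ and therefore has $P\notin\cset{}$; the \AxProcIII{} then guarantees that each subsequent forwarder is already at view $\ge\vdeath{P}$ when it receives the message, hence also has $P\notin\cset{}$ when it forwards. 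Thus $P$ receives $\fp{msg}$ only via the self-channel, so at $P$ the message enters only $\fque{}[P]$, which is never flushed (since $P$ never processes a removal notification for itself). The race you worry about therefore cannot arise, and no fixpoint argument on \ref{ScanCall} is needed.
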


\begin{proof}
Both $P$ and $Q$ are original. Therefore $Q$ is a member of $\cset{}(P)$ from the start (see the \ref{StartCluster} procedure). Therefore every message that was queued by $P$ prior to \fp{msg} had $Q$ in its recipient list and since channels are FIFO, all of these messages are processed by $Q$ before \fp{msg} is processed. The \mpkout{} variable in $P$ is incremented every time a message packet is multicast by $P$ (at the $.b$ or $.f$ component, according as the message is original or forwarded), and similarly $\mpkin{}[P]$ is incremented by $Q$ every time a message from $P$ is processed (at the $.b$ or $.f$ component, according as the message is original or forwarded). Initially both variables are equal to zero (at both components) at both $P$ and $Q$ (see the \ref{StartCluster} procedure). Therefore they remain at lockstep as claimed.

Almost the same argument works when $P = Q$ and $P$ is not an original process. We just have to verify two things. One, that $P \in \cset{}(P)$ from the start, as one can verify by looking at the \ref{Run} procedure. Two, that initially both \mpkout{} and $\mpkin{}[P]$ are equal at $P$. This can also be verified by looking at the \ref{Run} procedure.
\end{proof}

\begin{lem}
\label{FCountLem}
Let $P$ and $Q$ be two original processes. Suppose that $P$ sends a flush packet $k = \pkf{v}$ to $Q$. When $P$ queues the packet, it increments its \sfh{} (see the \ref{CheckFlush} procedure). If $Q$ processes the flush it increments the value of $\fvec{}[P]$. Then:
$$
\valuepre{\fvec{}[P]}{Q}{\preceiveevent{k}} < \valuepre{\sfh{}}{P}{\psendevent{k}} = \valuepost{\fvec{}[P]}{Q}{\preceiveevent{k}} = v
$$
Moreover, if $P = Q$ then the conclusion holds without the requirement that $P$ be original.

The exact same claim is true if flush is replaced by ghost throughout.
\end{lem}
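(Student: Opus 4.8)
The plan is to mirror the argument of Lemma \ref{OrigPCountLem}, noting that the present situation is in fact simpler: $P$'s flush height is placed directly into the packet and $Q$'s received--flush value is assigned directly from it, so no two--sided lockstep count is needed. I would first dispatch the two easy equalities. The equality $\valuepre{\sfh{}}{P}{\psendevent{k}} = v$ is immediate from \ref{CheckFlush}: a flush packet is queued only in the last line of that procedure, and at that point \sfh{} has just been assigned the value $\cview{} + \vgap{}$ that is written into the packet. The equality $\valuepost{\fvec{}[P]}{Q}{\preceiveevent{k}} = v$ is immediate from \ref{ReceiveFlush}, which sets $\fvec{}[\fp{sender}] = \fp{view}$ and does nothing else to that coordinate.

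The substantive part is the strict inequality $\valuepre{\fvec{}[P]}{Q}{\preceiveevent{k}} < v$. The two facts I would establish are: (i) consecutive flush packets queued by $P$ carry strictly increasing heights --- visible in \ref{CheckFlush}, which queues $\pkf{\sfh{}}$ only after raising \sfh{} past the guard $\sfh{} < \cview{}+\vgap{}$, and since \sfh{} is non--decreasing between queuings, successive flush heights strictly increase; in particular every flush height is $\ge 1$; and (ii) when $P$ and $Q$ are both original, the only writes to $\fvec{}[P]$ at $Q$ are the initialization to $0$ in \ref{StartCluster} and the assignment to the received height in \ref{ReceiveFlush} on processing a flush packet from $P$. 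For (ii) I would walk through every procedure that writes an $\fvec{}[\cdot]$ entry: \ref{Run} and the $\fvec{}[\fp{jn\_proc}]$ creation in \ref{JoinNotification} require the affected process to be freshly joining, which neither $P$ nor $Q$ is; \ref{ReceiveDonation} runs only at a joining process and \ref{ReceiveCoDonation} only at the parent of a joining process, and in both the coordinate written is a joining process's, hence not $P$'s; and \ref{RemovalNotification} for $P$ discards the coordinate entirely, after which \CAxPacket{} forbids $Q$ from dequeuing $k$ at all. Granting (i) and (ii), the $\pkf{\cdot}$ packets that $P$ sends into $\channel{P}{Q}$ form a subsequence of $P$'s flush packets, carrying strictly increasing heights, and by the FIFO axiom \AxPackEventIII{} process $Q$ dequeues them in that order; hence just before $Q$ processes $k = \pkf{v}$, the value of $\fvec{}[P]$ at $Q$ is either the initial $0$ or the height of the previously dequeued $P$--flush, either way strictly less than $v$.

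For the self case $P = Q$ with $P$ not assumed original, the argument is the same once one notes that $\channel{P}{P}$ is FIFO and that at $P$ the coordinate $\fvec{}[P]$ is written only by its initializer (\ref{StartCluster} for original $P$, or \ref{Run} for a late joiner, which sets it to the inherited value $\valuepre{\sgh{}}{P}{\joinevent{P}}$) and by \ref{ReceiveFlush} on self--flushes --- none of the joining--related writes above can touch $P$'s own coordinate, since a process does not fork itself. The one extra wrinkle is that for a late joiner the initial value of $\fvec{}[P]$ need not be $0$, so one must check that the first self--flush $P$ queues after \ref{Run} already carries a height strictly above it; this holds because \ref{Run} leaves $\sfh{}$ equal to that initial value, \sfh{} is non--decreasing, and \ref{CheckFlush} emits a flush only after strictly raising \sfh{}. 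Finally, the ghost version is word--for--word the same with \sgh{}, $\pkh{\cdot}$, \ref{ReceiveGhost} and $\gvec{}[\cdot]$ replacing their flush counterparts; the only additional remark is that donation and co--donation packets also carry ghost heights, but --- being exchanged only between a joining process and an existing member --- never travel between two original processes, nor on a self--channel, so they never disturb $\gvec{}[P]$ in the relevant cases.

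I expect the main obstacle to be exactly the bookkeeping of step (ii): enumerating every site that mutates an $\fvec{}[\cdot]$ (or $\gvec{}[\cdot]$) entry and confirming that, under the originality hypotheses (respectively, on the self--channel), none of the joining--related sites can write the coordinate $P$, together with the \CAxPacket{} edge case when $P$ is removed; plus, in the self case, the small monotonicity argument pinning down the height of the first post--\ref{Run} self--flush.
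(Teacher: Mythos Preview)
Your approach is essentially the paper's: both arguments establish a lockstep between \sfh{} at $P$ and $\fvec{}[P]$ at $Q$ via matching initial values, strictly increasing flush heights, and the FIFO property of $\channel{P}{Q}$; you are simply more meticulous about enumerating every site that could write $\fvec{}[P]$, whereas the paper just observes that $Q \in \cset{}(P)$ from the start and invokes lockstep directly. One factual slip worth fixing: \ref{ReceiveCoDonation} does not run only at the parent --- every pre-existing process (the parent included) executes \ref{JoinNotification}, sends a donation, and therefore receives a co-donation back --- but your conclusion survives unchanged, since the coordinate written there is always that of the joining sender and hence never $P$'s when $P$ is original (and never $P$'s own coordinate in the self case).
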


\begin{proof}
We prove the lemma for the flush case. The ghost case is identical using the appropriate substitutions. Both $P$ and $Q$ are original. Therefore $Q$ is a member of $\cset{}(P)$ from the start (see the \ref{StartCluster} procedure). Therefore every flush that is queued by $P$ prior to $\pkf{v}$ had $Q$ in its recipient list and since channels are FIFO, all of these flushes are processed by $Q$ before $\pkf{v}$ is processed. The \sfh{} variable in $P$ is increased to be equal to $w$ every time a flush of height $w$ is broadcast by $P$ (see the \ref{CheckFlush} procedure), and therefore $\fvec{}[P]$ is increased by $Q$ every time a flush from $P$ is processed (see the \ref{ReceiveFlush} procedure). Initially both variables are equal to zero  at both $P$ and $Q$ (see the \ref{StartCluster} procedure). Therefore they remain at lockstep as claimed.

Almost the same argument works when $P = Q$ and $P$ is not an original process. We just have to verify two things. One, that $P \in \cset{}(P)$ from the start, as one can verify by looking at the \ref{Run} procedure. Two, that initially both \sfh{} and $\fvec{}[P]$ are equal at $P$. This can also be verified by looking at the \ref{Run} procedure.
\end{proof}

\subsection{Side Effects of \cbcast{} Triggers}
As we have seen, each trigger event - a notification event, a packet processing event, or a message broadcast request event - causes a \cbcast{} callback to be invoked. Each invocation can cause zero or more packets to be queued on various channels - in other words the invocation causes side effects (see \ref{ModelOverviewSS}). We are now going to characterize the side effects of each type of trigger in detail.

\subsubsection{Side effects of message broadcast request events}
\label{FirstSideEffectSSS}
Message broadcast requests are processed through the \ref{BroadcastMessage} procedure. The following lemma details the possible side effects of this procedure call.

\begin{lem}
\label{AppEventSELem}
An invocation of the \ref{BroadcastMessage} procedure results in exactly one of the following outcomes:
\begin{itemize}
\item No additional packet queuing, if $\vgap > 0$.
\item A message packet multicast if $\vgap = 0$.
\end{itemize}
\end{lem}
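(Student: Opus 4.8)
The plan is to prove this lemma by straightforward inspection of the pseudo-code of the \ref{BroadcastMessage} procedure. The key structural observation that makes this easy is that \ref{BroadcastMessage} makes no nested \cbcast{} calls, so the side effects of an invocation are precisely the queuing events produced by its own body, and all that remains is to identify which of its statements append packets to outbound channel send queues. I would first note that the lemma applies to every invocation of the procedure, whether it originates from the processing of a message broadcast request $r \in \requestset_P$ (the direct call) or from \ref{TryToInstall} at line \ref{TTI:NoGapLaunch} when a view gap closes; in the latter case \ref{TryToInstall} has already exited its $\vgap{} > 0$ loop, so $\vgap{} = 0$ holds on entry and the analysis below still applies.

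I would then dispose of the case $\vgap{} > 0$. Here the only statement that executes is line \ref{BM:LaunchQueue}, which appends \fp{msg} to the tail of \lque{}. Since \lque{} is an internal state variable (see \ref{ProcessState}), this changes the process state but does not touch the send queue of any channel, and therefore produces no queuing event. Hence there is no additional packet queuing in this case, as claimed.

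Next I would treat the case $\vgap{} = 0$, walking through the $\vgap{} = 0$ branch statement by statement. Incrementing $\mpkout{}.b$, forming the auxiliary vector $\vtime{}'$, setting the metadata stamps $\orig{\fp{msg}}$, $\mview{\fp{msg}}$ and $\mvt{\fp{msg}}$, forming \se{index} and \se{iset}[], and adding the triple $\langle \fp{msg}, \se{index}, \se{iset}[] \rangle$ to \bwset{} are all manipulations of internal state and append no packet to any channel. The sole statement that generates a queuing event is the multicast at line \ref{BM:Members}, which queues the message packet $\pkm{\fp{msg}}$ to the members of \cset{}. By the model's notion of a multicast (Section \ref{ModelOverviewSS} and the \AxPackEventII{}), this is a single side-effect event $\psendevent{k}$ whose multicast set is $\{ k \in \channel{P}{Q} \,|\, Q \in \cset{} \}$; and since \cset{} always contains the local process \self{} — a consequence of Lemma \ref{OmnibusCBCASTLem}(\ref{OCL:lset}), which characterizes \cset{} as \lset{} with the uncontacted processes removed and places \self{} inside the former but outside the latter — this multicast set is non-empty, so the event is a genuine message packet multicast.

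Finally, because the procedure branches on the single test $\vgap{} > 0$, every invocation falls into exactly one of the two cases, which yields the ``exactly one of the following outcomes'' phrasing. I do not anticipate any real obstacle: the only points needing a moment's care are the conventions that a single queue-to-\cset{} statement counts as one multicast side-effect event rather than one event per recipient, and that \cset{} is never empty at the moment of a broadcast — both immediate from the results already established.
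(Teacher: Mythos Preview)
Your proof is correct and follows the same approach as the paper, which simply states that the result is ``obvious from the code of \ref{BroadcastMessage}.'' Your version is considerably more detailed---in particular your verification that \cset{} is nonempty via Lemma~\ref{OmnibusCBCASTLem}(\ref{OCL:lset}) is a nice touch the paper omits---but the underlying method is identical.
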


\begin{proof}
Obvious from the code of \ref{BroadcastMessage}.
\end{proof}

\subsubsection{Side effects of view change notifications}
A view change notification from \gms{} is processed either through the \ref{JoinNotification} or the \ref{RemovalNotification} procedures, depending on the type of view change. In addition, a joining process starts out life with an exact replica of the state of its parent, and immediately invokes the \ref{Run} procedure. The following lemma details the possible effects of these calls.

\begin{lem}
\label{ViewChangeSELem}
\begin{enumerate}
\item An invocation of the \ref{JoinNotification} procedure results in the queuing of a donation packet, followed by exactly one of the following outcomes:
\begin{itemize}
\item No additional packet queuing, if \fwset{} is not empty.
\item A ghost packet multicast if \fwset{} is empty and \bwset{} is not empty.
\item A ghost packet multicast followed by a flush packet multicast, if both \bwset{} and \fwset{} are empty.
\end{itemize}
\item An invocation of the \ref{RemovalNotification} procedure when process $P$ is removed, results in exactly one of the following outcomes:
\begin{itemize}
\item A sequence of message broadcasts, one per message in $\fque{}[P]$, if $\fque{}[P]$ is not empty.
\item No additional packet queuing if $\fque{}[P]$ is empty, and \fwset{} is non-empty when \ref{CheckFlush} is called.
\item A ghost packet multicast if \fwset{} is empty and \bwset{} is not empty when \ref{CheckFlush} is called.
\item A ghost packet multicast followed by a flush packet multicast if both \bwset{} and \fwset{} are empty when \ref{CheckFlush} is called.
\end{itemize}
\item An invocation of the \ref{Run} procedure results in exactly one of the following outcomes:
\begin{itemize}
\item No additional packet queuing if \fwset{} is non-empty.
\item A ghost packet multicast followed by a flush packet multicast if \fwset{} is empty.
\end{itemize}
\end{enumerate}
\end{lem}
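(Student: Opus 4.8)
The plan is to prove the lemma by direct inspection of the three notification procedures \ref{JoinNotification}, \ref{RemovalNotification} and \ref{Run}, with the Omnibus Lemma (Lemma \ref{OmnibusCBCASTLem}) supplying the one ingredient that the pseudo-code does not reveal on its own. First I would record the queuing that is explicit and unconditional in each body: \ref{JoinNotification} queues a single donation packet and then calls \ref{CheckFlush}; \ref{RemovalNotification} queues one message multicast per message popped off $\fque{}[\fp{rem\_proc}]$ and then calls \ref{CheckFlush}; and \ref{Run} queues nothing before its \ref{CheckFlush} call. Since none of the three calls \ref{TryToInstall}, and \ref{CheckFlush} does not either, no view is installed during any of them, so every remaining side effect must be produced inside \ref{CheckFlush}.

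The heart of the argument is the claim that at the moment each of these procedures reaches \ref{CheckFlush}, both $\sgh{}$ and $\sfh{}$ are strictly below $\cview{}+\vgap{}$. Each of the three procedures increments $\vgap{}$ near the top and then leaves $\sgh{}$ untouched until \ref{CheckFlush} (\ref{Run} in addition sets $\sfh{} := \sgh{}$, so it is enough to bound $\sgh{}$). Let $e$ be the triggering notification event. For \ref{JoinNotification} and \ref{RemovalNotification}, $e = \pnotifyevent{i}{P}$ with $i > j(P)$, so $e$ has an immediate predecessor $e'$ in $\eventset_P$; by Lemma \ref{OmnibusCBCASTLem}(\ref{OCL:height}) the value of $\cview{}+\vgap{}$ just after $e'$ is $\view(e') = \view(e)-1$, and by Lemma \ref{OmnibusCBCASTLem}(\ref{OCL:sendgfA}) at that point $\sfh{} \le \sgh{} \le \view(e)-1$; incrementing $\vgap{}$ then raises $\cview{}+\vgap{}$ to $\view(e)$, leaving both counters strictly low. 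For \ref{Run}, $e = \joinevent{P}$ and the relevant earlier state is the parent $E$'s state at the conclusion of the transaction preceding $\pnotifyevent{j(P)}{E}$, read via the conventions of Definition \ref{PrePostDef}; the same two clauses of the Omnibus Lemma applied to $E$ give $\sfh{} \le \sgh{} \le j(P)-1$ there, and again the $\vgap{}$ increment makes $\sgh{}$, hence $\sfh{}$, strictly low. Given this, a direct reading of \ref{CheckFlush} shows: it returns without queuing if $\fwset{} \ne \emptyset$; otherwise it necessarily queues a ghost multicast (since $\sgh{}$ is low), and then, if $\bwset{} = \emptyset$, it necessarily queues a flush multicast as well (the ghost step does not move $\sfh{}$).

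It then remains to read off $\fwset{}$ and $\bwset{}$ at the \ref{CheckFlush} call in each case. \ref{JoinNotification} only adds instability-set coordinates and neither inserts nor deletes a $\wset{}$ record, so $\fwset{}$ and $\bwset{}$ are as on entry; together with the donation packet this gives the three stated outcomes. For \ref{RemovalNotification}: if $\fque{}[\fp{rem\_proc}] \ne \emptyset$, every popped message is added to $\fwset{}$ with instability set $\mpkin{}[]$, which is nonempty because it is indexed by $\lset{}$ and $P \in \lset{}$ (Lemma \ref{OmnibusCBCASTLem}(\ref{OCL:lset})), so $\fwset{} \ne \emptyset$ when \ref{CheckFlush} runs, \ref{CheckFlush} queues nothing, and the net effect is exactly the sequence of message broadcasts; if $\fque{}[\fp{rem\_proc}] = \emptyset$, no multicast precedes \ref{CheckFlush} and the remaining cases follow from the previous paragraph applied to the then-current $\fwset{}$ and $\bwset{}$. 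For \ref{Run}: the body sets $\bwset{} := \emptyset$ and merely rewrites existing $\fwset{}$ records, so $\bwset{} = \emptyset$ and $\fwset{}$ is whatever was inherited, giving the two stated outcomes.

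The step I expect to be the main obstacle is the middle one --- the ``$\sgh{}, \sfh{}$ strictly low'' claim --- since it forces careful bookkeeping with the pre/post-value conventions of Definition \ref{PrePostDef}, in particular for \ref{Run} where the relevant earlier state lives at the parent process, and it relies on invoking exactly the right clauses of the Omnibus Lemma; the rest is a routine walk through the pseudo-code.
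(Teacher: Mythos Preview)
Your proof is correct and follows the same code-tracing approach as the paper, though you supply considerably more detail: the paper's own proof is essentially one line (``easy to verify by tracing the code path''), with only the observation that a nonempty $\fque{}[P]$ forces $\fwset{}\ne\emptyset$ at the \ref{CheckFlush} call made explicit, whereas you additionally justify via the Omnibus Lemma that $\sgh{}$ and $\sfh{}$ are strictly below $\cview{}+\vgap{}$ when \ref{CheckFlush} is reached --- a point the paper leaves implicit. One small slip: in your \ref{RemovalNotification} argument you write ``$P\in\lset{}$'' to conclude $\mpkin{}[]$ is nonempty, but $P$ is the \emph{removed} process and has already been deleted from $\lset{}$; the correct witness is the local process $\self{}\in\lset{}$, and in any case the whole instability-set argument is unnecessary since the records are added to $\fwset{}$ and nothing removes them before \ref{CheckFlush}.
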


\begin{proof}
All these outcomes are easy to verify by tracing the code path in the respective calls. In the case of \ref{RemovalNotification} it is important to notice that if $\fque{}[P]$ is not empty, then the forwarded messages are placed in \fwset{}, which as a result is not empty when \ref{CheckFlush} is called.
\end{proof}

\subsubsection{Side effects of message and acknowledgement packet receipts}
A message receipt always results in the sending of an acknowledgement packet in response. An acknowledgement packet receipt causes a stabilization of a message with respect to the process that sent the packet. If this stabilization causes either \fwset{} or \bwset{} to empty out, it can cause the multicasting of ghost and flush packets.

\begin{lem}
\label{MessageAckSELem}
\begin{enumerate}
\item An invocation of the \ref{ReceiveMessage} procedure results in the queuing of an acknowledgement packet targeted at the sender of the message.
\item An invocation of the \ref{ReceiveAck} procedure results in exactly one of the following outcomes:
\begin{itemize}
\item No additional packet queuing if $\vgap{} = 0$.
\item No additional packet queuing if the acknowledgement does not cause either \fwset{} or \bwset{} to empty out, even if one or both were already empty.
\item No additional packet queuing if the acknowledgement causes \bwset{} to empty out and \fwset{} is non-empty.
\item A ghost packet multicast if $\vgap{} > 0$ and the acknowledgement causes \fwset{} to empty out and \bwset{} is non-empty.
\item A flush packet multicast if $\vgap{} > 0$ and the acknowledgement causes \bwset{} to empty out and \fwset{} is empty.
\item A ghost packet multicast followed by a flush packet multicast, if $\vgap{} > 0$ and the acknowledgement causes \fwset{} to empty out and \bwset{} is empty.
\end{itemize}
\end{enumerate}
\end{lem}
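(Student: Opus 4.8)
The plan is to trace the code of \ref{ReceiveMessage} and \ref{ReceiveAck}, using the \cbcast{} Omnibus Lemma (Lemma~\ref{OmnibusCBCASTLem}) — in particular claims~(\ref{OCL:sendgfA}), (\ref{OCL:sendgfB}) and (\ref{OCL:gf}) — to control the single call to \ref{CheckFlush} that \ref{ReceiveAck} may make.

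Part~1 is immediate. The first statement of \ref{ReceiveMessage} queues one acknowledgement packet to the sender, and no other statement queues a packet: the counter increments, the three duplicate-discard branches, and the insertions into \rset{} and \fque{} are not queuing events, and the only routine \ref{ReceiveMessage} invokes, \ref{ScanCall}, queues nothing — it merely increments \vtime{}, removes messages from \rset{} and calls the \ulp{} callback ApplyMessage, which by the rules of the \ulp{} interface (Section~\ref{UserApplication}) cannot broadcast and has no other way to produce a side effect. Hence the sole side effect of \ref{ReceiveMessage} is the one acknowledgement.

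For Part~2 I would first note that \ref{ReceiveAck} queues no packet on its own: it acts only when the incoming acknowledgement empties the instability set of the acknowledged message, in which case it removes that message $\fp{msg}$ from \wset{} and calls \ref{CheckFlush}(). If no message stabilizes, \ref{CheckFlush} is not called and nothing is queued. Otherwise, since one acknowledgement removes at most the single message $\fp{msg}$, and \wset{} is the disjoint union of \bwset{} and \fwset{}, the removal shrinks exactly one of \bwset{}, \fwset{} and empties at most one of them; in particular these two sets cannot both pass from nonempty to empty. I would then case-split on whether $\vgap{} = 0$ and on which of \bwset{}, \fwset{} contains $\fp{msg}$ and whether that set becomes empty, reading the state at the onset of the \ref{ReceiveAck} transaction — where $\fp{msg}$ is still in \wset{} — from the Omnibus Lemma applied to the preceding transaction, and reading \bwset{} and \fwset{} after the removal from the code. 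When $\vgap{} = 0$, claims~(\ref{OCL:sendgfA}) and~(\ref{OCL:gf}) force $\sgh{} = \sfh{} = \cview{} + \vgap{}$, so both guards in \ref{CheckFlush} fail and nothing is queued. When $\vgap{} > 0$, claim~(\ref{OCL:sendgfB}) says that $\sgh{} = \cview{} + \vgap{}$ iff \fwset{} is empty and $\sfh{} = \cview{} + \vgap{}$ iff \wset{} is empty; applying this to the onset state gives: if $\fp{msg} \in \fwset{}$ and \fwset{} becomes empty then $\sgh{}$ was strictly low, so a ghost is multicast, followed by a flush precisely when \bwset{} is empty (since \wset{} was nonempty, $\sfh{}$ was low too); if $\fp{msg} \in \fwset{}$ but \fwset{} stays nonempty, or if $\fp{msg} \in \bwset{}$ while \fwset{} is nonempty, \ref{CheckFlush} returns at its first guard and nothing is queued; and if $\fp{msg} \in \bwset{}$ with \fwset{} empty, then $\sgh{}$ was already high so no ghost is sent, while a flush is multicast exactly when \bwset{} becomes empty. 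Matching these subcases against the displayed list gives exactly the enumerated outcomes, the overlaps among the "no additional packet queuing" clauses being harmless.

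The only real care required is the \ref{CheckFlush} bookkeeping in Part~2: the removal of $\fp{msg}$ from \wset{} occurs before \ref{CheckFlush} runs, so one reasons about $\sgh{}$ and $\sfh{}$ from their onset values (supplied by the Omnibus Lemma) but about \fwset{} and \bwset{} from their post-removal values; and the $\vgap{} = 0$ case is settled by the equality clause of claim~(\ref{OCL:gf}) applied to the local process, which is live at the triggering event. Everything else is mechanical code tracing.
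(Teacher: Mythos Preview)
Your proposal is correct and follows essentially the same approach as the paper: the paper's own proof is a one-line appeal to ``direct observation and Lemma~\ref{OmnibusCBCASTLem}(\ref{OCL:gf} and \ref{OCL:sendgfB}),'' and your argument is a careful unpacking of exactly that, tracing \ref{ReceiveMessage}, \ref{ReceiveAck} and \ref{CheckFlush} and invoking the Omnibus Lemma to pin down \sgh{} and \sfh{} relative to $\cview{}+\vgap{}$. Your explicit use of claim~(\ref{OCL:sendgfA}) for the $\vgap{}=0$ case is appropriate (the paper's citation omits it, but it is needed to squeeze \sfh{} and \sgh{} between $\fvec{}[P]$ and $\cview{}+\vgap{}$), and your remark that one reads \sgh{}, \sfh{} from the onset state but \fwset{}, \bwset{} from the post-removal state is precisely the bookkeeping the paper leaves implicit.
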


\begin{proof}
This follows directly from direct observation and from Lemma \ref{OmnibusCBCASTLem}(\ref{OCL:gf} and \ref{OCL:sendgfB}).
\end{proof}

\subsubsection{Side effects of ghost and flush packet receipts}
A ghost packet receipt does not cause any other packets to be sent. A flush packet, however, may cause one or more views to be installed. If that happens then one or more original messages from \lque{} may be broadcast.

\begin{lem}
\label{GhostFlushSELem}
\begin{enumerate}
\item An invocation of the \ref{ReceiveGhost} procedure does not result in additional packet queuing.
\item An invocation of the \ref{ReceiveFlush} procedure results in exactly one of the following outcomes:
\begin{itemize}
\item No additional packet queuing if no views are installed or if \lque{} is empty.
\item One or more message packet multicasts if one or more pending views are installed and \lque{} is not empty.
\end{itemize}
\end{enumerate}
\end{lem}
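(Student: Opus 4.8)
The plan is to prove both parts by a direct trace of the pseudo-code, with a single appeal to the \cbcast{} Omnibus Lemma to control the behaviour of \lque{}. Part~1 is immediate: the \ref{ReceiveGhost} procedure consists solely of the assignment $\gvec{}[\fp{sender}] = \fp{view}$, which queues no packet, so it can be dispatched in one line. For Part~2 I would observe that \ref{ReceiveFlush} updates $\fvec{}[\fp{sender}]$ (no queuing) and then calls \ref{TryToInstall}, so the entire question reduces to characterizing the side effects of \ref{TryToInstall}. That procedure has exactly three relevant control points: (a) the initial loop over \lset{} may \emph{return} early if some $\fvec{}[\fp{pid}] < \cview{}+\vgap{}$, installing no view and queuing nothing; (b) the ``$\vgap{}>0$'' loop installs the pending views; and (c) the ``$\lque{}\ne[]$'' loop replays queued broadcasts.

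First I would argue that the view-installation loop~(b) is packet-silent. Within each iteration the only subroutine calls are to the \ulp{} callbacks \textsf{ApplyJoin}, \textsf{ApplyRemoval}, and (through \ref{ScanCall}) \textsf{ApplyMessage}, together with the recursive \ref{ScanCall} itself; by the \ulp{} interface rules of Section \ref{UserApplication} these callbacks may not invoke \ref{BroadcastMessage}, and \ref{ScanCall} queues nothing on its own, while the obsolete-message purges from \rset{} and \fque{} are pure local state changes. The asynchronous \texttt{execute Main} statement merely schedules the application thread; any broadcast it eventually performs is a separate message broadcast request event, not a side effect of the current trigger. Hence loop~(b) queues no packets.

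Next I would analyze loop~(c). When loop~(b) terminates we have $\vgap{}=0$, so every invocation of \ref{BroadcastMessage} inside loop~(c) runs with $\vgap{}=0$ and, by Lemma \ref{AppEventSELem}, produces exactly one message packet multicast and does not re-enlarge \lque{}. Thus loop~(c) yields one or more message packet multicasts when \lque{} is non-empty at that point, and nothing when it is empty. To close the case analysis I would note that loop~(c) runs with a non-empty \lque{} only if at least one view was actually installed: if no view is installed then either we returned in case~(a), queuing nothing, or we entered \ref{TryToInstall} already with $\vgap{}=0$, in which case Lemma \ref{OmnibusCBCASTLem}(\ref{OCL:lque}) forces $\lque{}=\emptyset$ and loop~(c) is vacuous; conversely, if views are installed and \lque{} is non-empty, loop~(c) produces the asserted multicasts, while if \lque{} is empty it produces nothing. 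These four subcases are exactly the two bullets in the statement.

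The code tracing is routine; the only substantive ingredients are the use of Lemma \ref{OmnibusCBCASTLem}(\ref{OCL:lque}) to rule out the spurious ``no views installed yet \lque{} non-empty'' configuration, and the packet-silence of view installation and message delivery, which is inherited from the \ulp{} interface restrictions rather than re-proved. I do not expect a genuine obstacle: once those two observations are in place the lemma is a pure bookkeeping consequence of facts already established.
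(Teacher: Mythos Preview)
Your proposal is correct and follows the same approach as the paper: direct code inspection together with Lemma~\ref{OmnibusCBCASTLem}(\ref{OCL:lque}) to rule out the ``no views installed yet $\lque{}\ne\emptyset$'' case. The paper's own proof is a single sentence invoking exactly these two ingredients, so your write-up is simply a more detailed unrolling of the same argument.
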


\begin{proof}
This follows from Lemma \ref{OmnibusCBCASTLem}(\ref{OCL:lque}) and from direct inspection of the code of the relevant procedures.
\end{proof}

\subsubsection{Side effects of donation and co-donation packet receipts}
\label{LastSideEffectSSS}
The side effects of invoking the \ref{ReceiveDonation} procedure are pretty straightforward - first a co-donation packet is sent and then a sequence of \ref{ReceiveMessage} and \ref{ReceiveAck} invocations are performed, each with its side effects that have already been characterized. The side effects of invoking the \ref{ReceiveCoDonation} procedure are a bit more subtle, because this procedure has an additional call to \ref{TryToInstall} at the end and there is an interplay between its side effects and the side effects of the rest of the procedure.

\begin{lem}
\label{DonationSELem}
\begin{enumerate}
\item An invocation of the \ref{ReceiveDonation} procedure results in the queuing of a co-donation packet back to the sender of the donation packet, followed by a sequence of side effects for each of the \ref{ReceiveMessage} and \ref{ReceiveAck} invocations.
\item An invocation of the \ref{ReceiveCoDonation} procedure results in exactly one of the following outcomes:
\begin{itemize}
\item Zero or more invocations of \ref{ReceiveMessage} or \ref{ReceiveAck} occur with their side effects, and \ref{TryToInstall} fails to install a new view and has no side effects.
\item No invocations of either \ref{ReceiveMessage} or \ref{ReceiveAck} occur, and \ref{TryToInstall} succeeds in installing all the views. As a result zero or more original messages from \lque{} are broadcast.
\end{itemize}
\end{enumerate}
\end{lem}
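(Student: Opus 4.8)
The plan is to handle the two parts separately. Part 1 is immediate from the code of \ref{ReceiveDonation}: the procedure builds a co-donation record from the current state, queues exactly one co-donation packet back to \fp{sender}, then iterates over the sorted set $\se{UNT}$ whose loop body contains no queuing statement other than the calls to \ref{ReceiveMessage} and \ref{ReceiveAck}, and finally overwrites $\gvec{}[\fp{sender}]$ and $\fvec{}[\fp{sender}]$ (which queue nothing). Hence an invocation produces one co-donation multicast followed by the concatenated side effects of some sequence of \ref{ReceiveMessage}/\ref{ReceiveAck} calls, exactly as stated.

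For Part 2 the only non-routine content is that the two listed outcomes are exhaustive and mutually exclusive. I would first read two facts off the code of \ref{TryToInstall}: if its initial \fvec{}-check fails it touches no variable and has no side effect; and if the check passes it necessarily runs its install loop to completion (the loop holds $\cview{}+\vgap{}$ constant, hence cannot re-close the check) and then drains \lque{}, each drained message producing one message packet multicast by Lemma \ref{AppEventSELem} since $\vgap{}=0$ at that point. So ``\ref{TryToInstall} succeeds'' means exactly ``installs all pending views and broadcasts zero or more \lque{} messages'', and the whole dichotomy reduces to one claim: \emph{if any \ref{ReceiveMessage} or \ref{ReceiveAck} call is made inside \ref{ReceiveCoDonation}, then the closing \ref{TryToInstall} call fails its check}.

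To prove that claim, write $J$ for \fp{sender} (a process that has just joined, with parent $E$), $P$ for the local process, $e$ for the co-donation-processing event at $P$, and $e_J$ for the event at $J$ at which $J$ processed $P$'s donation; the co-donation is queued inside $\trans(e_J)$, so $\view(e_J)\le\view(e)$ by Lemma \ref{KOrderLem}, and its carried $\sgh{},\sfh{}$ are $J$'s values at the onset of $e_J$. The auxiliary facts driving everything are $\valuepre{\vgap{}}{J}{e_J}>0$ and $\valuepre{\vgap{}}{P}{e}>0$: the donation is the first packet on $\channel{P}{J}$ and the co-donation the first on $\channel{J}{P}$ (Process Liveness Axiom, together with the fact that a joiner's \ulp{} thread --- hence any broadcast --- starts only once it installs its own join view, and the Conforming Packet Axiom excluding an intervening removal notification), while $\fvec{}[P]$ at $J$ and $\fvec{}[J]$ at $P$ are initialized, via the inheritance performed in \ref{JoinNotification} and Lemma \ref{OmnibusCBCASTLem}(\ref{OCL:gf}) applied at $E$, to values strictly below $j(J)$; so neither process can have installed view $j(J)$ before $e_J$, resp.\ $e$. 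Now split on the inner call. A \ref{ReceiveMessage} call occurs only if $\se{UNT}_g\ne\emptyset$, i.e.\ if the co-donated wait set $\valuepre{\wset{}}{J}{e_J}$ is non-empty; Lemma \ref{OmnibusCBCASTLem}(\ref{OCL:sendgfB}) then forces $\fp{co\_donation}.\sfh{}<\view(e_J)\le\view(e)$, so once \ref{ReceiveCoDonation} copies it into $\fvec{}[J]$ we have $\fvec{}[J]<\view(e)=\valuepost{(\cview{}+\vgap{})}{P}{e}$ (Lemma \ref{OmnibusCBCASTLem}(\ref{OCL:height}), the loop leaving $\cview{}+\vgap{}$ fixed), with $J$ still in \lset{} (added in \ref{JoinNotification}, not yet removed), so the check fails on $\fp{pid}=J$. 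Symmetrically, a \ref{ReceiveAck} call occurs only if $P$'s own wait set $\valuepre{\wset{}}{P}{e}$ is non-empty, which by Lemma \ref{OmnibusCBCASTLem}(\ref{OCL:sendgfB}) and (\ref{OCL:sendgfA}) keeps the self-entry $\fvec{}[P]\le\sfh{}$ below $\view(e)=\valuepost{(\cview{}+\vgap{})}{P}{e}$; since neither the inner calls nor \ref{ReceiveCoDonation} itself ever write $\fvec{}[P]$, the check fails on $\fp{pid}=P$. Either way \ref{TryToInstall} is inert, which is outcome A; and when no inner call is made, \ref{TryToInstall} either still fails (outcome A with zero inner calls) or succeeds (outcome B). This covers every execution.

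I expect the pair of $\vgap{}>0$ facts, with the accompanying ``first packet in each direction / no intervening removal'' bookkeeping, to be the main obstacle; once those are pinned down the rest is routine tracking of pre- and post-values against Lemma \ref{OmnibusCBCASTLem}.
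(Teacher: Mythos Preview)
Your argument is correct and tracks the paper's proof closely; both hinge on Lemma~\ref{OmnibusCBCASTLem}(\ref{OCL:sendgfA},\ref{OCL:sendgfB}) to tie a non-empty wait set to a low \sfh{} and hence to a failing \fvec{}-check. The only real differences are organizational: you prove the forward implication (an inner call forces \ref{TryToInstall} to fail) while the paper proves the contrapositive (success forces both $\se{UNT}_p$ and $\se{UNT}_g$ to be empty); and you establish $\vgap{}>0$ at both endpoints up front via the first-packet/inheritance bookkeeping, whereas the paper disposes of $\vgap{}=0$ at $P$ as a trivial outcome~A case and obtains $\vgap{}>0$ at the joiner via Corollary~\ref{ContactSetCor}. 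Your route avoids that corollary at the cost of the first-packet argument, but neither choice changes the substance.
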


\begin{proof}
The donation case is self evident. In the co-donation case, if $\vgap{} = 0$  (we will see later that this case does not actually happen) then \ref{TryToInstall} does not install any views and it follows from Lemma \ref{OmnibusCBCASTLem}(\ref{OCL:lque}) that \lque{} is empty and so \ref{TryToInstall} has no side effects.

Therefore the only non-trivial case has to do with a co-donation that starts with $\vgap{} > 0$ and results in a successful new view installation. Suppose that the sender of the co-donation is $G$ and the receiver is $P$. Then a successful installation requires, at $P$:
\begin{align*}
\fvec{}[G] & = \cview{} + \vgap{} \\
\fvec{}[P] & = \cview{} + \vgap{}
\end{align*}

From Lemma \ref{OmnibusCBCASTLem}(\ref{OCL:sendgfA}) we know that at $P$
$$
\fvec{}[P] \le \sfh{} \le \cview{} + \vgap{}
$$
Therefore $\fvec{}[P] = \sfh{} = \cview{} + \vgap{}$ and since $\vgap{} > 0$ it follows from the same lemma (part \ref{OCL:sendgfB}) that \wset{} is empty and as a result $\se{UNT}_p$ is empty.

Looking at the \ref{ReceiveCoDonation} procedure one sees that $P$ updates its value of $\fvec{}[G]$ just before invoking \ref{TryToInstall}, setting it to be equal to the co-donated value of \sfh{} in $G$. Therefore, at the moment that $G$ sends the co-donation, it has $\sfh{}(G) = \cview{}(P) + \vgap{}(P)$. In other words, if we denote by $d$ the donation packet that $P$ sends to $G$ and by $c$ the co-donation packet that $G$ sends to $P$ then
$$
\valuepre{\sfh{}}{G}{\preceiveevent{d}} = \valuepre{(\cview{} + \vgap{})}{P}{\preceiveevent{c}}
$$
The \AxProcIII{} and Lemma \ref{OmnibusCBCASTLem}(\ref{OCL:height}) imply that the co-donation packet cannot be processed when $P$ has an overall view height that is lower than that of $G$ and therefore
$$
\valuepre{(\cview{} + \vgap{})}{P}{\preceiveevent{c}} \ge \valuepre{(\cview{} + \vgap{})}{G}{\preceiveevent{d}}
$$
therefore
$$
\valuepre{\sfh{}}{G}{\preceiveevent{d}} \ge \valuepre{(\cview{} + \vgap{})}{G}{\preceiveevent{d}}
$$
From Lemma \ref{OmnibusCBCASTLem}(\ref{OCL:sendgfA}) it follows that
$$
\valuepre{\sfh{}}{G}{\preceiveevent{d}} \le \valuepre{(\cview{} + \vgap{})}{G}{\preceiveevent{d}}
$$
and so we can conclude that
$$
\valuepre{\sfh{}}{G}{\preceiveevent{d}} = \valuepre{(\cview{} + \vgap{})}{G}{\preceiveevent{d}}
$$
From Lemma \ref{OmnibusCBCASTLem}(\ref{OCL:lset} and \ref{OCL:height}) it follows that $P \notin \valuepre{\cset{}}{G}{\preceiveevent{d}}$. From Corollary \ref{ContactSetCor} it follows that $\valuepre{\fvec{}[P]}{G}{\preceiveevent{d}} < \valuepre{(\cview{} + \vgap{})}{G}{\preceiveevent{d}}$. From Lemma \ref{OmnibusCBCASTLem}(\ref{OCL:gf}) it follows that $\valuepre{\vgap{}}{G}{\preceiveevent{d}} > 0$.

Now we can use Lemma \ref{OmnibusCBCASTLem}(\ref{OCL:sendgfB}) to conclude that $\valuepre{\wset{}}{G}{\preceiveevent{d}}$ is empty. Therefore $P$ receives an empty \wset{} from $G$ as part of its co-donation, and therefore $\se{UNT}_g$ is empty and we are done.
\end{proof}

\subsection{\cbcast{} is Vacuum Convergent}
In our analysis of \cbcast{} we want to take advantage of the main finding of Section \ref{UnderlyingModelSec}, namely the \FaultThm{} (Theorem \ref{FaultThm}), and limit our attention to transactional histories. In order to do that we have to prove that \cbcast{} is vacuum convergent (see Definitions \ref{VacConvDef}).
\begin{thm}
The \cbcast{} protocol is vacuum convergent.
\end{thm}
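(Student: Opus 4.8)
The target is to run the vacuum loop (Definition \ref{VacuumDef}) at an arbitrary halting process $P$ of an arbitrary lossless conforming history and to show that it halts after finitely many steps. The outer structure is immediate. Because $P$ halts, the \CAxView{} makes $\vdeath{P}$ finite, and the only place the loop variable $v$ changes inside the cycle is step (\ref{VacProcessN}), where it strictly increases; the loop exits as soon as $v = \vdeath{P}$. Hence step (\ref{VacProcessH}) runs once, step (\ref{VacProcessT}) runs once, and the cycle of steps (\ref{VacProcessA})--(\ref{VacProcessN}) runs at most $\vdeath{P} - v_0$ times, which is finite (here $v_0$ is $v$'s value once the cycle begins). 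It therefore suffices to show that each of steps (\ref{VacProcessT}), (\ref{VacProcessA}), (\ref{VacProcessP}), (\ref{VacProcessSelf}) and (\ref{VacProcessN}) performs only a bounded amount of work on each pass.

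For the ``easy'' steps the plan is to read bounds off the side-effect characterisations together with the finiteness results already proved. Steps (\ref{VacProcessT}) and (\ref{VacProcessN}) each finish processing a single item --- the interrupted trigger, or one \gms{} notification --- and Lemmas \ref{AppEventSELem}, \ref{ViewChangeSELem}, \ref{MessageAckSELem}, \ref{GhostFlushSELem} and \ref{DonationSELem} show that a single trigger multicasts only finitely many packets (in particular \ref{RemovalNotification} forwards only the finitely many messages held in one forwarding queue). For step (\ref{VacProcessA}) I use that $\requestset_P$ is finite because $P$ halts, so across the finitely many passes only finitely many requests are dequeued, each processed by \ref{BroadcastMessage} with at most one message multicast (Lemma \ref{AppEventSELem}). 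Step (\ref{VacProcessP}) runs only when $v = \vdeath{P}-1$, hence at most once; the packets it drains are the received-but-undequeued packets in $P$'s non-self channels, which form a subset of the finitely many packets counted by Lemma \ref{FinMissLem} (the history being lossless, none of these are drops), and each is processed with only finitely many side effects.

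The genuine content is step (\ref{VacProcessSelf}), which repeatedly drains the self channel $\channel{P}{P}$ while possibly creating fresh self packets. I would argue from the following observations. (i) No donation or co-donation packet is ever addressed to its own sender, so the only self packets are message, acknowledgement, ghost and flush packets. (ii) Throughout one execution of step (\ref{VacProcessSelf}) the loop variable $v$ is fixed, hence so is $\cview{}+\vgap{}$ by Lemma \ref{OmnibusCBCASTLem}(\ref{OCL:height}), and $\vgap{}$ is non-increasing since \ref{TryToInstall} only installs views while holding that sum constant. (iii) \sgh{} and \sfh{} are monotone non-decreasing and are bounded by $\cview{}+\vgap{}=v$ (Lemma \ref{OmnibusCBCASTLem}(\ref{OCL:sendgfA})), so \ref{CheckFlush} multicasts at most one ghost and one flush packet onto the self channel during the whole step. (iv) \lque{} is finite (it is filled only from $\requestset_P$ and the pre-halt state) and, once $\vgap{}$ reaches $0$ inside the step, it is drained by \ref{TryToInstall} and stays empty by Lemma \ref{OmnibusCBCASTLem}(\ref{OCL:lque}), so \ref{TryToInstall} multicasts only finitely many self message packets. (v) Each self message packet processed by \ref{ReceiveMessage} produces exactly one self acknowledgement (and \ref{ScanCall} multicasts nothing), while a self acknowledgement triggers only \ref{CheckFlush}, already covered by (iii). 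Combining (i)--(v), only finitely many self packets of each type ever enter the self receive queue during step (\ref{VacProcessSelf}), so the drain terminates.

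The step I expect to be the real obstacle is exactly the termination of the self-channel drain: ruling out an unbounded ping-pong in which processing a self message yields a self acknowledgement, which yields a self flush, whose installation empties \lque{} and broadcasts fresh self messages, and so on forever. The resolution is that monotonicity of \sgh{}/\sfh{} under the fixed ceiling $v$ caps the ghost and flush traffic, and monotonicity of $\vgap{}$ inside step (\ref{VacProcessSelf}) caps the number of \lque{} flushes at one --- precisely the facts that Lemma \ref{OmnibusCBCASTLem} and the side-effect lemmas were designed to supply. Once every step is seen to do bounded work on each of the finitely many passes, the vacuum loop terminates, so \cbcast{} is vacuum convergent.
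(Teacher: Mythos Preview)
Your proposal is correct and follows essentially the same approach as the paper: both arguments reduce the question to bounding the self-channel drain in step~(\ref{VacProcessSelf}), and both do so by observing that donations/co-donations never go on the self channel, that \sgh{} and \sfh{} are monotone and capped by $\cview{}+\vgap{}$ (so only finitely many ghost/flush multicasts), that message multicasts are fed only by the finite \lque{}/request supply and by the already-bounded flush traffic, and that acknowledgements are bounded by messages. The only organisational difference is that the paper frames the argument as a proof by contradiction (assume non-termination, observe that the loop must eventually degenerate to a perpetual step~(\ref{VacProcessSelf}), and bound the packet types there in layers), whereas you bound each step's work directly on every pass; the substance is identical.
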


\begin{proof}
Let $P$ be a halting process in a \cbcast{} based conforming history. Look at step (\ref{VacProcessN}) of the vacuum loop (see Definition \ref{VacuumDef}). By the \CAxView{} process $P$ has a finite view interval and therefore this step can only increment $v$ a finite number of times. Afterwards either the loop exits or step (\ref{VacProcessN}) is not executed again. Assume that the loop never exits. Once $v$ stabilizes, steps (\ref{VacProcessA}) and (\ref{VacProcessP}) are executed once and are not executed again. Therefore after a finite number of steps the vacuum loop degenerates to repeated executions of step (\ref{VacProcessSelf}) which consist of: processing packets on the self channel; queuing side-effect packets to their respective channels, including the self-channel; sending and receiving the queued packets on downstream channels, including the self-channel; processing the newly received packets on the self-channel; etc. Since donation and co-donation packets are never queued to the self channel, at this point such packets are not processed by the vacuum loop anymore.

From Lemma \ref{OmnibusCBCASTLem} we know that the values of \sgh{} and \sfh{} in $P$ cannot rise above $v$. Since the \ref{CheckFlush} procedure only generates ghost and flush packets when the ghost and flush height rise, the vacuum loop can only generate a finite number of such packets. Therefore after a finite number of steps no such packets are generated anymore. Therefore after some more time passes the vacuum loop processes the last of these packets and does not process any ghost or flush packets afterwards.

Message packets are generated as a result of the processing of
\begin{itemize}
\item a message broadcast request (when $\vgap{} = 0$)
\item a flush packet (when the flush causes view installations and \lque{} is not empty)
\item a co-donation packet (ditto)
\item a removal notification (when \fque{} is not empty)
\end{itemize}
Since at this point the vacuum loop does not process any additional items of these types, it only accumulates a finite number of message packets and as a result after some point it processes the last message packet and does not process any more such packets afterwards.

Acknowledgement packets are generated as a result of the processing of message packets, donation packets and co-donation packets. Therefore the vacuum loop processes a finite number of those packets as well.

So at some point the vacuum loop runs out of packets to process and is forced to fall through to step (\ref{VacProcessN}), contrary to our assumption.
\end{proof}

\section{The History Reduction Mapping}
\label{HistoryReduxS}
\subsection{Introduction}
In this section we demonstrate the rather surprising fact that any transactional history of the \cbcast{} protocol that contains at least one join notification can be reduced to an alternate conforming history of \cbcast{} that performs the same computation and where the first join notification is replaced with a removal notification. This allows us to restrict our analysis to histories that contain no join notifications.

We will construct the reduced history explicitly. We will start with the original history, and make careful changes to it. The construction will revolve around the first joining process $G$, its parent $D$, and the join view of $G$, which we call the {\em critical view}. The idea is to declare that $G$ is an original group member (member since view zero) and that it has a doppelg\"{a}nger $\minusG$ that is also a member of view zero. Then instead of having $G$ join the group, we have $\minusG$ leave the group. We need to accomplish this while not violating the \cbcast{} protocol, and without affecting the user application. In fact if we have any hope of success, the user application must be completely oblivious to the change. To create the reduced history we will have to solve two problems. First we will have to create a whole new history for $\plusminusG$ during the pre-critical interval, namely the interval that precedes the critical view change. Then we will have to resolve the race conditions that occur as a result of {\em untimely} packets, namely packets that are sent before the critical view change notification and arrive afterwards.

We will solve the first problem by using $D$ as the pre-critical template for $\plusminusG$. This means that every reactive move by $D$, like receiving or acknowledging a message, will be copied by $\plusminusG$ verbatim. However we will not copy proactive moves by $D$, namely original message broadcasts that are initiated by the application at $D$. During the pre-critical period, $\plusminusG$ will be passive participants. We will ensure that the \ulp{} thread does not run there (and therefore does not generate message queuing requests) by artificially delaying the launch of the \ulp{} thread until after the critical moment. This is possible to arrange because the launch of the \ulp{} thread in \ref{StartCluster} is asynchronous.

The second problem will be solved with the help of the donation/co-donation protocol. This protocol is carefully tailored to provide precise compensation for critical boundary race conditions. In the reduced history the critical donation and co-donation packets will be eliminated. All the simulated packet processing that occurs during the \ref{ReceiveDonation} and \ref{ReceiveCoDonation} procedure calls will be replaced with the receipt and processing of actual, newly-minted untimely packets.

The new packets and events will have to be added and timed in a very precise manner so that we neither violate \cbcast{} nor affect the user application. For example, pre-critical flush packets from $\plusminusG$ will have to arrive slightly earlier than their counterparts from $D$, in order to prevent them from causing the receiving process to install a new view. But pre-critical forwarded messages from $\plusminusG$ will have to arrive slightly {\em later} than their counterparts from $D$, in order to guarantee that they are redundant, and therefore ignored by the receiving process. We do not want $\plusminusG$ to rock the boat prematurely.

To manage this careful surgery we need quite a bit of control. We will gain this control by creating a 4-coordinate label for each event. The label will describe which transaction the event belongs to and whether it is the transaction trigger or a side effect. It will describe whether the event was moved slightly forward or backward to insure redundancy. For events that occur during a donation or co-donation transaction, the label will also describe which simulated sub-transaction the event is related to. The most important fact about the labels is that they all come from a single partially ordered {\em label space} that is common to the original and reduced histories. The use of a single label space for both histories will allow us to relate the order in both, and ultimately to use induction over the shared label space to prove that both histories track each other closely and ultimately converge.

Throughout this part of the paper $H$ is a fixed transactional history that includes a first join view $\vcrit$. The process that joins at view $\vcrit$ is denoted $G$ throughout, and the parent process of $G$ is denoted by $D$. We will sometimes refer to $H$ as the {\em original history}. We will sometimes refer to the synthesized history $H^r$ as the {\em reduced history}.

\subsection{Preliminaries}

\subsubsection{Transactions}
\label{TransactionsSSS}
A process executing the \cbcast{} protocol invokes a procedure in reaction to each triggering event. A view change notification event triggers an invocation of the \ref{StartCluster}, \ref{Run}, \ref{JoinNotification} or \ref{RemovalNotification} procedures. A packet processing event causes the invocation of the \ref{ReceiveMessage}, \ref{ReceiveAck}, \ref{ReceiveGhost}, \ref{ReceiveFlush}, \ref{ReceiveDonation} or \ref{ReceiveCoDonation} procedures. A message broadcast request event causes an invocation of the \ref{BroadcastMessage} procedure. The procedure call in turn causes zero or more side effects in the form of packet queuing events. This sequence of events, starting with a trigger and continuing with a finite number (possibly zero) of side effects is a {\em transaction}. Because each process $P$ runs its \cbcast{} procedures in a critical section, the events at each transaction occur as a contiguous sequence in $\eventset_P$ and therefore transactions can be read out of the history $H$ directly. Compare this with the model-based definition of the notion of transaction in Section \ref{ModelOverviewSS}.

\subsubsection{A clean event order}
The partial event order $\prec$ in $H$ is a bit too weak for our labeling needs. Our analysis in \ref{AntiChainSSS} shows that this ordering is compatible with event views: a high-view event cannot precede a low-view event. But it does not have to succeed it either. Also, given the fact that all the view change notification events of a single view can be viewed as occurring at the same time, it would be convenient to collapse them into a single event. This is exactly what we will do now.

\begin{defn}
\label{CleanEventDef}
The {\bf Clean Event Set} $(\dot{\eventset}, \dot{\prec})$ is the partially ordered set obtained from the set $\eventset$ of events in $H$ by collapsing all the notification events of each view $i$ into a single element $\ell_i$, with the induced partial order. Formally:
\begin{align*}
\dot{\eventset} & = \lbrace \ell_i \rbrace_{0 \le i < \numview+1} \,\bigcup\,  \left( \bigcup_{0 \le i < \numview+1} (K_i \setminus G_i) \right) \\
\dot{e} \dot{\prec} \dot{f} \; & \Longleftrightarrow \;
\begin{cases}
\dot{e} = \ell_i \text{ and } \dot{f} = \ell_j & \text{ and } i < j \\
\dot{e} = \ell_i \text{ and } \dot{f} \in K_j \setminus G_j & \text{ and } i \le j \\
\dot{e} \in K_i \setminus G_i \text{ and } \dot{f} = \ell_j & \text{ and } i < j \\
\dot{e} \in K_i \setminus G_i \text{ and } \dot{f} \in K_j \setminus G_j & \text{ and } i < j \\
\dot{e}, \dot{f} \in K_i \setminus G_i & \text{ and } \dot{e} \prec \dot{f}
\end{cases}
\end{align*}
\end{defn}

Notice that since notification events are necessarily trigger events, the elements of $\dot{\eventset}$ can be divided into trigger events (which include all the $\ell_i$ events) and side-effect events.

The transactions that are triggered by the critical donation and co-donation packets have a special place in our analysis, so we add some specific notation for them.

\begin{defn} \hfill 
\label{CritDonDef}
\begin{itemize}
\item Let $d \in \channel{P}{G}^H$ be the critical donation packet sent by $P$. We denote $\crittime{P}{G} = \preceiveevent{d}$ when $\preceiveevent{d}$ exists.
\item Let $c \in \channel{G}{P}^H$ be the critical co-donation packet sent to $P$. We denote $\crittime{G}{P} = \preceiveevent{c}$ When $\preceiveevent{c}$ exists. 
\end{itemize}
\end{defn}

\subsubsection{Donation and co-donation sub-transactions}
\label{DonCoDonSSS}
The main concern of the \ref{ReceiveDonation} and \ref{ReceiveCoDonation} procedures is the execution of a sequence of \ref{ReceiveMessage} and \ref{ReceiveAck} calls. These calls or {\em sub-transactions} compensate for race conditions. When a new process joins, it is not possible to deliver to it an up-to-date consistent copy of \uldata{} that will allow it to participate in the distributed computation without any further correction. The problem is that at the moment where the new process joins, some of the information required by it is trapped in packets that are in transit, carrying information that will arrive to the packet's destination too late to be of help.

What are these missing packets? the answer to this question is not easy. Na\"{i}vely, most of these are packets that were sent to or from the parent of the new process and failed to arrive by the moment of the view change, but the general picture is a lot more complicated because the parent itself may have joined the group in a previous view change, and the packet of interest may have been sent to/from an ancestor of the parent. Luckily, we only have to worry about the critical view change, which is the first process join in the history. This case, while not trivial, is considerably simpler. Our event labeling will reflect this fact, by labeling the sub-transactions of critical donations and co-donation transactions differently from non-critical ones. This way we can avoid an in-depth analysis of non-critical donations and co-donations, and let their meaning remain entangled in the magic of induction.

In the proceeding discussion we will use the shorthand notation
$$
\lVert \fp{vect} \rVert = \fp{vect}.f + \fp{vect}.b
$$
For vectors like \mpkout{}, \mpkin{}[] and \fp{index}.

\begin{lem}
\label{DonOrderMsgLem}
Let $P$ be an original process (a member of view zero in $H$) and suppose that $G$ processes a critical donation packet from $P$. Then there is a one-to-one, order preserving correspondence between the following two sequences:
\begin{enumerate}
\item The sequence of \ref{ReceiveMessage} invocations by $G$ at labeled step \ref{RD:STMessage} of the \ref{ReceiveDonation} procedure as it processes the donation packet from $P$.
\item\label{DMsgLemCond} The queuing events of untimely message packets in channel $\channel{P}{D}$, ordered by $\prec$. Formally these are the events $\psendevent{k} \in \eventset$ that meet the following criteria:
\begin{align*}
k \in & \channel{P}{D} \\
k = & \pkm{\fp{msg}} \\
\psendevent{k} \prec & \,\pnotifyevent{\vcrit}{P} \\
\pnotifyevent{\vcrit}{D} \prec & \,\preceiveevent{k} \qquad \text{if } \preceiveevent{k} \text{ exists}
\end{align*}
\end{enumerate}
\end{lem}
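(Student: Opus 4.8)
The plan is to walk through the \ref{ReceiveDonation} code path, use Lemma~\ref{OrigPCountLem} to keep the message counters in lockstep, and then identify the untimely packets of condition~(\ref{DMsgLemCond}) with the records of $\se{UNT}_p$ that survive the test at step~\ref{RD:STMessage}, via two inclusions. First the setup. Since $\vcrit$ is the first join in $H$, the parent $D$ lies in $\viewset_{\vcrit-1}$, so $D$ is original and is not removed before $\vcrit$; and $P$, being original, has $\uncont_e=\emptyset$ for every $e\in\eventset_P$, whence $\valuepost{\cset{}}{P}{e}=\valuepost{\lset{}}{P}{e}$ by Lemma~\ref{OmnibusCBCASTLem}(\ref{OCL:lset}), which (applied at all events of $P$ of view below $\vcrit$) also puts $D$ in that set. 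Hence every message packet $P$ queues before $\pnotifyevent{\vcrit}{P}$ --- an original broadcast in \ref{BroadcastMessage} with $\vgap{}=0$, or a forward in \ref{RemovalNotification} --- puts a copy on $\channel{P}{D}$. Enumerate these copies $k_1,\dots,k_J$ in $P$'s queuing order and let $\fp{index}_j$ be the index $P$ records for the message of $k_j$. Since $P$ bumps $\mpkout{}$ once per multicast, $\lVert\fp{index}_j\rVert=j$; and Lemma~\ref{OrigPCountLem} (both $P,D$ original) shows $D$ dequeues $k_1,k_2,\dots$ in that order with $\lVert\mpkin{}[P]\rVert$ at $D$ reaching $j$ right after it processes $k_j$. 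By the \AxPackEventIII{}, the $k_j$ that $D$ dequeues before $\pnotifyevent{\vcrit}{D}$ form a prefix $k_1,\dots,k_m$, and these are exactly the \emph{timely} packets.

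Next I would pin down the two quantities \ref{ReceiveDonation} inspects. The donation from $P$ is queued during $\pnotifyevent{\vcrit}{P}$ and is the only packet $P$ ever sends to $G$, so it is the first $P$-packet $G$ dequeues; thus when $G$ runs \ref{ReceiveDonation} on it, $\mpkin{}[P]$ at $G$ is still the value inherited from $D$ at $\pnotifyevent{\vcrit}{D}$ (forking copies it, and \ref{Run} leaves the non-self entries untouched), which by the above has norm $m$. So step~\ref{RD:STMessage} fires precisely for the records of $\se{UNT}_p$ whose index has norm strictly above $m$. Moreover $\se{UNT}_p$ --- the records of the donated wait set carrying an $\fp{iset}[G]$ entry --- is exactly the set of records for which $\fp{iset}[D]$ is still present at $\pnotifyevent{\vcrit}{P}$, because step~\ref{JN:NewInst} of \ref{JoinNotification} at $P$ creates $\fp{iset}[G]$ from $\fp{iset}[D]$ precisely when the latter exists, and before $\pnotifyevent{\vcrit}{P}$ the wait set of $P$ only loses records through stabilization. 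Finally, the record of $k_j$'s message carries $\fp{iset}[D]$ at $\pnotifyevent{\vcrit}{P}$ iff $P$ has not yet processed $D$'s acknowledgement of $k_j$ (as $D$ is not removed before $\vcrit$, the entry is not discarded by a removal).

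Now the bijection: I claim $\{\,k_j : k_j\text{ untimely}\,\}$ equals $\{\,k_j : \text{its record lies in }\se{UNT}_p\text{ and }\lVert\fp{index}_j\rVert>m\,\}$, which matches each untimely $\psendevent{k_j}$ with the \ref{ReceiveMessage} call that $k_j$'s record triggers at step~\ref{RD:STMessage}. For $\subseteq$: if $k_j$ is untimely it is not among $k_1,\dots,k_m$, so $j=\lVert\fp{index}_j\rVert>m$; and either $D$ never dequeues $k_j$ (no acknowledgement is ever sent, so $\fp{iset}[D]$ survives), or $D$ acknowledges $k_j$ only after $\pnotifyevent{\vcrit}{D}$, in which case the \AxProcIII{} applied to that acknowledgement on $\channel{D}{P}$ (view $\vcrit>0=j(P)$) forces $P$ to process it strictly after $\pnotifyevent{\vcrit}{P}$; either way $k_j$'s record is in $\se{UNT}_p$. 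For $\supseteq$: if $\lVert\fp{index}_j\rVert>m$ but $k_j$ were timely, then $\preceiveevent{k_j}\prec\pnotifyevent{\vcrit}{D}$ puts $k_j$ among $k_1,\dots,k_m$, forcing $j\le m$, a contradiction. For order-preservation, on $\eventset_P$ the events $\psendevent{k_j}$ are linearly ordered (\AxProcI{}) by increasing $j$, while inside \ref{ReceiveDonation} the list $\se{UNT}$ is sorted lexicographically by $(\heighta,\heightb)$; on a $\se{UNT}_p$ record for $k_j$ one has $\heightb=\lVert\fp{index}_j\rVert=j$, and $\heighta=\lVert\fp{iset}[G]\rVert=\fp{iset}[D].f$ (zero $b$-part), which is the forward-count of $\mpkin{}[D]$ at $P$ at the instant $k_j$ was queued and hence non-decreasing in $j$. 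So the $\se{UNT}_p$ records occur in the sorted list in order of increasing $j$, and since step~\ref{RD:STMessage} fires only on $\se{UNT}_p$ records, the \ref{ReceiveMessage} calls occur in order of increasing $j$, matching the $\prec$-order of the matched queuing events.

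The main obstacle is the $\subseteq$ inclusion: ruling out that a late-arriving $k_j$ has nonetheless already been acknowledged back to $P$ before the critical cut --- this is exactly where the \AxProcIII{} does the real work --- together with nailing down that $\mpkin{}[P]$ at $G$ is undisturbed between \ref{Run} and the processing of $P$'s donation (here one uses that $P$'s donation is genuinely the first $P$-packet $G$ sees, and that $G$ cannot have processed a removal of $P$ in between without contradicting the hypothesis). The degenerate case $P=D$ is harmless: by the \AxProcIV{} there are then no untimely self-packets, and symmetrically $\se{UNT}_p$ collapses to $\emptyset$, so both sequences in the lemma are empty.
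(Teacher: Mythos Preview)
Your proposal is correct and follows essentially the same route as the paper: use Lemma~\ref{OrigPCountLem} to align $\mpkout{}$ at $P$ with $\mpkin{}[P]$ at $D$ (hence at $G$), identify $\se{UNT}_p$ with the records still carrying $D$-instability at the critical moment via step~\ref{JN:NewInst}, and read off the sort order from the monotonicity of $\mpkin{}[D].f$ and the strict increase of $\lVert\fp{index}\rVert$. The one place where your write-up is thinner than the paper is the assertion that step~\ref{RD:STMessage} fires \emph{precisely} for records with $\lVert\fp{index}\rVert>m$: since each firing increments $\lVert\mpkin{}[P]\rVert$, you need the observation (which the paper spells out) that the untimely records arrive in the sorted list with consecutive index norms $m+1,m+2,\dots$, so after the first $i-1$ of them fire the counter sits at $m+i-1<m+i$ and the $i$-th still passes; conversely any timely record in $\se{UNT}_p$ has index norm at most $m$ and is encountered while the counter is still $m$. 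Your ``lockstep'' remark signals you have this in mind, but it is worth making explicit. Your explicit treatment of the $P=D$ degenerate case and your appeal to the \AxProcIII{} for the returning acknowledgement are nice additions the paper leaves implicit.
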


\begin{proof}
Let $k$ be a packet that meets all the criteria of (\ref{DMsgLemCond}). Then it is a message packet $k = \pkm{\fp{msg}}$ that had been sent from $P$ to $D$ prior to the critical view change. When $k$ is queued, $P$ places a record
$
\langle
\fp{msg}, \fp{index}, \fp{iset}
\rangle
$
in its \wset{}. Since the packet does not make it to $D$ prior to the critical moment, the message does not stabilize with respect to $D$ and remains in \wset{} with a $D$ instability at the critical moment. Moreover, it follows from Lemma \ref{OrigPCountLem} that at the critical moment at $D$,
$\lVert \mpkin{}[P] \rVert < \lVert \fp{index} \rVert$.
Indeed we can say more than that: if $k$ is the $i^{th}$ packet in the sequence of untimely message packets in the $\channel{P}{D}$ channel, then
$\lVert \mpkin{}[P] \rVert + i = \lVert \fp{index} \rVert$.

Process $G$ starts life with the same value of $\mpkin{}[P]$, inherited from $D$. Since the \ref{Run} procedure does not touch this value, and since the critical donation packet from $P$ is the first packet that $G$ receives from $P$, this value remains unchanged until $G$ executes the \ref{ReceiveDonation} procedure.

At the critical moment $P$ executes the \ref{JoinNotification} procedure. It adds $G$ instability to the record, with $\fp{iset}[G].f = \fp{iset}[D].f$ and $\fp{iset}[G].b = 0$. $P$ then donates its state to $G$. When $G$ processes the donation by invoking the \ref{ReceiveDonation} procedure, it finds the record with its $G$-instability and copies it to the $\se{UNT}_p$ set. $G$ sorts the records in such a way that the records of untimely message packets from $P$ are sorted in their broadcast order. 

To see why that is true, let $k$ and $k'$ be untimely message packets in $\channel{P}{D}$ and suppose that $\psendevent{k'} \,\succ\, \psendevent{k}$. Then $(\fp{iset}[D].f)_{@k'} \ge (\fp{iset}[D].f)_{@k}$ because this value is initialized from $\mpkin{}[D].f$ (see the \ref{BroadcastMessage} and \ref{RemovalNotification} procedures) which is monotone increasing. As a result $(\fp{iset}[G].f)_{@k'} \ge (\fp{iset}[G].f)_{@k}$ in the donated state from $P$ and so
$$
\heighta(\langle \fp{msg'}, \fp{index'}, \fp{iset'}[] \rangle) \ge \heighta(\langle \fp{msg}, \fp{index}, \fp{iset}[] \rangle)
$$
If these heights are equal the sorting proceeds based on $\heightb$. This function is derived from the \fp{index} component of the record, which is strictly increasing with each message broadcast at $P$. Therefore
$$
\heightb(\langle \fp{msg'}, \fp{index'}, \fp{iset'}[] \rangle) > \heightb(\langle \fp{msg}, \fp{index}, \fp{iset}[] \rangle)
$$
which proves that the records of the untimely message packets are sorted in the order at which those message packets were queued by $P$.

Process $G$ examines the records in the sorted $\se{UNT}$ set one by one. Every time a record from $\se{UNT}_p$ is examined, its
$\lVert \fp{index} \rVert$
is compared to
$\lVert \mpkin{}[P] \rVert$.
If $\lVert \fp{index} \rVert$ is bigger, the \ref{ReceiveMessage} procedure is invoked, causing $\lVert \mpkin{}[P] \rVert$ to be incremented by $1$. It follows from Lemma \ref{OrigPCountLem} that if the message is timely then the comparison will fail and the call will not be invoked. So the call is invoked only for untimely message packets, in the right order, and causes $\lVert \mpkin{}[P] \rVert$ to be incremented each time. If the call is invoked for all the untimely message packets up to the $i^{th}$ one, then $\lVert \mpkin{}[P] \rVert$ grows by $i-1$ up to that point, which is not enough to prevent the comparison from succeeding for the $i^{th}$ message packet. Therefore the call is invoked exactly for the records of untimely message packets, in the correct order, as claimed.
\end{proof}

\begin{lem}
\label{DonOrderAckLem}
Let $P$ be an original process (a member of view zero in $H$) and suppose that $G$ processes a critical donation packet from $P$. Then there is a one-to-one, order preserving correspondence between the following two sequences:
\begin{enumerate}
\item The sequence of \ref{ReceiveAck} invocations by $G$ at labeled step \ref{RD:STAck} of the \ref{ReceiveDonation} procedure as it processes the donation packet from $P$.
\item\label{DAckLemCond} The queuing events of untimely forwarded acknowledgement packets in channel $\channel{P}{D}$, ordered by $\prec$. Formally these are the events $\psendevent{k} \in \eventset$ that meet the following criteria:
\begin{align*}
k \in & \channel{P}{D} \\
k = & \pks{\fp{msg}} \\
\orig{\fp{msg}} \ne & D \\
\psendevent{k} \prec & \,\pnotifyevent{\vcrit}{P} \\
\pnotifyevent{\vcrit}{D} \prec & \,\preceiveevent{k} \qquad \text{if } \preceiveevent{k} \text{ exists}
\end{align*}
\end{enumerate}
\end{lem}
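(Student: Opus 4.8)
The plan is to run the argument of Lemma~\ref{DonOrderMsgLem} in parallel, using the fact that the \ref{ReceiveAck} branch of \ref{ReceiveDonation} (labeled step~\ref{RD:STAck}) tests a \emph{static} quantity, $\lVert \fp{donation}.\mpkin{}[\self{}] \rVert$, rather than a running counter, so the correspondence is if anything cleaner than in the message case. I would begin by collecting standing facts. Since $\vcrit$ is the first join view, $P$ and the parent $D$ are both members of view zero, no process joins before $\vcrit$, so $\cset{}(D) = \lset{}(D)$ throughout the pre-critical interval and every forward $D$ performs while $P$ is live is sent to $P$; the critical donation from $P$ is the first packet $G$ ever receives from $P$; $\pnotifyevent{\vcrit}{D}$ and $\joinevent{G}$ exist (Parent Axiom, $H$ transactional); and, unwinding \ref{JoinNotification}, $P$ builds the donation's $\mpkin{}[G]$ entry with $b$-field $0$ and $f$-field $\valuepre{\mpkin{}[D].f}{P}{\pnotifyevent{\vcrit}{P}}$, so $\lVert \fp{donation}.\mpkin{}[\self{}] \rVert$ equals the number of forwarded message packets $P$ processed from $D$ before its critical moment. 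I would also note that \ref{Run} empties $\bwset{}$ at $G$ and zeroes the $.b$ field of every $\fp{index}$ in the inherited $\fwset{}$, so each record in $\se{UNT}_g$ holds a message $D$ forwarded (hence $\orig{\fp{msg}} \ne D$) with $\lVert \fp{index} \rVert = \fp{index}.f$ equal to the forwarding serial number $D$ assigned it; and that $D$ forwards each message at most once (it enters a single $\fque{}[\cdot]$ and is popped once in \ref{RemovalNotification}), so each of $D$'s forwarded messages yields exactly one record in $\wset{}_G$.

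For the forward direction I would take a packet $k = \pks{\fp{msg}} \in \channel{P}{D}$ meeting criterion~(\ref{DAckLemCond}). Since $P$ queues acknowledgement packets only in \ref{ReceiveMessage}, $k$ is a response to a packet $\pkm{\fp{msg}} \in \channel{D}{P}$, and $\orig{\fp{msg}} \ne D$ forces this packet to have been produced by \ref{RemovalNotification}, i.e. $D$ forwarded $\fp{msg}$ and recorded it in $\fwset{}$ with a $P$-instability. Because $k$ is untimely, $D$ does not clear that $P$-instability by $\pnotifyevent{\vcrit}{D}$; and the \AxPackEventIII{} together with the \AxProcIII{} put $\psendevent{\pkm{\fp{msg}}}$ at a view strictly below $\vcrit$ at $D$, hence strictly before $\pnotifyevent{\vcrit}{D}$, so the record is present when $G$ is cloned and survives into $\se{UNT}_g$. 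Applying Lemma~\ref{OrigPCountLem} to $\pkm{\fp{msg}} \in \channel{D}{P}$ (reading off the $.f$ components, which the protocol tracks separately from the $.b$ components) shows that once $P$ has processed $\pkm{\fp{msg}}$ --- which it does before $\pnotifyevent{\vcrit}{P}$ --- its $\mpkin{}[D].f$ has reached $\fp{index}.f$; hence $\lVert \fp{index} \rVert = \fp{index}.f \le \lVert \fp{donation}.\mpkin{}[\self{}] \rVert$, which is exactly the test of step~\ref{RD:STAck}, so $G$ invokes \ref{ReceiveAck}($\fp{msg}$, $P$).

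The reverse direction runs this backwards. A record $\langle \fp{msg}, \fp{index}, \fp{iset}[] \rangle \in \se{UNT}_g$ passing the test satisfies $\fp{index}.f \le \valuepre{\mpkin{}[D].f}{P}{\pnotifyevent{\vcrit}{P}}$; since every pre-removal forward of $D$ reaches $P$, the forwarded packets of $D$ in $\channel{D}{P}$ carry consecutive serial numbers $1, 2, \dots$, and since $P$ receives $D$'s stream FIFO its received forwarded packets form an initial segment $\{1,\dots,m\}$ with $m = \valuepre{\mpkin{}[D].f}{P}{\pnotifyevent{\vcrit}{P}}$, so the inequality says precisely that $P$ received $\pkm{\fp{msg}}$ before its critical moment; $P$ then acknowledged it before that moment, giving $k = \pks{\fp{msg}} \in \channel{P}{D}$ with $\psendevent{k} \prec \pnotifyevent{\vcrit}{P}$, and the surviving $P$-instability at $D$ gives $\pnotifyevent{\vcrit}{D} \prec \preceiveevent{k}$ whenever $\preceiveevent{k}$ exists. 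With the uniqueness of the record per forwarded message noted above, this gives the one-to-one correspondence. For order preservation, the $\se{UNT}_g$ records are sorted by $\heighta = \fp{index}.f$ (with $\heightb \equiv 0$ on them), these values are distinct and strictly increasing in $D$'s forwarding order, and by the \AxPackEventIII{} on $\channel{D}{P}$ that order is the order in which $P$ received the $\pkm{\fp{msg}}$ packets, hence the $\prec$-order of the corresponding $\psendevent{k}$; the \ref{ReceiveMessage} calls interleaved by the same loop do not interfere, since step~\ref{RD:STAck} reads only the static $\fp{donation}.\mpkin{}[\self{}]$.

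I expect the only real obstacle to be the bookkeeping that converts ``$\fp{index}.f \le \lVert \fp{donation}.\mpkin{}[\self{}] \rVert$'' into ``$P$ received $\pkm{\fp{msg}}$ before $\pnotifyevent{\vcrit}{P}$'', which chains \ref{JoinNotification}'s initialization of $\mpkin{}[G]$, Lemma~\ref{OrigPCountLem}, the FIFO initial-segment property, and the fact that all of $D$'s pre-removal forwards are sent to $P$. The edge cases are routine: if $P$ is already removed when $D$ forwards, no $P$-instability is recorded and no acknowledgement exists, so both sides are empty; and if the same message $\fp{msg}$ also appears in $\se{UNT}_p$, it does so through a distinct record, which the loop handles under its own clause, so there is no double counting.
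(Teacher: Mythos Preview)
Your approach matches the paper's and the forward direction is sound, but there is a genuine gap in your setup claim that ``each record in $\se{UNT}_g$ holds a message $D$ forwarded.'' Your justification---that \ref{Run} empties $\bwset{}$ and zeroes the $.b$ fields of inherited $\fwset{}$ records---only describes $G$'s state immediately after \ref{Run}. Between $\joinevent{G}$ and $\crittime{P}{G}$, process $G$ may dequeue removal notifications for processes $R$ with $\vdeath{R} > \vcrit$; when it does, it forwards the contents of $\fque{}[R]$ via \ref{RemovalNotification} and appends new records to $\fwset{}$. Those records carry $P$-instability (the instability vector is copied from $\mpkin{}[]$, and $\mpkin{}[P]$ exists at $G$), so they too land in $\se{UNT}_g$. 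Your reverse direction then silently assumes that $\fp{index}.f$ is the serial number $D$ assigned, which is unjustified for such $G$-added records: for them $\fp{index}.f$ is a $G$-serial, and the message need never have been sent by $D$ to $P$ at all.

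The paper closes this gap explicitly: any record that $G$ itself adds has $\lVert \fp{index} \rVert$ strictly greater than $G$'s initial $\lVert \mpkout{} \rVert$, which equals the critical value of $\mpkout{}.f$ at $D$; by Lemma~\ref{OrigPCountLem} this dominates $\valuepre{\mpkin{}[D].f}{P}{\pnotifyevent{\vcrit}{P}} = \lVert \fp{donation}.\mpkin{}[\self{}] \rVert$, so such records fail the test at step~\ref{RD:STAck} and never trigger a \ref{ReceiveAck} call. The paper also separates, among the inherited records with surviving $P$-instability, those whose \emph{message} packet in $\channel{D}{P}$ was untimely (these likewise fail the test by the same inequality) from those whose message arrived but whose \emph{ack} was untimely. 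Once you insert this case split, your FIFO/initial-segment bookkeeping applies cleanly to the surviving records and the rest of your argument goes through.
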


\begin{proof}
Let $k$ be a packet that meets all the criteria of (\ref{DAckLemCond}). Then this packet had been sent from $P$ to $D$ prior to the critical view change, in reaction to the receipt of a message \fp{msg} from $D$. The message must be forwarded because it did not originate with $D$. When $D$ sends the message packet to $P$, it places a record
$
\langle
\fp{msg}, \fp{index}, \fp{iset}
\rangle
$
in its \fwset{} (see labeled step \ref{RRN:WaitSet} of the \ref{RemovalNotification} procedure). Let $F_{\fp{msg}} = \fp{index}.f$.

As the critical parent, $D$ is an original process and Lemma \ref{OrigPCountLem} applies. Therefore after $P$ executes the \ref{ReceiveMessage} procedure, its value of $\mpkin{}[D].f$ is equal to $F_{\fp{msg}}$. Let $F_{\fp{crit}}$ be the value of $\mpkin{}[D].f$ at $P$ at the critical moment. Since the receipt of the message \fp{msg} at $P$ takes place pre-critically, $F_{\fp{crit}} \ge F_{\fp{msg}}$. At the critical time $P$ copies $\mpkin{}[D].f$ to $\mpkin{}[G].f$ (while zeroing out the $.b$ component) before sending its donation to $G$ (see the \ref{JoinNotification} procedure).

Since the acknowledgement packet for this message does not arrive by the critical moment, the record remains with $P$ in its instability set (\fp{iset}) and $G$ starts life with the same record in its own \fwset{}, after inheriting it from $D$. Right away $G$ zeroes out $\fp{index}.b$ in the record (see the \ref{Run} procedure). Since the donation packet is the first packet ever received at $G$ from $P$, the record must remain in \fwset{} at $G$ until that moment, because there is no way until that time for the $P$-instability to be resolved.

Following the execution of the \ref{ReceiveDonation} procedure the record of \fp{msg} in \fwset{} makes it into the $\se{UNT}_g$ set. Moreover, the value of the $\fp{index}.f = F_\fp{msg}$ component of the record is no higher than the critical value of $\mpkin{}[G].f = F_{\fp{crit}}$ at $P$, which is exactly the value that arrives at $G$ together with $P$'s donation. The $.b$ field in both values is zero. Therefore the \ref{ReceiveAck} procedure gets invoked on behalf of \fp{msg}.   

If $k'$ is a packet that meets the same criteria and has a later queuing event than $k$, then $k' = \pks{\fp{msg'}}$ where \fp{msg'} is sent from $D$ to $P$ after \fp{msg} is sent. Therefore the value of $\lVert \fp{index} \rVert$ for \fp{msg'} is higher and therefore the related sub-transaction is executed later, so the mapping in this direction is order preserving.

The inverse correspondence is constructed in a similar way. Let
$
\langle
\fp{msg}, \fp{index}, \fp{iset}
\rangle
$
Be a record in $\se{UNT}_g$ that passes the comparison test
$$
\lVert \fp{index} \rVert \le \lVert \fp{donation}.\mpkin[G] \rVert
$$
From the construction of $\se{UNT}_g$ it follows that $\fp{iset}[P]$ exists. Therefore the message \fp{msg} got into \wset{} after a message packet queuing event that included $P$ in its target set. When and where did this queuing event happen? If it happened at $G$, then $\fp{index}$ must be strictly bigger than the initial value of $\mpkout{}$ at $G$. This initial value is equal to zero in its $.b$ component (see the \ref{Run} procedure) while its $.f$ component is equal to the critical value of $\mpkout{}.f$ at $D$. It follows from Lemma \ref{OrigPCountLem} that $F_{\fp{crit}}$, the critical value of $\mpkin{}[D].f$ at $P$ is at most as high as the critical value of $\mpkout{}.f$ at $D$. Therefore $\fp{index}.b + \fp{index}.f > F_{\fp{crit}}$. But as we have already seen, $F_{\fp{crit}} = \fp{donation}.\mpkin[G].f$ and $\fp{donation}.\mpkin[G].b = 0$. This is a contradiction. Therefore the queuing event does not happen at $G$ but rather at $D$, and the \wset{} record in question is inherited by $G$ when it is created. From this it follows that the message did not originate with $D$. If it did, then the record would go to \bwset{} in $D$, and this part of \wset{} is not inherited by $G$ but instead zeroed out (see the \ref{Run} procedure). So $\orig{\fp{msg}} \ne D$.

So the message \fp{msg} is forwarded from $D$ to $P$, and is not stabilized with respect to $P$ by the critical moment. Why is it not stabilized? either it does not arrive at $P$ on time (in which case the message packet from $D$ to $P$ is untimely), or else it does arrive on time but the acknowledgement packet from $P$ to $D$ is untimely. If the first case holds then by Lemma \ref{OrigPCountLem} $F_{\fp{crit}} < \fp{index}.b + \fp{index}.f$. As we saw, this leads to a contradiction. Therefore the acknowledgement packet $k = \pks{\fp{msg}}$ that was sent from $P$ to $D$ was untimely. This means that the packet $k$ meets all the criteria listed in the statement of the lemma.

It is obvious that these two correspondences are inverses of each other and so we are done. 
\end{proof}

\begin{cor}
\label{DonOrderCor}
Let $P$ be an original process and suppose that $G$ processes a critical donation packet from $P$. Then there is a one-to-one, order preserving correspondence between the following two sequences:
\begin{enumerate}
\item The sequence of \ref{ReceiveMessage} and \ref{ReceiveAck} invocations by $G$ at labeled steps \ref{RD:STMessage} and \ref{RD:STAck} of the \ref{ReceiveDonation} procedure as it processes the donation packet from $P$.
\item The queuing events of untimely message packets and forwarded acknowledgement packets in channel $\channel{P}{D}$, ordered by $\prec$.
\end{enumerate}
\end{cor}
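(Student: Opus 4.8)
The plan is to assemble the corollary out of Lemmas~\ref{DonOrderMsgLem} and~\ref{DonOrderAckLem} with essentially no new machinery beyond a single interleaving argument. After sorting, the \ref{ReceiveDonation} procedure processes the elements of $\se{UNT} = \se{UNT}_g \cup \se{UNT}_p$ in the lexicographic order $(\heighta, \heightb)$, invoking \ref{ReceiveMessage} at labeled step \ref{RD:STMessage} exactly for the $\se{UNT}_p$ records that pass their test and \ref{ReceiveAck} at labeled step \ref{RD:STAck} exactly for the $\se{UNT}_g$ records that pass theirs; Lemma~\ref{DonOrderMsgLem} matches the former invocations, in order, with the queuing events of the untimely message packets of $\channel{P}{D}$, and Lemma~\ref{DonOrderAckLem} matches the latter, in order, with the queuing events of the untimely forwarded acknowledgement packets of $\channel{P}{D}$. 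First I would note that these two maps have disjoint domains (a sub-transaction is either a \ref{ReceiveMessage} call at step \ref{RD:STMessage} or a \ref{ReceiveAck} call at step \ref{RD:STAck}, never both) and disjoint codomains (a packet of $\channel{P}{D}$ is either a message packet or an acknowledgement packet, never both), so their union is a bijection between the two merged sequences; and that every queuing event of a packet of $\channel{P}{D}$ occurs at $P$, hence these events form a $\prec$-chain by the \AxProcI{}.

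It then remains to check that the bijection respects order. Within each of the two sub-sequences this is exactly what the cited lemmas give, so the only genuinely new point is the comparison of a $\se{UNT}_p$ record, corresponding to an untimely message packet $k = \pkm{\fp{msg}}$ that $P$ queues to $D$, against a $\se{UNT}_g$ record, corresponding to an untimely forwarded acknowledgement packet $k' = \pks{\fp{msg}'}$ that $P$ queues to $D$ in reaction to a forwarded message packet received from $D$. Write $\heighta(k), \heightb(k)$ for the sort coordinates of the record corresponding to $k$, and likewise for $k'$. The idea is to express $\heighta(k)$ and $\heighta(k')$ in terms of the single counter $\mpkin{}[D].f$ kept by $P$, which is monotone non-decreasing. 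On the message-packet side, the argument inside Lemma~\ref{DonOrderMsgLem} shows $\fp{iset}[G].b = 0$ and $\fp{iset}[G].f$ equals $P$'s value of $\mpkin{}[D].f$ at the instant $k$ is queued, so $\heighta(k)$ is that value. On the acknowledgement side, \ref{Run} has zeroed $\fp{index}.b$, so $\heighta(k') = \fp{index}.f$; since the critical parent $D$ is original, Lemma~\ref{OrigPCountLem} identifies $\fp{index}.f$ with $P$'s value of $\mpkin{}[D].f$ immediately after $P$ processes the forwarded message packet that triggers $k'$ --- which is one more than its value at the moment $\psendevent{k'}$ occurs, because \ref{ReceiveMessage} queues the acknowledgement before incrementing the counter.

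Feeding this into monotonicity gives the cross comparison: if $\psendevent{k} \prec \psendevent{k'}$ then $\heighta(k) \le \heighta(k') - 1 < \heighta(k')$, so the record of $k$ precedes that of $k'$ in the sort; and if $\psendevent{k'} \prec \psendevent{k}$ then $\heighta(k) \ge \heighta(k')$, with the boundary case of equality settled by $\heightb$, where $\heightb(k') = 0$ while $\heightb(k) = \lVert \fp{index} \rVert \ge 1$, so the record of $k'$ precedes that of $k$. Hence the sort order and the $\prec$-chain of queuing events agree on every pair, the union map is order-preserving, and the corollary follows. I expect the one delicate step to be precisely this bookkeeping --- fixing the off-by-one between ``$k'$ is queued'' and ``$\mpkin{}[D].f$ is incremented'', and confirming that the $\heightb$ tie-break disposes of the single case in which the two first coordinates coincide --- since everything else is a direct concatenation of the two preceding lemmas.
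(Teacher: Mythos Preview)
Your proposal is correct and follows essentially the same route as the paper's proof: both reduce to Lemmas~\ref{DonOrderMsgLem} and~\ref{DonOrderAckLem} and then handle the one remaining cross-type comparison by expressing $\heighta$ on each side in terms of $P$'s running value of $\mpkin{}[D].f$, using monotonicity, and breaking the equality case via $\heightb$. Your explicit off-by-one bookkeeping around the acknowledgement queuing and your tie-break bound $\heightb(k) = \lVert \fp{index} \rVert \ge 1$ are in fact a bit cleaner than the paper's corresponding steps.
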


\begin{proof}
Most of the claim follows directly from Lemmas \ref{DonOrderMsgLem} and \ref{DonOrderAckLem}. But we still have to show that if $k, k' \in \channel{P}{D}$ are an untimely message packet and an untimely acknowledgement packet for a forwarded message respectively, then $\psendevent{k} \,\prec\, \psendevent{k'}$ if and only if the corresponding records in $\se{UNT}_p$ and $\se{UNT}_g$ are processed in the same order.

Let $k = \pkm{\fp{msg}}$ and $k' = \pks{\fp{msg}'}$. Let $\fp{index}'$ be the value of $\mpkout{}$ in $D$ when it queues $\pkm{\fp{msg}'}$. This is the value of the \fp{index} field of the record of $\fp{msg}'$ in $\wset{}(D)$.

Suppose first that $\psendevent{k} \,\prec\, \psendevent{k'}$. Then $P$ queues $k$ before it processes $\pkm{\fp{msg}'}$. By Lemma \ref{OrigPCountLem}, at the time that $P$ queues $k$, we have
$
\mpkin{}[D].f \le \fp{index}'.f
$
at $P$. But in fact the inequality is strict here because the message $\fp{msg}'$ is forwarded by $D$, not originated. Because of that it is $\mpkout{}.f$ that is incremented when $\pkm{\fp{msg}'}$ is queued and $\mpkin{}[D].f$ that is incremented when the same packet is processed. Let $r, r' \in \se{UNT}$ be the records that correspond to $k$ and $k'$ respectively. We have already seen that $\heighta(r) = \mpkin{}[D].f$ and $\fp{index}(r').f = \fp{index}'.f$. Therefore
$$
\heighta(r) = \mpkin{}[D].f < \fp{index}'.f \le \heighta(r')
$$
which proves one direction of the claim.

To prove the other direction, suppose that $\psendevent{k} \,\succ\, \psendevent{k'}$. Then $P$ queues $k$ after it processes $\pkm{\fp{msg}'}$. By Lemma \ref{OrigPCountLem}, at the time that $P$ queues $k$, we have
$
\mpkin{}[D].f \ge \fp{index}'.f
$
and following the same logic as before we conclude that $\heighta(r) \ge \heighta(r')$. In this case however we cannot guarantee a strict inequality. In case there is equality, however, the order between $r$ and $r'$ is determined by $\heightb$. Since $r' \in \se{UNT}_g$ we have $\heightb(r') = 0$ while $r \in \se{UNT}_p$ and therefore
$$
\heightb(r) = \lVert \fp{index}(r) \rVert \ge \fp{index}'.f > 0 = \heightb(r')
$$
where the rightmost inequality is strict because $\fp{msg}'$ is a forwarded message.
\end{proof}

\begin{lem}
\label{CoDonOrderMsgLem}
Let $P$ be an original process (a member of view zero in $H$) that processes a critical co-donation packet from $G$. Then there is a one-to-one, order preserving correspondence between the following two sequences:
\begin{enumerate}
\item The sequence of \ref{ReceiveMessage} invocations by $P$ at labeled step \ref{RCD:STMessage} of the \ref{ReceiveCoDonation} procedure as it processes the co-donation packet from $G$.
\item\label{CMsgLemCond} The union of the following two sets of events, ordered by $\prec$:
\begin{enumerate}
\item\label{CMsgLemCondA} The queuing events of untimely forwarded message packets in channel $\channel{D}{P}$. Formally these are the events $\psendevent{k} \in \eventset$ that meet the following criteria:
\begin{align*}
k \in & \channel{D}{P} \\
k = & \pkm{\fp{msg}} \\
\orig{\fp{msg}} \ne & D \\
\psendevent{k} \prec & \, \pnotifyevent{\vcrit}{D} \\
\pnotifyevent{\vcrit}{P} \prec & \, \preceiveevent{k} \qquad \text{if } \preceiveevent{k} \text{ exists}
\end{align*}
\item \label{CMsgLemCondB} The post-critical, pre-$P$-donation queuing events of message packets in channel $\channel{G}{G}$. Formally these are the events $\psendevent{k} \in \eventset$ that meet the following criteria: 
\begin{align*}
k \in & \channel{G}{G} \\
k = & \pkm{\fp{msg}} \\
\pnotifyevent{\vcrit}{G} \prec & \, \psendevent{k} \prec\, \crittime{P}{G}
\end{align*}
\end{enumerate}
Note: The union of these two sets of events is linearly ordered by $\prec$ because according to the \AxNotEventIII{} $\pnotifyevent{\vcrit}{D} \asymp \joinevent{G} = \pnotifyevent{\vcrit}{G}$.
\end{enumerate}
\end{lem}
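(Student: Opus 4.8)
The plan is to follow the template of Lemmas \ref{DonOrderMsgLem} and \ref{DonOrderAckLem}, with the extra care forced by the fact that the set $\se{UNT}_g$ assembled inside \ref{ReceiveCoDonation} at $P$ is drawn from $G$'s co-donated wait set, which --- unlike $D$'s donated wait set --- has two disjoint parts: the records $G$ inherited from $D$'s $\fwset{}$ at the critical view change (with $\fp{index}.b$ zeroed by \ref{Run}), and the records $G$ created by forwarding messages after $\joinevent{G}$. The first step is to pin down $G$'s state at the event $\crittime{P}{G}$ where it emits the co-donation. Arguing as in the proof of Lemma \ref{DonationSELem} (via Lemma \ref{OmnibusCBCASTLem}(\ref{OCL:lset},\ref{OCL:height}), the \AxProcIII{} and Corollary \ref{ContactSetCor}) one gets $\valuepre{\fvec{}[P]}{G}{\crittime{P}{G}} < \vcrit$, so $G$ has not installed its join view before $\crittime{P}{G}$; its \ulp{} thread is therefore not running, $\valuepre{\bwset{}}{G}{\crittime{P}{G}}=\emptyset$, $\valuepre{\lque{}}{G}{\crittime{P}{G}}=\emptyset$, and $G$'s co-donated wait set is exactly $\valuepre{\fwset{}}{G}{\crittime{P}{G}}$, the disjoint union of the two families above.

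The second step is to decide which of these records carry a $P$-instability (so land in $\se{UNT}_g$) and then pass the comparison $\lVert\fp{index}\rVert > \lVert\mpkin{}[G]\rVert$ at step \ref{RCD:STMessage}. For a record inherited from $D$, $\fp{index}.b=0$, so $\lVert\fp{index}\rVert$ at $G$ is the forwarding index $F_{\fp{msg}}=\mpkout{}.f$ at $D$; since $D$ is original, Lemma \ref{OrigPCountLem} identifies the initial value of $\mpkin{}[G]$ at $P$ (which \ref{JoinNotification} set equal to the critical $\mpkin{}[D].f$ at $P$, $b=0$) with the number of $D\to P$ forwarded packets that arrived pre-critically. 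Hence the test succeeds for such a record exactly when its $D\to P$ packet failed to arrive before the critical view change, i.e. for exactly family (\ref{CMsgLemCondA}) --- and $D$-originated messages contribute nothing, since $\bwset{}(D)$ is not inherited by $G$, which is why family (\ref{CMsgLemCondA}) requires $\orig{\fp{msg}}\ne D$. A $D$-inherited record whose $D\to P$ packet was timely but whose acknowledgement was untimely still has a $P$-instability, but has $\lVert\fp{index}\rVert \le \lVert\mpkin{}[G]\rVert$ throughout, so it is skipped. A record $G$ forwarded itself corresponds to a self-channel packet of family (\ref{CMsgLemCondB}): such a forward targets $\cset{}(G)$, which does not contain $P$ until $\crittime{P}{G}$, so this packet is the only copy $P$ ever gets of the message, $\mpkin{}[G]$ at $P$ never counts it, and the test succeeds.

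The third step is to check that the lexicographic sort by $(\heighta,\heightb)$ lists exactly these passing records in their $\prec$-order --- all of family (\ref{CMsgLemCondA}) first, in $D$'s forwarding order, then all of family (\ref{CMsgLemCondB}), in $G$'s forwarding order. The key facts are that $\heighta$ for a $D$-inherited record is a value of $\mpkin{}[P]$ at $D$ (monotone in $D$-time), whereas $\heighta$ for a $G$-generated record is the \emph{constant} value $\lVert\mpkin{}[P]\rVert$ that $G$ inherits (it receives nothing from $P$ before $\crittime{P}{G}$), so family-(\ref{CMsgLemCondA}) heights never exceed the family-(\ref{CMsgLemCondB}) height, and on ties $\heightb$ (which is $\mpkout{}.f$ at $D$ for family (\ref{CMsgLemCondA}) and strictly larger --- the critical $\mpkout{}.f$ inherited by $G$, plus at least one --- for family (\ref{CMsgLemCondB})) decides; within each family $\heightb$ is strictly increasing per forward. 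Since family-(\ref{CMsgLemCondA}) events occur at $D$ before $\pnotifyevent{\vcrit}{D}$, family-(\ref{CMsgLemCondB}) events at $G$ after $\joinevent{G}=\pnotifyevent{\vcrit}{G}$, and $\pnotifyevent{\vcrit}{D}\asymp\joinevent{G}$ by the \AxNotEventIII{}, this sorted order is precisely the $\prec$-order. A short telescoping count --- each \ref{ReceiveMessage} invocation bumps $\mpkin{}[G].f$ at $P$ by one, the sorted $\lVert\fp{index}\rVert$ increases by exactly one along each family, and $\lVert\mpkin{}[G]\rVert$ at $P$ starts at the critical $\mpkin{}[D].f$ --- then shows the comparison stays satisfied throughout, giving the order-preserving bijection.

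The main obstacle is the precise determination of $G$'s wait set at $\crittime{P}{G}$ (ruling out $\bwset{}(G)$, and checking that relevant records are not spuriously cleared while $G$ processes the donations of the other original processes before $P$'s) together with the interlocked bookkeeping that the step-\ref{RCD:STMessage} comparison passes for exactly the right records in exactly the right order as $\mpkin{}[G]$ at $P$ grows; this is what pins the relative order of the two families and rests on several simultaneous uses of Lemma \ref{OrigPCountLem} and of the inheritance rules in \ref{Run} and \ref{JoinNotification}.
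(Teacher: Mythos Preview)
Your proposal is correct and follows essentially the same approach as the paper's proof: both identify the records in $\se{UNT}_g$ as the union of $D$-inherited $\fwset{}$ records and $G$-generated records, use Lemma \ref{OrigPCountLem} to characterize which pass the $\lVert\fp{index}\rVert > \lVert\mpkin{}[G]\rVert$ test, and verify that the $(\heighta,\heightb)$ sort respects the $\prec$-order via the monotonicity of $\mpkin{}[P]$ at $D$ and the constancy of $\mpkin{}[P]$ at $G$. Your organization differs only cosmetically --- you front-load the determination of $G$'s state at $\crittime{P}{G}$ (in particular that $\bwset{}(G)=\emptyset$ via Corollary \ref{ContactSetCor}), whereas the paper treats the two families in sequence and handles the $\bwset{}$ question implicitly by noting that any record created by $G$ corresponds to a $\channel{G}{G}$ packet regardless.
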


\begin{proof}
Let $k$ be a packet that meets all the criteria of (\ref{CMsgLemCondA}). This packet was sent from $D$ to $P$ prior to the critical view change, containing a forwarded message \fp{msg}. When $D$ queued the packet it placed a record
$
\langle
\fp{msg}, \fp{index}, \fp{iset}
\rangle
$
in its \fwset{}. Let $F_{\fp{msg}} = \fp{index}.f$. Let $F_{\fp{crit}}$ be the value of $\mpkin{}[D].f$ at $P$ at the critical moment. Then the fact that the packet is untimely, together with Lemma \ref{OrigPCountLem} imply that $F_{\fp{msg}} >  F_{\fp{crit}}$. We can say more than that: if $k$ is the $i^{th}$ packet in the sequence of untimely forwarded message packets in the $\channel{D}{P}$ channel, then $F_{\fp{msg}} = F_{\fp{crit}} + i$. At the critical moment process $P$ creates $\mpkin{}[G].f = F_{\fp{crit}}$. At the critical moment the value of $\mpkout{}.f$ at $D$ is equal to $F_{\fp{crit}} + u$ where $u$ is the number of untimely forwarded message packets in the $\channel{D}{P}$ channel. As a result $G$ starts life with $\lVert \mpkout{} \rVert = F_{\fp{crit}} + u$ (the \ref{Run} procedure zeroes out $\mpkout{}.b$).

Since the packet $k$ is untimely, the record remains in \fwset{}, with $P$ in its instability set, until the critical moment. Therefore $G$ starts life with the same record in its own \fwset{}, and the \ref{Run} procedure leaves $\fp{index}.f$ untouched. Since $G$ does not receive any packet from $P$ until it receives its donation, the record remains in \fwset{} at $G$ until $\crittime{P}{G}$ and therefore it is sent to $P$ as part of the co-donation from $G$. As a result, when $P$ processes the co-donation from $G$ it finds the record with its $P$-instability, and places it in $\se{UNT}_g$.

$P$ sorts the records in such a way that the records of untimely message packets from $D$ are sorted in their broadcast order. To see why that is true, let $k$ and $k'$ be untimely forwarded message packets in $\channel{D}{P}$ such that $\psendevent{k'} \,\succ\, \psendevent{k}$. Then $\lVert (\fp{iset}[P])_{@k'} \rVert \ge \lVert (\fp{iset}[P])_{@k} \rVert$ at $D$ because this value is initialized from $\lVert \mpkin{}[P] \rVert$ (see the \ref{RemovalNotification} procedure) which is monotone increasing. The value of \fp{iset} is not changed by the \ref{Run} procedure so the same relationship holds in $G$ and so this is what $P$ sees in the co-donated state that it receives from $G$. Therefore
$$
\heighta(\langle \fp{msg'}, \fp{index'}, \fp{iset'}[] \rangle) \ge \heighta(\langle \fp{msg}, \fp{index}, \fp{iset}[] \rangle)
$$
If these heights are equal the sorting proceeds based on $\heightb$. This function is derived from the \fp{index} component of the record. The \ref{Run} procedure zeroes out $\fp{index}.b$ but does not touch $\fp{index}.f$. The original value of $\fp{index}.f$ is derived from the value of $\mpkout{}.f$ at $D$. Since $k$ is a forwarded message packet, the value of $\mpkout{}.f$ is increased after $D$ queues $k$, and therefore $\fp{index}'.f > \fp{index}.f$. Therefore at $G$, where the $.b$ component is zero for the records of both $k$ and $k'$ we have
$
\lVert \fp{index}' \rVert > \lVert \fp{index} \rVert
$
and therefore
$$
\heightb(\langle \fp{msg'}, \fp{index'}, \fp{iset'}[] \rangle) > \heightb(\langle \fp{msg}, \fp{index}, \fp{iset}[] \rangle)
$$
which proves that the records of the untimely message packets are sorted in the order at which those message packets were queued by $D$.

Now let us examine the other subset of packets. Let $k$ be a packet that meets all the criteria of (\ref{CMsgLemCondB}). When $G$ queues $k$ it does not have $P$ in \cset{}, because \cset{} is initialized in $G$ to contain only itself (see the \ref{Run} procedure), and $P$ is only added to the set when $G$ processes the critical donation from $P$. As a result $G$ does not send a message packet to $P$ even though it does create an $\fp{iset}[P]$ entry in the instability set of the message in its \wset{} (to see that, check the \ref{BroadcastMessage} and \ref{RemovalNotification} procedures to see that message packets are only sent to contacted processes, while instability is created to each process that has an entry in the \mpkin{} vector. According to Lemma \ref{OmnibusCBCASTLem}(\ref{OCL:lsetvec}) \mpkin{} has entries exactly for the live processes.) This makes it impossible for the record
$
\langle
\fp{msg}, \fp{index}, \fp{iset}
\rangle
$
of the message to stabilize with respect to $P$ by the time that $G$ processes the donation from $P$, and as a result the record survives in the co-donated state that $G$ sends to $P$, together with its $P$-instability. As a result $P$ copies the record into $\se{UNT}_g$.

The value of \fp{index} in the record is initialized from the value of $\mpkout{}$ at $G$. As we noted before $G$ starts life with $\lVert \mpkout{} \rVert = F_{\fp{crit}} + u$ where $u$ is the number of packets meeting the criteria of \ref{CMsgLemCondA}. Since $\mpkout{}$ is incremented by $G$ every time it queues message packets, we can conclude that the $j^{th}$ packet that meets the criteria of \ref{CMsgLemCondB} has a record with $\lVert \fp{index} \rVert = F_{\fp{crit}} + u + j$. In other words if we take the two sequences \ref{CMsgLemCondA} and \ref{CMsgLemCondB} as one, then the record of the $i^{th}$ packet in that sequence has $\lVert \fp{index} \rVert = F_{\fp{crit}} + i$.

We want to show that these records are also sorted by $P$ according to their $\prec$-order with respect to each other and also with respect to the records of the untimely $\channel{D}{P}$ packets.

Let $k$ and $k'$ be two packets in $\channel{G}{G}$ that meet the \ref{CMsgLemCondB} criteria, and let $k^0$ be a packet in $\channel{D}{P}$ that meets the \ref{CMsgLemCondA} criteria. $k$ and $k'$ are post-critical and both have records that retain their $P$ instability. Moreover, $\fp{iset}[P]$ is initialized in both cases from $\mpkin{}[P]$ at $G$, a value that starts out being equal to the critical value of $\mpkin{}[P]$ at $D$. This value is untouched by the \ref{Run} procedure, and remains unchanged until the moment that $G$ processes the donation from $P$, because there are no incoming message packets from $P$ to $G$ until that time. Therefore $\fp{iset}[P] = \fp{iset}'[P]$. Similarly, $k^0$ gets its value of $\fp{iset}[P]$ from a pre-critical value of $\mpkin{}[P]$ at $D$. Therefore
$$
\heighta(\langle \fp{msg'}, \fp{index'}, \fp{iset'}[] \rangle) = \heighta(\langle \fp{msg}, \fp{index}, \fp{iset}[] \rangle) \ge \heighta(\langle \fp{msg}^0, \fp{index}^0, \fp{iset}^0[] \rangle)
$$
So the $\heighta$ function does not resolve the order between the records of $k$ and $k'$, and if it resolves it between $k^0$ and $k$ it does so in the desired way.

Both $k$ and $k'$ derive their $\fp{index}$ value from the value of \mpkout{} at $G$, which increases with each message packet queuing event. therefore
$$
\heightb(\langle \fp{msg'}, \fp{index'}, \fp{iset'}[] \rangle) > \heightb(\langle \fp{msg}, \fp{index}, \fp{iset}[] \rangle)
$$
and we have the desired order relation in this case. The record for packet $k^0$ at $D$ derives its \fp{index} value from a pre-critical value of \mpkout{} at $D$. $G$ inherits \mpkout{} from $D$ with the $.b$ component zeroed out. $G$ inherits the $k^0$ record and similarly zeroes out its $\fp{index}.b$ value. Post-critically, \mpkout{} at $G$ continues to grow from this initial value with each message broadcast. Therefore 
$$
\heightb(\langle \fp{msg}, \fp{index}, \fp{iset}[] \rangle) > \heightb(\langle \fp{msg}^0, \fp{index}^0, \fp{iset}^0[] \rangle)
$$
and we have the desired order in this case as well.

Process $P$ examines the records in the sorted $\se{UNT}$ set one by one. Every time a record from $\se{UNT}_g$ is examined, its
$\lVert \fp{index} \rVert$
is compared to
$\lVert \mpkin{}[G] \rVert$.
If $\lVert \fp{index} \rVert$ is bigger, the \ref{ReceiveMessage} procedure is invoked, causing $\lVert \mpkin{}[G] \rVert$ to be incremented by $1$.

There are exactly three types of records in $\se{UNT}_g$. There are records for message packets that were queued by $G$ itself, namely post-critical message packets that meet the criteria of \ref{CMsgLemCondB}. Then there are records that were inherited from $D$. These are records of pre-critical message packets. Some of these packets are untimely and meet the criteria of \ref{CMsgLemCondA}, while others are timely.

It follows from Lemma \ref{OrigPCountLem} that if the message is timely then the comparison will fail and the procedure will not be invoked. So the procedure is invoked only for untimely and post-critical message packets, in the right order, and causes $\lVert \mpkin{}[G] \rVert$ to be incremented each time. If the procedure is invoked for all the untimely and post-critical message packets up to the $i^{th}$ one, then $\lVert \mpkin{}[G] \rVert$ grows by $i-1$ up to that point, which is not enough to prevent the comparison from succeeding for the $i^{th}$ message packet. Therefore the procedure is invoked exactly for the records of message packets that meet one of the two criteria, in the correct order, as claimed.
\end{proof}

\begin{lem}
\label{CoDonOrderAckLem}
Let $P$ be an original process (a member of view zero in $H$) that processes a critical co-donation packet from $G$. Then there is a one-to-one, order preserving correspondence between the following two sequences:
\begin{enumerate}
\item The sequence of \ref{ReceiveAck} invocations by $P$ at labeled step \ref{RCD:STAck} of the \ref{ReceiveCoDonation} procedure as it processes the co-donation packet from $G$.
\item\label{CAckLemCond} The queuing events of untimely acknowledgement packets in channel $\channel{D}{P}$, ordered by $\prec$. Formally these are the events $\psendevent{k} \in \dot{\eventset}$ that meet the following criteria:
\begin{align*}
k & \in \channel{D}{P} \\
k & = \pks{\fp{msg}} \\
\psendevent{k} & \prec \pnotifyevent{\vcrit}{D} \\
\pnotifyevent{\vcrit}{P} & \prec \preceiveevent{k} \qquad \text{if } \preceiveevent{k} \text{ exists}
\end{align*}
\end{enumerate}
\end{lem}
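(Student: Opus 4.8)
The plan is to follow the template of Lemma \ref{DonOrderAckLem} almost verbatim, with the roles of the two channel endpoints interchanged. I would establish the correspondence in two directions — assigning to each qualifying packet $k = \pks{\fp{msg}} \in \channel{D}{P}$ a record of $\fp{msg}$ in $P$'s wait set that survives into $\se{UNT}_p$ and passes the test at labeled step \ref{RCD:STAck}, and conversely — and then check that the lexicographic sort performed by \ref{ReceiveCoDonation} lists these records in $\prec$-order. The workhorse throughout is Lemma \ref{OrigPCountLem}, applied to the channel $\channel{P}{D}$ whose endpoints are both original; it keeps $P$'s outgoing message index in lockstep with $D$'s value of $\mpkin{}[P]$, and so translates the comparison $\lVert \fp{index} \rVert \le \lVert \fp{co\_donation}.\mpkin{}[\self] \rVert$ into a statement about which of $P$'s messages $D$ had already processed by the critical moment. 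A preliminary I would record is that the co-donated value $\fp{co\_donation}.\mpkin{}[P]$ equals the critical value of $\mpkin{}[P]$ at $D$: $G$ inherits this entry from $D$, \ref{Run} leaves it untouched, and \ref{ReceiveDonation} computes the co-donation before any \ref{ReceiveMessage} sub-transaction runs, i.e. before $G$ has processed any message from $P$ (the donation being the first $\channel{P}{G}$ packet).

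The core of the argument is pinning down exactly which wait-set records of $P$ carry a live $G$-instability at the instant $P$ runs \ref{ReceiveCoDonation}. Forward direction: if $k = \pks{\fp{msg}}$ is untimely then, since $D$ queues $k$ in the same \ref{ReceiveMessage} transaction in which it processes $\pkm{\fp{msg}}$, the hypothesis $\psendevent{k} \prec \pnotifyevent{\vcrit}{D}$ forces $D$ to have processed $\fp{msg}$ pre-critically, so by Lemma \ref{OrigPCountLem} the record $\langle \fp{msg}, \fp{index}, \fp{iset} \rangle$ that $P$ placed in its wait set has $\lVert \fp{index} \rVert$ at most the critical $\lVert \mpkin{}[P] \rVert$ at $D$; and because $\preceiveevent{k}$ is absent or post-critical, $P$ has not yet discarded the $D$-instability of that record, so the critical \ref{JoinNotification} at $P$ attaches a $G$-instability to it (with zero $.b$ field). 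That $G$-instability can only be cleared by a $G$-acknowledgement or by $\nr{G}$, neither of which $P$ can process before the co-donation — the co-donation is the first $\channel{G}{P}$ packet, queued at the top of \ref{ReceiveDonation} before any \ref{ReceiveMessage}, so FIFO puts it ahead of every $G$-acknowledgement, and the \CAxPacket{} forbids $P$ from dequeuing $\nr{G}$ first — hence the record is in $\se{UNT}_p$ and passes the test. Reverse direction: a record in $\se{UNT}_p$ passing the test cannot have acquired its $G$-instability at message broadcast/forward time (that would put it post-critical, forcing $\lVert \fp{index} \rVert$ above the critical $\lVert \mpkout{} \rVert$ at $P$, hence above the critical $\lVert \mpkin{}[P] \rVert$ at $D$ by Lemma \ref{OrigPCountLem}, contradicting the test), so $\fp{msg}$ sat in $P$'s wait set with a $D$-instability at the critical moment; the test together with Lemma \ref{OrigPCountLem} shows $D$ had already processed $\fp{msg}$ and therefore queued $k = \pks{\fp{msg}}$ pre-critically, while the surviving $D$-instability says $P$ had not processed $k$ — so $k$ meets every criterion. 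Given $\fp{msg}$ the packet $k$ is unique because $P$ queues $\pkm{\fp{msg}}$ to $D$ exactly once, and the two assignments are plainly mutual inverses.

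For order preservation, within $\se{UNT}_p$ the sort key is $(\lVert \fp{index} \rVert, 0)$, and $\lVert \fp{index} \rVert$ strictly increases with each message packet $P$ queues; since $D$ processes $P$'s message packets FIFO and acknowledges each immediately, $\psendevent{k} \prec \psendevent{k'}$ holds exactly when the record of $\fp{msg}$ sorts before that of $\fp{msg}'$. The \ref{ReceiveMessage} calls triggered by the interleaved $\se{UNT}_g$ records — governed by Lemma \ref{CoDonOrderMsgLem} — touch only $\mpkin{}[G]$, $\rset{}$ and $\fque{}$, none of which enters the $\se{UNT}_p$ test, so they do not perturb which records fire a \ref{ReceiveAck}, and the \ref{ReceiveAck} invocations therefore occur in the $\prec$-order of the acknowledgement queuing events. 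I expect the main obstacle to be precisely the $G$-instability lifetime bookkeeping in the middle paragraph — ruling out both premature clearing of that instability and spurious post-critical records — since the index arithmetic and the FIFO/sort reasoning are routine. The degenerate case $P = D$ is harmless: the \AxProcIV{} forbids untimely self-channel packets, and the index test likewise rejects every $\se{UNT}_p$ record there, so both sequences are empty.
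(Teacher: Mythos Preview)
Your proposal is correct and follows essentially the same approach as the paper's proof: both establish the preliminary that $\fp{co\_donation}.\mpkin{}[P]$ equals the critical value of $\mpkin{}[P]$ at $D$, then run the two-direction correspondence via Lemma~\ref{OrigPCountLem}, eliminating the post-critical case in the reverse direction by comparing $\lVert\fp{index}\rVert$ against the critical $\lVert\mpkout{}\rVert$ at $P$. Your treatment is in fact slightly more explicit than the paper's in justifying why the $G$-instability survives until the co-donation arrives (the paper simply asserts this), and you add the $P=D$ degenerate-case remark, which the paper omits.
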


\begin{proof}
Let $I_{\fp{crit}}$ be the value of $\mpkin{}[P]$ at $D$ at the critical moment. $G$ inherits $\mpkin{}[P]$ from $D$. This value is not changed by the \ref{Run} procedure, and it remains unchanged until the receipt of a donation packet from $P$, since this is the first packet of any kind that $G$ receives from $P$. Upon receipt of the donation packet from $P$, $G$ immediately sends a co-donation packet to $P$. Therefore when $P$ executes the \ref{ReceiveCoDonation} procedure, we have:
$$
\lVert \fp{co\_donation}.\mpkin[P] \rVert = \lVert I_{\fp{crit}} \rVert 
$$

Let $k$ be a packet that meets all the criteria of (\ref{CAckLemCond}). Then this packet was sent from $D$ to $P$ prior to the critical view change, in reaction to the receipt of a message \fp{msg} from $P$. When $P$ sends the message packet to $D$, it places a record
$
\langle
\fp{msg}, \fp{index}, \fp{iset}
\rangle
$
in its \wset{}. Let $I_{\fp{msg}} = \fp{index}$.

Since the acknowledgement packet for this message does not arrive by the critical moment, the record remains in \wset{} at $P$, with $D$ in its instability set until then. At that moment, $P$ adds a $G$ instability to the record (see labeled step \ref{JN:NewInst} of the \ref{JoinNotification} procedure). This additional instability ensures that the record will remain in \wset{} until the first packet from $G$ arrives. This packet is the co-donation from $G$. Therefore when $P$ processes the co-donation packet it discovers the record in its \wset{} and copies it into $\se{UNT}_p$.

As the critical parent, $D$ is an original process and Lemma \ref{OrigPCountLem} applies. Therefore after $D$ executes the \ref{ReceiveMessage} procedure, its value of $\mpkin{}[P]$ is equal to $I_{\fp{msg}}$. Since the receipt of the message \fp{msg} at $D$ takes place pre-critically, $\lVert I_{\fp{crit}} \rVert \ge \lVert I_{\fp{msg}} \rVert$. Therefore:
$$
\lVert \fp{co\_donation}.\mpkin[P] \rVert = \lVert I_{\fp{crit}} \rVert \ge \lVert I_{\fp{msg}} \rVert = \lVert \fp{index} \rVert 
$$
therefore the record results in the invocation of the \ref{ReceiveAck} procedure at labeled step \ref{RCD:STAck} of the \ref{ReceiveCoDonation} procedure.

If $k'$ is a packet that meets the same criteria and has a later queuing event than $k$, then $k' = \pks{\fp{msg'}}$ where \fp{msg'} is sent from $P$ to $D$ after \fp{msg} is sent. Therefore the value of $\lVert \fp{index} \rVert$ for \fp{msg'} is higher and therefore the related sub-transaction is executed later, so the mapping in this direction is order preserving.

The inverse correspondence is constructed in a similar way. Let
$
\langle
\fp{msg}, \fp{index}, \fp{iset}
\rangle
$
Be a record in $\se{UNT}_p$ that passes the comparison test
$$
\lVert \fp{index} \rVert \le \lVert \fp{co\_donation}.\mpkin[P] \rVert
$$
From the construction of $\se{UNT}_p$ it follows that $\fp{iset}[G]$ exists. Therefore the message \fp{msg} got into \wset{} after a message packet queuing event that included a packet being sent to $G$ (post-critically) or to $D$ (pre-critically). We must eliminate the post-critical case. If the queuing event is post-critical then $\fp{index}$ must be strictly bigger than the critical value of $\mpkout{}$ at $P$. Call this critical value $O_{\fp{crit}}$. It follows from Lemma \ref{OrigPCountLem} that $O_{\fp{crit}} \ge I_{\fp{crit}}$. Therefore
$$
\fp{index} > \lVert O_{\fp{crit}} \rVert \ge \lVert I_{\fp{crit}} \rVert = \lVert \fp{co\_donation}.\mpkin[P] \rVert
$$
contradicting our assumption. It follows that \fp{msg} must have been sent pre-critically to $D$. We know that $D$ processed the message packet pre-critically because the relation
$$
\lVert \fp{index} \rVert \le \lVert \fp{co\_donation}.\mpkin[P] \rVert = \lVert I_{\fp{crit}} \rVert
$$
implies it according to Lemma \ref{OrigPCountLem}. 
It is obvious that these two correspondences are inverses of each other and so we are done. 
\end{proof}

\begin{cor}
\label{CoDonOrderCor}
Let $P$ be an original process that processes a critical co-donation packet from $G$. Then there is a one-to-one, order preserving correspondence between the following two sequences:
\begin{enumerate}
\item The sequence of \ref{ReceiveMessage} and \ref{ReceiveAck} invocations by $G$ at labeled steps \ref{RD:STMessage} and \ref{RD:STAck} of the \ref{ReceiveDonation} procedure as it processes the donation packet from $P$.
\item The queuing events of untimely forwarded message packets in channel $\channel{P}{D}$, the queuing events of untimely acknowledgement packets in channel $\channel{P}{D}$ and the queuing events of post-critical, pre-$P$-donation message packets in channel $\channel{G}{G}$, ordered by $\prec$.
\end{enumerate}
\end{cor}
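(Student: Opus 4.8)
The plan is to prove this corollary the same way Corollary~\ref{DonOrderCor} is proved: splice together the per-type bijections furnished by the earlier lemmas and then verify that the interleaving of the two types respects $\prec$. Since item~(1) names the \ref{ReceiveMessage} and \ref{ReceiveAck} invocations of $G$ inside \ref{ReceiveDonation} (steps \ref{RD:STMessage} and \ref{RD:STAck}), the two building blocks I would use are Lemma~\ref{DonOrderMsgLem} and Lemma~\ref{DonOrderAckLem}. The former already gives an order-preserving bijection between the \ref{ReceiveMessage} invocations and the untimely message packet queuing events in $\channel{P}{D}$; the latter gives an order-preserving bijection between the \ref{ReceiveAck} invocations and the untimely forwarded acknowledgement packet queuing events in $\channel{P}{D}$. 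The post-critical, pre-$P$-donation self-messages in $\channel{G}{G}$ named in item~(2) would be incorporated as a third separately-ordered subsequence, slotted among the others by the same $\se{UNT}_g$ bookkeeping (cf. the self-message analysis of condition \ref{CMsgLemCondB} in Lemma~\ref{CoDonOrderMsgLem}). With these three subsequences in hand, the only thing left is to show that they merge into a single sequence that is simultaneously $\prec$-ordered and equal to the order in which $G$ issues its sub-transactions.

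First I would pin down the on-each-type correspondences by citing the two lemmas directly, observing that each maps its invocations to its packet queuing events strictly order-preservingly and that the two maps are mutual inverses by construction. This reduces the corollary to a single cross-type statement.

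The \emph{cross-type comparison} is the crux and the main obstacle. I must show that for a message-type packet $k=\pkm{\fp{msg}}$ (record in $\se{UNT}_p$, so $\heightb = \lVert\fp{index}\rVert$) and an acknowledgement-type packet $k'=\pks{\fp{msg}'}$ (record in $\se{UNT}_g$, so $\heightb = 0$), the lexicographic $(\heighta,\heightb)$ sort of $\se{UNT}=\se{UNT}_g\cup\se{UNT}_p$ orders their records exactly as $\prec$ orders $\psendevent{k}$ and $\psendevent{k'}$. I would argue by cases on the $\prec$-order of the two queuing events, using Lemma~\ref{OrigPCountLem} at $P$ to convert ``$k$ queued before (resp. after) $\pkm{\fp{msg}'}$ is processed'' into the numerical comparison $\heighta(k)=\mpkin{}[D].f$ versus $\heighta(k')=\fp{index}'.f$. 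In the forward case the inequality is strict because $\fp{msg}'$ is forwarded (it is $\mpkout{}.f$, not $\mpkin{}[D].f$, that is bumped when $D$ queues $\pkm{\fp{msg}'}$), so $\heighta(k)<\heighta(k')$ and the records sort correctly on $\heighta$ alone. In the reverse case Lemma~\ref{OrigPCountLem} yields only $\heighta(k)\ge\heighta(k')$, and when equality holds the tie is broken by $\heightb$: since $\heightb(k')=0<\lVert\fp{index}(k)\rVert=\heightb(k)$, the order is again resolved as required. This is precisely the computation performed in the proof of Corollary~\ref{DonOrderCor}, so I expect it to transfer with only the bookkeeping relabelled; the analogous comparison of a $\channel{G}{G}$ self-message record against a $\channel{P}{D}$ record is handled the same way, the self-message records carrying strictly larger $\heightb$.

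Finally I would assemble the merge. The union of the named packet sets is linearly ordered by $\prec$ because $\pnotifyevent{\vcrit}{D}\asymp\joinevent{G}$ places every pre-critical event before every post-critical one (the note appended to Lemma~\ref{CoDonOrderMsgLem}), so the subsequences genuinely interleave into one $\prec$-chain. Combining the two per-type bijections with the cross-type comparison then shows this $\prec$-chain coincides, element for element, with the order in which $G$ performs its \ref{ReceiveMessage}/\ref{ReceiveAck} sub-transactions, which is the assertion of the corollary. Mutual inverseness is immediate from the constructions, so no separate surjectivity argument is needed.
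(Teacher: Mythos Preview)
Your plan rests on the wrong building blocks. The corollary, despite evident typos in item~(1) and in the channel direction, is about the \emph{co-donation} transaction executed by $P$: the opening sentence says so, and the presence of post-critical $\channel{G}{G}$ packets in item~(2) is decisive, since such packets appear only in the \wset{} that $G$ packs into its co-donation, never in the donation transaction at $G$. The paper's proof accordingly splices Lemmas~\ref{CoDonOrderMsgLem} and~\ref{CoDonOrderAckLem}, not \ref{DonOrderMsgLem} and~\ref{DonOrderAckLem}. Using the donation lemmas cannot work: Lemma~\ref{DonOrderMsgLem} matches \emph{all} untimely messages in $\channel{P}{D}$ (item~(2) names only forwarded ones), Lemma~\ref{DonOrderAckLem} matches only \emph{forwarded} acknowledgements (item~(2) names all of them), and there is no way to ``slot'' the $\channel{G}{G}$ records into the donation-side $\se{UNT}_g$, because those records simply are not present in $G$'s \wset{} at the moment \ref{ReceiveDonation} runs.

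The misreading propagates into your cross-type comparison. In \ref{ReceiveCoDonation} the message records sit in $\se{UNT}_g$ (drawn from $\fp{co\_donation}.\wset{}$) and the acknowledgement records in $\se{UNT}_p$ (drawn from $P$'s own \wset{}), the reverse of what you wrote. The paper therefore compares an untimely forwarded message $k\in\channel{D}{P}$ against an untimely acknowledgement $k'\in\channel{D}{P}$ by applying Lemma~\ref{OrigPCountLem} at $D$ with $\mpkin{}[P]$, obtaining $\heighta(r)=\lVert\mpkin{}[P]\rVert$ versus $\heighta(r')=\lVert\fp{index}'\rVert$; the forward-direction inequality is already strict without your ``$\fp{msg}'$ is forwarded'' device (which would fail here, since in the co-donation case $\fp{msg}'$ need not be forwarded). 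The paper then adds a second cross-type check you only wave at: every $\channel{G}{G}$ record $r''$ sorts after every acknowledgement record $r'$, because $G$ inherits $\mpkin{}[P]$ unchanged from $D$'s critical value, giving $\heighta(r'')\ge\heighta(r')$, with ties broken by $\heightb(r'')>0=\heightb(r')$. Your outline has the right shape, but the processes, channels, and $\se{UNT}$ roles are systematically swapped; redoing it on the co-donation side with Lemmas~\ref{CoDonOrderMsgLem} and~\ref{CoDonOrderAckLem} is what is needed.
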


\begin{proof}
Most of the claim follows directly from Lemmas \ref{CoDonOrderMsgLem} and \ref{CoDonOrderAckLem}. But we still have to show that if $k, k' \in \channel{D}{P}$ are an untimely forwarded message packet and an untimely acknowledgement packet respectively, and if $k'' \in \channel{G}{G}$ is a pre-$P$-donation message packet then
\begin{itemize}
\item $\psendevent{k} \, \prec \, \psendevent{k'}$ if and only if the corresponding records in $\se{UNT}_g$ and $\se{UNT}_p$ are processed in the same order.
\item The record for $k''$ in $\se{UNT}_g$ is processed after the record for $k'$ in $\se{UNT}_p$.
\end{itemize}

Let $k = \pkm{\fp{msg}}$ and $k' = \pks{\fp{msg}'}$. Let $\fp{index}'$ be the value of $\mpkout{}$ in $P$ when it queues $\pkm{\fp{msg}'}$. This is the value of the \fp{index} field of the record of $\fp{msg}'$ the $\wset{}(P)$.

Suppose first that $\psendevent{k} \,\dot{\prec}\, \psendevent{k'}$. Then $D$ queues $k$ before it processes $\pkm{\fp{msg}'}$. By Lemma \ref{OrigPCountLem}, at the time that $D$ queues $k$, we have
$
\lVert \mpkin{}[P] \rVert < \lVert \fp{index}' \rVert
$
at $D$. Let $r, r' \in \se{UNT}$ be the records that correspond to $k$ and $k'$ respectively. We have already seen that $\heighta(r) = \lVert \mpkin{}[P] \rVert$ and $\fp{index}(r') = \fp{index}'$. Therefore
$$
\heighta(r) = \lVert \mpkin{}[P] \rVert < \lVert \fp{index}' \rVert = \heighta(r')
$$
which proves one direction of the claim for $k$ and $k'$.

To prove the other direction, suppose that $\psendevent{k} \,\succ\, \psendevent{k'}$. Then $D$ queues $k$ after it processes $\pkm{\fp{msg}'}$. By Lemma \ref{OrigPCountLem}, at the time that $D$ queues $k$, we have
$
\lVert \mpkin{}[P] \rVert \ge \lVert \fp{index}' \rVert
$
and following the same logic as before we conclude that $\heighta(r) \ge \heighta(r')$. In this case however we cannot guarantee a strict inequality. In case there is equality, however, the order between $r$ and $r'$ is determined by $\heightb$. Since $r' \in \se{UNT}_p$ we have $\heightb(r') = 0$ while $r \in \se{UNT}_g$ and therefore
$$
\heightb(r) = \lVert \fp{index}(r) \rVert > 0 = \heightb(r')
$$

We still have to show that the record for $k''$ is placed later than the record for $k'$ in the sorted order. As we saw before, $G$ inherits the value of $\mpkin{}[P]$ without change, and since $D$ processes $\pkm{\fp{msg}'}$ pre-critically, the critical value of $\lVert \mpkin{}[P] \rVert$ at $D$ is at least as high as $\lVert \fp{index}' \rVert$. Therefore at the moment that $G$ queues $k''$ we have
$
\lVert \mpkin{}[P] \rVert \ge \lVert \fp{index}' \rVert
$
and therefore $\heighta(r'') \ge \heighta(r')$ where $r''$ is the record corresponding to $k''$. The rest of the argument is the same as the similar argument for $\psendevent{k} \,\succ\, \psendevent{k'}$.
\end{proof}

\begin{lem}
\label{CoDonFlushLem}
Let $P$ be an original process (a member of view zero in $H$) that processes a critical co-donation packet from $G$, and suppose that the invocation of \ref{TryToInstall} at labeled step \ref{RCD:TryToInstall} of the \ref{ReceiveCoDonation} procedure results in the installation of the pending views at $P$. Let $v = \cview{} + \vgap{}$ be the view height at $P$ at the moment that it processes the critical co-donation. Then $G$ must have queued a $\pkf{v}$ packet prior to dequeuing the donation from $P$. Moreover, after queuing that packet and until the donation from $P$ is dequeued, $G$ does not queue any ghost, flush or message packets. 
\end{lem}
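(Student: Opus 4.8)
The plan is to trace the critical donation/co-donation exchange at the two processes $P$ and $G$. Write $d \in \channel{P}{G}$ for the donation packet that $P$ queues while executing \ref{JoinNotification} on the notification $\nj{G}{D}$, and $c \in \channel{G}{P}$ for the co-donation packet that $G$ queues in \ref{ReceiveDonation} while processing $\preceiveevent{d}$; both $\preceiveevent{d}$ and $\preceiveevent{c}$ exist, and by Lemma \ref{OmnibusCBCASTLem}(\ref{OCL:height}) the number $v$ of the statement equals $\valuepre{(\cview{}+\vgap{})}{P}{\preceiveevent{c}} = \view(\preceiveevent{c})$.

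First I would identify $\valuepre{\sfh{}}{G}{\preceiveevent{d}}$ with $v$. The procedure \ref{ReceiveCoDonation} sets $\fvec{}[G]$ at $P$ to the co-donated value $\valuepre{\sfh{}}{G}{\preceiveevent{d}}$, and neither the \ref{ReceiveMessage}/\ref{ReceiveAck} sub-calls it makes nor the subsequent \ref{TryToInstall} touch $\fvec{}[G]$ at $P$, so $\valuepost{\fvec{}[G]}{P}{\preceiveevent{c}} = \valuepre{\sfh{}}{G}{\preceiveevent{d}}$. Since $P$ processed $\nj{G}{D}$ before sending $d$ and, by the \CAxPacket{}, cannot have processed a removal of $G$ before dequeuing $c$, we have $G \in \valuepost{\lset{}}{P}{\preceiveevent{c}}$; and by hypothesis $\valuepost{\vgap{}}{P}{\preceiveevent{c}} = 0$. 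The equality clause of Lemma \ref{OmnibusCBCASTLem}(\ref{OCL:gf}) then gives $\valuepre{\sfh{}}{G}{\preceiveevent{d}} = \valuepost{\fvec{}[G]}{P}{\preceiveevent{c}} = \valuepost{(\cview{}+\vgap{})}{P}{\preceiveevent{c}} = v$. Finally $v = \view(\preceiveevent{d})$: Lemma \ref{OmnibusCBCASTLem}(\ref{OCL:sendgfA}) and (\ref{OCL:height}) give $v = \valuepre{\sfh{}}{G}{\preceiveevent{d}} \le \valuepre{(\cview{}+\vgap{})}{G}{\preceiveevent{d}} = \view(\preceiveevent{d})$, while $\psendevent{c}$ is a side effect of $\trans(\preceiveevent{d})$, so $\view(\preceiveevent{d}) = \view(\psendevent{c}) \le \view(\preceiveevent{c}) = v$ by Lemma \ref{KOrderLem}.

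For the first assertion I would use that $\sfh{}$ at $G$ is non-decreasing. It equals $v$ at the onset of $\trans(\preceiveevent{d})$, whereas at $G$'s birth it is initialized in \ref{Run} to the value of $\sgh{}$ inherited from $D$, which by Lemma \ref{OmnibusCBCASTLem}(\ref{OCL:sendgfA}) and (\ref{OCL:height}) is at most $\vcrit - 1 < \vcrit \le v$. Hence there is a last transaction $\trans(g)$ with $g \prec \preceiveevent{d}$ in which $\sfh{}$ at $G$ is raised, necessarily to exactly $v$; this raise occurs inside \ref{CheckFlush}, where $\cview{}+\vgap{} = v$, and that call therefore also queues a $\pkf{v}$ packet to the (nonempty) contact set of $G$. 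This is the packet the statement names, and within $\trans(g)$ nothing of the forbidden types is queued after it (any donation, ghost or forwarded-message packets of that transaction are queued before \ref{CheckFlush} is reached, and no later \ref{CheckFlush} call in $\trans(g)$ can raise $\sfh{}$ again).

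For the second assertion I would control the transactions strictly between the $\pkf{v}$-queuing and $\preceiveevent{d}$. For such a transaction $\trans(h)$ at $G$, Lemma \ref{KOrderLem} gives $\view(g) \le \view(h) \le \view(\preceiveevent{d})$; since $\view(g) = \valuepost{(\cview{}+\vgap{})}{G}{g} = v = \view(\preceiveevent{d})$ we get $\view(h) = v$, and since $\sfh{}$ is constantly $v$ on this stretch, Lemma \ref{OmnibusCBCASTLem}(\ref{OCL:sendgfA}) and (\ref{OCL:height}) force $\valuepre{(\cview{}+\vgap{})}{G}{h} = v$. A transaction triggered by \ref{Run}, \ref{JoinNotification} or \ref{RemovalNotification} would increment $\vgap{}$ and so have pre-value $v - 1$; hence none occurs, every such $\trans(h)$ is a packet-processing transaction, $\cview{}+\vgap{}$ stays equal to $v$ (\ref{TryToInstall} preserves it), so $\sgh{} = \sfh{} = \cview{}+\vgap{} = v$ there and every \ref{CheckFlush} call on the stretch queues nothing. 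Moreover, at every trigger $h$ on the stretch $P \in \valuepre{\lset{}}{G}{h} \setminus \valuepre{\cset{}}{G}{h}$ (by the \CAxPacket{} $G$ has not processed a removal of $P$ before dequeuing $d$, and $P$ enters $\cset{}(G)$ only when $G$ processes $d$), so Corollary \ref{ContactSetCor} yields some $X \in \lset{}(G)$ with $\fvec{}[X] < \vcrit \le v$; by Lemma \ref{OmnibusCBCASTLem}(\ref{OCL:gf}) this forces $\vgap{}(G) > 0$ on the stretch, so \ref{BroadcastMessage} can only append to \lque{}, and \ref{TryToInstall} at $G$ never passes its flush check and so installs no view and drains no \lque{}. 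Running through \ref{ReceiveMessage}, \ref{ReceiveAck}, \ref{ReceiveGhost}, \ref{ReceiveFlush}, \ref{ReceiveDonation}, \ref{ReceiveCoDonation} and \ref{ScanCall}, the only packets any of them can queue on this stretch are acknowledgement and co-donation packets, never a ghost, flush or message packet, which is exactly the claim. The main obstacle is this last step: pinning down the stretch-wide invariants $\view(h) = v$, $\cview{}+\vgap{} = v$, $\vgap{}(G) > 0$, ruling out notification transactions, and then the exhaustive (but routine) check over the packet-handling procedures; everything before it is a short computation with Lemma \ref{OmnibusCBCASTLem}.
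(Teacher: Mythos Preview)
Your proposal is correct and follows essentially the same route as the paper: establish $\valuepre{\sfh{}}{G}{\preceiveevent{d}} = v$, use the low initial value of $\sfh{}$ at $G$ to force a $\pkf{v}$ multicast, and then pin down the stretch between that multicast and $\preceiveevent{d}$ to rule out further ghost, flush, and message packets. The paper is terser---it cites the proof of Lemma \ref{DonationSELem} both for the identity $\sfh{}(G)=v$ and for the absence of message packets, and invokes Lemma \ref{OmnibusCBCASTLem}(\ref{OCL:sendgfA}) together with the \AxProcIII{} to exclude ghost packets---whereas you re-derive these facts directly (via the equality clause of Lemma \ref{OmnibusCBCASTLem}(\ref{OCL:gf}), Corollary \ref{ContactSetCor} to force $\vgap{}>0$, the exclusion of notification transactions, and an exhaustive walk through the packet handlers); your version is more self-contained but the underlying argument is the same.
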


\begin{proof}
The proof of Lemma \ref{DonationSELem} shows that if \ref{TryToInstall} actually installs the pending views, then the value of $\sfh{}$ at $G$ must have been equal to $v$ at the time that $G$ processed the donation from $P$. Let $v_0 = \valuepre{\sfh{}}{G}{\pnotifyevent{\vcrit}{G}}$ be the initial flush height at $G$ (see Definition \ref{PrePostDef}). $v_0$ is equal to $\valuepre{\sgh{}}{D}{\pnotifyevent{\vcrit}{D}}$ and therefore $v_0 < \vcrit$ according to Lemma \ref{OmnibusCBCASTLem}(\ref{OCL:height} and \ref{OCL:sendgfA}). Lemma \ref{OmnibusCBCASTLem}(\ref{OCL:height}) also implies that $v \ge \vcrit$ because the co-donation is processed at $P$ after the critical view change. This implies that $\sfh{}$ at $G$ increases between the time $G$ is created and the time that it processes the donation from $P$. The only way for that to happen is for $G$ to queue one or more flush packets. The last one of the flush packets before the receipt of the donation from $P$ would reflect the value of $\sfh{}$ at the time. Therefore this last flush packet must be $k_{\fp{last}} = \pkf{v}$. We have to show that $G$ does not queue any additional regular packets between $k_{\fp{last}}$ and the processing of the donation from $P$. We already know that $G$ does not queue any additional flush packets in that interval. From Lemma \ref{OmnibusCBCASTLem}(\ref{OCL:sendgfA}) and the \AxProcIII{} we know that $G$ also does not queue any ghost packets. It follows from the proof of Lemma \ref{DonationSELem} that $G$ does not queue any message packets either.
\end{proof}

\subsection{The Label Space}
\label{LabelSpaceSS}
We construct an infinite, very well founded, partially ordered set $\labelspace$, that we call the {\em label space}. We use this space as a medium for inducing an ordering on the events in $H^r$. We do it the following way. First we assign a label from $\labelspace$ to each event in $H$ in a $\prec$-order preserving manner. Then we expand the assignment to events in $H^r$. Then we use the latter labeling to induce an order on the events in $H^r$.

Each label is made up of four coordinates: the constellation coordinate, the sub-transaction coordinate, the adjustment coordinate and the side-effect coordinate. Each coordinate comes from its own very well founded partially ordered set, and the ordering on $\labelspace$ as a whole is left-handed lexicographic, with the constellation coordinate being the major one.

All the events in a single transaction share the same constellation coordinate, but the converse is not true. It is possible for multiple transactions to share the same constellation coordinate. The set of all the transactions that share each constellation coordinate is called a {\em constellation}. In the original history $H$ only notification transactions are grouped into constellations that contain more than one transaction. In $H^r$ however this is not the case. We will prove by induction that $H$ and $H^r$ converge. The proof will proceed by induction over constellations (normal induction can be carried over any very well-founded partially ordered set, not just over natural numbers). 

\subsubsection{The constellation coordinate set}
This set is made up of all the trigger events in $\dot{\eventset}$ with the $\dot{\prec}$ partial order (see Definition \ref{CleanEventDef}).

\begin{defn}
\label{CSpaceDef}
We use the symbol $\cspace$ to denote the constellation set. For any transaction $T$ the clean trigger $\dot{\trig}(T) \in \dot{\eventset}$ is an element of $\cspace$. We denote that element by $\ell_T$. For any event $e \in \eventset^H$ we denote $\ell_e = \ell_{\trans(e)}$. If $T$ is a notification transaction for view $i$ then by definition $\ell_T = \ell_i$. 
\end{defn}

\subsubsection{The sub-transaction coordinate set}
\label{SubTransSSS}
This coordinate is only used for differentiating sub-transactions within donation and co-donation transactions. For all other events we use a special zero symbol for this coordinate. These sub-transactions simulate the processing of packets that were supposed to be sent to or from a newly joining process, but that were not sent due to race conditions. For the purpose of constructing $H^r$ the labeling of the critical donation and co-donation sub-transactions is crucial. As for the non-critical donations and co-donations (i.e. those that involve processes other than $G$ that join after the critical view change), their labeling is a lot less important. It would be nice to use the same labeling scheme for all donations and co-donation, and this is indeed possible, but very complicated and not really worth the effort. So reluctantly we use two separate systems of labeling sub-transactions. One for the critical ones and one for the non-critical ones.

The analysis in \ref{DonCoDonSSS} shows that the critical sub-transactions in both donation and co-donation transactions can be ordered using related events from $\dot{\eventset}$. These events are all queuing events of packets, ordered by $\dot{\prec}$. This is true for non-critical sub-transactions as well but we have not shown that. So we will simply use positive integers to label those sub-transactions. Together with the additional special zero element this yields:
$$
\sspace = \{ \hat{0} \} + \left( \dot{\eventset} \coprod \mathbb{N} \right)
$$

\subsubsection{The adjustment coordinate set}
This coordinate is used for slight adjustments of trigger events in $H^r$, forcing them to happen either slightly earlier or slightly later than related triggers. The set contains nine elements:
$$
\aspace = \{ \minusshftghost < \shftghost < \minusshftflush < \shftflush < \minusshftack < \shftack < \hat{0} < \minusshftmsg < \shftmsg \}
$$
Where the elements refer to the time adjustment in processing the following packets:
\begin{itemize}
\item $\minusshftghost$: A ghost packet from $\minusG$
\item $\shftghost$: A ghost packet from $G$
\item $\minusshftflush$: A flush packet from $\minusG$
\item $\shftflush$: A flush packet from $G$
\item $\minusshftack$: An acknowledgement packet from $\minusG$
\item $\shftack$: An acknowledgement packet from $G$
\item $\minusshftmsg$: A message packet from $\minusG$
\item $\shftmsg$: A message packet from $G$
\end{itemize}
\subsubsection{The side-effect coordinate set}
This coordinate is used for differentiating events within a transaction (or a sub-transaction, in the case of a donation or a co-donation that includes multiple sub-transactions). The set contains an infinite sequence of elements
\begin{multline*}
\fspace = \{ \hat{0} < \lCODONATE < \lDONATE < \lACK < \lGHOST < \lFLUSH < \\
< \lBCAST_1 < \lBCAST_2 < \dots \}
\end{multline*}
Where $\hat{0}$ is indicates a trigger event and the meaning of each of the other symbols is self explanatory.

\subsubsection{Labeling the events in $H$}
\label{LabelingHSSS}
As we mentioned, each label contains four coordinates, one of each type. Formally
$$
\labelspace = \cspace \times \sspace \times \aspace \times \fspace
$$
If $\ell_T \in \cspace$, $s \in \sspace$, $a \in \aspace$ and $f \in \fspace$, we use the notation $\lblggg{T}{s}{a}{f}$ to denote the label with those coordinates. If $T$ is a critical donation or co-donation transaction we use the notations $\lblcggg{G}{P}{s}{a}{f}$ and $\lblcggg{P}{G}{s}{a}{f}$. 

We want to create a map $\lbl : \eventset^H \rightarrow \labelspace$ that assigns a label to each event in $H$ and preserves order in the sense that if $e \prec f$ then $\lbl(e) < \lbl(f)$.

Let $e$ be any event in $H$. We create the label of $e = \lblggg{T}{s}{a}{f}$ coordinate by coordinate:
\begin{description}
\item[The constellation coordinate:] \hfill \\
We simply use the constellation $\ell_{\trans(e)}$.
\item[The sub-transaction coordinate:] \hfill
\begin{itemize}
\item If $\ell_{\trans(e)} = \crittime{P}{G}$ and $e$ is a side effect of one of the sub-transactions (labeled steps \ref{RD:STMessage} and \ref{RD:STAck} of \ref{ReceiveDonation}) then Corollary \ref{DonOrderCor} implies that there is a corresponding event $f = \psendevent{k} \in \dot{\eventset}$ where $k \in \channel{P}{D}$ is an untimely packet. We use $f$ as the sub-transaction coordinate for $e$.
\item If $\ell_{\trans(e)} = \crittime{G}{P}$ and $e$ is a side effect of one of the sub-transactions (labeled steps \ref{RCD:STMessage} and \ref{RCD:STAck} of \ref{ReceiveCoDonation}) then Corollary \ref{CoDonOrderCor} implies that there is a corresponding event $f = \psendevent{k} \in \dot{\eventset}$ where either $k \in \channel{D}{P}$ is an untimely packet or $k \in \channel{G}{G}$ is a post-critical, pre-$P$-donation packet. We use $f$ as the sub-transaction coordinate for $e$.
\item If $\ell_{\trans(e)} = \crittime{G}{P}$ and $e$ is a side effect of the \ref{TryToInstall} procedure invocation at labeled step \ref{RCD:TryToInstall} of \ref{ReceiveCoDonation} then Lemma \ref{CoDonFlushLem} implies that there is a corresponding event $f = \psendevent{k} \in \dot{\eventset}$ where $k \in \channel{G}{G}$ is the last pre-$P$-donation flush packet, and $k = \pkf{v}$ where $v$ is equal to the view height $\cview{} + \vgap{}$ at $P$ at the time that it processes the co-donation. We use $f$ as the sub-transaction coordinate for $e$. 
\item If $\ell_{\trans(e)}$ is a non-critical donation or co-donation transaction, and $e$ is a side effect of one of the sub-transactions then we label $e$ with a serial number, with all the side effects of the first sub-transaction receiving a value of $1$, the side effects of the second sub-transaction receiving a value of $2$, et cetera. 
\item If $e$ is any other event then we label $e$ with the special value $\hat{0}$. This includes the cases where $e$ is
\begin{itemize}
\item a trigger event.
\item a side effect of a transaction that is neither a donation nor a co-donation.
\item the queuing event of the co-donation packet in a donation transaction.
\item a side effect of the \ref{TryToInstall} invocation in a non-critical co-donation transaction.
\end{itemize}
\end{itemize}
\item[The adjustment coordinate:] \hfill \\
In $H$ this coordinate is always $\hat{0}$. We only make non-trivial adjustments for events that are added in $H^r$.
\item[The side-effect coordinate:] \hfill \\
We assign this coordinate according to the side-effect. If $e$ is a trigger event, we use the special zero value $\hat{0}$. Otherwise $e = \psendevent{k}$ for some packet. In that case we look at that side effect that produced $k$:
\begin{itemize}
\item If $k = \pkc{\fp{co\_donation}}$ then we assign the value $\lCODONATE$.
\item If $k = \pkd{\fp{donation}}$ then we assign the value $\lDONATE$.
\item If $k = \pks{\fp{msg}}$ then we assign the value $\lACK$.
\item If $k = \pkh{v}$ then we assign the value $\lGHOST$.
\item If $k = \pkf{v}$ then we assign the value $\lFLUSH$.
\item If $k = \pkm{\fp{msg}}$ then we have to differentiate between several cases:
\begin{itemize}
\item If $k$ is an original message packet, in the context of a message broadcast request transaction, we assign the value $\lBCAST_1$.
\item If $k$ is an original message packet $\pkm{\fp{msg}}$ in the context of a flush packet processing transaction (see Lemma \ref{GhostFlushSELem}), we assign the value $\lBCAST_i$ if $\fp{msg}$ is the $i^{th}$ original message out of \lque{}.
\item If $k$ is an original message packet $\pkm{\fp{msg}}$ in the context of a co-donation packet processing transaction (see Lemma \ref{DonationSELem}), we assign the value $\lBCAST_i$ if $\fp{msg}$ is the $i^{th}$ original message out of \lque{}.
\item If $k$ is a forwarded message packet $\pkm{\fp{msg}}$ in the context of a notification transaction (removal of process $P$), we assign the value $\lBCAST_i$ if $\fp{msg}$ is the $i^{th}$ forwarded message out of $\fque{}[P]$
\end{itemize}
\end{itemize}
\end{description}

\subsubsection{Some labeling examples}
Suppose that the application at some process $P$ in $H$ decides to broadcast a message. This decision constitutes a message broadcast request event $e_t$ that results in the invocation of the \ref{BroadcastMessage} procedure. If $\vgap = 0$ the message gets broadcast, which means that a set of message packets
$$
k_{P_1}, k_{P_2}, \dots, k_{P_n}
$$
one packet per contacted process, are queued to their respective channels. All of these packets share a single queuing event $e_q$:
$$
\psendevent{k_{P_1}} = \psendevent{k_{P_2}} = \dots = \psendevent{k_{P_n}} = e_q
$$
If $\vgap > 0$ then there is no queuing event. Therefore we have a transaction
$$
T = \trans_{e_t} =
\begin{cases}
\{ e_t \prec e_q \} & \text{if } \vgap = 0 \\
\{ e_t \} & \text{if } \vgap > 0
\end{cases}
$$
And the labeling yields
\begin{align*}
\lbl(e_t) & = \lblzzz{T} \\
\lbl(e_q) & = \lblzzg{T}{\lBCAST_1}
\end{align*}

In another example, suppose that the critical joining process $G$ processes a donation from $P$. The donated state from $P$ includes a record
$
\langle
\fp{msg}, \fp{index}, \fp{iset}
\rangle
$
in \wset{}. This record was placed in \wset{} by $P$ when it sent an untimely message packet $k_m = \pkm{\fp{msg}}$ to $D$. Suppose that the record causes $G$ to invoke the \ref{ReceiveMessage} procedure. When $G$ executes that call it queues an acknowledgement packet $k_s = \pks{\fp{msg}}$ to $P$. The queuing event $\psendevent{k_s}$ is labeled
$$
\lbl(\psendevent{k_s}) = \lblcgzg{P}{G}{\psendevent{k_m}}{\lACK}
$$

\begin{thm}
\label{LabelOrderThm}
The map
$
\lbl : \eventset^H \rightarrow \labelspace
$
is order preserving
\end{thm}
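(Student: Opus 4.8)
The plan is to verify the inequality $\lbl(e) \le \lbl(f)$ for every pair $e,f$ linked by one of the \emph{primitive} order relations that, by the \AxOrderII{}, generate $\prec$: (i) $e$ and $f$ are adjacent events at a common process; (ii) $e = \psendevent{k}$ and $f = \preceiveevent{k}$ for some packet $k$; (iii) $e = \pnotifyevent{j(J)}{E} \asymp \joinevent{J} = f$ for a child $J$ of $E$. I would show the inequality is strict except in case (iii), where in fact $\lbl(e) = \lbl(f)$. The theorem then follows by composition: if $e \prec f$ strictly then $e,f$ are joined by a chain of primitive steps, and because the \AxViewII{} prevents two processes from sharing a join view the case-(iii) links are isolated, so any chain with non-contemporaneous endpoints contains a strict step, whence $\lbl(e) < \lbl(f)$.

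The primitive relations that cross a transaction boundary are case (ii) together with the instances of case (i) in which $f$ is the trigger of a new transaction. In each of these $\trig(e) \prec \trig(f)$ strictly --- for (ii) note that $\preceiveevent{k}$ is a trigger and cannot lie inside the transaction that queued $k$, since $\psendevent{k}$ precedes it. By Lemma \ref{KOrderLem} we have $\view(\trig(e)) \le \view(\trig(f))$, and a short case analysis on Definition \ref{CleanEventDef} (notification versus non-notification trigger; strict versus equal views, the latter handled inside a common $K_i \setminus G_i$ using the linearity of $\prec$ at a process and Lemma \ref{GOrderLem} for two notification triggers) yields $\ell_e = \dot{\trig}(e) \dot{\prec} \dot{\trig}(f) = \ell_f$; so the constellation coordinate alone decides the comparison. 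In case (iii) both events are notification events of view $j(J)$, hence both carry the label $\lblzzz{j(J)}$, which is consistent with their being contemporaneous.

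It remains to treat $e$ preceding $f$ inside a single transaction $T$, where $\ell_e = \ell_f = \ell_T$ and the adjustment coordinate is $\hat{0}$ throughout $H$. One must first record that within a notification constellation $\ell_i$ the events of two distinct processes' transactions are never $\prec$-related --- a consequence of the \AxProcIII{} and the view stratification of Section \ref{AntiChainSSS} --- so that, together with the fact that every non-notification transaction has its own singleton constellation, $\ell_e = \ell_f$ really does force $e$ and $f$ into a common transaction. The trigger of $T$ carries $\hat{0}$ in both the sub-transaction and side-effect coordinates, hence it is strictly below every side effect. When $T$ is neither a donation nor a co-donation transaction, every side effect has sub-transaction coordinate $\hat{0}$, and I would check, running through \ref{BroadcastMessage}, \ref{ReceiveMessage}, \ref{ReceiveAck}, \ref{ReceiveGhost}, \ref{ReceiveFlush}, \ref{StartCluster}, \ref{RemovalNotification}, \ref{JoinNotification}, \ref{Run} and \ref{TryToInstall}, that the side-effect coordinates increase along the program order. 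The one delicate point is that a forwarded-message broadcast ($\lBCAST_i$) never occurs in the same transaction as a ghost or flush queuing ($\lGHOST$, $\lFLUSH$) --- exactly what Lemmas \ref{ViewChangeSELem}, \ref{MessageAckSELem} and \ref{DonationSELem} guarantee --- so the $\fspace$-order $\lGHOST < \lFLUSH < \lBCAST_1$ never contradicts the fact that forwarding precedes the call to \ref{CheckFlush}.

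When $T$ is a donation or co-donation transaction the side effects are spread over sub-transactions, and here lies the crux: the ordering of those sub-transactions must match the $\dot{\prec}$-order of their sub-transaction coordinates. For this I would invoke Corollary \ref{DonOrderCor} and Corollary \ref{CoDonOrderCor}, which supply precisely the order-preserving correspondence between the executed \ref{ReceiveMessage}/\ref{ReceiveAck} sub-transactions and the queuing events in $\dot{\eventset}$ used as their coordinates; within one sub-transaction the side-effect coordinates increase as in the non-donation case. For the co-donation transaction's terminal \ref{TryToInstall} side effects, Lemma \ref{CoDonFlushLem} furnishes the sub-transaction coordinate $\psendevent{k_{\fp{last}}}$, and Lemma \ref{DonationSELem} shows that these broadcasts never coexist with the \ref{ReceiveMessage}/\ref{ReceiveAck} sub-transactions, so no comparison across the two families is needed; the co-donation packet queued inside a \ref{ReceiveDonation} transaction has sub-transaction coordinate $\hat{0}$ and side-effect coordinate $\lCODONATE$, placing it between the trigger and all sub-transactions as required. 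Assembling these cases completes the induction. The main obstacle is this last within-transaction bookkeeping for donation and co-donation transactions, which is where the careful constructions of Section \ref{DonCoDonSSS} do all the work.
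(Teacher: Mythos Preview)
Your proposal is correct and arrives at the same destination as the paper, but the initial reduction is organized differently. The paper takes arbitrary $e_1 \prec e_2$, splits on whether they lie at the same process, and in the cross-process case finds an intermediate trigger $e_t$ at the target process to force $\trig(e_1) \prec \trig(e_2)$ strictly, hence $\ell_{e_1} \,\dot{\prec}\, \ell_{e_2}$; in the same-process case it reduces to a single transaction and then runs through the transaction types exactly as you do. Your route via the primitive generators of $\prec$ (\AxOrderII{}) is a perfectly good alternative: case (ii) always lands at a packet-dequeuing trigger, which is never a notification event and hence never collapses to an $\ell_i$, so strictness of the constellation coordinate is automatic; case (iii) gives equal labels for contemporaneous pairs, and your isolation observation handles the composition. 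Both approaches converge on the same within-transaction bookkeeping, citing Lemmas \ref{ViewChangeSELem}--\ref{DonationSELem} for the shape of side-effect sequences and Corollaries \ref{DonOrderCor}, \ref{CoDonOrderCor} and Lemma \ref{CoDonFlushLem} for the critical donation and co-donation sub-transaction ordering. Two small remarks: your claim that events of two distinct notification transactions at the same $\ell_i$ are never $\prec$-related is true but unnecessary in the generator approach, since case-(i) steps are intra-process and case-(ii) targets are never notification triggers; and you should mention that non-critical (co-)donations use the integer sub-transaction coordinate, where monotonicity is immediate, since Corollaries \ref{DonOrderCor} and \ref{CoDonOrderCor} address only the critical case.
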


\begin{proof}
Let $e_1, e_2 \in \eventset^H$ be any events such that $e_1 \prec e_2$. Let $\lbl(e_1) = \lblgzg{e_1}{s_{e_1}}{f_{e_1}}$ and $\lbl(e_2) = \lblgzg{e_2}{s_{e_2}}{f_{e_2}}$. Let $e_1 \in \eventset_{P_1}$ and $e_2 \in \eventset_{P_2}$. 

If $P_1 \ne P_2$ then by the minimality of $\prec$ (\AxOrderII{}) there must be a trigger event $e_t$ at $P_2$ such that $e_1 \prec e_t \preceq e_2$. The events at $P_2$ are completely ordered by $\prec$ (\AxProcI{}). It follows from Definition \ref{TransDef} that $e_t \preceq \trig(e_2)$. By the same definition $\trig(e_1) \preceq e_1$. It follows that $\trig(e_1) \prec \trig(e_2)$ and therefore $\ell_{e_1} \,\dot{\prec}\, \ell_{e_2}$. Since $\lbl$ is left lexicographic it follows that $\lbl(e_1) < \lbl(e_2)$ and we are done in this case.

If $P_1 = P_2$ then it follows from Definition \ref{TransDef} that $\trig(e_1) \preceq \trig(e_2)$ and $\ell_{e_1} \dot{\preceq} \ell_{e_2}$. If the inequality is strict then we are done. So assume that $\ell_{e_1} = \ell_{e_2}$. This implies that $e_1$ and $e_2$ belong to the same transaction $T$ at process $P_1$.

We will go over each transaction type and verify that the events in the transaction are labeled with monotonically increasing labels. We rely on the analysis of side effects (see \ref{FirstSideEffectSSS} - \ref{LastSideEffectSSS}). We will denote by $e_t$ the trigger of $T$, and denote by $e_{s_1}, e_{s_2}, \dots e_{s_n}$ the side effects of $T$. Remember throughout the following analysis that $\lbl(e_t) = \lblzzz{T}$.

\begin{description}
\item[{\bf Message broadcast request transaction}] \hfill \\
According to Lemma \ref{AppEventSELem} an application transaction $T$ has one of the following forms:
\begin{itemize}
\item No side effects: $T = \{ e_t \}$ and there is nothing to prove.
\item A message packet multicast: $T = \{ e_t \prec e_{s_1} \}$ where $e_{s_1} = \psendevent{\pkm{\fp{msg}}}$
$$
\lblzzz{T} < \lblzzg{T}{\lBCAST_1}
$$
\end{itemize}

\item[{\bf Notification transaction}] \hfill \\
According to Lemma \ref{ViewChangeSELem} a notification transaction $T$ has one of the following forms:
\begin{itemize}
\item No side effects: $T = \{ e_t \}$ and there is nothing to prove.
\item A ghost packet multicast: $T = \{ e_t \prec e_{s_1} \}$ where $e_{s_1} = \psendevent{\pkh{v}}$
$$
\lblzzz{T} < \lblzzg{T}{\lGHOST}
$$
\item A ghost packet multicast followed by a flush packet multicast: $T = \{ e_t \prec e_{s_1} \prec e_{s_2} \}$ where $e_{s_1} = \psendevent{\pkh{v}}$ and $e_{s_1} = \psendevent{\pkf{v}}$
$$
\lblzzz{T} < \lblzzg{T}{\lGHOST} < \lblzzg{T}{\lFLUSH}
$$
\item A sequence of message packet multicasts: $T = \{ e_t \prec e_{s_1} \prec \dots \prec e_{s_n} \}$ where $e_{s_i} = \psendevent{\pkm{\fp{msg}_i}}$
$$
\lblzzz{T} < \lblzzg{T}{\lBCAST_1} < \dots < \lblzzg{T}{\lBCAST_n}
$$
\end{itemize}

\item[{\bf Message packet processing transaction}] \hfill \\
According to Lemma \ref{MessageAckSELem} a message packet processing transaction $T$ has the form $T = \{ e_t, e_{s_1} \}$ where $e_{s_1} = \psendevent{\pks{\fp{msg}}}$
$$
\lblzzz{T} < \lblzzg{T}{\lACK}
$$

\item[{\bf Acknowledgement packet processing transaction}] \hfill \\
According to Lemma \ref{MessageAckSELem} an acknowledgement packet processing transaction $T$ has one of the following forms:
\begin{itemize}
\item No side effects.
\item A ghost packet multicast.
\item A flush packet multicast.
\item A ghost packet multicast followed by a flush packet multicast.
\end{itemize}
The monotonicity of the labeling here works just as in the notification transaction case.

\item[{\bf Ghost packet processing transaction}] \hfill \\
According to Lemma \ref{GhostFlushSELem} a ghost packet processing transaction $T$ has no side effect so there is nothing to prove in this case.

\item[{\bf Flush packet processing transaction}] \hfill \\
According to Lemma \ref{GhostFlushSELem} a flush packet processing transaction $T$ has one of the following forms:
\begin{itemize}
\item No side effects: $T = \{ e_t \}$ and there is nothing to prove.
\item A sequence of message packet multicasts: $T = \{ e_t \prec e_{s_1} \prec \dots \prec e_{s_n} \}$ where $e_{s_i} = \psendevent{\pkm{\fp{msg}_i}}$
$$
\lblzzz{T} < \lblzzg{T}{\lBCAST_1} < \dots < \lblzzg{T}{\lBCAST_n}
$$
\end{itemize}

\item[{\bf Donation packet processing transaction}] \hfill \\
According to Lemma \ref{DonationSELem} a donation packet processing transaction $T$ has the form of a co-donation queuing event followed by a sequence of zero or more sub-transactions
$$
T = \lbrace e_t, e_c, e_{s_{1, 1}}, \dots, e_{s_{1, k_1}}, e_{s_{2, 1}}, \dots, e_{s_{2, k_2}}, \;\;\;\dots,\;\;\; e_{s_{n, 1}}, \dots, e_{s_{n, k_n}} \rbrace 
$$
where $e_c$ is the queuing event of the co-donation packet and $e_{s_{i, 1}}, \dots, e_{s_{i, k_i}}$ are the side effects of the $i^{th}$ sub-transaction.

Within each sub-transaction the sub-transaction coordinate of the label is the same for all side effects. Let
$$
\lbl(e_{i, j}) = \lblgzg{T}{s_i}{f_{i, j}}
$$
Then it follows from the previous cases that for each $i$ the labels are monotonic:
$$
\lblgzg{T}{s_i}{f_{i,1}} < \lblgzg{T}{s_i}{f_{i,2}} < \dots < \lblgzg{T}{s_i}{f_{i,k_i}}
$$
where $s_i$ is the sub-transaction coordinate of the $i^{th}$ sub-transaction. If the donation is not critical then $s_i = i$, and therefore $s_i < s_{i+1}$ and the whole sequence of sub-transactions is monotonically labeled. If the donation is critical then $s_i = \psendevent{k_i}$ where $k_i$ is the untimely packet that is related to the $i^{th}$ sub-transaction by Corollary \ref{DonOrderCor}. The same corollary guarantees that if $i < j$ then $\psendevent{k_i} \,\dot{\prec}\, \psendevent{k_j}$. Therefore in this case as well the whole sequence of sub-transactions is monotonically labeled. In addition
$$
\lbl(e_t) = \lblzzz{T} < \lblzzg{T}{\lCODONATE} = \lbl(e_c)
$$
and both packets have a sub-transaction coordinate that is equal to $\hat{0}$. Therefore the whole donation transaction is labeled monotonically in all cases.

\item[{\bf Co-donation packet processing transaction}] \hfill \\
According to Lemma \ref{DonationSELem} a co-donation packet processing transaction $T$ has one of the following forms:
\begin{itemize}
\item A sequence of zero or more sub-transactions
$$
T = \lbrace e_t, e_{s_{1, 1}}, \dots, e_{s_{1, k_1}}, e_{s_{2, 1}}, \dots, e_{s_{2, k_2}}, \;\;\;\dots,\;\;\; e_{s_{n, 1}}, \dots, e_{s_{n, k_n}} \rbrace 
$$
where $e_{s_{i, 1}}, \dots, e_{s_{i, k_i}}$ are the side effects of the $i^{th}$ sub-transaction. This case is resolved the same way as a donation transaction, with Corollary \ref{CoDonOrderCor} guaranteeing that the order of the sub-transaction coordinates is correct in the critical case.
\item A sequence of one or more message packet multicasts: $T = \{ e_t \prec e_{s_1} \prec \dots \prec e_{s_n} \}$ where $e_{s_i} = \psendevent{\pkm{\fp{msg}_i}}$. If the co-donation is non-critical the labeling looks as follows
$$
\lblzzz{T} < \lblzzg{T}{\lBCAST_1} < \dots < \lblzzg{T}{\lBCAST_n}
$$
If the co-donation is critical then according to Lemma \ref{CoDonFlushLem} there is a special post-critical flush packet queuing event $f \in \dot{\eventset}$ and the labeling is
$$
\lblzzz{T} < \lblgzg{T}{f}{\lBCAST_1} < \dots < \lblgzg{T}{f}{\lBCAST_n}
$$
\end{itemize}
\end{description}
\end{proof}

\subsection{Defining The History Reduction Mapping}
\label{HistReduxMapSS}
\subsubsection{Preliminaries}
\label{HistReduxPrelim}
We now show how to construct a history $H^r$ that carries the same computation as $H$ but where the critical join event of $G$ is replaced by a removal event of a different process $\minusG$. The interesting part is the construction of $\eventset^{H^r}$ which will rely on the label space that we constructed earlier. We start with the construction of most of the other components of $H^r$.
\begin{align*}
\processset^{H^r} & = \processset^H \cup \{ \minusG \} \\
\processset_h^{H^r} & = \processset_h^H \cup \{ \minusG \} \\
\numview^{H^r} & = \numview^H \\
\viewset_i^{H^r} & =
\begin{cases}
\viewset_i^H & \text{if } i \ge \vcrit  \\
\viewset_i^H \cup \{ G, \minusG \} & \text{if } i < \vcrit
\end{cases} \\
\requestset^{H^r} & = \requestset^H
\end{align*}

The main task is constructing the packets and related events on channels that involve the processes $G$ and $\minusG$. We build up these channels starting with the history $H$ as a base, and then adding and removing packets as necessary in an attempt to simulate what would have happened if $G$ and $\minusG$ were original members of the group and just happened, by some rare chance, to proceed (almost) exactly as the parent $D$ would up to the point of the critical view change, after which $\minusG$ is removed and $G$ remains to evolve as it originally did in the history $H$.

The rest of this section is dedicated to the construction of the channels that involve $G$ and $\minusG$. The construction involves a detailed description of the packets that are added to these channels, as well as the few packets that are removed. Together with each packet we add its associated events and their labels. In addition we describe the requisite changes to and additions of notification events. This construction amounts to a description of most of the missing ingredients in $H^r$ including all the channels and all the events. To complete the definition of $H^r$ we just need to add in the partial order on the events.

We construct the events in $H^r$ with the goal of simulating a reality where the following things happen:
\begin{itemize}
\item $G$ and $\minusG$ are group members from the start. However, their upper layer application does not get started until the critical view change. Therefore $G$ does not originate any message broadcasts up to the critical point, and $\minusG$ never originates any message broadcasts at all.
\item Other than message origination (and the related flush packets - see below), $G$ and $\minusG$ proceed exactly like $D$ up to the critical view change. This happens by luck, as multicasts arrive at $G$ and $\minusG$ at about the same time that they arrive at $D$, and multicasts and responses from $G$ and $\minusG$ arrive at other processes at about the same time that similar multicasts and responses arrive from $D$.
\item Despite their participation in group communications, $G$ and $\minusG$, by sheer luck, have no effect on the state of the group. This happens because acknowledgements from $G$ and $\minusG$ arrive too early (just before similar acknowledgements arrive from $D$) and therefore are never decisive in stabilizing messages; forwarded messages from $G$ and $\minusG$ arrive too late (just after similar forwarded messages arrive from $D$) and as a result they are discarded as duplicates; and flush packets from $G$ and $\minusG$ arrive too early (sometimes long before similar packets arrive from $D$) to be the deciding factor in view installation decisions at the receiving processes.
\item One area where $G$ and $\minusG$ are allowed to diverge from $D$ in a controlled fashion is with the multicasting of flushes. This is unavoidable because $G$ and $\minusG$ do not originate messages and therefore their respective wait sets tend to be emptier than the one at $D$. Therefore $G$ and $\minusG$ may be forced by the \cbcast{} protocol to multicast a flush long before $D$ is ready to do so. Managing this difference is the whole purpose of the ghost multicasts. These multicasts do not depend on the instability of original messages and therefore can happen at the same time at $G$, $\minusG$ and $D$. Flush multicasts out of $G$ and $\minusG$, however, do not happen at the same time that similar multicasts occur at $D$. Rather they immediately follow each ghost multicast.
\end{itemize}

\subsubsection{Some notation}
It should hopefully be intuitive at this point that the simulation is achieved, to a large degree, by adding {\em clone} packets that mimic on $\plusminusG$-bound channels what the original packets did on similar $D$-bound channels. To accommodate the fact that the new flush packets mimic original ghost packets rather than original flush packets, we also introduce the notion of a {\em zombie} packet. A zombie packet, as the name implies, is a clone of ghost a packet that comes back to life as a flush packet, but otherwise remains unchanged.

The label space $\labelspace$ plays a crucial role in ordering the new packet events that are created along with the new clone and zombie packets. Instead of ordering these events directly (which can get very complicated) we label them first. Then we introduce an order on all the events in $H^r$ by inducing it back from the labels.

The central challenge in building $H^r$ is the seam between the pre-critical period and the post-critical period. This problem presents itself in the form of packets that are sent pre-critically, but arrive post-critically (including packets that never arrive). Recall that such packets are called {\em untimely}. We deal with untimely packets using the donation and co-donation transactions. In $H^r$, the clones and zombies of untimely packets arrive exactly at the time that the receiving process begins the sub-transaction that simulates the arrival of the original packets (see labeled steps \ref{RD:STMessage} and \ref{RD:STAck} of the \ref{ReceiveDonation} procedure and labeled steps \ref{RCD:STMessage} and \ref{RCD:STAck} of the \ref{ReceiveCoDonation} procedure at \ref{PseudoCode}). This makes the dequeuing events of donation and co-donation packets pivotal in the definition of the reduction mapping, warranting a specific notation that we introduced in Definition \ref{CritDonDef} to describe them.

\begin{defn}
\label{EquivDef}
Let \fp{msg} be a stamped message in $H$. The {\bf reduction} of \fp{msg} is an identical message $\fp{msg}^r$, but with a slightly different stamping 
\begin{align*}
\orig{\fp{msg}^r} & = \orig{\fp{msg}}  \\
\mview{\fp{msg}^r} & = \mview{\fp{msg}}   \\
\mvt{\fp{msg}^r}[] & =
\begin{cases}
\mvt{\fp{msg}}[] \,\bigcup\, \left\{ [G] = 0, [\minusG] = 0 \right\} & \mview{\fp{msg}} < \vcrit \\
\mvt{\fp{msg}}[] & \mview{\fp{msg}} \ge \vcrit \\
\end{cases}
\end{align*}
For any data structure $A$ in $H$, the {\bf reduction} of $A$ is an identical data structure $A^r$ where all the messages contained in $A$ or its substructures have been replaced by their reductions.

If $A$ is a data structure in $H$ with reduction $A^r$ and $B$ is a data structure in $H^r$ such that $B = A^r$, we say that $B$ is {\bf equivalent} to $A$ and we denote it by $B \cong A$.
\end{defn}

\begin{defn}
\label{CloneZombieDef}
\begin{enumerate}
\item A {\bf clone} of a packet $k$ in $H$ is a new packet in $H^r$, on a different channel, whose type is identical to the type of $k$ and whose content is the reduction of the content $\cont(k)$. A {\bf clone} of an event $e$ in $H$ is a new event in $H^r$ of the same type as $e$. A clone event can occur at the same process or at a different one.
\label{CZD:CloneDef}
\item A {\bf zombie} of a ghost packet is a clone where, unlike a normal clone, the type of the packet changes from ghost to flush. A {\bf zombie} event is a clone of a ghost packet queuing event in $H$ whose type in $H^r$ a flush packet queuing event.
\item A {\bf non-$G$} process is any $P \ne \plusminusG$.
\end{enumerate}
\end{defn}

We now turn to the heart of the construction of $H^r$, which involves the construction of its channels and events. The bulk of the work is in constructing the channels and the related packet events, and then there is a bit of additional work in constructing the notification events. We proceed in multiple steps. We start with constructing new queuing events and their labels. We continue with channels where the source is $\plusminusG$ and the target is non-$G$, then channels where the source is non-$G$ and the target is $\plusminusG$, and then the four channels $\channel{\plusminusG}{\plusminusG}$. We construct new trigger events and their labels in tandem with the channel construction. Then we construct the new notification events and their labels, and finally we construct the $\prec^{H^r}$ order out of the labels.

\subsubsection{Constructing the new queuing events}
\label{NewQEventsSSS}
Most of the new queuing events are added to $\plusminusG$. The only new queuing events that are added to non-$G$ processes are the queuing events of acknowledgement packets that are sent in response to forwarded messages from $\plusminusG$. This should not be a surprise because we deliberately construct $H^r$ in such a way that the additional packets emanating out of $\plusminusG$ are largely ignored and therefore have no side effects to speak of.

We only create clone and zombie events for pre-critical queuing events in $H$. We are going to add post-critical {\em packets} to $H^r$, but their queuing events will all be existing queuing events of existing multicasts. In fact many, but not all, of the pre-critical cloned packets will become parts of existing multicasts and will not require a new queuing event.

If $e = \psendevent{k}$ is a pre-critical queuing event at $D$ in $H$ labeled $\lbl(e) = \lblzzg{T}{f}$ we add the following new events in $H^r$:
\begin{enumerate}
\item If $k$ is a forwarded message packet, or an acknowledgement packet in response to any message packet other than a forwarded message from $D$
\begin{enumerate}
\item A clone queuing event $e_{\plusc} \in \eventset_G$ with label $\lbl(e_{\plusc}) = \lbl(e)$
\item A clone queuing event $e_{\minusc} \in \eventset_{\minusG}$ with label $\lbl(e_{\minusc}) = \lbl(e)$
\end{enumerate}
\item If $k\ \in \channel{D}{D}$ is an acknowledgement packet in response to a forwarded message packet from $D$
\begin{enumerate}
\item Two clone queuing events $e^{\shftmsg}, e^{\minusshftmsg} \in \eventset_D$ with labels
\begin{align*}
\lbl(e^{\shftmsg}) & = \lblzgg{T}{\shftmsg{}}{\lACK} \\
\lbl(e^{\minusshftmsg}) & = \lblzgg{T}{\minusshftmsg{}}{\lACK}
\end{align*}
\item Three clone queuing events $e_{\plusc}$, $e^{\shftmsg}_{\plusc}, e^{\minusshftmsg}_{\plusc} \in \eventset_G$ with labels
\begin{align*}
\lbl(e_{\plusc}) & = \lbl(e) \\
\lbl(e^{\shftmsg}_{\plusc}) & = \lblzgg{T}{\shftmsg{}}{\lACK} \\
\lbl(e^{\minusshftmsg}_{\plusc}) & = \lblzgg{T}{\minusshftmsg{}}{\lACK}
\end{align*}
\item Three clone queuing events $e_{\minusc}$, $e^{\shftmsg}_{\minusc}, e^{\minusshftmsg}_{\minusc} \in \eventset_{\minusG}$ with labels
\begin{align*}
\lbl(e_{\minusc}) & = \lbl(e) \\
\lbl(e^{\shftmsg}_{\minusc}) & = \lblzgg{T}{\shftmsg{}}{\lACK} \\
\lbl(e^{\minusshftmsg}_{\minusc}) & = \lblzgg{T}{\minusshftmsg{}}{\lACK}
\end{align*}
\end{enumerate}
\item If $k$ is a ghost packet and $\lbl(e) = \lblzzg{T}{\lGHOST}$
\begin{enumerate}
\item A clone queuing event $e_{\plusc} \in \eventset_G$ with label $\lbl(e_{\plusc}) = \lbl(e) = \lblzzg{T}{\lGHOST}$
\item A clone queuing event $e_{\minusc} \in \eventset_{\minusG}$ with label $\lbl(e_{\minusc}) = \lbl(e) = \lblzzg{T}{\lGHOST}$
\item A zombie queuing event $e_{\plusz} \in \eventset_G$ with label $\lbl(e_{\plusz}) = \lblzzg{T}{\lFLUSH}$
\item A zombie queuing event $e_{\minusz} \in \eventset_{\minusG}$ with label $\lbl(e_{\minusz}) = \lblzzg{T}{\lFLUSH}$
\end{enumerate}
\item If $k$ is an original message packet or a flush packet, do not create  any clones
\end{enumerate}

If $e = \psendevent{k}$ is a pre-critical queuing event in $H$ at $P \ne D$ with label $\lbl(e) = \lblzzg{T}{\lACK}$ and if $k \in \channel{P}{D}$ is an acknowledgement packet, we add the following new events in $H^r$:
\begin{enumerate}
\item If $k$ is an acknowledgement packet in response to a forwarded message packet
\begin{enumerate}
\item A clone queuing event $e^{\shftmsg} \in \eventset_P$ with label $\lbl(e^{\shftmsg}) = \lblzgg{T}{\shftmsg{}}{\lACK}$
\item A clone queuing event $e^{\minusshftmsg} \in \eventset_P$ with label $\lbl(e^{\minusshftmsg}) = \lblzgg{T}{\minusshftmsg{}}{\lACK}$
\end{enumerate}
\item If $k$ is an acknowledgement packet in response to an original message packet, do not create any clones
\end{enumerate}

\subsubsection{Constructing the channels $\channel{P}{Q}$ for non-$G$ processes $P$ and $Q$}
\label{ChannelPQSSS}
The original $H$ channels are copied to $H^r$ almost without change. The only change is that each channel is reduced (see Definition \ref{EquivDef}). This means that the content $\cont(k)$ of each packet $k$ in the channel is replaced with the reduced content $\cont(k)^r$.

\subsubsection{Constructing the channels $\channel{P}{G}$ and $\channel{P}{(\minusG)}$ for a non-$G$ process $P$}
\label{ChannelPGSSS}
We start with an empty channel for $\channel{P}{(\minusG)}$ and with the reduction of the original channel $\channel{P}{G}^H$ for $\channel{P}{G}$ and {\em remove} the critical donation packet and its events, if they exist. Then we add the following new packets and events:
\begin{description}
\item[For] any timely packet $k \in \channel{P}{D}$
\begin{description}
\item[If] $k$ is a ghost, flush or message packet
\begin{enumerate}
\item A clone packet $c \in \channel{P}{G}$ together with $\psendevent{c} = \psendevent{k}$ and a new $\preceiveevent{c}$ event labeled as
$\lbl(\preceiveevent{c}) = \lbl(\preceiveevent{k})$
\item A clone packet $c' \in \channel{P}{(\minusG)}$ together with $\psendevent{c'} = \psendevent{k}$ and a new $\preceiveevent{c'}$ events labeled as
$\lbl(\preceiveevent{c'}) = \lbl(\preceiveevent{k})$
\end{enumerate}
\item[If] $k$ is an acknowledgement packet in response to a forwarded message
\begin{enumerate}
\item A clone packet $c \in \channel{P}{G}$ together with $\psendevent{c} = (\psendevent{k})^{\shftmsg}$ and a new $\preceiveevent{c}$ event labeled as
$\lbl(\preceiveevent{c}) = \lbl(\preceiveevent{k})$
\item A clone packet $c' \in \channel{P}{(\minusG)}$ together with $\psendevent{c'} = (\psendevent{k})^{\minusshftmsg}$ and a new $\preceiveevent{c'}$ event labeled as
$\lbl(\preceiveevent{c'}) = \lbl(\preceiveevent{k})$
\end{enumerate}
\item[If] $k$ is an acknowledgement packet in response to an original message do not create any clones.
\end{description}
\item[For] any untimely packet $k \in \channel{P}{D}$
\begin{description}
\item[If] $k$ is a ghost, flush or message packet
\begin{enumerate}
\item A clone packet $c \in \channel{P}{G}$  together with
\begin{enumerate}
\item $\psendevent{c} = \psendevent{k}$
\item If $\crittime{P}{G}$ exists, a new $\preceiveevent{c}$ event labeled as
$$
\lbl(\preceiveevent{c}) = \lblcgzz{P}{G}{\psendevent{k}}
$$
\end{enumerate}
\item A clone packet $c' \in \channel{P}{(\minusG)}$  together with $\psendevent{c'} = \psendevent{k}$ but no $\preceiveevent{c'}$ event
\end{enumerate}
\item[If] $k$ is an acknowledgement packet in response to a forwarded message
\begin{enumerate}
\item A clone packet $c \in \channel{P}{G}$ together with
\begin{enumerate}
\item $\psendevent{c} = (\psendevent{k})^{\shftmsg}$
\item If $\crittime{P}{G}$ exists, a new $\preceiveevent{c}$ event labeled as
$$
\lbl(\preceiveevent{c}) = \lblcgzz{P}{G}{\psendevent{k}}
$$
\end{enumerate}
\item A clone packet $c' \in \channel{P}{(\minusG)}$ together with $\psendevent{c'} = (\psendevent{k})^{\minusshftmsg}$ but no $\preceiveevent{c'}$ event

\end{enumerate}
\item[If] $k$ is an acknowledgement packet in response to an original message do not create any clones.
\end{description}
\end{description}

\subsubsection{Constructing the channels $\channel{G}{P}$ and $\channel{(\minusG)}{P}$ for a non-$G$ process $P$}
\label{ChannelGPSSS}
We start with the empty channel for $\channel{(\minusG)}{P}$ and with the reduction of the original channel $\channel{G}{P}^H$ for $\channel{G}{P}$ and {\em remove} the critical co-donation packet and its related events. Then we add the following new packets and events:
\begin{description}
\item[For] any timely packet $k \in \channel{D}{P}$ with $e = \psendevent{k}$ and $\lbl(\preceiveevent{k}) = \lblzzz{T'}$
\begin{description}
\item[If] $k$ is a forwarded message packet
\begin{enumerate}
\item A clone packet $\plusc \in \channel{G}{P}$ together with $\psendevent{\plusc} = e_{\plusc}$ and a new $\preceiveevent{\plusc}$ event labeled as
$\lbl(\preceiveevent{\plusc}) = \lblzgz{T'}{\shftmsg}$
\item A clone packet $\minusc \in \channel{(\minusG)}{P}$  together with $\psendevent{\minusc} = e_{\minusc}$ and a new $\preceiveevent{\minusc}$ event labeled as
$\lbl(\preceiveevent{\minusc}) = \lblzgz{T'}{\minusshftmsg}$
\end{enumerate}
\item[If] $k$ is a ghost packet
\begin{enumerate}
\item A clone packet $\plusc \in \channel{G}{P}$ together with $\psendevent{\plusc} = e_{\plusc}$ and a new $\preceiveevent{\plusc}$ event labeled as
$\lbl(\preceiveevent{\plusc}) = \lblzgz{T'}{\shftghost}$
\item A zombie packet $\plusz \in \channel{G}{P}$ together with $\psendevent{\plusz} = e_{\plusz}$ and a new $\preceiveevent{\plusz}$ event labeled as
$\lbl(\preceiveevent{\plusz}) = \lblzgz{T'}{\shftflush}$

\item A clone packet $\minusc \in \channel{(\minusG)}{P}$ together with $\psendevent{\minusc} = e_{\minusc}$ and a new $\preceiveevent{\minusc}$ event labeled as
$\lbl(\preceiveevent{\minusc}) = \lblzgz{T'}{\minusshftghost}$
\item A zombie packet $\minusz \in \channel{(\minusG)}{P}$ together with $\psendevent{\minusz} = e_{\minusz}$ and a new $\preceiveevent{\minusz}$ event labeled as
$\lbl(\preceiveevent{\minusz}) = \lblzgz{T'}{\minusshftflush}$
\end{enumerate}
\item[If] $k$ is an acknowledgement packet
\begin{enumerate}
\item A clone packet $\plusc \in \channel{G}{P}$ together with $\psendevent{\plusc} = e_{\plusc}$ and a new $\preceiveevent{\plusc}$ event labeled as
$\lbl(\preceiveevent{\plusc}) = \lblzgz{T'}{\shftack}$
\item A clone packet $\minusc \in \channel{(\minusG)}{P}$ together with $\psendevent{\minusc} = e_{\minusc}$ and a new $\preceiveevent{\minusc}$ event labeled as
$\lbl(\preceiveevent{\minusc}) = \lblzgz{T'}{\minusshftack}$
\end{enumerate}
\item[If] $k$ is a flush or an original message packet do not create any clones.
\end{description}

\item[For] any untimely packet $k \in \channel{D}{P}$ with $e = \psendevent{k}$
\begin{description}

\item[If] $k$ is an acknowledgement packet or a forwarded message packet
\begin{enumerate}
\item A clone packet $\plusc \in \channel{G}{P}$ together with $\psendevent{\plusc} = e_{\plusc}$ and if $\crittime{G}{P}$ exists, a new $\preceiveevent{\plusc}$ event labeled as $\lbl(\preceiveevent{\plusc}) = \lblcgzz{G}{P}{\psendevent{k}}$
\item A clone packet $\minusc \in \channel{(\minusG)}{P}$  together a $\psendevent{\minusc} = e_{\minusc}$ and no $\preceiveevent{\minusc}$ event
\end{enumerate}

\item[If] $k$ is a ghost packet
\begin{enumerate}
\item A clone packet $\plusc \in \channel{G}{P}$ together with $\psendevent{\plusc} = e_{\plusc}$ and if $\crittime{G}{P}$ exists, a new $\preceiveevent{\plusc}$ event labeled as $\lbl(\preceiveevent{\plusc}) = \lblcggz{G}{P}{\psendevent{k}}{\shftghost}$
\item A zombie packet $\plusz \in \channel{G}{P}$ together with $\psendevent{\plusz} = e_{\plusz}$ and if $\crittime{G}{P}$ exists, a $\preceiveevent{\plusz}$ event labeled as $\lbl(\preceiveevent{\plusz}) = \lblcggz{G}{P}{\psendevent{k}}{\shftflush}$
\item A clone packet $\minusc \in \channel{(\minusG)}{P}$ together with $\psendevent{\minusc} = e_{\minusc}$ and no $\preceiveevent{\minusc}$ event
\item A zombie packet $\minusz \in \channel{(\minusG)}{P}$ together with $\psendevent{\minusz} = e_{\minusz}$ and no $\preceiveevent{\minusz}$ event
\end{enumerate}

\item[If] $k$ is a flush or an original message packet do not create any clones.
\end{description}

\item[For] any post-critical packet $k \in \channel{G}{G}$ with $\lbl(\psendevent{k}) = \lblgzg{T}{j}{f}$ where $\view(T) < \vdeath{P}$ and, if $\crittime{P}{G}$ exists, $\ell_T \,\dot{\prec}\, \crittime{P}{G}$
\begin{description}

\item[If] $k$ is a ghost, flush or message packet, a clone packet $c \in \channel{G}{P}$ with $\psendevent{c} = \psendevent{k}$ and if $\crittime{G}{P}$ exists, a new $\preceiveevent{c}$ event labeled as $\lbl(\preceiveevent{c}) = \lblcgzz{G}{P}{\psendevent{k}}$
\item[If] $k$ is an acknowledgement, donation or co-donation packet, do not create any clones.

\end{description}
\end{description}

\subsubsection{Constructing the channels $\channel{(\plusminusG)}{(\plusminusG)}$}
\label{ChannelGGSSS}
We start with the empty channels for the three channels involving $\minusG$ and with the reduction of the original channel $\channel{G}{G}^H$ for $\channel{G}{G}$. Then we add the following new packets and events (notice that by the \AxProcIV{} there are no untimely packets in this case):
\begin{description}
\item[For] any timely packet $k \in \channel{D}{D}$ with $e = \psendevent{k}$ and $\lbl(\preceiveevent{k}) = \lblzzz{T'}$
\begin{description}
\item[If] $k$ is a forwarded message packet
\begin{enumerate}
\item A clone packet $\plusc \in \channel{G}{G}$ together with $\psendevent{\plusc} = e_{\plusc}$ and a new $\preceiveevent{\plusc}$ event labeled as
$\lbl(\preceiveevent{\plusc}) = \lblzgz{T'}{\shftmsg}$
\item A clone packet $\plusc' \in \channel{G}{(\minusG)}$ together with $\psendevent{\plusc'} = e_{\plusc}$ and a new $\preceiveevent{\plusc'}$ event labeled as
$\lbl(\preceiveevent{\plusc'}) = \lblzgz{T'}{\shftmsg}$
\item A clone packet $\minusc \in \channel{(\minusG)}{G}$ together with $\psendevent{\minusc} = e_{\minusc}$ and a new $\preceiveevent{\minusc}$ event labeled as
$\lbl(\preceiveevent{\minusc}) = \lblzgz{T'}{\minusshftmsg}$
\item A clone packet $\minusc' \in \channel{(\minusG)}{(\minusG)}$ together with $\psendevent{\minusc'} = e_{\minusc}$ and a new $\preceiveevent{\minusc'}$ event labeled as
$\lbl(\preceiveevent{\minusc'}) = \lblzgz{T'}{\minusshftmsg}$
\end{enumerate}

\item[If] $k$ is an acknowledgement packet for a forwarded message
\begin{enumerate}
\item A clone packet $\plusc \in \channel{G}{G}$ together with $\psendevent{\plusc} = e^{\shftmsg}_{\plusc}$ and a new $\preceiveevent{\plusc}$ event labeled as
$\lbl(\preceiveevent{\plusc}) = \lblzgz{T'}{\shftack}$
\item A clone packet $\plusc' \in \channel{G}{(\minusG)}$ together with $\psendevent{\plusc'} = e^{\minusshftmsg}_{\plusc}$ and a new $\preceiveevent{\plusc'}$ event labeled as
$\lbl(\preceiveevent{\plusc'}) = \lblzgz{T'}{\shftack}$
\item A clone packet $\minusc \in \channel{(\minusG)}{G}$ together with $\psendevent{\minusc} = e^{\shftmsg}_{\minusc}$ and a new $\preceiveevent{\minusc}$ event labeled as
$\lbl(\preceiveevent{\minusc}) = \lblzgz{T'}{\minusshftack}$
\item A clone packet $\minusc' \in \channel{(\minusG)}{(\minusG)}$ together with $\psendevent{\minusc'} = e^{\minusshftmsg}_{\minusc}$ and a new $\preceiveevent{\minusc'}$ event labeled as
$\lbl(\preceiveevent{\minusc'}) = \lblzgz{T'}{\minusshftack}$
\end{enumerate}

\item[If] $k$ is a ghost packet
\begin{enumerate}
\item A clone packet $\plusc \in \channel{G}{G}$ together with $\psendevent{\plusc} = e_{\plusc}$ and a new $\preceiveevent{\plusc}$ event labeled as
$\lbl(\preceiveevent{\plusc}) = \lblzgz{T'}{\shftghost}$
\item A zombie packet $\plusz \in \channel{G}{G}$ together with $\psendevent{\plusz} = e_{\plusz}$ and a new $\preceiveevent{\plusz}$ event labeled as
$\lbl(\preceiveevent{\plusz}) = \lblzgz{T'}{\shftflush}$

\item A clone packet $\plusc' \in \channel{G}{(\minusG)}$ together with $\psendevent{\plusc'} = e_{\plusc}$ and a new $\preceiveevent{\plusc'}$ event labeled as
$\lbl(\preceiveevent{\plusc'}) = \lblzgz{T'}{\shftghost}$
\item A zombie packet $\plusz' \in \channel{G}{(\minusG)}$ together with $\psendevent{\plusz'} = e_{\plusz}$ and a new $\preceiveevent{\plusz'}$ event labeled as
$\lbl(\preceiveevent{\plusz'}) = \lblzgz{T'}{\shftflush}$

\item A clone packet $\minusc \in \channel{(\minusG)}{G}$ together with $\psendevent{\minusc} = e_{\minusc}$ and a new $\preceiveevent{\minusc}$ event labeled as
$\lbl(\preceiveevent{\minusc}) = \lblzgz{T'}{\minusshftghost}$
\item A zombie packet $\minusz \in \channel{(\minusG)}{G}$ together with $\psendevent{\minusz} = e_{\minusz}$ and a new $\preceiveevent{\minusz}$ event labeled as
$\lbl(\preceiveevent{\minusz}) = \lblzgz{T'}{\minusshftflush}$

\item A clone packet $\minusc' \in \channel{(\minusG)}{(\minusG)}$ together with $\psendevent{\minusc'} = e_{\minusc}$ and a new $\preceiveevent{\minusc'}$ event labeled as
$\lbl(\preceiveevent{\minusc'}) = \lblzgz{T'}{\minusshftghost}$
\item A zombie packet $\minusz' \in \channel{(\minusG)}{(\minusG)}$ together with $\psendevent{\minusz'} = e_{\minusz}$ and a new $\preceiveevent{\minusz'}$ event labeled as
$\lbl(\preceiveevent{\minusz'}) = \lblzgz{T'}{\minusshftflush}$

\end{enumerate}

\end{description}
\end{description}

\subsubsection{Constructing the notification events}
The notification events and their labels are mostly left unchanged from $H$. We start out with $\notificationset^{H^r} = \notificationset^H$ and then apply the following additions and substitutions:
\begin{description}

\item[We] add notifications $\notify{0}{G}$ and $\notify{0}{\minusG}$ to $\notificationset^{H^r}$ with
\begin{align*}
\cont(\notify{0}{G}) & = \nn{G}  \\
\cont(\notify{0}{\minusG}) & = \nn{\minusG}
\end{align*}
\item[For] any view $0 < i < \vcrit$ we add notifications $\notify{i}{G}$ and $\notify{i}{\minusG}$ to $\notificationset^{H^r}$ with
$$
\cont(\notify{i}{G}) = \cont(\notify{i}{\minusG}) = \cont(\notify{i}{D})
$$
\item[For] any non-$G$ process $P$ we replace the contents of the critical notification
$$
\cont(\notify{\vcrit}{P}) = \nj{G}{D}
$$
with
$$
\cont(\notify{\vcrit}{P}) = \nr{\minusG}
$$
\item[We] replace the contents of the critical notification
$$
\cont(\notify{\vcrit}{G}) = \nn{G}
$$
with
$$
\cont(\notify{\vcrit}{G}) = \nr{\minusG}
$$
\item[We] add a notification $\notify{\vcrit}{\minusG}$ to $\notificationset^{H^r}$ with
$$
\cont(\notify{\vcrit}{\minusG}) = \nd{}
$$
\item[For] any view $0 \le i < \vcrit$ we add notification events $\pnotifyevent{i}{G}$ and $\pnotifyevent{i}{\minusG}$ to $\eventset_G$ and $\eventset_{\minusG}$ respectively, labeled as $\lbl(\pnotifyevent{i}{G}) = \lbl(\pnotifyevent{i}{\minusG}) = \lblzzz{i}$
\end{description}

\subsubsection{The partial order $\prec^{H^r}$ and dropped items}
\label{ReducedHistoryOrder}
We now construct the order $\prec$ in $H^r$. We will occasionally denote it by $\prec^{H^r}$. As we mentioned before, we use the partial order on labels in the construction.

The relation $\prec^{H^r}$ is defined as the transitive closure of the following primitive order relations:
\begin{enumerate}
\item For each $k \in \packetset^{H^r}$:
$$
\psendevent{k} \prec^{H^r} \preceiveevent{k}
$$
whenever the dequeuing event exists.
\item For each process $X \in \processset^{H^r}$ and any two events $e_1, e_2 \in \eventset^{H^r}_X$:
$$
e_1 \prec^{H^r} e_2
$$
if $\lbl(e_1) < \lbl(e_2)$
\item For any parent/child pair $E/J$ in $H$, other than the critical pair $D/G$:
$$
\pnotifyevent{j(J)}{E} \asymp^{H^r} \joinevent{J}
$$
\end{enumerate}

Please note that this definition produces a well defined {\em relation} $\prec^{H^r}$, but it will require some work to show that it is actually a weak partial order. We will pursue that investigation a little later.

We need to define for each new packet whether it is sent and received, and for each new notification whether it is dropped. We declare that there are no dropped notifications in $H^r$. As for packets, the packets for which we added a dequeuing event must be received, and therefore must be sent. But there are packets for which we did not add a dequeuing event. Since we do not need $H^r$ to be transactional or even lossless, but only conforming, we simply declare that any clone or zombie for which there is no dequeuing event is dropped, meaning that it is sent but not received.

\subsubsection{Implementing the user application interface}
\label{ReducedUAI}
To complete the construction of $H^r$ we have to insulate the user application from any knowledge of the fact that group history has changed. In order to do that we have to change the implementation of the user application interface (see \ref{UserApplication}), essentially creating a thin wrapper around it that hides the effects of the history changes. As part of the wrapper we add two new variables, \ulview{} and \ulcount{}, to the \uldata{} structure, and we make use of three "magic numbers" that are defined below. The new variables count how many views were installed and how many messages were delivered so far, in order to discover the boundary between the pre-critical and post-critical time that would otherwise be invisible to the user application.

\begin{defn}
\label{MagicDef}
\begin{description}
\item[$\magicview$] is the value of \cview{} at $D$ in $H$ at the critical time.
\item[$\magicmessage$] is the number of messages that are delivered at $D$ in $H$ by the critical time.
\item[$\magicsize$] is the number of members of view zero in $H$.
\end{description}
\end{defn} 

We define two different implementations of the interface in $H^r$, one for $\plusminusG$ and one for all other processes. As usual, we will use $P$ to denote any process which is not $\plusminusG$. We use the suffixes "@P" and "@G" to differentiate between the two implementations. We also use the marker "$^r$" to indicate the $H^r$ version of an up-call.

\begin{procedure}[H]
\caption{GroundState$^r$@P()}
\label{GroundStateP}
\SetAlgoNoLine
\Indp
\Indp
\BlankLine
\Let $\ulview{} = {-\magicsize{}} - 2$\;
\Let $\ulcount{} = 0$\;
GroundState@P()\tcp*[l]{call the original $H$ version}
\end{procedure}

\begin{procedure}[H]
\caption{GroundState$^r$@G()}
\label{GroundStateG}
\SetAlgoNoLine
\Indp
\Indp
\BlankLine
\Let $\ulview{} = -\magicsize{} - 2$\;
\Let $\ulcount{} = 0$\;
GroundState@D()\tcp*[l]{call the original $H$ version at process $D$} 
\end{procedure}

\begin{procedure}[H]
\caption{ApplyMessage$^r$@P($\fp{msg}, \fp{originator}$)}
\label{ApplyMessageP}
\SetAlgoNoLine
\Indp
\Indp
\BlankLine
ApplyMessage@P(\fp{msg}, \fp{originator})\;
\end{procedure}

\newpage

\begin{procedure}[H]
\caption{ApplyMessage$^r$@G($\fp{msg}, \fp{originator}$)}
\label{ApplyMessageG}
\SetAlgoNoLine
\Indp
\Indp
\BlankLine
\Increment \ulcount{}\;
\If{$\ulcount{} \le \magicmessage$}
{
  ApplyMessage@D(\fp{msg}, \fp{originator})\tcp*[l]{call the original $D$ version}
}
\Else
{
  ApplyMessage@G(\fp{msg}, \fp{originator})\tcp*[l]{call the original $G$ version}
}
\end{procedure}

\begin{procedure}[H]
\caption{ApplyJoin$^r$@P(\fp{pid})}
\label{ApplyJoinP}
\SetAlgoNoLine
\Indp
\Indp
\BlankLine
\Increment \ulview{}\;
\If{$\ulview{} \ne 0$  and $\ulview \ne -1$}
{
  ApplyJoin@P(\fp{pid})\;
}
\end{procedure}

\begin{procedure}[H]
\caption{ApplyJoin$^r$@G(\fp{pid})}
\label{ApplyJoinG}
\SetAlgoNoLine
\Indp
\Indp
\BlankLine
\Increment \ulview{}\;
\If{$\ulview{} > \magicview$}
{
  ApplyJoin@G(\fp{pid})\;
}
\ElseIf{$\ulview{} \ne 0$  and $\ulview \ne -1$}
{
  ApplyJoin@D(\fp{pid})\;
}
\end{procedure}

\begin{procedure}[H]
\caption{ApplyRemoval$^r$@P(\fp{pid})}
\label{ApplyRemovalP}
\SetAlgoNoLine
\Indp
\Indp
\BlankLine
\Increment \ulview{}\;
\If{$\ulview{} = \vcrit$}
{
  ApplyJoin@G(G)\tcp*[l]{\fp{pid} is equal to $\minusG$}
}
\Else
{
  ApplyRemoval@P(\fp{pid})\;
}
\end{procedure}

\newpage

\begin{procedure}[H]
\caption{ApplyRemoval$^r$@G(\fp{pid})}
\label{ApplyRemovalG}
\SetAlgoNoLine
\Indp
\Indp
\BlankLine
\Increment \ulview{}\;
\If{$\ulview{} \le \magicview$}
{
  ApplyRemoval@D(\fp{pid})\;
}
\ElseIf{$\ulview{} = \vcrit$}
{
  ApplyJoin@G(G)\tcp*[l]{\fp{pid} is equal to $\minusG$}
}
\Else
{
  ApplyRemoval@G(\fp{pid})\;
}
\end{procedure}

\subsection{Basic Properties Of $H^r$}
Now that we have defined the reduced history $H^r$, we establish its basic properties. We show that $H^r$ is a history according to Definition \ref{HistoryDef}, and indeed a conforming history according to Definition \ref{ConformingDef}. Only after we do that can we show that $H^r$ is the history of a valid \cbcast{} execution and that $H^r$ carries the same computation as $H$.

\begin{defn}
For any packet $k \in \packetset^{H^r}$ define
$$
\origpk(k) =
\begin{cases}
k \quad \text{if } k \text{ is the reduction of an original } \packetset^H \text{ packet} \\
j \quad \text{if } k \text{ is a clone or zombie of the original packet } j
\end{cases}
$$
\end{defn}

\begin{thm}
\label{ReducedHistoryThm}
$H^r$ is a history.
\end{thm}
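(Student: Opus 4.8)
The plan is to verify that the structure $H^r$ assembled in Section \ref{HistReduxMapSS} satisfies every axiom of Section \ref{ModelAxiomsSSS}, so that it is a history in the sense of Definition \ref{HistoryDef}; the stronger claims that $H^r$ is conforming and that it carries the same computation as $H$ belong to later theorems. It is convenient to split the axioms into three groups. The first is the combinatorial group: the \AxViewI{}, the \AxViewII{}, the \AxPackEventI{}, the existence, uniqueness, distinctness, and multicast-finiteness clauses of the \AxPackEventII{}, the first clause of the \AxNotEventI{}, the \AxGMSI{}, the \AxAppI{}, and the \AxNotEventIII{}. Since the construction of \ref{NewQEventsSSS}--\ref{ReducedHistoryOrder} is an explicit enumeration of every packet, notification and event added to or removed from $H$, each of these follows by inspection: $\minusG$ has view interval $[0,\vcrit)$ and $G$ has view interval $[0,r(G))$, while all other intervals are inherited; view $\vcrit$ now differs from $\vcrit-1$ by the removal of $\minusG$ rather than the join of $G$; every clone or zombie is placed on exactly one channel, carries one (possibly shared) queuing event and at most one dequeuing event, with distinct packets carrying distinct dequeuing events; the additions and substitutions to $\notificationset^{H^r}$ are exactly those demanded by the \AxGMSI{} once $G$ and $\minusG$ are declared original and $\minusG$ is removed at $\vcrit$ (the artificial $\notify{0}{G}=\nn{G}$ and $\notify{0}{\minusG}=\nn{\minusG}$ playing the role of the view-zero join notifications); and the \AxNotEventIII{} holds because the only parent/child pair dissolved is the critical pair $D/G$, while for every other pair $E/J$ we explicitly retained $\pnotifyevent{j(J)}{E}\asymp^{H^r}\joinevent{J}$.

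The second and substantive group consists of the order axioms: the \AxPackEventIII{}, the \AxNotEventII{}, the \AxProcI{}, the \AxProcIII{}, the \AxProcIV{}, the \AxOrderI{}, and the \AxOrderII{}. The \AxOrderII{} is immediate, since $\prec^{H^r}$ was defined in \ref{ReducedHistoryOrder} as the transitive closure of precisely the within-process (label) relations, the queuing/dequeuing relations, and the non-critical parent/child relations. Everything else rests on one preliminary lemma, which is the heart of the proof: the labelling $\lbl:\eventset^{H^r}\rightarrow\labelspace$ (the $H$-labels on reduced $H$-events, the construction-supplied labels on the new events) is monotone on every defining relation — if $e_1\prec e_2$ is one of them then $\lbl(e_1)\le\lbl(e_2)$, strictly unless the relation is one of the $\asymp$ parent/child relations, whose endpoints both carry $\lblzzz{j(J)}$. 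For within-process relations this is the definition; for the parent/child relations it is the remark just made; and for each relation $\psendevent{k}\prec^{H^r}\preceiveevent{k}$ one traces the channel construction of \ref{ChannelPGSSS}--\ref{ChannelGGSSS}: on reduced $H$-channels it is Theorem \ref{LabelOrderThm}; on the timely clone and zombie channels the receive label is the $H$ receive label carrying a strictly later $\aspace$-shift; and on the untimely and post-critical clone and zombie channels the receive label carries constellation $\crittime{P}{G}$ or $\crittime{G}{P}$ with a sub-transaction coordinate drawn from $\dot{\eventset}$, which strictly dominates the send label by the donation/co-donation correspondences (Corollaries \ref{DonOrderCor} and \ref{CoDonOrderCor}, Lemma \ref{CoDonFlushLem}) and by the \AxProcIII{} applied inside $H$. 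Granting the lemma, $\prec^{H^r}$ is label-monotone, hence a weak partial order (a strict cycle would force a strict label increase returning to its origin); it is very well founded, as the \AxOrderI{} requires, because $\labelspace$ is very well founded and each label is the image of only finitely many events ($\lbl$ is injective on each $\eventset_P^{H^r}$, a label occurring at $D$ is shared at most with its clones at $G$ and $\minusG$, and a notification label $\lblzzz{i}$ is shared only by the finitely many processes of $\viewset_{i-1}\cup\viewset_i$); it is linear at each process, as the \AxProcI{} demands, because $\lbl$ is injective there and all constellations occurring at a fixed process are $\dot{\prec}$-comparable; and the first/last clauses of the \AxOrderI{} hold by the same constellation bookkeeping (the view-$0$ notification event is the unique event of $\plusminusG$ in constellation $\ell_0$, and $\pnotifyevent{\vcrit}{\minusG}$ the unique one in constellation $\ell_{\vcrit}$). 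The \AxNotEventII{} is the special case for the inserted notification events at $\plusminusG$, which carry the increasing labels $\lblzzz{0},\lblzzz{1},\dots$.

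The remaining order axioms reduce to bookkeeping once the FIFO structure of the new channels is pinned down: in $H$ every timely $\channel{D}{P}$-packet is queued before every untimely one, so on the clone channels timely clones precede untimely clones precede post-critical clones in send order, and the receive labels assigned in \ref{ChannelPGSSS}--\ref{ChannelGGSSS} increase along that order (and when $\crittime{G}{P}$ or $\crittime{P}{G}$ fails to exist the dropped clones are exactly a FIFO-terminal segment), which gives the \AxPackEventIII{}; the \AxProcIII{} and \AxProcIV{} hold because in $H^r$ the view of an event is read off the constellation coordinate of its label and the clone/zombie receive labels were chosen so as never to land in a lower-view constellation than the matching send event, while on the self-channels $\channel{G}{G}$ and $\channel{(\minusG)}{(\minusG)}$ the receive events stay in the constellations they occupied on $\channel{D}{D}$. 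The third group, the \AxProcII{} and the four Halting Axioms, is also routine: $G$ and $\minusG$ are live in views $0,\dots,\vcrit-1$ (and $G$ thereafter exactly where it was live in $H$), every packet placed on a channel with an endpoint in $\{G,\minusG\}$ clones a packet $H$ queued while the matching $D$-endpoint — hence also the $\plusminusG$-endpoint, since $j(G)=0$ and $r(\minusG)=\vcrit$ — was live, and $\minusG$ ceases to be live at exactly the constellation at which $\nr{\minusG}$ is processed, giving the \AxProcII{}; $\minusG\in\processset_h^{H^r}$ has a finite event set (finitely many pre-critical clones plus notification events for views $0,\dots,\vcrit$, the last being $\pnotifyevent{\vcrit}{\minusG}$) and $G$ halts in $H^r$ iff it halts in $H$ while only finitely many events were added at $G$, giving the \AxHaltI{}; and the \AxHaltII{}, \AxHaltIII{} and \AxHaltIV{} follow because $H$ is transactional (so every $H$-packet, notification and request is processed), $H^r$ has no dropped notifications by construction, and the only packets declared dropped in $H^r$ are exactly the clones and zombies left without a dequeuing event, which is compatible with the FIFO structure just recorded.

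The main obstacle is contained entirely in the preliminary lemma of the second group, specifically the case $\lbl(\psendevent{k})<\lbl(\preceiveevent{k})$ for untimely and post-critical clone and zombie packets, where the receive labels carry the special constellations $\crittime{P}{G},\crittime{G}{P}$ and sub-transaction coordinates from $\dot{\eventset}$. This is exactly where the machinery built up in Section \ref{HistoryReduxS} — the donation/co-donation sub-transaction correspondences of \ref{DonCoDonSSS}, the \cbcast{} Omnibus Lemma \ref{OmnibusCBCASTLem}, and the layered adjustment and sub-transaction coordinate sets of Section \ref{LabelSpaceSS} — is brought to bear; once it is established, the rest of the theorem is bookkeeping over the explicit construction.
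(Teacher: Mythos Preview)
Your plan matches the paper's proof: both pivot on the label-monotonicity lemma (the paper states it as Lemma \ref{ReducedOrderLem} and its Corollary \ref{CompEventCor}) and then walk the axioms against the explicit construction. Two corrections are worth flagging. First, you call the \AxPackEventIII{} ``bookkeeping'', but in the paper it is the single longest part of the proof, supported by a separate technical lemma (Lemma \ref{FIFOLem}) and a three-way case split on whether $k_1,k_2$ are pre- or post-critical; the delicate points are that a ghost clone and its zombie share the same $\origpk$ (so send order within a channel is not simply inherited from the originals), and that on $\channel{G}{P}$ one must rule out an original post-critical packet being sent before a post-critical clone while being received after it, which needs the co-donation packet $c$ as an explicit separator. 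Your sketch is in the right direction but underestimates the work. Second, a small factual slip: the construction adds $\pnotifyevent{i}{\minusG}$ only for $0\le i<\vcrit$, so $\haltevent{\minusG}=\pnotifyevent{\vcrit}{\minusG}$ does not exist in $H^r$; the last-element clause of the \AxOrderI{} is therefore vacuous for $\minusG$, not witnessed by the event you name.
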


We start with a lemma about the relation $\prec^{H^r}$
\begin{lem}
\label{ReducedOrderLem}
If $e$ and $f$ are packet events in $H^r$ and $e \prec f$ then $\lbl(e) < \lbl(f)$
\end{lem}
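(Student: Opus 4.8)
The plan is to unwind the definition of $\prec^{H^r}$ from Section~\ref{ReducedHistoryOrder}. That relation is the transitive closure of three families of primitive relations, so it suffices to check that each primitive relation $g \to h$ leaves the label non-decreasing, and that the only primitives that do \emph{not} strictly increase the label --- the parent/child contemporaneity relations $\pnotifyevent{j(J)}{E} \asymp^{H^r} \joinevent{J}$ --- have both endpoints among the notification events, hence never among the packet events. Granting this, a primitive chain $e = g_0 \to g_1 \to \dots \to g_n = f$ witnessing $e \prec^{H^r} f$ must begin with a step that is not a contemporaneity primitive, since $g_0 = e$ is a packet event; that first step therefore strictly increases the label, and every subsequent step is non-decreasing, so $\lbl(e) < \lbl(f)$.

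The same-process primitives increase the label by definition, and a contemporaneity primitive joins two view-$j(J)$ notification events, all of which carry the label $\lblzzz{j(J)}$ by Section~\ref{LabelingHSSS}. The content of the lemma is thus the send/receive primitives: $\lbl(\psendevent{k}) < \lbl(\preceiveevent{k})$ for every $k \in \packetset^{H^r}$ whose dequeuing event exists. I would use one structural observation repeatedly: a queuing event is a side-effect event, so its side-effect coordinate is strictly above $\hat{0}$, whereas a dequeuing event is a trigger event whose label has the form $\lblzzz{T}$; since the side-effect coordinate is the least significant one, any $H$-relation $\psendevent{j} \prec^H \preceiveevent{j}$, passed through Theorem~\ref{LabelOrderThm}, already forces strict inequality in the constellation coordinate --- the queuing transaction of $j$ is $\dot{\prec}$-strictly before the dequeuing transaction of $j$. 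This constellation gap then absorbs every lower-coordinate shift that the construction performs on clone and zombie events.

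With that in hand, the send/receive check is a case analysis over how $k$ appears in Sections~\ref{ChannelPQSSS}--\ref{ChannelGGSSS} and Section~\ref{NewQEventsSSS}. If $k$ is the reduction of an original $H$-packet (every $H$-packet reduces into $H^r$ except the critical donation and co-donation, which are deleted together with their events), then $\psendevent{k}$ and $\preceiveevent{k}$ are the original events with their $H$-labels and $\psendevent{k} \prec^H \preceiveevent{k}$, so Theorem~\ref{LabelOrderThm} applies directly. If $k$ is a clone or zombie of a timely packet $j$, the construction gives $\psendevent{k}$ the constellation coordinate of $\psendevent{j}$ and $\preceiveevent{k}$ the constellation coordinate of $\preceiveevent{j}$, touching only lower coordinates, so the constellation gap decides it. If $k$ is a clone or zombie of an untimely packet $j$ and $\preceiveevent{k}$ exists, then $\preceiveevent{k}$ carries the constellation coordinate $\crittime{P}{G}$ or $\crittime{G}{P}$ while $\lbl(\psendevent{k})$ retains the constellation coordinate of $\psendevent{j}$, and I reduce to showing that the latter is $\dot{\prec}$-strictly below $\crittime{P}{G}$ (resp.\ $\crittime{G}{P}$); this holds because the donation (resp.\ co-donation) is queued inside, or is caused by, the critical notification transaction, so $\psendevent{j} \prec^H$ the relevant critical (co-)donation dequeuing event in $H$ --- with Corollaries~\ref{DonOrderCor} and~\ref{CoDonOrderCor} telling us which simulating sub-transaction $j$ corresponds to. The post-critical $\channel{G}{G}$-clones are treated the same way, using the construction's side condition that the queuing transaction of the cloned $\channel{G}{G}$-packet precedes $\crittime{P}{G}$ in $\dot{\prec}$, the fact that $G$ emits the critical co-donation only upon dequeuing $P$'s donation, and Lemma~\ref{CoDonFlushLem} for the flush-carrying sub-transaction.

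I expect the untimely-clone and post-critical-$\channel{G}{G}$-clone cases to be the main obstacle: there the target label's constellation coordinate is a critical (co-)donation dequeuing event rather than the arrival of the original packet, so pinning down the inequality genuinely requires the sub-transaction correspondences of Section~\ref{DonCoDonSSS} together with FIFO reasoning to place $\psendevent{j}$ before that pivotal event. Everything else reduces to coordinate bookkeeping against Theorem~\ref{LabelOrderThm}.
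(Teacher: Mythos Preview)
Your proposal is correct and follows essentially the same route as the paper: reduce to the three primitive relations, observe that the same-process and parent/child primitives are label-compatible by construction, and then do a case split on the send/receive primitive according to whether $k$ is original, a timely clone/zombie, an untimely clone/zombie, or a post-critical $\channel{G}{G}$ clone. Your explicit handling of strictness (the first step out of a packet event cannot be a contemporaneity primitive) is a welcome clarification that the paper leaves implicit; on the other hand, you do not need Corollaries~\ref{DonOrderCor}, \ref{CoDonOrderCor} or Lemma~\ref{CoDonFlushLem} here --- in the untimely case the constellation of $\psendevent{k}$ is pre-critical and the receive constellation is $\crittime{P}{G}$ or $\crittime{G}{P}$, which is already enough, and in the post-critical $\channel{G}{G}$ case the paper simply threads the chain through the co-donation packet $c$ to pass from $\crittime{P}{G}$ to $\crittime{G}{P}$.
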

\begin{proof}
The relation $\prec$ on the packet events in $H^r$ is the transitive closure of the three primitive relations defined in \ref{ReducedHistoryOrder}. If we show that each of the primitive relations is compatible with the label order then we are done.

The second primitive relation is compatible with label order by construction, and the third primitive relation is compatible because both notification events share the same label $\lblzzz{j(J)}$ so the only difficulty is showing that for every packet $k' \in \packetset^{H^r}$ for which $\preceiveevent{k'}$ exists, $\lbl(\psendevent{k'}) < \lbl(\preceiveevent{k'})$.

If the packet $k'$ is an original $H$ packet, then this property follows from Theorem \ref{LabelOrderThm}. We only need to verify this property for packets $k'$ that are either clones or zombies of an original packet $k$. Looking through the construction of $H^r$ packets, one can easily observe that one of the following three possibilities must occur:
\begin{itemize}

\item $k$ is a timely packet with $\lbl(\psendevent{k}) = \lblggg{T}{*}{*}{*}$ and $\lbl(\preceiveevent{k}) = \lblggg{T'}{*}{*}{*}$ where $\ell_T \,\dot{\prec}\, \ell_{T'}$. In that case for any clone or zombie $k'$ of $k$, $\lbl(\psendevent{k'}) = \lblggg{T}{*}{*}{*}$ and $\lbl(\preceiveevent{k'}) = \lblggg{T'}{*}{*}{*}$. Therefore $\lbl(\psendevent{k'}) < \lbl(\preceiveevent{k'})$.

\item $k$ is an untimely packet with $\lbl(\psendevent{k}) = \lblggg{T}{*}{*}{*}$ where $\ell_T \,\dot{\prec}\, \ell_{\vcrit}$. In that case $\lbl(\psendevent{k'}) = \lblggg{T}{*}{*}{*}$ and if $\preceiveevent{k'}$ exists then $\lbl(\preceiveevent{k'}) = \lblggg{T'}{*}{*}{*}$ where $\ell_{T'} = \crittime{P}{G}$ or $\ell_{T'} = \crittime{G}{P}$ for some original process $P$. In either case $\ell_{T'} \,\dot{\succeq}\, \ell_{\vcrit}$ and therefore $\lbl(\psendevent{k'}) < \lbl(\preceiveevent{k'})$.

\item $k$ is packet in $\channel{G}{G}$ with $\lbl(\psendevent{k}) = \lblggg{T}{*}{*}{*}$ where
$\ell_{\vcrit} \,\dot{\prec}\, \ell_T \,\dot{\prec}\, \crittime{P}{G}$
for some original process $P$. In that case $\lbl(\psendevent{k'}) = \lblggg{T}{*}{*}{*}$  and $\lbl(\preceiveevent{k'}) = \lblcgzz{G}{P}{\psendevent{k}}$. If $c$ is the co-donation packet sent from $G$ to $P$ in the original history $H$ then
\begin{align*}
 & \lbl(\psendevent{k'}) < \lblczzz{P}{G} < \lblczzg{P}{G}{\lCODONATE} = \lbl(\psendevent{c}) < \\
< & \lbl(\preceiveevent{c}) = \lblczzz{G}{P} < \lblcgzz{G}{P}{\psendevent{k}} = \lbl(\preceiveevent{k'})
\end{align*}
\end{itemize}
This concludes the proof.
\end{proof}

\begin{cor}
\label{CompEventCor}
Let  $e,f \in \eventset_P^{H^r}$ be any two events at a process $P$ in $H^r$. Then $e \prec f$ if and only if $\lbl(e) < \lbl(f)$.
\end{cor}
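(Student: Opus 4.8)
The plan is to read Corollary~\ref{CompEventCor} as a local strengthening of Lemma~\ref{ReducedOrderLem}. The ``if'' direction needs no work: if $\lbl(e)<\lbl(f)$ with $e,f\in\eventset^{H^r}_P$, then the second primitive relation in the definition of $\prec^{H^r}$ (see~\ref{ReducedHistoryOrder}) gives $e\prec^{H^r}f$ directly. All the content is the ``only if'' direction, and there the strategy will be: first upgrade the proof of Lemma~\ref{ReducedOrderLem} to the statement that $\lbl$ is weakly monotone along $\prec^{H^r}$ on \emph{all} events, not merely on packet events; then note that the only primitive relation that can fail to increase the label \emph{strictly} is the parent/child relation; and finally show that a chain built only from parent/child relations cannot connect two distinct events of the same process.

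First I would re-inspect the proof of Lemma~\ref{ReducedOrderLem}: it in fact checks each of the three primitive relations of~\ref{ReducedHistoryOrder} against the label order. The ``same-process with $\lbl(e_1)<\lbl(e_2)$'' relation increases the label strictly by fiat; the parent/child relation $\pnotifyevent{j(J)}{E}\asymp^{H^r}\joinevent{J}$ relates two notification events of view $j(J)$, which carry the \emph{identical} label $\lblzzz{j(J)}$; and $\psendevent{k}\prec^{H^r}\preceiveevent{k}$ increases the label strictly for every packet $k$ of $H^r$ for which $\preceiveevent{k}$ exists (reductions of original packets by Theorem~\ref{LabelOrderThm}, clones and zombies by the three-case analysis already carried out inside the proof of Lemma~\ref{ReducedOrderLem}). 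Since $\prec^{H^r}$ is the transitive closure of these three relations, it follows that $g\prec^{H^r}g'\Rightarrow\lbl(g)\le\lbl(g')$ for all events $g,g'$, and that equality $\lbl(g)=\lbl(g')$ along some witnessing chain forces every step of that chain to be a parent/child step.

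Next I would take $e,f\in\eventset^{H^r}_P$ with $e\prec^{H^r}f$ and show $\lbl(e)<\lbl(f)$ (in particular $e\neq f$). Suppose instead $\lbl(e)=\lbl(f)$, and fix a chain $e=g_0,g_1,\dots,g_n=f$ of primitive relations with $n\ge 1$. By the previous step every step is a parent/child step, so every $g_i$ is a notification event of one common view $i_0$; moreover $i_0>0$, since original processes have no parent and hence no view-zero notification event occurs in any parent/child relation. In $H$ exactly one process joins at view $i_0$; writing $J_0$ for that process, $E_0$ for its parent, and recalling that the critical pair $D/G$ contributes no parent/child relation in $H^r$, the only parent/child relation among view-$i_0$ notification events links $\pnotifyevent{i_0}{J_0}$ with $\pnotifyevent{i_0}{E_0}$. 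Hence each $g_i$ lies in the two-element set $\{\pnotifyevent{i_0}{J_0},\pnotifyevent{i_0}{E_0}\}$, whose members occur at the distinct processes $J_0\neq E_0$. Since $e=g_0$ and $f=g_n$ both occur at $P$, they must coincide, contradicting $e\neq f$; therefore $\lbl(e)<\lbl(f)$.

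The hard part is concentrated in this last step: excluding a ``detour'' that leaves $P$, wanders through clones, zombies, or parent/child links, and returns to $P$ without having strictly raised the label. Most of that danger is already disarmed, because Theorem~\ref{LabelOrderThm} and Lemma~\ref{ReducedOrderLem} make every send/receive step strictly label-increasing; what remains is the combinatorial observation that parent/child steps only ever shuttle between a single joining process and its parent, which rests squarely on the uniqueness of the process joining at each view guaranteed by the View Change Axiom.
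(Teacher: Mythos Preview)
Your argument is correct and matches the paper's approach. The paper's own proof is just two lines invoking Lemma~\ref{ReducedOrderLem} and the definition of $\prec^{H^r}$; the detail you supply---that the proof of that lemma actually yields weak monotonicity of $\lbl$ on \emph{all} events, and that an all-parent/child chain cannot begin and end at the same process---is exactly what the paper leaves implicit.
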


\begin{proof}
Lemma \ref{ReducedOrderLem} proves one direction. The other direction follows immediately from the definition of $\prec^{H^r}$ in \ref{ReducedHistoryOrder}.
\end{proof}

\begin{lem}
\label{SEContactSetLem}
Let $e \in \eventset^{H^r}$ be the queuing event of a message, ghost or flush packet $k$ and let $T_e$ be its target set (refer to the \AxProcI{} for the definition of target set). Let $e_0 = \psendevent{\origpk(k)}$. Then
\begin{enumerate}
\item if $e_0$ is a pre-critical event that occurred at a process $R$ then
$$
T_e = \valuepre{\cset{}}{R}{e_0} \cup \{ \plusminusG \}
$$
\item if $e_0$ is a post-critical event that occurred at $G$ then
$$
T_e = \valuepre{\lset{}}{G}{e_0}
$$
\item if $e_0$ is a post-critical event that occurred at $R \ne G$ then
$$
T_e = \valuepre{\cset{}}{R}{e_0}
$$
\end{enumerate}
\end{lem}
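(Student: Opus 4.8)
The plan is to compute the target set $T_e$ directly from the construction of $H^r$ in sections \ref{ChannelPQSSS}--\ref{ChannelGGSSS}, for each of the three cases listed. Recall that $T_e$ is the set of processes $Q$ such that some packet in the multicast $M_e$ belongs to $\channel{R}{Q}$ (here $R$ is the process at which $e$ occurs in $H^r$). Since a message, ghost or flush packet is queued only by the \cbcast{} procedures, the three cases correspond to: the queuing event $e_0 = \psendevent{\origpk(k)}$ being a pre-critical event in $H$ (whose clone/zombie structure fans out to both $\plusminusG$ plus whatever targets the original multicast had); a post-critical event at $G$; or a post-critical event at a non-$G$ process $R$. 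I would first invoke Lemma \ref{OmnibusCBCASTLem}(\ref{OCL:lset}) in $H$ to identify the target set of the original multicast $M_{e_0}$ in $H$: for messages/flushes/ghosts that target set is $\valuepost{\cset{}}{R}{e_0}$ (see the \ref{BroadcastMessage}, \ref{RemovalNotification} and \ref{CheckFlush} procedures, which all multicast to \cset{}), and since $e_0$ is a queuing side effect I can read $\valuepost{} = \valuepre{}$ for \cset{} unchanged across the queuing event. So the real content is tracking what clones/zombies the construction adds.

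For case (1), $e$ occurs at $R$ itself (the clone packets $c \in \channel{R}{G}$ and $c' \in \channel{R}{(\minusG)}$ are given $\psendevent{c} = \psendevent{c'} = \psendevent{k} = e_0$, possibly shifted; but crucially the \emph{queuing} event is the same event $e$ in $H^r$, since shifts only affect acknowledgement packets which are not of the type considered here). So I would argue: the packets in $M_e$ are (i) the reductions of the original $H$ packets of $M_{e_0}$, whose targets are exactly $\valuepre{\cset{}}{R}{e_0}$ by the above, plus (ii) the clone to $G$ and the clone to $\minusG$ added in \ref{ChannelPGSSS} (when $R \ne G$) or in \ref{ChannelGGSSS} (when $R = G$, in which case the ``clone to $G$'' is just the reduced original self-packet and the clone to $\minusG$ is the new one). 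Either way the target set picks up exactly $\{G,\minusG\} = \{\plusminusG\}$ in addition to $\valuepre{\cset{}}{R}{e_0}$, and since $R$ itself pre-critically has $R \in \cset{}$ while the construction never adds $R$ twice, the union is as claimed. I must be slightly careful that for $R=G$ pre-critically, $G \in \valuepre{\cset{}}{G}{e_0}$ already (via the ``$G$'' appearing in \ref{StartCluster}'s roster in $H^r$, or rather via the parent $D$'s \cset{}), so writing $\valuepre{\cset{}}{G}{e_0} \cup \{\plusminusG\}$ is still correct — this bookkeeping about whether $G$ or $\minusG$ is "already" in the set is the one fiddly point.

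For cases (2) and (3), $e_0$ is post-critical. In case (2), $e_0$ occurs at $G$; by the construction in \ref{ChannelGGSSS} and \ref{ChannelGPSSS}, a post-critical self-packet $k \in \channel{G}{G}$ spawns a clone in $\channel{G}{P}$ for every non-$G$ process $P$ with $\view(T) < \vdeath{P}$ (and, if $\crittime{P}{G}$ exists, $\ell_T \,\dot\prec\, \crittime{P}{G}$). Combined with the reduced original self-packet in $\channel{G}{G}$, the target set becomes $\{G\} \cup \{P \ne G : \view(T) < \vdeath{P}, \dots\}$, which by Lemma \ref{OmnibusCBCASTLem}(\ref{OCL:lset},\ref{OCL:height}) is precisely $\valuepre{\lset{}}{G}{e_0}$ — I should check the side condition on $\crittime{P}{G}$ does not exclude any live process, which follows because $P$ having $\vdeath{P} > \view(T) \ge \vcrit$ and $P$ being contacted by (the $H$-)$G$ means its donation is processed before $G$ queues $k$, hence $\ell_T \,\dot\succ\, \crittime{P}{G}$ is impossible when... actually this needs the co-donation reasoning; I expect this is where the argument is tightest. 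In case (3), $e_0$ occurs at $R \ne G$ post-critically; here $H^r$ adds no clones at all for post-critical non-$G$ queuing events (the construction in \ref{ChannelGPSSS} only clones $\channel{G}{G}$ and $\channel{D}{P}$ packets), so $M_e$ consists only of the reductions of the original $H$ packets, whose target set is $\valuepre{\cset{}}{R}{e_0}$ by Lemma \ref{OmnibusCBCASTLem}(\ref{OCL:lset}) applied in $H$ exactly as before.

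The main obstacle I anticipate is case (2): verifying that the set $\{P \ne G : \view(T) < \vdeath{P}$ and if $\crittime{P}{G}$ exists then $\ell_T \,\dot\prec\, \crittime{P}{G}\}$, augmented by $\{G\}$, equals $\valuepre{\lset{}}{G}{e_0}$ exactly — neither more nor less. The ``$\subseteq$'' direction needs that a process $P$ appearing as a clone target is genuinely live at $G$ at time $e_0$, which uses Lemma \ref{OmnibusCBCASTLem}(\ref{OCL:lset}) plus the fact that $G$ post-critically evolves exactly as in $H$. The ``$\supseteq$'' direction needs that every $P \in \valuepre{\lset{}}{G}{e_0}$ satisfies the $\crittime{P}{G}$ side condition, i.e. that $G$ has already processed $P$'s donation (so $P \in \cset{}(G)$, consistent with $G$'s actual multicast in $H$ going to $\cset{}$) — this is where I would lean on Corollary \ref{ContactSetCor} or the observation that $G$ only multicasts to $\cset{}$, so if $P$ were still uncontacted the original packet $k$ in $H$ would not have gone to $P$ either, and the side condition is precisely tuned to exclude exactly those uncontacted processes. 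All other cases are routine reading of the construction.
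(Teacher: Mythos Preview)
Your overall strategy---read off $T_e$ case by case from the channel constructions in \S\ref{ChannelPQSSS}--\S\ref{ChannelGGSSS}, using that every \cbcast{} message/ghost/flush multicast in $H$ targets \cset{}---is exactly the paper's approach. Case (3) is fine. But cases (1) and (2) each contain a real gap.

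\textbf{Case (1).} You assert that ``$e$ occurs at $R$ itself''. That is only one sub-case. When $e_0$ is a pre-critical queuing event at $D$, the construction in \S\ref{NewQEventsSSS} manufactures \emph{new} queuing events $e_{\plusc}, e_{\minusc}$ (and for ghosts also $e_{\plusz}, e_{\minusz}$) that live at $\plusminusG$, not at $D$; the packets in \S\ref{ChannelGPSSS} and \S\ref{ChannelGGSSS} emanating from $\plusminusG$ are attached to \emph{those} events. So $e$ may equal $e_{\plusc}$ at $G$ (or $e_{\minusc}$ at $\minusG$), with $\origpk(k)$ still the $D$-packet and $e_0$ still at $R=D$. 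Your aside ``when $R=G$ pre-critically'' is a symptom of this confusion: $R$ is the process of $e_0$ in $H$, and $G$ does not exist pre-critically in $H$, so $R=G$ is impossible there---what you need instead is the sub-case where $e$ (not $e_0$) sits at $\plusminusG$. For that sub-case you must check that the clone multicast at $\plusminusG$ hits every $P\in\valuepre{\cset{}}{D}{e_0}$ (one clone in $\channel{(\plusminusG)}{P}$ for each $\channel{D}{P}$ packet, via \S\ref{ChannelGPSSS}) together with the $\plusminusG$ targets from \S\ref{ChannelGGSSS}. The paper does exactly this sub-case explicitly.

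\textbf{Case (2).} Your target-set computation ``$\{G\}\cup\{P\ne G:\view(T)<\vdeath{P},\dots\}$'' omits the original (reduced) $H$-packets that $G$ sent to the contacted processes in $\valuepre{\cset{}}{G}{e_0}$; those packets survive in $H^r$ because \S\ref{ChannelGPSSS} \emph{starts} from the reduction of $\channel{G}{P}^H$. More seriously, you have the side condition backwards: ``$\ell_T\,\dot\prec\,\crittime{P}{G}$'' means $G$ queues $k$ \emph{before} processing $P$'s donation, i.e.\ $P$ is \emph{uncontacted} at $e_0$, not contacted. So the clones go precisely to the live \emph{uncontacted} $P$, i.e.\ to $\valuepre{\lset{}}{G}{e_0}\setminus\valuepre{\cset{}}{G}{e_0}$ by Lemma~\ref{OmnibusCBCASTLem}(\ref{OCL:lset}). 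Combined with the original packets to $\valuepre{\cset{}}{G}{e_0}$ you get $\valuepre{\lset{}}{G}{e_0}$---this is the one-line argument the paper gives. Your attempt to argue instead that ``the side condition does not exclude any live process'' cannot succeed, because it does exclude the contacted ones; those processes are covered not by clones but by the surviving original packets.
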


\begin{proof}
First take a quick look at the pseudo code to verify that every multicast of message, ghost and flush packets has a target set that is equal to \cset{}.

The claims follow from a careful examination of the construction of events in $H^r$. 

Let $R \in \processset^H$ and $R \ne G$. Post-critically we do not add any new clones or zombies to packets emanating from $R$ so in this case $\origpk(k) = k$ and $T_e$ is the same as the original target set, which proves the third part.

Pre-critically we do add clones and zombies to message, ghost and flush packets emanating from $R$. If $R \ne D$ then all of these clones (there are no zombies) also emanate from $R$. If $R = D$ then some of the clones and zombies emanate from $\plusminusG$.

Every pre-critical multicast set of message, ghost or flush packets in $R$ contains exactly one packet that is targeted at $D$. This packet gets cloned with two new targets $\plusminusG$ and added to the same multicast set. This proves the first part in this case when $k$ emanates from $R$.

If $R = D$ then for every forwarded message multicast there is a cloned multicast emanating from $\plusminusG$ that contains the complete original target set with the addition of $\plusminusG$. Original message and flush multicasts do not get cloned to $\plusminusG$. Each ghost multicast is cloned and zombied to two consecutive multicasts, both of which have the required target set. This takes care of the first part.

As for $G$, to every post-critical multicast we add packets destined to the uncontacted processes. It follows from Lemma \ref{OmnibusCBCASTLem}(\ref{OCL:lset}) that this makes the target set equal to \lset{}.
\end{proof}

The following is a technical lemma that is required for proving that the channels in $H^r$ are FIFO.

\begin{lem}
\label{FIFOLem}
Let $X$ and $Y$ be processes in $H^r$ and let $k_1, k_2 \in \channel{X}{Y}^{H^r}$ be distinct pre-critical packets such that $\psendevent{k_1} \prec \psendevent{k_2}$. Then
$\psendevent{\origpk(k_1)} \preceq \psendevent{\origpk(k_2)}$, with equality occurring exactly when $\origpk(k_1)=\origpk(k_2)$ is a pre-critical ghost packet and $X = \plusminusG$.
\end{lem}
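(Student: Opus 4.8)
The plan is to push everything onto the labeling map and onto Theorem \ref{LabelOrderThm}. The two pillars are: (i) by Corollary \ref{CompEventCor}, since both queuing events occur at $X$ (the \AxPackEventII{} in $H^r$), the hypothesis $\psendevent{k_1} \prec^{H^r} \psendevent{k_2}$ is equivalent to $\lbl(\psendevent{k_1}) < \lbl(\psendevent{k_2})$ in $\labelspace$; and (ii) a direct reading of the channel constructions \ref{ChannelPQSSS}--\ref{ChannelGGSSS} shows that, pre-critically, $\channel{X}{Y}^{H^r}$ consists either of reductions of $\channel{X}{Y}^H$ packets (when $X,Y$ are both non-$G$, in which case $\lbl(\psendevent{k})=\lbl(\psendevent{\origpk(k)})$ and the queuing events are literally the $H$-events), or of clones and zombies whose originals all lie in channels sharing one source process $R$ in $H$: $R=X$ when $X$ is non-$G$, and $R=D$ when $X\in\{G,\minusG\}$ (the genuine $\channel{G}{\cdot}^H$ packets being all post-critical). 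In particular $\psendevent{\origpk(k_1)}$ and $\psendevent{\origpk(k_2)}$ both occur at $R$, hence are $\prec$-comparable by the \AxProcI{}, so it suffices to exclude $\psendevent{\origpk(k_2)} \prec \psendevent{\origpk(k_1)}$ whenever $\origpk(k_1)\ne\origpk(k_2)$.

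The second ingredient I would extract from \ref{NewQEventsSSS} is a catalogue: for every pre-critical packet $k$ in $\channel{X}{Y}^{H^r}$, the label $\lbl(\psendevent{k})$ has the same constellation coordinate $\ell_{T}$ as $\lbl(\psendevent{\origpk(k)})$, where $T=\trans(\psendevent{\origpk(k)})$, and it differs from $\lbl(\psendevent{\origpk(k)})$ at most in the adjustment and side-effect coordinates, and then only in the prescribed ways (identity; $\hat0$ replaced by $\shftmsg$ or $\minusshftmsg$ for an acknowledgement clone destined to $\plusminusG$; $\lGHOST$ replaced by $\lFLUSH$ for a zombie of a ghost packet). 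Granting this, suppose $\psendevent{\origpk(k_2)} \prec \psendevent{\origpk(k_1)}$ with $\origpk(k_1)\ne\origpk(k_2)$. Theorem \ref{LabelOrderThm} gives $\lbl(\psendevent{\origpk(k_2)}) < \lbl(\psendevent{\origpk(k_1)})$, hence $\ell_{T_2}\dot{\preceq}\ell_{T_1}$; pillar (i) gives $\ell_{T_1}\dot{\preceq}\ell_{T_2}$; antisymmetry of $\dot{\prec}$ forces $\ell_{T_1}=\ell_{T_2}$, and since a process has at most one notification transaction per view while a non-notification trigger determines its transaction, $T_1=T_2=:T$. Thus $\origpk(k_1),\origpk(k_2)$ are distinct side effects of a single $H$-transaction $T$ at $R$; a short case check over the possible side-effect shapes of $T$ (Lemmas \ref{AppEventSELem}--\ref{DonationSELem}) rules out mixed-type same-channel pairs (e.g. $T$ never queues both a message to $D$ and an acknowledgement to $D$) and shows that for each admissible pair the adjustment/side-effect perturbations are applied uniformly, so the relative order of $\lbl(\psendevent{k_1}),\lbl(\psendevent{k_2})$ agrees with that of $\lbl(\psendevent{\origpk(k_1)}),\lbl(\psendevent{\origpk(k_2)})$ (using that within a transaction $\lbl$ is strictly order-preserving, hence order-reflecting) --- contradiction. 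This proves $\psendevent{\origpk(k_1)} \preceq \psendevent{\origpk(k_2)}$ in all cases.

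For the equality clause: equality forces $\psendevent{\origpk(k_1)} = \psendevent{\origpk(k_2)}$, so if $\origpk(k_1)\ne\origpk(k_2)$ these would be two packets of one $H$-multicast and hence have distinct targets; but the constructions of \ref{ChannelPGSSS}--\ref{ChannelGGSSS} preserve the target of a cloned packet, so their clones could not both lie in $\channel{X}{Y}^{H^r}$ --- contradiction. Hence equality holds precisely when $\origpk(k_1)=\origpk(k_2)$. When $k_1\ne k_2$ share an original, inspection of \ref{NewQEventsSSS}--\ref{ChannelGGSSS} shows the only $H$-packet that spawns two distinct pre-critical packets in one channel is a ghost packet, which spawns a ghost-clone (send-label side-effect coordinate $\lGHOST$) and a flush-zombie (coordinate $\lFLUSH$), and zombies are introduced only on channels whose source is $\plusminusG$; so $\origpk(k_1)$ is a pre-critical ghost packet, $X=\plusminusG$, and --- consistently with $\psendevent{k_1}\prec^{H^r}\psendevent{k_2}$ --- $k_1$ is the clone and $k_2$ the zombie.

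The main obstacle is the bookkeeping behind the catalogue in the second paragraph: one must check, channel type by channel type and packet type by packet type, that the constellation coordinate is genuinely untouched by the construction, and, more delicately, that the adjustment/side-effect perturbations never invert the relative order of two clone queuing events coming from a single $H$-transaction --- which is where the side-effect Lemmas \ref{AppEventSELem}--\ref{DonationSELem} are needed to limit the combinations that can occur. The conceptual skeleton --- constellation is preserved, $\prec$ is linear at each $H$-process, so run Theorem \ref{LabelOrderThm} contrapositively --- is short; the exhaustive case enumeration over the construction of $H^r$ is where the real work sits.
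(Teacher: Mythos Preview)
Your proposal is correct and follows essentially the same route as the paper: reduce the hypothesis to a label inequality via Corollary~\ref{CompEventCor}, use that clone/zombie queuing events inherit the constellation coordinate of their originals to force $T_1=T_2$ in the only interesting case, and then case-analyze the side-effect shapes of a single transaction (Lemmas~\ref{AppEventSELem}--\ref{DonationSELem}) to finish. One small wording fix in your equality clause: the constructions do \emph{not} literally ``preserve the target'' (they send $D$ to $\plusminusG$); the fact you actually need --- and which the paper uses --- is that originals of pre-critical packets on a fixed $\channel{X}{Y}^{H^r}$ all lie on the single $H$-channel obtained by replacing each occurrence of $\plusminusG$ in $X,Y$ by $D$, so two originals sharing a multicast on that one channel must coincide.
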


\begin{proof}
If both $k_1$ and $k_2$ are original, there is nothing to prove. The case where one of them is original and one is a clone or zombie is not possible because all the pre-critical clones and zombies belong to channels that involve $\plusminusG$ while none of the original pre-critical packets do. Therefore we can assume that both $k_1$ and $k_2$ are either clones or zombies of their original packets.

Let $\lbl(\psendevent{k_i}) = \lblggg{T_i}{a_i}{b_i}{c_i}$. By Corollary \ref{CompEventCor} we know that $\lbl(\psendevent{k_1}) < \lbl(\psendevent{k_2})$ and therefore $\trig(T_1) \preceq \trig(T_2)$ . By going through the different cases of \ref{ChannelPGSSS}, \ref{ChannelGPSSS} and \ref{ChannelGGSSS} it is easy to verify that for any pre-critical packet $k$ with $\lbl(\psendevent{k}) = \lblggg{T}{a}{b}{c}$, the event $\psendevent{\origpk(k)}$ has the same constellation label $\ell_T$. We also know that that for any event $e$ in $H$ the constellation label is $\ell_{\trans(e)}$ (see \ref{LabelingHSSS}). Therefore $\trig(T_i) = \trig(\psendevent{\origpk(k_i)})$.

If $\trig(T_1) \prec \trig(T_2)$ then by Definition \ref{TransDef} $\psendevent{\origpk(k_1)} \prec \psendevent{\origpk(k_2)}$ and we are done. Otherwise
$T_1 = T_2$ so $\origpk(k_1)$ and $\origpk(k_2)$ are queued in the same transaction. Denote $T = T_1 = T_2$.
We also know that $\origpk(k_1)$ and $\origpk(k_2)$ belong to the same channel. This channel is obtained from $\channel{X}{Y}$ by replacing any occurrence of $\plusminusG$ with $D$.

Going over the complete list of different possibilities of transaction content for $T$ we have:

\begin{itemize}
\item $T$ comprises a trigger plus one or more multicasts of original or forwarded message packets. This happens if $T$ is an message broadcast request transaction, or if $T$ is a notification transaction that resulted in the forwarding of messages out of \fque{}[P], or if $T$ is a flush packet transaction that caused the installation of a new view and the broadcasting of one or more messages out of \lque{}. In these cases $\origpk(k_i)$ are both pre-critical message packets. Going over the different cases in \ref{ChannelPGSSS} - \ref{ChannelGGSSS} shows that in this case $\lbl(\psendevent{k_i}) = \lbl(\psendevent{\origpk(k_i)})$ and the lemma follows.

\item $T$ comprises a trigger and the queuing of exactly one acknowledgement packet $k$. This happens if $T$ is triggered by the processing of a message packet. In this case it must be that $\origpk(k_1) = \origpk(k_2) = k$. But an examination of \ref{ChannelPGSSS} - \ref{ChannelGGSSS} shows that distinct clones of pre-critical acknowledgement packets belong to different channels, contrary to the assumption that $k_1$ and $k_2$ belong to the same channel. So this case does not occur.

\item $T$ comprises a trigger followed by a single multicast of ghost packets. This happens if $T$ is triggered by a the processing of an acknowledgement packet that cleared out \fwset{} while \bwset{} remained non-empty, or if $T$ is triggered by a $\nr{P}$ notification event that occurred when \fque{}[P] and \fwset{} were empty while \bwset{} was not empty. In this case there is a single packet at each relevant channel, and therefore $\origpk(k_1) = \origpk(k_2)$. An examination of \ref{ChannelPGSSS} - \ref{ChannelGGSSS} shows that the clones and zombies of a pre-critical ghost packet reside on the same channel exactly when the original packet is sent out of the process $D$, in which case it generates one clone and one zombie on each relevant channel. In this case $X=\plusminusG$ and the labeling forces the clone to precede the zombie, and therefore in this case $k_1$ is the clone and $k_2$ is the zombie of the same original packet.

\item $T$ comprises a trigger followed by a single multicast of flush packets. This happens if $T$ is triggered by the processing of an acknowledgement packet that cleared out \bwset{} while \fwset{} was already empty. In this case there is a single packet at each relevant channel and therefore $\origpk(k_1) = \origpk(k_2)$. An examination of \ref{ChannelPGSSS} - \ref{ChannelGGSSS} shows that distinct clones of a pre-critical flush packet reside on different channels, contrary to our assumption that $k_1$ and $k_2$ reside on the same channel. So this case does not occur.

\item $T$ comprises a trigger followed by a single multicast of ghost packets, followed by a single multicast of flush packets. This happens if $T$ is triggered by the processing of an acknowledgement packet that cleared out \fwset{} while \bwset{} was already empty, or if $T$ is triggered by the processing of a $\nr{P}$ notification that occurred when \fque{}[P], \fwset{} and \bwset{} were all empty. In this case there is in each relevant channel a single ghost packet followed by a single flush packet. There are two possibilities here. If $T$ occurs in process $D$ (in which case $X = \plusminusG$) then the flush packet generates no clones and the ghost packet is the original of both $k_1$ and $k_2$ which are a clone and zombie, respectively, of their shared original. If $X \ne \plusminusG$ then the ghost packet and the flush packet each generate a single clone (but no zombie) on each relevant packet, where the labeling forces the ghost clone to precede the flush clone. Therefore $k_1$ is the ghost clone and $k_2$ is the flush clone, and the clones are sent in the same order as their originals.

\item $T$ comprises a trigger and no other event. This happens if $T$ is triggered by the processing of a ghost packet, or if $T$ is triggered by a flush packet that did not cause the installation of a new view, or if $T$ is triggered by a flush packet that did cause the installation of a new view while \lque{} was empty, or if $T$ was triggered by the processing of an acknowledgement packet such that \fwset{} was not empty when processing started and remained non-empty when processing concluded, or if $T$ was triggered by an acknowledgement packet for an original message packet such that \fwset{} was empty and \bwset{} was not empty when processing started, and \bwset{} remained non-empty when processing concluded, or if $T$ is triggered by the processing of a $\nr{P}$ notification that occurred when \fque{}[P] was empty while \fwset{} was not empty. This case does not occur, of course, since we know that are least $\origpk(k_1)$ must have been queued to be sent as part of $T$.
\end{itemize}
\end{proof}

\begin{proof}[Proof of Theorem \ref{ReducedHistoryThm}]
Some of the axioms are easy to check. The \AxViewI{} and the \AxViewII{} are trivial. The \AxPackEventI{} and the \AxPackEventII{} are true by construction. The \AxNotEventI{} and the \AxNotEventIII{} are trivial. The \AxGMSI{} follows directly from the construction of notifications in $H^r$.

The \AxNotEventII{} says that view notifications  are processed in order. This property in induced directly from $H$ for all processes other than $\plusminusG$.
In the case of $\minusG$ the axiom follows because we explicitly added a processing event for each notification except for the last critical one.
In the case of $G$, all the post-critical notification processing events (including the critical one) exist because $H$ is transactional, while all the pre-critical ones are added explicitly. The processing order is induced directly from the labels of the events.

The \AxProcI{} has two parts. The first part claims that events within a single process are linearly ordered. The second part claims that multicast sets are finite.
To prove the first part, notice that at each process $P$ the constellation coordinate is made up of clean events at process $P$ in $H$, and this set is linearly ordered since the \AxProcI{} holds in $H$. Within each constellation the other coordinates are also taken from linearly ordered sets: this is trivial for the adjustment and side effect coordinates since the sets $\aspace$ and $\fspace$ are linearly ordered. As for the sub-transaction coordinate, we have the following cases:
\begin{itemize}
\item Non-critical donation and co-donation constellations have the same events in $H^r$ as they do in $H$, and have the same labels. Since the labels are linearly ordered in $H$, they are linearly ordered in $H^r$.
\item In a critical donation constellation, the side-effect events in $H^r$ consist of the sub-transaction events from $H$ (the co-donation queuing event is removed) and have additional trigger events. All of these events get their sub-transaction coordinates from the space $\{\psendevent{k} \in \dot{\eventset}^H \,\Vert\, k \in \channel{P}{D}^H \}$, where $P$ is the sender of the donation packet. This space is linearly ordered by the \AxProcI{} in $H$.
\item In a critical co-donation constellation, the side-effect events in $H^r$ consist of all the side-effect events in $H$ with the addition of new trigger events. All of these events get their sub-transaction coordinates from one of two sets:
\begin{align*}
C_1 & = \{\psendevent{k} \,\dot{\prec}\, \ell_{\vcrit} \in \dot{\eventset^H} \,\Vert\, k \in \channel{D}{P}^H \}  \\
C_2 & = \{\psendevent{k} \,\dot{\succ}\, \ell_{\vcrit} \in \dot{\eventset^H} \,\Vert\, k \in \channel{G}{G}^H \}
\end{align*}
Each of these sets is linearly ordered by the \AxProcI{} in $H$. Taken together they are still linearly ordered because all the elements in $C_1$ precede $\ell_{\vcrit}$ and all the elements in $C_2$ succeed it.
\item In a non-donation constellation, all the events have the $\hat{0}$ value for the sub-transaction coordinate. 
\end{itemize}
It follows that the $\lbl$ function linearly orders $\eventset^{H^r}_P$ at each process $P$. By Corollary \ref{CompEventCor} the first part of the axiom now follows.

The second part of the axiom follows immediately from Lemma \ref{SEContactSetLem}.

To verify the \AxProcII{} and the \AxProcIII{}, notice that they trivially hold for packets exchanged between non-$G$ processes. As for original packets exchanged between $G$ and non-$G$ processes, almost nothing changes except that from the point of view of the non-$G$ processes in $H^r$, process $G$ had been a member from the start. This does not invalidate either axiom.

The clones and zombies of timely packets are either queued at $\plusminusG$ or dequeued at $\plusminusG$ at essentially the same time that the original is queued or dequeued at $D$ (any differences in the label of the original versus the clone or zombie occur at the second or third coordinate). The same is true of untimely clones and zombies as far as queuing time is concerned. As for processing time, the first label coordinate for the processing event of untimely clones or zombies, when such an event exists, is either $\crittime{P}{G}$ or $\crittime{G}{P}$ for some non-$G$ process $P$. Since this label corresponds to the processing transaction of a $P$-donation at $G$ or a $G$-codonation at $P$ in the original history, the axioms follow in this case from the same axioms in the original history as they pertain to the donation or co-donation packet.

As for clones of a post-critical packet $k \in \channel{G}{G}$, such a clone is created only if the original packet is queued while $P$ appears alive to $G$, and its processing event label has $\crittime{G}{P}$ as a first coordinate, which verifies the axiom using the same argument as before. 

The \AxProcIV{} holds trivially for original packets $k$. The only cloned packets on self channels exist on $\channel{G}{G}$ and $\channel{\minusG}{\minusG}$. If such a cloned packet is pre-critical, then the original \AxProcIV{} guarantees that it is the clone of a timely packet and the axiom follows easily from that in that case. Since neither of the two channels contains clones of post-critical packets, we are done.

The \AxAppI{} is trivially induced from $H$.

The \AxOrderI{} follows from Corollary \ref{CompEventCor} as long as $\labelspace{}$ is itself very well founded. This in turn follows from the \AxOrderI{} in $H$.

The \AxOrderII{} is true in $H^r$ by definition (see \ref{ReducedHistoryOrder}).

The \AxHaltI{} would follow if we show that we only add a finite number of new events in $H^r$. Indeed, all the timely and untimely packets that we add are clones and zombies of packets that are queued before the critical view change notification. It follows from the \AxOrderI{} and the \AxViewII{} (in $H$) that there is only a finite number of such packets.

The post-critical packets that we add are all clones of post-critical packets in $\channel{G}{G}$ that are queued before $G$ processes a donation packet from some original process $P \in \viewset_{\vcrit}$ or before $G$ receives a removal notification for $P$. If $G$ halts then there is a finite number of such packets by the \AxHaltI{} (in $H$). If $G$ does not halt but $P$ is removed then $G$ processes the removal notification of $P$ (because $H$ is transactional) and therefore queues only a finite number of packets before that event by the \AxOrderI{}.

If $G$ does not halt and $P$ is not removed then $P$ does not halt (because $H$ is conforming and not stunted, see Lemma \ref{FirstFaultLem}). In that case $P$ must dequeue and process to completion the join notification of $G$ (because $H$ is transactional) and therefore must queue a donation packet $d$ to $G$. All the packets queued by $P$, including $d$, must be sent (by the \AxHaltIII{}) and none of them are dropped (because there are no dropped packets in a transactional history). Therefore $G$ receives all of them, and since $G$ does not halt it must dequeue all of them, including $d$ (by the \AxHaltIII{}). Therefore $G$ queues a finite number of packets prior to the $\preceiveevent{d} = \crittime{P}{G}$ event (by the \AxOrderI{}) and we are done.

The \AxHaltII{} carries over to $H^r$ for every non-$G$ process. The first part of the axiom holds for $\plusminusG$ because there are no dropped notifications in $H^r$. The second part holds for $\minusG$ because it halts for for $G$ because it processes all of its notifications.

The \AxHaltIII{} has three parts. The first two parts follow from the same axiom in $H$ and the fact that each clone and zombie $k$ which is not processed is dropped. The third part follows for the same reason, though not so trivially. Suppose that a packet $k$ meets the criteria of the third part of the axiom. Then $k$ is received and therefore is not dropped. If $k$ is a clone or a zombie then $k$ must be processed because all unprocessed clones and zombies are dropped. If it is original and all its predecessors are received, then surely its original predecessors are received, and therefore in $H$ the packet $k$ is processed, and therefore it is processed in $H^r$ as well.

The \AxHaltIV{} is trivially induced from $H$.

The main challenge is verifying the last remaining axiom, the \AxPackEventIII{}, which says that the channels are FIFO. This is crucial but unfortunately quite tedious. Our basic approach is the following. We take two packets $k_1$ and $k_2$ that meet the assumptions of the axiom. Namely, they belong to the same channel $\channel{X}{Y}^{H^r}$, with $\psendevent{k_1} \prec \psendevent{k_2}$, and with $\preceiveevent{k_2}$ existing. To verify the axiom we have to show that $\preceiveevent{k_1}$ exists and precedes $\preceiveevent{k_2}$. To do that we look at the original packets $\origpk(k_1)$ and $\origpk(k_2)$ and use the FIFO property in $H$ to glean information about their events and labels. From there we derive information about the labeling of the $k_1$ and $k_2$ events, which in turn determines their order in $H^r$ (see Corollary \ref{CompEventCor}).

We look at different cases according as $k_1$ and $k_2$ are pre-critical or post-critical. The condition $\psendevent{k_1} \prec \psendevent{k_2}$ implies that if $k_2$ is pre-critical, then so is $k_1$. Also notice that a packet $k$ is pre-critical if and only if $\origpk(k)$ is pre-critical.

\par{Case I: $k_1$ and $k_2$ are both pre-critical} \\ 
In this case we can use Lemma \ref{FIFOLem} and conclude that $\psendevent{\origpk(k_1)} \preceq \psendevent{\origpk(k_2)}$. According to the same lemma, if there is an equality then $\origpk(k_1) = \origpk(k_2)$, the common origin is a ghost packet and $X = \plusminusG$. In this case a routine inspection of the different cases of \ref{ChannelPGSSS} - \ref{ChannelGGSSS} shows that $\preceiveevent{k_1}$ exists whenever $\preceiveevent{k_2}$ does, and precedes the latter as well. So we can assume that
$\psendevent{\origpk(k_1)} \prec \psendevent{\origpk(k_2)}$ and therefore $\origpk(k_1)$ and $\origpk(k_2)$ are distinct. 

If $\origpk(k_2)$ is timely then the \AxPackEventIII{} for $H$ implies that $\preceiveevent{\origpk(k_1)}$ exists and moreover $\preceiveevent{\origpk(k_1)} \prec \preceiveevent{\origpk(k_2)}$. This in turn implies that $\origpk(k_1)$ is a timely packet and therefore $k_1$ is timely as well and therefore $\preceiveevent{k_1}$ exists. Let $T_i$ be the transaction triggered by $\preceiveevent{\origpk(k_i)}$. Then $\trig(T_1) \prec \trig(T_2)$ and so $\ell_{T_1} \,\dot{\prec}\, \ell_{T_2}$.

Going over \ref{ChannelPGSSS} - \ref{ChannelGGSSS} one can verify that if $k$ is a timely clone or zombie packet then $\lbl(\preceiveevent{k})$ shares the same constellation coordinate with $\lbl(\preceiveevent{\origpk(k)})$. Therefore the constellation coordinate of $\preceiveevent{k_i}$ is $\ell_{T_i}$ and it follows from Corollary \ref{CompEventCor} that $\preceiveevent{k_1} \prec \preceiveevent{k_2}$ and we are done.

We are left with the case where $\origpk(k_2)$ is untimely. We know that $\preceiveevent{k_2}$ exists. Going over the untimely cases of \ref{ChannelPGSSS} - \ref{ChannelGGSSS} one can verify that $\lbl(\preceiveevent{k_2}) = \lblggz{T}{\psendevent{\origpk(k_2)}}{*}$ where $\ell_T = \crittime{P}{G}$ or $\ell_T = \crittime{G}{P}$. In particular we know that $T$ exists. In this case if $\origpk(k_1)$ is timely then $\preceiveevent{k_1}$ exists and is timely and therefore precedes the untimely $\preceiveevent{k_2}$. If $\origpk(k_1)$ is untimely, it follows from the existence of $T$ that $\preceiveevent{k_1}$ also exists and has a label $\lblggz{T}{\psendevent{\origpk(k_1)}}{*}$. This implies that $\lbl(\preceiveevent{k_1}) < \lbl(\preceiveevent{k_2})$ and therefore $\preceiveevent{k_1} \prec \preceiveevent{k_2}$.

\par{Case II: $k_1$ is pre-critical and $k_2$ is post-critical} \\
The case where $k_1$ is timely is trivial, because for a timely packet $\preceiveevent{k_1}$ exists by definition and again by definition
$\preceiveevent{k_1} \prec \pnotifyevent{\vcrit}{Y} \prec \preceiveevent{k_2}$. So we can assume that $k_1$ is untimely. 

Since we do not add any new packets to non-$G$ channels, and since any channel involving $\minusG$ does not contain any post-critical packets, we can confine our attention to the channels $\channel{G}{P}$, $\channel{P}{G}$ and $\channel{G}{G}$, where $P$ is an non-$G$ process.

Look at the channel $\channel{P}{G}$. We do not add any new packets to this channel that are post-critical. Therefore $k_2$ must be original. Therefore we can assume that $k_1$ is not original (otherwise there is nothing to prove). We know that $\preceiveevent{k_2}$ exists. Denote by $T$ the transaction that is triggered by $\preceiveevent{k_2}$.

We know that the first post-critical packet (in $H$) that $P$ queues to $\channel{P}{G}$ is the donation packet $d$, and since $P$ queues some post-critical packet to $G$ in $H$ (namely $k_2$), it must also queue $d$. Moreover, $k_2 \ne d$, since $k_2$ survives in $H^r$ and $d$ does not (see \ref{ChannelPGSSS}). Therefore $\psendevent{d} \prec \psendevent{k_2}$ and the \AxPackEventIII{} in $H$ implies that $\preceiveevent{d}$ exists and precedes $\preceiveevent{k_2}$ in $H$. Therefore $\crittime{P}{G}$ exists and $\crittime{P}{G} \,\dot{\prec}\, \ell_T$. Since $k_1$ is a clone or a zombie, the existence of $\crittime{P}{G}$ implies that $\preceiveevent{k_1}$ exists and $\lbl(\preceiveevent{k_1}) = \lblcggz{P}{G}{*}{*}$. On the other hand $\lbl(\preceiveevent{k_2}) = \lblggz{T}{*}{*}$ and $\crittime{P}{G} \,\dot{\prec}\, \ell_T$. Therefore
$\lbl(\preceiveevent{k_1}) < \lbl(\preceiveevent{k_2})$ and therefore $\preceiveevent{k_1} \prec \preceiveevent{k_2}$ and we are done.

Look at the channel $\channel{G}{P}$. Since $\channel{G}{P}^H$ does not have pre-critical packets, $k_1$ cannot be original. We assume first that $k_2$ is original. We proceed just like in the case of $\channel{P}{G}$. We know that the first post-critical packet (in $H$) that $G$ queues to $\channel{G}{P}$ is the co-donation packet $c$, and since $G$ queues some post-critical packet to $\channel{G}{P}$ in $H$ (namely, $k_2$), it must also queue $c$. Moreover, $k_2 \ne c$, since $k_2$ survives in $H^r$ and $c$ does not (see \ref{ChannelGPSSS}). Therefore $\psendevent{c} \prec \psendevent{k_2}$ and the \AxPackEventIII{} in $H$ implies that $\preceiveevent{c}$ exists and precedes $\preceiveevent{k_2}$ in $H$. Therefore $\crittime{G}{P}$ exists and $\crittime{G}{P} \,\dot{\prec}\, \ell_T$, where $T$ is the transaction triggered by $\preceiveevent{k_2}$. Since $k_1$ is a clone or a zombie, the existence of $\crittime{G}{P}$ implies that $\preceiveevent{k_1}$ exists and $\lbl(\preceiveevent{k_1}) = \lblcgzz{G}{P}{*}{*}$. On the other hand $\lbl(\preceiveevent{k_2}) = \lblggz{T}{*}{*}$ and $\crittime{G}{P} \,\dot{\prec}\, \ell_T$. Therefore
$\lbl(\preceiveevent{k_1}) < \lbl(\preceiveevent{k_2})$ and therefore $\preceiveevent{k_1} \prec \preceiveevent{k_2}$ and we are done.

If $k_2$ is not original then $\origpk(k_2) \in \channel{G}{G}$ (this is the only channel that produces post-critical clones). By assumption $\preceiveevent{k_2}$ exists, so by \ref{ChannelGPSSS} $\crittime{G}{P}$ exists and
$$
\lbl(\preceiveevent{k_2}) = \lblcgzz{G}{P}{\psendevent{\origpk(k_2)}}
$$
In particular the co-donation packet $c \in \channel{G}{P}$ exists, and is processed, in $H$.

$k_1$ is an untimely clone or a zombie. The \ref{ChannelGPSSS} construction shows that when $\crittime{G}{P}$ exists, $\preceiveevent{k_1}$ exists and
$\lbl(\preceiveevent{k_1}) = \lblcgzz{G}{P}{\psendevent{\origpk(k_1)}}$. In the clean event order on $\dot{\eventset}$ we have
$$
\psendevent{\origpk(k_1)} \,\dot{\prec}\, \ell_{\vcrit} \,\dot{\prec}\, \psendevent{\origpk(k_2)}
$$
Therefore $\lbl(\preceiveevent{k_1}) < \lbl(\preceiveevent{k_2})$ and therefore $\preceiveevent{k_1} \prec \preceiveevent{k_1}$ and we are done in this case as well.

We are left with channel $\channel{G}{G}$. This channel does not contain untimely packets, so there is nothing to consider in this case.

\par{Case III: $k_1$ and $k_2$ are both post-critical} \\
If both $k_1$ and $k_2$ are original packets there is nothing to prove. The only post-critical clones (there are no zombies) occur in $\channel{G}{P}$ channels where $P$ is an non-$G$ process. So we can assume that $k_1, k_2 \in \channel{G}{P}$ and at least one of them is not original.

If $k_2$ is not original then it is a clone of a post-critical packet $\channel{G}{G}$ with $\lbl(\psendevent{\origpk(k_2)}) = \lblgzg{T}{*}{*}$. If $\crittime{P}{G}$ exists then $\ell_T \,\dot{\prec}\, \crittime{P}{G}$. Since $\preceiveevent{k_2}$ exists it follows by \ref{ChannelGPSSS} that $\crittime{G}{P}$ must exist and $\lbl(\preceiveevent{k_2}) = \lblcgzz{G}{P}{\psendevent{\origpk(k_2)}}$. This means in particular that the co-donation packet $c$ from $G$ to $P$ exists, and is processed, in $H$. It also means, a-fortiori, that $\crittime{P}{G}$ exists.

Suppose that $k_1$ is original. Since $c$ is the first packet, in $H$, that is sent from $G$ to $P$; and since $k_1 \ne c$ (it survives in $H^r$ and $c$ does not), we have $\psendevent{c} \prec \psendevent{k_1}$. Therefore
$$
\lbl(\psendevent{c}) = \lblczzg{P}{G}{\lCODONATE} < \lbl(\psendevent{k_1})
$$
On the other hand
$$
\lbl(\psendevent{\origpk(k_2)}) = \lblgzg{T}{*}{*} < \lblczzg{P}{G}{\lCODONATE} = \lbl(\psendevent{c})
$$
By \ref{ChannelGPSSS}, $\lbl(\psendevent{k_2}) = \lbl(\psendevent{\origpk(k_2)})$ and so $\lbl(\psendevent{k_2}) < \lbl(\psendevent{k_1})$ and therefore $\psendevent{k_2} \prec \psendevent{k_1}$ contrary to our assumption. So this case is not possible.

So we know that if $k_2$ is a clone then $k_1$ must be a clone as well, and since $\crittime{G}{P}$ exists we know by \ref{ChannelGPSSS} that $\preceiveevent{k_1}$ exists and
$\lbl(\preceiveevent{k_1}) = \lblcgzz{G}{P}{\psendevent{\origpk(k_1)}}$. Since $\origpk(k_1)$ and $\origpk(k_2)$ both belong to $\channel{G}{G}$ and since
$$
\lbl(\psendevent{\origpk(k_1)}) = \lbl(\psendevent{k_1}) < \lbl(\psendevent{k_2}) = \lbl(\psendevent{\origpk(k_2)})
$$
we have $\psendevent{\origpk(k_1)} \prec \psendevent{\origpk(k_2)}$ and so by \ref{SubTransSSS} $\lbl(\preceiveevent{k_1}) < \lbl(\preceiveevent{k_2})$ and so $\preceiveevent{k_1} \prec \preceiveevent{k_2}$ and we are done. 

We are left with the case where $k_2$ is original and $k_1$ is not. Using the same argument that we used with an original $k_1$, we can establish that $\psendevent{c} \prec \psendevent{k_2}$, where $c$ is the co-donation packet sent from $G$ to $P$ in $H$. We know by assumption that $\preceiveevent{k_2}$ exists. So we can conclude from the FIFO property in $H$ that $\preceiveevent{c}$ exists and $\preceiveevent{c} \prec \preceiveevent{k_2}$. In particular $\crittime{G}{P}$ exists. Let $T$ be the transaction triggered by $\preceiveevent{k_2}$. Then $\crittime{G}{P} \prec \ell_T$.

Since $k_1$ is a clone and $\crittime{G}{P}$ exists, we conclude from \ref{ChannelGPSSS} that $\preceiveevent{k_1}$ exists and
$$
\lbl(\preceiveevent{k_1}) = \lblcgzz{G}{P}{\psendevent{\origpk(k_1)}} < \lblggz{T}{*}{*} = \lbl(\preceiveevent{k_2})
$$
and therefore $\preceiveevent{k_1} \prec \preceiveevent{k_2}$. This concludes the proof. 
\end{proof}

\begin{thm}
$H^r$ is a conforming history.
\end{thm}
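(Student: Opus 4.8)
The plan is to check each of the six conforming axioms of \ref{ConfModelSSS} in turn; Theorem~\ref{ReducedHistoryThm} already gives that $H^r$ is a history. Three are essentially immediate. The \CAxNotify{} is vacuous, since \ref{ReducedHistoryOrder} declares $H^r$ to have no dropped notifications. The \CAxHalt{} holds because any process removed in $H^r$ is either $\minusG$, which is declared halting, or some $X \ne \plusminusG$ with $\vdeath{X}^{H^r} = \vdeath{X}^H < \numview$, hence removed in $H$, hence in $\processset_h^H \subseteq \processset_h^{H^r}$. The \CAxView{} holds because the halting processes of $H^r$ are exactly those of $H$ together with $\minusG$: the former retain their view intervals (finite, by the \CAxView{} for $H$), lengthened in $G$'s case only by the finitely many views below $\vcrit$, while $\minusG$ has the finite view interval $[0, \vcrit)$.

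For the \CAxParent{}, note that the construction adds events only to $\plusminusG$ and --- in the shape of finitely many acknowledgement clone queuing events --- to non-$\plusminusG$ processes that sent pre-critical acknowledgements to $D$; since $D$ is original in $H$ (as $\vcrit$ is the first join view) and hence initialized, any process uninitialized in $H^r$ is distinct from $\plusminusG$ and from $D$ and is uninitialized in $H$, and because $H^r$ creates no new parent/child pair and destroys only the pair $D/G$, its set of children in $H^r$ is contained in its (empty, by the \CAxParent{} for $H$) set of children in $H$. The \CAxChannel{} reduces cleanly to the same axiom in $H$: if $\channel{X}{Y}^{H^r}$ is finite then either $X = \minusG$ (which is removed at view $\vcrit$) or $Y = \minusG$ (which halts), and we are done; otherwise $X, Y \in \processset^H$, and $\channel{X}{Y}^{H^r}$ contains the reductions (Definition~\ref{EquivDef}) of every dequeued packet of $\channel{X}{Y}^H$ apart from at most one critical donation or co-donation packet (see \ref{ChannelPQSSS}--\ref{ChannelGGSSS}), so $\channel{X}{Y}^H$ is finite and the \CAxChannel{} for $H$, together with the transfer used above for the \CAxHalt{}, gives the claim.

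The substance of the proof is the \CAxPacket{}: no process may dequeue a packet from $Q$ after processing $\nr{Q}$. The new removal notifications are the view-$\vcrit$ ones (now carrying $\nr{\minusG}$ rather than $\nj{G}{D}$, and processed by $G$ and by every non-$G$ member at $\vcrit$) and the pre-critical ones newly assigned to $G$ and $\minusG$, which replay $D$'s pre-critical notification stream; in addition the old removal notifications must be re-examined on the channels $\channel{G}{P}$ and $\channel{P}{G}$, which have acquired new clone and zombie packets. The uniform device is to read the constellation coordinate of the dequeuing event of each clone or zombie off the construction in \ref{ChannelPGSSS}--\ref{ChannelGGSSS} and then invoke Corollary~\ref{CompEventCor}, which renders the $\prec^{H^r}$-order at a process as the label order. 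A timely clone or zombie inherits the constellation coordinate of the original dequeue in $H$; that dequeue being pre-critical, the constellation precedes $\ell_{\vcrit}$ in $\dot{\prec}$, so the clone's dequeuing event precedes every removal notification processed at a view $\ge \vcrit$; and since the original dequeue is performed in $H$ by the same process (or by $D$, whom $\plusminusG$ replay up to $\vcrit$), it precedes the matching removal-notification processing in $H$ by the \CAxPacket{} for $H$. An untimely clone or zombie, when it has a dequeuing event at all, is assigned the constellation $\crittime{P}{G}$ or $\crittime{G}{P}$ --- the constellation at which the corresponding critical donation or co-donation packet is processed in $H$ --- and since that critical packet originates at $Q$ itself, the \CAxPacket{} for $H$ again places its processing before $\nr{Q}$ is processed; if the donation or co-donation is not processed in $H$, the clone simply has no dequeuing event. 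Threading these facts through the cases ($Q = \minusG$; $Q = G$; $Q \ne \plusminusG$), split according to whether the dequeuing process is one of $\plusminusG$, and recording via the \AxViewII{} that $\vcrit$ is nobody's removal view but $\minusG$'s, completes the argument. It is this case bookkeeping, rather than any single hard step, that will be the main obstacle.
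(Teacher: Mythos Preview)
Your proposal is correct and follows essentially the same route as the paper: both verify the six conforming axioms one by one, dispatching the immediate ones and concentrating the work on the \CAxPacket{}, where the device is to read the constellation coordinate off each clone or zombie dequeuing label and then invoke the \CAxPacket{} in $H$ for the underlying donation or co-donation packet at $\crittime{P}{G}$ or $\crittime{G}{P}$. The only cosmetic differences are that the paper handles the \CAxParent{} by noting that $H$, being transactional, has no uninitialized processes whatsoever (so the axiom holds vacuously in $H^r$), and organizes the \CAxPacket{} argument by contradiction---assuming a violating packet and whittling it down to a clone on a $\channel{G}{P}$ or $\channel{P}{(\plusminusG)}$ channel---rather than by your forward case split on the removal notification.
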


\begin{proof}
We already know that $H^r$ is a history, but we still have to confirm the conforming axioms, using the fact that $H$ is a conforming history.

\begin{description}
\item[The \CAxChannel{}.] \hfill \\
If either $P=(\minusG)$ or $Q=(\minusG)$ we are done because $\minusG$ halts and is removed in $H^r$. All the other channels in $H^r$ are original $H$ channels to which we added clones and zombies and from which we removed the critical donation and co-donation packets. Since the latter are finite in number, any original channel that is finite in $H^r$ is also finite in $H$, and it follows from the conformity of $H$ that either $P$ is removed in $H$ or $Q$ halts in $H$. These properties carry over directly to $H^r$.

\item[The \CAxPacket{}.] \hfill \\
Suppose that this axiom is violated. Then there are processes $X$ and $Y$ and a packet $k \in \channel{X}{Y}$ such that
\begin{itemize}
\item $\notify{i}{Y} = \nr{X}$ and $\pnotifyevent{i}{Y}$ exists.
\item $\preceiveevent{k}$ exists and $\pnotifyevent{i}{Y} \prec \preceiveevent{k}$
\end{itemize}
Since we do not change the non-$G$ channels, at least one of $X$ and $Y$ must be equal to $\plusminusG$. If $k$ is original then it does not violate the axiom. All the clones and zombies that we add to the $\channel{(\plusminusG)}{(\plusminusG)}$ channels are timely, and we add only pre-critical packet processing events in channels where $X = (\minusG)$. Therefore $k$ must be a clone or zombie on a $\channel{G}{P}$ channel or a $\channel{P}{(\plusminusG)}$ channel.

In the case of a $\channel{G}{P}$ channel $k$ must be untimely, and since it is processed the constellation label of $\preceiveevent{k}$ is $\crittime{G}{P}$ and the critical co-donation packet is processed at $P$. It follows from the \CAxPacket{} in $H$ that $\crittime{G}{P} \dot{\prec} \ell_{\vdeath{G}}$ and we are done.

The case of a $\channel{P}{(\plusminusG)}$ channel and an untimely $k$ is similar. If $k$ is timely then the constellation label of $\preceiveevent{k}$ is equal to the clean event $\preceiveevent{\origpk(k)}$. By the \CAxPacket{} in $H$ we know that $\preceiveevent{\origpk(k)} \dot{\prec} \ell_{\vdeath{P}}$ and we are done in this case as well.

\item[The \CAxNotify{}.] \hfill \\
By construction $H^r$ has no dropped notifications, so the axiom is vacuously true.

\item[The \CAxView{}.] \hfill \\
The new process $\minusG$ has a finite view interval. If $P \ne (\minusG)$ and $P$ halts then $P$ exists in $H$ and has a finite view interval there. This property carries over to $H^r$. 

\item[The \CAxParent{}.] \hfill \\
The history $H$ is transactional and therefore has no uninitialized processes. This property carries over to $H^r$ for all the processes with the exception of $\minusG$. But $\minusG$ itself is also initialized because it processes all of its notifications. Therefore this property holds vacuously at $H^r$.

\item[The \CAxHalt{}.] \hfill \\
This property carries over to $H^r$ for all the processes in $H$. The property holds for $\minusG$ as well because it halts.
\end{description}
\end{proof}

\section{The History Equivalence Theorem}
\subsection{Introduction}
We are now ready to prove the fundamental property of $H^r$, namely, that it performs the same calculation as $H$.

The proof proceeds by induction on the partially ordered constellations of $\labelspace{}$. We will formulate an inductive hypothesis that correlates the state of the processes in $H$ and in $H^r$ prior to a given constellation. We will show that under the hypothesis, if the processes of $H^r$ process the triggers of the constellation according to the \cbcast{} algorithm, each would generate and queue the exact same packets, and in the same order, that are observed in the $H^r$ transaction for that trigger. Moreover, the post-processing state of the processes in $H^r$ will continue to relate to the state of the processes in $H$ according to the inductive hypothesis once the whole constellation is processed.

The most important part of the rather elaborate inductive hypothesis is what it says about the eventual state of $H$ and $H^r$, namely that the state becomes identical. This means that the calculation carried out by the two histories is the same calculation. This makes it possible to carry over desirable properties like coherence and progress from $H^r$ to $H$. By repeating that step we can ultimately carry over these properties from relatively simple histories that do not have any process joins to the more intractable histories that have any finite number of such joins.

\newcommand{\HistoryEquivThm}{History Equivalence Theorem}
\begin{thm}[\HistoryEquivThm]
\label{HistoryEquivThm}
$H^r$ is the history of a \cbcast{} and \ulp{} computation that performs the same computation that $H$ does. Specifically
\begin{itemize}
\item $H^r$ delivers the same messages and view installations in the same order and at the same constellation as $H$ does at any process $P \ne \plusminusG$.
\item $H^r$ delivers the same messages and view installations in the same order and at the same constellation as $H$ does at process $G$ after the critical moment.
\item $H^r$ delivers the same messages and view installations at processes $\plusminusG$ in the same order and at the same constellation as $H$ does at process $D$ before the critical moment.
\item At some point the states of $H^r$ and $H$ become identical
\end{itemize}
Where a message delivery is an invocation of the ApplyMessage up-call and a view installation is an invocation of the ApplyJoin or ApplyRemoval up-calls.
\end{thm}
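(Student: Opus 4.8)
The plan is to prove the \HistoryEquivThm{} by induction over the partially ordered set of constellations $\cspace$, which is legitimate because $\cspace$ is very well founded (\AxOrderI{} in $H$, via the embedding $\cspace \hookrightarrow \dot{\eventset}$). First I would formulate a detailed \emph{inductive hypothesis} that, for a given constellation $\ell \in \cspace$, asserts a precise correspondence between the pre-$\ell$ states of every process in $H$ and in $H^r$. The correspondence must be stated separately for the three families of processes the theorem distinguishes: for a non-$G$ process $P$, the $H^r$-state of $P$ is the \emph{reduction} (Definition \ref{EquivDef}) of its $H$-state except for bookkeeping variables that only see $\plusminusG$ (and except that \lset{}/\cset{}/the indexed vectors carry extra $G,\minusG$ entries before $\vcrit$); for $\minusG$ and for $G$ before the critical moment, the state tracks $D$'s $H$-state (with the \ulview{}/\ulcount{} wrapper variables of \ref{ReducedHistoryOrder}--\ref{ReducedUAI} accounting for the suppressed up-calls); and for $G$ after the critical moment, the state equals its $H$-state. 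Crucially the hypothesis must also record that the \emph{instability sets} in \wset{} differ in a controlled way (the clone-from-$\plusminusG$ entries $\fp{iset}[G],\fp{iset}[\minusG]$ mirror $\fp{iset}[D]$), since that is exactly what the donation/co-donation sub-transactions of \ref{ReceiveDonation}/\ref{ReceiveCoDonation} are engineered to reconcile.

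Second, the inductive step: fix $\ell$, assume the hypothesis for all $\ell' \dot{\prec} \ell$, and process each trigger event of the constellation $\ell$ in $H^r$ using the \cbcast{} pseudo-code. I would go trigger-type by trigger-type (notification, message-broadcast request, and each of the six packet types handled by \ref{protPacket}), exactly as in the proof of the Omnibus Lemma and of Theorem \ref{LabelOrderThm}. For each, I would verify (i) that \cbcast{} run on the $H^r$ pre-state produces precisely the side-effect packets --- and in the order --- recorded by the $H^r$ construction of Sections \ref{ChannelPQSSS}--\ref{ChannelGGSSS}, using Lemma \ref{SEContactSetLem} to pin down target sets; and (ii) that the resulting post-state again satisfies the hypothesis. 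The delicate sub-cases are the \emph{critical} ones: the trigger $\notify{\vcrit}{P} = \nr{\minusG}$ at a non-$G$ process must reproduce, via \ref{RemovalNotification}, exactly the forwarding/ghost/flush behaviour that the original $\nj{G}{D}$ notification induced (no messages are lost because $\minusG$'s clones were redundant), and the constellations $\crittime{P}{G}$, $\crittime{G}{P}$ must be shown to unfold in $H^r$ as genuine message/ack processing transactions whose events are precisely the arrivals of the untimely clones --- this is where Corollaries \ref{DonOrderCor} and \ref{CoDonOrderCor} and Lemma \ref{CoDonFlushLem} do the heavy lifting, since they already establish the one-to-one order-preserving match between donation/co-donation sub-transactions and untimely-packet queuing events. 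For the $\prec^{H^r}$-order I would lean on Corollary \ref{CompEventCor}: since order within a process is exactly label order, and labels were assigned in \ref{NewQEventsSSS}--\ref{ChannelGGSSS} to match, "processing in constellation order" is automatic once the side-effect labels are verified to come out right.

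Third, the up-call equivalence. Once the state correspondence is in hand, the four bulleted claims about ApplyMessage/ApplyJoin/ApplyRemoval follow by checking that the wrapper procedures \ref{GroundStateP}--\ref{ApplyRemovalG} translate each $H^r$ up-call into the $H$ up-call the theorem demands: the $\magicview$/$\magicmessage$/$\magicsize$ counters and the \ulview{}, \ulcount{} variables are precisely calibrated so that the two spurious view changes at $\plusminusG$ (the artificial $\notify{0}{\plusminusG}$ join and the $\vcrit$ removal) are swallowed, the critical $\nr{\minusG}$ at every process is rewritten as "$G$ joins", and $G$'s pre-critical message deliveries (which duplicate $D$'s) are replayed on $D$'s \uldata{} copy while its post-critical ones land on $G$'s. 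For the last bullet --- that the states of $H$ and $H^r$ eventually coincide --- I would argue that beyond the critical constellation $\ell_{\vcrit}$ and after every process has processed its $\nr{\minusG}$/$\nj{G}{D}$ notification (and $G$ has processed every donation, which a transactional $H$ guarantees), the extra $\minusG$ entries have all been discarded, $G$'s and every $P$'s state is literally equal to the $H$ state, and the reduction map is the identity on post-$\vcrit$ messages; so from that constellation on the histories are identical.

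\textbf{Main obstacle.} I expect the hard part to be the critical donation and co-donation constellations, i.e.\ showing that in $H^r$ the transactions at $\crittime{P}{G}$ and $\crittime{G}{P}$ --- which in $H$ are single \ref{ReceiveDonation}/\ref{ReceiveCoDonation} calls executing a sorted list of simulated sub-transactions --- are faithfully realized as a sequence of genuine packet-arrival transactions in the right order, with the right effect on \rset{}, \fque{}[], \wset{}, \mpkin{}[], \mpkout{}, \vtime{}, and with no duplicate actually delivered and no needed message lost. This requires simultaneously juggling the sub-transaction coordinate of the label space, the FIFO guarantees (Theorem \ref{ReducedHistoryThm} and Lemma \ref{FIFOLem}), the counting invariants of Lemma \ref{OrigPCountLem}, and the precise semantics of the $\heighta/\heightb$ sort in \ref{ReceiveDonation}/\ref{ReceiveCoDonation}; the bookkeeping is intricate even though each individual step is routine. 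A secondary but real difficulty is getting the inductive hypothesis \emph{exactly} right on the instability sets and on the \ulview{}/\ulcount{} wrapper state, since an imprecise hypothesis will either be false at the critical constellation or too weak to propagate.
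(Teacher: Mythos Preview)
Your proposal is correct and follows essentially the same approach as the paper: induction over constellations, a state-correspondence inductive hypothesis, a trigger-type-by-trigger-type case analysis, and the donation/co-donation constellations as the crux. The one refinement you should anticipate is that the paper splits the post-critical timeline at each process into an explicit \emph{interim period} (critical notification received but critical view not yet \emph{installed}) and a \emph{convergent period} (critical view installed), with separate hypotheses for each; convergence is keyed to installation, not to notification processing, so your ``after every process has processed its $\nr{\minusG}$'' should read ``after every process has installed view $\vcrit$''---this matters because \mset{}, \vtime{}, and \pque{} still carry the $\plusminusG$ discrepancy throughout the interim.
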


\subsection{Proof plan and preliminaries}
\label{ProofPlanSSS}
The claim of the theorem is a little subtle. $H$ is a history that arises naturally from an execution of \cbcast{} and \ulp{}. But $H^r$ is a synthetic history whose trigger events occur for no underlying reason. To prove the theorem we have to endow $H^r$ with an execution that gives rise to its arbitrary behavior. We do that inductively, one constellation at a time.

At the beginning of time we have each original process in $H^r$ initialized by invoking the \ref{StartCluster} call, using
$$
\fp{roster}^r = \fp{roster} \,\bigcup\, \{ \plusminusG \}
$$
as the roster. This causes the processes of $H^r$ to initialize to a specific initial state. We will show that this state is {\em similar} to the initial state of $H$, where similarity of state is a rather complex relationship that we will define later. This similarity forms the basis of our inductive process. The induction is by the partially ordered constellations of $\cspace{}$ (see Definition \ref{CSpaceDef}).

Recall from \ref{LabelSpaceSS} that $H$ and $H^r$ constellations were defined as sets of events that share the first coordinate, called the constellation coordinate, in their labels. Since each constellation coordinate is a clean $H$-trigger, there is a very close relationship between $H^r$-constellations and $H$-constellations. In fact in most cases a $H^r$-constellation is essentially an original $H$-transaction or a set of clones of an $H$-transaction. The exceptions are the critical donation and co-donation transactions of $H$. Each of these gets broken down into a sequence of $H^r$ transactions which correspond to $H$ sub-transactions and which can be distinguished by their sub-transaction labels.

For each constellation we assume the following about $H^r$ and $H$:
\begin{enumerate}
\item The starting states of $H^r$ and $H$ are similar.
\label{HET:StateSimilar}
\item The starting states of the \ulp{} thread in $H^r$ and $H$ are {\em identical}.
\label{HET:ULPIdentical}
\item The next execution interval of the \ulp{} thread will occur at exactly the same time in $H^r$ and $H$ at every process. In particular, since the constellation consists of at most a single transaction per process in $H$, the \ulp{} thread will not continue running in $H^r$ until the conclusion of the constellation, despite the fact that it may consist of multiple transactions.
\label{HET:ThreadSerial}
\end{enumerate}
The last assumption reflects a degree of freedom that we have in weaving the \ulp{} computation into $H^r$. After all, we only have to show that $H^r$ {\em can} arise as a history of a \cbcast{} and \ulp{} computation, not that it must arise.

The only difficulty with the timing of the \ulp{} thread occurs at its inception. In $H$, the \ulp{} thread at $G$ is launched when $G$ installs the critical view. In $H^r$ however the thread is launched at the beginning of time by the \ref{StartCluster} procedure, since $G$ is original in $H^r$. However the launch is asynchronous, meaning that the thread is not executed immediately but at some indeterminate point in the future. Since the lauch is earlier in $H^r$ we can simply assume that the execution is delayed long enough to coincide with the execution in $H$. At process $\minusG$ in $H^r$ the launch also occurs at the beginning of time but we can assume that the execution is delayed until $\minusG$ halts and therefore never occurs.

Under these assumptions we demonstrate the following conclusions:
\begin{itemize}
\item Using the \cbcast{} callbacks to execute the constellation in $H^r$ results in an ending state of $H^r$ that is similar to the state of $H$ at the end of the same constellation.
\item The side effects that are generated by the \cbcast{} callbacks are identical to the observed side effects in $H^r$. In other words the current constellation looks like it has come about as a result of a \cbcast{} execution rather than an arbitrary choice of side effects.
\item The message deliveries and view installations that are generated by the \cbcast{} callbacks in $H^r$ are identical to those generated in $H$, in the sense that was elaborated in the statement of the theorem. As a result any information that is visible to \ulp{} remains identical at the end of the constellation.
\end{itemize}
This looks like it is enough for carrying the induction forward, but it is not quite enough, because even though we have shown that the side effects of the current  constellation are generated by \cbcast{} rather than being arbitrary, we have not shown that any subsequent trigger events are non-arbitrary.

So suppose that $C$ is a constellation in $H^r$, and suppose that every preceding constellation has been shown to arise from a \cbcast{} and \ulp{} execution. Why should this execution give rise to the triggers of $C$? Look at any trigger in $C$:

If the trigger is a dequeuing of a notification then there is no problem, because \gms{} is part of $H^r$ and we are allowed to control its behavior arbitrarily.

If the trigger is a dequeuing event of a packet, then the packet was queued in a previous constellation and therefore was a result of a \cbcast{} and \ulp{} execution. The dequeuing of the packet at this point is the result of the timing (or labeling) that we built into $H^r$ explicitly in order to have packets dequeue at their destinations at improbable, but incredibly convenient times.

If the trigger is a dequeuing of a message broadcast request then its existence depends on \ulp{} actually having produced the same requests in $H^r$ and in $H$ at the same time. The only way for this to happen is for the \ulp{} thread to be exactly identical in both histories, and this can only be guaranteed by perfectly masking the differences between $H^r$ and $H$ from \ulp{}. For this to happen we need three conditions:
\begin{itemize}
\item The \ulp{} thread must have been in an identical state in both histories at all processes at the end of the previous constellation.
\item The \ulp{} thread must have seen the exact same information since that time.
\item The \ulp{} thread execution must have proceeded at the exact same speed at the exact same intervals in both histories and must not have intermingled with constellation executions.
\end{itemize}
We have shown that the first two conditions are met, and are at liberty to assume the third, as we have seen. Therefore we can conclude that \ulp{} could have issued the same requests\footnote{we could have simplified this argument by replacing \ulp{} in $H^r$ with a random oracle that by sheer luck broadcasts the same messages that \ulp{} issues. However we thought it was significant that the  argument could be carried forward with the same user application and without resorting to an artificial oracle.}

With these observations we can conclude that the triggers of $C$ are indeed a result of a \cbcast{} and \ulp{} execution and that the three inductive hypotheses (\ref{HET:StateSimilar}), (\ref{HET:ULPIdentical}) and (\ref{HET:ThreadSerial}) above continue to hold.

The definition of state similarity is somewhat complex and varies depending on the {\em period} in the life of the process into which the constellation falls. Each process goes through three periods, the {\em pre-critical}, {\em interim} and {\em convergent} periods. The pre-critical period includes all the transactions that occur prior to the critical notification and the convergent period includes all the transactions that occur after the critical view is installed. The interim period includes all the constellations that occur while the critical view is pending installation.

Here is the crux of the matter: in the convergent period, similarity becomes equality, and the two histories converge as claimed.

It turns out that desirable properties like Causal Order and Progress carry over from $H^r$ to $H$. By iterating the history reduction process we can ultimately carry over these properties from relatively simple histories that do not have any process joins to the more intractable histories that have any finite number of such joins. Later we will show that join-free histories enjoy the Causal Order Property and the Progress Property. As a result both properties hold for finite-join histories. We will show that the Causal Order property holds for histories with an infinite number of joins as well. This is not true for the Progress Property.

\subsection{Side Effects in $H^r$}
\label{ReducedSideEffectsSSS}
In our model each side effect is a queuing event. A queuing event is a multicast (in the case of a message, ghost or flush packet) or a unicast (in the case of an acknowledgement, donation or co-donation packet). In a queuing event a process $P$ queues a set of identical packets to outbound channels, bound for a target set of processes (see the \AxPackEventII{}). Before we can make inductive arguments, we must relate the observed side effects in $H^r$ to the observed side effects in $H$.

As we prove the History Equivalence Theorem, we will repeatedly make the argument that the execution of \cbcast{} in $H^r$ produces the observed $H^r$ side effects. Each observed side effect in $H^r$ has two characteristics: the type and content of the packet that is being queued, and the target set of the multicast or unicast. In each case we will have to show that the \cbcast{} code execution produces the observed type of packet with the observed content. As for the target set, all the multicasts in \cbcast{} (step \ref{BM:Members} of \ref{BroadcastMessage}, step \ref{RRN:Fwd} of \ref{RemovalNotification}, steps \ref{CF:SendGhost} and \ref{CF:SendFlush} of \ref{CheckFlush}) use \cset{} as the target set. We will establish later (see Lemma \ref{ReducedContactLem}) that in each case, the observed target set is equal to the value of \cset{} that exists at the process in $H^r$ at the time of the multicast. Unicasts will not present a similar problem.

\subsection{The inductive hypothesis} 
\label{InductiveHypSSS}
The inductive hypothesis relates the states of certain processes in $H$ and $H^r$. The complete set of variables that make up the state of a process is listed in \ref{ProcessState}. The inductive hypothesis is complex enough to warrant a preliminary discussion.

Thanks to labeling we have a common "timeline" for $H$ and $H^r$, namely the common constellation partial order. We divide this common timeline into three periods: The pre-critical period, the interim period and the convergence period. The pre-critical period is the interval of time up to the critical view change constellation $\ell_{\vcrit}$. The interim period ends at a process when that process installs the critical view. This boundary occurs at a different constellation at each process. Process $\minusG$ is removed at the end of the pre-critical period, so it does not have an interim period. The convergence period starts at the end of the interim period and continues indefinitely. The inductive hypothesis is divided into separate hypotheses for each time period. The most important of those is the convergent period, where the hypothesis is that $H$ and $H^r$ are identical.

To summarize, let $e \in \dot{\eventset}$ be a constellation and let $P$ be a process. Then
\begin{itemize}
\item the constellation belongs to the {\em pre-critical period} at $P$ if $e \,\dot{\prec}\, \ell_{\vcrit}$
\item the constellation belongs to the {\em interim period} at $P$ if $e \,\dot{\succeq}\, \ell_{\vcrit}$ and $\valuepre{\cview}{P}{e} < \vcrit{}$
\item the constellation belongs to the {\em convergent period} at $P$ if $e \,\dot{\succeq}\, \ell_{\vcrit}$ and $\valuepre{\cview}{P}{e} \ge \vcrit{}$
\end{itemize}

The most complex constellations are the critical donation and co-donation constellations, $\crittime{P}{G}$ and $\crittime{G}{P}$. Each of these is a single transaction in $H$, but becomes a sequence of transactions in $H^r$ - potentially even an empty sequence, in which case the constellation does not exist in $H^r$. To prove the inductive hypothesis for one of these constellations, we need to resort to a second level of induction. For this purpose we will formulate sub-hypotheses that relate the states of $H$ and $H^r$ at each sub-transaction.

We build on our observations in \ref{ReducedSideEffectsSSS} to define the following equivalence between side effects. We use this equivalence to separate the issue of side effect type and content from the issue target set, that will be treated separately.

\begin{defn}
\label{EquivSEDef}
Let $e = \psendevent{k}$ and $e^r = \psendevent{k^r}$ be two queuing events in $H$ and $H^r$, respectively. We say that $e$ and $e^r$ produce an {\bf equivalent} side effect if $k$ and $k^r$ have the same packet type and $\cont(k) \cong \cont(k^r)$ (see Definition \ref{EquivDef}).
\end{defn}

The inductive hypothesis is actually a set of related hypotheses and sub-hypotheses. The main hypotheses are:
\begin{itemize}
\item The First Pre-Critical Hypothesis, which describes how the state of a process $P \in \processset^H$ in $H$ is related to the state of the same process in $H^r$ at the start of a pre-critical constellation. Notice that $P \ne \plusminusG$ because $\minusG$ is not a process in $H$ while $G$ joins post-critically in $H$.
\item The Second Pre-Critical Hypothesis, which describes how the states of $D$, $G$ and $\minusG$ in $H^r$ are related to each other at the start of a pre-critical constellation. Notice that in this case the comparison is within $H^r$, not between $H$ and $H^r$.
\item The Interim Non-$G$ Hypothesis, which describes how the state of a process $P \ne G$ in $H$ is related to the state of the same process in $H^r$ at the start of a post-critical constellation that occurs before $P$ installs the critical view in $H$.
\item The Interim $G$ Hypothesis, which describes how the state of $G$ in $H$ is related to its state in $H^r$ at the start of a post-critical constellation that occurs before $G$ installs the critical view in $H$.
\item The Convergent Hypothesis, which claims that the state of a process $P$ in $H$ is identical to its state in $H^r$ at any time after $P$ installs the critical view in $H$.
\end{itemize}

The sub-hypotheses are:
\begin{itemize}
\item The Donation Sub-Hypothesis, which describes how the state of $G$ in $H$ relates to its state in $H^r$ at the start of each sub-transaction of each donation constellation at $G$.
\item The First Co-Donation Sub-Hypothesis, which describes how the state of $P \ne G$ in $H$ relates to its state in $H^r$ at the start of each untimely sub-transaction of each co-donation constellation in $P$.
\item The Second Co-Donation Sub-Hypothesis, which describes how the state of $P \ne G$ in $H$ relates to its state in $H^r$ at the start of each post-critical sub-transaction of each co-donation constellation in $P$.
\end{itemize}

\begin{indhyp}[First Pre-Critical Hypothesis]
Let $C$ be any pre-critical constellation and let $P$ be a process that exists in both $H$ and $H^r$ at the start of $C$. Then the state of $P$ in $H$ and the state of $P$ in $H^r$ are identical at that point, with the following exceptions:
\begin{align*}
\mset{}^r & = \mset{} \,\bigcup\, \{ \plusminusG \}   \\
\lset{}^r & = \lset{} \,\bigcup\, \{ \plusminusG \}   \\
\cset{}^r & = \cset{} \,\bigcup\, \{ \plusminusG \}   \\
\vtime{}^r[] & = \vtime{}[] \,\bigcup\, \left\{ [G] = 0, [\minusG] = 0 \right\}   \\
\rset{}^r & \cong \rset{}   \\
\fque{}^r[] & \cong \fque{}[] \,\bigcup\, \left\{ [G] = \emptyset, [\minusG] = \emptyset \right\}   \\
\wset{}^r & \cong \left\{
\langle \fp{msg}, \fp{index}, \partial(\fp{iset}) \rangle \,|\, \langle \fp{msg}, \fp{index}, \fp{iset} \rangle \in \wset{} \right\}  \\
& \quad \text{where } \partial(\fp{iset}) =
\begin{cases}
\fp{iset}      & D \notin \fp{iset}     \\
\fp{iset} \,\bigcup\, \left\{ \fp{iset}[\plusminusG] = \{ f =  \fp{iset}[D].f,\, b = 0 \} \right\}  & D \in \fp{iset}
\end{cases}   \\
\mpkin{}^r[X].f & =
\begin{cases}
\mpkin{}[X].f & X \ne \plusminusG  \\
\mpkin{}[D].f & X = \plusminusG 
\end{cases}   \\
\mpkin{}^r[X].b & =
\begin{cases}
\mpkin{}[X].b & X \ne \plusminusG  \\
0 & X = \plusminusG 
\end{cases}   \\
\gvec{}^r[] & = \gvec{}[] \,\bigcup\, \left\{ [G] = \gvec{}[D], [\minusG] = \gvec{}[D]\right\}   \\
\fvec{}^r[] & = \fvec{}[] \,\bigcup\, \left\{ [G] = \gvec{}[D], [\minusG] = \gvec{}[D]\right\}
\end{align*}
Notice that $\fvec{}^r[\plusminusG]$ is indeed inherited from $\gvec{}[D]$, not $\fvec[D]$!
\end{indhyp}

\begin{indhyp}[Second Pre-Critical Hypothesis]
Let $C$ be any pre-critical constellation. Then the states of $\plusminusG$ in $H^r$ at the start of $C$ are identical to the state of $D$ in $H^r$ at the same point, with the following exceptions:
\begin{align*}
\self{}^r(\plusminusG) & = \plusminusG \\
\bwset{}^r(\plusminusG) & = \emptyset   \\
\fwset{}^r(\plusminusG) & = \{ \langle \fp{msg}, \{  f = \fp{index}.f, b = 0 \}, \fp{iset} \rangle \,| \\
 & \langle \fp{msg}, \fp{index}, \fp{iset} \rangle \in \fwset{}^r(D) \}  \\
\lque{}^r(\plusminusG) & = \emptyset   \\
\sfh{}^r(\plusminusG) & = \sgh{}^r(D)   \\
\mpkout{}^r.b(\plusminusG) & = 0
\end{align*}
\end{indhyp}

\begin{indhyp}[Interim Non-$G$ Hypothesis]
Let $C$ be any post-critical constellation and let $P \ne G$ be a process that exists at the start of $C$ and has no yet installed the critical view in $H$. Then the state of $P$ in $H$ is identical to the state of $P$ in $H^r$ at that point with the following exceptions:
\begin{align*}
\mset{}^r & = \mset{} \,\bigcup\, \{ \plusminusG \}   \\
\pque{}^r & = \pque{} \text{ with } \langle \operatorname{REMOVE}, \minusG \rangle \text{ replacing } \langle \operatorname{JOIN}, G \rangle   \\
\vtime{}^r[] & = \vtime{}[] \,\bigcup\, \left\{ [G] = 0, [\minusG] = 0 \right\}   \\
\rset{}^r & \cong \rset{}   \\
\fque{}^r[] & \cong \fque{}[]   \\
\wset{}^r & \cong \wset{}
\end{align*}
\end{indhyp}

\begin{indhyp}[Interim $G$ Hypothesis]
Let $C$ be any post-critical constellation and suppose that $G$ exists at the start of $C$ and has not yet installed the critical view in $H$. Then the state of $G$ in $H$ is identical to the state of $G$ in $H^r$ at that point with the following exceptions:
\begin{align*}
\mset{}^r & = \mset{} \,\bigcup\, \{ \plusminusG \}   \\
\pque{}^r & = \pque{} \text{ with } \langle \operatorname{REMOVE}, \minusG \rangle \text{ replacing } \langle \operatorname{JOIN}, G \rangle   \\
\cset{}^r & = \lset{} \\
\vtime{}^r[] & = \vtime{}[] \,\bigcup\, \left\{ [G] = 0, [\minusG] = 0 \right\}   \\
\rset{}^r & \cong \rset{}   \\
\fque{}^r[] & \cong \fque{}[]   \\
\wset{}^r & \cong \wset{}
\end{align*}
\end{indhyp}

\begin{indhyp}[Donation Sub-Hypothesis]
Let $P$ be any process that sends a critical donation packet to $G$ in $H$ and let $k$ be any untimely packet in $\channel{P}{D}$. If $G$ processes the critical donation packet from $P$ in $H$ then the state of $G$ in $H$ at the start of the $\lblcggz{P}{G}{\psendevent{k}}{*}$ sub-transaction is identical to the state of $G$ in $H^r$ at the start of the matching transaction in $H^r$, with the following exceptions:
\begin{align*}
\mset{}^r & = \mset{} \,\bigcup\, \{ \plusminusG \}   \\
\pque{}^r & = \pque{} \text{ with } \langle \operatorname{REMOVE}, \minusG \rangle \text{ replacing } \langle \operatorname{JOIN}, G \rangle   \\
\cset{}^r & = \lset{} \\
\vtime{}^r[] & = \vtime{}[] \,\bigcup\, \left\{ [G] = 0, [\minusG] = 0 \right\}   \\
\rset{}^r & \cong \rset{}   \\
\fque{}^r[] & \cong \fque{}[]   \\
\wset{}^r & \cong \wset{} \\
\gvec{}[P] & \le \gvec{}^r[P] \le \valuepre{\sgh{}}{P}{\ell_{\vcrit}} \\
\fvec{}[P] & \le \fvec{}^r[P] \le \valuepre{\sfh{}}{P}{\ell_{\vcrit}}
\end{align*}
\end{indhyp}

\begin{indhyp}[First Co-Donation Sub-Hypothesis]
Let $P \ne G$ be any process that receives a critical co-donation packet from $G$ in $H$ and let $k$ be any untimely packet in $\channel{D}{P}$. If $P$ processes the critical co-donation in $H$ then the state of $P$ in $H$ at the start of the $\lblcggz{G}{P}{\psendevent{k}}{*}$ sub-transaction in $H$ is identical to the state of $P$ in $H^r$ at the start of the matching transaction in $H^r$, with the following exceptions:
\begin{align*}
\mset{}^r & = \mset{} \,\bigcup\, \{ \plusminusG \}   \\
\pque{}^r & = \pque{} \text{ with } \langle \operatorname{REMOVE}, \minusG \rangle \text{ replacing } \langle \operatorname{JOIN}, G \rangle   \\
\vtime{}^r[] & = \vtime{}[] \,\bigcup\, \left\{ [G] = 0, [\minusG] = 0 \right\}   \\
\rset{}^r & \cong \rset{}   \\
\fque{}^r[] & \cong \fque{}[]   \\
\wset{}^r & \cong \wset{} \\
\gvec{}[G] & \le \gvec{}^r[G] \le \valuepre{\sgh{}}{D}{\ell_{\vcrit}} \\
\fvec{}[G] & \le \fvec{}^r[G] \le \valuepre{\sgh{}}{D}{\ell_{\vcrit}}
\end{align*}
\end{indhyp}

\begin{indhyp}[Second Co-Donation Sub-Hypothesis]
Let $P \ne G$ be any process that receives a critical co-donation packet from $G$ in $H$ and let $k$ be any uncontacted packet in $\channel{G}{G}$. If $P$ processes the critical co-donation in $H$ then the state of $P$ in $H$ at the start of the $\lblcggz{G}{P}{\psendevent{k}}{*}$ sub-transaction is identical to the state of $P$ in $H^r$ at the start of the matching transaction in $H^r$, with the following exceptions:
\begin{align*}
\mset{}^r & = \mset{} \,\bigcup\, \{ \plusminusG \}   \\
\pque{}^r & = \pque{} \text{ with } \langle \operatorname{REMOVE}, \minusG \rangle \text{ replacing } \langle \operatorname{JOIN}, G \rangle   \\
\vtime{}^r[] & = \vtime{}[] \,\bigcup\, \left\{ [G] = 0, [\minusG] = 0 \right\}   \\
\rset{}^r & \cong \rset{}   \\
\fque{}^r[] & \cong \fque{}[]   \\
\wset{}^r & \cong \wset{} \\
\gvec{}[G] & \le \valuepre{\sgh{}}{D}{\ell_{\vcrit}} \le \gvec{}^r[G] \le \valuepre{\sgh{}}{G}{\crittime{P}{G}} \\
\fvec{}[G] & \le \valuepre{\sgh{}}{D}{\ell_{\vcrit}} \le \fvec{}^r[G] \le \valuepre{\sgh{}}{G}{\crittime{P}{G}}
\end{align*}
\end{indhyp}

\begin{indhyp}[Convergent Hypothesis]
For any post-critical constellation $C$ and any process $P$ that exists at the start of $C$, if $P$ has already installed the critical view at that point in $H$, then the state of $P$ in $H$ is identical to the state of $P$ in $H^r$ at that same point. Moreover, the state of $P$ does not contain any pre-critical messages.
\end{indhyp}

\subsection{Technical lemmas for the History Equivalence Theorem proof}

\begin{lem}
\label{ReducedContactLem}
Let $e = \psendevent{k}$ be a queuing event of a ghost, flush or message packet at process $P$ in $H^r$, and let $e_0 = \psendevent{\origpk(k)}$ be the original queuing event at process $R$ in $H$. Assume that the inductive hypothesis holds and that one of the following conditions hold as well:
\begin{enumerate}
\item $e$ is pre-critical, and $\valuepre{\cset{}^r}{P}{e} = \valuepre{\cset{}}{R}{e_0} \,\bigcup\, \{ \plusminusG \}$
\item $e$ is post-critical, $P \ne G$ and $\valuepre{\cset{}^r}{P}{e} = \valuepre{\cset{}}{R}{e_0}$
\item $e$ is post-critical, $P = G$ and $\valuepre{\cset{}^r}{P}{e} = \valuepre{\lset{}}{R}{e_0}$
\end{enumerate}
Then $T_e = \cset{}^r$
\end{lem}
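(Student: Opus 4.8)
The plan is to prove Lemma \ref{ReducedContactLem} by combining Lemma \ref{SEContactSetLem}, which already computes $T_e$ in terms of the \emph{original} history's $\cset{}$ or $\lset{}$ values, with the inductive hypothesis, which tells us how those original values relate to the $H^r$ values of $\cset{}^r$ at the queuing process $P$. In other words, Lemma \ref{SEContactSetLem} gives $T_e$ as a set expressed in $H$-quantities at $e_0 = \psendevent{\origpk(k)}$, and the three hypotheses of the present lemma pin down $\valuepre{\cset{}^r}{P}{e}$ in terms of the very same $H$-quantities; the proof is just a case analysis matching the two.

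First I would dispose of the possibility that $k$ is original in $H^r$ and its origin channel behaves like a non-$G$ channel — but actually it is cleaner to simply let $e_0 = \psendevent{\origpk(k)}$ in all cases (this is well-defined since $\origpk$ is total) and split into the three cases enumerated in the statement. \textbf{Case 1 (pre-critical).} By Lemma \ref{SEContactSetLem}(1), $T_e = \valuepre{\cset{}}{R}{e_0} \cup \{\plusminusG\}$. By hypothesis, $\valuepre{\cset{}^r}{P}{e} = \valuepre{\cset{}}{R}{e_0} \cup \{\plusminusG\}$. These are literally the same set, so $T_e = \cset{}^r$. Here I need to be slightly careful that ``$\cset{}^r$'' in the conclusion means $\valuepre{\cset{}^r}{P}{e}$, the value at the start of the transaction containing $e$ — since $e$ is a side effect, this is the value that \cbcast{} would use when executing the multicast, exactly as discussed in \ref{ReducedSideEffectsSSS}. \textbf{Case 2 (post-critical, $P \ne G$).} By Lemma \ref{SEContactSetLem}(3), $T_e = \valuepre{\cset{}}{R}{e_0}$; and here $R = P$ since post-critically no new clones are added to packets emanating from processes other than $G$, so $\origpk(k) = k$ and the origin event occurs at $P$ itself. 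By hypothesis $\valuepre{\cset{}^r}{P}{e} = \valuepre{\cset{}}{R}{e_0}$, and we are done. \textbf{Case 3 (post-critical, $P = G$).} By Lemma \ref{SEContactSetLem}(2), $T_e = \valuepre{\lset{}}{G}{e_0}$ where $e_0$ occurs at $G$ (again $\origpk(k) = k$ since the only post-critical clones live on $\channel{G}{P}$ channels and have queuing events that coincide with the original $\channel{G}{G}$ queuing event at $G$). By hypothesis $\valuepre{\cset{}^r}{G}{e} = \valuepre{\lset{}}{R}{e_0} = \valuepre{\lset{}}{G}{e_0}$, so again $T_e = \cset{}^r$.

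The one subtlety I would flag is matching the process identities $R$ and $P$ correctly, together with the fact that $\psendevent{\origpk(k)}$ and $\psendevent{k}$ always occur at the ``same'' transaction-level point even when they occur at different processes — this is the content of the constellation-coordinate bookkeeping already established in \ref{LabelingHSSS} and used in Lemma \ref{FIFOLem}. In particular for pre-critical queuing events out of $D$, the clone events $e_{\plusc}$, $e_{\minusc}$ etc.\ are defined to occur at $G$, $\minusG$ with the same constellation label, so ``the start of the transaction containing $e$'' in $H^r$ corresponds to ``the start of the transaction containing $\origpk(k)$'' in $H$, and the Second Pre-Critical Hypothesis guarantees that $\valuepre{\cset{}^r}{\plusminusG}{e} = \valuepre{\cset{}^r}{D}{e_0^r}$, which by the First Pre-Critical Hypothesis equals $\valuepre{\cset{}}{D}{e_0} \cup \{\plusminusG\}$. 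The main obstacle, such as it is, is therefore not a deep argument but rather carefully tracking which ``$\cset{}$'' value is meant (pre- versus post-transaction, which process, which history) so that the chain of equalities lines up; everything substantive has already been extracted into Lemma \ref{SEContactSetLem} and the inductive hypotheses.
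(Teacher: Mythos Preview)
Your proof is correct and takes essentially the same approach as the paper: a case split matched against Lemma~\ref{SEContactSetLem}, with each of the three enumerated conditions lining up with one of that lemma's three cases. The paper's proof is actually slightly longer because it does not take condition~(3) entirely at face value: it singles out the sub-case where $e$ is post-critical at $G$ and the \emph{Convergent} Hypothesis (rather than the Interim~$G$ Hypothesis) is the one in force, and there verifies that $\cset{}^r = \lset{}^r$ by observing that $G$ is an \emph{original} process in $H^r$, so Lemma~\ref{OmnibusCBCASTLem}(\ref{OCL:lset}) gives $\uncont_e = \emptyset$ in $H^r$. Since the lemma as stated assumes condition~(3) outright, your direct use of it is fully justified; the paper's extra step is really a hint to the reader about how condition~(3) is discharged at the call sites in the main induction, where the Convergent Hypothesis only gives $\cset{}^r = \cset{}$ directly.
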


\begin{proof}
Most of the cases follow immediately from Lemma \ref{SEContactSetLem} and the various inductive hypotheses. The one non-trivial case is when $e$ is post-critical, $P = G$ and the Convergent Hypothesis holds. In that case it follows from Lemma \ref{SEContactSetLem} and the inductive hypothesis that $T_e = \lset{}^r$. We have to show that $\cset{}^r = \lset{}^r$. It follows from Lemma \ref{OmnibusCBCASTLem}(\ref{OCL:lset}) that the equality holds for original processes. Since $G$ is an original process in $H^r$ we are done.
\end{proof}

\begin{lem}
\label{BroadcastEquivLem}
Let \fp{msg} be an unstamped message and let $P \ne \plusminusG$ be a process in $H$. Suppose that the states of $P$ in $H$ and $H^r$ satisfy the First Pre-Critical Hypothesis. Suppose as well that $D \in \lset{}$ and $\plusminusG \notin \lset{}$ at $P$ in $H$. Then executing the \ref{BroadcastMessage}(\fp{msg}) in both histories will result in equivalent side effects (see Definition \ref{EquivSEDef}) while preserving the Hypothesis.
\end{lem}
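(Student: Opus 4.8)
## Proof Plan for Lemma \ref{BroadcastEquivLem}

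The plan is to trace the execution of \ref{BroadcastMessage}(\fp{msg}) in both histories side by side, branching on the value of \vgap{}. Since the First Pre-Critical Hypothesis asserts that all state variables agree between $H$ and $H^r$ up to the explicitly-listed exceptions, and since \vgap{} is not among those exceptions, the branch taken (the \texttt{if $\vgap > 0$} test) is the same in both histories. I would handle the two branches separately.

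In the case $\vgap{} > 0$, the procedure merely appends \fp{msg} to the tail of \lque{}. Here the side effect is empty in both histories (Lemma \ref{AppEventSELem}), so "equivalent side effects" is trivial. To preserve the Hypothesis I must check that \lque{} is not one of the variables constrained differently — it is not (it does not appear among the exceptions in the First Pre-Critical Hypothesis), so appending the same \fp{msg} to both copies keeps them equal; all other variables are untouched. One subtlety: in $H^r$ the message stored in \lque{} should be $\fp{msg}^r$ rather than \fp{msg}, but since \fp{msg} is unstamped at this point, $\fp{msg} = \fp{msg}^r$ and reduction is the identity, so no issue arises. I would remark this explicitly.

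In the case $\vgap{} = 0$, the procedure increments $\mpkout{}.b$, computes $\vtime{}'$, fixes the three metadata stamps on \fp{msg}, multicasts $\pkm{\fp{msg}}$ to \cset{}, and adds $\langle \fp{msg}, \se{index}, \se{iset}[] \rangle$ to \bwset{}. The heart of the argument is that the stamped message produced in $H^r$ is exactly the reduction of the stamped message produced in $H$. For $\orig{}$ and $\mview{}$ this is immediate since \self{} (for $P\ne\plusminusG$) and \cview{} agree. For $\mvt{}$: the computed vector $\vtime{}'$ in $H^r$ equals $\vtime{}^r$ with its $[\self{}]$ coordinate updated by $\mpkout.b - \mpkin{}[\self].b$; since the First Pre-Critical Hypothesis gives $\vtime{}^r = \vtime{} \cup \{[G]=0,[\minusG]=0\}$ and $\mpkin{}^r[\self].b = \mpkin{}[\self].b$ (as $\self{}\ne\plusminusG$) and $\mpkout{}.b$ agrees, the resulting $\mvt{}'$ in $H^r$ is precisely $\mvt{}'$ in $H$ augmented with $[G]=0,[\minusG]=0$ — that is, the reduction of $\mvt{}'$ in the sense of Definition \ref{EquivDef}, using that the constellation is pre-critical so $\mview{\fp{msg}} = \cview{} < \vcrit$. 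Hence the packet contents are $\cong$-equivalent and the side effects are equivalent per Definition \ref{EquivSEDef}. For the target set I invoke Lemma \ref{ReducedContactLem}, case (1): the multicast target is \cset{} and the Hypothesis gives $\cset{}^r = \cset{} \cup \{\plusminusG\}$, so the observed target set matches \cset{}$^r$; here I use the hypothesis $D\in\lset{}$, $\plusminusG\notin\lset{}$ to ensure no degeneracy in which $D$ or $\plusminusG$ would be spuriously absent or present in \cset{}.

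Finally, preservation of the Hypothesis. The variables changed are $\mpkout{}.b$ (incremented identically in both), \bwset{}, and nothing else (\vtime{} is not changed — $\vtime{}'$ is a local copy). For \bwset{}: in $H$ we add $\langle \fp{msg}, \se{index}, \se{iset}[]\rangle$ with $\se{index} = \mpkout{}$ and $\se{iset}[] = \mpkin{}[]$; in $H^r$ we add the analogous record with $\se{index}^r = \mpkout{}^r$ and $\se{iset}^{r}[] = \mpkin{}^r[]$. The relation between these in the First Pre-Critical Hypothesis is mediated by the $\partial(\cdot)$ operator applied to the instability set: I must check that $\mpkin{}^r[] = \partial(\mpkin{}[])$, i.e. that $\mpkin{}^r[\plusminusG] = \{f = \mpkin{}[D].f, b = 0\}$, which is exactly what the Hypothesis already states for the $\mpkin{}$ components, and that the freshly added \bwset{}-record in $H^r$ therefore has the correct $\partial$-image form (note $D$ appears in $\se{iset}$ precisely when $D \in \lset{}$, which holds by assumption, so the interesting branch of $\partial$ is the one that fires). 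Since $\wset{} = \bwset{} \cup \fwset{}$ and \fwset{} is untouched, the $\wset{}^r \cong \partial$-image relation is maintained. I expect the main obstacle to be bookkeeping-level: carefully verifying that the new \bwset{} record in $H^r$ lands exactly in the form dictated by the $\partial$ operator — in particular that the $b$-component is zeroed for the $\plusminusG$ coordinate of the inherited instability set while agreeing with $H$ on all other coordinates — rather than any conceptual difficulty. Everything else is a routine line-by-line comparison of the two executions.
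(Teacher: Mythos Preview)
Your proposal is correct and follows essentially the same approach as the paper: a step-by-step trace of \ref{BroadcastMessage}, branching on $\vgap{}$, verifying that each variable change preserves the First Pre-Critical Hypothesis and that the stamped message produced in $H^r$ is the reduction of the one in $H$. Your treatment is in fact more detailed than the paper's (you explicitly address the target set via Lemma \ref{ReducedContactLem} and spell out why the unstamped message is fixed under reduction), but the structure and the key observation---that $D\in\lset{}$ forces the $\partial$-operator's nontrivial branch on the new \bwset{} record---are the same.
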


\begin{proof}
We follow the execution of the \ref{BroadcastMessage} procedure step by step.

The first step is a decision whether to proceed or to queue \fp{msg} to \lque{}. Both histories take the same decision here, resulting in identical changes to \lque{}. If $\vgap{} > 0$ we are done.

The next step increments $\mpkout{}.b$, keeping it identical.

The next two steps define a temporary vector $\vtime'$. Since $P \ne \plusminusG$, the resulting vector has values at $H$ and $H^r$ that bear the same relationship to each other as \vtime{} does.

The next three steps stamp the message. By Definition \ref{EquivDef} this produces equivalent messages.

The next step creates the queuing event. By Definition \ref{EquivSEDef} this event produces equivalent side effects.

The next step creates  the local variable \se{index} which has the same value in both histories.

The next step creates the local vector $\se{iset}[]$, which is related in $H$ and $H^r$ the same way $\mpkin{}[]$ is.

Since $D \in \lset{}$, we know from Lemma \ref{OmnibusCBCASTLem} that there is a $D$ coordinate in $\mpkin{}[]$ and therefore the record that is added to \bwset{} in the last step bears the required relation for \wset{} records. Therefore this step preserves the inductive hypothesis and we are done.
\end{proof}

\begin{lem}
\label{ScanEquivLem}
Let $P$ and $Q$ be two processes that use the same implementation of the ApplyMessage up-call. Further assume that the states of $P$ and $Q$ have the following relations:
\begin{align*}
\uldata{}(P) & = \uldata{}(Q) \\
\cview{}(P) & = \cview{}(Q) < \vcrit{} \\
\mset{}(P) & = \mset{}(Q) \,\bigcup\, \{ \plusminusG \}   \\
\rset{}(P) & \cong \rset{}(Q) \\
\vtime{}[](P) & = \vtime{}[](Q) \,\bigcup\, \left\{ [G] = 0, [\minusG] = 0 \right\}   \\
\end{align*}
Then executing the \ref{ScanCall} procedure in both processes will result in the same relations being preserved, with all other state variables remaining unchanged in both $P$ and $Q$. Moreover both processes invoke the ApplyMessage up-call at the same times and with the same messages.
\end{lem}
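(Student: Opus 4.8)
The statement is a local, code-level invariant-preservation claim about a single procedure, \ref{ScanCall}, so the natural approach is a direct structural walk through the pseudo-code of \ref{ScanCall}, mirroring the style of Lemma \ref{BroadcastEquivLem} and the omnibus induction. The key observation driving the whole proof is that \ref{ScanCall}'s deliverability test and its effect on state depend only on the variables \rset{}, \cview{}, \mset{}, \vtime{}[] and \uldata{} (plus the ApplyMessage up-call, which by hypothesis is the same implementation in both processes), and that the given relations line up exactly so that the test yields the same verdict on corresponding messages. First I would note that since $\cview{}(P) = \cview{}(Q) < \vcrit{}$, by Definition \ref{EquivDef} the reduction of a message does not change its \mview{} or \orig{} stamps, and its \mvt{} stamp differs only by the extra zero coordinates $[G]=0$, $[\minusG]=0$. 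So for each $\fp{msg}$ in $\rset{}(Q)$ with equivalent $\fp{msg}^r \in \rset{}(P)$, the first guard ($\mview{} = \cview{}$) and the second guard ($\mvt{\fp{msg}}[\orig{\fp{msg}}] = \vtime{}[\orig{\fp{msg}}]+1$) evaluate identically, because $\orig{\fp{msg}} \notin \{G, \minusG\}$ (it is a member of a pre-critical view in $H$, where $G$ and $\minusG$ do not exist as originators of pre-critical messages — $\minusG$ never originates and $G$'s pre-critical \ulp{} thread has not launched).

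Next I would handle the inner loop over $\fp{pid} \in \mset{}$. Here $P$ iterates over two extra coordinates, $G$ and $\minusG$. But the corresponding dependency check is $\mvt{\fp{msg}^r}[\fp{pid}] > \vtime{}^r[\fp{pid}]$, and for $\fp{pid} \in \{G, \minusG\}$ both sides are $0$ by the reduction rule and by the hypothesis on \vtime{}[], so the check fails for those coordinates and contributes nothing. For all other $\fp{pid}$ the check is identical in both processes. Hence \lv{all\_dependents\_delivered} gets the same value, and the set of deliverable messages found in a single pass of the \ForEach{} body is the same (up to reduction) in $P$ and $Q$, in the same iteration order. Then the delivery block increments $\vtime{}[\orig{\fp{msg}}]$ identically (a coordinate $\ne G, \minusG$), removes the equivalent message from \rset{} in each (preserving $\rset{}^r \cong \rset{}$), strips metadata, and invokes ApplyMessage with the same message content (the stripped message is identical — the vector-time coordinates that differ are exactly the ones removed by the strip) and the same originator; since the up-call implementation is the same, \uldata{} is updated identically, preserving $\uldata{}(P) = \uldata{}(Q)$. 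The recursive self-call at the end fires in $P$ iff it fires in $Q$ (same \lv{deliverable\_messages\_found}), so the recursion unfolds in lockstep and terminates in both (it terminates in $H$ because \ref{ScanCall} is part of a real execution; the $H^r$ side mirrors it step for step). Finally, no other state variable is touched by \ref{ScanCall}, so all the remaining equalities in the surrounding hypothesis are trivially preserved.

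The one point requiring a little care — and the likeliest place for a subtle gap — is the claim that ApplyMessage is invoked ``at the same times''. Strictly, \ref{ScanCall} is invoked from within larger transactions (e.g. \ref{ReceiveMessage} at step \ref{RMP:Scan}, \ref{TryToInstall}), and in $H^r$ a single $H$-transaction may be split across a constellation; but since \ref{ScanCall} itself makes no packet-queuing side effects and does not interleave with anything, the internal sequence of ApplyMessage calls it produces is a contiguous block determined entirely by the pre-call state, which we have shown is equivalent. So ``same times'' here means: relative to the enclosing transaction, the same subsequence of up-calls in the same order — and that follows from the lockstep recursion. I would state this explicitly to avoid confusion with the global timeline. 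The rest of the argument is routine code-tracing; the substantive content is entirely in the alignment of the deliverability predicate under reduction, which the hypothesis is precisely engineered to guarantee.
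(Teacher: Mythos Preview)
Your proposal is correct and follows essentially the same approach as the paper's proof: a step-by-step trace of \ref{ScanCall}, arguing that the two view-guards evaluate identically because $\orig{\fp{msg}} \ne \plusminusG$ for any current-view message, that the inner \mset{} loop cannot diverge because the extra $\plusminusG$ coordinates contribute $0 \not> 0$, and that the delivery block and recursive call then proceed in lockstep. Your additional remarks on termination and on what ``at the same times'' means are sound but go slightly beyond what the paper spells out.
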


\begin{proof}
We follow the execution of the \ref{ScanCall} procedure step by step.

The first step sets the value of $\lv{deliverable\_messages\_found}$ to $\lv{false}$ at both $P$ and $Q$.

The next step is a loop over members of \rset{}. Since \rset{} is equivalent at $P$ and $Q$, the loop goes over the same messages in the same order in both executions. For each message, we check whether the message is a current view message and whether we have already delivered all the previous messages from the same source. This amounts to checking whether $\mview{\fp{msg}} = \cview{}$ and whether $\mvt{\fp{msg}}[\orig{\fp{msg}}] = \vtime{}[\orig{\fp{msg}}]+1$.

The first check gives the same result in $P$ and $Q$ because $\mview{\fp{msg}}$ is the same (due to equivalence) and \cview{} is assumed to be the same. Moreover, if the check is successful it means that $\mview{\fp{msg}} = \cview{} < \vcrit{}$ and therefore $\orig{\fp{msg}} \ne \plusminusG$. This is because $G$ starts life with $\cview{} + \vgap{} = \vcrit{}$ so it does not originate any messages before $\cview{} = \vcrit{}$ and $\vgap{} = 0$. $\minusG$ does not originate any messages at all.

The second check gives the same result because of equivalence and because of our assumption that $\orig{\fp{msg}} \ne \plusminusG$. Therefore $\mvt{\fp{msg}}[\orig{\fp{msg}}]$ and $\vtime{}[\orig{\fp{msg}}]$ are the same at $P$ and $Q$.

Therefore the conditional block is executed for the same messages, following the steps:
\begin{itemize}
\item \lv{all\_dependents\_delivered} is set to \lv{true} in both $P$ and $Q$.
\item A loop searches \mset{} for coordinates whose values will prevent delivery. Other than $\plusminusG$ the set \mset{} in $Q$ contains the same processes as in $P$ and the test yields the same results in both. The only way this loop could produce divergent results is if $Q$ decided that the message was deliverable while $P$ found an impediment that is related to $\fp{pid} = \plusminusG$. But the equivalence condition on \rset{} guarantees that $\mvt{\fp{msg}}[\plusminusG] = 0$ in $P$, so this cannot happen and the loop must exit with the same value of \lv{all\_dependents\_delivered} in both executions.
\item The decision to deliver the message is controlled by \lv{all\_dependents\_delivered} so both $P$ and $Q$ make the same decision. The message delivery takes the following steps:
\begin{itemize}
\item The value of $\lv{deliverable\_messages\_found}$ is set to $\lv{true}$ at both $P$ and $Q$.
\item The vector time is incremented at the $\orig{\fp{msg}}$ coordinate. This is the same coordinate in $P$ and $Q$ due to equivalence, and since it is not $\plusminusG$, we increment an identical value in \vtime{}, keeping it identical.
\item The message is removed from \rset{} and stripped of its vector time and view stamps, leaving it with only its origin stamp. Since the origin stamp is identical for equivalent messages, this leaves the message identical in $P$ and $Q$, in addition to leaving \rset{} equivalent.
\item The identical message is applied to the identical user data object \uldata{} in the same way. This leaves the data object identical.
\end{itemize}
\end{itemize}
The last step is a recursive call to \ref{ScanCall}, controlled by \lv{deliverable\_messages\_found}. We have shown that up to this point the assumed relationships have not changed and no variables changed other than some of the ones mentioned. Since $P$ and $Q$ take the same decision about making the recursive call, the lemma is now  proven by induction.
\end{proof}

\begin{lem}
Let $P$ and $Q$ be two processes whose states have the following relations:
\begin{align*}
\fwset{}(P) = \emptyset & \text{ iff } \fwset{}(Q) = \emptyset \\ 
\bwset{}(P) = \emptyset & \text{ iff } \bwset{}(Q) = \emptyset \\ 
\cview{}(P) & = \cview{}(Q) \\
\vgap{}(P) & = \vgap{}(Q) \\
\sgh{}(P) & = \sgh{}(Q) \\
\sfh{}(P) & = \sfh{}(Q) \\
\end{align*}
Then executing the \ref{CheckFlush} procedure in both processes has the following two results:
\begin{itemize}
\item The same relations are preserved, with all other state variables remaining unchanged in both $P$ and $Q$.
\item The same side effects occur in $P$ and $Q$.
\end{itemize}
\end{lem}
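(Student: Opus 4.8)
The plan is to trace through the \ref{CheckFlush} pseudo-code line by line, exactly as was done for \ref{ScanCall} in Lemma \ref{ScanEquivLem}, and verify that at each branch point the two processes $P$ and $Q$ take the same decision and perform equivalent updates. The procedure is short and has no message content to worry about, so "equivalent" here means literal equality of the relevant state variables and literal identity of the side effects (ghost and flush packets carry only a view number, and we will see that number is the same in both processes).

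First I would look at the opening test \texttt{if} $\fwset{} \ne \emptyset$. By the first assumed relation, $\fwset{}(P) = \emptyset$ exactly when $\fwset{}(Q) = \emptyset$, so both processes either return immediately (changing nothing, preserving everything vacuously, with no side effects) or both proceed. Next, the test $\sgh{} < \cview{} + \vgap{}$ gives the same answer in $P$ and $Q$ because $\sgh{}$, $\cview{}$ and $\vgap{}$ are assumed equal. If the test succeeds, both processes set $\sgh{} = \cview{} + \vgap{}$ — the same new value, since the right-hand side agrees — and both queue a ghost packet $\pkh{\sgh{}}$ carrying the identical view number. (The target set is \cset{}, which is not among the variables the lemma constrains; that issue is deliberately deferred — in the History Equivalence Theorem proof the target set is handled separately via Lemma \ref{ReducedContactLem}, so here we only need the packet type and content to match, which they do.) Then the test $\bwset{} \ne \emptyset$ again agrees between $P$ and $Q$ by the second assumed relation, so both either return or both continue. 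Finally the test $\sfh{} < \cview{} + \vgap{}$ agrees since $\sfh{}$, $\cview{}$, $\vgap{}$ are equal, and if it succeeds both set $\sfh{}$ to the same value $\cview{} + \vgap{}$ and both queue an identical flush packet $\pkf{\sfh{}}$.

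Having walked through all four decision points, I would then observe that the only state variables touched by \ref{CheckFlush} are $\sgh{}$ and $\sfh{}$, both updated identically (or not at all) in the two runs, so the six assumed relations are all preserved and every other variable is untouched; and that the only side effects are the (at most one) ghost multicast and (at most one) flush multicast, which are identical packet types with identical contents in $P$ and $Q$. That establishes both bullet points of the lemma.

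I do not expect a real obstacle here — the proof is a routine code walk of a four-line procedure. The one place that requires a moment's care is making sure the reader understands why the target-set discrepancy between \cset{}$(P)$ and \cset{}$(Q)$ is not a defect in the statement: the lemma claims only that "the same side effects occur", and in the framework of Section \ref{ReducedSideEffectsSSS} the side-effect type-and-content is deliberately separated from the target set (Definition \ref{EquivSEDef} and the discussion preceding Lemma \ref{ReducedContactLem}), so the mild abuse is intentional and consistent with how the lemma is invoked downstream. I would add a one-sentence remark to that effect so the statement is not misread.
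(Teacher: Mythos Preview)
Your proposal is correct and follows essentially the same approach as the paper's proof: a step-by-step trace through \ref{CheckFlush}, verifying that each branch test depends only on the variables assumed equal (or equivalent in emptiness), so both processes take the same path and perform the same updates. Your added remark about the target-set issue is a helpful clarification that the paper leaves implicit, but the underlying argument is identical.
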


\begin{proof}
We follow the execution of \ref{CheckFlush} step by step.

The first step exits if \fwset{} is not empty. $P$ and $Q$ make the same decision here by assumption.

The next block compares \sgh{} to $\cview{} + \vgap{}$. If \sgh{} is low it updates it and broadcasts ghost packets of height $\cview{} + \vgap{}$. Since \sgh{}, \cview{} and \vgap{} are all the same in $P$ and $Q$, this block results in the same side effects and preserves the postulated relations without changing any other variables.

The rest of the procedure is a repeat of the first part with \bwset{} and \sfh{} replacing \fwset{} and \sgh{}, so the same argument holds.
\end{proof}

\begin{lem}
\label{CheckFlushEquivLemTwo}
Let $P$ and $Q$ be two processes whose states have the following relations:
\begin{align*}
\fwset{}(P) = \emptyset & \text{ iff } \fwset{}(Q) = \emptyset \\ 
\bwset{}(P) & = \emptyset \\
\cview{}(P) & = \cview{}(Q) \\
\vgap{}(P) & = \vgap{}(Q) \\
\sgh{}(P) & = \sgh{}(Q) \\
\sfh{}(P) & = \sgh{}(Q) \\
\end{align*}
Then executing the \ref{CheckFlush} procedure in both processes has the following two results:
\begin{itemize}
\item The same relations are preserved, with all other state variables remaining unchanged in both $P$ and $Q$.
\item If $Q$ broadcasts ghost packets of height $v$ then $P$ broadcasts ghost packets of height $v$ followed by flush packets of height $v$.
\item If $Q$ does not broadcast ghost packets, then $P$ has no side effects.
\end{itemize}
\end{lem}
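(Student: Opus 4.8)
The plan is to trace the execution of the \ref{CheckFlush} procedure line by line in $P$ and in $Q$ in parallel, exactly as in the proof of the preceding lemma, the only new feature being the asymmetry caused by $\bwset{}(P) = \emptyset$ and by the relation $\sfh{}(P) = \sgh{}(Q)$ — a flush height on one side matched against a ghost height on the other. First I would dispose of the opening test of \ref{CheckFlush}, which checks whether $\fwset{}$ is empty: by hypothesis $\fwset{}(P) = \emptyset$ iff $\fwset{}(Q) = \emptyset$, so $P$ and $Q$ take the same branch. If both have a non-empty $\fwset{}$, both return at once, no side effects occur and no state variable changes, so the conclusion holds trivially. From here on I assume $\fwset{}(P) = \fwset{}(Q) = \emptyset$, so both processes reach the ghost-broadcast test.

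Since $\sgh{}(P) = \sgh{}(Q)$, $\cview{}(P) = \cview{}(Q)$ and $\vgap{}(P) = \vgap{}(Q)$, the test $\sgh{} < \cview{}+\vgap{}$ succeeds in $P$ precisely when it succeeds in $Q$. In the failing case $Q$ broadcasts no ghost; then $P$ also broadcasts no ghost, passes the $\bwset{}$ test (its $\bwset{}$ is empty), and reaches the flush test holding $\sfh{}(P) = \sgh{}(Q) = \sgh{}(P) \ge \cview{}(P)+\vgap{}(P)$, so the flush test fails as well and $P$ has no side effects — the third bullet of the statement. No variable other than possibly $\sfh{}(Q)$ (if $Q$ goes on to broadcast its own flush) and $\sgh{}$ changes, and none of the postulated relations mentions $\sfh{}(Q)$, so they are all preserved; it is irrelevant that $Q$ may still emit a flush, since the statement constrains $P$ only through $Q$'s ghost packets.

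In the succeeding case, put $v = \cview{}+\vgap{}$, the same value in $P$ and in $Q$. Both processes set $\sgh{} := v$ and queue a $\pkh{v}$ multicast, so $Q$ broadcasts ghost packets of height $v$. In $P$ the procedure continues past the (empty) $\bwset{}$ test; when it reaches the flush test, $\sfh{}(P)$ still holds its value on entry, which by hypothesis equals the entry value of $\sgh{}(Q)$, hence equals the entry value of $\sgh{}(P)$, which we just observed was strictly below $v$. Therefore $P$ sets $\sfh{}(P) := v$ and queues a $\pkf{v}$ multicast, i.e. $P$ broadcasts ghost packets of height $v$ followed by flush packets of height $v$. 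Finally I would check the relations: after the call $\sgh{}(P) = v = \sgh{}(Q)$ and $\sfh{}(P) = v = \sgh{}(Q)$, while $\cview{}$, $\vgap{}$, $\fwset{}$ and $\bwset{}$ are untouched, so every postulated relation survives and no further state variable is modified. (The target sets of the two new multicasts are both $\cset{}$, as for all multicasts in the protocol, so there is nothing to add on that score.)

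The one point needing care — the ``hard part'', such as it is — is the flush test in $P$ in this last case: one must remember that $\sfh{}(P)$ there is still the entry value, that this value equals the entry value of $\sgh{}(Q)$, and that the common entry value of $\sgh{}$ is strictly below the \emph{new} $\cview{}+\vgap{}$. It is exactly this that forces $P$ to emit a flush where $Q$ emits only a ghost, and makes the conclusion read ``ghost then flush'' rather than merely ``ghost''. Everything else is immediate from the observation that \ref{CheckFlush} touches no state variable other than $\sgh{}$ and $\sfh{}$, with Lemma \ref{OmnibusCBCASTLem}(\ref{OCL:sendgfA}) available if one wishes to double-check that $Q$'s own flush, when it occurs, is consistent with the surrounding construction.
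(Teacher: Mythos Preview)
Your proposal is correct and follows essentially the same approach as the paper: a step-by-step trace through \ref{CheckFlush}, first handling the \fwset{} test, then the ghost block (where the equal values of \sgh{}, \cview{}, \vgap{} force the same decision), then the flush block in $P$ using $\bwset{}(P)=\emptyset$ and $\sfh{}(P)=\sgh{}(Q)$. Your treatment is in fact more detailed than the paper's --- you spell out explicitly why the flush test in $P$ succeeds in the ghost-emitting case (the entry value of $\sfh{}(P)$ equals the entry value of $\sgh{}$, which is strictly below $v$) and why it fails in the non-emitting case, points the paper leaves to the reader.
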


\begin{proof}
We follow the execution of \ref{CheckFlush} step by step.

The first step exits if \fwset{} is not empty. $P$ and $Q$ make the same decision here by assumption. If they exit then neither has side effects and we are done.

The next block compares \sgh{} to $\cview{} + \vgap{}$. If \sgh{} is low it updates it and broadcasts ghost packets of height $\cview{} + \vgap{}$. Since \sgh{}, \cview{} and \vgap{} are all the same in $P$ and $Q$, this block results in the same side effects and preserves the postulated relations without changing any other variables.

The rest of the procedure is a repeat of the first part with \bwset{} and \sfh{} replacing \fwset{} and \sgh{}. If $P$ and $Q$ decided to broadcast ghost packets, then our assumptions will force $P$ to broadcast flush packets as well, as the lemma claims. Since the $P$ decision to broadcast flush packets follows $Q$'s decision to broadcast ghost packets, the value of \sfh{} in $P$ ends up being equal to the value final value of \sgh{} in $Q$, as claimed.
\end{proof}

\begin{cor}
\label{CheckFlushEquivCor}
\begin{enumerate}
\item Let $P$ be a process in $H$ and suppose that the state relationships in one of the inductive hypotheses or sub-hypotheses, excluding the Second Pre-Critical Hypothesis, hold with respect to states of $P$ in $H$ and $H^r$. Then executing the \ref{CheckFlush} procedure at $P$ in $H$ and $H^r$ preserves the same relations and causes the same side effects.
\item Suppose that the state relationships in the Second Pre-Critical Hypothesis hold at $D$, $G$ and $\minusG$ in $H^r$. Then executing the \ref{CheckFlush} procedure in all three processes preserves the same relations and causes side effects that are related as stated in Lemma \ref{CheckFlushEquivLemTwo}.
\end{enumerate}
\end{cor}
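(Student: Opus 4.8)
The plan is to read both parts straight off the two \ref{CheckFlush} equivalence lemmas that immediately precede this corollary (the unlabelled one and Lemma~\ref{CheckFlushEquivLemTwo}), checking in each case that their input relations are supplied by the relevant hypothesis. The decisive observation is that \ref{CheckFlush} only \emph{reads} \fwset{}, \bwset{}, \cview{} and \vgap{}, only \emph{writes} \sgh{} and \sfh{}, and otherwise merely queues ghost/flush packets to \cset{}; it touches no other state variable. So it suffices to compare, in $H$ and $H^r$, the six quantities ``\fwset{} empty?'', ``\bwset{} empty?'', \cview{}, \vgap{}, \sgh{}, \sfh{}, plus (for side effects) the fact that the multicasts target \cset{}, which by the convention of \ref{ReducedSideEffectsSSS} is handled separately via Lemma~\ref{ReducedContactLem}; here ``the same side effects'' means equal packet type, height and content.

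For part (1) I would go through the hypotheses one at a time --- First Pre-Critical, Interim Non-$G$, Interim $G$, the Donation Sub-Hypothesis, the two Co-Donation Sub-Hypotheses, and the Convergent Hypothesis --- and note that in every one of them \cview{}, \vgap{}, \sgh{} and \sfh{} are \emph{not} among the listed exceptions, hence are equal in $H$ and $H^r$. The wait-set exception is always of the shape $\wset{}^r \cong \wset{}$, composed in the First Pre-Critical case with the map $\partial(\cdot)$ that only augments instability sets with a $\plusminusG$ coordinate; since $\cong$ and $\partial$ preserve both the underlying set of records and its split into \bwset{} and \fwset{}, we get $\fwset{}^r=\emptyset$ iff $\fwset{}=\emptyset$ and $\bwset{}^r=\emptyset$ iff $\bwset{}=\emptyset$. (The three sub-hypotheses additionally carry inequalities on \gvec{} and \fvec{}, but \ref{CheckFlush} ignores those, and the $\cset{}^r=\lset{}$ discrepancy in the Interim $G$ and Donation cases is irrelevant for the reason just stated.) Thus the hypotheses of the first \ref{CheckFlush} lemma hold at $P$, so it preserves the relations and produces the same side effects, which is exactly part (1).

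For part (2) I would apply Lemma~\ref{CheckFlushEquivLemTwo} with $P=\plusminusG$ and $Q=D$: the Second Pre-Critical Hypothesis delivers precisely that lemma's inputs, namely $\bwset{}^r(\plusminusG)=\emptyset$; $\fwset{}^r(\plusminusG)=\emptyset$ iff $\fwset{}^r(D)=\emptyset$ (the derived forwarded wait set is empty iff its source is); $\cview{}$, $\vgap{}$, $\sgh{}$ agreeing at $\plusminusG$ and $D$ because they are not exceptions; and $\sfh{}^r(\plusminusG)=\sgh{}^r(D)$ by the hypothesis. Running \ref{CheckFlush} at $D$ and at each of $G$ and $\minusG$ (both related to $D$ in this same way), Lemma~\ref{CheckFlushEquivLemTwo} gives that a ghost multicast of height $v$ at $D$ is accompanied at each of $\plusminusG$ by a ghost multicast of height $v$ followed by a flush multicast of height $v$, and by no side effect at $\plusminusG$ otherwise --- the asserted relation. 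Preservation of the hypothesis is immediate: \ref{CheckFlush} leaves \self{}, \bwset{}, \fwset{}, \lque{}, \mpkout{} untouched, keeps the non-exception variables equal, and preserves $\sfh{}(P)=\sgh{}(Q)$ by the conclusion of Lemma~\ref{CheckFlushEquivLemTwo}. I do not expect a genuine obstacle here; the only points needing care are confirming that the $\partial(\cdot)$ rewriting (part 1, First Pre-Critical) and the forwarded-wait-set rewriting (part 2) preserve emptiness of \bwset{} and \fwset{} \emph{separately}, and that ``same side effects'' is read up to target set.
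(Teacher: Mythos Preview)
Your proposal is correct and is essentially the intended argument: the paper actually leaves this corollary with an empty proof, treating it as an immediate consequence of the two preceding \ref{CheckFlush} lemmas. Your detailed verification that each hypothesis supplies the inputs of the unlabelled lemma (for part~1) and of Lemma~\ref{CheckFlushEquivLemTwo} (for part~2) is exactly the check the reader is expected to perform, and your care about the $\partial(\cdot)$ and forwarded-wait-set rewritings preserving emptiness separately is apt.
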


\begin{proof}
\end{proof}

\begin{lem}
\label{FirstTTIEquivLem}
Let $X$ be a process in $H$ that is in the middle of the execution of a constellation in both $H$ and $H^r$ (if $X = G$ then the constellation must be post-critical). Suppose that the First Pre-Critical, Interim non-$G$, Interim $G$ or Convergent Hypothesis holds with respect to the state of $X$ in $H$ and $H^r$ (if $X = G$ then the pre-critical case does not apply). Also assume that $\ulview{} = \cview{}^r$ at $X$ in $H^r$. Then executing the \ref{TryToInstall} procedure in both histories at $X$ has the following results:
\begin{itemize}
\item If $\cview{} < \vcrit{}$ in $H$ at the end of the execution then the same inductive hypothesis (First Pre-Critical or Interim) still holds. 
\item If $\cview{} \ge \vcrit{}$ in $H$ at the end of the execution then the Convergent Hypothesis holds.
\item If $P = G$ and the procedure installs the critical view then $G$ launches the \ulp{} thread at exactly the same moment in $H$ and $H^r$.
\item In all cases the side effects in $H^r$ are equivalent to the side effects in $H$.
\end{itemize}
\end{lem}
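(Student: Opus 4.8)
The plan is to walk through the body of \ref{TryToInstall} one statement at a time, in $H$ and in $H^r$ simultaneously, carrying whichever inductive hypothesis is in force (First Pre-Critical, Interim non-$G$, Interim $G$, or Convergent) as an invariant that is re-established after each installed view. By Corollary \ref{CompEventCor} it suffices to track the state changes and the types/contents of the queued packets; target sets are the business of Lemma \ref{ReducedContactLem} and the equivalence of contents is governed by Definitions \ref{EquivDef} and \ref{EquivSEDef}. Throughout, the invariant $\ulview{} = \cview{}^r$ is maintained because each installed view triggers exactly one ApplyJoin or ApplyRemoval call (incrementing $\ulview{}$) alongside the single increment of $\cview{}$.

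First I would handle the flush-check loop (labeled step \ref{TTI:CheckFlush}). In the Interim and Convergent cases $\lset{}^r$ and $\fvec{}^r[\,]$ agree with their $H$ counterparts on every coordinate, so the loop returns in $H^r$ iff it returns in $H$. In the pre-critical case $\lset{}^r = \lset{} \cup \{\plusminusG\}$ and $\fvec{}^r[\plusminusG] = \gvec{}[D]$ by the First Pre-Critical Hypothesis; the only possible new obstruction is the $\plusminusG$ coordinates, but at the installing process $X$ we have $\fvec{}[D] \le \gvec{}[D]$ by Lemma \ref{OmnibusCBCASTLem}(\ref{OCL:gf}), so $\fvec{}^r[\plusminusG] = \gvec{}[D] \ge \fvec{}[D]$, and since $\fvec{}^r[D] = \fvec{}[D]$ the $D$-check in $H^r$ reproduces the $H$ outcome; hence both executions return, or both proceed, together.

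Next comes the while-loop that installs the pending views. Each iteration strips obsolete messages from $\rset{}$ and $\fque{}[\,]$ (identical in both by equivalence), increments $\cview{}$ and decrements $\vgap{}$ (they agree because they agreed at the start), pops a notification, processes it, resets $\vtime{}$, and calls \ref{ScanCall}. For a pre-critical view or a view beyond $\vcrit{}$ the popped notification is literally the same in both histories, the wrapped up-calls \ref{ApplyJoinP}--\ref{ApplyRemovalG} take the branch dictated by $\ulview{}$ so that \ulp{} observes the identical invocation, and \ref{ScanEquivLem} (whose preconditions the pre-critical hypotheses supply verbatim, since $\cview{} < \vcrit{}$) or outright state identity (once converged) shows the deliveries match. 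The decisive iteration is the one that installs view $\vcrit{}$: $H$ pops $\langle\operatorname{JOIN},G\rangle$ --- adding $G$ to $\mset{}$, calling ApplyJoin$(G)$, and at $X = G$ launching \ulp{} --- while $H^r$ pops $\langle\operatorname{REMOVE},\minusG\rangle$ --- removing $\minusG$ from $\mset{}$ and calling ApplyRemoval$(\minusG)$, which the wrapper redirects to ApplyJoin@G(G) because $\ulview{} = \vcrit{}$. Here $\mset{}^r = \mset{}\cup\{G,\minusG\}$ becomes $\mset{}\cup\{G\}$, matching the $H$-side $\mset{}\cup\{G\}$; the subsequent $\vtime{}$ reset produces identical all-zero vectors over the common membership; and because installing $\vcrit{}$ requires $\fvec{}[X] = \cview{} + \vgap{}$ with $\vgap{} > 0$ throughout the iteration, Lemma \ref{OmnibusCBCASTLem}(\ref{OCL:sendgfA},\ref{OCL:sendgfB}) forces $\wset{} = \emptyset$, so $\wset{}^r = \wset{}$; together with the fact that all pre-critical messages have by now been purged from $\rset{}$ and $\fque{}[\,]$, the equivalences $\rset{}^r \cong \rset{}$ and $\fque{}^r[\,] \cong \fque{}[\,]$ collapse to equalities. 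This yields the Convergent Hypothesis exactly when $\cview{}$ reaches $\vcrit{}$, after which every remaining iteration runs identically; the \ulp{} launch at $X = G$ coincides with $H$ by the delayed-start convention of \ref{ProofPlanSSS}.

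Finally, once $\vgap{} = 0$ the trailing loop drains \lque{} through \ref{BroadcastMessage}. If $\cview{} < \vcrit{}$ throughout, we are still pre-critical/interim, $\lque{}^r = \lque{}$, and in $H$ we have $D \in \lset{}$ and $\plusminusG \notin \lset{}$ (the latter trivially, $\plusminusG$ being absent from $H$), so Lemma \ref{BroadcastEquivLem} gives equivalent side effects and preserves the hypothesis for each popped message; otherwise the states already coincide and the broadcasts are identical. Collecting the four sub-claims then finishes the proof. I expect the real work to be concentrated in the $\vcrit{}$-installing iteration: verifying that every exception listed in the Interim hypotheses --- $\mset{}$, $\pque{}$, $\vtime{}$, $\rset{}$, $\fque{}$, $\wset{}$, and, at $X = G$, $\cset{}^r = \lset{}$ --- collapses to equality after precisely the edits \ref{TryToInstall} performs, leaning on $\wset{} = \emptyset$ and on the disappearance of all pre-critical messages by that point.
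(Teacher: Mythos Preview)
Your proposal is essentially correct and follows the same step-by-step walk through \ref{TryToInstall} that the paper carries out: agreement on the flush-check loop via $\fvec{}^r[\plusminusG]=\gvec{}[D]\ge\fvec{}[D]$, the per-iteration invariant through the install loop, the special handling of the $\vcrit{}$ iteration where $\wset{}=\emptyset$ (via Lemma \ref{OmnibusCBCASTLem}(\ref{OCL:sendgfA},\ref{OCL:sendgfB})) collapses the Interim hypotheses into the Convergent one, and the final \lque{} drain through Lemma \ref{BroadcastEquivLem}. Two small cleanups: the appeal to Corollary \ref{CompEventCor} in your first paragraph is misplaced (that corollary concerns event order in $H^r$, not side effects); and in the last paragraph, once $\vgap{}=0$ you cannot be in the interim period, so the case split there is really just First Pre-Critical versus Convergent---the paper makes this explicit. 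You should also spell out, as the paper does, why the $\cset{}^r=\lset{}$ discrepancy at $X=G$ disappears at the $\vcrit{}$ iteration: any $Y\in\lset{}\setminus\cset{}$ would have $\fvec{}[Y]<\cview{}+\vgap{}$ by Lemma \ref{OmnibusCBCASTLem}(\ref{OCL:gf}), blocking the install.
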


\begin{proof}
We follow the execution step by step:

The procedure starts with a loop that goes over the live processes, looking for an impediment to installation of the next view in the form of $\fvec[]$ values that are too low. In all the situations under consideration we have
\begin{align*}
\cview{}^r & = \cview{} \\
\vgap{}^r & = \vgap{} \\
\lset{}^r & \supset \lset{} \\
\fvec{}^r[Y] & = \fvec{}[Y] \quad \text{whenever } Y \in \lset{} 
\end{align*}
Therefore if there is an impediment to installation in $H$, the same impediment exists in $H^r$. The converse is trivially true in all but the pre-critical case, because that is the only case where $\lset{}^r \ne \lset{}$. However in that case we have, by the First Pre-Critical Hypothesis:
$$
\fvec{}^r[\plusminusG] = \gvec[D] \ge \fvec[D]
$$
where the inequality on the right follows from Lemma \ref{OmnibusCBCASTLem}(\ref{OCL:gf}). Therefore if $G$ or $\minusG$ form an impediment in $H^r$, then so does $D$ in $H$. Therefore both histories make the same decision on whether to proceed with installing all the pending views.

The next part of the procedure is a loop that installs all the pending views. It goes through  the following steps:
\begin{itemize}
\item The obsolete messages are removed from \rset{}. Because $\rset{}^r \cong \rset{}$, they have the same messages and each message has the same message view. Therefore the same messages are discarded in $H$ and $H^r$ and the hypothesis is preserved. In the case where the view being installed is the critical one, this step leaves \rset{} with no messages of pre-critical view. By the definition of equivalence (Definition \ref{EquivDef}) this means that $\rset{}^r = \rset{}$.
\item The obsolete messages are removed from $\fque{}[]$. This is very similar to the previous step. The only complication is that in the pre-critical case there are two queues, $\fque{}^r[\plusminusG]$, that do not exist in $H$. However these queues are empty according to the First Pre-Critical Hypothesis, and so this step preserves the inductive hypothesis as well. As in the previous case, if the view being installed is the critical one, then at this stage $\fque{}^r[X] = \fque{}[X]$ at every process $X$.
\item The next two steps increment \cview{} and decrement \vgap{}. This obviously preserves the inductive hypothesis. Also $\ulview{} = \cview{}^r - 1$. This will be fixed soon.
\item The next step pops the head of \pque{}. In most cases this trivially preserves the inductive hypothesis, as well as yielding an identical value for \lv{notification}. However in the case where the view being installed is the critical one, this step renders \pque{} identical in both histories while yielding different values for \lv{notification}. In $H$ we have a value that indicates that $G$ is joining while in $H^r$ we have a value that indicates that $\minusG$ is leaving.
\item The next step updates \mset{} according to the type of notification, and applies the change to the replicated application data. We have to consider the following cases:
\begin{enumerate}
\item The notification is for a non-critical joining of a process which is not the local process $X$. Since the first join in $H$ occurs at the critical view we have
$$
\cview{} \ge \vcrit{} > \magicview
$$
Both histories add the new process to \mset{}, preserving the inductive hypothesis. Then in $H$ the process $X$ invokes the up-call ApplyJoin@X, which modifies \uldata{} in an unspecified user-defined way. In $H^r$ the process $X$ invokes ApplyJoin$^r$@X. According to \ref{ReducedUAI}, ApplyJoin$^r$@X behaves the same way in this case, regardless of whether $X = G$ or not. It increments \ulview{} and restores the equality $\ulview{} = \cview{}^r$. Since \ulview{} is positive, the up-call invokes the original ApplyJoin@X. Therefore \uldata{} is modified in the same way in $H^r$ and the inductive hypothesis is preserved.
\item The notification is for a non-critical joining of the local process $X$. This case is just like the previous one but in addition process $X$ also launches the main \ulp{} thread, with the same identity parameter \fp{pid} in both histories. We may assume that the thread will start executing at the exact same time in both histories (because it is possible, not because it is probable).
\item The notification is for a non-critical removal of a process. In both histories $X$ removes the process from \mset{}, preserving the inductive hypothesis. Then in $H$ it invokes the up-call ApplyRemoval@X, which modifies \uldata{} in an unspecified user-defined way. In $H^r$ it invokes ApplyRemoval$^r$@X. If $X \ne G$ then according to \ref{ReducedUAI} the up-call simply increments \ulview{}, restoring the equality $\ulview = \cview{}^r$, and then calls ApplyRemoval@X, thus preserving the inductive hypothesis. If $X = G$ then ApplyRemoval$^r$@G increments \ulview{} (restoring the equality with $\cview{}^r$) and then invokes either ApplyRemoval@G or ApplyRemoval@D, according as $\ulview{} > \magicview$ or not. By Definition \ref{MagicDef} $\ulview{} > \magicview$ exactly when the constellation is post-critical, which must be the case here when $X = G$, therefore ApplyRemoval@G is invoked and \uldata{} is modified the same way in both histories.
 
\item The notification is for the critical view and $X \ne G$. In this case process $X$ in $H$ invokes the up-call ApplyJoin@X and in $H^r$ the up-call ApplyRemoval$^r$@X. By \ref{ReducedUAI} the ApplyRemoval$^r$@X call increments \ulview{}, thus restoring the equality $\ulview{} = \cview{}^r$. The up-call proceeds to invoke ApplyJoin@X. This modifies \uldata{} the same way as in $H$.
\item The notification is for the critical view and $X = G$. In this case process $G$ in $H$ invokes the up-call ApplyJoin@G and then launches the \ulp{} thread with $\fp{pid} = G$. In $H^r$ it invokes ApplyRemoval@G. By \ref{ReducedUAI} the ApplyRemoval$^r$@G call increments \ulview{}, thus restoring the equality $\ulview{} = \cview{}^r$. The up-call proceeds to invoke ApplyJoin@G with $\fp{pid} = G$ exactly as $G$ did in $H$. This preserves the equality of \uldata{}.

In $H^r$ the \ulp{} thread is not launched. But since $G$ is an original process in $H^r$ the thread was already launched, with the same paramter, at the beginning of time when the \ref{StartCluster} procedure was invoked. Since both invocations are asynchronous we may assume that by sheer luck the early invocation of the thread in $H^r$ is delayed so much that the thread starts execution at the exact same point in time in both histories.
\end{enumerate}
\item The vector time is reset. This means that the previous vector time is replaced with a vector of zeroes, one per process in \mset{}. This is easily seen to preserve the inductive hypothesis.

At this point we have to take stock of the case where the view installation was the critical one. This is the boundary between the Interim period, where the Interim Hypotheses are in force, and the Convergent period. Following the steps we took so far demonstrates that all the differences that existed in the state of the process in the two histories have now dissolved. The $\plusminusG$ difference in \mset{} has been bridged. As a result \vtime{} converged as well. \pque{} shed the one record that was different and became equal. \rset{} and $\fque{}[]$ have gone from being equivalent to being equal, as we have seen.

Since we managed to install the views we know that the self flush height is high namely
$$
\fvec{}[X] = \cview{} + \vgap{}
$$
From Lemma \ref{OmnibusCBCASTLem}(\ref{OCL:sendgfA}) it follows that $\sfh{} = \cview{} + \vgap{}$. Since we start out with $\vgap{} > 0$ (otherwise no views are installed) we know by Lemma \ref{OmnibusCBCASTLem}(\ref{OCL:sendgfB}) that \wset{} must be empty at this point in both histories and therefore equal as well. The only remaining possible difference is in \cset{}, in the case $X = G$ only. But this difference cannot exist here because Lemma \ref{OmnibusCBCASTLem}(\ref{OCL:gf}) implies that if $Y \in \lset{} \setminus \cset{}$ then $\fvec{}[Y] < \cview{} + \vgap{}$. This would have prevented the views from being installed. As a result the Convergent Hypothesis holds, and the computations have now converged for $X$.
\item the next step is an invocation of the \ref{ScanCall} procedure. By Lemma \ref{ScanEquivLem}, this step preserves the inductive hypothesis and creates no side effects
\end{itemize}

The last step involves broadcasting all the messages out of \lque{}.

At this point $\vgap{} = 0$ and therefore we cannot be in the interim period. Therefore we only need to consider the First Pre-Critical and the Convergent Hypotheses.
Under either hypothesis $\lque{}^r = \lque{}$ and by Lemma \ref{BroadcastEquivLem} each call to \ref{BroadcastMessage} preserves the hypothesis and generates equivalent side effects in both histories. From Lemma \ref{ReducedContactLem} it follows that the target set of each side effect is equal to $\cset{}^r$ in $H^r$.
\end{proof}

\begin{lem}
\label{SecondTTIEquivLem}
Suppose that $G$, $\minusG$ and $D$ are in the middle of the execution of a pre-critical constellation in $H^r$. Suppose that the Second Pre-Critical Hypothesis holds. Also suppose that $\ulview{} = \cview{}^r$ in all three processes. Then executing the \ref{TryToInstall} procedure in all three processes preserves the inductive hypothesis and produces no side effects in either $G$ or $\minusG$. 
\end{lem}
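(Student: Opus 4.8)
The plan is to trace the execution of \ref{TryToInstall} at $G$, $\minusG$ and $D$ in $H^r$ line by line, in the same spirit as the proof of Lemma \ref{FirstTTIEquivLem}, but now comparing $\plusminusG$ against $D$ \emph{within} $H^r$ under the Second Pre-Critical Hypothesis rather than comparing $H^r$ against $H$. The chief simplification relative to Lemma \ref{FirstTTIEquivLem} is that a pre-critical constellation cannot install the critical view: the critical notification $\ell_{\vcrit}$ has not yet been processed, so it is not in \pque{}, and since $\vcrit$ is the first join view in $H$, every view that could be installed here is a removal view. Hence the ``launch \ulp{} thread'' branch and the critical-view branch of the loop are never taken, and only the uniform $\langle \operatorname{REMOVE}, \fp{pid} \rangle$ path occurs.

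First I would treat the installation gate. The opening loop of \ref{TryToInstall} inspects, for each $\fp{pid}\in\lset{}$, whether $\fvec{}[\fp{pid}]\ge\cview{}+\vgap{}$; the Second Pre-Critical Hypothesis leaves \cview{}, \vgap{}, \lset{} and the whole vector \fvec{}[] identical at $\plusminusG$ and $D$ (only the scalar \sfh{} differs, with $\sfh{}^r(\plusminusG)=\sgh{}^r(D)$), so all three processes reach the same verdict and, on failure, return with no side effects and the hypothesis intact. If they proceed, the view-installation loop removes obsolete messages from the identical \rset{} and \fque{}[] (doing the same thing in all three), increments \cview{}, decrements \vgap{}, and pops the same $\langle \operatorname{REMOVE}, \fp{pid} \rangle$ record from the identical \pque{}, then deletes \fp{pid} from the identical \mset{} --- every step preserving the hypothesis.

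The one place I expect real bookkeeping is the up-call dispatch, which must keep \uldata{} (including \ulview{} and \ulcount{}) bit-for-bit identical at $\plusminusG$ and $D$. The removal triggers ApplyRemoval$^r$@G at $\plusminusG$ and ApplyRemoval$^r$@P at $D$ (see \ref{ReducedUAI}); I would use the assumed invariant $\ulview{}=\cview{}^r$ together with the First Pre-Critical Hypothesis at $D$ --- which forces $\cview{}^r(D)=\cview{}(D)\le\magicview$ at every pre-critical constellation, hence $\cview{}^r(\plusminusG)\le\magicview$ as well, so no view above $\magicview$ is ever installed here --- to conclude that after the increment $\ulview{}\le\magicview<\vcrit$. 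Thus ApplyRemoval$^r$@G falls into its first branch and calls ApplyRemoval@D, while ApplyRemoval$^r$@P (since $\ulview{}\ne\vcrit$) also calls ApplyRemoval@D with the same \fp{pid}. The analogous count against $\magicmessage$ shows the ApplyMessage$^r$ wrappers fired by the ensuing \ref{ScanCall} both resolve to ApplyMessage@D; since \cview{}, \mset{}, \rset{}, \vtime{} and \uldata{} are identical, \ref{ScanCall} then delivers exactly the same messages --- argued directly rather than via Lemma \ref{ScanEquivLem}, whose hypotheses presuppose a one-element gap in \mset{} that is absent in the intra-$H^r$ comparison. Resetting \vtime{} to zeros over the common \mset{} preserves identity.

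Finally, at the tail of \ref{TryToInstall}: if \vgap{} reaches $0$, $D$ drains \lque{} via \ref{BroadcastMessage} while $\plusminusG$ does nothing, since $\lque{}^r(\plusminusG)=\emptyset$ by the Second Pre-Critical Hypothesis (and by Lemma \ref{OmnibusCBCASTLem}(\ref{OCL:lque}) \lque{} is empty whenever $\vgap{}=0$). Each such broadcast at $D$ alters only $\mpkout{}.b$, \bwset{}, and a temporary $\vtime{}'$ --- all either exception-list variables or unchanged --- so the hypothesis survives with $\lque{}^r(\plusminusG)=\emptyset=\lque{}^r(D)$ afterwards. Since ApplyRemoval, ApplyMessage and \ref{ScanCall} emit no packet-queuing events, the \emph{only} potential side effect of \ref{TryToInstall} is the \lque{} drain, which is empty at $\plusminusG$; hence $G$ and $\minusG$ produce no side effects, the constellation remains pre-critical, and the Second Pre-Critical Hypothesis continues to hold, which is exactly the claim. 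The main obstacle, as indicated, is the careful tracking of the \ulp{}-wrapper counters; everything else is a routine replay of the case analysis already carried out for Lemma \ref{FirstTTIEquivLem}.
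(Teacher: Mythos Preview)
Your proposal is correct and follows essentially the same line-by-line trace as the paper's proof. One point worth noting: the paper simply invokes Lemma~\ref{ScanEquivLem} for the \ref{ScanCall} step, whereas you correctly observe that the lemma's hypotheses (in particular $\mset{}(P)=\mset{}(Q)\cup\{\plusminusG\}$ and ``same implementation of ApplyMessage'') are not literally satisfied in the intra-$H^r$ comparison between $\plusminusG$ and $D$, and you propose to argue the identical behavior of \ref{ScanCall} directly from the identical state and the fact that both ApplyMessage$^r$ wrappers resolve to ApplyMessage@D pre-critically. Your route is strictly more careful on this point; otherwise the two arguments coincide.
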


\begin{proof}
We follow the execution step by step:

The procedure starts with a loop that goes over the live processes, looking for an impediment to installation of the next view in the form of $\fvec[]$ values that are too low. In the situation under consideration we have the same values of \cview{}, \vgap{}, \lset{}  and $\fvec{}[]$. Therefore all three processes reach the same decision on installation of the pending views. Since \pque{} is also identical among the processes, the same views get installed.

The next part of the procedure is a loop that installs all the pending views. It goes through  the following steps:
\begin{itemize}
\item The obsolete messages are removed from \rset{} and $\fque{}[]$. Because both sets are identical in all three processes, the same messages are discarded in all three processes and the hypothesis is preserved.
\item The next two steps increment \cview{} and decrement \vgap{}. This obviously preserves the inductive hypothesis. It also results in $\ulview{} = \cview{}^r - 1$.
\item The next step pops the head of \pque{}. This trivially preserves the inductive hypothesis, as well as yielding an identical value for \lv{notification}.
\item The next step updates \mset{} according to the type of notification, and applies the change to the replicated application data. Since we are dealing with a pre-critical constellation, the view change must be a removal of a process, and by Definition \ref{MagicDef} $\cview{} \le \magicview$. Process $D$ executes ApplyRemoval$^r$@D while $\plusminusG$ execute ApplyRemoval$^r$@G. By \ref{ReducedUAI}, this results in all cases in incrementing \ulview{}, which restores the equality $\ulview{} = \cview{}^r$, and in invoking ApplyRemoval@D, which modifies \uldata{} in the same way in all three processes.
\item The vector time is reset. This means that the previous vector time is replaced with a vector of zeroes, one per process in \mset{}. This is easily seen to preserve the inductive hypothesis.
\item the next step is an invocation of the \ref{ScanCall} procedure. By Lemma \ref{ScanEquivLem}, this step preserves the inductive hypothesis and creates no side effects.
\end{itemize}

The last step involves broadcasting all the messages out of \lque{}. By the Second Pre-Critical Hypothesis \lque{} is empty in $\plusminusG$, so this step generates no side effects there, while generating a single original message broadcast per message in $D$. As far as the inductive hypothesis is concerned, for every message in \lque{} that is broadcast by $D$ the value of $\mpkout{}.b$ is incremented and a copy of the stamped message is attached to \bwset{} in $D$ together with an instability vector. In addition \lque{} is emptied out in $D$. These state changes do not violate the Second Pre-Critical Hypothesis, which only requires that \bwset{} and \lque{} be empty and $\mpkout{}.b$ be zero in $\plusminusG$.
\end{proof}

\subsection{Proof of the History Equivalence Theorem}

We start the proof at the beginning of time. At time zero, each member process is initialized by the \ref{StartCluster} procedure (see \ref{InitializationSSS}). Direct inspection shows that the state differences between $H$ processes and $H^r$ processes conform to the pre-critical inductive hypotheses. Notice also that after \ref{StartCluster} is executed in $H^r$
$$
\ulview{} = \cview{}^r = 0
$$
It is easy to check that the equality between \ulview{} and $\cview{}^r$ continues to hold as long as $H^r$ executes \cbcast{}. This is because the identity is passed from parent to child, and the only place where either value is changed is in the \ref{TryToInstall} procedure, where these two values are incremented in tandem.

Suppose that the inductive hypothesis holds at constellation label $L = \lblgzz{t}{s}$. Both $H$ and $H^r$ have a set of transactions or sub-transactions which share the constellation label $L$. But in $H$ these transactions are generated by the execution of \cbcast{} procedures as a reaction to triggers, whereas in $H^r$ the transactions are manufactured artificially. We have to show three things. First we must show that if $H^r$ executes the \cbcast{} protocol it will produce transactions that are identical to its observed ones. Then we have to show that if $H^r$ executes the \cbcast{} protocol, then once all the (sub-)transactions in the constellation conclude the inductive hypothesis continues to hold. As a final step we have to show that the triggers of the next constellation in $H^r$ arise naturally from the past behavior of $H^r$. This guarantees that $H^r$ continues to look like an execution of \cbcast{} and \ulp{} all the way up to the next constellation, and that the inductive hypothesis continues to hold until that point.

We divide our analysis at each stage into cases, depending on the type of constellation, and the type of side effects (See \ref{FirstSideEffectSSS} - \ref{LastSideEffectSSS}) we observe in $H$ within the constellation. In each case we try to deal with all of the inductive hypotheses together. In some cases only some of the inductive hypotheses are relevant. For example, join notifications do not occur pre-critically, by definition.

We also make a repeated use of arguments that show that certain variables that start out equivalent at the start of a constellation end up remaining equivalent at the end. In the Convergent period these variables are required to be identical and not just equivalent, so proving that they remain equivalent is not quite enough. But as long as the procedural parameters that are used by \cbcast{} to execute each transaction are identical in both histories, the state will remain identical at the end rather than just equivalent. We will point out the few cases where the procedural parameters are not equal, and otherwise gloss over this issue.

\subsubsection{Notification constellations}

\begin{itemize}
\item
\begin{description}
\item[\bf Type of Constellation:] A non-critical \gms{} removal notification of a process $R$.
\item[\bf $H$ Transactions:] One execution of the \ref{RemovalNotification} procedure at each surviving process.
\item[\bf Observed behavior in $H^r$:] Post-critically, the transactions occurring in $H^r$ are equivalent to the $H$ transactions occurring at the same processes. Pre-critically, equivalent transactions occur at the non-$G$ processes, and in addition there are two transactions at $\plusminusG$. These transactions have the same triggers as the other transactions in the constellation, and their side effects are related to the original transaction at $D$ according to the following cases (see Lemma \ref{ViewChangeSELem}):
\begin{itemize}
\item If the original $D$ transaction in $H$ queues a sequence of message multicasts then the $\plusminusG$ transactions in $H^r$ queue the equivalent message multicasts.  
\item If the original $D$ transaction in $H$ has no side effects then the $\plusminusG$ transactions in $H^r$ have no side effects.
\item If the original $D$ transaction in $H$ queues a ghost multicast (with or without a succeeding flush multicast) then the $\plusminusG$ transactions in $H^r$ queue a ghost multicast followed by a flush multicast, of the same height as the original ghost multicast.
\end{itemize} 

\item[\bf Execution of \cbcast{} in $H^r$:] We follow the execution step by step, dealing with all the inductive hypotheses at once. The first step increments $\vgap{}$ in all the executions. By the inductive hypothesis $\vgap{}$ is identical at all compared processes. After incrementation, $\vgap{}$ remains identical since all the involved processes execute the procedure. Therefore the first step preserves the inductive hypothesis. This step produces no side effects.

The next step adds a record to \pque{}. In the pre-critical and convergent periods $\pque{}^r = \pque{}$ and the added record is identical as well, so the First Pre-Critical Hypothesis and the Convergent Hypothesis are preserved. In the interim period $\pque{}^r$ is not identical to $\pque{}$ but the difference is preserved after the addition of the new record, so the Interim Hypotheses are preserved as well. Finally, pre-critically $\pque{}^r$ is identical at $D$, $G$ and $\minusG$ so the Second Pre-Critical Hypothesis is also preserved.

The next two steps remove $R$ from \lset{} and \cset{}. Pre-critically both $\lset{}^r$ and $\cset{}^r$ differ from $\lset{}$ and $\cset{}$ by the addition of $\plusminusG$, and both sets are identical at $D$, $G$ and $\minusG$ in $H^r$. Since $R \notin \{ D, G, \minusG \}$ in the pre-critical case, the First and Second Pre-Critical Hypotheses are preserved. Post-critically $\lset{}^r = \lset{}$ and this property is preserved. Away from $G$ the same is true for $\cset{}^r$. Therefore the Interim non-$G$ and Convergent Hypotheses are preserved. At $G$, $\cset{}^r = \lset{}$, so the Interim $G$ Hypothesis is preserved as well.

The next step is a loop that stabilizes messages in $\wset{}$ with respect to $R$. Post-Critically $\wset{}^r \cong \wset{}$ in all cases. The loop preserves this relationship, which guarantees the preservation of all the post-critical inductive hypotheses.

Pre-critically, $\bwset{}^r$ is empty at $\plusminusG$ while $\fwset{}^r$ is identical at $D$, $G$ and $\minusG$ with the exception of a slightly different $\fp{index}$ at each record. The loop preserves these relationships, which guarantees the preservation of the Second Pre-Critical Hypothesis.
 
Pre-critically at each non-$G$ process $\wset{}^r$ is very close to being equivalent to $\wset{}$ with the only differences being larger instability vectors. We know that in the pre-critical case $R \notin \{ D, G, \minusG \}$ and therefore removing $R$ from the instability sets in $\wset{}^r$ and $\wset{}$ does not disturb their hypothesized relationship. Finally this fact also guarantees that each instability set will empty out in $\wset{}$ if and only if it empties in $\wset{}^r$ so the loop will always make the same decisions on record removal from $\wset{}$ in both histories. Therefore the First Pre-Critical Hypothesis is preserves as well.

The next step discards $\mpkin{}[R]$. This preserves all the hypotheses.

The next step forwards all the messages out of the forwarding queue $\fque{}[R]$. In all periods and at all participating processes $\fque{}^r[R] \cong \fque{}[R]$, while pre-critically $\fque{}^r[R]$ is identical at $D$, $G$ and $\minusG$. Therefore the loop proceeds over equivalent or identical messages (depending on which hypothesis we are checking), executing the following steps at each message:
\begin{itemize}
\item Popping the head of $\fque{}[R]$, yielding a message \fp{msg} - this preserves all the hypotheses. $\fp{msg}^r \cong \fp{msg}$ at all processes in all periods. Pre-critically at $D$, $G$, and $\minusG$ the value of $\fp{msg}^r$ is the same.
\item Incrementing $\mpkout{}.f$ - this preserves all the hypotheses.
\item Creating $\se{index} = \mpkout{}$. This yields $\se{index}^r = \se{index}$ for all processes in all periods. Pre-critically at $D$, $G$ and $\minusG$ this yields:
\begin{align*}
\se{index}^r.f(\plusminusG) & = \se{index}^r.f(D) \\
\se{index}^r.b(\plusminusG) & = 0 
\end{align*}
\item Creating an instability vector $\se{iset} = \mpkin{}[]$. This yields $\se{iset}^r = \se{iset}$ for all processes in all the post-critical periods. Pre-critically it yields an identical value of $\se{iset}^r$ at $D$, $G$ and $\minusG$. Pre-critically this also yields the following relation at each non-$G$ process:
\begin{align*}
\se{iset}^r[X].f & =
\begin{cases}
\se{iset}[X].f & X \ne \plusminusG  \\
\se{iset}[D].f & X = \plusminusG 
\end{cases}   \\
\se{iset}^r[X].b & =
\begin{cases}
\se{iset}[X].b & X \ne \plusminusG  \\
0 & X = \plusminusG 
\end{cases}
\end{align*}
\item Queuing a message multicast containing the message $\fp{msg}$ and targeted at \cset{}. Since $\fp{msg}^r \cong \fp{msg}$ this side effect matches the observed side effect in $H^r$. Since the inductive hypothesis holds at this point, Lemma \ref{ReducedContactLem} guarantees that the target set matches the observed target set.
\item Adding a record $\langle \fp{msg}, \se{index}, \se{iset}[] \rangle$ to \fwset{}. Post critically $\fwset{}^r = \fwset{}$ at every process and the new record is equivalent as well, which preserves the relevant hypothesis.

Pre-critically at $D$, $G$ and $\minusG$ in $H^r$, $\fwset{}^r$ is almost identical, except for a zeroed out $.b$ coordinate at each index at $\plusminusG$. The new record $\langle \fp{msg}^r, \se{index}^r, \se{iset}^r[] \rangle$ exhibits exactly this behavior at the three processes so the Second Pre-Critical Hypothesis is preserved.

Pre-critically at any non-$G$ process $\wset{}^r$ is nearly equivalent to $\wset{}$, with the addition of $\plusminusG$ coordinates to $\se{iset}^r$ wherever a $D$ coordinate exists in $\se{iset}$. The value of $\se{iset}^r$ in the new record in $H^r$ exactly matches the required difference from the value of $\se{iset}$ in the new record in $H$, while $\fp{msg}^r \cong \fp{msg}$ and $\se{index} = \se{index}$. Therefore the First Pre-Critical Hypothesis is preserved.
\end{itemize}

In the next few steps we discard $\fque{}[R]$, $\gvec{}[R]$ and $\fvec{}[R]$, actions which preserve all the inductive hypotheses. The one thing to notice is that this follows from the fact that pre-critically $R \notin \{ D, G, \minusG \}$.

In the last step we call \ref{CheckFlush} which by Corollary \ref{CheckFlushEquivCor} preserves the inductive hypotheses. Moreover the same corollary guarantees that the side effects produced by \ref{CheckFlush} are equal, and therefore equivalent, to the observed side effects. Lemma \ref{ReducedContactLem} guarantees that the target set produced by the execution of \ref{CheckFlush} for each side effect, namely $\cset{}^r$, is equal to the observed target set.
\end{description}

\item
\begin{description}
\item[\bf Type of Constellation:] A non-critical \gms{} join notification of a process $J$, with parent process $K$. By definition this must be a post-critical constellation with $J \ne G$.
\item[\bf $H$ Transactions:] One execution of the \ref{JoinNotification} procedure at each existing process. One execution of the \ref{Run} procedure at $J$. $J$ and $K$ have identical states at the start of the transaction.
\item[\bf Observed behavior in $H^r$:] The transactions at all the processes, including $J$, are equivalent to the $H$ transactions at the same processes.
\item[\bf Execution of \cbcast{} in $H^r$:] $J$ and $K$ start with the same state, and the same calls are executed in $H^r$ as in $H$, at the same processes. Since there is no pre-critical case here, we only have to verify the preservation of the Interim non-$G$, Interim $G$ and Convergent Hypotheses. We follow the execution step by step, starting with \ref{JoinNotification}.

The first few steps increment $\vgap{}$, add a record to \pque{}, add $J$ to \lset{} and \cset{} and create a new and empty entry in \fque{} For the process $J$. These steps can be easily seen to preserve all three Hypotheses where they apply.

The next step goes over \wset{}, and creating an instability coordinate for $J$ whenever such a coordinate exists for the parent process $K$. Only the forwarding part of the coordinate is copied. The broadcasting part is zeroed. Post-critically, $\wset{}^r \cong \wset{}$ so the loop proceeds identically in $H$ and $H^r$ and preserves the applicable inductive hypotheses.

The next three steps create $J$ coordinates in the $\gvec{}[]$, $\fvec{}[]$ and $\mpkin{}[]$ vectors. These steps can be easily seen to preserve the applicable hypotheses.

The next two steps create and queue a donation packet to $J$. Since all the ingredients of the donation vector \se{donation} are equivalent, the resulting side effect in $H^r$ is equivalent to the corresponding side effect in $H$, as observed in $H^r$. The target set of the packet is equal to $\{ J \}$ in both histories.

In the last step we call \ref{CheckFlush} which by Corollary \ref{CheckFlushEquivCor} preserves the inductive hypotheses. Moreover the same corollary guarantees that the side effects produced by \ref{CheckFlush} are equal, and therefore equivalent, to the observed side effects. Lemma \ref{ReducedContactLem} guarantees that the target set produced by the execution of \ref{CheckFlush} for each side effect, namely $\cset{}^r$, is equal to the observed target set.

We now look at the execution steps in \ref{Run}. These steps are taken by $J$ in both $H$ and $H^r$. Recall from \ref{AlgOutlineSS} that a new process starts out with a state identical to its parent $K$. Therefore, prior to the execution of \ref{Run} the states of $J$ in $H$ and $H^r$ conform to the inductive hypothesis for the post-critical state of $K$, which can vary according as $K = G$ or $K \ne G$. However, if one looks at these two cases in the inductive hypothesis, one sees that they claim the same things, except that in the case $K \ne G$, $\cset{}^r = \cset{}$ whereas in the case $K = G$, $\cset{}^r = \lset{}$. This will make no difference in the end because the \ref{Run} procedure recomputes \cset{} in a way that renders its value identical in both $H$ and $H^r$ as is expected since $J \ne G$.

The first three steps in \ref{Run} increment \vgap{} and update \pque{} and \lset{}. These steps are easily seen to preserve either the Interim Non-$G$ Hypothesis or the Convergent Hypothesis, as the case may be. The Interim $G$ Hypothesis is mostly preserved except that now $\cset{}^r \ne \lset{}$. This violation is corrected in the next step.

The next step resets \cset{} to include the process identifier of $J$ only. This step forces \cset{} to be identical in $H$ and $H^r$. This step preserves the Interim Non-$G$ and the Convergent Hypotheses. If the Interim $G$ Hypothesis applied at the start of the constellation, now the non-$G$ Hypothesis would apply, which is the appropriate hypothesis for $J$.

All the following steps except the last one rather trivially preserve the appropriate Hypothesis (either the Non-$G$ Hypothesis or the Convergent Hypothesis).

In the last step we call \ref{CheckFlush} which by Corollary \ref{CheckFlushEquivCor} preserves the inductive hypotheses. Moreover the same corollary guarantees that the side effects produced by \ref{CheckFlush} are equal, and therefore equivalent, to the observed side effects. Lemma \ref{ReducedContactLem} guarantees that the target set produced by the execution of \ref{CheckFlush} for each side effect, namely $\cset{}^r$, is equal to the observed target set.
\end{description}

\item
\begin{description}
\item[\bf Type of Constellation:] The constellation is the critical \gms{} notification.
\item[\bf $H$ Transactions:] One execution of the \ref{JoinNotification} procedure at each existing process. One execution of the \ref{Run} procedure at $G$. $G$ and $D$ have identical states at the start of the transaction.
\item[\bf Observed behavior in $H^r$:] The transactions at all the processes, including $G$, are equivalent to their counterparts in $H$ with the exceptions that in $H^r$ the trigger notification is a removal notification rather than a join notification, and the donation packet queuing event at non-$G$ processes in $H$ is missing in $H^r$.
\item[\bf Execution of \cbcast{} in $H^r$:] The surviving processes, including $G$, execute the \ref{RemovalNotification} procedure.

Not surprisingly, this case is more subtle than the other cases. First, the processes execute a different call in $H$ and $H^r$. Second, the critical notification separates the pre-critical period from the interim period. Therefore, when we argue that the constellation preserves the inductive hypothesis, we are really saying that if the First and Second Pre-Critical Hypotheses hold before the execution of the respective calls, then the Interim $G$ and Interim non-$G$ Hypotheses hold after the execution.

There is one last complication. Every process that participates in the critical constellation executes exactly one of three procedures. In $H^r$, every process executes the \ref{RemovalNotification} procedure. In $H$, every process except $G$ executes the \ref{JoinNotification} procedure while $G$ executes the \ref{Run} procedure.

The fact that \ref{RemovalNotification} replaces \ref{JoinNotification} at non-$G$ processes explains the observed absence of a queuing of donation packets at those processes in $H^r$.

Each of these calls ends with a call to \ref{CheckFlush}, which is known to preserve all the inductive hypotheses (see Corollary \ref{CheckFlushEquivCor}). Therefore in the following analysis we are going to ignore the call to \ref{CheckFlush}. We will show that the inductive hypothesis is preserved if each process pauses just before executing this call. Then Corollary \ref{CheckFlushEquivCor} together with Lemma \ref{ReducedContactLem} will complete the argument.

Due to the fact that different code is executed in $H$ and $H^r$ we cannot use our usual method of following the parallel execution step by step. Instead we prove this case one state variable at a time. For each variable we show that if the pre-critical hypotheses hold at the start of the execution, then the interim hypotheses ($G$ and non-$G$) hold for that variable at the point where \ref{CheckFlush} is invoked.

\begin{description}
\item[\cview{}] \hfill \\
Non-$G$ case: Initially $\cview{}^r = \cview{}$ according to the First Pre-Critical Hypothesis. Both the \ref{JoinNotification} and the \ref{RemovalNotification} procedures leave it unchanged, so it remains equal at the end of the constellation, as required by the Interim non-$G$ Hypothesis.

$G$ case: Initially
$$
\cview{}^r(G) = \cview{}^r(D)
$$
according to the Second Pre-Critical Hypothesis. At the same time
$$
\cview{}^r(D) = \cview{}(D)
$$
according to the First Pre-Critical Hypothesis, and
$$
\cview{}(D) = \cview{}(G)
$$
because $G$ starts life with a state that is identical to the state of its parent (see \ref{AlgOutlineSS}). Therefore
$$
\cview{}^r(G) = \cview{}(G)
$$
at the start of the constellation. Since the current view is not changed by any of the three procedures, it remains equal at the end of the constellation, as required by the Interim $G$ Hypothesis.

Since this argument will come up repeatedly, we will refer to it as the {\em sameness argument}.

\item[\vgap{}] \hfill \\
Non-$G$ case: Initially $\vgap{}^r = \vgap{}$. Both the \ref{JoinNotification} and the \ref{RemovalNotification} procedures increment the view gap, so it remains equal as required.

$G$ case: Initially $\vgap{}^r(G) = \vgap{}(G)$ by the sameness argument. Both the \ref{Run} and the \ref{RemovalNotification} procedures increment the view gap, so it remains equal as required.

\item[\mset{}] \hfill \\
Non-$G$ case: Initially $\mset{}^r = \mset{} \,\bigcup\, \{ \plusminusG \}$. Neither the \ref{JoinNotification} procedure nor the \ref{RemovalNotification} procedure change this variable, so the same relation continues to hold at the end of the constellation, as required.

$G$ case: Initially $\mset{}^r = \mset{} \,\bigcup\, \{ \plusminusG \}$ as can be easily seen by a slight modification of the sameness argument. Neither the \ref{Run} procedure nor the \ref{RemovalNotification} procedure change this variable, so the same relation continues to hold at the end of the constellation, as required.

\item[\pque{}] \hfill \\
Non-$G$ case: Initially $\pque{}^r = \pque{}$. The \ref{JoinNotification} procedure appends a $\langle \operatorname{JOIN}, G \rangle$ record to \pque{}, while the \ref{RemovalNotification} procedure adds a $\langle \operatorname{REMOVE}, \minusG \rangle$ record, which is exactly what is required by the Interim non-$G$ Hypothesis.

$G$ case: Initially $\pque{}^r = \pque{}$ by the sameness argument. The \ref{Run} procedure appends a $\langle \operatorname{JOIN}, G \rangle$ record to \pque{}, while the \ref{RemovalNotification} procedure adds a $\langle \operatorname{REMOVE}, \minusG \rangle$ record to $\pque{}^r$, which is exactly what is required by the Interim $G$ Hypothesis.

\item[\lset{}] \hfill \\
Non-$G$ case: Initially $\lset{}^r = \lset{} \,\bigcup\, \{ \plusminusG \}$. The \ref{JoinNotification} procedure adds $G$ to \lset{}, while the \ref{RemovalNotification} procedure removes $\minusG$ from $\lset{}^r$, so $\lset{}^r$ becomes equal to $\lset{}$ as required by the Interim Non-$G$ Hypothesis.

$G$ case: Initially $\lset{}^r = \lset{} \,\bigcup\, \{ \plusminusG \}$ as can be easily seen by a slight modification of the sameness argument. The \ref{Run} procedure adds $G$ to \lset{} while the \ref{RemovalNotification} procedure removes $\minusG$ from $\lset{}^r$, $\lset{}^r$ becomes equal to $\lset{}$ as required.

\item[\cset{}] \hfill \\
Non-$G$ case: Initially $\cset{}^r = \cset{} \,\bigcup\, \{ \plusminusG \}$. The \ref{JoinNotification} procedure adds $G$ to \cset{}, while the \ref{RemovalNotification} procedure removes $\minusG$ from $\cset{}^r$, so $\cset{}^r$ becomes equal to $\cset{}$ as required.

$G$ case: It follows from Lemma \ref{OmnibusCBCASTLem}(\ref{OCL:lset}) that initially $\cset{}^r = \lset{}^r$. The \ref{RemovalNotification} procedure removes $\minusG$ from both $\lset{}^r$ and $\cset{}^r$, keeping them equal.

We already know that at the end of the constellation $\lset{}^r = \lset{}$ and therefore we end up with $\cset{}^r = \lset{}$, as required by the Interim $G$ Hypothesis.

\item[$\vtime{}{[]}$] \hfill \\
Non-$G$ case: Initially $\vtime{}^r[] = \vtime{}[] \,\bigcup\, \left\{ [G] = 0, [\minusG] = 0 \right\}$. Neither the \ref{JoinNotification} procedure nor the \ref{RemovalNotification} procedure make changes to $\vtime{}[]$, so the relation remains the same, as required.

$G$ case: Initially $\vtime{}^r[] = \vtime{}[] \,\bigcup\, \left\{ [G] = 0, [\minusG] = 0 \right\}$ as can be easily seen by a slight modification of the sameness argument. Neither the \ref{Run} procedure nor the \ref{RemovalNotification} procedure make changes to $\vtime{}[]$, so the relation remains the same, as required.
\item[\uldata{}] \hfill \\
Non-$G$ case: Initially $\uldata{}^r = \uldata{}$. Neither the \ref{JoinNotification} procedure nor the \ref{RemovalNotification} procedure invoke any of the user up-calls (see \ref{UserApplication}), so the relation remains the same, as required.

$G$ case: Initially $\uldata{}^r = \uldata{}$ by the sameness argument. Neither the \ref{Run} procedure nor the \ref{RemovalNotification} procedure invoke any of the user up-calls, so the relation remains the same, as required.
\item[\rset{}] \hfill \\
Non-$G$ case: Initially $\rset{}^r \cong \rset{}$. Neither the \ref{JoinNotification} procedure nor the \ref{RemovalNotification} procedure make changes to $\rset{}$, so the relation remains the same, as required.

$G$ case: Initially $\rset{}^r \cong \rset{}$ by the sameness argument. Neither the \ref{Run} procedure nor the \ref{RemovalNotification} procedure make changes to $\rset{}$, so the relation remains the same, as required.

\item[$\fque{}{[]}$] \hfill \\
Non-$G$ case: Initially $\fque{}^r[] \cong \fque{}[] \,\bigcup\, \left\{ [G] = \emptyset, [\minusG] = \emptyset \right\}$. In $H^r$ the \ref{RemovalNotification} procedure forwards all the messages in $\fque{}^r[\minusG]$ and removes it. Since this queue is empty, the overall effect is simply the removal of the empty queue without any side effects. In $H$ the \ref{JoinNotification} procedure adds an empty queue for $G$. Therefore $\fque{}^r$ becomes equivalent to $\fque{}$ as required.

$G$ case: Initially $\fque{}^r[] \cong \fque{}[] \,\bigcup\, \left\{ [G] = \emptyset, [\minusG] = \emptyset \right\}$ by the sameness argument. In $H^r$ the \ref{RemovalNotification} removes the empty queue for $\minusG$ without creating any side effects. In $H$ the \ref{Run} procedure adds an empty queue for $G$. Therefore in this case as well the Interim $G$ Hypothesis holds and no side effects are created.

\item[\wset{}, which includes \bwset{} and \fwset{}] \hfill \\
Non-$G$ case: By the First Pre-Critical Hypothesis, the wait set contains equivalent messages in $H^r$ and $H$, with a slight difference in their instability sets. Namely that whenever the instability in $H$ has a $D$ coordinate, the instability in $H^r$ has $\plusminusG$ coordinates in addition. These coordinates have the same $f$ field as the $D$ coordinate, and a zero $b$ field. In $H$, the \ref{JoinNotification} procedure adds the exact same $G$ instability wherever a $D$ instability exists, while in $H^r$ the \ref{RemovalNotification} procedure removes the $\minusG$ instability. If a message becomes stable as a result the \ref{RemovalNotification} procedure removes the message from \wset{}. This however cannot happen because the First Pre-Critical Hypothesis guarantees that the $\minusG$ instability is accompanied by $G$ and $D$ instabilities. Therefore $\wset{}^r$ becomes equivalent to $\wset{}$ as required by the Interim Non-$G$ Hypothesis.

$G$ case: In this case we have to analyze the two parts of \wset{} separately.

We start with \bwset{}. By the Second Pre-Critical Hypothesis $\bwset{}^r$ starts out empty. The \ref{RemovalNotification} procedure does not change this fact, and therefore $\bwset{}^r$ ends up empty. On the other hand in $H$ process $G$ executes the \ref{Run} procedure which empties \bwset{}. So both $\bwset{}^r$ and $\bwset{}$ end up empty and therefore equivalent, as required by the Interim $G$ Hypothesis.

The case of \fwset{} is more complicated. First we verify that $\fwset{}^r$ and $\fwset{}$ contain equivalent messages. Indeed it follows from the Second Pre-Critical Hypothesis that at the start of the transaction $\fwset{}^r(G)$ and $\fwset{}^r(D)$ contain the same messages. By the First Pre-Critical Hypothesis $\fwset{}^r(D)$ and $\fwset{}(D)$ contain equivalent messages. The \ref{RemovalNotification} procedure does not remove any messages from $\fwset{}^r(G)$ because any message in $\fwset{}^r(G)$ that has a $\minusG$ instability also has $D$ and $G$ instabilities. On the other hand $\fwset{}(G)$ is initially equal to $\fwset{}(D)$ and therefore has messages equivalent to the ones in $\fwset{}^r(G)$. The \ref{Run} procedure does not remove any messages from $\fwset{}(G)$. Therefore at the end of the constellation $\fwset{}^r(G)$ and $\fwset{}(G)$ contain equivalent messages.

Let \fp{msg} be any message in $\fwset{}(D)$ at the critical moment. Suppose that its record there is $\langle \fp{msg}, \fp{index}, \fp{iset} \rangle$. By the \AxProcIV{} \fp{iset} does not contain any $D$ instability and therefore it follows from the First Pre-Critical Hypothesis that the record of \fp{msg} in $\fwset{}^r(D)$ is equivalent. By the Second Pre-Critical Hypothesis, the record of \fp{msg} at $\fwset{}^r(G)$ is equivalent to $\langle \fp{msg}, \{  f = \fp{index}.f, b = 0 \}, \fp{iset} \rangle$. The \ref{RemovalNotification} procedure does not change this record since it does not contain $\minusG$ instability and so this remains the record at $\fwset{}^r(G)$ at the end of the constellation. The record of \fp{msg} at $\fwset{}(G)$ starts out equal to the record at $\fwset{}(D)$, and then the \ref{Run} procedure zeroes out the $\fp{index}.b$ component of the record, leaving it equivalent to the final value of the record at $\fwset{}^r(G)$, as required by the Interim $G$ Hypothesis.
\item[\lque{}] \hfill \\
non-$G$ case: Initially $\lque{}^r = \lque{}$. Neither the \ref{JoinNotification} procedure nor the \ref{RemovalNotification} procedure make changes to $\lque{}$, so it remains the same, as required.

$G$ case: Initially in $H$ process $G$ inherits its launch queue from $D$, while in $H^r$ process $G$ has an empty launch queue according to the Second Pre-Critical Hypothesis. The \ref{RemovalNotification} procedure does not change the value of $\lque{}^r$ while the \ref{Run} procedure empties $\lque{}$. So $\lque{}^r$ becomes identical to $\lque{}$ as required.

\item[\sgh{} and \sfh{}] \hfill \\
Non-$G$ case: Initially $\sgh{}^r = \sgh{}$ and $\sfh{}^r = \sfh{}$. Neither the \ref{JoinNotification} nor the \ref{RemovalNotification} procedures make changes to either value (remember that we ignore the \ref{CheckFlush} invocation at the end), so both values remain the same, as required.

$G$ case: Initially $G$ inherits its state from $D$, so $\sgh{}(G) = \sgh{}(D)$ and $\sfh{}(G) = \sfh{}(D)$. The Second Pre-Critical Hypothesis implies that $\sgh{}^r(G) = \sgh{}^r(D)$ and $\sfh{}^r(G) = \sgh{}^r(D)$. Therefore initially $\sgh{}^r(G) = \sgh{}(G)$. The \ref{RemovalNotification} procedure does not make changes to either variable in $H^r$ while the \ref{Run} procedure resets the value of \sfh{} to be equal to \sgh{}, therefore at the end of the constellation:
\begin{multline*}
\sfh{}(G) = \sgh{}(G) = \sgh{}^r(G) = \\
{} = \sgh{}^r(D) = \sfh{}^r(G)
\end{multline*}
as required by the Interim $G$ Hypothesis.

\item[\mpkout{}] \hfill \\
Non-$G$ case: Initially $\mpkout{}^r = \mpkout{}$. The \ref{JoinNotification} procedure does not change this value in $H$, while the \ref{RemovalNotification} procedure increments $\mpkout{}^r.f$ for each message in $\fque{}[\minusG]$. Since that queue is empty, the value of $\mpkout{}^r$ does not change either, and so it remains the same at the end of the constellation, as required by the Interim Non-$G$ Hypothesis.

$G$ case: Initially $\mpkout{}^r.f = \mpkout{}.f$ by the sameness argument and it remains so at the end of the constellation because neither the \ref{Run} procedure nor the \ref{RemovalNotification} procedure change that value (in the latter case because $\fque{}^r[\minusG]$ is empty). Initially $\mpkout{}.b(G) = \mpkout{}.b(D)$ while the Second Pre-Critical Hypothesis implies that initially $\mpkout{}^r.b(G) = 0$. The \ref{RemovalNotification} procedure does not change that value in $H^r$ while the \ref{Run} procedure zeroes it in $H$. Therefore $\mpkout{}^r.b(G)$ becomes equal to $\mpkout{}.b(G)$.

As a result $\mpkout{}^r = \mpkout{}$ as required by the Interim $G$ Hypothesis.

\item[$\mpkin{}{[]}$] \hfill \\
Non-$G$ case: We start with the coordinates that exist in $H$. By the First Pre-Critical Hypothesis, these coordinates have identical values in $H$ and $H^r$. These coordinates are not touched by either the \ref{RemovalNotification} procedure or the \ref{JoinNotification} procedure, and so they remain identical at the end of the constellation as required by the Interim non-$G$ Hypothesis.

The $\mpkin{}^r[\minusG]$ coordinate is removed by the \ref{RemovalNotification} procedure, leaving a $G$ coordinate whose value is
$$
\mpkin{}^r[G] = \{ f = \mpkin{}^r[D].f ; b = 0 \}
$$
according to the First Pre-Critical Hypothesis. The \ref{JoinNotification} procedure creates a new $G$ coordinate whose value is
$$
\mpkin{}[G] = \{ f = \mpkin{}^r[D].f ; b = 0 \}
$$
Since we have already seen that $\mpkin{}^r[D].f = \mpkin{}[D].f$,  the requirement of the Interim non-$G$ Hypothesis is met.

$G$ case:
Initially $\mpkin{}^r[](G) = \mpkin{}^r[](D)$ according to the Second Pre-Critical Hypothesis. The \ref{RemovalNotification} procedure removes the $\minusG$ coordinate in both at $G$ and $D$ and therefore at the end of the constellation the equality
$$
\mpkin{}^r[](G) = \mpkin{}^r[](D)
$$
remains valid.

On the other hand the initial relation in $H$ is $\mpkin{}[](G) = \mpkin{}[](D)$ because $G$ is created with the same state as $D$.
The \ref{JoinNotification} procedure creates a new $G$ coordinate at $D$ with value
$$
\mpkin{}[G](D) = \{ f = \mpkin{}[D].f(D) ; b = 0 \}
$$
The \ref{Run} procedure creates a new $G$ coordinate at $G$ with value
$$
\mpkin{}[G](G) = \{ f = \mpkout{}.f(G) ; b = 0 \}
$$
The \AxProcIV{} and Lemma \ref{OrigPCountLem} guarantee that at the start of the constellation $\mpkin{}[D](D) = \mpkout{}(D)$. Since $G$ starts out with the same state as $D$ we also have $\mpkout{}.f(G) = \mpkout{}.f(D)$. Therefore $\mpkin{}[G](G) = \mpkin{}[G](D)$ and the vector as a whole remains identical at $G$ and $D$.

From the Non-$G$ case we know that $\mpkin{}^r[](D) = \mpkin{}[](D)$ at the end of the constellation. It follows that $\mpkin{}^r[](G) = \mpkin{}[](G)$ at the end of the constellation as required by the Interim $G$ Hypothesis.

\item[$\gvec{}{[]}$ and $\fvec{}{[]}$] \hfill \\
Non-$G$ case: Initially, the First Pre-Critical Hypothesis implies that
\begin{align*}
\gvec{}^r[] & = \gvec{}[] \,\bigcup\, \left\{ [G] = \gvec{}[D], [\minusG] = \gvec{}[D] \right\}   \\
\fvec{}^r[] & = \fvec{}[] \,\bigcup\, \left\{ [G] = \gvec{}[D], [\minusG] = \gvec{}[D] \right\}
\end{align*}
The \ref{RemovalNotification} procedure removes the $\minusG$ coordinate from both vectors in $H^r$. The \ref{JoinNotification} procedure creates a $G$ coordinate in both vectors in $H$ and sets both of them to the value of $\gvec{}[D]$. As a result $\gvec{}^r[]$ and $\fvec{}^r[]$ become equal to $\gvec{}[]$ and $\fvec{}[]$ as required by the Interim non-$G$ Hypothesis.

$G$ case: Initially $\gvec{}[](G) = \gvec{}[](D)$ and $\fvec{}[](G) = \fvec{}[](D)$. By the Second Pre-Critical Hypothesis $\gvec{}^r[](G) = \gvec{}^r[](D)$ and $\fvec{}^r[](G) = \fvec{}^r[](D)$. Using the First Inductive Hypothesis relation between the $H$ and $H^r$ vectors in $D$ we conclude that
\begin{align*}
\gvec{}^r[](G) & = \gvec{}[](G) \,\bigcup\, \left\{ [G] = \gvec{}[D](G), [\minusG] = \gvec{}[D](G) \right\}   \\
\fvec{}^r[](G) & = \fvec{}[](G) \,\bigcup\, \left\{ [G] = \gvec{}[D](G), [\minusG] = \gvec{}[D](G) \right\}
\end{align*}
The \ref{RemovalNotification} procedure removes the $\minusG$ coordinate from both vectors in $H^r$. In $G$, the \ref{Run} procedure creates a $G$ coordinate in both vectors and sets both of them to $\sgh{}(G)$. Since $G$ starts out with the same state as $D$, we have $\sgh{}(G) = \sgh{}(D)$. It follows from the \AxProcIV{} and Lemma \ref{FCountLem} that at the start of the constellation $\sgh{}(D) = \gvec{}[D](D)$. Therefore at the end of the constellation
\begin{align*}
\gvec{}^r[G](G) & = \gvec{}[D](G) = \fvec^r[G](G) \\
\gvec{}[G](G) & = \sgh{}(G) = \fvec[G](G) \\
\gvec{}[D](G) & = \gvec{}[D](D) = \sgh{}(D) = \sgh{}(G)
\end{align*}
and therefore $\gvec{}^r[](G) = \gvec{}[](G)$ and $\fvec{}^r[](G) = \fvec{}[](G)$ as required by the Interim $G$ Hypothesis.
\end{description}
\end{description}
\end{itemize}

\subsubsection{Message broadcast request constellations}

\begin{itemize}
\item
\begin{description}
\item[\bf Type of Constellation:] A message broadcast request dequeuing event at a process $P$. We assume (see \ref{UserApplication}) that \ulp{} at a process $P$ does not issue such a request before its Main() function is executed in a separate thread.
\item[\bf $H$ Transactions:] An execution of the \ref{BroadcastMessage} procedure at $P$.
\item[\bf Observed behavior in $H^r$:] An equivalent transaction at $P$.
\item[\bf Execution of \cbcast{} in $H^r$:]
In $H^r$, process $P$ executes the \ref{BroadcastMessage} procedure, just as it does in $H$. In the pre-critical case $P \ne \plusminusG$. Also notice that $\requestset^{H^r} = \requestset^H$ (see \ref{HistReduxPrelim}) and therefore $\plusminusG$ do not generate message broadcast requests pre-critically in $H^r$.

Since the message is not yet stamped we have an actual equality $\fp{msg}^r = \fp{msg}$ and not just an equivalence.

We start by verifying that in the pre-critical case the Second Pre-Critical Hypothesis is preserved when $P = D$. Processes $\plusminusG$ do not participate in the constellation in the pre-critical case. It is easy to observe that the \ref{BroadcastMessage} call does not change any state variables other than \lque{}, \bwset{} and $\mpkout{}.b$. All of these variables are stipulated by the Second Pre-Critical Hypothesis to be equal to zero at $\plusminusG$ and so we are done with this case.

To verify the other inductive hypotheses and verify the side effects, we follow the execution step by step.

In the first step, we check if $\vgap{}$ is zero. If it is not, we append the message to \lque{}. By the relevant inductive hypothesis (either First Pre-Critical, Interim non-$G$, Interim $G$ or Convergent Hypothesis) $\vgap{}^r = \vgap{}$ and $\lque{}^r = \lque{}$. Since $\fp{msg}^r = \fp{msg}$ this step is executed identically in $H$ and $H^r$ and preserves the hypotheses.

The next and final step is executed if $\vgap{} = 0$. Notice that during the interim period $\vgap{} > 0$ and therefore for this step only the First Pre-Critical and Convergent Hypotheses are relevant. This step contains the following sub-steps:
\begin{itemize}
\item The counter $\mpkout{}.b$ is incremented. $\mpkout{}^r.b = \mpkout{}.b$ and therefore the inductive hypotheses are preserved.
\item The next few lines stamp the message. Here the inductive hypotheses requires that the stamping be equivalent, rather than equal, in $H$ and $H^r$. Since $\self{}^r = \self{}$ and $\cview{}^r = \cview{}$ under all the hypotheses these parts of the stamp end up being equal as required.

As for the message vector time, it is computed from the process vector time, and is adjusted at the $\self{}$ coordinate. The Convergent Hypothesis is in force when $\cview{} \ge \vcrit{}$ (see \ref{InductiveHypSSS}) and it stipulates that all the variables in sight are equal in $H$ and $H^r$, resulting in $\vtime{}'^r[] = \vtime{}'[]$. This matches the equivalence requirement (see Definition \ref{EquivDef}).

The First Pre-Critical Hypothesis is in force when $\cview{} < \vcrit{}$ and it stipulates that
$$
\vtime{}^r[] = \vtime{}[] \,\bigcup\, \left\{ [G] = 0, [\minusG] = 0 \right\}
$$
The $\self{}$ coordinate of $\vtime{}'$ is adjusted by $\mpkout.b - \mpkin{}[\self{}].b$. If $P \ne G$ then this adjustment is the same in $H$ and $H^r$ and does not affect the $G$ coordinate. But we already know that in the pre-critical case $P \ne G$, so the hypothesis is preserved in this case.

\item In the next step we queue the message multicast to all the members of \cset{}. Therefore we have an equivalent side effect to the one in $H$ as observed and Lemma \ref{ReducedContactLem} guarantees that the computed target set $\cset{}^r$ is equal to the observed target set.
\item In the next and last steps, a record for the message is added to \bwset{}.
In the pre-critical case, the message is equivalent and
\begin{align*}
\se{index}^r & = \mpkout{}^r = \mpkout{} = \se{index} \\
\se{iset}^r[X].f & =
\begin{cases}
\mpkin{}^r[X].f = \mpkin{}[X].f = \se{iset}[X].f & \quad \text{if } X \ne \plusminusG \\
\mpkin{}^r[\plusminusG].f = \mpkin{}[D].f = \se{iset}[D].f & \quad \text{if } X = \plusminusG 
\end{cases} \\
\se{iset}^r[X].b & =
\begin{cases}
\mpkin{}^r[X].b = \mpkin{}[X].b = \se{iset}[X].b & \quad \text{if } X \ne \plusminusG \\
\mpkin{}^r[\plusminusG].b = 0 & \quad \text{if } X = \plusminusG 
\end{cases} 
\end{align*}
as required by the First Pre-Critical Hypothesis.
\end{itemize}
\end{description}
\end{itemize}

\subsubsection{Message and acknowledgment packet constellations}

\begin{itemize}

\item
\begin{description}
\item[\bf Type of Constellation:] An acknowledgement packet, sent by a process $Q$ in response to an original broadcast of a message $m$, is received at the originating process $P$. Note that neither $G$ nor $\minusG$ broadcast original messages during the pre-critical period or the interim period, therefore we can assume $P \ne \plusminusG$ in these cases.
\item[\bf $H$ Transactions:] A single transaction, with an execution of the \ref{ReceiveAck} procedure at $P$.
\item[\bf Observed behavior in $H^r$:] Post-critically, or when $Q \ne D$, there is a single transaction at $P$ with an equivalent trigger and equivalent side effects (see Definition \ref{EquivSEDef}). Pre-critically when $Q = D$, there are three transactions at $P$, triggered, in that order, by an acknowledgement packet from $\minusG$, $G$ and $D$. The first two have no side effects. The third one has side effects equivalent to the original transaction in $H$.
\item[\bf Execution of \cbcast{} in $H^r$:] We divide the analysis into several cases:
\begin{itemize}
\item The constellation is post-critical.
\item The constellation is pre-critical, with $P, Q \ne D$
\item The constellation is pre-critical with $Q \ne D$ and $P = D$
\item The constellation is pre-critical, with $Q = D$ and $P \ne D$
\item The constellation is pre-critical with $P = Q = D$
\end{itemize}
In the first three cases, where the constellation is post-critical or $Q \ne D$, the $H^r$ execution is a single invocation of \ref{ReceiveAck} by $P$. Only in the third of these cases (when $P = D$) is there anything to prove about Second Pre-Critical Hypothesis. We will ignore that part for the moment and return to it later. We will deal with the remaining two cases (where $Q = D$) later as well.

The Convergent case is not entirely trivial here because the trigger is equivalent but perhaps not equal ($\pks{m^r} \cong \pks{m}$). In fact the trigger is equal but showing that requires the History Equivalence Theorem so we will not rely on this fact and use a weaker argument instead.

To verify the preservation of the First Pre-Critical Hypothesis, Interim non-$G$ Hypothesis, Interim $G$ Hypothesis or Convergent Hypothesis, as the case may be, we go step by step over the procedure.

It starts with a check for the existence of a record for $m$ in \wset{}. The record is guaranteed to be there but we have not proved that. However it does follow from the First Pre-Critical and from the two Interim Hypotheses that the record is there in $H$ if and only if it is there in $H^r$.

The Convergent case is the tricky one here. We know that $m^r \cong m$. If $m$ is pre-critical then it follows from the Convergent Hypothesis that the record is {\em not} found in either $H$ or $H^r$. If $m$ is post-critical then $m = m^r$ and the record is found in $H$ if and only if it is found in $H^r$. Therefore the same decision is made in both histories regardless of which hypothesis is in force, and in the Convergent case if the record is found then the $H$ and $H^r$ triggers are equal.

If the record is not found then the procedure call exits and we are done. With the check successful, the call continues through the following steps:
\begin{itemize}
\item The $Q$ entry is removed from the instability set of the message. In the pre-critical case $Q \ne \plusminusG$ , and the $Q$ entry exists in both $H$ and $H^r$ as we have shown. Therefore removing it in both histories preserves the inductive hypothesis.
\item If the instability set becomes empty, the record is removed from \wset{} and the \ref{CheckFlush} procedure is called. In the post-critical case the instability vector is identical and therefore either empty in both histories or non-empty in both. In the pre-critical case, even though the instability sets are different, they are still empty or non-empty together. Removing the record from \wset{} preserves the inductive hypothesis and by Corollary \ref{CheckFlushEquivCor}, so does invoking \ref{CheckFlush}. Moreover the same corollary guarantees that the side effects produced by \ref{CheckFlush} are equal, and therefore equivalent, to the observed side effects. Lemma \ref{ReducedContactLem} guarantees that the target set produced by the execution of \ref{CheckFlush} for each side effect, namely $\cset{}^r$, is equal to the observed target set.
\end{itemize}
To verify the Second Pre-Critical Hypothesis in the third case, notice that since the message is original, any changes in its record in \wset{} only affect the \bwset{} portion of \wset{}. Since the inductive hypothesis only requires that \bwset{} be empty in $\plusminusG$, it remains valid until the invocation of \ref{CheckFlush}.
In this case we cannot rely on Corollary \ref{CheckFlushEquivCor} because $\plusminusG$ do not participate in the constellation and therefore do not execute \ref{CheckFlush}. Instead we rely on Lemma \ref{OmnibusCBCASTLem}.

If $\fwset{}^r(D) \ne \emptyset{}$ when $D$ calls \ref{CheckFlush}, the procedure exists immediately and we are done. If $\fwset{}^r(D) = \emptyset{}$ then as we have seen this implies that the set was empty since the beginning of the transaction. This in turn implies that $\sgh{}^r(D) = \cview{}^r(D) + \vgap{}^r(D)$. This follows from Lemma \ref{OmnibusCBCASTLem}(\ref{OCL:sendgfB}) in the case $\vgap{}^r(D) > 0$. If $\vgap{}^r(D) = 0$ then Lemma \ref{OmnibusCBCASTLem}(\ref{OCL:gf}) implies that $\fvec{}^r[D](D) = \cview{}^r(D) + \vgap{}^r(D)$ and the equation follows from Lemma \ref{OmnibusCBCASTLem}(\ref{OCL:sendgfA}).

Since $\sgh{}^r(D) = \cview{}^r(D) + \vgap{}^r(D)$ the \ref{CheckFlush} procedure does not produce a ghost side effect and as a result does not change the value of $\sgh{}^r(D)$. Therefore the Second Pre-Critical Hypothesis remains valid.

We are left with the two pre-critical cases where $Q = D$.

We start by verifying the preservation of the First Pre-Critical Hypothesis. The execution in $H^r$ consists of three executions of \ref{ReceiveAck} at $P$, triggered by the receipt of a packet from $\minusG$, $G$ and $D$ respectively.

The first two executions remove the $\minusG$ and $G$ entries from the instability set, but do not empty out the set because it still possesses a $D$ entry.  Since \ref{CheckFlush} is not executed this guarantees that the two executions have no side effects, as expected. However the state at $P$ is now in violation of the First Pre-Critical Hypothesis because $m$ has instability at $D$ without matching instabilities at $\plusminusG$. The third execution restores the hypothesis by removing $D$ from the instability set. The remaining part of the third execution preserves the inductive hypothesis and produces the expected side effects for the exact same reasons as the previous cases that we already analyzed.

With regard to the preservation of the Second Pre-critical Hypothesis, it is only relevant in the fifth case where $P = D$. Just as in the third case, the changes in \wset{} do not affect the validity of the inductive hypothesis, and it continues to be valid until the moment that $D$ calls \ref{CheckFlush}. Just as before, Lemma \ref{OmnibusCBCASTLem} guarantees that this invocation does not produce a ghost broadcast and therefore preserves the Second Pre-Critical Hypothesis.
\end{description}

\item
\begin{description}
\item[\bf Type of Constellation:] An acknowledgement packet, sent by a process $Q$ in response to the forwarding of a message $m$, is received at the forwarding process $P$, in $H$.
\item[\bf $H$ Transactions:] A single transaction, with an execution of the \ref{ReceiveAck} procedure at $P$.
\item[\bf Observed behavior in $H^r$:] There are a number of separate cases:
\begin{itemize}
\item Post-critically, or when $P, Q \ne D$, there is a single transaction at $P$ with the same trigger and side effects.
\item Pre-critically when $P = D$ and $Q \ne D$, there are three transactions at $D$, $G$ and $\minusG$ respectively, all triggered by acknowledgement packets from $Q$ for the same forwarded message. The observed side effects at $D$ are equal to the original side effects in $H$. At $\plusminusG$ the side effects depend on the $D$ side effects in the following way. If there is a ghost broadcast out of $D$ then there is a same height ghost broadcast followed by a a same height flush broadcast out of $\plusminusG$. If there is no ghost broadcast out of $D$, then there are no side effects at $\plusminusG$. 
\item Pre-critically when $P \ne D$ and $Q = D$, there are three transactions at $P$, triggered, in that order, by an acknowledgement packet from $\minusG$, $G$ and $D$ for the same forwarded message. The first two have no side effects. The third one has the same side effect as the original transaction in $H$.
\item Pre-critically when $P = Q = D$, there are nine transactions in total, three each at $D$, $G$ and $\minusG$, triggered at each process, in that order, by acknowledgement packets from $\minusG$, $G$ and $D$, all for the same forwarded message. The first two transactions in each process have no side effects. The third transaction at $D$ has the side effects as the original transaction at $D$. The third transaction at each of $\plusminusG$ has side effects that depend on the original side effects at $D$ in the same way as in the second case: If there is a ghost broadcast out of $D$ then there is a same height ghost broadcast followed by a a same height flush broadcast out of $\plusminusG$. If there is no ghost broadcast out of $D$, then there are no side effects for the third transaction at $\plusminusG$.
\end{itemize}
\item[\bf Execution of \cbcast{} in $H^r$:] As before we divide the analysis into the different cases:
\begin{itemize}
\item The constellation is post-critical.
\item The constellation is pre-critical, with $P, Q \ne D$
\item The constellation is pre-critical with $Q \ne D$ and $P = D$
\item The constellation is pre-critical, with $Q = D$ and $P \ne D$
\item The constellation is pre-critical with $P = Q = D$
\end{itemize}

The analysis of cases one, two and four is identical to the analysis of similar cases in the case of an acknowledgement for an original message, and we will not repeat it here.

In the third case there are three executions of the \ref{ReceiveAck} procedure, one execution at each of $D$, $G$ and $\minusG$. The analysis of the First Pre-Critical Hypothesis proceeds exactly like the analysis of this case for an original message acknowledgement. As for the Second Pre-Critical Hypothesis, the record for the message in \wset{} is in the forwarding part \fwset{}, which contains the same messages in $\plusminusG$ as it does in $D$. Moreover, for each message it contains the same instability set. As a result the execution of the \ref{ReceiveAck} proceeds in the same way in all three of these processes and the Second Pre-Critical Hypothesis is preserved. By Corollary \ref{CheckFlushEquivCor}, the invocation of \ref{CheckFlush} produces side effects that match the observed side effects and Lemma \ref{ReducedContactLem} guarantees that these side effects have the observed target set.

In the fifth case there are three executions of the \ref{ReceiveAck} procedure at each process, triggered by the receipt of an acknowledgement packet from $\minusG$, $G$ and $D$, in this order. The analysis of the First Pre-Critical Hypothesis proceeds exactly like the analysis of this case for an original message acknowledgement. As for the Second Pre-Critical Hypothesis, the record for the message $m$ in \wset{} is in the forwarding part \fwset{}, which contains the same messages in $\plusminusG$ as it does in $D$. Moreover, for each message it contains the same instability set. As a result, all three executions of the \ref{ReceiveAck} proceed in the same way in all three of these processes. By Corollary \ref{CheckFlushEquivCor}, the invocation of \ref{CheckFlush} preserves the Second Pre-Critical Hypothesis and produces side effects that match the observed side effects and Lemma \ref{ReducedContactLem} guarantees that these side effects have the observed target set. Therefore overall the Second Pre-Critical Hypothesis is preserved and the side effects in $\plusminusG$ match the observed side effects.
\end{description}

\item
\begin{description}
\item[\bf Type of Constellation:] An original message packet containing message $m$ is received at process $P$ in $H$.
\item[\bf $H$ Transactions:] A single transaction, with an execution of the \ref{ReceiveMessage} procedure at $P$.
\item[\bf Observed behavior in $H^r$:] There are two cases
\begin{itemize}
\item Post-critically, or when $P \ne D$, there is a single transaction at $P$ with the same trigger and side effect as the original transaction in $H$.
\item Pre-critically, when $P = D$, there are three transactions, one each at $D$, $G$ and $\minusG$, with equivalent triggers and side effects as the original transaction in $H$.
\end{itemize}
\item[\bf Execution of \cbcast{} in $H^r$:] \hfill
\begin{itemize}
\item In the post-critical and $P \ne D$ cases, process $P$ invokes the \ref{ReceiveMessage} procedure. The execution proceeds through the following steps:

First an acknowledgment is sent back to the sender. This ensures that the side effect in $H^r$ is the expected one.

The next step determines whether the message is original. Since the value of $\orig{m}$ is equal in both histories, the determination resolves the same way in both histories, and since the message is original in $H$ it is deemed to be original in $H^r$ as well. As a result $\mpkin{}[\orig{m}].b$ is incremented.

In the pre-critical case $\orig{m} \ne \plusminusG$ and therefore incrementing the counter preserves the First Pre-Critical Hypothesis. In the post-critical case, the $\mpkin{}[]$ vector is identical at $H$ and $H^r$, thus preserving all the relevant hypotheses (Interim $G$, Interim non-$G$ and Convergent Hypotheses). The Second Pre-Critical Hypothesis is automatically preserved because neither $D$ nor $\plusminusG$ participate in the constellation in this case.

The next few steps check for duplicates. These checks rely on values, including \cview{}, \rset{}, the $\orig{m}$ coordinate in \vtime{} and $\mvt{m}$, that are all guaranteed by all the relevant inductive hypotheses to be identical or equivalent at $H$ and $H^r$. Therefore, the duplicate check yields the same results in both histories.

At this point a comment is due about the Convergent case. The trigger in this case is equivalent ($\pkm{m^r} \cong \pkm{m}$) but not necessarily equal. Unlike the case of an acknowledgement packet, a non-equal trigger may actually occur in the Convergent period. For that to happen it must be that $\mview{m} < \vcrit{}$ (see Definition \ref{EquivDef}). Since $P$ is convergent, $\cview{} \ge \vcrit{}$. This means that the message is obsolete. Since labeled step \ref{RMP:DiscardObsolete} of the procedure discards obsolete messages and exits, there is no contamination of the state with non-equal values and the inductive hypothesis holds.

If the message is not a duplicate, the next and last steps are:
\begin{itemize}
\item The message $m$ is added to \rset{}. Since both the message $m$ and \rset{} are equivalent in both histories, this step preserves the equivalence of \rset{} and the inductive hypothesis.
\item The message $m$ is appended to the tail of the sender's forwarding queue. In this case as well $\fque{}[\fp{sender}]$ is equivalent in $H$ and $H^r$, and the step preserves the equivalence and the relevant inductive hypothesis.
\item The \ref{ScanCall} procedure is invoked. We want to use Lemma \ref{ScanEquivLem} to prove that this call preserves all the inductive hypotheses and has no side effects. But the lemma requires that the ApplyMessage up-call be identical at $P$ in $H$ and $H^r$. According to \ref{ReducedUAI} this is the case for $P \ne G$. This covers the pre-critical case since $P$, as an $H$ process, cannot be equal to $G$. In the post-critical case the definition of ApplyMessage$^r$@G shows that it behaves, post-critically, like ApplyMessage@G, and we are done.
\end{itemize}
\item In the pre-critical case when $P = D$, all three processes $D$, $G$ and $\minusG$ invoke the \ref{ReceiveMessage} procedure. The verification of the First Pre-Critical Hypothesis proceeds in this case exactly as it does in the $P \ne D$ case. As for the Second Pre-Critical Hypothesis, we retrace the execution steps:

First an acknowledgment is sent back to the sender. This ensures that the side effects at $G$ and $\minusG$ are the expected ones.

The next step determines whether the message is original. Since the value of $\orig{m}$ is equal in all three processes, the determination resolves the same way in all three, and since the message is original at $D$ it is deemed to be original at $\plusminusG$ as well. As a result $\mpkin{}[\orig{m}].b$ is incremented at all three processes.

Since the $\mpkin{}[]$ vector is identical at all three processes, incrementing the counter at the sender coordinate preserves the inductive hypothesis.

The next few steps check for duplicates. These checks rely on values, including \cview{}, \rset{}, \vtime{} and $\mvt{m}$, that are all guaranteed by the Second Pre-Critical Hypothesis to be identical at all three processes. Therefore, the duplicate check yields the same results in all three.

If the message is not a duplicate, the next and last steps are:
\begin{itemize}
\item The message $m$ is added to \rset{}. Since both the message $m$ and \rset{} are identical in all three processes, this step preserves the Second Pre-Critical Hypothesis.
\item The message $m$ is appended to the tail of the sender's forwarding queue. In this case as well $\fque{}[]$ is identical in all three processes and the Hypothesis is preserved.
\item The \ref{ScanCall} procedure is invoked. The ApplyMessage$^r$@G up-call behaves, pre-critically, like ApplyMessage@D, which in turn behaves like ApplyMessage$^r$@D (see \ref{ReducedUAI}). Therefore ApplyMessage behaves the same way in $D$ and in $\plusminusG$  during the pre-critical period so it follows from Lemma \ref{ScanEquivLem} that this call preserves all the inductive hypotheses and has no side effects.
\end{itemize}
\end{itemize}
\end{description}

\item
\begin{description}
\item[\bf Type of Constellation:] A message packet containing a forwarded message $m$ is received at process $P$ from process $Q$ in $H$.
\item[\bf $H$ Transactions:] A single transaction, with an execution of the \ref{ReceiveMessage} procedure at $P$.
\item[\bf Observed behavior in $H^r$:] There are several cases:
\begin{itemize}
\item Post-critically, or when $P, Q \ne D$, there is a single transaction at $P$ with an equivalent trigger and side effect as the original transaction in $H$.
\item Pre-critically, when $P = D$ and $Q \ne D$, there are three transactions, one each at $D$, $G$ and $\minusG$, with triggers and side effects equivalent to the original transaction in $H$.
\item Pre-critically, when $P \ne D$ and $Q = D$, there are three transactions at $P$, triggered by the receipt of a message packet from $D$, $\minusG$, and $G$, in that order, each with the side effect of an acknowledgement packet being sent to the respective sender. Each transaction is equivalent to the original one in $H$.
\item Pre-critically, when $P = Q = D$, there are nine transactions, three each at $D$, $G$ and $\minusG$, triggered by the receipt of a message packet from $D$, $\minusG$ and $G$, in that order, each with the side effect of an acknowledgement packet being send to the respective sender. Each transaction is equivalent to the original one in $H$.
\end{itemize}
\item[\bf Execution of \cbcast{} in $H^r$:] \hfill
\begin{itemize}
\item The first two cases proceed just like the equivalent cases of an original message packet, except that the message counter is incremented at the $.f$ coordinate rather than the $.b$ coordinate.

\item In the pre-critical case where $P \ne D$ and $Q = D$, the processes $D$ and $\plusminusG$ do not participate in the constellation and therefore the Second Pre-Critical Hypothesis is automatically preserved. The process $P$ executes the \ref{ReceiveMessage} three times. First it executes the call for the message that it receives from $D$, and then for the same message being received from $\minusG$ and $G$. The first execution proceeds in parallel with the original execution in history $H$. The analysis of this first execution proceeds in almost the same way as the analysis of the receipt of an original message does. There is an important difference, however. When $\mpkin{}[D].f$ is incremented the First Pre-Critical Hypothesis is violated because $\mpkin{}[\plusminusG].f$ is not incremented at the same time. This violation does not, however, affect the rest of the analysis, and remains an isolated violation. This includes the invocation of the \ref{ScanCall} procedure since Lemma \ref{ScanEquivLem} does not require this particular part of the inductive hypothesis.

The subsequent two executions proceed through the following steps:

First, an acknowledgement packet is sent to the respective sender, generating the expected side effect.

The next step increments the forwarded message counter ($\mpkin{}[\minusG].f$ in the second execution and $\mpkin{}[G].f$ in the third execution). These steps remove the violation of the First Pre-Critical Hypothesis.

The next steps check for duplicates. These steps cause the call to exit in both executions because the message is obviously a duplicate - it was already placed in \rset{} and possibly even delivered by the first call.

\item In the pre-critical case when $P = Q = D$, each of the processes $D$, $G$ and $\minusG$ execute the \ref{ReceiveMessage} three times. First for the message that each receives from $D$ and then for the same message that each receives from $\minusG$ and then from $G$. The analysis of the state at $D$ in $H$ and $H^r$ proceeds exactly as in the previous case, where $P \ne D$ and $Q = D$, proving that the First Pre-Critical Hypothesis is preserved in this case. All we have to show is that the three executions at $G$ and at $\minusG$ result in the preservation of the Second Pre-Critical Hypothesis. This part of the analysis is just a recap of previous cases. The analysis of the first execution at all three processes (where the message is received from $D$) proceeds exactly like the same part of the analysis of the case of an original message that is received at $D$. The next two executions proceed identically at $D$ and at $G$ and $\minusG$, resulting in a determination that the message is a duplicate.
\end{itemize}
\end{description}
\end{itemize}

\subsubsection{Ghost and flush packet constellations}

\begin{itemize}

\item
\begin{description}
\item[\bf Type of Constellation:] A ghost packet of height $v$ is received at a process $P$ from a process $Q$ in $H$.
\item[\bf $H$ Transactions:] A single transaction, with an execution of the \ref{ReceiveGhost} procedure at process $P$ and no side effects.
\item[\bf Observed behavior in $H^r$:] There are several cases:
\begin{itemize}
\item Post-critically, or when $P, Q \ne D$, there is a single transaction at $P$ with the same trigger and no side effects.
\item Pre-critically, when $P = D$ and $Q \ne D$, there are three transactions, one each at $D$, $G$ and $\minusG$, each with the same trigger and no side effects.
\item Pre-critically, when $P \ne D$ and $Q = D$, there are five transactions at $P$, triggered by the receipt of identical ghost packets of the same height from $\minusG$ and then $G$, followed by flush packets of the same height from $\minusG$ and then $G$ and finally a ghost packet of the same height from $D$. This order is induced from the adjustment coordinate of the labels of the triggers. None of these transactions have any side effects.
\item Pre-critically, when $P = Q = D$, there are fifteen transactions, five each at $D$, $G$ and $\minusG$, triggered at each process by the receipt of two ghost packets of the same height from $\minusG$ and then $G$, followed by flush packets of the same height from $\minusG$ and then $G$ and finally a ghost packet of the same height from $D$. None of these transactions have any side effects.
\end{itemize}
\item[\bf Execution of \cbcast{} in $H^r$:] \hfill
\begin{itemize}
\item In the post-critical case $P$ executes the \ref{ReceiveGhost} procedure which results in raising $\gvec{}[Q]$ to $v$ in both $H$ and $H^r$. This preserves the relevant one among the Interim non-$G$, Interim $G$ and Convergent Hypotheses, because all of them stipulate that the $\gvec{}[]$ vector is equal in $H$ and $H^r$. The procedure does not generate any side effects, which matches the observed lack of side effects in $H^r$.

\item In the pre-critical case where $P, Q \ne D$, process $P$ executes the \ref{ReceiveGhost} procedure which results in raising $\gvec{}[Q]$ to $v$ in both $H$ and $H^r$. This preserves the Second Pre-Critical Hypothesis because $P \ne D$ and so no changes occur to the states of either $D$, $G$ or $\minusG$. To see that the First Pre-Critical Inductive Hypothesis is preserved observe that $Q \ne \plusminusG$ because $Q$ exists in $H$ and $\plusminusG$ do not, and since $Q \ne D$ by assumption, neither $\gvec{}[\plusminusG]$ nor $\gvec{}[D]$ is affected. The procedure does not generate any side effects, which matches the observed lack of side effects in $H^r$.

\item  In the pre-critical case when $P = D$ and $Q \ne D$, the three processes $D$, $G$ and $\minusG$ each execute the \ref{ReceiveGhost} procedure, which raises the $\gvec{}[Q]$ level in each process to $v$. In this case the Second Pre-Critical Hypothesis is preserved because this hypothesis stipulates that $\gvec{}[]$ is equal in all three processes, and they all perform the same change. The First Pre-Critical Hypothesis is preserved for the same reason as in the previous case. The procedure does not generate any side effects, which matches the observed lack of side effects in $H^r$.

\item In the pre-critical case when $P \ne D$ and $Q = D$, the process $P$ executes the \ref{ReceiveGhost} procedure twice, once for each ghost packet trigger from $\minusG$ and $G$, then the \ref{ReceiveFlush} procedure twice, once for each flush packet trigger from $\minusG$ and $G$, and finally the \ref{ReceiveGhost} procedure once more for the ghost packet trigger from $D$. Since $D$ and $\plusminusG$ do not participate in the constellation, the Second Pre-Critical Hypothesis is automatically preserved. At $P$, the five calls raise the values of $\gvec[\minusG]$, $\gvec[G]$, $\fvec[\minusG]$, $\fvec[G]$ and  $\gvec[D]$ to $v$. This conforms with the First Pre-Critical Hypothesis. In addition, the calls to \ref{ReceiveFlush} cause \ref{TryToInstall} to be invoked.

By the First Pre-Critical Hypothesis the value of \vgap{} is the same at $P$ in $H$ and $H^r$ at the start of the constellation.

If $\vgap{} = 0$ at the start of the constellation then by Lemma \ref{OmnibusCBCASTLem}(\ref{OCL:lque}) $\lque{} = \emptyset$ at $P$. The first block of \ref{TryToInstall} does nothing, and the execution of the call skips to the second and last block, where the messages in \lque{} are broadcast. Since $\lque{}$ is empty the first execution of \ref{TryToInstall} has no side effects and does not change the process state. The second execution of \ref{TryToInstall} encounters the same values of \vgap{} and \lque{} as the first execution, and therefore it also does nothing. As a result the whole constellation preserves both pre-inductive hypotheses and produces the observed lack of side effects in $H^r$.

The case where $\vgap{} > 0$ is more subtle. First, it follows from the \AxProcIII{} in $H$ that $v \le \cview{} + \vgap{}$.

By Lemma \ref{FCountLem} at the start of the constellation at $P$ in $H$, $\gvec{}[D] < v$

Furthermore it follows from Lemma \ref{OmnibusCBCASTLem}(\ref{OCL:gf}) that at the start of the constellation we have at $P$, in $H$:

$$
\fvec{}[D] \le \gvec{}[D] < v \le \cview{} + \vgap{}
$$

The First Pre-Critical Hypothesis insures that both inequalities true in $H^r$ as well.

The constellation does not change the value of $\fvec{}[D]$, because it does not include the receipt of a flush packet from $D$. Therefore the test loop at labeled step \ref{TTI:CheckFlush} of \ref{TryToInstall} fails and the call exits without changing the process state and without any side effects. As a result the same happens in the second call to \ref{TryToInstall}, and the constellation as a whole preserves the First Pre-Critical Hypothesis and conforms with the observed behavior of $H^r$.

\item In the pre-critical case when $P = D$ and $Q = D$, the process $D$ executes the \ref{ReceiveGhost} procedure twice, once for each ghost packet trigger from $\minusG$ and $G$, then the \ref{ReceiveFlush} procedure twice, once for each flush packet trigger from $\minusG$ and $G$, and finally the \ref{ReceiveGhost} procedure once more for the ghost packet trigger from $D$. The exact same thing happens at $G$ and $\minusG$, for a grand total of fifteen function calls.

The exact same analysis as in the previous case proves that the First Pre-Critical Hypothesis is preserved with respect to $D$, and that the resulting side effects in $D$ conform with the observed behavior in $H^r$ (namely, no side effects).

As for the execution at $\plusminusG$ in $H^r$, the Second Pre-critical Hypothesis insures that $\plusminusG$ have the same values of $\gvec{}[]$, $\fvec{}[]$ and \vgap{} as $D$. Therefore the execution of the constellation in $\plusminusG$ proceeds in the exact same way as in $D$, with the same changes to the state and the same lack of side effects.
\end{itemize}
\end{description}

\item
\begin{description}
\item[\bf Type of Constellation:] A flush packet of height $v$ is received at a process $P$ from a process $Q$ in $H$.
\item[\bf $H$ Transactions:] A single transaction, with an execution of the \ref{ReceiveFlush} procedure at process $P$.
\item[\bf Observed behavior in $H^r$:] There are two cases:
\begin{itemize}
\item Post-critically, or when $P \ne D$, there is a single transaction at $P$ with the same trigger and equivalent side effects.
\item Pre-critically, when $P = D$, there are three transactions at $D$, $G$ and $\minusG$, each triggered by an identical flush packet from $Q$, and with the side effects at $D$ being equivalent to the ones of the original transaction in $H$, and with the transactions in $G$ and $\minusG$ having no side effects.
\end{itemize}
\item[\bf Execution of \cbcast{} in $H^r$:] \hfill
\begin{itemize}
\item Post-critically, or when $P \ne D$, the process $P$ executes the \ref{ReceiveFlush} procedure in $H^r$.

The first step in this call increments the value of $\fvec{}[Q]$. Post-critically, this step obviously preserves the Interim $G$, Interim non-$G$ and Convergent Hypotheses, where relevant. Pre-critically, the Second Pre-Critical Hypothesis is preserved because $P \ne D$ and so no changes occur at either $D$, $G$ or $\minusG$. To see why the First Pre-Critical Hypothesis is preserved, notice first that in the pre-critical case $Q \ne \plusminusG$ because $Q$ exists in $H$ and $\plusminusG$ do not, which is why the $\fvec{}[\plusminusG]$ coordinates are not affected in $H^r$. Notice also that the value of $\gvec{}[D]$ is not affected in $H$. Taken together, these two observations explain why the hypothesis is preserved by the first step in this case.

The next and last step the procedure invokes the \ref{TryToInstall} procedure. By Lemma \ref{FirstTTIEquivLem} the \ref{TryToInstall} call preserves the inductive hypothesis in this case and generates equivalent side effects. It follows from Lemma \ref{ReducedContactLem} that the side effects have the
observed target set.

\item In the pre-critical case when $P = D$, the analysis of the First Pre-Critical Hypothesis is identical to the analysis of the previous case. For the Second Pre-Critical Hypothesis, there are three transactions in $H^r$, at $D$, $G$ and $\minusG$, each one executing the \ref{ReceiveFlush} procedure. This call increments the value of $\fvec{}[Q]$, which preserves the inductive hypothesis since $\fvec{}[]$ is identical in all three processes this case. In the second and last step the invocation of \ref{TryToInstall} preserves the inductive hypothesis and generates the expected side effects according to Lemma \ref{SecondTTIEquivLem}.
\end{itemize}
\end{description}
\end{itemize}

\subsubsection{Donation and co-donation packet constellations}

\begin{itemize}

\item
\begin{description}
\item[\bf Type of Constellation:] A non-critical donation packet is received at a process $J$ from a process $P$ in $H$.
\item[\bf $H$ Transactions:] A single transaction, with an execution of the \ref{ReceiveDonation} procedure at process $J$.
\item[\bf Observed behavior in $H^r$:] A single $H^r$ transaction at $J$ that is equivalent to the original $H$ transaction.
\item[\bf Execution of \cbcast{} in $H^r$:] The process $J$ executes the \ref{ReceiveDonation} procedure in $H^r$. Notice that since the donation is not critical then by definition $J \ne G$. Also by definition, donation packets are only sent post-critically. Therefore the only two possible operative inductive hypotheses are the Interim non-$G$ Hypothesis or the Convergent Hypothesis.

The \ref{ReceiveDonation} call starts with $J$ adding $P$ to \cset{}. According to either the Interim Non-$G$ Hypothesis or by the Convergent Hypothesis $\cset{}^r = \cset{}$ and so the first step preserves the inductive hypothesis.

In the next two steps, $J$ sends a co-donation to $P$ in both histories. The contents of \se{co\_donation} are made up of state variables that are equivalent in $H$ and $H^r$ regardless of whether the Interim non-$G$ or the Convergent Hypothesis is in force. Therefore this side effect is equivalent in both histories.

The next few steps construct the ordered set \se{UNT}. It follows from the two Hypotheses and from the fact that $\se{donation}^r \cong \se{donation}$ that $\se{UNT}^r \cong \se{UNT}$, since all the ingredients used in the construction of this set are equivalent in both histories. In addition the same hypotheses imply that $\mpkin{}^r[] = \mpkin{}[]$ and therefore the subsequent loop on the members of \se{UNT} invokes the same procedures (either \ref{ReceiveMessage} or \ref{ReceiveAck}) in the same order, and with equivalent parameters.

To see that the \ref{ReceiveMessage} call preserves the inductive hypotheses and generates equivalent side effects, simply follow the reasoning earlier in this proof for the case of a constellation that is triggered by the post-critical receipt of an original message. To see that the \ref{ReceiveAck} call preserves the inductive hypotheses and generates equivalent side effects, follow the reasoning earlier of this proof for case of a constellation that is triggered by the post-critical receipt of an acknowledgement in response to an original message. 

The last two steps of the call update $\gvec{}[P]$ and $\fvec{}[P]$ at $J$ from the donated values of \sgh{} and \sfh{}, respectively. The fact that $\se{donation}^r \cong \se{donation}$ guarantees that the latter two values are the same in $H$ and $H^r$. As a result $\gvec{}[P]$ and $\fvec{}[P]$ remain identical in $H$ and $H^r$ as required by both the Interim non-$G$ and the Convergent Hypotheses.
\end{description}

\item
\begin{description}
\item[\bf Type of Constellation:] A non-critical co-donation packet is received at a process $P$ from a process $J$ in $H$.
\item[\bf $H$ Transactions:] A single transaction, with an execution of the \ref{ReceiveCoDonation} procedure at process $P$.
\item[\bf Observed behavior in $H^r$:] A single $H^r$ transaction at $P$ that is equivalent to the original $H$ transaction.
\item[\bf Execution of \cbcast{} in $H^r$:] The process $P$ executes the \ref{ReceiveCoDonation} procedure in $H^r$. By definition, co-donation packets are only sent post-critically. However it is possible that $P = G$. Therefore the possible operative inductive hypotheses are the Interim non-$G$ Hypothesis, Interim $G$ Hypothesis or the Convergent Hypothesis.

The \ref{ReceiveCoDonation} call begins with constructing the ordered set \se{UNT}. It follows from the three Hypotheses and from the fact that $\se{co\_donation}^r \cong \se{co\_donation}$ that $\se{UNT}^r \cong \se{UNT}$, since all the ingredients used in the construction of this set are equivalent in both histories. In addition the same hypotheses imply that $\mpkin{}^r[] = \mpkin{}[]$ and therefore the subsequent loop on the members of \se{UNT} invokes the same procedures (either \ref{ReceiveMessage} or \ref{ReceiveAck}) in the same order, and with equivalent parameters.

To see that the \ref{ReceiveMessage} call preserves the inductive hypotheses and generates equivalent side effects, simply follow the reasoning earlier in this proof for the case of a constellation that is triggered by the post-critical receipt of an original message. To see that the \ref{ReceiveAck} call preserves the inductive hypotheses and generates equivalent side effects, follow the reasoning earlier of this proof for case of a constellation that is triggered by the post-critical receipt of an acknowledgement in response to an original message. 

The next two steps of the call update $\gvec{}[J]$ and $\fvec{}[J]$ at $P$ from the co-donated values of \sgh{} and \sfh{}, respectively.

The fact that $\se{co\_donation}^r \cong \se{co\_donation}$ guarantees that the latter two values are the same in $H$ and $H^r$. As a result $\gvec{}[J]$ and $\fvec{}[J]$ remain identical in $H$ and $H^r$ as required by the three Hypotheses.

In the last step the \ref{TryToInstall} procedure is invoked. By Lemma \ref{FirstTTIEquivLem}, this call preserves the inductive hypotheses and generates equivalent side effects. It follows from Lemma \ref{ReducedContactLem} that the side effects have the correct target set, namely $\cset{}^r$.
\end{description}

\item
\begin{description}
\item[\bf Type of Constellation:] A critical donation packet is received at $G$ from a process $P$ in $H$.
\item[\bf $H$ Transactions:] A single transaction, with an execution of the \ref{ReceiveDonation} procedure at process $G$.
\item[\bf Observed behavior in $H^r$:]
The original trigger (the receipt of a donation packet) disappears, since we removed that packet in $H^r$. The queuing of the co-donation packet also disappears, because the co-donation packet is removed as well. What remains are the side effects of all the sub-transactions, and in addition there is a new trigger event for each untimely packet in the $\channel{P}{D}$ channel, with the exception of the acknowledgement packets for original messages, since these packets do not get cloned. The new triggers and the existing side effects line up perfectly. By Corollary \ref{DonOrderCor} there is an order preserving one-to-one correspondence between the sub-transactions and the untimely message packets and forwarded acknowledgement packet in the $\channel{P}{D}$ channel. We carefully labeled the side effects and the new triggers so that they pair up according to that correspondence to make complete transactions. In addition, there are untimely ghost and flush packets in the $\channel{P}{D}$ channel whose clones give rise to additional triggers for which there are no corresponding sub-transactions in $H$. This is not a problem, these naked triggers simply represent transactions that do not have side effects. Our only burden is to prove that if $H^r$ executes the \cbcast{} algorithm, then these ghost and flush triggers will indeed produce no side effects and preserve the inductive hypotheses.
\item[\bf Execution of \cbcast{} in $H^r$:]
The critical donation packet is the first post-critical packet from $P$ that $G$ processes in $H$. We carefully labeled the untimely packets in $\channel{P}{G}^{H^r}$ so that they are all processed within the critical donation constellation. Therefore in $H^r$ process $G$ does not process any packets from $P$ in the interval between the critical constellation and the $P$-donation constellation. As a result at the start of the constellation we have in $H^r$, according to Lemma \ref{OmnibusCBCASTLem}(\ref{OCL:gfxmitrcv}, \ref{OCL:sendgfA} and \ref{OCL:height})
\begin{multline*}
\gvec{}[P](G) = \valuepost{\gvec{}[P]}{G}{\ell_{\vcrit}} \le \valuepre{\sgh{}}{P}{\ell_{\vcrit}} = {}  \\
{} = \valuepre{(\cview{} + \vgap{})}{P}{\ell_{\vcrit}} < \vcrit{}
\end{multline*}
\begin{multline*}
\fvec{}[P](G) = \valuepost{\fvec{}[P]}{G}{\ell_{\vcrit}} \le \valuepre{\sfh{}}{P}{\ell_{\vcrit}} = {}  \\
{} = \valuepre{(\cview{} + \vgap{})}{P}{\ell_{\vcrit}} < \vcrit{}
\end{multline*}

Therefore when the $P$-donation constellation starts, the value of $\fvec{}[P]$ at $G$ is too low for $G$ to have already installed view $\vcrit$. Therefore $G$ is still in its interim period (see \ref{ProofPlanSSS}), and therefore the operating inductive hypothesis is the Interim $G$ Hypothesis.

All process $G$ does in $H^r$ is process all the clones of untimely packets in $\channel{P}{D}$. In $H$, process $G$ executes the \ref{ReceiveDonation} procedure. This means that it adds $P$ to \cset{}, then queues a co-donation to $P$, and only then proceeds to process a similar sequence of packets, with the ghosts and flushes excluded. The addition of $P$ to \cset{} does not violate the Interim $G$ Hypothesis. The fact that a co-donation is queued in $H$ but not in $H^r$ is what we expect since we explicitly removed the critical donation packets from $H^r$.

So before either history processes the first untimely packet the Interim $G$ Hypothesis still holds and the side effects meet our expectations. We have to show that the Donation Sub-Hypothesis also holds at this point. We have already shown that the values of $\gvec[P]$ and $\fvec[P]$ at $G$ are no higher than the critical values of \sgh{} and \sfh{}, respectively, at $P$. This proves that the Donation Sub-Hypothesis holds at that point.

We are going to show, by induction, that up to the $i^{th}$ untimely clone the history $H^r$ looks like an execution of \cbcast{} and the Donation Sub-Hypothesis holds. This means that we assume that in $H^r$, process $G$ has processed all the clones of untimely packets in channel $\channel{P}{D}$ up to, but not including the $i^{th}$ clone. We assume that in $H$, process $G$ has executed all the sub-transactions that correspond to the same clones, excluding the ghost and flush clones which do not have corresponding sub-transactions. We also assume that at that point the Donation Sub-Hypothesis holds. Let us look now at the $i^{th}$ clone. What does $G$ do with it in $H$ and $H^r$? We look at each case separately.

\begin{itemize}
\item {\bf The clone is a message packet}. In this case $G$ invokes the \ref{ReceiveMessage} procedure in both histories. The message can be either original or forwarded. We have already analyzed similar cases earlier in the proof (post-critical message receipt) and showed that in both cases the Interim $G$ Hypothesis is preserved. Since the Donation Sub-Hypothesis only differs from the Interim $G$ Hypothesis with respect to variables that are not used by the \ref{ReceiveMessage} procedure, the same analysis holds without change in this case as well.

\item {\bf The clone is an acknowledgement packet}. In this case $G$ invokes the \ref{ReceiveAck} procedure in both histories. The packet has to be in response to a forwarded message. We have already analyzed a similar case earlier in the proof (post-critical receipt of a forwarded acknowledgement packet) and showed that the Interim $G$ Hypothesis is preserved. Since the Donation Sub-Hypothesis only differs from the Interim $G$ Hypothesis with respect to variables that are not used by the \ref{ReceiveAck} procedure, the same analysis holds without change in this case as well.

\item {\bf The clone is a ghost packet}. In this case $G$ invokes the \ref{ReceiveGhost} procedure in $H^r$, but there is no corresponding action in $H$. The procedure produces no side effects, which is what we would expect (since $G$ does not do anything at all in $H$), but it does increase $\gvec[P]$. This deviation is allowed by the Donation Sub-Hypothesis, as long as the value $\gvec[P]$ does not exceed the critical value of \sgh{} at $P$. This is guaranteed by Lemma \ref{FCountLem}, since the packet being processed is untimely and therefore queued by $P$ pre-critically.

\item {\bf The clone is a flush packet}. This case is argued similarly to the ghost case as far as state is concerned. However a flush packet may cause side effects. Since $G$ does nothing in $H$, there must not be any side effects in $H^r$. For there to be any side effects in $H^r$ the flush value must be high, i.e. it must be equal to $\cview{} + \vgap{}$. But this is impossible since Lemma \ref{FCountLem} guarantees that the flush value is at most the critical value of \sfh{} at $P$, and Lemma \ref{OmnibusCBCASTLem} guarantees that the latter value is lower than $\vcrit{}$,
while the current value of $\cview{} + \vgap{}$ at $G$ is at least $\vcrit{}$.
\end{itemize}
We have established that once the process $G$ is done processing all the sub-transactions (in $H$) and all the untimely packets (in $H^r$), the Donation Sub-Hypothesis still holds.

This is the end of the constellation in $H^r$. What are the values of $\gvec[P]$ and $\fvec[P]$ at $G$ in this history? Since all the untimely packets are now processed, it follows from Lemma \ref{FCountLem} that these values are equal to the critical values of $\sgh{}$ and $\sfh{}$, respectively, at $P$.

The donation transaction is not yet over in $H$, however. As the last step in the \ref{ReceiveDonation} procedure, $G$ updates its values for $\gvec[P]$ and $\fvec[P]$ from the donated values of $\sgh{}$ and $\sfh{}$ that it receives as a donation from $P$. Since the $P$ donation reflects its critical state, this last step restores the Interim $G$ Hypothesis and we are done.
\end{description}

\item
\begin{description}
\item[\bf Type of Constellation:] A critical co-donation packet is received at a process $P$ from $G$ in $H$.
\item[\bf $H$ Transactions:] A single transaction, with an execution of the \ref{ReceiveCoDonation} procedure at process $P$.
\item[\bf Observed behavior in $H^r$:]
Like the case of a critical donation, the observed behavior of a critical co-donation constellation in $H^r$ is made up of triggers and sub-transaction side effects that line up perfectly:
\begin{itemize}
\item Side effects of the co-donation transaction in $H$ that carry over to $H^r$.
\item Trigger events, made up of the processing events of cloned and zombied packets:
\begin{itemize}
\item A trigger for each clone and zombie of an untimely packet in the $\channel{D}{P}$ channel. Original message packets and flush packets do not get cloned on this channel, so they have no corresponding triggers. Ghost packets get both cloned and zombied, so each of these packets have two corresponding triggers.
\item A trigger for each clone of a post-critical, pre-$P$-donation packet in the $\channel{G}{G}$ channel. Acknowledgement packets on this channel do not get cloned, so they do not have a corresponding trigger.
\end{itemize}
\end{itemize}
According to Lemma \ref{DonationSELem}, if the invocation of the \ref{TryToInstall} procedure at the end of the \ref{ReceiveCoDonation} procedure installs the pending views, then $\se{UNT}$ is empty and there are no sub-transactions. According to Corollary \ref{CoDonOrderCor}, whatever sub-transactions do exist line up perfectly with the $H^r$ triggers for message and acknowledgement packets. According to Lemma \ref{CoDonFlushLem} the side effects of \ref{TryToInstall} line up with a final, post-critical cloned flush packet trigger.

Therefore, if \ref{TryToInstall} does not install pending views, then all the ghost and flush triggers produce no side effects. If \ref{TryToInstall} does install the pending views, then there are no triggers other than ghost and flush triggers, and none of them produces any side effects except possibly the last one. Our challenge is to prove that the \cbcast{} protocol produces exactly those side effects while preserving the inductive hypothesis.

\item[\bf Execution of \cbcast{} in $H^r$:]
When $G$ receives the critical donation packet from $P$, it adds $P$ to its \cset{} and queues the critical co-donation packet to $P$. Prior to receiving the donation, $P$ is not in $G$'s \cset{} (see Lemma \ref{OmnibusCBCASTLem}(\ref{OCL:lset})) and therefore it does not queue any packets to $P$. As a result, the co-donation packet is the first packet that $P$ receives from $G$ in $H$. Therefore the value of $\gvec{}[G]$ and $\fvec{}[G]$ at $P$ remains equal to the original value that $P$ sets when it executes the critical invocation of \ref{JoinNotification}.
$$
\valuepre{\gvec{}[G]}{P}{\crittime{G}{P}} = \valuepre{\fvec{}[G]}{P}{\crittime{G}{P}} = \valuepre{\gvec{}[D]}{P}{\ell_{\vcrit}} \le \valuepre{\sgh{}}{D}{\ell_{\vcrit}}
$$
Where the last inequality follows from Lemma \ref{OmnibusCBCASTLem}(\ref{OCL:gfxmitrcv}). We also know from Lemma \ref{OmnibusCBCASTLem} that $\valuepre{\sgh{}}{D}{\ell_{\vcrit}} < \vcrit$ and therefore $P$ does not have a sufficient value of $\fvec{}[G]$ at the start of the constellation to allow for the installation of view $\vcrit{}$. As a result the operating inductive hypothesis at the onset of the constellation is the Interim non-$G$ Hypothesis, and from the inequality above it follows that the First Co-Donation Sub-Hypothesis holds as well.

To complete the proof for this constellation, we first use induction to show that the First Co-Donation Sub-Hypothesis holds, and the expected side effects occur, up to the last untimely clone or zombie. Then we use a second induction to deal with the post-critical clones and the \ref{TryToInstall} side effects.

We start by showing that up to the $i^{th}$ untimely clone the history $H^r$ looks like an execution of \cbcast{} and the First Co-Donation Sub-Hypothesis holds. This means that we assume that in $H^r$, process $P$ has processed all the clones and zombies of untimely packets in channel $\channel{D}{P}$ up to, but not including the $i^{th}$ clone or zombie. We assume that in $H$, process $P$ has executed all the sub-transactions that correspond to the same clones and zombies, excluding the ghost clones and flush zombies which do not have corresponding sub-transactions. Let us look now at the $i^{th}$ clone or zombie. What does $P$ do with it in $H$ and $H^r$? We look at each case individually.

\begin{itemize}
\item {\bf The clone is a message packet}. In this case $P$ invokes the \ref{ReceiveMessage} procedure in both histories.  The packet has to be a forwarded message packet. We have already analyzed similar cases earlier in the proof (post-critical message receipt) and showed that in both cases the Interim Non-$G$ Hypothesis is preserved. Since the First Co-Donation Sub-Hypothesis only differs from the Interim Non-$G$ Hypothesis with respect to variables that are not used by the \ref{ReceiveMessage} procedure, the same analysis holds without change in this case as well.

\item {\bf The clone is an acknowledgement packet}. In this case $P$ invokes the \ref{ReceiveAck} procedure in both histories. We have already analyzed a similar case earlier in the proof (post-critical receipt of a forwarded acknowledgement packet) and showed that the Interim Non-$G$ Hypothesis is preserved. Since the First Co-Donation Sub-Hypothesis only differs from the Interim Non-$G$ Hypothesis with respect to variables that are not used by the \ref{ReceiveAck} procedure, the same analysis holds without change in this case as well.

\item {\bf The clone is a ghost packet}. In this case $P$ invokes the \ref{ReceiveGhost} procedure in $H^r$, but there is no parallel action in $H$. The procedure produces no side effects, which is what we would expect (since $P$ does not do anything at all in $H$), but it does increase $\gvec{}^r[G]$. This deviation is allowed by the First Co-Donation Sub-Hypothesis, as long as the value $\gvec{}^r[G]$ does not exceed $\valuepre{\sgh{}}{D}{\ell_{\vcrit}}$. This is guaranteed by Lemmas \ref{FCountLem} and \ref{OmnibusCBCASTLem}, since the packet being processed is untimely and therefore queued by $D$ pre-critically.

\item {\bf The zombie is a flush packet}. This case is argued similarly to the ghost case as far as state is concerned with the additional fact that
$$
\valuepre{\sfh{}}{D}{\ell_{\vcrit}} \le \valuepre{\sgh{}}{D}{\ell_{\vcrit}} < \vcrit{}
$$
which follows Lemma \ref{OmnibusCBCASTLem}. The inequality implies that in $H^r$ the value of $\fvec{}[G]$ remains low and as a result there are no side effects. This is what we expect since $P$ does nothing in $H$ and therefore does not produce side effects there.
\end{itemize}
We have established that once the process $P$ is done processing all the pre-critical sub-transactions (in $H$) and all the untimely packets (in $H^r$), the First Co-Donation Sub-Hypothesis still holds and all the side effects are as expected. At this point in $H^r$ the process $P$ has already processed the last pre-critical ghost and flush packets from $G$. Therefore by Lemma \ref{FCountLem}
\begin{align*}
\gvec{}^r[G] & = \valuepre{\sgh{}^r}{G}{\ell_{\vcrit}} = \valuepre{\sgh{}^r}{D}{\ell_{\vcrit}} = \valuepre{\sgh{}}{D}{\ell_{\vcrit}} \\
\fvec{}^r[G] & = \valuepre{\sfh{}^r}{G}{\ell_{\vcrit}} = \valuepre{\sgh{}^r}{D}{\ell_{\vcrit}} = \valuepre{\sgh{}}{D}{\ell_{\vcrit}}
\end{align*}
Where the two rightmost equalities in each line follow from the Second Pre-Critical and First Pre-Critical Hypotheses, respectively. We also know that
$$
\valuepre{\sgh{}^r}{G}{\ell_{\vcrit}} \le \valuepre{\sgh{}^r}{G}{\crittime{P}{G}} = \valuepre{\sgh{}}{G}{\crittime{P}{G}}
$$
To see why the left inequality is true, notice that $\crittime{P}{G} \,\dot{\prec}\, \crittime{G}{P}$ (this is mediated by the critical co-donation packet) and therefore we can assume by induction that $G$ executes \cbcast{} up to constellation $\crittime{P}{G}$. Now the inequality follows from the fact that \sgh{} is monotone increasing and $\ell_{\vcrit} \,\dot{\prec}\, \crittime{P}{G}$. The right equality follows from the Interim $G$ Hypothesis.

Taken together these relations prove that the Second Co-Donation Sub-Hypothesis now holds.

We divide the rest of the analysis into two parts. First, assume that the \ref{TryToInstall} invocation at the end of the \ref{ReceiveCoDonation} procedure fails to install the pending views.

We will show that up to the $i^{th}$ post-critical clone the history $H^r$ looks like an execution of \cbcast{} and the Second Co-Donation Sub-Hypothesis holds. This means that we assume that in $H^r$, process $P$ has processed all the clones of post-critical packets in channel $\channel{G}{G}$ up to, but not including the $i^{th}$ clone. We assume that in $H$, process $P$ has executed all the sub-transactions that correspond to the same clones, excluding the ghost and flush which do not have corresponding sub-transactions. Let us look now at the $i^{th}$ clone. What does $P$ do with it in $H$ and $H^r$? We look at each case individually.

\begin{itemize}
\item {\bf The clone is a message packet}. In this case $P$ invokes the \ref{ReceiveMessage} procedure in both histories. We have already analyzed similar cases earlier in the proof (post-critical message receipt) and showed that in both cases the Interim Non-$G$ Hypothesis is preserved. Since the Second Co-Donation Sub-Hypothesis only differs from the Interim Non-$G$ Hypothesis with respect to variables that are not used by the \ref{ReceiveMessage} procedure, the same analysis holds without change in this case as well.

\item {\bf The clone is an acknowledgement packet}. This case does not occur since post-critical acknowledgement packets do not get cloned.

\item {\bf The clone is a ghost packet}. In this case $P$ invokes the \ref{ReceiveGhost} procedure in $H^r$, but there is no parallel action in $H$. The procedure produces no side effects, which is what we would expect (since $P$ does not do anything at all in $H$), but it does increment $\gvec{}^r[G]$. This deviation is allowed by the Second Co-Donation Sub-Hypothesis, as long as the value $\gvec{}^r[G]$ does not exceed $\valuepre{\sgh{}}{G}{\crittime{P}{G}}$. This is guaranteed by Lemma \ref{FCountLem}, since the packet being processed is queued by $G$ before $\crittime{P}{G}$.

\item {\bf The clone is a flush packet\footnote{Untimely flushes in $\channel{G}{P}$ are zombies, but post-critical flushes are clones}}. This case is argued similarly to the ghost case as far as state is concerned with the additional fact that
$$
\valuepre{\sfh{}}{G}{\crittime{P}{G}} \le \valuepre{\sgh{}}{G}{\crittime{P}{G}}
$$
which follows from Lemma \ref{OmnibusCBCASTLem}(\ref{OCL:sendgfA}). However a flush packet may cause side effects. Since $P$ does nothing in $H$, there must not be any side effects in $H^r$.

Suppose that the flush packet does have side effects. For this to happen, the \ref{TryToInstall} invocation in \ref{ReceiveFlush} must install the pending views. To do that it is required that for every $Q \in \lset{}^r$ we have $\fvec{}^r[Q] = \cview{}^r + \vgap{}^r$. Since the Second Co-Donation Sub-Hypothesis holds, we know that 
\begin{align*}
\lset{}^r & = \lset{} \\
\cview{}^r & = \cview{} \\
\vgap{}^r & = \vgap{} \\
\fvec{}^r[Q] & = \fvec{}[Q] \quad \text{for all } Q \in \lset{} \setminus \{ G \}
\end{align*}
Therefore in $H$, for any live $Q \ne G$ we have $\fvec{}[Q] = \cview{} + \vgap{}$

Since the flush packet had side effects in $H^r$ we know that the packet was $\pkf{v}$ with $v = \cview{} + \vgap{}$. Since the original packet was queued by $G$ before it processed the donation from $P$ it follows that
$$
v \le \valuepre{\sfh{}}{G}{\crittime{P}{G}}
$$
In $H$, right before calling \ref{TryToInstall} in \ref{ReceiveCoDonation}, process $P$ updates its value of $\fvec{}[G]$, setting it to
\begin{multline*}
\fvec{}[G](P) = \fp{co\_donation}.\sfh{} = \valuepre{\sfh{}}{G}{\crittime{P}{G}} \ge \\
{} \ge v = \cview{} + \vgap{}
\end{multline*}
It follows that the invocation of \ref{TryToInstall} in the \ref{ReceiveCoDonation} procedure does succeed in installing the pending views in $H$, contrary to our assumptions.
\end{itemize}
We have established that once the process $P$ is done processing all the post-critical sub-transactions (in $H$) and all the post-critical packets (in $H^r$), the Second Co-Donation Sub-Hypothesis still holds and all the side effects are as expected.

At this point in $H$ process $P$ proceeds to update its values of $\gvec{}[G]$ and $\fvec{}[G]$:
\begin{align*}
\gvec{}[G] & = \fp{co\_donation}.\sgh{} = \valuepre{\sgh{}}{G}{\crittime{P}{G}} \\
\fvec{}[G] & = \fp{co\_donation}.\sfh{} = \valuepre{\sfh{}}{G}{\crittime{P}{G}}
\end{align*}

In $H^r$, $P$ has processed all the post-critical clones, including the last of the post-critical ghost and flush packets from $G$. It follows from Lemma \ref{FCountLem} that at this point
\begin{align*}
\gvec{}^r[G] & = \valuepre{\sgh{}}{G}{\crittime{P}{G}} \\
\fvec{}^r[G] & = \valuepre{\sfh{}}{G}{\crittime{P}{G}}
\end{align*}

Therefore $\gvec{}^r[G] = \gvec{}[G]$ and $\fvec{}^r[G] = \fvec{}[G]$ and the Interim Non-$G$ Hypothesis is restored. The last step in $H$ is an invocation of the \ref{TryToInstall} procedure, that by assumption fails to install the pending views. Therefore it produces no side effects and does not change any state variables.  

We are finally left with the case where \ref{TryToInstall} does succeed in installing the pending views. It follows from Lemma \ref{DonationSELem} that in this case $\se{UNT}$ is empty and no sub-transactions are executed in $H$. In addition Lemma \ref{CoDonFlushLem} shows that the last regular packet queued by $G$ prior to processing the donation from $P$ is $k_{\fp{last}} = \pkf{v}$ with $v = \cview{} + \vgap{}$.

It follows that the only post-critical clones in $H^r$ are ghost and flush packets, and their processing, up to $k$, causes only allowable state changes and generates no side effects (this is shown in exactly the same way as in the previous case, where \ref{TryToInstall} does not install pending views).

So at last we are at the point where in $H$ process $P$ is about to update its $\gvec{}[G]$ and $\fvec{}[G]$ values and invoke \ref{TryToInstall}, which is going to succeed in installing the pending views. In $H^r$ process $P$ is about to process $k = \pkf{v}$, the last cloned packet from $G$. The Second Co-Donation Sub-Hypothesis still holds.

Since \ref{TryToInstall} succeeds in installing the pending views, it follows that in $H$, after $P$ updates $\fvec{}[G]$ we have $\fvec{}[G] = \cview{} + \vgap{}$. Therefore
$$
v = \cview{} + \vgap{} = \fp{co\_donation}.\sfh{} = \valuepre{\sfh{}}{G}{\crittime{P}{G}}
$$
By Lemma \ref{OmnibusCBCASTLem} we have 
\begin{multline*}
\cview{} + \vgap{} \ge \fp{co\_donation}.\sgh{} = \valuepre{\sgh{}}{G}{\crittime{P}{G}} \ge \\
{} \ge \valuepre{\sfh{}}{G}{\crittime{P}{G}} = \cview{} + \vgap{}
\end{multline*}
and therefore $P$ also updates $\gvec{}[G] = \cview{} + \vgap{}$.

In $H^r$ the first step in \ref{ReceiveFlush} sets $\fvec{}^r[G] = v = \cview{} + \vgap{}$ and it follows from Lemma \ref{OmnibusCBCASTLem}(\ref{OCL:gf}) that at that point we already have
$$
\gvec{}^r[G] = v = \cview{} + \vgap{}
$$
Therefore, if we hold both $H$ and $H^r$ at the point where both histories are about to invoke the  \ref{TryToInstall} procedure (in the \ref{ReceiveCoDonation} and the \ref{ReceiveFlush} procedures, respectively), the Interim Non-$G$ Hypothesis is restored.

It follows from Lemma \ref{FirstTTIEquivLem} that the execution of \ref{TryToInstall} in both histories generates the expected side effects and causes the interim period to end and the Convergent Hypothesis to hold. Lemma \ref{ReducedContactLem} ensures that the side effects have the correct target set. This concludes the proof of the theorem and shows that $H$ and $H^r$ perform the same computation.
\end{description}
\end{itemize}

\section{Causality and Progress With The \cbcast{} Algorithm}
\subsection{ The Goals of the Analysis}
To show that the algorithm works as expected, we have to prove two things. One is the preservation of causality,
meaning that a message is only delivered at a process after all the messages on which it depends have already been delivered.
The second one is the guarantee of progress - that is to say that all messages eventually get delivered. It is well known (see \cite{fischer85}) that progress cannot be guaranteed in the presence of failures, and it follows from \cite{schiper94} that a similar limitation exists in our model of dynamic membership, even without failures. Therefore we can only prove somewhat less than a full guarantee of progress.

\subsection{The Causal Order Property and The Progress Property}
\begin{defn}
\label{FamiliarMessageDef}
We say that a message $m$ is {\bf familiar} to a process $P$ if either
\begin{itemize}
\item $m$ is delivered at $P$.
\item There is a sequence of processes
$$
P_0, P_1, \ldots, P_n = P \quad \text{where } n > 0
$$
such that
\begin{enumerate}
\item For each $i < n$, the process $P_i$ is the parent of process $P_{i+1}$, i.e. $\exists \pnotifyevent{j(P_{i+1})}{P_i} = \nj{P_{i+1}}{P_i}$.
\item $n$ is delivered at $P_0$ prior to the $\pnotifyevent{j(P_1)}{P_0}$ event.
\end{enumerate}
\end{itemize}
\end{defn}

\begin{defn}
\label{COP}
We say that a history $H$ of the \cbcast{} protocol has the {\bf Causal Order Property} if messages in $H$ are delivered in an order that respects their causal relationships. Technically, this means that if a process $P$ originally broadcasts a message $m$ that is eventually delivered at a process $Q$ then
\begin{enumerate}
\item every message that was broadcast by $P$ prior to broadcasting $m$ is already familiar to $Q$ at the delivery of $m$.
\label{COP:bcast}
\item every message that is familiar to $P$ at the broadcasting $m$ is already familiar to $Q$ at the delivery of $m$.
\label{COP:fam}
\end{enumerate}
\end{defn}
\begin{defn}
We say that a history $H$ of the \cbcast{} protocol has the {\bf Progress Property} if every message that is originally broadcast by a non-halting process in $H$ eventually becomes familiar to every non-halting process in $H$.
\end{defn}

\begin{thm}[Causal Order Theorem]
\label{CausalOrderThm}
The Causal Order Property holds for every conforming history of the \cbcast{} protocol.
\end{thm}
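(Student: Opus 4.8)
The plan is to reduce the general statement to the case of a join-free transactional history in two reduction steps, carry out the genuine work in that case, and finally handle infinitely many joins by a locality argument.

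\textbf{Step 1: reduce to transactional histories.} Given a conforming history $H$, since we have shown that \cbcast{} is vacuum convergent, the Fault Theorem (Theorem \ref{FaultThm}) produces a transactional closure $\tr(H)$ that is fault equivalent to $H$ and extends it. I would first check that the Causal Order Property descends from $\tr(H)$ to $H$: the extension adds only vacuum events and packets, all at halting processes and all later than every pre-existing event, so the events of $H$ form a $\prec$-downset of $\tr(H)$. If $P$ broadcasts $m$ and $m$ is delivered at $Q$ in $H$, then the broadcast event, the delivery event, and --- since the parent chains of Definition \ref{FamiliarMessageDef} run ``backward'' through joins --- every event witnessing the familiarity relations constrained by Definition \ref{COP} all $\prec$-precede $Q$'s delivery of $m$ and hence lie in $H$; so familiarity in $\tr(H)$ and in $H$ agree on this finite window and the property transfers. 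Thus it suffices to treat transactional histories.

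\textbf{Step 2: reduce the number of joins.} For a transactional history with finitely many join notifications, I would induct on that number. If there is at least one join, apply the History Reduction Mapping to the first join view $\vcrit$: the result $H^r$ is a conforming history (as established earlier, building on Theorem \ref{ReducedHistoryThm}) with one fewer join, and by the History Equivalence Theorem (Theorem \ref{HistoryEquivThm}) $H^r$ performs the same computation as $H$ --- identical message deliveries and view installations, in the same order and at the same constellations, under the correspondence that matches $\plusminusG$ to the pre-critical $D$ and lets $G$ replay $D$'s pre-critical deliveries. I would verify that this correspondence is compatible with Definition \ref{FamiliarMessageDef}: a parent-chain witness of familiarity in $H$ passing through $G$'s critical join becomes, in $H^r$, the statement that $m$ is itself delivered at $G$ (which is original in $H^r$ and re-issues $D$'s pre-critical deliveries) followed by the rest of the chain --- so Causal Order for $H^r$ yields Causal Order for $H$. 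Since $H^r$ is merely conforming, re-apply Step 1 to pass to its transactional closure; this does not alter the view structure and hence reintroduces no joins, so the induction closes with base case a join-free transactional history.

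\textbf{Step 3: the join-free case --- the main obstacle.} Here essentially all the genuine content sits. In a join-free history every channel connects original processes, so the lockstep counting lemmas (Lemma \ref{OrigPCountLem} for message and acknowledgement counts, Lemma \ref{FCountLem} for ghost and flush heights) apply without the donation machinery. I would show that each originally broadcast message reaches each of its eventual deliverers along a single \emph{effective} route of forwardings, with every other arrival rejected as a duplicate by the obsolescence and vector-time tests in the \ref{ReceiveMessage} procedure, and that the stamp $\mvt{m}$ together with the delivery test in \ref{ScanCall} enforces precisely the two clauses of Definition \ref{COP}: clause (\ref{COP:bcast}) from the monotonicity of the originator's own \vtime{} coordinate, and clause (\ref{COP:fam}) from the other coordinates recording how many of the originator's messages had been delivered at broadcast time. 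The ghost/flush protocol and \ref{TryToInstall} ensure that no view is installed until every in-flight message of the old view has been received and acknowledged everywhere live, ruling out the cross-view races that would otherwise break causal delivery. Proving uniqueness of the effective route and that vector time plus the flush machinery jointly implement causal delivery across view changes is the decisive difficulty; a secondary subtlety is the faithful transfer of the \emph{familiar} relation across the $D/G/\plusminusG$ substitution in Step 2, since the reduction dissolves exactly the parent chains that define familiarity.

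\textbf{Step 4: infinitely many joins.} For a conforming history with infinitely many joins, Steps 1--3 do not apply directly, so I would argue locally: fix $m$, $P$, $Q$ as above; by Corollary \ref{WellFoundedCor} only finitely many events precede $Q$'s delivery of $m$, hence only finitely many joins are relevant. I would truncate $H$ (capping $\numview$ and halting the surviving processes) to a finite-join conforming history that agrees with $H$ on that finite downset, apply Steps 1--3 to it, and transfer the conclusion back, the delicate point being that the truncation must itself satisfy all the conforming axioms.
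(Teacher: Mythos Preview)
Your four-stage plan matches the paper's: reduce to transactional via the Fault Theorem, peel off joins one at a time via history reduction and Theorem~\ref{HistoryEquivThm}, prove the join-free base case, and handle infinitely many joins by truncation (the paper calls this ``stunting''). Steps 1, 2, and 4 are essentially as in the paper, including the observation that the transactional closure of $H^r$ reintroduces no joins.

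In Step 3 your description of the flush mechanism is stated too strongly and, taken literally, is false. It is \emph{not} the case that ``no view is installed until every in-flight message of the old view has been received and acknowledged everywhere live'': when a process installs several pending views in one pass of \ref{TryToInstall}, the intermediate views are installed with positive $\vgap{}$, and at those views no such guarantee holds. The paper isolates the correct statement as the Central Lemma, which carries the essential hypothesis $\gap_{v_m}(P)=0$: only under that hypothesis does every lower-view message received by $P$ propagate to every other process that installs $v_m$. What makes the Causal Order argument go through is the observation that the \emph{originator} $P$ of $m$ necessarily broadcasts with $\vgap{}=0$ (see labeled step~\ref{BM:LaunchQueue} of \ref{BroadcastMessage}), so $\gap_{\mview{m}}(P)=0$ and the Central Lemma applies to $P$ together with any $Q$ that delivers $m$. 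The within-view vector-time argument (which you sketch correctly) then finishes the job for messages of view $\mview{m}$, while the Central Lemma handles the cross-view dependencies. You correctly flag this as the decisive difficulty; just be aware that the flush barrier is conditional on the zero-gap hypothesis, and that this hypothesis is supplied specifically by the originator rather than holding in general.
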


\begin{thm}[Restricted Progress Theorem]
\label{RestrictedProgressThm}
The Progress Property holds for every finite-join conforming history of the \cbcast{} protocol.
\end{thm}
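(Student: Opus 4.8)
The plan is to reduce the finite-join case to the join-free case in two stages, and then to prove the join-free case directly with the effective-route machinery. First, since the transactional closure $\tr(H)$ of a conforming history $H$ is obtained (via the \FaultThm{}, Theorem \ref{FaultThm}) by appending only vacuum events at halting processes, it alters neither the set of non-halting processes nor their message broadcasts, their deliveries, or their parent relationships; hence a message originally broadcast by a non-halting process is familiar to a non-halting process in $H$ exactly when it is so in $\tr(H)$, and it suffices to prove the Progress Property for transactional histories. Second, I would iterate the History Reduction Mapping of Section \ref{HistoryReduxS}: each application replaces the first join (of $G$) by a removal (of a new halting process $\minusG$), strictly decreasing the finite number of joins, while by the \HistoryEquivThm{} (Theorem \ref{HistoryEquivThm}) producing a conforming history $H^r$ that performs the same computation. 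After finitely many steps we reach a join-free conforming history.

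The bridge step is to check that progress transfers backwards along one reduction $H \mapsto H^r$. Because $\processset_h^{H^r} = \processset_h^H \cup \{\minusG\}$, the non-halting processes of $H$ and $H^r$ coincide; moreover $\plusminusG$ originate no messages in $H^r$ (by construction $G$'s \ulp{} thread does not run before the critical moment and $\minusG$'s never runs), so the messages originally broadcast by non-halting processes correspond under the reduction (each $m$ going to its reduction $m^r$). By the \HistoryEquivThm{} the ApplyMessage and ApplyJoin/ApplyRemoval up-calls occur at the same constellations at every $P \ne \plusminusG$; at $G$ after the critical moment they mirror those at $D$ after it; and at $\plusminusG$ before the critical moment they mirror those at $D$ before it. Tracing the two clauses of Definition \ref{FamiliarMessageDef} through this correspondence — in particular, a message delivered at $D$ before $G$'s join in $H$ is delivered outright at $G$ in $H^r$, so the parent-chain clause collapses to the delivery clause — shows that a message is familiar to a non-halting process in $H^r$ precisely when it is familiar to it in $H$. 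Hence join-free progress implies finite-join progress.

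It remains to prove the Progress Property for a transactional, join-free, conforming history $H$; this is the fourth part of the proof plan and the real content. If $H$ is stunted the statement is vacuous (there is no non-halting process), so assume not; then by Lemma \ref{FirstFaultLem} a process halts iff it is removed, and since there are no joins ``familiar'' reduces to ``delivered''. Let $m$ be originally broadcast by a non-removed process $P$, necessarily in some view $v$ while $P$'s view gap is zero, so that $P$ queues $\pkm{m}$ to $\cset{} = \lset{}$, which contains every process live at view $v$. A non-removed $Q$ is original, so $j(Q) = 0 \le v < r(Q)$ and $Q \in \lset{}$; thus $P$ queues a packet $k \in \channel{P}{Q}$. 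Since $P$ does not halt, $k$ is sent; by Lemma \ref{CfDropPktLem} it is not dropped, hence received; and since $H$ is transactional, $\preceiveevent{k}$ exists. I would then argue, by induction on a causal rank of $m$ (for instance the number of messages $P$ has delivered or forwarded before broadcasting $m$), that $Q$ eventually delivers $m$: once $Q$ has installed view $v$ and has delivered every message on which $m$ depends, the \ref{ScanCall} procedure finds $m$ deliverable and delivers it, while an obsolete copy received earlier is harmless because by Lemma \ref{UniqueEffectiveLem} it was preceded by an effective delivery of $m$.

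The hard part will be the two inputs to that induction. First, one must show the flush protocol always terminates, so that every non-removed process installs every view: this rests on the ghost/flush invariants of the \cbcast{} Omnibus Lemma (Lemma \ref{OmnibusCBCASTLem}), on the fact that each non-removed process's wait set eventually empties (which in turn needs stabilization, hence acknowledgements being received and processed, hence the transactional hypothesis), and on the fact that removals are eventually learned so no process waits forever on a departed peer. Second, the causality induction must handle the case where a dependency of $m$ originated at a process that is later removed: the dependency must still reach $Q$ through a forwarded packet queued in \ref{RemovalNotification}, and the effective-route framework is exactly what lets one pick, for each (message, target) pair, a single canonical delivery path and discard the redundant forwardings. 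I expect establishing the well-foundedness of the chosen causal rank, together with the exhaustiveness of effective routes across view changes, to be the crux of the argument.
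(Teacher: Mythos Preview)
Your reduction from finite-join conforming histories to join-free transactional histories matches the paper exactly: pass to the transactional closure via the \FaultThm{}, iterate the History Reduction Mapping, and check at each step that the Progress Property transfers backwards. The paper formalizes both bridge steps as separate lemmas and your sketches of them are correct.

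The real gap is in the join-free case, specifically in your induction. Your inductive statement is ``messages broadcast by a non-halting $P$ are delivered at every non-halting $Q$'', with a rank tied to $P$'s local delivery count. But the inductive step forces you to deliver every causal \emph{dependency} $n$ of $m$ at $Q$, and $n$ may originate at a \emph{halting} process $R$; your hypothesis says nothing about such $n$, and your rank is not even defined for it. You notice this (``the dependency must still reach $Q$ through a forwarded packet'') but showing that $n$ reaches $Q$ is not enough: you then need $n$ \emph{delivered} at $Q$, which needs its own dependencies, and you are back where you started with a message your induction does not cover.

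The paper closes this by cleanly separating \emph{receipt} from \emph{delivery}. The key is the Central Lemma: if $P$ installs a view $v_m$ with gap zero, then every message of lower view received by $P$ is received by every other process that installs $v_m$. This is where the effective-route machinery is actually deployed (via the technical Lemma~\ref{TechLemma1}), and it is precisely your ``exhaustiveness of effective routes across view changes'' --- but that phrase names the hard theorem, not a tool you may assume. Combining the Central Lemma with Lemma~\ref{NonHaltInstallLem} (your ``flush protocol terminates'') gives at once that all non-halting processes receive the same set of messages. Only then does the paper pass to delivery, and for that step it invokes the already-proved Causal Order Theorem: take a minimal message $n$ delivered at one non-halting process but not another; by the Causal Order Theorem every dependency of $n$ is delivered at the first, hence by minimality at the second, hence \ref{ScanCall} delivers $n$ there too --- contradiction. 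So the correct inductive statement is the stronger ``every message delivered at \emph{some} non-halting process is delivered at \emph{every} non-halting process'', and the Central Lemma is what makes its receipt half go through.
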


\subsection{Causal Order And Progress In Reduced Histories}
As a first step in proving the two theorems, we show that if any of the two properties holds in the reduction of a transactional history, then it holds in the original history as well. This allows us to ignore any finite number of process join notifications that may occur in the course of the history. In our proofs we make use of the fact that $H$ and $H^r$ have been labeled using a common label space $\labelspace{}$. This allows us to compare the timing of events that occur in the two histories. As usual we denote by $G$ the first joining process in $H$, with $D$ denoting its parent. We will casually refer to "time" when we technically mean "constellation label". The critical time is the constellation $\ell_{\vcrit}$ in the joint label space, when $G$ joins in $H$.

\begin{defn}
We say that a message  $m$ is {\bf taken up} at process $P$ if the process moves the message $m$ into \rset{} (this takes place at labeled step \ref{RMP:ReceiveEntry} of the \ref{ReceiveMessage} procedure). We will see later that the same message can be received by the same process many times, but is only taken up once.
\end{defn}

We need the following basic facts:
\begin{itemize}
\item By construction $H^r$ and $H$ have the same message broadcast requests ($\requestset{}^r = \requestset{}$) and those requests are labeled the same way in both histories. This means that the processes in $\processset^{H^r}$ originate the same message broadcasts, at the same time, as they do in $H$.
\item We know from the \HistoryEquivThm{} (Theorem \ref{HistoryEquivThm}) that
every process other than $G$ delivers the same messages, at the same time, in both histories; that the same is true for $G$ post-critically; and that pre-critically process $G$ delivers the same messages in $H^r$ that $D$ delivers in $H$, at the same time. As a result $G$ is familiar, at any post-critical time, with the same messages in both histories. For other processes in $H$ this is true without restriction.
\item It follows from the proof of the \HistoryEquivThm{} that every process other than $G$ takes up the same messages, at the same time, in both histories; that the same is true for $G$ post-critically; and that pre-critically process $G$ takes up the same messages in $H^r$ that $D$ takes up in $H$, at the same time. See in particular the parts of the proof that relate to receiving message packets, donation packets and co-donation packets. Notice that cloned message packets always result in the message being discarded rather than being taken up.
\end{itemize}

\begin{thm}
Let $H$ be a transactional history that includes at least one join notification, and let $H^r$ be its reduction. If The Causal Order Property holds in $H^r$ then it holds in $H$ as well.
\end{thm}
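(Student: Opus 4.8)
The plan is to lift a violation‑free instance of the Causal Order Property from $H^r$ back to $H$. Suppose a process $P$ originally broadcasts a message $m$ in $H$ and $m$ is eventually delivered at a process $Q$; we must establish conditions (\ref{COP:bcast}) and (\ref{COP:fam}) of Definition \ref{COP}. First I would record the correspondences supplied by the \HistoryEquivThm{} (Theorem \ref{HistoryEquivThm}) and the basic facts stated after it: $H$ and $H^r$ have the same message broadcast requests with the same labels, so $P$ originates the reduction $m^r$ in $H^r$ at the very constellation at which it originates $m$ in $H$, and likewise the full set and order of broadcasts of $P$ is the same in the two histories (if $P = G$ this is entirely post‑critical, since $G$ neither exists pre‑critically in $H$ nor runs its \ulp{} thread pre‑critically in $H^r$); every process other than $G$ delivers and takes up the same messages at the same constellations in both histories, $G$ does so post‑critically, and pre‑critically $G$ in $H^r$ mirrors $D$ in $H$. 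In particular $Q$ delivers $m^r$ in $H^r$ at the same constellation $c_d$ at which it delivers $m$ in $H$, and the broadcast of $m^r$ occurs at the same constellation $c_b$ as that of $m$.

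The core of the proof is a dictionary for the predicate ``$\mu$ is familiar to a process $X$ by constellation $c$'' (Definition \ref{FamiliarMessageDef}): for any process $X$ of $H$ and any constellation $c$ that is an event at $X$, this holds in $H$ for $(\mu, X, c)$ if and only if it holds in $H^r$ for $(\mu^r, X, c)$. A direct‑delivery witness transfers at once from the delivery correspondence above (for $X \ne G$ the deliveries at $X$ occur at the same constellations in the same order; for $X = G$ the same holds post‑critically, and there is nothing pre‑critical to check because $G$ has no $H$‑events before $\ell_{\vcrit}$). The delicate case is a chain witness $P_0 \to P_1 \to \cdots \to P_n = X$ with $\mu$ delivered at $P_0$ before $\pnotifyevent{j(P_1)}{P_0}$. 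The decisive structural observation is that \emph{$H$ has no pre‑critical joins}, so $D$ --- the parent of $G$ in $H$ --- is an original process of $H$ and is therefore nobody's child. Consequently any $H$‑chain meeting $G$ must enter along the edge $D \to G$ and, $D$ being a non‑child, must in fact begin $D \to G \to \cdots$; Definition \ref{FamiliarMessageDef} then forces $\mu$ to be delivered at $D$ before $\pnotifyevent{j(G)}{D} \asymp \joinevent{G}$, i.e.\ pre‑critically. In $H^r$ the edge $D \to G$ is deleted, but $G$ is original and (by the \HistoryEquivThm{}) delivers $\mu^r$ pre‑critically in mirror of $D$, while the surviving edges $G \to P_2 \to \cdots \to P_n = X$ are untouched by the reduction --- so the shortened chain is a valid $H^r$‑witness. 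The converse is symmetric: an $H^r$‑chain meeting $G$ must have $G$ as its first vertex (in $H^r$, $G$ is original, hence nobody's child), so either $G$ delivered $\mu^r$ post‑critically --- whence $G$ delivered $\mu$ in $H$ at the same constellation --- or pre‑critically --- whence $D$ delivered $\mu$ pre‑critically in $H$, and re‑prefixing the $H$‑edge $D \to G$ yields a valid $H$‑witness. Chains disjoint from $\plusminusG$ transfer verbatim, since the reduction keeps every non‑critical parent/child pair and all the notification and delivery constellations involved; and $\minusG$ never occurs in any witness, as it delivers nothing and is nobody's parent.

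With the dictionary in place the theorem assembles quickly. For condition (\ref{COP:bcast}), every message $\mu$ broadcast by $P$ before $m$ in $H$ is broadcast as $\mu^r$ by $P$ before $m^r$ in $H^r$; the Causal Order Property in $H^r$ gives that $\mu^r$ is familiar to $Q$ by $c_d$ in $H^r$; the dictionary (familiar in $H^r$ $\Rightarrow$ familiar in $H$) returns $\mu$ familiar to $Q$ by $c_d$ in $H$. For condition (\ref{COP:fam}), given $\mu$ familiar to $P$ by $c_b$ in $H$, the dictionary (forward direction) gives $\mu^r$ familiar to $P$ by $c_b$ in $H^r$; the Causal Order Property in $H^r$ then gives $\mu^r$ familiar to $Q$ by $c_d$ in $H^r$; the dictionary (backward direction) returns $\mu$ familiar to $Q$ by $c_d$ in $H$. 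The only residual bookkeeping is that the ``by $c$'' timing is honoured: for a chain witness this is automatic, since a chain makes $\mu$ familiar to $X$ from $\joinevent{X}$ onward, hence before any $H$‑event at $X$; for a direct‑delivery witness one uses that deliveries at a fixed process occur at the same constellations in the same relative order in both histories. The main obstacle, and the one point that is not an immediate corollary of the \HistoryEquivThm{}, is precisely the familiarity dictionary for chains that pass through the distinguished process $G$: there one must use both that the reduction trades the inheritance edge $D \to G$ for origination ``from view zero'', and that --- because $H$ has no pre‑critical joins --- no familiarity witness in $H$ can reach $G$ except via $D$, so the prefixing and truncation of the $D \to G$ edge is forced and exact.
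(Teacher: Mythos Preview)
Your proposal is correct and follows essentially the same route as the paper: transfer the broadcast of $m$ and the delivery at $Q$ to $H^r$, invoke the Causal Order Property there, and pull the resulting familiarity of $n$ at $Q$ back to $H$ using the delivery/familiarity correspondence supplied by the \HistoryEquivThm{}. The paper records that correspondence as a one-line bullet (``$G$ is familiar, at any post-critical time, with the same messages in both histories; for other processes in $H$ this is true without restriction'') and then dispatches the theorem in a short paragraph, splitting only on whether the delivery constellation at $Q$ is pre- or post-critical; you instead spell the correspondence out as an explicit two-way ``dictionary,'' in particular giving the chain argument (any $H$-chain through $G$ must begin $D \to G$ since the critical join is the first join, and this edge is traded in $H^r$ for $G$'s pre-critical mirroring of $D$). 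That extra care is a genuine addition of rigor over the paper's presentation, but the underlying strategy is the same.
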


\begin{proof}
Let $P$ and $Q$ be any processes in $H$ and let $m$ be a message that originates at $P$ and is eventually delivered at $Q$. Let $n$ be a message is either originated by $P$ prior to originating $m$ or is familiar to $P$ when it originates $m$.

If $P$ originates $n$ prior to $m$ in $H$, then $P$ also originates $n$ prior to $m$ in $H^r$. If $P$ is familiar with $n$ when it originates $m$ in $H$, it is also familiar with it in $H^r$ when it originates $m$. Since $Q$ delivers $m$ in $H$, it also delivers $m$ in $H^r$. Because the Causal Order Property holds in $H^r$ this implies that $Q$ is familiar with $n$ at the time that it delivers $m$ in $H^r$. If that time is post-critical, then $Q$ is also familiar with $n$ at the same time in $H$ and we are done. If the time is pre-critical, then $Q \ne G$ (because $G$ does not exist in $H$ pre-critically) and therefore $Q$ is familiar with $n$ at the same time in $H$ regardless.
\end{proof}

\begin{thm}
Let $H$ be a transactional history that includes at least one join notification, and let $H^r$ be its reduction. If The Progress Property holds in $H^r$ then it holds in $H$ as well.
\end{thm}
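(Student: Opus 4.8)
The plan is to follow the same template as the proof just given for the Causal Order Property, replacing the one‑step ``delivered at $Q$'' argument by a more careful treatment of the parent/child chains that appear in Definition \ref{FamiliarMessageDef}. Fix a message $m$ that is originally broadcast in $H$ by a non‑halting process $P$, and let $Q$ be any non‑halting process of $H$; the goal is to conclude that $m$ is familiar to $Q$ in $H$. Since $\processset_h^{H^r} = \processset_h^H \cup \{\minusG\}$ and $\minusG \notin \processset^H$, both $P$ and $Q$ are non‑halting in $H^r$ as well. Because $\requestset^{H^r} = \requestset^H$ with the same labels, $P$ issues the corresponding original broadcast in $H^r$, so the Progress Property in $H^r$ yields that $m$ is familiar to $Q$ in $H^r$; the work is to transport that familiarity back to $H$.

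First I would record a structural observation about parent/child pairs: the parent/child pairs of $H^r$ are exactly the non‑critical parent/child pairs of $H$. Indeed $G$ and $\minusG$ are members of view zero in $H^r$ and therefore have no parent, the critical pair $D/G$ disappears, and no new parent relationships are created (there are no joins before $\vcrit$ in $H^r$, so $\minusG$ is never chosen as a parent). It follows that if familiarity of $m$ to $Q$ in $H^r$ is witnessed by a chain $Q_0, \dots, Q_n = Q$ with $m$ delivered at $Q_0$ in a view $v < j(Q_1)$, then every $Q_i$ lies in $\processset^H$, every consecutive pair is also a parent/child pair of $H$, the join view $j(Q_1)$ agrees in the two histories, and $Q \neq G$ (a process possessing a parent in $H^r$ cannot be $G$).

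Then I would split into cases according to how $m$ is familiar to $Q$ in $H^r$. If $m$ is delivered at $Q$ in $H^r$: when $Q \neq G$, the \HistoryEquivThm{} delivers $m$ at $Q$ at the same constellation in $H$; when $Q = G$ and the delivery is post‑critical, the same holds; and when $Q = G$ and the delivery is pre‑critical, the theorem instead delivers $m$ at $D$ pre‑critically in $H$, so the chain $D, G$ witnesses that $m$ is familiar to $G$ in $H$. If instead $m$ is familiar to $Q$ via a chain $Q_0, \dots, Q_n = Q$, then by the structural observation the chain is an $H$‑chain and $Q \neq G$; if $Q_0 \neq G$, or $Q_0 = G$ and the delivery at $Q_0$ is post‑critical, the \HistoryEquivThm{} places the same delivery of $m$ at $Q_0$ in $H$, in the same view $v < j(Q_1)$, so the very same chain works in $H$; and if $Q_0 = G$ with a pre‑critical delivery, the theorem places a delivery of $m$ at $D$ in $H$ in a view $v < \vcrit = j(G)$, and the prefixed chain $D, G, Q_1, \dots, Q_n = Q$ witnesses familiarity of $m$ to $Q$ in $H$.

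The only delicate point — and the step I expect to be the main obstacle — is the bookkeeping around $G$: one must observe that $G$ is original in $H^r$ and hence can occur in a familiarity chain only as the top element $Q_0$, never as the child end of a link, and that in exactly that situation a pre‑critical delivery at $G$ in $H^r$ has to be reinterpreted, via the \HistoryEquivThm{}, as a pre‑critical delivery at $D$ in $H$, with the chain lengthened by the genuine $H$‑pair $D/G$. Once that case is dispatched, the rest is a routine appeal to the ``same broadcasts, same deliveries, same constellation timing'' consequences of the \HistoryEquivThm{} together with the inclusion $\processset_h^H \subseteq \processset_h^{H^r}$.
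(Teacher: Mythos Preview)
Your proposal is correct and follows the same route as the paper: push $P$, $Q$, and the broadcast of $m$ into $H^r$, invoke the Progress Property there, and pull familiarity back to $H$ via the \HistoryEquivThm{}. The paper's proof is terser—it simply asserts that familiarity transfers when $Q \ne G$ or the familiarity is post-critical, and handles the remaining pre-critical $Q = G$ case in one line—whereas you unpack the parent/child chain structure of Definition \ref{FamiliarMessageDef} explicitly and verify the $D,G$ prefix trick in detail; but the underlying argument is the same.
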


\begin{proof}
Let $P$ and $Q$ be two processes in $H$ that never halt, and let $m$ be a message that is originated by $P$. Then $P$ and $Q$, as processes in $H^r$ are also non-halting, and $P$ originates $m$ in $H^r$ as well. Since the Progress Property holds in $H^r$, $Q$ eventually becomes familiar with $m$ in $H^r$. If this happens post-critically or if $Q \ne G$ then $Q$ also becomes familiar with $m$ in $H$. If it happens pre-critically and $Q = G$ then after the critical time $G$ becomes familiar in $H$ with all the messages that $G$ was familiar with in $H^r$ pre-critically, since it does not halt. Therefore in all cases $Q$ eventually becomes familiar with $m$ in $H$.
\end{proof}

Transactional histories are artificial and do not arise naturally the way conforming histories do. In addition, the reduction of a transactional history is rarely transactional. Therefore we need the following lemma.

\begin{lem}
Let $H$ be a conforming history and let $\tr(H)$ be its transactional closure (see the \FaultThm{}, Theorem \ref{FaultThm}). If $\tr(H)$ has the Causal Order Property then so does $H$. If $\tr(H)$ has the Progress Property then so does $H$.
\end{lem}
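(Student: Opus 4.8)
The plan is to exploit fault equivalence (see the \FaultThm{}, Theorem \ref{FaultThm}) together with the structural fact that passing from $H$ to $\tr(H)$ only appends \emph{vacuum} events (Definition \ref{VacEventDef}) at the ``end of history.'' I would first record four bookkeeping observations, all immediate from the construction in the \FaultThm{} and from Lemma \ref{VacContLem}: (i) the components $\processset$ and $\processset_h$ and the view structure are literally unchanged, so a process is non-halting in $H$ iff it is non-halting in $\tr(H)$; (ii) every event of $\eventset^{\tr(H)}\setminus\eventset^H$ is a vacuum event, and every vacuum event occurs at a halting process, since a vacuum event lies among its own successors, which are all at halting processes; (iii) the new events ``do not precede any pre-existing event,'' so for every $e\in\eventset^H$ the causal past $\{f: f\preceq e\}$ is the same set in $H$ and in $\tr(H)$; and (iv) notification contents and message stamps are unchanged. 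A useful corollary of (ii): if $e\prec e'$ or $e\asymp e'$ with $e'$ non-vacuum, then $e$ is non-vacuum too, for otherwise $e'$ would be a successor of the vacuum event $e$.

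For the Causal Order half, take $P$, $Q$, $m$ as in Definition \ref{COP}, so $m$ is originated by $P$ and delivered at $Q$ in $H$; the origination event at $P$ and the delivery event at $Q$ are then events of $H$, hence of $\tr(H)$, carrying the same message (by (iv), $\orig{m}=P$ transfers). Applying the Causal Order Property of $\tr(H)$ at these two events yields, \emph{in} $\tr(H)$, the two familiarity conclusions. I would then argue that each of the relevant predicates---``$n$ was broadcast by $P$ before $m$,'' ``$n$ is familiar to $P$ at the broadcast of $m$,'' ``$n$ is familiar to $Q$ at the delivery of $m$''---is a function only of the causal past of an event of $H$ (the origination event at $P$, or the delivery event at $Q$, the latter preceded by $\joinevent{Q}\asymp\pnotifyevent{j(Q)}{P_{n-1}}$ and hence by the whole parent chain witnessing familiarity) together with notification contents and stamps. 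By (iii) and (iv) these data coincide in the two histories, so the conclusions hold in $H$.

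For the Progress half, let $P$, $Q$ be non-halting in $H$ and let $m$ be originated by $P$; by (i) the same holds in $\tr(H)$, so the Progress Property of $\tr(H)$ makes $m$ familiar to $Q$ in $\tr(H)$. It remains to push this familiarity back to $H$, and this is the step I expect to need the most care. If $m$ is familiar because it is delivered at $Q$, the delivery event is at the non-halting process $Q$, hence non-vacuum by (ii), hence an event of $H$, and we are done. If $m$ is familiar through a parent chain $P_0,\dots,P_n=Q$, I would walk the chain from the top down: $\joinevent{Q}$ is non-vacuum because it is at $Q$; each $\pnotifyevent{j(P_{i+1})}{P_i}\asymp\joinevent{P_{i+1}}$ (Parent Axiom), and each $\joinevent{P_i}$ precedes $\pnotifyevent{j(P_{i+1})}{P_i}$ inside $\eventset_{P_i}$ (Notification Order Axiom), so by the corollary of (ii) every $\joinevent{P_i}$ and every $\pnotifyevent{j(P_{i+1})}{P_i}$ stays non-vacuum; finally the delivery of $m$ at $P_0$, preceding $\pnotifyevent{j(P_1)}{P_0}$ inside $\eventset_{P_0}$, is non-vacuum as well. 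Hence every event of the witnessing chain lies in $\eventset^H$; by (iii) their $\prec$-relations and by (iv) the relevant notification contents agree with those in $\tr(H)$, so the same chain witnesses familiarity of $m$ to $Q$ in $H$.

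The main obstacle, as flagged, is exactly the parent-chain case of the Progress reduction: one must be careful that ``non-vacuum'' propagates correctly across the alternating steps (a contemporaneous $\pnotifyevent{}{}$/$\joinevent{}$ pair, then intra-process order back to the next join), and that $\tr(H)$ being merely fault equivalent to $H$ still guarantees that the notification events naming the parents occur in $H$ with the right $\nj{}{}$ contents and that the ``delivered before the join'' ordering is inherited. The remaining axiom checks for the lemma's hypotheses (that $H$ is conforming, that $\tr(H)$ is transactional) are already supplied by the \FaultThm{}, and everything else reduces to the four observations above traced back to Lemma \ref{VacContLem}.
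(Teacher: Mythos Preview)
Your proposal is correct and follows essentially the same approach as the paper: both rely on the structural fact that the passage from $H$ to $\tr(H)$ only appends events that do not precede any pre-existing $H$-event, so that any predicate determined by the causal past of an $H$-event has the same truth value in both histories, and any event at a non-halting process already lies in $H$.

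The one place where you are more careful than the paper is the Progress half. The paper's proof simply writes ``$m$ is delivered at $Q$ in $\tr(H)$'' and then observes that the vacuum loop only runs at halting processes; it tacitly conflates ``familiar'' with ``delivered'' and does not explicitly treat the parent-chain branch of Definition~\ref{FamiliarMessageDef}. Your walk down the chain via the Parent Axiom and the propagation of ``non-vacuum'' (equivalently, your observation (iii) anchored at $\joinevent{Q}$) fills in exactly that case. This is a genuine refinement rather than a different method: the paper's core device---new events sit strictly after all $H$-events, and non-halting processes acquire no new events---is the same device you use, just applied more completely.
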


\begin{proof}
We start with the Causal Order Property. Suppose that in $H$ a message $m$ originates at process $P$ and is delivered at process $Q$. Suppose that a message $n$ is either originated by $P$ prior to originating $m$, or else is familiar to $P$ at the time that it originates $m$. History $\tr(H)$ contains all the events of $H$, and by construction an $H$ dequeuing event is processed the same way in $\tr(H)$ as it is in $H$. Therefore $m$ originates at $P$ and is delivered at $Q$ in $\tr(H)$ and $n$ is originated at $P$ or is familiar to $P$ in $\tr(H)$ prior to the origination of $m$. Since $\tr(H)$ has the causal order property, $n$ must be delivered at $Q$ prior to the delivery of $m$, in $\tr(H)$. By construction $\tr(H)$ does not contain any new events that precede existing $H$ events. Therefore $n$ must be delivered at an original $H$ event, which means that $n$ is delivered in $H$ as well. Therefore $H$ has the Causal Order Property.

To prove the Progress Property suppose that in $H$ a message $m$ originates at a non-halting process $P$ and suppose that $Q$ is also a non-halting process, not necessarily distinct from $P$. Then in $\tr(H)$ both $P$ and $Q$ are non-halting and $P$ originates $m$. Since $\tr(H)$ has the Progress Property $m$ is delivered at $Q$ in $\tr(H)$, as part of the processing of some event $e$. If $e$ is an original $H$ event then $m$ is delivered in $H$ and we are done. Otherwise $e$ is a vacuum event that is generated by the vacuum loop. But the vacuum loop is only applied to halting processes, and $Q$ does not halt. This concludes the proof.
\end{proof}

\subsection{Stunting}

A {\em join-free history} is a history $H$ where no processes join. In other words all the processes in $\processset^H$ are members of view zero, and all view changes are removals of existing members. The discussion so far implies that if only we could prove that all join-free transactional histories enjoy the Causal Order Property and the Progress Property, then any finite-join conforming history would enjoy these properties as well. We are going to prove exactly that in the next section. As far as the Restricted Progress Theorem is concerned, nothing more is claimed. However, the Causal Order Theorem claims that the Causal Order Property holds for all conforming histories, not only finite-join ones. We plug this gap by introducing the notion of {\em stunting}.

\begin{defn}
Let $H$ be a transactional history and let $0 < v < \numview{}^H$ be any view change in $H$. Then the {\bf stunting} of $H$ at view $v$, denoted $H^{<v}$, is the stunted history (see Definition \ref{StuntedHistoryDef}) that is obtained from $H$ by taking only that part of $H$ that precedes the view change constellation $\ell_v$. We now define the stunting as follows:
\begin{align*}
\processset^{H^{<v}} & = \bigcup_{u<v} \viewset^H_u \\
\processset_h^{H^{<v}} & = \processset^{H^{<v}} \\
\numview^{H^{<v}} & = v \\
\viewset_i^{H^{<v}} & = \viewset_i^H \quad \text{for all } i < v \\
\packetset^{H^{<v}} & = \left\{ k \in \packetset^H | \view(\psendevent{k}) < v \right\} \\
\channel{P}{Q}^{H^{<v}} & = \channel{P}{Q}^H \,\bigcap\, \packetset^{H^{<v}} \quad \text{for all } P,Q \in \processset^{H^{<v}} \\ 
\eventset^{H^{<v}} & = \left\{ e \in \eventset^H | \view(e) < v \right\} \\
\prec^{H^{<v}} & = \prec^H \,\bigcap\, \left( \eventset^{H^{<v}} \times \eventset^{H^{<v}} \right) \\ 
\notificationset^{H^{<v}} & = \left\{ \notify{i}{P} \in \notificationset \vert i < v \right\} \\
\requestset^{H^{<v}} & = \left\{ r \in \requestset \vert \pappevent{r} \in \eventset^{H^{<v}} \right\}
\end{align*}
\end{defn}
All the packets and notifications have the same faulting characteristics that they inherit from $H$. Since $H$ is transactional there are no dropped packets or notifications.

The proof that $H^{<v}$ is a conforming history is tedious but straightforward and we omit it. Moreover, if all the processes in $H^{<v}$ start with the same internal state they had in $H$ then $H^{<v}$ becomes a history of the execution of the \cbcast{} protocol and at any point in time up to $\ell_v$ the internal state in the processes in $H^{<v}$ is identical to the internal state of the same processes in $H$.

\begin{thm}
Assume that the Causal Order Property holds for all join-free transactional histories. Then it holds for all conforming histories.
\end{thm}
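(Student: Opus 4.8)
The plan is to peel off, in two successive reductions, the two features that separate a general conforming history from the hypothesis: faults and an unbounded number of joins.

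First I would dispose of faults. Given an arbitrary conforming history $H$, I invoke the \FaultThm{} (Theorem \ref{FaultThm}), which applies since \cbcast{} is vacuum convergent, to obtain the transactional closure $\tr(H)$, which is fault equivalent to $H$. By the lemma immediately preceding this theorem (if $\tr(H)$ has the Causal Order Property then so does $H$) it suffices to prove the Causal Order Property for $\tr(H)$. Hence I may — and will — assume $H$ is transactional, keeping in mind that it may still contain infinitely many join notifications.

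Second, I would localize each instance of the Causal Order Property to a finite window and then apply stunting. Fix a message $m$ originating at a process $P$ and delivered at a process $Q$ at an event $e^\ast \in \eventset^H_Q$, together with a message $n$ that is either broadcast by $P$ before $m$ or familiar to $P$ at the broadcast of $m$. The crucial observation is that every event witnessing this instance has view at most $\view(e^\ast)$: the origination of $m$ at $P$ precedes $e^\ast$ and so, by Lemma \ref{KOrderLem}, has view $\le \view(e^\ast)$; the origination of $n$ at $P$, or the delivery of $n$ at $P$, precedes the origination of $m$ in the order on $\eventset_P$; and if $n$'s familiarity to $P$ is witnessed by a parent chain $P_0,\dots,P_k=P$, then the join views along it are strictly increasing and bounded above by $j(P) = \view(\joinevent{P}) \le \view(e^\ast)$, so the chain's join notifications and the delivery of $n$ at $P_0$ all have view $< \view(e^\ast)+1$. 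Setting $v = \view(e^\ast)+1$ and passing to the stunting $H^{<v}$, I use the fact (asserted together with the definition of stunting) that $H^{<v}$ is a conforming \cbcast{} history whose process states coincide with those of $H$ up to the constellation $\ell_v$; consequently every event of view $< v$ is processed identically, so in $H^{<v}$ the message $m$ still originates at $P$, is still delivered at $Q$ at $e^\ast$, and $n$ is still broadcast by $P$ before $m$ (respectively, still familiar to $P$ at the broadcast of $m$). Since $H^{<v}$ is stunted it has finitely many processes, hence finitely many joins, so by the discussion preceding this theorem — iterating the History Reduction Mapping and the \FaultThm{} by induction on the number of joins — the assumed Causal Order Property for join-free transactional histories extends to $H^{<v}$. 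Applying it to our instance gives that $n$ is already familiar to $Q$ at $e^\ast$ in $H^{<v}$; the witnessing events belong to $H^{<v} \subseteq H$ and occur at the same constellations, so $n$ is already familiar to $Q$ at the delivery of $m$ in $H$ as well, which is exactly what the Causal Order Property for $H$ requires.

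The main obstacle is the view-boundedness step: one must verify that nothing appealed to in the definition of familiarity — in particular the delivery at the head of a parent chain and the join notifications strung along it — can reach arbitrarily high views, so that a single stunting $H^{<v}$ faithfully records the whole instance; this rests on the strict monotonicity of join views along a parent chain together with Lemma \ref{KOrderLem}. A secondary, more mechanical point, which the excerpt leaves to the reader, is confirming that $H^{<v}$ really is conforming and really does reproduce the \cbcast{} computation of $H$ below view $v$.
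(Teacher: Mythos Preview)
Your proposal is correct and follows essentially the same route as the paper: reduce to a transactional history via the \FaultThm{}, then stunt at a view just beyond the delivery event to obtain a finite-join conforming history, apply the already-established finite-join case, and pull the conclusion back. You are in fact more explicit than the paper on two points --- making the passage through $\tr(H)$ explicit (the paper's stunting is only defined for transactional $H$, so this is the right move) and bounding the views of the parent-chain witnesses for familiarity --- though you should note, as the paper does, that the finite-join case is handled directly without stunting, since your choice $v=\view(e^\ast)+1$ need not satisfy $v<\numview^H$ when $\numview^H$ is finite.
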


\begin{proof}
We have already shown that under the assumption the property holds for all finite-join conforming histories. Let $H$ be an infinite-join conforming history and let $P$ and $Q$ be processes in $H$. Suppose that $P$ originates a message $m$ and that $m$ is delivered at $Q$. Assume also that a message $n$ is either originated by $P$ prior to the origination of $m$ or that $n$ is familiar to $P$ at the time that it originates $m$. Since $H$ is an infinite-join history, there is a view $v$ for which the view change notification event occurs after $Q$ delivers $m$.

Look at the $v$ stunting of $H$. Since $H^{<v}$ has the same state as $H$ up to time $\ell_v$ and since $m$ is delivered at $Q$ before time $\ell_v$, $m$ is also delivered at $Q$ at the same time in $H^{<v}$. The origination of $m$ by $P$ occurs before $Q$ delivers $m$, so it also occurs before $\ell_v$ and therefore occurs in $H^{<v}$.

For the same reasons message $n$ is either originated by $P$ prior to $m$ or is familiar to $P$ at the time $m$ is originated, in the stunted history. Since the stunted history is a finite-join conforming history, the Causal Order Property holds there and therefore $Q$ is familiar with $n$ at the time that it delivers $m$. Therefore the same happens in $H$ and we are done.
\end{proof}

All we have left to do is prove the Causal Order Theorem (Theorem \ref{CausalOrderThm}) and the Restricted Progress Theorem (Theorem \ref{RestrictedProgressThm}) in the case of join-free transactional histories. For the rest of the paper we will consider only such histories. In particular, we will assume without further comment that every view notification is a removal, and we will not refer to message familiarity (see Definition \ref{FamiliarMessageDef}) since it is now synonymous with message delivery. Any definitions (and most importantly, the notion of effective route) will only be assumed to make sense for join-free transactional histories.

In a join-free history of \cbcast{} the values of \lset{}, \cset{} and \mset{} have the relations
\begin{align*}
\lset{} & = \cset{}  \\
\lset{} & \subset \mset{}
\end{align*}
Therefore we will assume that all live processes are contacted and are members of the current view.

\subsection{The Central Lemma}
Both theorems rely heavily on the a lemma which we refer to as the Central Lemma, and which we will introduce after a few definitions.

\begin{defn}
The {\bf installation gap} of view $v$ at process $P$ is the gap between $v$ and the highest view known to $P$ at the time that $v$ is installed at $P$. It is the value of $\vgap$ when the view $v$ is installed at process $P$. More precisely, it is the value of $\vgap$ at labeled step \ref{TTI:Installation} of the \ref{TryToInstall} procedure when $\cview = v$. The installation gap of a view $v$ at process $P$ is denoted by $\gap_v(P)$.
\end{defn}

\begin{lem}
\label{InstallHeightLem}
Let $P$ and $Q$ be processes and suppose that $P$ installs view $v$ when $Q$ is still alive ($Q \in \lset{}$). Then $P$ must have received a $\pkf{v + \gap_v(P)}$ from $Q$ prior to installing the view. 
\end{lem}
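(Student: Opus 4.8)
The plan is to trace what must happen at $P$ for view $v$ to be installed, using the \ref{TryToInstall} procedure as the source of truth. The key observation is that installation only happens inside the \texttt{while} loop of \ref{TryToInstall}, and that loop is gated by the check at labeled step \ref{TTI:CheckFlush}, which requires $\fvec{}[\fp{pid}] \ge \cview{} + \vgap{}$ for every $\fp{pid} \in \lset{}$. At the moment of installation of view $v$ we have $\cview{} = v$ and $\vgap{} = \gap_v(P)$ by the definition of the installation gap, so $\cview{} + \vgap{} = v + \gap_v(P)$. The gating check was passed just before this loop iteration began (and the loop body between successive installations does not raise $\fvec{}[\cdot]$), so at that point we must have had $\fvec{}[Q] \ge v + \gap_v(P)$ for the process $Q$, since $Q \in \lset{}$ by hypothesis and $Q$ is not removed by any view change between the check and the installation of $v$ (that would require $Q \notin \lset{}$). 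So first I would argue that $\valuepost{\fvec{}[Q]}{P}{e} \ge v + \gap_v(P)$ at the relevant installation event $e$.

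Next I would use Lemma \ref{OmnibusCBCASTLem}, specifically parts (\ref{OCL:gf}) and (\ref{OCL:lset}), together with the fact that in a join-free history $\lset{} = \cset{}$, to conclude that $\fvec{}[Q] \le \gvec{}[Q] \le \cview{} + \vgap{} = v + \gap_v(P)$ at that point. Combined with the lower bound from the previous paragraph this pins $\fvec{}[Q]$ to be exactly $v + \gap_v(P)$ at the installation of $v$. Then I would observe that $\fvec{}[Q]$ at $P$ only changes value (specifically, only increases) when $P$ dequeues a flush packet from $Q$ (or a donation/co-donation carrying flush information, but in a join-free history there are no donations), via the \ref{ReceiveFlush} procedure which sets $\fvec{}[Q] = \fp{view}$ to the carried height; removal of $Q$ is excluded since $Q$ stays live. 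Since $\fvec{}[Q]$ is initialized to $0$ at \ref{StartCluster} and equals $v + \gap_v(P) > 0$ at installation, there must have been at least one such flush dequeuing, and the last one before installation carried exactly the value $v + \gap_v(P)$ — i.e.\ $P$ received a $\pkf{v + \gap_v(P)}$ from $Q$ before installing view $v$, as claimed.

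The main obstacle I anticipate is carefully justifying the claim that the gating check at \ref{TTI:CheckFlush} truly reflects the state of $\fvec{}[Q]$ at the precise installation iteration, rather than at some earlier iteration of the \texttt{while} loop. Because \ref{TryToInstall} installs \emph{all} pending views in one pass, the check is performed once at the top, and then several views may be installed in succession; I need that no step inside the loop body (removing obsolete messages, updating \mset{}, resetting \vtime{}, invoking \ref{ScanCall}) modifies $\fvec{}[]$, which is a straightforward code inspection, and that $Q$ remains in \lset{} through all of these installations, which follows because $Q \in \lset{}$ at the \emph{final} state (the hypothesis is about the moment $P$ installs $v$) and \lset{} only shrinks by removing processes whose removal view is being installed. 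A secondary subtlety is the monotonicity of $\fvec{}[Q]$ and the bookkeeping that the \emph{last} flush received carries the value attained at installation; this uses the same monotonicity argument already invoked repeatedly in the Omnibus Lemma proof (e.g.\ for \ref{ReceiveFlush}), so I would cite Lemma \ref{OmnibusCBCASTLem}(\ref{OCL:gfmon}) and direct code inspection rather than re-deriving it.
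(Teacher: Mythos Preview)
Your proposal is correct and takes essentially the same approach as the paper: establish that $\fvec{}[Q] = v + \gap_v(P)$ at the moment of installation (you derive both bounds explicitly; the paper just says ``one can verify''), and then argue that this value must have been set by receipt of a $\pkf{v+\gap_v(P)}$ packet from $Q$. The paper compresses your second and third paragraphs into a direct citation of Lemma~\ref{FCountLem} (which is exactly the bookkeeping you re-derive by code inspection in the join-free setting) together with monotonicity from Lemma~\ref{OmnibusCBCASTLem}(\ref{OCL:gfmon}); your more explicit treatment of the multi-view installation loop is a nice elaboration of what the paper leaves implicit.
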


\begin{proof}
Looking at the \ref{TryToInstall} procedure one can verify that $\fvec{}[Q] = v + \gap_v(P)$ when view $v$ is installed at $P$. The claim now follows from Lemma \ref{FCountLem} and from the monotonicity of $\fvec{}[Q]$ (Lemma \ref{OmnibusCBCASTLem}(\ref{OCL:gfmon})).
\end{proof}

\begin{defn}
\label{DefEffective}
Due to forwarding, each message can be received and acknowledged multiple times by a process, but the message is moved into \rset{} and \fque{}[] at most once (See labeled step \ref{RMP:NoDuplicates} of \ref{ReceiveMessage}). Additionally, since a forwarding queue becomes empty after forwarding (see labeled step \ref{RN:DiscardFque} of the \ref{RemovalNotification} procedure), a sender will only send a message once to any target. Therefore for any message that is received by a target there is at most one {\bf effective sender} - namely the sender that managed to get its message packet not just acknowledged but also appended to the forwarding queue of the receiver - and only one {\bf effective packet} that is sent by the effective sender. A message gets forwarded only as a result of a notification of the removal of the effective sender. As a result, for every process that receives a message, there is at most one {\bf effective route} of retransmissions of effective packets leading from the message originator to the process.
\end{defn}

\begin{lem}
\label{EffectiveDeathLem}
Let $n$ be a message and let
$$
\orig{n} = R_0 \rightarrow R_1 \rightarrow R_2  \rightarrow R_3 \rightarrow \dotsb \rightarrow R_k = T \quad \text{ where } k \ge 0
$$
Be the effective route of $n$ from its originator $R_0$ to process $T$. Then for every $0 \le i <= k$ the process $R_i$ is a member of view $\mview{n}$ and
$$
\vdeath{R_0} < \vdeath{R_1} < \dotsb < \vdeath{R_{k-1}}
$$
\end{lem}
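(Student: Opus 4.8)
The plan is to analyze the effective route of $n$ by following how a message gets forwarded. The key observation is the definition of an effective route: a message is forwarded only when a process receives a removal notification for the effective sender of the message. So each arrow $R_i \rightarrow R_{i+1}$ in the route records the event where $R_{i+1}$ forwards $n$ because it learned that $R_i$ was removed. First I would establish the view-membership claim: each $R_i$ holds a copy of $n$ in its forwarding queue, and the message has a fixed view stamp $\mview{n}$. Since the protocol discards obsolete messages from \rset{} and \fque{}[] at every view installation (see labeled steps \ref{TTI:RemoveRset} and \ref{TTI:RemoveFset} of \ref{TryToInstall}), for $R_i$ to still carry $n$ when it performs its relevant action, $R_i$ must not have installed any view beyond $\mview{n}$ at that time; combined with the \AxPiggyback-type argument and Lemma \ref{OmnibusCBCASTLem}(\ref{OCL:height}) this forces $\mview{n}$ to be within $R_i$'s view interval, i.e. $R_i \in \viewset_{\mview{n}}$.

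The heart of the argument is the chain of strict inequalities on removal views. For each $i$ with $0 \le i < k-1$, process $R_{i+1}$ forwards $n$ (thereby becoming the effective sender on the link to $R_{i+2}$) when it dequeues the removal notification $\nr{R_i}$. This forwarding happens at $R_{i+1}$ at a transaction whose view is $\vdeath{R_i}$, and crucially $R_{i+1}$ must still be a group member at that point, so $R_{i+1} \in \viewset_{\vdeath{R_i}}$, which by the \AxViewI{} gives $j(R_{i+1}) \le \vdeath{R_i} < \vdeath{R_{i+1}}$. Since $j(R_{i+1}) = 0$ (we are in a join-free history), in particular $\vdeath{R_i} < \vdeath{R_{i+1}}$. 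Running this over $i = 0, 1, \dots, k-2$ yields the displayed chain
$$
\vdeath{R_0} < \vdeath{R_1} < \dotsb < \vdeath{R_{k-1}}.
$$
The reason the inequality only runs up to $R_{k-1}$ and not $R_k = T$ is that $T$ receives and takes up $n$ but is not required to forward it, so $T$ need not ever process a removal notification for $R_{k-1}$ and need not be removed at all.

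The step I expect to be the main obstacle is pinning down precisely when each $R_{i+1}$ must still be a member of the group, i.e. justifying $R_{i+1} \in \viewset_{\vdeath{R_i}}$ rather than merely $R_{i+1} \in \viewset_{\mview{n}}$. The subtlety is that the forwarding of $n$ by $R_{i+1}$ occurs inside the \ref{RemovalNotification} procedure for $R_i$, which runs at the constellation $\ell_{\vdeath{R_i}}$ — a view that $R_{i+1}$ is aware of and processing; since a process only processes the removal notification $\nr{R_i}$ while it is itself still in the group (it has not yet been removed), we get $\vdeath{R_{i+1}} > \vdeath{R_i}$. I would make this rigorous by appealing to the \AxViewI{} and the \AxNotEventII{} (notifications dequeued in order) to argue that $R_{i+1}$ has a notification event for view $\vdeath{R_i}$ strictly before its own halt event, hence $\vdeath{R_i} < \vdeath{R_{i+1}}$, being careful that in a join-free history every process is original so $j(R_{i+1}) = 0$ and the effective-route links are genuinely triggered by removal notifications and nothing else. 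The remaining bookkeeping — that $R_i$ genuinely carries $n$ in \fque{}[] up to the forwarding moment, and that the view stamp is unchanged along the route — is routine given Definition \ref{DefEffective} and the obsolete-message cleanup in \ref{TryToInstall}.
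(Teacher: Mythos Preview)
Your proposal is correct and matches the paper's argument for the chain of removal views: forwarding is triggered by the removal notification of the effective sender, and such a notification only reaches processes still in the group, giving $\vdeath{R_i} < \vdeath{R_{i+1}}$ for $i \le k-2$. For the membership claim the paper takes a shorter route than your obsolescence-plus-Piggyback argument: it simply observes that messages are only originated or forwarded by members of the message view \emph{to} members of the message view, citing the target sets at step \ref{BM:Members} of \ref{BroadcastMessage} and step \ref{RRN:Members} of \ref{RemovalNotification}.
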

\begin{proof}
Messages are only originated or forwarded by members of the message view to members of the message view (see labeled step \ref{BM:Members} of the \ref{BroadcastMessage} procedure and labeled step \ref{RRN:Members} of the \ref{RemovalNotification} procedure). That proves the first claim. Forwarding is triggered by the removal notification of the effective sender, and view change notifications are only queued to members of that view. That proves the second claim.
\end{proof}

\begin{lem}
\label{UniqueEffectiveLem}
When a message is received by a process for the first time it is moved into \rset{} and \fque{}[]. Therefore for every message $n$ that is received by a process $R$ there is exactly one effective route from the originator of $n$ to $R$.
\end{lem}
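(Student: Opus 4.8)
The plan is to reduce the "exactly one effective route" claim to the take-up claim, prove the take-up claim by inspecting the \ref{ReceiveMessage} procedure, and spend the bulk of the effort on its one delicate discard branch — obsolescence.

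First I would dispatch the second sentence assuming the first. Definition \ref{DefEffective} already supplies \emph{at most} one effective route, so it suffices to exhibit one. If $R$ receives $n$, consider the first packet carrying $n$ that $R$ dequeues, say from a process $S$. By the take-up claim this packet is appended to $\fque[S]$, so $S$ is the effective sender of $n$ at $R$ (unique, as noted in Definition \ref{DefEffective}) and $S \to R$ is the final leg of an effective route. Now $S$ either originated $n$ or itself dequeued a packet carrying $n$; in the latter case I would recurse. The recursion terminates: Lemma \ref{EffectiveDeathLem} forces the removal views along an effective route to increase strictly, and a join-free history has finitely many views, so only finitely many hops occur (equivalently, one may induct on the constellation label of the first receipt, which is legitimate since $\labelspace$ is very well founded). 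Prepending the route obtained for $S$ to the leg $S \to R$ completes the construction.

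Second I would prove the take-up claim. When $R$ dequeues a packet $\pkm{n}$, \ref{ReceiveMessage} discards it only when (i) $n \in \rset$, (ii) $\mview n = \cview(R)$ and $\vtime[\orig n] \ge \mvt n[\orig n]$, or (iii) $\mview n < \cview(R)$. Case (i) presupposes that $n$ was previously inserted into \rset at labeled step \ref{RMP:ReceiveEntry}, which happens only on a receipt of $n$, contradicting minimality. In case (ii), since the last installation of view $\cview(R)$ the coordinate $\vtime[\orig n]$ has been advanced only by deliveries in \ref{ScanCall} of current-view messages originated by $\orig n$, and those are delivered in strictly increasing order of their $\orig n$-coordinate; hence $\vtime[\orig n] \ge \mvt n[\orig n]$ means $R$ has already delivered the current-view message originated by $\orig n$ whose $\orig n$-coordinate equals $\mvt n[\orig n]$ — and that message is $n$ itself, since within a fixed view an originator's successive broadcasts carry strictly increasing values of $\mvt n[\orig n]$, so the triple originator/view/self-coordinate pins down the message. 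Delivery of $n$ again requires a prior receipt, a contradiction.

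The remaining case (iii), obsolescence, is the main obstacle. I would show that a view-$w$ message which is ever received by $R$ is received by $R$ before $R$ installs view $w+1$, so that a first receipt can never see $\mview n < \cview(R)$. The originator of $n$ multicasts $n$ to every member of view $w$ that it has not yet seen removed, which includes $R$ (a member of view $w+1$, hence of view $w$), and each subsequent forwarder on the effective route does the same during its removal-processing step; so some packet carrying $n$ is addressed to $R$ along the entire effective route. If $R$ had not yet received $n$ it would never have acknowledged $n$, so by Lemma \ref{OmnibusCBCASTLem} (the wait-set versus send-flush-height equivalence) together with Lemma \ref{FCountLem} the process currently holding $n$'s forwarding obligation could not have sent $R$ a flush of height exceeding $w$; by Lemma \ref{EffectiveDeathLem} that process is still alive whenever $R$ would need its flush, so by Lemma \ref{InstallHeightLem} and the FIFO axiom $R$ could not assemble the flush heights that \ref{TryToInstall} demands in order to install view $w+1$ — a contradiction. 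This is an induction along the effective route, finite by the strictly increasing removal views. I expect the precise bookkeeping of which process at any moment owes $R$ both a copy of $n$ and a sufficiently high flush, tracked across successive removals and forwardings, to be the fiddliest part, since it is exactly the race the flush protocol is built to neutralize.
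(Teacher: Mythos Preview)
Your proposal is correct and follows essentially the same route as the paper. The paper likewise disposes of the duplicate and already-delivered branches immediately and concentrates on obsolescence; its version of your ``fiddly bookkeeping'' is to look at the effective route from $\orig{n}$ to the \emph{sender} $S$ of the obsolete packet (which exists since $S$ had $n$ in a forwarding queue), and to let $i$ be the least index such that $R$ does \emph{not} dequeue $\nr{G_i}$ before installing $\mview{n}+1$. Such an $i$ exists because $R$ cannot have dequeued $\nr{S}$ yet (by the \CAxPacket{}), and the argument then splits on $i=0$ versus $i>0$: in either case $G_i$ must stabilise $n$ with respect to $R$ (via an acknowledgment, since $R$ is not yet removed from $G_i$'s perspective) before it can send the $\pkf{>\mview{n}}$ packet that $R$ needs from $G_i$ to install $\mview{n}+1$. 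This is exactly the ``which process currently owes $R$ both a copy of $n$ and a high flush'' that you flagged; the index $i$ is how the paper pins it down.
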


\begin{proof}
Following the flow of \ref{ReceiveMessage} we see that there are three cases where a message can be discarded without being moved into \rset{} and \fque{}[]. The second and third cases occur when the message has already been delivered and when the message is still in \rset{}. Neither of these cases can occur the first time the message is received. The first case occurs when the message is obsolete. To prove the lemma we have to show that in this case as well the message is not received for the first time. Suppose therefore that a process $S$ sent a message packet $\pkm{n}$ to process $R$, which then discarded the message as obsolete. Either $S$ is the originator of the message or it forwards $n$ out of its \fque{}[]. Either way there exists in this case an effective route

\begin{equation}
G = G_0 \rightarrow G_1 \rightarrow \dotsb \rightarrow G_k = S \quad\text{ where } k \ge 0
\end{equation}
Where $G = \orig{n}$ is the originator of $n$. Notice that $k = 0$ covers the case where $S = G$, i.e. the case where $S$ is the originator of $n$.

From Lemma \ref{EffectiveDeathLem} we know that $\vdeath{G_0} < \vdeath{G_1} < \dotsb < \vdeath{G_{k-1}}$. In addition $\vdeath{G_{k-1}} < \vdeath{S}$ because $S$ forwards $n$ (when $k > 0$) as a result of the removal of the effective sender $G_{k-1}$. We also know that $\mview{n} < \vdeath{G}$ because $G$ originates $n$. So for all $i$, we know that $\vdeath{G_i} > \mview{n}$ and therefore $G_i \in \viewset_{\mview{n}}$.

By assumption, message $n$ was obsolete when received by $R$ from $S$. This means that $R$ had already installed view $\mview{n}+1$.
It follows from Lemma \ref{InstallHeightLem} that process $R$
must have received a $\pkf{> \mview{n}}$ packet from every member of $\mview{n}$ for which $R$ had not yet received a removal notification. In other words, before $R$ can install $\mview{n}+1$ it must have received either a $\pkf{> \mview{n}}$ packet or a $\nr{P}$ notification from/about every member $P \in \mview{n}$. This is true in particular for the processes in the effective route above, all of whom are members of $\mview{n}$.

We know that $R$ does not receive a $\nr{S}$ notification before installing view $\mview{n}+1$. This is because it receives the message packet $\pkm{n}$ from $S$ after installing the view, and by the \CAxPacket{} this can only happen if $R$ still considers $S$ to be alive. Let $i$ be the index of the first process $G_i$ in the effective route for which $R$ does not receive a $\nr{G_i}$ notification prior to installing view $\mview{n}+1$.

If $i=0$, then $G_i$ is the originator of $n$. Since $R$ does not receive a $\nr{G_0}$ notification, it must receive a $\pkf{> \mview{n}}$ packet from $G_0$. But $G_0$ only sends such a packet when $\cview{} + \vgap{} > \mview{n}$. Being the originator of $n$, process $G_0$ broadcasts $n$ while $\cview = \mview{n}$ and $\vgap = 0$ (messages are not broadcast when $\vgap > 0$). This means that $R$ receives $n$ from $G_0$ before receiving the flush packet and therefore before installing view $\mview{n}+1$, so the obsolete $\pkm{n}$ packet from $S$ is not the first receipt of $n$.

If $i > 0$, then $R$ receives $\nr{G_{i-1}}$ prior to installing view $\mview{n}+1$. When that happens, $\cview + \vgap = \vdeath{G_{i-1}}$ (See labeled step \ref{RRN:NewView} of \ref{RemovalNotification}). It also implies that $\vdeath{R} > \vdeath{G_{i-1}}$. Afterwards, $R$ does not proceed to install $\mview{n}+1$ before receiving $\pkf{\ge \vdeath{G_{i-1}}}$ from all the surviving members of $\mview{n}$.

Now look at process $G_i$. it forwards its $\pkm{n}$ when it receives $\nr{G_{i-1}}$. It then places $n$ in its \wset{}, together with an instability set that includes $R$ (since $\vdeath{R} > \vdeath{G_{i-1}}$). It then waits for \wset{} to clear before sending its first $\pkf{\ge \vdeath{G_{i-1}}}$. We know that this flush packet is eventually sent by $G_i$, by definition of $i$, so $n$ does indeed leave \wset{}. We also know that the flush packet is sent while $G_i$ still considers $R$ to be alive, according to the \AxProcII{}. Therefore $n$ must leave \wset{} as a result of $G_i$ receiving a $\pks{n}$ packet from $R$. This means that $R$ receives $n$ from $G_i$ before installing view $\mview{n}+1$, so the obsolete $\pkm{n}$ packet from $S$ is not the first receipt of $n$ in this case either.
\end{proof}

\begin{lem}[Central Lemma]
If process $P$ installs view $v_m$ with $\gap_{v_m}(P) = 0$, then any message $n$ with $\view(n) < v_m$ that is received by $P$ is also received by all other processes that install view $v_m$. 
\end{lem}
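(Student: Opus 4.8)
The plan is to follow the architecture of the proof of Lemma~\ref{UniqueEffectiveLem}, using the effective route of $n$ into $P$ as the backbone. Write $v_n=\view(n)$, so $v_n<v_m$; by Lemma~\ref{EffectiveDeathLem} this is also $\mview{n}$. By Lemma~\ref{UniqueEffectiveLem} the receipt of $n$ at $P$ determines a unique effective route
$$
\orig{n}=R_0\rightarrow R_1\rightarrow\dotsb\rightarrow R_k=P ,
$$
with every $R_i$ a member of $\viewset_{v_n}$ and $\vdeath{R_0}<\vdeath{R_1}<\dotsb<\vdeath{R_{k-1}}$ (Lemma~\ref{EffectiveDeathLem}). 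Fix an arbitrary process $Q$ that installs $v_m$; I must show $Q$ receives $n$. Since views install in order, $Q$ installs every view from $0$ through $v_m$; in particular $Q$ installs $v_n+1$, and, being a member of $\viewset_{v_m}$ and hence (join-freeness and the \AxViewI{}) of every earlier view, $Q\in\viewset_{v_n}$.

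First I would extract the installation dichotomy. Applying Lemma~\ref{InstallHeightLem} at the moment $Q$ installs $v_n+1$: for each $R_i$, before that installation $Q$ has either already processed $\nr{R_i}$, or received from $R_i$ a flush packet of height $v_n+1+\gap_{v_n}(Q)>v_n$ --- here reading the height off $R_i$'s $\sfh{}$ via Lemma~\ref{FCountLem} and using Lemma~\ref{OmnibusCBCASTLem}(\ref{OCL:gfmon}) for monotonicity of $\fvec{}[]$. Let $i_0$ be the least index for which the first alternative fails, so $Q$ holds a $\pkf{>v_n}$ from $R_{i_0}$ received before it installs $v_n+1$. If $i_0=0$, then $R_0$ originated $n$ with $\cview{}=v_n$ and $\vgap{}=0$, so the original $\pkm{n}$ was multicast to its then-current $\cset{}=\lset{}=\viewset_{v_n}\ni Q$, and $Q$ receives $n$. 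If $i_0>0$, I would run the argument of Lemma~\ref{UniqueEffectiveLem} in reverse: when $R_{i_0}$ originated or forwarded $n$ it placed a record for $n$ in its \wset{} with an instability set containing every process then alive, and $Q$ is among them (this step needs $Q\in\viewset_{\vdeath{R_{i_0-1}}}$, i.e.\ $\vdeath{R_{i_0-1}}<\vdeath{Q}$, which follows from $\vdeath{R_{i_0-1}}<\vdeath{R_{k-1}}$ together with the location of $\vdeath{R_{k-1}}$ and $\vdeath{P}$ relative to $v_m$). Since \ref{CheckFlush} emits a flush only when \wset{} is empty, the $n$-record left $R_{i_0}$'s \wset{} strictly before the $\pkf{>v_n}$ was queued to $Q$; and by the \AxProcII{} that queuing precedes $\pnotifyevent{\vdeath{Q}}{R_{i_0}}$ whenever the latter exists, so the $Q$-instability of $n$ at $R_{i_0}$ cannot have been discharged by a removal notification for $Q$ --- it was discharged by a $\pks{n}$ arriving from $Q$, which means $Q$ received $n$.

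The main obstacle will be the bookkeeping that makes $i_0$ well defined and disposes of the boundary case $i_0=k$, where $R_k=P$ need not carry a record of $n$ in its own \wset{} and so the previous paragraph's argument does not apply as stated. This is precisely the step that consumes the hypothesis $\gap_{v_m}(P)=0$: the quantity $\cview{}+\vgap{}$ is non-decreasing at $P$ (Lemma~\ref{KOrderLem} with Lemma~\ref{OmnibusCBCASTLem}(\ref{OCL:height})) and equals $v_m$ exactly at the installation of $v_m$, so every event of $P$ before that installation --- in particular $P$'s effective receipt of $n$ from $R_{k-1}$ --- has view at most $v_m$; combined with the \CAxPacket{} (which forces that receipt to precede $\pnotifyevent{\vdeath{R_{k-1}}}{P}$) and Lemma~\ref{EffectiveDeathLem}, this pins $\vdeath{R_{k-1}}$ and $\vdeath{P}$ against $v_m$ tightly enough to rule out $Q$ having recognised the whole route as removed before it reaches $v_n+1$, and to reduce the $i_0=k$ case either to a direct multicast $R_{k-1}\rightarrow Q$ of $\pkm{n}$ (its target set being $\cset{}=\lset{}=\viewset_{\vdeath{R_{k-2}}}$ with $Q$ a member) or, when $k=1$, to the originator's broadcast. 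Everything else is routine inspection of the \cbcast{} pseudo-code.
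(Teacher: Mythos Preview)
Your plan has a genuine gap in exactly the place you flag as ``the main obstacle.'' You pivot the whole argument on $Q$'s installation of $v_n+1$ and define $i_0$ as the first route index for which $Q$ did \emph{not} process $\nr{R_i}$ before that installation. But $i_0$ need not exist: $Q$ can have processed the removal of every $R_i$, including $R_k=P$, before installing $v_n+1$. Concretely, take $v_n=0$, $v_m=1$, route $S\to P$ with $\vdeath{S}=1$, $\vdeath{P}=2$, and $\vdeath{Q}\ge 3$; let $Q$ process $\nr{S}$ and then $\nr{P}$ (so $\cview{}+\vgap{}=2$, $\lset{}=\{Q\}$) and install views $1,2$ off its own self-flush, with $\gap_1(Q)=1$. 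Here $\gap_{v_m}(P)=0$ is perfectly compatible with the scenario, yet both $R_0$ and $R_1=P$ fall into your ``removal'' alternative and no $i_0$ is available. Your last paragraph claims that $\gap_{v_m}(P)=0$ ``pins $\vdeath{R_{k-1}}$ and $\vdeath{P}$ against $v_m$ tightly enough to rule out'' this, but the chain you give (effective receipt has view $\le v_m$, receipt precedes $\pnotifyevent{\vdeath{R_{k-1}}}{P}$) only bounds $\vdeath{R_{k-1}}$ from \emph{below}, not above, and says nothing about $Q$'s gap at $v_n+1$.

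The paper's proof avoids this by anchoring the argument at $P$'s installation of $v_m$ rather than at $Q$'s installation of $v_n+1$. Because $\gap_{v_m}(P)=0$, $P$ must hold a $\pkf{v_m}$ from \emph{every} member of $\viewset_{v_m}$; one then takes the first $R_i$ in the route (with $P$ appended as $R_{k+1}$) that lies in $\viewset_{v_m}$ --- such an index always exists since $P$ itself is a $v_m$-member --- and feeds $D=R_{i-1}$, $F=R_i$ into the technical Lemma~\ref{TechLemma1}, choosing $A=B=Q$ or $A=P,B=Q$ according as $\vdeath{Q}\le\vdeath{R_i}$ or not. The hypothesis $\gap_{v_m}(P)=0$ is consumed right at the top to produce the flush packet that Lemma~\ref{TechLemma1} needs, whereas your plan defers it to the end where it no longer has the leverage you need. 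If you want to salvage your route-based induction, you must switch the pivot from $Q$ to $P$ and use a flush \emph{into} $P$ (or into $Q$ from a $v_m$-member of the route) as the stabilization witness; that is essentially the content of Lemma~\ref{TechLemma1}.
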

We will prove the central lemma a little later on. First, we use it to prove the Causal Order Theorem and the Restricted Progress Theorem.
\begin{proof}[Proof of the Causal Order Theorem]
We start with a note on message origination and broadcasting. The act of originating a message requires two separate steps whenever $\vgap{} > 0$. First, placing the message on \lque{} (see labeled step \ref{BM:LaunchQueue} of the \ref{BroadcastMessage} procedure), and then later when \vgap{} becomes zero, queuing packets containing the message (see labeled step \ref{TTI:NoGapLaunch} of the \ref{TryToInstall} procedure). In the context of this proof, we are talking about the second step whenever we talk about {\em broadcasting} a message. This definition of the term "broadcast" can create phantom causal relationships between messages. Specifically, if message $n$ is delivered after message $m$ is placed on \lque{} but before a packet containing $m$ is queued, we consider $m$ to be dependent on $n$, though they are obviously independent. This does not invalidate our arguments, of course, but it does imply that the algorithm serializes message delivery to a greater degree than seems necessary. This suboptimal behavior is inherent in the protocol, because every view change is a global synchronization point that generates excess serialization.

Let $P$ and $Q$ be processes, and let $m$ be a message that is broadcast by $P$ and is delivered at $Q$. We have to demonstrate parts \ref{COP:bcast} and \ref{COP:fam} of Definition \ref{COP}. Remember that message familiarity is now synonymous with message delivery.

We start by observing that whenever a process $P$ broadcasts a message $n$, the message is delivered at $P$ itself before it processes any further notifications from \gms{}. This follows from the \AxProcIV{}. This axiom only implies that $n$ is {\em received} by $P$ prior to processing any further notifications from \gms{}. However, when $n$ is received by $P$ it is immediately deliverable. This is easy to verity when you notice that $P$ stamps $n$ with its own vector time before broadcasting it. Moreover, the \ref{ReceiveMessage}() procedure includes a call to \ref{ScanCall}(), resulting in an immediate delivery of $n$. 
As a result, if $P$ broadcasts $n$ before broadcasting $m$, and if $\mview{n} < \mview{m}$, then $n$ is familiar to $P$ when it broadcasts $m$, and we can lump this case under part \ref{COP:fam} of definition \ref{COP}. This leaves the case $\mview{n} = \mview{m}$. In this case we have $\mvt{n} < \mvt{m}$ because $n$ gets stamped by $P$ before $m$ does.

When $P$ broadcasts $m$, it has view
\begin{align*}
\cview & = \mview{m} \\
\vgap{} & = 0
\end{align*}
and therefore $\gap_{\mview{m}}(P) = 0$.

Also, since $m$ is delivered at $Q$ we know that $Q$ installs $\mview{m}$. Therefore $P$ and $Q$ meet the requirements of the Central Lemma.
We divide the messages that are broadcast by $P$ or delivered at $P$ prior to the broadcast of $m$ into the following subsets:
\begin{enumerate}
\item For each $0 < k < \mview{m}$, the set
$$
\kmsg{} = \{ n|\mview{n} = k \text{ and } n \text{ is delivered at } P \}
$$
\label{CviewMessages}\item All the messages $n$ with $\mview{n} = \mview{m}$ and $\mvt{n} < \mvt{m}$
\end{enumerate}

Our previous observation shows that these sets cover all the messages $n$ of both part \ref{COP:bcast} and part \ref{COP:fam} of Definition \ref{COP}. The messages in the second category must be delivered at $Q$ prior to the delivery of $m$ (see section 5.1 of \cite{birman1991lightweight}).
For a given $0 < k < \mview{m}$, the Central Lemma implies that all the messages in $\kmsg{}$ are received by $Q$.

Suppose that there is a message in $\kmsg{}$ that is not delivered at $Q$, and let $n$ be such a message with a minimal vector time. Let $n'$ be any message, not necessarily in $\kmsg{}$, such that $\mview{n'} = k$ and $\mvt{n'} \prec \mvt{n}$. Because $n \in \kmsg{}$, $n$ is delivered at $P$. Therefore $n'$ must be delivered at $P$ before $n$ is delivered (see section 5.1 of \cite{birman1991lightweight}), and therefore $n' \in \kmsg{}$. By the minimality of $n$, we know that $n'$ is delivered at $Q$. Look at the latest of the following points in time:
\begin{itemize}
\item The point at which $n$ enters $\rset{}(Q)$ (at labeled step \ref{RMP:ReceiveEntry} of \ref{ReceiveMessage})
\item For each $n'$ with $\mview{n'} = k$ and $\mvt{n'} \prec \mvt{n}$, the point at which $n'$ is delivered at $Q$ (labeled step \ref{SC:Delivery} of \ref{ScanCall})
\item The point at which $Q$ installs view $k$ (labeled step \ref{TTI:Installation} of \ref{TryToInstall})
\end{itemize}
This latest point in time has the following two properties. First, $\ref{ScanCall}()$ is called shortly after that point, as can be verified by tracing each of the three code locations. Second, the message $n$ is present in $\rset{}(Q)$ at that point. To see that, notice that by definition $n$ must have already entered $\rset{}(Q)$ (because of the first point in time), so we only have to show that $n$ has not yet been removed. There are two places where $n$ can be removed. The first is labeled step \ref{SC:Delivery} of \ref{ScanCall}, which is where $n$ is delivered. Since all the above points in time must occur before $n$ can be delivered, this case can be excluded. The second is labeled step \ref{TTI:RemoveRset} of \ref{TryToInstall}, where $n$ is removed as an obsolete message. But all the points in time that we listed must occur before $Q$ installs a view that is higher than $k$, so this case can be excluded as well. Therefore the message $n$ is present in $\rset{}(Q)$ at the latest point in time. Moreover, $n$ is deliverable at this point because $Q$ has already installed view $k$ and all the messages of lower vector time have been delivered. Therefore the imminent call to \ref{ScanCall}() will result in the immediate delivery of $n$ to $Q$ (in the case of labeled step \ref{SC:Delivery} of \ref{ScanCall} it may happen even earlier, within the loop). Due to their low view, the messages in $\kmsg{}$ must be delivered at $Q$ before the installation of $\mview{m}$ which in turn takes place before the delivery of $m$.
\end{proof}

In order to prove the Restricted Progress Theorem, we need the following lemma.

\begin{lem}
\label{NonHaltInstallLem}
Non-halting processes install all the views.
\end{lem}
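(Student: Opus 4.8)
The plan is to argue by contradiction. Suppose $P$ is a non-halting process that does not install some view; by the \AxNotEventII{} and the well-foundedness of the event order, there is a smallest view $v$ in $P$'s view interval that $P$ never installs. First I would note that $v > 0$: view zero is installed at the start via the \ref{StartCluster} procedure, which sets $\cview{} = 0$ (this is an installation in the sense of labeled step \ref{TTI:Installation}, trivially reached since there is nothing pending). So $v-1$ is installed at $P$, and the issue is that the \ref{TryToInstall} procedure never succeeds in advancing $\cview{}$ from $v-1$ to $v$. Since we are working with join-free transactional histories, $\vgap{} > 0$ holds at $P$ from the moment $P$ becomes aware of the $v$-th view change notification (which it processes, because $P$ does not halt and the history is transactional, so \AxHaltII{} guarantees $P$ dequeues every notification it is entitled to, and the notification $\notify{v}{P}$ exists by the \AxGMSI{} since $v < \vdeath{P}$ lies in $P$'s view interval).

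The heart of the argument is to show that the guard at labeled step \ref{TTI:CheckFlush} of \ref{TryToInstall} eventually passes, i.e. that every process $Q \in \valuepost{\lset{}}{P}{\cdot}$ eventually satisfies $\fvec{}[Q] \ge v$ at $P$. I would split into two cases for each such $Q$. If $Q$ is removed at some view $\vdeath{Q} \le v$, then by \CAxHalt{} $Q$ halts, but more to the point $P$ processes the removal notification $\nr{Q}$ (again by transactionality and \AxHaltII{}), which removes $Q$ from \lset{} (labeled step of \ref{RemovalNotification}); so $Q$ is no longer an impediment. If $Q$ survives past $v$ (or is never removed), then $Q$ does not halt — here I would invoke Lemma \ref{FirstFaultLem}, since the history is conforming and, being join-free with $P$ non-halting, is not stunted, so $Q$ halts iff $Q$ is removed, and $Q$ is not removed at or before $v$. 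A non-halting $Q$ that is aware of view $v$ must, by the \cbcast{} logic, eventually send a $\pkf{\ge v}$ packet: $Q$ raises \sgh{} and \sfh{} to $\cview{} + \vgap{} \ge v$ once its wait sets drain, and I would use Lemma \ref{OmnibusCBCASTLem} parts (\ref{OCL:sendgfA}) and (\ref{OCL:sendgfB}) together with the fact that every unstable message at $Q$ eventually stabilizes (each recipient either acknowledges it — by transactionality, since recipients don't halt unless removed, in which case they are struck from the instability set — or is removed) to conclude \fwset{} and \bwset{} empty out, so \ref{CheckFlush} fires and queues a flush packet of height $\ge v$ to \cset{}, which in a join-free history equals \lset{} and hence contains $P$ (by the \AxProcII{}-style liveness: $P$ is not known-removed to $Q$). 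By \AxHaltIII{} that packet is received and, $P$ not halting, processed, raising $\fvec{}[Q]$ to $\ge v$ at $P$ via \ref{ReceiveFlush}. Once all impediments clear, the while-loop of \ref{TryToInstall} runs and installs view $v$, contradicting the choice of $v$.

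The main obstacle I anticipate is the stabilization argument for the wait sets at $Q$ — showing that \fwset{}$(Q)$ and \bwset{}$(Q)$ genuinely empty out. This requires a small sub-induction on the chain of forwardings: a forwarded message's instability set lists current live processes, each of which either sends an acknowledgement (transactionality) or gets removed, and if the effective sender of a message is itself removed the message is re-forwarded, but by Lemma \ref{EffectiveDeathLem} the removal views along an effective route are strictly increasing, so the forwarding chain is finite and terminates at a process that survives long enough to collect all acknowledgements; combined with \AxHaltIV{} (originated messages are eventually dequeued) and the fact that there are only finitely many messages of view $< v$ to worry about, every message stabilizes with respect to $Q$. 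I would also need to be slightly careful that $P$ does not get stuck processing an unbounded backlog in its own receive queues before reaching the \ref{TryToInstall} call that matters, but this is exactly what transactionality and \AxHaltIII{} rule out. The rest is routine code inspection.
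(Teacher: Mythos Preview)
Your contradiction-on-the-minimal-uninstalled-view setup is a natural instinct, but there is a genuine gap in the argument, and the minimality of $v$ ends up doing no work.

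The gap is in your reading of the guard at labeled step~\ref{TTI:CheckFlush} of \ref{TryToInstall}. That guard does not test $\fvec{}[Q] \ge v$; it tests $\fvec{}[Q] \ge \cview{} + \vgap{}$, and by Lemma~\ref{OmnibusCBCASTLem}(\ref{OCL:gf}) this is really an equality $\fvec{}[Q] = \cview{}+\vgap{}$. While $P$ sits waiting, it continues to dequeue removal notifications, so $\cview{}(P)+\vgap{}(P)$ keeps climbing past $v$. A flush $\pkf{h}$ with $h \ge v$ from some $Q$ therefore need not clear $Q$ as an impediment: by the \AxProcIII{} (Piggyback), when $P$ processes that packet its own height is already $\ge h$, and if in the meantime $P$ has processed further notifications then $\cview{}(P)+\vgap{}(P) > h$ and $Q$ remains blocking. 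Your per-$Q$ argument establishes only that each $Q$ eventually emits \emph{some} flush of height $\ge v$; it does not show that the flushes from all live $Q$ ever reach a \emph{common} height matching $P$'s moving target. The phrase ``only finitely many messages of view $< v$'' does not help here either: forwarded messages in $\wset{}(Q)$ can carry $\mview{} \ge v$, and in any case the relevant finiteness is of view changes (hence of additions to $\wset{}$), not of messages of a bounded view.

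The paper closes this gap by going directly to the \emph{last} view $v_L$ rather than to an arbitrary $v$. Once a non-halting process has dequeued the final removal notification, $\cview{}+\vgap{}$ is pinned at $v_L$: with $\vgap{}>0$ no broadcasts occur, and with no further notifications no forwardings occur, so $\wset{}$ receives no new entries. Every message already there stabilises (halting recipients have been struck from the instability set; non-halting recipients acknowledge, by transactionality), so $\wset{}$ empties and \ref{CheckFlush} fires a $\pkf{v_L}$ to every non-halting process. Thus every non-halting $P$ eventually holds $\fvec{}[Q]=v_L=\cview{}(P)+\vgap{}(P)$ for every $Q\in\lset{}$, the guard passes, and \emph{all} pending views install in one pass. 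Your argument can be repaired along exactly these lines, but then the minimal-$v$ scaffolding becomes superfluous; the finiteness of $\numview$ is what carries the proof, and it is cleaner to invoke it at the outset.
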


\begin{proof}
The proof depends on the finiteness condition in a fundamental way. Let $v_L$ be the last view. By definition, the non-halting processes are exactly the members of that view. Let $P$ be a non-halting process. At some point in time $P$ dequeues a notification of the last view which by assumption is a notification $\nr{X}$ of the removal of the last halting process $X$.

When $P$ processes that notification it forwards all the messages in $\fque{}[X]$ in order, and moves them to its \wset{}. Thereafter, as long as $P$ does not install view $v_L$, $P$ does not add any new messages to \wset{}. This is because messages are added to \wset{} only after broadcasting or forwarding a message. But since $\vgap > 0$ during the period at hand, $P$ does not broadcast any messages, and since no further view change notifications occur, $P$ does not forward any messages either. All the messages in \wset{}, having been sent by the non-halting process $P$, are eventually acknowledged by all the non-halting processes.

Since $P$ has already received notice of the removal of all the halting processes, this implies that all the messages in \wset{} eventually stabilize and therefore \wset{} empties. Once that happens, $P$ sends a $\pkf{v_L}$ packet to all the non-halting processes. This means that every non-halting process eventually receives a $\pkf{v_L}$ packet from all the non-halting processes. Once that happens, all the views up to an including $v_L$ are installed immediately.
\end{proof}

\begin{proof}[Proof of the Restricted Progress Theorem]
Let $P$ and $Q$ be two non-halting processes and suppose $P$ broadcasts a message $m$. A join-free history can only have a finite number of view changes so we have to show that $m$ is eventually delivered at $Q$.

Let $v_L$ be the last view. By Lemma \ref{NonHaltInstallLem}, both $P$ and $Q$ install $v_L$, and by necessity they do it with
$\gap_{v_L}(P) = \gap_{v_L}(Q) = 0$. By the Central Lemma this implies that all messages that were received
by $P$ prior to view $v_L$ are also received by $Q$ and vice versa. More generally, all the non-halting processes receive the same messages prior to installing view $v_L$.

To see that the same happens with messages that are received after installing view $v_L$, observe that since view $v_L$ is comprised exclusively of non-halting processes, all messages sent in view $v_L$ are broadcast by non-halting processes and therefore must be received by all members of $v_L$. Therefore, all the non-halting processes receive the same messages.

We will show now that non-halting processes also deliver the same messages. Let $n$ be a minimal message is delivered at a non-halting process $P$ but not delivered at
a non-halting process $Q$. We know by the discussion above that $n$ is received by $Q$, and therefore by Lemma \ref{UniqueEffectiveLem} $n$ enters $Q$'s \rset{}. We know by the Causal Order Theorem all the messages that message $n$ depends on are delivered at $P$ and by minimality are also delivered at $Q$. Why would $n$ not be delivered? Looking at the \ref{ScanCall} and \ref{TryToInstall} procedures, we see that $n$ enters \rset{} when $\cview \le \mview{n}$, and $n$ is not delivered as long as $Q$ does not install $\mview{n}$. In addition $n$ is removed from \rset{} without being delivered if it has not been delivered by the time $Q$ installs $\mview{n}+1$. From lemma \ref{NonHaltInstallLem} we know that $Q$ installs view $\mview{n}$. Look at the latest of the following three points in time:
\begin{enumerate}
\item $Q$ has received the effective packet $\pkm{n}$, and has just placed $n$ in \rset{}. (Occurs at the conclusion of labeled step \ref{RMP:ReceiveEntry} of \ref{ReceiveMessage}.)
\item $Q$ has just delivered the last message that $n$ depends on. (Occurs at the conclusion of labeled step \ref{SC:Delivery} of \ref{ScanCall}.)
\item $Q$ has just installed $\mview{n}$. (Occurs at labeled step \ref{TTI:Installation} of \ref{TryToInstall}.)
\end{enumerate}
When that latest moment occurs, two conditions are met. First, $Q$ has $\cview = \mview{n}$, because at least the third moment occurs when $\cview = \mview{n}$, and none of the other two moments can happen after a higher view is installed. Second, the message $n$ is in \rset{} because it enters the set at the first moment and neither way out of the set (delivery of $n$ or installation of a higher view) is available until after all three moments in time have occurred. As a result, the message $n$ becomes deliverable at that moment. Looking at the relevant code locations, one can see that the \ref{ScanCall} procedure is either invoked or (in the second case) continues to be executed, resulting in the delivery of $n$. 

We are almost done. We know that all the non-halting processes deliver the same messages, but we have to show that they deliver {\em all} the messages that originate from non-halting processes. To do that, we will show that every process delivers all the messages that it originates itself. Since all non-halting processes deliver the same messages, this implies the desired result.

Suppose that the non-halting process $P$ originates a message $n$. Since $P$ is non-halting it receives $n$ and by Lemma \ref{UniqueEffectiveLem} it must be placed in $P$'s \rset{}. As we already observed in the proof of the Causal Order Theorem, when $n$ is placed in \rset{} of its own originator it is immediately deliverable, because its $\mvt{n}$ is derived from $P$'s vector time in exactly the fashion that makes it deliverable. Since message placement in \rset{} is followed by a call to \ref{ScanCall}() (labeled steps \ref{RMP:ReceiveEntry} and \ref{RMP:Scan} of \ref{ReceiveMessage}), $n$ is delivered at $P$ and we are done.
\end{proof}

\subsection{Proving the Central Lemma}
We start with a technical lemma that we will use repeatedly in the proof of the Central Lemma.
\begin{lem}
\label{TechLemma1}
Let $A$, $B$, $D$ and $F$ be processes (not necessarily distinct), and let $n$ be a message, meeting the following criteria:
\begin{enumerate}
\item\label{TL1:eff}	Process $D$ sends an effective $\pkm{n}$ packet to process $F$ as it broadcasts or forwards $n$.
\item\label{TL1:B}	Process $B$ is a member of $\mview{n}$.
\item\label{TL1:death}	$\vdeath{D} < \vdeath{F}$.
\item\label{TL1:flush}	Process $A$ receives a $\pkf{f}$ packet from process $F$, with $\vdeath{D} \le f < \vdeath{B}$
\end{enumerate}
Then process $B$ receives $n$ before process $F$ queues a $\pkf{f}$ packet to process $A$ and before $B$ dequeues a notification of view $\vdeath{F}$.
\end{lem}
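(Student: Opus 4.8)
The plan is to follow the message $n$ along and around the effective packet $D\to F$, exploiting two rigidities of \cbcast{}: a process never queues a flush packet of a given height until its wait set is empty, and it never clears the instability of a message with respect to a still-live member except by receiving that member's acknowledgment. First I would record the easy facts. Since a message is only broadcast or forwarded by, and only sent to, members of its view, all of $D$, $F$, $B$ lie in $\viewset_{\mview{n}}$; condition (\ref{TL1:death}) excludes $D=F$ and condition (\ref{TL1:flush}), which forces $\vdeath{D}<\vdeath{B}$, excludes $D=B$. Because the history is join-free, $D$ processes only the notifications of views $\le\vdeath{D}$, so it never learns of the removal of $F$ or of $B$; hence $F$ and $B$ stay in $\lset{}(D)=\cset{}(D)$ for all of $D$'s life. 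Consequently, at the queuing event where $D$ sends the effective $\pkm{n}$ to $F$ (in \ref{BroadcastMessage} or \ref{RemovalNotification}), the target set $\cset{}(D)$ also contains $B$, so $D$ queues a $\pkm{n}$ packet to $B$ in that same multicast and adds a record $\langle n,\fp{index},\fp{iset}\rangle$ to $\wset{}(D)$ whose instability set has entries for all of $\cset{}(D)$, in particular for $F$ and $B$.

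Next I would pin down the $\pkf{f}$ queuing event at $F$. Since $A$ receives a $\pkf{f}$ packet from $F$, the \AxPackEventII{} gives a queuing event $\psendevent{\pkf{f}}$ at $F$ (unique, flush heights being strictly increasing); by \ref{OmnibusCBCASTLem}(\ref{OCL:height}) together with \ref{CheckFlush} its view is $\cview{}(F)+\vgap{}(F)=f$, so at that point $F$ has already dequeued at least $f\ge\vdeath{D}$ notifications, hence has processed $\nr{D}$ — which exists and is processed since the history is transactional and $\vdeath{D}<\vdeath{F}$ — and, crucially, \ref{CheckFlush} reaches the flush step only when $\fwset{}(F)=\bwset{}(F)=\emptyset$. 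Since $f<\vdeath{B}$, at that same moment $F$ has not yet dequeued $\nr{B}$, so $B\in\lset{}(F)$ there. I would now split on whether $n$ is still in $\fque{}[D]$ at $F$ when $F$ dequeues $\nr{D}$. \textbf{Forwarding case:} \ref{RemovalNotification} makes $F$ re-broadcast $n$ to $\cset{}(F)$, which contains $B$ (after the $\nr{D}$ transaction $\cview{}(F)+\vgap{}(F)=\vdeath{D}<\vdeath{B}$, so $F$ has not dequeued $\nr{B}$), and place a record for $n$ in $\fwset{}(F)$ carrying a $B$-instability; since $\fwset{}(F)$ is empty by the time $F$ queues $\pkf{f}$, that record stabilized in between, and its $B$-instability — not removable by a removal notification, as $F$ has not dequeued $\nr{B}$ — must have been cleared by an acknowledgment $\pks{n}$ sent from $B$ to $F$.

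\textbf{Obsolete case:} $n$ left $\fque{}[D]$ at $F$ only via the obsolete purge in \ref{TryToInstall} when $F$ installed view $\mview{n}+1$; since $F$ took up the effective $\pkm{n}$ from $D$ while $D$ was still in $\lset{}(F)$ (by the \CAxPacket{} $F$ processes that packet before $\nr{D}$, and $D$ leaves $\lset{}(F)$ only at $\nr{D}$), $D$ was live to $F$ at that installation, so by \ref{InstallHeightLem} $F$ had received from $D$ a flush packet of height $\mview{n}+1+\gap_{\mview{n}+1}(F)\ge\mview{n}+1$; $D$ could queue such a packet (via \ref{CheckFlush}) only with $\wset{}(D)=\emptyset$, hence only after the $\wset{}(D)$-record of $n$ from the first paragraph had stabilized, and its $F$- and $B$-instabilities — never reachable by removal notifications at $D$, since $\vdeath{F},\vdeath{B}>\vdeath{D}$ — were cleared by acknowledgments $\pks{n}$ sent to $D$ from $F$ and from $B$.

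Finally I would close both cases uniformly. $B$ has queued a $\pks{n}$ packet — to $F$ in the forwarding case, to $D$ in the obsolete case — and by \ref{ReceiveMessage} it does so only upon dequeuing a $\pkm{n}$ packet from that process; in a join-free transactional history a process sends a given message to a given target at most once (Definition \ref{DefEffective}), so that dequeued packet is exactly the $\pkm{n}$ packet $F$ (resp. $D$) queued to $B$. Hence $B$ receives $n$ at an event $e_B$, and the precedence chain $e_B\prec(B\text{ queues }\pks{n})\prec(\text{the addressee receives }\pks{n})\prec\dots\prec(F\text{ queues }\pkf{f})$ — with the middle empty in the forwarding case, and running through ``$\wset{}(D)$ empties'', ``$D$ queues its flush'', ``$F$ receives it'', ``$F$ installs $\mview{n}+1$'', ``$F$ dequeues $\nr{D}$'' in the obsolete case — yields the first half of the conclusion; and $e_B$ precedes $B$ dequeuing $\nr{D}$ by the \CAxPacket{}, hence, notifications being ordered and $\vdeath{D}<\vdeath{F}$, it precedes $B$ dequeuing any notification of view $\vdeath{F}$. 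The degenerate case $F=B$ is immediate: $F=B$ receives $n$ from the effective packet before dequeuing $\nr{D}$ (\CAxPacket{}), which is before it queues $\pkf{f}$ (paragraph two) and before $\haltevent{F}$. The main obstacle, I expect, is the obsolete case: one must recognize that $F$'s inability to forward $n$ actually forces the stronger fact that $D$ completed a flush cycle, and then splice together the long precedence chain that has to terminate at ``$F$ queues $\pkf{f}$''.
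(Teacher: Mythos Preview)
Your forwarding case is correct and matches the paper. The gap is in your obsolete case. You write: $D$ can queue the flush of height $e=\mview{n}+1+\gap_{\mview{n}+1}(F)$ only with $\wset{}(D)=\emptyset$, ``hence only after the $\wset{}(D)$-record of $n$ from the first paragraph had stabilized''. That inference presumes the record was already \emph{in} $\wset{}(D)$ when $D$ queued the flush, i.e.\ that $D$ queued $\pkm{n}$ before $\pkf{\ge e}$. But nothing you have established forces that order. If $D$ is a forwarder with predecessor $D'$ on the effective route, then $n$ enters $\wset{}(D)$ only at the $\nr{D'}$ transaction, at height $\vdeath{D'}$; you only know $\mview{n}<e<\vdeath{D}$ and $\mview{n}<\vdeath{D'}<\vdeath{D}$, so $\vdeath{D'}>e$ is perfectly possible. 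In that situation $D$ may have queued $\pkf{\ge e}$ at some height $h$ with $e\le h<\vdeath{D'}$, before $n$ ever reached $\wset{}(D)$, and your stabilization claim fails. (FIFO on $\channel{D}{F}$ does not rescue you: $F$ can receive $D$'s flush first, then $D$'s effective $\pkm{n}$, take it up, and install $\mview{n}+1$ only later when the \emph{other} members' flushes arrive.)

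The paper confronts exactly this scenario --- its ``unlucky'' sub-case --- by walking back along the effective route $D_0\to\dotsb\to D_k=D$: it takes the least $i$ for which $D_i$ queued $\pkf{\ge e}$ to $F$ before queuing $\pkm{n}$ along the route, shows $i>0$, deduces $\vdeath{D_{i-1}}>e$ (hence $D_{i-1}$ is still live to $F$ at installation, so $F$ received $\pkf{\ge e}$ from $D_{i-1}$ as well), and then runs your wait-set argument at $D_{i-1}$ instead of at $D$, where by minimality of $i$ the ordering \emph{is} the right way round. Your obsolete case is only the paper's ``lucky'' sub-case; the backtracking idea is what is missing. A smaller slip: your closing claim ``$e_B$ precedes $B$ dequeuing $\nr{D}$ by the \CAxPacket{}'' is false in the forwarding case --- there the packet $B$ dequeues came from $F$, and the \AxProcIII{} actually places $e_B$ \emph{after} $\pnotifyevent{\vdeath{D}}{B}$; apply the \CAxPacket{} directly to the actual sender ($F$, $D$, or $D_{i-1}$ as the case may be) to get $e_B\prec\pnotifyevent{\vdeath{F}}{B}$.
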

\begin{proof}
Original messages are broadcast to all the members of the message view (see labeled step \ref{BM:Members} of the \ref{BroadcastMessage} procedure. When no processes join, $\cset = \lset$ and since $\vgap = 0$ here, $\mset{} = \lset{}$), and are forwarded to all the surviving members of the message view (see labeled step \ref{RRN:Members} of \ref{RemovalNotification}). By condition \ref{TL1:B} process $B$ is a member of $\mview{n}$ and condition \ref{TL1:flush} implies that $\vdeath{D} < \vdeath{B}$, so process $D$ never receives a $\nr{B}$ notification. Therefore, when $D$ broadcasts or forwards message $n$, it queues a $\pkm{n}$ packet to process $B$. By condition \ref{TL1:eff} we know that $D$ also queues a $\pkm{n}$ packet to $F$, and that this packet is effective. That means that when process $F$ receives the packet, it appends $n$ to its forwarding queue, with $D$ as the sender (see labeled step \ref{RMP:Sender} of \ref{ReceiveMessage}). Condition \ref{TL1:death} implies that $F$ receives a $\nr{D}$ notification. We divide the rest of the proof into two cases. The easy case is when $n$ is still in the forwarding queue when the $\nr{D}$ notification is dequeued by $F$. The harder case is when $n$ has already been removed from the forwarding queue.

If $n$ is still in the forwarding queue, then $F$ forwards $n$ to all the surviving members of view $\mview{n}$, which include $B$ because $\vdeath{D} < \vdeath{B}$. It then moves $n$ to its wait set, together with an instability set that includes $B$ (see labeled step \ref{RRN:WaitSet} of \ref{RemovalNotification}). The moment that $F$ dequeues the removal notification of $D$ is also the first point in time at which $\cview{} + \vgap{} \ge \vdeath{D}$ in $F$ (see labeled step \ref{RRN:Vgap} of \ref{RemovalNotification}). Therefore, the flush packet that $F$ sends to $A$ according to condition \ref{TL1:flush} is queued later (see labeled step \ref{CF:SendFlush} of \ref{CheckFlush}). Since flush packets are only queued when \wset{} is empty (see condition in ambient block of same location), the flush packet must be queued after message $n$ has left the wait set, i.e. message $n$ must have stabilized in the meantime. But process $B$ was initially in the instability set of message $n$. Message $n$ can stabilize with respect to process $B$ either by receipt of a removal notification for process $B$, or by receipt of an acknowledgment of message $n$ by $B$. In the latter case $B$ must send the acknowledgment to $F$ before becoming aware of view $\vdeath{F}$ and we are done. The former case is impossible because condition \ref{TL1:flush} implies that the flush packet was sent before process $F$ dequeued a notification of the removal of process $B$.

If $n$ is not in the forwarding queue when a $\nr{D}$ notification is received by $F$, it means that it has already been removed. The only possible cause of removal is the installation, by $F$, of view $\mview{n}+1$ (see labeled step \ref{TTI:RemoveFset} of the \ref{TryToInstall} procedure). Let
$$
e = \mview{n}+1+\gap_{\mview{n}+1}(F)
$$
$F$ installs view $\mview{n}+1$ only after receiving $\pkf{\ge e}$ from all the surviving members of view $\mview{n}$ (see lemma \ref{InstallHeightLem}). In the case at hand process $D$ is one of the surviving members, because by our assumption $F$ receives the $\nr{D}$ notification after installing the new view. This also implies that $e < \vdeath{D}$ and therefore, by condition \ref{TL1:flush}, $e < f$, and therefore $F$ sends the $\pkf{f}$ packet to $A$ after installing view $\mview{n}+1$.

If we are lucky, then $D$ sends the effective $\pkm{n}$ packet before sending $\pkf{\ge e}$ to $F$. In that case $n$ enters the wait set of $D$ with process $B$ in its instability set, and it must exit the wait set before the flush is sent. Since $\vdeath{D} < \vdeath{B}$, this can only happen if $B$ receives and acknowledges $n$. Process $B$ acknowledges the receipt of $n$ before $D$ sends a $\pkf{\ge e}$ packet to $F$, which is received by $F$ before it installs view $\mview{n}+1$, which in turn happens before $F$ sends a $\pkf{f}$ packet to $A$. Process $B$ acknowledges the receipt of $n$ before becoming aware of $\vdeath{D}$ and $\vdeath{D} < \vdeath{F}$, so the acknowledgment is sent before $B$ is aware of the death of $F$ and so we are done in this case.

If we are not lucky, then $D$ sends a $\pkf{\ge e}$ to $F$ and only later sends an effective packet $\pkm{n}$ to $F$. Let
$$
D_0 \rightarrow D_1 \rightarrow \dotsb \rightarrow D_k = D \rightarrow F \quad \text{ where } k \ge 0
$$
Be the effective route of the message $n$ from its originator $D_0$ through $D$ to $F$. Let $i$ be the smallest integer such that $D_i$ sends a $\pkf{\ge e}$ packet to $F$ before sending an effective $\pkm{n}$ packet along the effective route. Then $i > 0$ because the originator $D_0$ broadcasts $n$ while $\vgap = 0$, so it could not have sent a $\pkf{\ge e}$ beforehand, since $e > \mview{n}$. Process $D_i$ forwards $n$ along the effective route after receiving a $\nr{D_{i-1}}$ notification, which arrives, by the definition of $i$, after $D_i$ already sent a $\pkf{\ge e}$ packet to $F$. This implies that $\vdeath{D_{i-1}} > e$, which in turn implies (by lemma \ref{InstallHeightLem}) that $F$ does not install view $\mview{n}+1$ before receiving a $\pkf{\ge e}$ from $D_{i-1}$. By definition, $D_{i-1}$ must send the effective $\pkm{n}$ along the effective route before sending the $\pkf{\ge e}$ packet. In addition, by lemma \ref{EffectiveDeathLem} $\vdeath{D_{i-1}} < \vdeath{D} < \vdeath{B}$ so $D_{i-1}$ sends message $n$ to process $B$ and when it moves $n$ to its wait set it includes process $B$ in the instability set of $n$. As a result, $D_{i-1}$ cannot send a $\pkf{\ge e}$ packet to $F$ before $n$ stabilizes with respect to process $B$, and since $\vdeath{D_{i-1}} < \vdeath{B}$, this stabilization can only happen as a result of a receipt of a $\pks{n}$ packet from $B$. So in this case as well $B$ receives and acknowledges message $n$, and it must happen before $F$ installs view $\mview{n}+1$, which in turn happens before $F$ sends a $\pkf{f}$ packet to $A$. In addition $B$ must send the acknowledgment of receipt of $n$ before becoming aware of the death of $D_{i-1}$ and since $\vdeath{D_{i-1}} < \vdeath{D} < \vdeath{F}$ this means that $B$ sends the acknowledgment before becoming aware of view $\vdeath{F}$. So we are done in this case as well.
\end{proof}

\begin{proof}[Proof of the Central Lemma]
Assume that there is a message $n$ with $\mview{n} < v_m$ that is received by process $P$,
and let $Q$ be some process that installed view $v_m$.
Denote $v_n = \mview{n}$.
Let $S = \orig{n}$, the process that originally broadcast $n$. Then there is an effective route (see Definition \ref{DefEffective} and Lemma \ref{UniqueEffectiveLem}) of $\pkm{n}$ packets:
\begin{equation}
S \rightarrow R_1 \rightarrow R_2  \rightarrow R_3 \rightarrow \dotsb \rightarrow R_k \rightarrow P \quad \text{ where } k \ge 0
\end{equation}
To streamline our arguments, we will occasionally refer to $S$ as $R_0$ and to $P$ as $R_{k+1}$. By Lemma \ref{EffectiveDeathLem} all the processes in the effective route must be members of $v_n$ and
$$
\vdeath{S} < \vdeath{R_1} < \dotsb < \vdeath{R_k}
$$
By assumption, $P$ installs $\view(v_m)$, and $\gap_{v_m}(P) = 0$. From Lemma \ref{InstallHeightLem} we know that in order to install view $v_m$ process $P$ must first receive a $\pkf{v_m+\gap_{v_m}(P)}$ packet from each member of \lset{}. It is easy to check that $\lset{} = \mset{}$ whenever $\vgap{} = 0$. Therefore $P$ must receive a $\pkf{v_m}$ packet from every member of $\view(v_m)$. We will use that fact several times in the argument below.

$R_{k+1} = P$ is a member of view $v_m$. Let $i$ be the smallest integer such that $R_i$ is a member of $v_m$. We will divide the proof to two cases: $i > 0$ and $i = 0$.

We start with the case $i > 0$.

Since $i > 0$ there exists a process $R_{i-1}$ that sends an effective packet $\pkm{n}$ to $R_i$, and $R_{i-1}$ is not a member of view $v_m$. Since both are members of view $v_n$ and $R_i$ is a member of view $v_m$, $\vdeath{R_{i-1}}$ is in the membership interval of $R_i$ and so it must receive a $\nr{R_{i-1}}$ packet from the membership service.
We know that $v_n < v_m < \vdeath{R_i}$. We look at the following cases:

Case I: $\vdeath{Q} \le \vdeath{R_i}$.

We know that $Q$ installs view $v_m$, and by the assumption of the current case, $Q$ never receives a removal notification for $R_i$. It follows from Lemma \ref{InstallHeightLem} that $Q$ must receive a $\pkf{f}$ packet from $R_i$ where $f \ge v_m$ before installing view $v_m$. Assigning $A=B=Q$, $D=R_{i-1}$ and $F=R_i$, we can check that all the conditions of Lemma \ref{TechLemma1} are met. The only difficulty is with condition \ref{TL1:flush}. There we have
\begin{description}
\item[] $\vdeath{D} = \vdeath{R_{i-1}} \le v_m \le f$
\item[] $R_i$ would not send a $\pkf{f}$ packet to $Q$ if it were aware of its removal, therefore $f~<~\vdeath{Q} = \vdeath{B}$
\end{description}
Therefore process $Q$ receives $n$.

Case II: $\vdeath{Q} > \vdeath{R_i}$.

Since process $R_i$ is a member of view $v_m$, $P$ must receive a $\pkf{f}$ from $R_i$, where $f = v_m$, before it installs view $v_m$. Assigning $A=P$, $B=Q$, $D=R_{i-1}$ and $F=R_i$, we can check that all the conditions of Lemma \ref{TechLemma1} are met. Again the difficulty is with condition \ref{TL1:flush}. There we have
\begin{description}
\item[] $\vdeath{D} = \vdeath{R_{i-1}} \le v_m = f$
\item[] $f < \vdeath{R_i} < \vdeath{Q} = \vdeath{B}$
\end{description}
Therefore process $Q$ receives $n$ in this case as well.

We now turn to the case $i = 0$.

$i = 0$ means that process $S$, the originator of message $n$, is a member of view $v_m$. Therefore $S$ must send a $\pkf{v_m}$ packet to $P$. Since $Q$ is a member of view $v_m$, this flush packet must be sent while $Q$ is in $\lset{}(S)$.

When $S$ broadcasts $n$, $\cview{}(S) = v_n$ and $\vgap{}(S) = 0$, because message broadcasts do not occur when $\vgap{} > 0$. As a result the broadcast occurs before $S$ sends the $\pkf{v_m}$ packet to $P$, and while $Q$ is in $\lset{}(S)$. This implies that $S$ sends a $\pkm{n}$ packet to $Q$ as part of the broadcast, and it includes $Q$ in the instability set of $n$ as it adds its record to $\wset{}(S)$. The message $n$ must stabilize with respect to $Q$ before $S$ sends the $\pkf{v_m}$ packet. This flush packet is sent before $S$ receives a removal notification for $Q$. Therefore the stabilization can only occur as a result of the receipt of a $\pks{n}$ packet from $Q$. In other words, $Q$ must receive $n$.

This concludes the proof of the case $i = 0$ and of the Central Lemma.
\end{proof}


\end{document}